\newtheorem{theorem}{Theorem}[chapter]
\newtheorem{lemma}[theorem]{Lemma}
\newtheorem{corollary}[theorem]{Corollary}
\newtheorem{fact}[theorem]{Fact}
\newtheorem{definition}[theorem]{Definition}
\newtheorem{remark}[theorem]{Remark}
\theoremstyle{remark}
\newtheorem{openproblem}{Open Problem}
\newcommand{\bra}[1]{\langle #1|}
\newcommand{\ket}[1]{|#1\rangle}
\newcommand{\braket}[2]{\langle #1|#2\rangle}
\newcommand{\cent}[0]{\mbox{\textcent}}
\newcommand{\dollar}[0]{\$}
\newcommand{\leftstate}[1]{\overleftarrow{#1}}
\newcommand{\rightstate}[1]{\overrightarrow{#1}}
\newcommand{\stopstate}[1]{\mspace{-5mu}\downarrow\mspace{-5mu} #1}
\newcommand{\rightconf}[1]{|\overrightarrow{#1}\rangle}
\newcommand{\stopconf}[1]{|\mspace{-5mu}\downarrow\mspace{-5mu} #1\rangle}
\newcommand{\LofC}[0]{\left\langle\right.}
\newcommand{\RofC}[0]{\left.\right\rangle}
\newcommand{\LRofC}[1]{\left\langle #1 \right\rangle }
\newcommand{\setD}{<\mspace{-4mu}>}
\newcommand{\rev}[0]{\mathrm{r}}
\newcommand{\trans}[0]{\mathrm{T}}
\title{CLASSICAL AND QUANTUM COMPUTATION WITH SMALL SPACE BOUNDS}
\author{Abuzer Yakary{\i}lmaz}
\begin{document}

\pagenumbering{roman}
%
\makephdtitle      

\makeapprovalpage

\newpage

~~\newline~~
~~\newline~~
~~\newline~~
~~\newline~~
~~\newline~~
~~\newline~~
~~\newline~~
~~\newline~~
~~\newline~~
~~\newline~~
~~\newline~~
~~\newline~~
~~\newline~~
~~\newline~~
~~\newline~~
~~\newline~~
~~\newline~~
~~\newline~~
~~\newline~~
~~\newline~~
~~\newline~~
~~\newline~~
~~\newline~~
~~\newline~~
\begin{flushright}
\textit{To My Sister Fatma...}
\end{flushright}

\newpage

\begin{acknowledgements}

First of all, I would like to thank my PhD advisor A. C. Cem Say.
It has been a great experience for me to work with Cem Say
and it is an honour for me to be his first PhD graduate.
With his guidance, I have learned many things not only about ``theory'' 
but also about how to be a researcher.
His research excitement has always motivated me and 
they were very special moments to see his eyes' shining 
whenever we came to a new finding.

I also would like to express my gratitude to Ali Taylan Cemgil, Kıvanç Mıhçak, Can Özturan, Ferit Öztürk, 
and Hüsnü Yenigün, my thesis committee members, 
who devoted their time and energy to this research.

I would like to thank Dilek Çankaya for her great support at first period of this research.
During that time, I had completed the main framework of the thesis.

I would like to thank R\={u}si\c{n}\v{s} Freivalds for his collaborative works and 
his efforts to our joint works.
I would like to thank Andris Ambainis for offering to write a chapter on ``quantum automata'' together,
for his great advice and for many helpful discussions and for kindly accepting my short visit to Riga.
I would like to thank John Watrous for his great advice and for many helpful discussions and
for kindly accepting my short visit to Waterloo.
I would like to thank Farid Ablayev, Ali Taylan Cemgil, Flavio D'Alessandro,
Juraj Hromkovi\v{c}, Maris Ozols, Ferit Öztürk, Pascal Tesson, Mikahil Volkov for their helpful 
answers to my questions.

I would like to thank the organizing committee of 
SATA 2008 (Lisbon), TQC 2009 (Waterloo), AutoMathA 2009 (Li\`{e}ge), CSR 2009 (Novosibirsk),
Randomized and Quantum Computation 2010 (Brno),
QIP 2010 (Zurich), and QIP 2011 (Singapore).

I am grateful to my colleagues from the department of Computer Engineering at Boğaziçi University,
with whom we have not only shared many academic experiences but also participated in many social activities.
I would like especially to thank all participants of our regularly scheduled basketball games.

This research has been partially supported by Boğaziçi University Scientific Research Projects 
with grants 07A103 and 08A102 and the Scientific and Technological Research
Council of Turkey (TÜBİTAK) with grant 108E142.

The most beautiful finding during this research is Gönül. The proofs are written in my heart.

I would like to thank my family for their continuous encouragement and 
motivation not only with this work but also with much else.

\end{acknowledgements}

\begin{abstract}

In this thesis, we introduce a new quantum Turing machine model that supports 
general quantum operators, together with its pushdown, counter, and finite automaton variants,
and examine the computational power of classical and quantum machines using small space bounds
in many different cases. The main contributions are summarized below.

Firstly, we consider quantum Turing machines in the unbounded error setting:
(i) in some cases of sublogarithmic space bounds, the class of languages recognized by quantum Turing machines
is shown to be strictly larger than that of classical ones;
(ii) in constant space bounds, the same result can still be obtained for restricted quantum Turing machines;
(iii) the complete characterization of the class of languages recognized by 
realtime constant space nondeterministic quantum Turing machines is given.

Secondly, we consider constant space-bounded quantum Turing machines in the bounded error setting:
(i) we introduce a new type of quantum and probabilistic finite automata 
with a special two-way input head which is not allowed 
to be stationary or move to the left but has the capability to reset itself
to its starting position;
(ii) the computational power of this type of quantum machine 
is shown to be superior to that of the probabilistic machine;
(iii) based on these models, two-way probabilistic and two-way classical-head quantum 
finite automata are shown to be more succinct than two-way nondeterministic finite automata 
and their one-way variants;
(iv) we also introduce probabilistic and quantum finite automata with postselection with their
bounded error language classes, and give many characterizations of them.

Thirdly, the computational power of realtime quantum finite automata augmented with a 
write-only memory is investigated by showing many simulation results for different kinds of counter automata.
Parallelly, some results on counter and pushdown automata are obtained.

Finally, some lower bounds of realtime classical Turing machines in order to recognize a nonregular language
are shown to be tight. Moreover, the same question is investigated for some other kinds of realtime machines and
several nonregular languages recognized by them in small space bounds are presented.

\end{abstract}
%
%
\begin{ozet}

Bu tezde genel kuantum operatörlerini destekleyen yeni bir kuantum Turing makine modeli ile birlikte 
onun yığıt-bellekli, sayaçlı ve sonlu bellekli modelleri tanımlandı ve az belleğe sahip 
klasik ve kuantum makinelerin hesaplama güçleri bir çok durum için incelendi.
Temel katkılarımız aşağıda özetlenmiştir.

İlk olarak kuantum Turing makineleri sınırlı olmayan hata açısından ele alındı:
(i) logaritma-altı bellek kullanılan bazı durumlarda hesaplama gücü açısından
kuantum Turing makinelerinin klasik muadillerinden üstün olduğu gösterildi;
(ii) aynı sonuç sonlu belleğe sahip kısıtlı kuantum Turing makineleri için de elde edildi;
(iii) gerçek-zamanlı sonlu belleğe sahip belirlenimci olmayan kuantum Turing makinelerinin
tanıdığı dil ailesi belirlendi.

İkinci olarak, sonlu belleğe sahip kuantum Turing makineleri sınırlı hata açısından ele alındı:
(i) sola gitme veya durma hakkı yasaklanmış fakat kendisini başlangıç noktasına taşıyabilen
özel kafaya sahip yeni çift-yönlü kuantum ve olasılıksal sonlu durumlu makineler tanımlandı;
(ii) hesaplama gücü açısından bu tür kuantum makinelerin olasılıksal olanlardan daha güçlü olduğu gösterildi;
(iii) bu modeller temelinde, çift-yönlü olasıksal ve klasik kafaya sahip kuantum sonlu durum makinelerin,
çift-yönlü belirlenimci olmayan sonlu durum makineler ile kendi tek-yönlü muadillerinden 
daha az sonlu bellek kullandıkları gösterildi;
(iv) ayrıca olasılıksal ve kuantum sonseçimli sonlu durumlu makineler ile sınırlı hata payı
ile tanıdıkları dil sınıfları tanımlandı ve bir çok özellikleri gösterildi.

Üçüncü olarak, sadece yazma hakkı olan bir hafıza eklenen gerçek-zamanlı kuantum sonlu durumlu
makenelerin hesaplama gücü, farklı türdeki sayaçlı makineler üzerinden yapılan bir çok benzetim ile incelendi.
Paralel olarak, sayaçlı ve yığıt-bellekli makinelere dair bazı sonuçlar elde edildi.

Son olarak, literatürde geçen bazı alt sınırların, 
düzenli olmayan bir dili tanıyan gerçek-zamanlı klasik Turing makineler için mümkün olan en iyi sınırlar oldukları
gösterildi.
Ek olarak, benzer soru diğer tür gerçek-zamanlı makineler için araştırıldı ve 
onlar tarafından az bellek ile tanınan birçok düzenli olmayan dilin varlığı gösterildi.

\end{ozet}

\tableofcontents
\listoffigures
\listoftables



\chapter{INTRODUCTION} \label{int}
\pagenumbering{arabic}

Computation with small space-bounds has always had a special emphasis 
in the field of theoretical computer science.
In this thesis, we examine both classical and quantum small space complexity classes.
Our main tools are Turing machines (TM) and some restricted variants of them,
i.e. one-way or realtime TMs, finite automata, and counter and pushdown automata.
Since quantum computation is relatively a young research field 
(and so it has been less investigated than classical computation), 
our main focus is on quantum machines and their space-bounded classes.

Moreover, we think that examining small space complexity classes or 
the models working in small space-bounds is useful for understanding and comparing
different kinds of computational resources, such as \textit{nondeterminism}, \textit{randomization},
\textit{quantumness}, etc.
Therefore, we also try to investigate the cases in which quantumness (or randomization) 
has advantages and the cases having no advantage.

Our interest of ``space'' can be classified into four general cases,
which can also be seen as the understanding of what we mean by small space-bounds:
\begin{enumerate}
	\item constant space,
	\item sublogarithmic space,
	\item sublinear space, and
	\item linear space.
\end{enumerate}
Note that, such a classification may not be meaningful for some models since
their computational power increases only when using logarithmic or linear space \cite{Ga84,Sz94}.
On the other hand, even constant space can be sub-classified for some models
in terms of language recognition \cite{Fr81,KW97,AW02}.

In the following part, based on the outline of the thesis, 
we give a short description of each chapter by highlighting the main results.

Chapter \ref{sbc} begins with the preliminaries in which 
the basic notations, terminologies, and language recognition settings used throughout the thesis are presented.
After that, the conventional TM model for space bounded computations and its classical variants, i.e.
finite automata (FA), counter automata (CA), and pushdown automata (PDA), 
are given with their formal definitions and computational specifications.
The chapter ends with the definitions of many space classes.

In Chapter \ref{qtm}, we introduce a new kind of quantum Turing machine (QTM)
allowing to implement general quantum operators and its FA, PDA, and CA variants. 
Since the previously defined QTM models for space-bounded computations \cite{Wa98,Wa99,Wa03}
did not reflect the full generality of quantum mechanics,
our new QTM model is one of the contributions of the thesis.
We also present several well-formedness conditions, i.e. 
a list of constraints obeyed by the transition function
of the machine, for the quantum models\footnote{Some of them are given in Appendix \ref{well}.}. 
Moreover, we present the standard technique for quantum machines in order to
simulate their classical counterparts exactly, such that both machines agree on the accepting value
for any given input string.
Lastly, the ``linearization'' technique of a quantum finite automaton (QFA) operating 
in realtime\footnote{The input head of the machine is allowed to move only to the right in each step.} is given.

In Chapter \ref{uerr}, we focus on our results related with unbounded error, both in the general case and
in one-sided (equivalently, nondeterministic) case.
Firstly, we show that one-way\footnote{The input head of the machine is allowed either to be stationary or 
to move to the right in each step.} QFAs (1QFAs), that is, one-way QTMs (1QTMs) with constant space, 
can recognize some nonstochastic languages that were shown to be not recognizable 
by a probabilistic Turing machine (PTM)
(resp., one-way PTMs (1PTMs)) in space $ o(\log\log n) $ (resp., $ o(\log n) $).
This is one of our superiority results of quantum computation over classical computation in $ o(\log\log n) $
and $ o(\log n) $ spaces.
Thus, we partially solve an open problem addressed in \cite{Wa98}, i.e.
the relationship between QTMs and PTMs for sublogarithmic space bounds.
In the next part, we solve another open problem introduced in \cite{BP02}, i.e.
the complete characterization of a restricted type realtime QFA (RT-QFA), namely 
realtime Kondacs-Watrous QFA \cite{KW97} (RT-KWQFA), in the unbounded error setting.
We show that the language class recognized by RT-KWQFAs  
is equal to that of realtime probabilistic FAs (RT-PFAs) in this setting,
using a simulation technique.
Moreover, this technique is also used in order to demonstrate some other results.
Thirdly, we present the complete characterization of the class of languages recognized by 
realtime nondeterministic QFAs (RT-NQFAs) and then show some non-trivial properties of this class.

In Chapter \ref{berr}, we shift our focus on the results related to 
constant space-bounds and the bounded-error cases. 
Firstly, we define a new type of QFAs and PFAs with a special two-way input head which is not allowed 
to be stationary or move to the left, but has the capability to reset the input head
to its initial position and to change the internal state to a specified one. 
If the specified internal state is set to be the initial one,
these machines are called realtime QFAs and PFAs with restart, 
denoted by RT-QFA$ ^{\circlearrowleft} $s and RT-PFA$ ^{\circlearrowleft} $s.
Secondly, we show that the class of the languages recognized by this kind of QFAs is a proper superset of
that of PFAs in bounded error setting. 
Moreover, based on these models, we show that two-way QFAs and two-way PFAs (2QFAs and 2PFAs) can be more concise than
two-way nondeterministic FAs (2NFAs) and their one-way variants.
Additionally, in a special setting, we also show that 2QFAs can be more concise than 2PFAs.
Thirdly, we present more space- and time-efficient algorithms for 2QFAs in order to recognize 
some nonregular languages.
In the last part of this chapter, we introduce the FAs endowed with 
an unrealistic theoretical capability, i.e. \textit{postselection}.
By postselection, it is meant that the final decision on the input is given by a specified subset of 
the computation outcomes and so the rest of the computation outcomes are discarded \cite{Aa05}.
We introduce both RT-QFAs and RT-PFAs with postselection (RT-PostQFA and RT-PostPFA) 
and then give some characterizations of the classes of the languages recognized by them.
As an interesting observation, the class of languages recognized by RT-PostQFA and RT-QFA$ ^{\circlearrowleft} $
(resp., RT-PostPFA and RT-PFA$ ^{\circlearrowleft} $) are the same in both bounded and unbounded error settings.
Additionally, 
in \cite{LSF09,SLF10}, a somewhat different QFA model with postselection was defined based on RT-KWQFA.
We also examine probabilistic and quantum versions of these models, and
present some characterizations of the classes of languages recognized by them.

In Chapter \ref{wom}, we present our results that are extensions of the ones we gave in 
\cite{SY10A} and \cite{YFSA10}, that is, a RT-QFA augmented with a 
\textit{write-only memory} (WOM) (RT-QFA-WOM). 
Contrary to classical computation, due to the interference of the configurations,
the computational power of a quantum machine can be increased by using a WOM.
Throughout the chapter, we investigate the computational power of RT-QFA with 
IOC (\textit{increment-only counter}), POS (\textit{push-only stack}), and WOM 
by showing some simulations of classical machines, such as blind and reversal counter automata, 
and giving some example languages.

The minimum space usage required by a TM in order to recognize a nonregular language has been
determined for many cases \cite{Sz94}, but realtime TMs have not been considered.
In Chapter \ref{rtm}, we show that RT-TMs can share the same lower bounds with one-way TMs (1TMs)
by using a general simulation technique.
Moreover, the same question is investigated for some other kinds of realtime machines and
several nonregular languages recognized by them in small space bounds are presented.

In Chapter \ref{rwork}, we present the remaining results of our work that we chose not to 
organize as a separate chapter. 

Section \ref{conc} is the conclusion of the thesis. 
We present a technical summary of all results. 
A short review of the mathematical formalism used for quantum computation in this thesis
is given in Appendix \ref{qc}.

\chapter{SPACE-BOUNDED COMPUTATION} \label{sbc}

This chapter forms the foundation of the thesis.
In Section \ref{sbc:pre}, we list basic notations and terminologies
and language recognition settings used throughout the thesis.
The details of generic TMs convenient for space bounded computations are presented in Section \ref{sbc:TMs}.
Consequently, the details of generic PDAs and CAs are given in Sections \ref{sbc:PDAs} and \ref{sbc:CAs},
respectively.
Then, the formal definitions of classical (deterministic, nondeterministic, alternating, and probabilistic)
 machines are presented in 
Section \ref{sbc:formal-definition-classical-machine}.
The computational specifications of all types of machines that we cover (classical and quantum) are described in 
Section \ref{sbc:computation-of-machines}.
In the last section (\ref{sbc:space-classes}), 
the generic space complexity classes and some specific space classes introduced in this thesis are presented.

\section{Preliminaries} \label{sbc:pre}

\subsection{Basic Notation} \label{sbc:basic-notation}

\begin{enumerate}
	\item For a given set $ S $, $ |S| $ is the size of the set.
	\item For a given alphabet $ A $, $ A^{*} $ is the set of 
		\textit{the empty string} (or \textit{the empty symbol}), denoted as $ \varepsilon $,
		and all strings obtained by finitely concatenating the symbols of $ A $.
	\item For a given string $ w $, $ |w| $ is the length of $ w $, 
		$ w_{i} $ is the $ i^{th} $ symbol of $ w $.
	\item For a given vector (row or column) $ v $, $ v[i] $ is the $ i^{th} $ entry of $ v $.
	\item For a given matrix $ A $, $ A[i,j] $ is the $ (i,j)^{th} $ entry of $ A $.
	\item Let $ \{ A_{i} \mid 1 \le i \le n \} $  be a set.
		Then, $ \bar{A} $ denotes $ A_{1} \times \cdots \times A_{n} $ and any instance of $ \bar{A} $
		is denoted by small letter, i.e. $ \bar{a} $ or $ \bar{a}_{i \in \{1,\ldots,|\bar{A}| \} } $.
	\item Some fundamental conventions in Hilbert space are as follows:
		\begin{itemize}
			\item $ v $ and its conjugate transpose are denoted as $ \ket{v} $ and $ \bra{v} $, respectively;
			\item the multiplication of $ \bra{v_{1}} $  and $ \ket{v_{2}} $ is shortly written as
				$ \braket{v_{1}}{v_{2}} $;
			\item the tensor product of $ \ket{v_{1}} $  and $ \ket{v_{2}} $ can also be written as
				$ \ket{v_{1}} \ket{v_{2}} $ instead of $ \ket{v_{1}} \otimes \ket{v_{2}} $,
		\end{itemize}
		where $ v $, $ v_{1} $, and $ v_{2} $ are vectors.
\end{enumerate}

\subsection{Basic Terminology} \label{sbc:basic-terminology}

$ \Sigma $ (\textit{input alphabet}): $ \Sigma $ is a finite set of symbols,
	i.e. $ \Sigma = \{ \sigma_{1}, \ldots, \sigma_{|\Sigma|} \} $.
	As a convention, $ \Sigma $ never contains the symbol $ \# $ (\textit{the blank symbol}),
	$ \mathbf{\cent} $ (\textit{the left end-marker of the input}), and 
	$ \dollar $ (\textit{the right end-marker of the input}).
	$ \tilde{\Sigma} $ denotes the set $ \Sigma \cup \{\cent,\dollar\} $.
	Additionally, $ \tilde{w} $ denotes the string $ \cent w \dollar $, 
	for any given input string $ w \in \Sigma^{*} $.
	
$ \Gamma $ (\textit{work tape, stack, or memory alphabet}): $ \Gamma $ is a finite set of symbols,
	i.e. $ \Gamma = \{\gamma_{1},\ldots,\gamma_{|\Gamma|}\} $.
	In case of work tape, $ \Gamma $ contains $ \# $  and in case of memory, $ \Gamma $
	additionally contains $ \varepsilon $.
	On the other hand, in case of stack, $ \Gamma $ contains $ \vdash $ (\textit{the left end-marker of stack}) 
	but not $ \# $ and $ \tilde{\Gamma} = \Gamma \cup \{\#\} $.

$ \Delta $ (\textit{the set of outcomes}) and $ \Omega $ (\textit{the finite register alphabet}):
	In quantum computation, in order to implement general quantum operators, the main system
	is subjected to interact with a writeable finite register (See Appendix \ref{qc} for details).
	$ \Omega $ represents the alphabet of this finite register, 
	i.e. $ \Omega = \{\omega_{1},\ldots,\omega_{|\Omega|}\} $.
	In some cases, a selective (projective) measurement is done on the finite register. 
	In those contexts, $ \Delta $ represents the finite set of outcome results of the measurement,
	i.e. $ \Delta=\{\tau_{1},\ldots,\tau_{|\Delta|}\} $;
	$ \Omega $ is partitioned into $ |\Delta| $ pairwise disjoint parts,
	i.e. 
	\begin{equation}
		\Omega = \bigcup\limits_{\tau \in \Delta} \Omega_{\tau};
	\end{equation}
	and the projective measurement $ P $ is specified as
	\begin{equation}
		P = \{ P_{\tau \in \Delta} \mid P_{\tau} = \sum_{\omega \in \Omega_{\tau}} \ket{\omega}\bra{\omega} \}.
	\end{equation}
	In its \textit{standard usage}, $ \Delta $ contains three elements, i.e. $ \Delta = \{n,a,r\} $, and
	the following actions are associated with the measurement outcomes:
	\begin{itemize}
		\item ``$n$'': the computation continuous;
		\item ``$a$'': the computation halts, and the input is accepted;
		\item ``$r$'': the computation halts, and the input is rejected.
	\end{itemize}

$ Q $ (\textit{the set of internal states}): $ Q $ is a finite set of internal states,
	i.e. $ Q = \{ q_{1},\ldots,q_{|Q|} \} $. 
	Unless otherwise specified, we assume the following:
	\begin{itemize}
		\item $ q_{1} $ is \textit{the initial state};
		\item for classical computational models 
			except realtime finite automata (see Section \ref{sbc:formal-definition-classical-machine}),
			$ Q $ is partitioned into three disjoint subsets, i.e.
			$ Q_{a} $ (the set of \textit{the accepting states}),
			$ Q_{r} $ (the set of \textit{the rejection states}), and 
			$ Q_{n} = Q \setminus \{ Q_{a} \cup Q_{r} \} $ (the set of \textit{the nonhalting states});
		\item for realtime finite automata including quantum ones, 
			only one subset of $ Q $ is defined, $ Q_{a} $.
	\end{itemize}

$ \setD $ (\textit{the set of head directions}): 
	$ <\mspace{-4mu}> $ is set $ \{ \leftarrow, \downarrow , \rightarrow \} $, where 
	``$ \leftarrow $'' means that the (corresponding) head moves one square left,
	``$ \downarrow $'' means that the head stays on the same square, and
	``$ \rightarrow $'' means that the head moves one square right.
	When the tape is used as a stack,	the arrows are also associated with the following stack operations:
	``$ \leftarrow $'' means that the top symbol is popped from the stack,
	``$ \downarrow $'' means that the top symbol is popped from the stack and 
	a new symbol is pushed into the stack, and
	``$ \rightarrow $'' means that a symbol is pushed into the stack.
	As a special case, $ \rhd $ is set $ \{ \downarrow,\rightarrow \} $.
	Additionally, $ D_{i} $, $ D_{w} $, and $ D_{s} $ are some functions from $ Q $ to $ <\mspace{-4mu}> $ 
	(or to $ \rhd $).
	As will be seen later, the subscripts ``i'', ``w'', and ``s'' correspond to input tape, work tape, and
	stack, respectively. 

$ \lozenge $ (\textit{the counter operations}): $ \lozenge $ is set $ \{ -1, 0 , +1 \} $, 
	where ``$ -1 $'' means that the value of the counter is decreased by 1, ``$ 0 $'' means that the value 
	of the counter is not changed, and ``$ +1 $'' means that the value 
	of the counter is increased by 1. As a special case, $ \vartriangle $ is set $ \{0,+1\} $.
	Additionally, $ D_{c} $ is a function from $ Q $ to $ \lozenge^{k} $ (or to $ \vartriangle^{k} $),
	where $ k $, a nonnegative integer, denotes the number of the counter(s).

$ \Theta $ (\textit{the status of a counter}): $ \Theta $ is set $ \{ 0,1 \} $, where $ 1 $ means that the counter 		
	value is nonzero and $ 0 $ means that the counter value is zero.
	
$ \delta $ (\textit{the transition function}): The behavior of a machine is specified by
	its transition function. The domain and the range of a transition function may vary with respect to
	the capabilities of the model. For a general and formal definition, $ \delta $,
	which is called ``having $ m $ domain components and  $ n $ range components''
	throughout the thesis, is a function
	\begin{equation}
		\delta : X_{1} \times X_{2} \times \cdots \times X_{m} \rightarrow 
		Z ^{Y_{1} \times Y_{2} \times \cdots \times Y_{n} },
	\end{equation}
	where $ m $ and $ n $ depend on the model; $ X_{1 \le i \le m} $'s 
	(\textit{the domain components}), which decide the updates in a single step, 
	can be 
	\begin{itemize}
		\item the current internal state,
		\item the symbols read by the tape heads, or
		\item the status of a counter;
	\end{itemize}
	$ Y_{1 \le j \le n} $'s (\textit{the range components}), which are updates done in a single step, 
	can be
	\begin{itemize}
		\item the next internal state,
		\item the symbols written on the tapes, memories, etc.,
		\item the movements of the tape heads, or
		\item an update operation on a stack or a counter;
	\end{itemize}
	and $ Z $ (\textit{the set of the transition values}) is a set of either discrete or continuous numbers.
	More specifically, for each $ \bar{x} \in \bar{X} $,
	\begin{equation}
		\delta(\bar{x}) = \sum_{\bar{y}_{i} \in \bar{Y} } z_{i} \bar{y}_{i},
	\end{equation}
	and $ z_{i} \in Z $ is called \textit{the transition value} and by overloading notation $ \delta $,
	$ z_{i} $ is also represented as
	\begin{equation}
		z_{i} = \delta(\bar{x};\bar{y}_{i}) = 
		\delta(\bar{x}[1],\ldots,\bar{x}[m],\bar{y}_{i}[1],\ldots,\bar{y}_{i}[n] ),
	\end{equation}
	where
	\begin{equation}
		\delta : X_{1} \times \cdots \times X_{m} \times Y_{1} \times \cdots \times Y_{n} \rightarrow Z.
\end{equation}
	Note that, independent of the context, $ z_{i} = 0 $ always corresponds to the case where 
	there is no update defined from $ \bar{x} $ to $ \bar{y}_{i} $.	
	For \textit{nondeterministic and alternating} machines, $ Z $ is set to $ \{0,1\} $,
	where the transitions are defined only for values 1. 
	If there is exactly one transition defined for each domain elements, 
	we obtain a \textit{deterministic} machine.
	The transition function of a deterministic machine can be written in the simplified form as follows:
	\begin{equation}
		\delta: \bar{X} \rightarrow \bar{Y}.
	\end{equation}
	For \textit{probabilistic} machines, $ Z $ is set to $ [0,1] $ (or a subset of a $ [0,1] $),
	where the transitions are defined for nonzero values and each transition value is called 
	as \textit{the transition probability}.
	For each domain value, all related transition probabilities must always be summed up to 1,
	i.e. for each $ \bar{x} \in \bar{X} $,
	\begin{equation}
		\sum_{\bar{y}_{i} \in \bar{Y}} \delta(\bar{x};\bar{y}_{i}) = 1.
	\end{equation}
	This condition is known as \textit{the local condition for PTM wellformedness}.
	For \textit{quantum} machines, $ Z $ is set to a subset of complex numbers whose
	euclidean norm is at most 1. (The remaining details are given in Section \ref{qtm}.)

$ f_{\mathcal{M}}^{a}(w) $ (\textit{the accepting probability}): For a given machine $ \mathcal{M} $
	and an input string $ w \in \Sigma^{*} $, $ f_{\mathcal{M}}^{a}(w) $, or shortly $ f_{\mathcal{M}}(w) $, 
	is the accepting probability of $ w $ associated by $ \mathcal{M} $.
	If the range of $ f_{\mathcal{M}}(w) $ is extended to real number domain, 
	it is called \textit{the accepting value}. 
	Moreover, $ f_{\mathcal{M}}^{r}(w) $ is used 
	in order to represent \textit{the rejecting probability} of $ w $ associated by $ \mathcal{M} $.
	Note that,	for deterministic, nondeterministic, and alternating machines, 
	the value of these functions is either 0 or 1.

\subsection{Language Recognition} \label{sbc:language-recognition}

\subsubsection{Cutpoint} \label{sbc:cutpoint}

The language $ L \subseteq \Sigma^{*} $ recognized by machine $ \mathcal{M} $ with (strict) cutpoint 
$ \lambda \in \mathbb{R} $ is defined \cite{Ra63,BJKP05} as 
\begin{equation}
	L = \{ w \in \Sigma^{*} \mid f_{\mathcal{M}}(w) > \lambda \}.
\end{equation}

The language $ L \subseteq \Sigma^{*} $ recognized by machine $ \mathcal{M} $ with nonstrict cutpoint 
$ \lambda \in \mathbb{R} $ is defined \cite{BJKP05} as 
\begin{equation}
	L = \{ w \in \Sigma^{*} \mid f_{\mathcal{M}}(w) \geq \lambda \}.
\end{equation}

The language $ L \subseteq \Sigma^{*} $ is recognized by machine $ \mathcal{M} $ 
with strict (resp., nonstrict) cutpoint
if there exists a cutpoint $ \lambda \in \mathbb{R} $ such that
$ L  $ is recognized by machine $ \mathcal{M} $ with strict (resp., nonstrict) cutpoint $ \lambda $.

\subsubsection{One-Sided Cutpoint} \label{sbc:one-sided-cutpoint}

The language $ L \subseteq \Sigma^{*} $ is recognized by machine $ \mathcal{M} $ with (positive) one-sided cutpoint 
$ \lambda \in \mathbb{R} $ if $ L $ is recognized with cutpoint $ \lambda $ and 
$ f_{\mathcal{M}}(w) = \lambda $ for all $ w \notin L $.

The language $ L \subseteq \Sigma^{*} $ is recognized by machine $ \mathcal{M} $ with negative one-sided cutpoint 
$ \lambda \in \mathbb{R} $ if $ \overline{L} $ is recognized by machine $ \mathcal{M} $ 
with positive one-sided cutpoint $ \lambda $

The language $ L \subseteq \Sigma^{*} $ is recognized by machine $ \mathcal{M} $ 
with positive (resp., negative) one-sided cutpoint 
if there exists a cutpoint $ \lambda \in \mathbb{R} $ such that
$ L $ (resp., $ \overline{L} $) is recognized 
by machine $ \mathcal{M} $ with positive one-sided cutpoint $ \lambda \in \mathbb{R} $

\subsubsection{Exclusive and Inclusive Cutpoint} \label{sbc:exclusive-inclusive-cutpoint}

The language $ L \subseteq \Sigma^{*} $ is recognized by machine $ \mathcal{M} $ with exclusive cutpoint 
$ \lambda \in \mathbb{R} $ if $ L $ is defined as 
\begin{equation}
	L = \{ w \in \Sigma^{*} \mid f_{\mathcal{M}}(w) \neq \lambda \}.
\end{equation}

The language $ L \subseteq \Sigma^{*} $ is recognized by machine $ \mathcal{M} $ with inclusive cutpoint 
$ \lambda \in \mathbb{R} $ if $ \overline{L} $ is recognized by machine $ \mathcal{M} $ 
with exclusive cutpoint $ \lambda $.

\subsubsection{Unbounded Error} \label{sbc:unbounded-error}

The language $ L \subseteq \Sigma^{*} $ is recognized by machine $ \mathcal{M} $ with unbounded error
if $ L $ is recognized by $ \mathcal{M} $ with either strict or nonstrict cutpoint \cite{YS10C}.

\subsubsection{One-Sided Unbounded Error} \label{sbc:one-sided-unbounded-error}

Let $ f_{\mathcal{M}}(w) \in [0,1] $.

The language $ L \subseteq \Sigma^{*} $ recognized by machine $ \mathcal{M} $ with \textit{(positive) one-sided
unbounded error} is defined as
\begin{equation}
      L = \{ w \in \Sigma^{*} \mid f_{\mathcal{M}}(w) > 0 \}.
\end{equation}

The language $ L \subseteq \Sigma^{*} $ recognized by machine $ \mathcal{M} $ with \textit{negative one-sided
unbounded error} is defined as
\begin{equation}
      L = \{ w \in \Sigma^{*} \mid f_{\mathcal{M}}(w) = 1 \}.
\end{equation}

\subsubsection{Isolated Cutpoint or Bounded Error} \label{sbc:isolated-cutpoint}

Let $ f_{\mathcal{M}}(w) \in [0,1] $.

The language $ L \subseteq \Sigma^{*} $ is said to be recognized by machine $ \mathcal{M} $ 
with isolated cutpoint \cite{Ra63} $ \lambda \in (0,1) $
if there exists a $ \delta > 0 $ such that
\begin{itemize}
	\item $ f_{\mathcal{M}}(w) \geq \lambda+\delta $ when $ w \in L $ and
	\item $ f_{\mathcal{M}}(w) \leq \lambda-\delta $ when $ w \notin L $,
\end{itemize}
where $ \delta $ is known as the isolation gap.

The language $ L \subseteq \Sigma^{*} $ is said to be recognized by machine $ \mathcal{M} $ 
with bounded error if there exists a $ p \in (\frac{1}{2},1] $ such that
\begin{itemize}
	\item $ f_{\mathcal{M}}(w) \geq p $ when $ w \in L $ and
	\item $ f_{\mathcal{M}}(w) \leq 1-p $ when $ w \notin L $.
\end{itemize}
Equivalently, it can be said that $ L \subseteq \Sigma^{*} $ is recognized by machine $ \mathcal{M} $ 
with error bound $ \epsilon $, where $ \epsilon = 1-p $ (and so $ \epsilon \in [0,\frac{1}{2}) $).

\subsubsection{One-Sided Bounded Error} \label{ssbc:one-sided-bounded-error}

The language $ L \subseteq \Sigma^{*} $ is said to be recognized by machine $ \mathcal{M} $ 
with (positive) one-sided bounded error if there exists a $ p \in (0,1] $ such that
\begin{itemize}
	\item $ f_{\mathcal{M}}(w) \geq p $ when $ w \in L $ and
	\item $ f_{\mathcal{M}}(w) = 0 $ when $ w \notin L $.
\end{itemize}
Equivalently, it can be said that $ L \subseteq \Sigma^{*} $ is recognized by machine $ \mathcal{M} $ 
with (positive) one-sided error bound $ \epsilon $, where $ \epsilon = 1-p $ (and so $ \epsilon \in [0,1) $).

The language $ L \subseteq \Sigma^{*} $ is said to be recognized by machine $ \mathcal{M} $ 
with negative one-sided bounded error if there exists a $ p \in (0,1] $ such that
\begin{itemize}
	\item $ f_{\mathcal{M}}(w) = 1 $ when $ w \in L $ and
	\item $ f_{\mathcal{M}}(w) \le 1-p $ when $ w \notin L $.
\end{itemize}
Equivalently, it can be said that $ L \subseteq \Sigma^{*} $ is recognized by machine $ \mathcal{M} $ 
with negative one-sided error bound $ \epsilon $, where $ \epsilon = 1-p $ (and so $ \epsilon \in [0,1) $).

\section{Turing Machines} \label{sbc:TMs}

The Turing machine (TM) models used throughout the thesis consist of a
read-only input tape with a two-way tape head, a read/write work tape
with a two-way tape head, and a finite state control. (The quantum
versions also have a finite register that plays a part in the
implementation of general quantum operations, and is used to determine
whether the computation has halted, and if so, with which decision.
For reasons of simplicity, this register is not included in the
definition of the classical machines, since its functionality can
be emulated by a suitable partition of the set of internal states
without any loss of computational power.)
Unless otherwise specified, both tapes are assumed to be two-way infinite and indexed by $ \mathbb{Z} $.

Let $ w $ be an input string.
On the input tape, $ \tilde{w} $ is placed in the squares indexed by $ 1, \ldots, | \tilde{w} | $,
and all remaining squares contain $ \# $.
When the computation starts, the internal state is $ q_{1} $,
the input tape head and the work tape head are placed on the squares indexed by 1 and 0, respectively.
Additionally, we assume that the input tape head never visits the squares indexed by $ 0 $ or $ |\tilde{w}|+1 $.

The transition function of a classical TM, i.e.
\begin{equation}
	\delta : Q \times \tilde{\Sigma} \times \Gamma \rightarrow 
	Z^{Q \times \Gamma \times <\mspace{-4mu}> \times <\mspace{-4mu}> },
\end{equation}
or
\begin{equation}
	z = \delta(q,\sigma,\gamma,q',\gamma',d_{i},d_{w})  \in Z,
\end{equation} 
has 3 domain components and 4 range components such that
whenever $ z \neq 0 $,
\begin{center}
\begin{minipage}{0.85\textwidth}
	the TM -- that is in state $ q \in Q $ and reads symbols $ \sigma \in \tilde{\Sigma} $ 
	and $ \gamma \in \Gamma $ on the input tape and 
	the work tape, respectively -- changes its state to $ q' \in Q $, writes $ \gamma' \in \Gamma $ 
	on the work tape,
	and then updates the position of the input and work tapes with respect to 
	$ d_{i} \in <\mspace{-4mu}> $ and $ d_{w} \in <\mspace{-4mu}> $,
	respectively, with transition value $ z $.
\end{minipage}
\end{center}
In the quantum case, there is also a fifth range component $ \Omega $ in order to implement 
general quantum operations, i.e. a symbol is written on the register ($ \omega \in \Omega $).
(The details are given in Section \ref{qtm}.).

By restricting the movement of the input tape head to $ \{ \downarrow, \rightarrow \} $
(and so by replacing range component corresponding to the input tape head  $ \setD $ with $ \rhd $),
we obtain a one-way TM, denoted as 1TM.

If the input tape head is restricted to move only to the right (``$ \rightarrow $'')
(and so range component $ \setD $ corresponding to the input tape head 
is completely removed from the transition function and 
the input tape head is automatically moved one square to the right after each transition),
we obtain a realtime TM, denoted as RT-TM.
Note that, after reading $ \dollar $, the computation of a RT-TM must be terminated.

If a 1TM is allowed to be stationary on a symbol for only at most some fixed steps, say $ d > 0 $,
it is called RT-TM with delay $ d $, denoted as $ d $RT-TM.
However, note that, by 
allowing it to be stationary, the input tape head of a quantum
Turing machine (QTMs) may be put in superposition (see \cite{BV97,Wa98,Wa99} and also Chapter \ref{qtm}),
i.e. each head in the superposition may be
positioned on a different symbol of the input. 
This violates the idea behind the realtime computation.
Therefore, for realtime QTMs with delay $ d $, the input head is required to be classical, where $ d > 0 $.

In space-bounded computation, the models having more than one work tape can generally be simulated by the ones 
having one work tape with the same amount of space usage.
On the other hand, this is not the case for RT-TMs \cite{Ra63B,Aa74}.
Therefore, the number of the work tapes is a parameter for RT-TMs.
However, all RT-TMs presented in this thesis have only one work tape.
Additionally, we assume that the work tape of RT-TMs are one-way infinite and their 
left most square is indexed by 0.

\begin{table}[h!]
	\vskip\baselineskip
 \caption{The list of the abbreviations of Turing machines}
 \footnotesize
 \begin{center}
 \begin{tabular}{|l|r|r|r|}
 	\hline
 	\multicolumn{1}{|c|}{ \textbf{Types of TMs} } & \textbf{TM} & \textbf{1TM} & \textbf{(d-)RT-TM} 
 	\\ \hline
 	deterministic & DTM & 1DTM & (d-)RT-DTM 
 	\\ \hline
 	{nondeterministic} & NTM & 1NTM & (d-)RT-NTM 
 	\\ \hline
 	{probabilistic} & PTM & 1PTM & (d-)RT-PTM 
 	\\ \hline
 	{alternating} & ATM & 1ATM & (d-)RT-ATM 
 	\\ \hline 
 	{quantum} & QTM & 1QTM & (d-)RT-QTM
 	\\ \hline 
 	{nondeterministic quantum} & NQTM & 1NQTM & (d-)RT-NQTM
 	\\ \hline 
 	{classical head quantum} & CQTM & 1CQTM & (d-)RT-CQTM 
 	\\ \hline
 	{classical head nondeterministic quantum} & CNQTM & 1CNQTM & (d-)RT-CNQTM 
 	\\ \hline
 \end{tabular}
 \end{center}
\end{table}

If we remove the work tape of a TM (and so domain component $ \Gamma $ and range component $ \Gamma $ with 
its related range component $ \setD $ are completely removed from the transition function), 
we obtain a finite automaton, denoted as FA.

In classical and quantum FA domains,
prefix ``1'' has been sometimes used instead of prefix ``realtime'' \cite{RS59,Ra63,KW97,BP02}.
This is generally not a problem for classical FAs due to their equivalence. 
However, this is not the case for quantum FAs (QFAs) since 
allowing the head to be stationary increases the power of QFAs, 
i.e. in bounded and unbounded error language recognition \cite{KW97,NIH01,YS09D}.
In the current literature, QFAs are commonly denoted using the prefix ``1'' and ``1.5''
in order to present ``realtime'' and ``one-way'' input heads, respectively 
\cite{KW97,AI99,NIH01,YS09D}.
In this thesis, we adopt the prefix notation of TMs for one-way and realtime models:
\begin{enumerate}
	\item two-way finite automaton (2FA), TM with no work tape,
	\item one-way finite automaton (1FA), 1TM with no work tape,
	\item realtime finite automaton with delay $ d $ ($ d $RT-FA), $ d $RT-TM with no work tape, 
		where $ d>0 $, and,
	\item realtime finite automaton (RT-FA), RT-TM with no work tape.
\end{enumerate}
Note that, in FA domain, two-wayness of the input head has been clearly indicated.

\begin{table}[h!]
	\vskip\baselineskip
 \caption{The list of the abbreviations of finite automata}
 \footnotesize
 \begin{center}
 \begin{tabular}{|l|r|r|r|}
 	\hline
 	\multicolumn{1}{|c|}{ \textbf{Types of FAs} } & \textbf{2FA} & \textbf{1FA} & \textbf{(d-)RT-FA} 
 	\\ \hline
 	deterministic & 2DFA & 1DFA & (d-)RT-DFA 
 	\\ \hline
 	{nondeterministic} & 2NFA & 1NFA & (d-)RT-NFA 
 	\\ \hline
 	{probabilistic} & 2PFA & 1PFA & (d-)RT-PFA 
 	\\ \hline
 	{alternating} & 2AFA & 1AFA & (d-)RT-AFA 
 	\\ \hline 
 	{quantum} & 2QFA & 1QFA & (d-)RT-QFA
 	\\ \hline 
 	{nondeterministic quantum} & 2NQFA & 1NQFA & (d-)RT-NQFA
 	\\ \hline 
 	{classical head quantum} & 2CQFA & 1CQFA & (d-)RT-CQFA 
 	\\ \hline
 	{classical head nondeterministic quantum} & 2CNQFA & 1CNQFA & (d-)RT-CNQFA 
 	\\ \hline
 \end{tabular}
 \end{center}
\end{table}

A TM is said to be \textit{unidirectional} (or \textit{simple}) if the movements of input
and work tape heads are fixed for each internal state to be entered in any transition.
That is, for a unidirectional TM, denoted as uni-TM, $ D_{i} $ and $ D_{w} $
determine respectively the movements of the input and work tape heads in any transition.
Thus, the corresponding range component(s) of $ \delta $'s are dropped.
For simplicity, each internal state of a unidirectional FA, say $ q $, can be represented as follows: 
\begin{itemize}
	\item $ \leftstate{q} $ if $ D_{i}(q) = ``\leftarrow''  $,
	\item $ \stopstate{q} $ if $ D_{i}(q) = ``\downarrow''  $, and
	\item $ \rightstate{q} $ if $ D_{i}(q) = ``\rightarrow''  $.
\end{itemize}
All classical computational models presented throughout thesis
are equivalent to their unidirectional counterparts. 
Therefore, each of those machines is considered with default unidirectional.

A configuration of a TM is the collection of
\begin{itemize}
       \item the internal state of the machine,
       \item the position of the input tape head,
       \item the contents of the work tape, and the position of the work tape head.
\end{itemize}
$ \mathcal{C}^{w} $, or shortly $ \mathcal{C} $,
denotes the set of all configurations, which is a finite set in our
case of space bounded computations.
Let $ c_{i} $ and $ c_{j} $ be two configurations.
The value of the transition from $ c_{i} $ to $ c_{j} $ is 
given by the transition function $ \delta $ if
$ c_{i} $ is reachable from $ c_{j} $ in one step, and is zero otherwise.
A \textit{configuration matrix} is a square matrix whose rows and columns are
indexed by the configurations.
The $ (j,i)^{th} $ entry of the matrix denotes the value of the
transition from $ c_{i} $ to $ c_{j} $.
We assume $ c_{1} $ as \textit{the initial configuration}, in which, as described previously,
the heads are placed on the square indexed by 1 and the internal state is $ q_{1} $
(in quantum case, the finite register is prepared by the initial symbol as well).

For classical TMs, a configuration is called an \textit{accepting configuration} 
(resp., a \textit{rejecting configuration}) if its internal state belongs to the set of
the accepting states (resp., the set of the rejecting states).
Moreover, a configuration is called a \textit{halting configuration} 
if it is an accepting or rejecting configuration or
(in nondeterministic and alternating cases) there is no defined transition from it.
The computation does not continue from a halting configuration.
However, note that, for the models making their decisions after reading the whole input, such as RT-FAs,
the configurations can be associated with adjectives ``accepting'' or ``rejecting'' 
only after reading symbol $ \dollar $.

\section{Pushdown Automata} \label{sbc:PDAs}

A pushdown automaton (PDA) is a TM using its work tape as a stack,
on which three operations are applied: 
\textit{pop}, \textit{push}, and \textit{pop-and-push}.
As a special note, the push operation on a stack can be implemented by a TM in at least two steps,
however, it is a single step for a PDA.
We assume that the stacks are one-way infinite and bounded from the left.
The squares of a stack are indexed by nonnegative integers, where the left-most square
is indexed by 0 on which $ \vdash $ is placed throughout the computation.
At the beginning of the computation, all squares indexed by positive integers contain symbol $ \# $.
The stack head is assumed not to drop out from left.
For PDAs, $ \Gamma $ contains $ \vdash $ and not $ \# $ and $ \tilde{\Gamma} = \Gamma \cup \{\#\} $.

For a given input string  $ w \in \Sigma $,
a configuration of a PDA is composed of the following elements:
\begin{itemize}
	\item the current internal state,
	\item the position of the input head, and
	\item the contents of the stacks.
\end{itemize}

Throughout the thesis, it is assumed that the PDAs have one stack.
A two-way pushdown automaton (2PDA) has a two-way input tape head, i.e. in each transition, 
the position of the input tape head can be updated with respect to $ <\mspace{-4mu}> $, and 
a one-way pushdown automaton (1PDA) is a restricted 2PDA in which the input tape head cannot move to the left, 
i.e. in each transition, the position of the input tape head can be updated with respect to $ \rhd $.
By not allowing the input tape head of a 1PDA to be stationary, we obtain a realtime pushdown automaton (RT-PDA).

\begin{table}[b!]
	\vskip\baselineskip
 \caption{The list of the abbreviations of pushdown automata}
 \footnotesize
 \begin{center}
 \begin{tabular}{|l|r|r|r|}
 	\hline
 	\multicolumn{1}{|c|}{ \textbf{Types of PDAs} } & \textbf{2PDA} & \textbf{1PDA} & \textbf{(d-)RT-PDA} 
 	\\ \hline
 	deterministic & 2DPDA & 1DPDA & (d-)RT-DPDA 
 	\\ \hline
 	{nondeterministic} & 2NPDA & 1NPDA & (d-)RT-NPDA 
 	\\ \hline
 	{probabilistic} & 2PPDA & 1PPDA & (d-)RT-PPDA 
 	\\ \hline
 	{alternating} & 2APDA & 1APDA & (d-)RT-APDA 
 	\\ \hline 
 	{quantum} & 2QPDA & 1QPDA & (d-)RT-QPDA
 	\\ \hline 
 	{nondeterministic quantum} & 2NQPDA & 1NQPDA & (d-)RT-NQPDA
 	\\ \hline 
 	{classical head quantum} & 2CQPDA & 1CQPDA & (d-)RT-CQPDA 
 	\\ \hline
 	{classical head nondeterministic quantum} & 2CNQPDA & 1CNQPDA & (d-)RT-CNQPDA 
 	\\ \hline
 \end{tabular}
 \end{center}
\end{table}

The transition function of a PDA can be defined as a special case of a TM after making some 
small modifications for the push operation.
That is, syntactically, there is no difference between the transition function of a TM and a PDA.
However, a PDA has some restrictions on its transition function and 
the push operation has capability to change the next square (on the right side) on the stack.
In the following, the details of the operations implemented on the stack are presented.
Note that, all the remaining part of the transition function is exactly the same as a TM.

Let $ d_{s} \in \setD $ be the direction of the stack head
and $ \gamma \in \Gamma $ and $ \gamma' \in \tilde{\Gamma} $ be the symbols to be read and
to be written on the stack, respectively.
\begin{itemize}
	\item In a pop operation, $ \gamma \neq ``\vdash" $ is overwritten by ``$ \# $'' and the head is
		moved one square to the left. (For symbol $ \vdash $, the pop operation is forbidden.)
		Therefore, the pop operation is associated with setting of $ d_{s} $ to $ ``\leftarrow" $ and
		the values of all transitions having the setting of $ \gamma = ``\vdash" $ or 
		the setting of $ \gamma' \neq ``\#" $ are always zero.
	\item In a pop-and-push operation, $ \gamma $ is overwritten by $ \gamma' \neq ``\#" $ and the head
		stays in the same square. (It is not allowed to write symbol $ \# $.)
		Therefore, the pop-and-push operation is associated with setting of $ d_{s} $ to $ ``\downarrow" $ and
		the values of all transitions having the setting of $ \gamma' = ``\#" $ 
		or the setting of $ \gamma = ``\vdash" $  and $ \gamma' \neq ``\vdash" $ are always zero.
	\item In a push operation, the following two consecutive operations of a TM are combined: 
		(i) $ \gamma $ is overwritten by $ \gamma $ and the head is moved one square to the right,
		(ii) symbol $ \# $ is overwritten by $ \gamma' \neq ``\#" $ and the head stays in the same square.
		(It is not allowed to write symbol $ ``\#" $.)
		Therefore, the push operation is associated with setting of $ d_{s} $ to $ ``\rightarrow" $ and
		the values of all transitions having the setting of $ \gamma' = ``\#" $ are always zero.
\end{itemize}

As a last remark, \textit{unidirectional} PDAs, denoted as uni-PDAs, 
are defined exactly in the same way as TMs.
Thus, $ D_{s}(q') $ determines the stack operation of the transition in which
$ q' $ is the state to be entered.

\section{Counter Automata} \label{sbc:CAs}

A counter automaton (CA) can be seen as a PDA with multiple stacks 
whose alphabets are restricted to contain two symbols, i.e. $ \vdash $ and a \textit{counting symbol}.
That is, a CA can count by using its stacks and can check whether their values are zero or not.
In order to make a counting with negative numbers, we also assume that the stack tapes are two-way infinite
by requiring that the squares indexed by 0 always contains $ \vdash $ and the remaining squares 
contain either the blank symbol or counting symbol throughout the computation.
Note that, there is no blank symbol between the counting symbols.

Formally, we follow the domain specific conventions of counter automata.
A $ k \in \mathbb{Z}^{+} $ counter automaton  ($ k $CA) is a 
finite state automaton augmented with $ k $ counters which can be modified by some amount
from $ \lozenge $, and where the status of these counters is also taken into account during transitions.

For a given input string  $ w \in \Sigma $,
a configuration of a $ k $CA is composed of the following elements:
\begin{itemize}
	\item the current internal state,
	\item the position of the input head, and
	\item the contents of the counters.
\end{itemize}

Similar to PDAs, we define two-way $ k $-counter automaton (2$ k $CA), one-way $ k $-counter automaton (1$ k $CA), 
and realtime $ k $-counter automaton (RT-$ k $CA). 

\begin{table}[h!]
	\vskip\baselineskip
 \caption{The list of the abbreviations of counter automata}
 \footnotesize
 \begin{center}
 \begin{tabular}{|l|r|r|r|}
 	\hline
 	\multicolumn{1}{|c|}{ \textbf{Types of $ k $CAs} } & \textbf{2$ k $CA} & \textbf{1$ k $CA} & \textbf{(d-)RT-$ k $CA} 
 	\\ \hline
 	deterministic & 2D$ k $CA & 1D$ k $CA & (d-)RT-D$ k $CA 
 	\\ \hline
 	{nondeterministic} & 2N$ k $CA & 1N$ k $CA & (d-)RT-N$ k $CA 
 	\\ \hline
 	{probabilistic} & 2P$ k $CA & 1P$ k $CA & (d-)RT-P$ k $CA 
 	\\ \hline
 	{alternating} & 2A$ k $CA & 1A$ k $CA & (d-)RT-A$ k $CA 
 	\\ \hline 
 	{quantum} & 2Q$ k $CA & 1Q$ k $CA & (d-)RT-Q$ k $CA
 	\\ \hline 
 	{nondeterministic quantum} & 2NQ$ k $CA & 1NQ$ k $CA & (d-)RT-NQ$ k $CA
 	\\ \hline 
 	{classical head quantum} & 2CQ$ k $CA & 1CQ$ k $CA & (d-)RT-CQ$ k $CA 
 	\\ \hline
 	{classical head nondeterministic quantum} & 2CNQ$ k $CA & 1CNQ$ k $CA & (d-)RT-CNQ$ k $CA 
 	\\ \hline
 \end{tabular}
 \end{center}
\end{table}

Let $ k $ be a nonnegative integer.
The transition function of a classical 2$ k $CA, i.e.
\begin{equation}
	\delta : Q \times \tilde{\Sigma} \times \{ \Theta^{k} \} \rightarrow 
	Z^{Q \times \setD \times \{ \lozenge^{k} \} },
\end{equation}
or
\begin{equation}
	z = \delta(q,\sigma,\bar{\theta},q',d_{i},\bar{c})  \in Z,
\end{equation} 
has 3 domain components and 3 range components such that whenever $ z \neq 0 $,
\begin{center}
\begin{minipage}{0.85\textwidth}
	the 2$ k $CA -- that is in state $ q \in Q $, reads symbols $ \sigma \in \tilde{\Sigma} $ on the input tape,
	and has $ \bar{\theta} \in \Theta^{k} $ on the counter(s), i.e.
	$ \bar{\theta}[i] $ represents the status of the $ i^{th} $ counter, -- changes its state to $ q' \in Q $, 
	updates the position of the input head with respect to $ d_{i} $, and
	updates the values of the counter(s) with respect to $ \bar{c} $, i.e.
	the value of $ j^{th} $ counter is updated by $ \bar{c}[j] $, with transition value $ z $,
	where $ 1 \le i,j \le k $.
\end{minipage}
\end{center}
For 1$ k $CA, the range component $ \setD $ is replaced by $ \rhd $ and
for RT-$ k $CA, the range component $ \setD $ is completely removed.
In the quantum case, there is also an additional range component $ \Omega $ in order to implement 
general quantum operations, i.e. a symbol is written on the register ($ \omega \in \Omega $).

Similar to TMs, a CA is said to be \textit{unidirectional} or \textit{simple}, denoted as uni-CA, 
if the movement of the input tape head and the updates of the counters are fixed 
for each internal state to be entered in any transition.
(The function determining the updates of the counters is $ D_{c} $.)
Thus, the corresponding range component(s) of $ \delta $'s are dropped.

An $ r $-reversal $ k $CA \cite{Ch81}, denoted as $ r $-rev-$ k $CA, is a $ k $CA such that 
the number of alternations from increasing to decreasing 
and vice versa on each counter is restricted by $ r $, where $ r $ is a nonnegative integer.

A blind $ k $CA, denoted as a $ k $BCA, is a $ k $CA never knowing the status of its counter(s)
(and so the domain component $ \Theta^{k} $ is completely removed) and
it requires that the value of the each counter must be zero in order to accept a given input string.

Lastly, a $ k $CA$ (m) $ is a $ k $CA with the capability of updating its counter by a value
from the set $ \{-m,\ldots,0,\ldots,m\} $ in a single step, where $ m>1 $.
Indeed, for any given $ k $CA$ (m) $, an isomorphic $ k $CA can easily be built and so
such a capability does not increase the computational power of CAs.
We present this fact explicitly for realtime quantum CAs in Lemma \ref{qtm:lem:CA-m-isomorphic-CA}.

\section{Formal Definition of Classical Machines} \label{sbc:formal-definition-classical-machine}

Formally, a classical TM  or a classical PDA with any type of the input tape head has is a 7-tuple
\begin{equation}
	\mathcal{P} = (Q,\Sigma,\Gamma,\delta,q_{1},Q_{a},Q_{r}).
\end{equation}
For two-way or one-way classical FAs or $ k $CAs, we use a 6-tuple
\begin{equation}
	\mathcal{P} = (Q,\Sigma,\delta,q_{1},Q_{a},Q_{r}).
\end{equation}
For the above machines, $ Q_{n} = Q \setminus \{ Q_{a} \cup Q_{r} \} $ and 
additionally in case of alternating machines,
$ Q_{n} $ is composed of two disjoint subsets: 
$ Q_{e} $, the set of \textit{existential} states;
$ Q_{u} $, the set of \textit{universal} states.

A realtime classical FA or $ k $CA is a 5-tuple
\begin{equation}
	\mathcal{P} = (Q,\Sigma,\delta,q_{1},Q_{a}).
\end{equation}
When it describes an alternating machine, 
$ Q $ is composed of two disjoint subsets: $ Q_{e} $ and $ Q_{u} $.
Moreover, for probabilistic FAs, the transition function can also be defined as a finite set of 
(left) stochastic matrices, i.e. $ A_{\sigma \in \tilde{\Sigma}} $ and $ A_{\sigma} [j,i] $ represents
the transition probability from $ q_{i} $ to $ q_{j} $ when reading $ \sigma $.
Throughout the thesis, we follow this convention for those machines:
a two-way or one-way PFA is a 6-tuple
\begin{equation}
	\mathcal{P} = (Q,\Sigma,\{A_{\sigma \in \tilde{\Sigma}}\},q_{1},Q_{a},Q_{r})
\end{equation}
and a RT-PFA is a 5-tuple
\begin{equation}
	\mathcal{P} = (Q,\Sigma,\{A_{\sigma \in \tilde{\Sigma}}\},q_{1},Q_{a}).
\end{equation}
Similarly, the transition function of a probabilistic $ k $CA can be represented by a collection
of (left) stochastic matrices defined for each $ \sigma \in \tilde{\Sigma} $ 
and each $ \bar{\theta} \in  \Theta_{k} $.

\section{Computation of Machines} \label{sbc:computation-of-machines}

Let $ w \in \Sigma^{*} $ be a given input string and $ \mathcal{M} $ be a machine.
For all computational models, the computation begins with the initial configuration.
We use mainly \textit{the tree structure} in order to represent the computation of classical models
with the following specifications:
\begin{itemize}
	\item the root(s) of the tree(s) is (are) always the initial configuration;
	\item the nodes are the configurations and the halting configurations can only be placed
		on the leafs;
	\item the edges represent the one-step transitions between the configurations;
	\item any path from the root to a leaf is called a \textit{halting path}
\end{itemize}

\subsection{Deterministic Computation} 

The computation is traced by a finite path, 
i.e. the leaf of the path is either an accepting or rejecting configuration.
$ w $ is accepted if and only if the path of $ \mathcal{M} $ on $ w $ ends with an accepting configuration.

\subsection{Nondeterministic Computation} 

The computation is traced by a finite or infinite tree.
Contrary to the deterministic case, more than one outgoing edge may be defined for a configuration.
A terminating path of the tree is called \textit{an accepting path} (resp., \textit{rejecting path})
if its leaf is an accepting (resp., rejecting configuration).
$ w $ is accepted if and only if there exists an accepting path in the tree of $ \mathcal{M} $ on $ w $.

\subsection{Alternating Computation} 

This is a generalization of nondeterministic computation. 
The computation is traced by a forest, a set of trees.
The tree structure is similar to that of nondeterministic computation.
However, each tree is allowed to have only one outgoing edge from any node 
corresponding to a (nonhalting) configuration having an existential state.
Therefore, different nondeterministic choices lead to different computation trees.
A tree is called \textit{an accepting tree} if each leaf of the tree correspond to an accepting configuration,
that is, any terminating path of the tree is an accepting path.
$ w $ is accepted if and only if there exists an accepting tree in the forest of $ \mathcal{M} $ on $ w $.

\subsection{Probabilistic Computation} 

We can either use tree structure or \textit{vectors}.
The tree structure of a probabilistic computation is similar to that of nondeterministic computation,
in which each edge has a weight, i.e. the transition probability.
Therefore, each accepting or rejecting path can also be associated with a probability,
calculated by multiplying all weights of the path.
(Note that, the number of accepting or rejecting paths can be infinite.)
The overall accepting (resp., rejecting) probability of $ \mathcal{M} $ on $ w $, 
$ f_{\mathcal{M}}^{a}(w) $ (resp., $ f_{\mathcal{M}}^{r}(w) $),
is the summation of the probabilities associated to the all accepting (resp., rejecting) paths.

In the vector representation, 
a column vector, called \textit{configuration vector} or \textit{state vector},
whose $ i^{th} $ entry corresponds to the $ i^{th} $ configuration or state,
represent the probability distribution of the configuration in any step of the computation,
i.e. in each step of the computation, the current vector is multiplied by the configuration matrix from the left.
Note that, in the configuration matrix, the $ (i,i)^{th} $ entry is always assumed to be 1 if
the $ i^{th} $ column (or vector) corresponds to a halting configuration.
Thus, the probability of reaching a halting configuration can be cumulatively represented 
by the corresponding entry in any step of the computation.

As a special case, the computation of a RT-PFA can be traced by a stochastic state vector,
say $ v $, such that $ v(i) $ corresponds to state $ q_{i} \in Q $.
\begin{equation}
	v_{i} = A_{\tilde{w}_{i}} v_{i-1},
\end{equation}
where $ 1 \le i \le | \tilde{w} | $ and
$ v_{0} $ is the initial state vector whose first entry is equal to 1.
The transition matrices of a RT-PFA can be extended for any string as 
\begin{equation}
	A_{w\sigma} = A_{\sigma} A_{w},
\end{equation}
where $ \sigma \in \tilde{\Sigma} $, $ w \in (\tilde{\Sigma})^{*} $, and $ A_{\varepsilon} = I $.
The probability that $ w $ is accepted by 1PFA $ \mathcal{P} $ is
\begin{equation}
	f_{\mathcal{P}}(w) = \sum_{q_{i} \in Q_{a}} (A_{\tilde{w}}v_{0})(i) = 
	\sum_{q_{i} \in Q_{a}} v_{|\tilde{w}|}(i).
\end{equation}

By generalizing the linearization of RT-PFA, Turakainen \cite{Tu69} defined a more general computational model,
called \textit{generalized finite automaton} (GFA). 
The details of a GFA are given in Figure \ref{sbc:fig:GFA}.

\begin{figure}[h!]	
	\begin{center}
	\fbox{
	\begin{minipage}{0.85\textwidth}
		A GFA is formally a 5-tuple
		\begin{equation}
		      \mathcal{G}=(Q,\Sigma,\{A_{\sigma \in \Sigma} \},v_{0},f),
		\end{equation}
		where
		\begin{enumerate}
		      \item $ A_{\sigma \in \Sigma} $ are $ |Q| \times |Q| $-dimensional
		real valued transition matrices;
		      \item $ v_{0} $ and $ f $ are real valued  \textit{initial} (column)
		and \textit{final} (row) vectors,
		              respectively.
		\end{enumerate}
		Similar to what we have for RT-PFAs, the transition matrices of a GFA can be extended for any string.
		For a given input string, $ w \in \Sigma $, the acceptance value associated by GFA 
		$ \mathcal{G} $ to string $ w $ is
		\begin{equation}
		      f_{\mathcal{G}}(w)=f A_{w_{|w|}} \cdots A_{w_{1}} v_{0} = f A_{w} v_{0}.
		\end{equation}
	\end{minipage}
	}
	\end{center}
	\caption{Generalized finite automaton}
	\vskip\baselineskip
	\label{sbc:fig:GFA}
\end{figure}

\subsection{Quantum Computation} 

In this part, we roughly explain quantum computation\footnote{The formal definitions and 
detailed explanations are in Chapter \ref{qtm} and Appendix \ref{qc}.} 
by using the classical representation tools, tree structure and vectors together.

A \textit{pure} quantum state, or shortly, a quantum state, 
is represented by a column vector, where each entry represents
the amplitude of being in the classical state that is assumed as
the configurations of the quantum machines throughout the thesis.

In quantum computation, as a part of the one-step transition, 
some symbols are written on the finite register 
(and then the register is discarded either after a measurement or without any measurement).
For each written symbol, the current quantum state is transformed into a new quantum state
with the probability of the square of the euclidean norm of the new quantum state.
(Once the finite register is discarded, the new quantum states are normalized\footnote{
Although only one of them survives and so is normalized according to
any particular observer, we use a ``God's eye view'' and assume that
all of them survive with related probabilities.}.)

Similar to the probabilistic machines, the computation of a quantum machine can be represented by a tree
with some exceptions:
the nodes of the tree are the quantum states instead of the configurations 
(and so the root of the tree is the initial quantum state, 
where the entry of the initial configuration is 1 and the remaining entries are zeros);
the edges are again associated with the nonzero probabilities and a 
finite register symbol;
the leafs of the tree are the halting (accepting or rejecting) quantum states,
i.e. if the incoming edge of a quantum state is associated with an accepting (resp., a rejecting) 
finite register symbol, then it is called an accepting (resp., a rejecting) quantum state.
As a special remark, a symbol written on the finite register may not always been 
associated with an edge in the tree due to destructive interference interference.

The overall accepting and rejecting probabilities of a quantum machine on an input 
can be calculated in the same way as that of a probabilistic machine.

\section{Space Classes} \label{sbc:space-classes}

Let $ s $ be a function of the form $ s: \mathbb{N} \rightarrow \mathbb{R} $.

The space used by a path (a tree or a forest) is the difference 
between the leftmost visited square and the right most visited square 
when considering all the configurations in the structure.
Note that, in the quantum case, we restrict ourselves only to the configurations having nonzero amplitude
in the quantum states.

A machine is said to be strongly $ s(|w|) $ space-bounded if the space used by
the related computational structure (tree or forest) of $ w $ is no more than $ s(|w|) $,
for any input string $ w \in \Sigma^{*} $.

A machine is said to be middle $ s(|w|) $ space-bounded if the space used by
the related computational structure (tree or forest) of $ w $ is no more than $ s(|w|) $,
for any accepted input string $ w \in \Sigma^{*} $.

A nondeterministic or alternating machine is said to be weak $ s(|w|) $ space-bounded if the space used by
an accepting path or tree, respectively, is no more than $ s(|w|) $,
for any accepted input string $ w \in \Sigma^{*} $.

The generic name of space classes is \textbf{X}SPACE, where \textbf{X} is replaced by suitable letters
depending on the TM and/or error types.
In probabilistic and quantum computation,
the subscript of ``SPACE'', i.e. SPACE$ _{\mathbb{Y}} $, if it exists, specifically denotes the class of the numbers
$ \mathbb{Y} \in \{ \mathbb{Q}, \mathbb{A}, \tilde{\mathbb{R}}, \tilde{\mathbb{C}}, \mathbb{R}, \mathbb{C} \} $, 
to which the transition values of the corresponding TMs are restricted,
where $ \tilde{\mathbb{R}} $ and $ \tilde{\mathbb{C}} $ 
denote the computable real and complex numbers, respectively.
(Note that, a complex number (and so a real number) is computable 
if its real and imaginary parts are computed within the precision of
$2^{-n}$ by a deterministic algorithm in time polynomial in $ n $ \cite{BV97}.)
By default, probabilistic (resp., quantum) space classes are defined for  $ \mathbb{R} $ (resp., $ \mathbb{C} $).

There may be two more prefixes before the class names as well 
(when used together, the order below is followed): 
\begin{enumerate}
	\item the kind of the space usage, i.e. ``weak-'' or ``middle-'', and
	\item the type of the input tape head, i.e. ``one-way-'' or ``realtime-''.
\end{enumerate}

The list of the class names used throughout the thesis are as follows:
\begin{itemize}
	\item DSPACE(s), NSPACE(s), and ASPACE(s) denote the classes of the languages recognized by 
		DTMs, NTMs, and ATMs, respectively, in space $ s $.
	\item PrSPACE(s) and PrQSPACE(s) denote the classes of the languages recognized by PTMs and QTMs, respectively, 
		in space $ s $ with unbounded error.
	\item BPSPACE(s) and BQSPACE(s) denote the classes of the languages recognized by PTMs and QTMs, respectively, 
		in space $ s $ with bounded error.
	\item NQSPACE(s) denote the classes of the languages recognized by QTMs 
		in space $ s $ with one-sided unbounded error setting.
	\item C$ _{=} $SPACE(s) and C$ _{=} $QSPACE(s) denote the classes of the languages recognized by PTMs and QTMs, 
		respectively, in space $ s $, with inclusive cutpoint $ \frac{1}{2} $ providing that
		each computation must be halted in finite step.
\end{itemize}

For the pushdown and the counter machines, we use ``STACK'' and ``COUNTER'' complexity classes,
respectively, in order to represent the space usage of the machines on their storage devices
with respect to the length of the input string, i.e.
they are the counterparts of ``SPACE'' complexity class.
Moreover, $ k $STACK and $ k $COUNTER denote the classes defined by the machines having
$ k $ stacks and $ k $ counters, respectively.
As a generic class name, CFL is the class of languages recognized by 1NPDAs.

In the finite automata domain, the classes also have domain specific names.

The class of the languages recognized by RT-DFAs (and 1DFAs) are regular languages, denoted as REG.
In the finite automata domain, neither two-wayness nor nondeterminism 
increases the computational power of the deterministic machines \cite{RS59,Sh59}. 
That is, 
\begin{equation}
	\mbox{DSPACE(1)=NSPACE(1)=REG}.
\end{equation}

RT-PFAs, GFAs \cite{Tu69}, and 2PFAs \cite{Ka89} recognize the same
class of languages with (strict) cutpoint. This is the class of
\textit{stochastic languages}, denoted by S. The class of languages
recognized by these machines with nonstrict cutpoint is denoted by
coS. The class of languages recognized by
RT-PFAs, GFAs, and 2PFAs with unbounded error is therefore
S $ \cup $ coS, and is denoted by uS.
\begin{equation}
	\mbox{PrSPACE(1)=uS}.
\end{equation}

One-way-C$ _{=} $SPACE(1) is also denoted as S$ ^{=} $.
The complementary class of S$ ^{=} $, denoted as S$ ^{\neq} $, is 
known as exclusive stochastic languages \cite{Pa71}.

The class of languages recognized by RT-QFAs (Section \ref{qtm:RT-QFAs}) with cutpoint is denoted by QAL. 
The class of languages recognized by these machines with nonstrict cutpoint is denoted by coQAL.
QAL $ \cup $ coQAL is denoted by uQAL.
The class of the languages recognized by MCQFA (Section \ref{qtm:restricted-QTMs}) 
with cutpoint is Moore-Crotchfield languages, denoted as MCL.
coMCL and uMCL denote respectively complementary class of MCL and the class formed by MCL $ \cup $ coMCL. 

Brodsky and Pippenger \cite{BP02} defined the class of
languages recognized by RT-KWQFAs (Section \ref{qtm:restricted-QTMs}) with unbounded error, denoted as UMM,
in a way that is slightly different than the standard approach:
$ L \in $ UMM if and only if there exists a RT-KWQFA $ \mathcal{M} $ such that
\begin{itemize}
	\item $ f_{\mathcal{M}}(w) > \lambda $ when $ w \in L $ and
	\item $ f_{\mathcal{M}}(w) < \lambda $ when $ w \notin L $,
\end{itemize}
for some $ \lambda \in [0,1] $.

The languages recognized by RT-NQFAs are nondeterministic quantum automaton languages, denoted as NQAL.
NMCL denotes the class of the languages recognized by nondeterministic MCQFAs (MCNQFAs).

In the bounded error setting, the class of the languages recognized by RT-PFAs (and 1PFAs) 
are regular languages \cite{Ra63}.
On the other hand, 2PFAs can recognize some nonregular languages with bounded error \cite{Fr81}.
\begin{equation}
	\mbox{REG = one-way-BPSPACE(1)} \subsetneq \mbox{BPSPACE(1)}.
\end{equation}

The class of languages recognized by RT-PostPFAs and RT-PostQFAs (Section \ref{berr:postsel-definition})
with bounded error are PostS (\textit{post-stochastic languages}) and 
PostQAL (\textit{post-quantum automaton languages}), respectively. 
Additionally, the class of languages recognized by RT-LPostPFAs and RT-LPostQFAs 
(Section \ref{berr:latvian-postsel}) with bounded error are LPostS and LPostQAL, respectively. 

\chapter{A NEW KIND OF QUANTUM MACHINE} \label{qtm}

After a general framework for the quantum machines with some
introductory concepts (Section \ref{qtm:general-framework}), 
we present firstly a new kind of quantum Turing machine\footnote{
For descriptions of several other QTM variants, we refer the reader to \cite{BV97,Wa98,Wa99,Wa03,MW08,Ga09}.
} with its FA variants (Section \ref{qtm:QTMs}), 
and then, based on it, quantum pushdown and stack machines (Sections \ref{qtm:QPDAs} and \ref{qtm:QCAs}).
We also give some basic facts in this chapter.

\section{The General Framework for Quantum Machines} \label{qtm:general-framework}

In accordance with quantum theory, a quantum machine can be in a \textit{superposition} of
more than one configuration at the same time. The ``weight''
of each configuration in such a superposition is called its \textit{amplitude}.
Unlike the case with classical machines, these amplitudes are not restricted to
being positive real numbers, and that is what gives quantum computers
their interesting features. A superposition of configurations
\begin{equation}
       \ket{\psi} = \alpha_{1} \ket{c_{1}} + \alpha_{2} \ket{c_{2}} + \cdots
+  \alpha_{n} \ket{c_{n}}
\end{equation}
can be represented by a column vector $ \ket{\psi} $ with a row for each possible configuration,
where the $ i^{th} $ row contains the amplitude of the corresponding configuration in $ \ket{\psi} $.

If our knowledge that the quantum system under consideration is in
superposition $ \ket{\psi} $ is certain, then $ \ket{\psi} $ is called a \textit{pure
state}, and the vector notation described above is a suitable way of
manipulating this information. However, in some cases (e.g. during
classical probabilistic computation), we only know that the system is
in state $ \ket{\psi_{l}} $ with probability $ p_{l} $ for an ensemble of pure
states $ \{ (p_{l},\ket{\psi_{l}}) \} $,where
$ \sum_{l} p_{l}=1 $. A convenient representation tool for describing
quantum systems in such \textit{mixed states}
is the density matrix.
The \textit{density matrix}\footnote{The trace of a density matrix is 1,
and each density matrix is positive semidefinite.} representation of
$ \{ (p_{l},\ket{\psi_{l}}) \mid 1 \le l \le M < \infty \} $ is
\begin{equation}
      \rho = \sum_{l} p_{l} \ket{\psi_{l}} \bra{\psi_{l}}.
\end{equation}
We use both these representations for quantum states in the thesis. 
We refer the reader to Appendix \ref{qc} for further details.

A quantum machine is distinguished from a classical machine
by the presence of the items $\Omega$ (the finite register alphabet) 
and $ \Delta $ (the set of possible outcomes associated with the measurements of the finite register). 
Note that, $ \Omega $ is partitioned into $ |\Delta| = k $ 
subsets $ \Omega_{\tau_{1}}, \ldots , \Omega_{\tau_{k}} $.
As a part of each transition, a quantum machine has the following phases:
\begin{enumerate}
	\item \textit{pre-transition phase}: 
		initialize the finite register, i.e. the register is set to ``$ \omega_{1} $'';
	\item \textit{transition phase}:
		in addition to the transition of the classical machines, 
		update the content of the register;
	\item \textit{post-transition phase}: 
		make a selective measurement on the finite register with the outcome set $ \Delta $ 
		and then discard it\footnote{In some realtime quantum machines, 
		such a selective measurement can be done only at the end of the computation.}.
\end{enumerate}

Since we do not consider the register content as part of the
configuration, the register can be seen as the
``environment'' interacting with the ``principal system'' that is the rest of the quantum machine \cite{NC00}.
$ \delta $ therefore induces a set of configuration transition
matrices, $ \{ E_{\omega \in \Omega} \} $, where the $ (i,j)^{th} $ entry of $ E_{\omega}$,
the amplitude of the transition  from $ c_{j} $ to $ c_{i} $ by
writing $ \omega \in \Omega $ on the register,
is defined by $ \delta $ whenever $ c_{j} $ is reachable from $ c_{i}
$  in one step, and is zero otherwise. 
The $ \{ E_{\omega \in \Omega} \} $ form an operator $ \mathcal{E} $, with operation elements
$  \mathcal{E}_{\tau_{1}} \cup \mathcal{E}_{\tau_{2}} \cup \cdots \cup
\mathcal{E}_{\tau_{k}}  $, where
$ \mathcal{E}_{\tau \in \Delta} = \{ E_{\omega \in \Omega_{\tau}} \} $.

According to the modern understanding of quantum computation \cite{AKN98}, 
a quantum machine is said to be \textit{well-formed}\footnote{We also refer the reader to \cite{BV97}
for a detailed discussion of the well-formedness of QTMs that evolve unitarily.}
if $ \mathcal{E} $ is a superoperator (selective quantum operator), i.e.
\begin{equation}
      \sum_{\omega \in \Omega} E_{\omega}^{\dagger}E_{\omega} = I.
\end{equation}
$ \mathcal{E} $ can be represented by a $ | \mathcal{C} | |
\Omega | \times | \mathcal{C} | $-dimensional
matrix $ \mathsf{E} $ (Figure \ref{qtm:fiq:superoperators})
by concatenating each $ E_{\omega \in \Omega} $ one under the
other. It can be verified that $ \mathcal{E} $ is a superoperator if and only
if the columns of $ \mathsf{E} $
form an orthonormal set.

\begin{figure}[h]
	\begin{center}	
	\begin{minipage}{0.4\textwidth}
		\begin{equation}
		\begin{array}{ccccc}
			\multicolumn{1}{c|}{} & c_{1} & c_{2} & \ldots & \multicolumn{1}{c|}{ c_{|\mathcal{C}|} } \\
			\hline
			\multicolumn{1}{c|}{c_{1}} & & & & \multicolumn{1}{c|}{} \\
			\multicolumn{1}{c|}{c_{2}} & & & & \multicolumn{1}{c|}{} \\
			\multicolumn{1}{c|}{\vdots} & \multicolumn{4}{c|}{ E_{\omega_{1}} } \\
			\multicolumn{1}{c|}{c_{|\mathcal{C}|}} & & & & \multicolumn{1}{c|}{} \\
			\hline	
			\multicolumn{1}{c|}{c_{1}} & & & & \multicolumn{1}{c|}{} \\
			\multicolumn{1}{c|}{c_{2}} & & & & \multicolumn{1}{c|}{} \\
			\multicolumn{1}{c|}{\vdots} & \multicolumn{4}{c|}{ E_{\omega_{2}} } \\
			\multicolumn{1}{c|}{c_{|\mathcal{C}|}} & & & & \multicolumn{1}{c|}{} \\
			\hline\multicolumn{1}{c|}{c_{1}} & & & & \multicolumn{1}{c|}{} \\
			\multicolumn{1}{c|}{c_{2}} & & & & \multicolumn{1}{c|}{} \\
			\multicolumn{1}{c|}{\vdots} & \multicolumn{4}{c|}{\vdots } \\
			\multicolumn{1}{c|}{c_{|\mathcal{C}|}} & & & & \multicolumn{1}{c|}{} \\
			\hline	
			\multicolumn{1}{c|}{c_{1}} & & & & \multicolumn{1}{c|}{} \\
			\multicolumn{1}{c|}{c_{2}} & & & & \multicolumn{1}{c|}{} \\
			\multicolumn{1}{c|}{\vdots} & \multicolumn{4}{c|}{ E_{\omega_{|\Omega|}} } \\
			\multicolumn{1}{c|}{c_{|\mathcal{C}|}} & & & & \multicolumn{1}{c|}{} \\
			\hline	
		\end{array}
		\end{equation}
	\end{minipage}
	\end{center}
	\caption{The matrix representation of superoperators ($ \mathsf{E} $)}
	\vskip\baselineskip
	\label{qtm:fiq:superoperators}
\end{figure}

Let $ c_{j_{1}} $ and $ c_{j_{2}} $ be two configurations with corresponding columns 
$ v_{j_{1}} $ and $ v_{j_{2}} $ in $ \mathsf{E} $.
For an orthonormal set to be formed, we must have
\begin{equation}
      v_{j_{1}}^{\dagger} \cdot  v_{j_{2}} = \left \lbrace \begin{array}{ll}
              1 & j_{1}=j_{2} \\
              0 & j_{1} \neq j_{2}
      \end{array}
      \right.
\end{equation}
for all such pairs.
This constraint imposes some easily checkable restrictions on the transition function ($ \delta $).

The initial density matrix of a quantum machine is represented by  $ \rho_{0} =
\ket{c_{1}} \bra{c_{1}} $, where $ c_{1} $ is the initial configuration corresponding to the given input string.
Note that, unless otherwise specified, $ \Delta $ is set to $ \{n,a,r\} $.

\section{Quantum Turing Machines} \label{qtm:QTMs}

We define a quantum Turing machine (QTM) $ \mathcal{M} $ to be a 7-tuple
\begin{equation}
      M=(Q,\Sigma,\Gamma,\Omega,\delta,q_{1},\Delta).
\end{equation}
We refer the reader to Appendix \ref{well:QTMs} for the list of the local conditions for QTM wellformedness.

\subsection{Two-Way Quantum Finite Automata} \label{qtm:2QFAs}

The two-way quantum finite automaton (2QFA) is obtained by removing the work tape of the QTM:
\begin{equation}
      \mathcal{M}=(Q,\Sigma,\Omega,\delta,q_{1},\Delta).
\end{equation}
 
See below for a list of easily checkable local conditions for wellformedness of 2QFAs.
Let $ x_{1} $ and $ x_{2} $ denote the positions of the input tape heads.
In order to evolve to the same configuration in one step, the difference between $ x_{1} $ and $ x_{2} $
must be at most 2.
Therefore, we obtain a total of 13 different cases, listed below, that
completely define the restrictions on the transition function. Note that, by taking the
conjugates of each summation, we handle the symmetric cases that are shown in the parentheses.
\noindent
For $ q_{1},q_{2} \in Q; \sigma \in \tilde{\Sigma} $,
\\
\footnotesize
1. $ x_{1} = x_{2} $:
\begin{equation}
	 \sum\limits_{q' \in Q, d \in <\mspace{-4mu}>,\omega \in \Omega}
      \overline{ \delta(q_{1},\sigma,q',d,\omega) }
      \delta(q_{2},\sigma,q',d,\omega) =
      \left\lbrace
              \begin{array}{ll}
                      1 & q_{1} = q_{2} \\
                      0 & \mbox{otherwise}
              \end{array}
      \right.
\end{equation}
\\
2. $ x_{1} = x_{2} - 1 $ ($ x_{1} = x_{2} + 1 $):
\begin{equation}
	\sum_{q' \in Q,\omega \in \Omega} 
      \overline{\delta(q_{1},\sigma,q',\rightarrow,\omega)} 
      \delta(q_{2},\sigma,q',\downarrow,\omega) + 
      \overline{\delta(q_{1},\sigma,q',\downarrow,\omega)} 
      \delta(q_{2},\sigma,q',\leftarrow,\omega) = 0.
\end{equation}
\\
3. $ x_{1} = x_{2} - 2 $ ($ x_{1} = x_{2} + 2 $):
\begin{equation}
	\sum_{q' \in Q,\omega \in \Omega}
	\overline{\delta(q_{1},\sigma,q',\rightarrow,\omega)}
     \delta(q_{2},\sigma,q',\leftarrow,\omega) = 0.
\end{equation}
\normalsize

\subsection{Unidirectional Quantum Turing Machines} \label{qtm:uniQTMs}

If the QTM is unidirectional, wellformedness can be checked using the
simpler conditions in Figure \ref{figure:uniQTM-wellformedness}.
Removing the reference to worktape symbols, we obtain the analogous
constraints for unidirectional 2QFAs as shown in Figure \ref{figure:uni2QFA-wellformedness}.

\begin{figure}[h!]       
       \begin{center}
       \fbox{
       \begin{minipage}{\textwidth}
               \footnotesize{
               For $ q_{1},q_{2} \in Q; \sigma \in \tilde{\Sigma}; \gamma_{1},\gamma_{2} \in \Gamma $,
               \begin{equation}
                       \sum\limits_{q' \in Q,\gamma' \in \Gamma,\omega \in \Omega}
                       \overline{ \delta(q_{1},\sigma,\gamma_{1},q',\gamma',\omega)
}
                       \delta(q_{2},\sigma,\gamma_{2},q',\gamma',\omega) =
                       \left\lbrace
                               \begin{array}{ll}
                                       1 & q_{1} = q_{2} \mbox{ and } \gamma_{1} = \gamma_{2} \\
                                       0 & \mbox{otherwise}
                               \end{array}
                       \right..
               \end{equation}
               }
       \end{minipage}
       } 
       \end{center}
       \caption{The local conditions for unidirectional QTM wellformedness}
       \vskip\baselineskip
       \label{figure:uniQTM-wellformedness}
\end{figure}

\begin{figure}[h!]       
       \begin{center}
       \fbox{
       \begin{minipage}{\textwidth}
               \footnotesize{
               For $ q_{1},q_{2} \in Q; \sigma \in \tilde{\Sigma} $,
               \begin{equation}
                       \sum\limits_{q' \in Q,\omega \in \Omega}
                       \overline{ \delta(q_{1},\sigma,q',\omega) }
                       \delta(q_{2},\sigma,q',\omega) =
                       \left\lbrace
                               \begin{array}{ll}
                                       1 & q_{1} = q_{2} \\
                                       0 & \mbox{otherwise}
                               \end{array}
                       \right..
               \end{equation}
               }
       \end{minipage}
       }
       \end{center}
       \caption{The local conditions for unidirectional 2QFA wellformedness}
       \vskip\baselineskip
       \label{figure:uni2QFA-wellformedness}       
\end{figure}

As is the case with PTMs, the transition
function of a uni-QTM can be specified easily by transition matrices of the form
$ \{ E_{\sigma,\omega} \} $, whose rows and columns are indexed by
(internal state, work tape symbol) pairs
for each $ \sigma \in \tilde{\Sigma} $ and $ \omega \in \Omega $. 
It can be verified that the wellformedness condition is then equivalent to the requirement that, 
for each  $ \sigma \in \tilde{\Sigma} $,
\begin{equation}
       \sum_{\omega \in \Omega} E_{\sigma,\omega}^{\dagger} E_{\sigma,\omega} = I.
\end{equation}

Similarly, for each $ \sigma \in \tilde{\Sigma} $ and $ \omega \in \Omega $,
well-formed unidirectional 2QFAs can be described by transition matrices of the form
$ \{ E_{\sigma,\omega} \} $, whose rows and columns are indexed by
internal states,
such that for each  $ \sigma \in \tilde{\Sigma} $,
\begin{equation}
       \sum_{\omega \in \Omega} E_{\sigma,\omega}^{\dagger} E_{\sigma,\omega} = I.
\end{equation}

\begin{openproblem}
	Given a QTM (resp., 2QFA) $ \mathcal{M} $, 
	does there always exist a uni-QTM (resp., uni-2QFA) $ \mathcal{M'} $ such that
	\begin{equation}
		f_{\mathcal{M}}(w) = f_{\mathcal{M'}}(w)
	\end{equation}
	for all $ w \in \Sigma^{*} $?
\end{openproblem}

\subsection{Quantum Turing Machines with Classical Heads} \label{qtm:CQTMs}

Although our definition of space usage as the number of work tape
squares used during the computation is standard in the study of small
as well as large space bounds \cite{Si06,Sz94,FK94},
some researchers prefer to utilize QTM models where the tape head
locations are classical (i.e. the heads do not enter quantum
superpositions) to avoid the possibility of using quantum resources
that increase with input size for the implementation of the heads \cite{AW02,Wa03}.
For details of this specialization of our model, which we call the QTM 
with classical heads (CQTM)

To specialize our general QTM model in order to ensure that the head positions are classical,
we associate combinations of head movements with measurement outcomes.
There are 9 different pairs of possible movement directions, i.e.
\begin{equation}
	<\mspace{-4mu}>^{2} = \{ \leftarrow, \downarrow, \rightarrow \} \times 
	\{ \leftarrow, \downarrow, \rightarrow \},
\end{equation}
for the input and work tape heads, and so we can classify register symbols with the function
\begin{equation}
       D_{r} : \Omega \rightarrow {<\mspace{-4mu}>}^{2}.
\end{equation}
We have $ D_{r}( \omega ) = ( \downarrow, \downarrow ) $
if $ \omega \in \Omega_{a} \cup \Omega_{r} $.
We split $ \Omega_{n} $ into $ 9 $ parts, i.e.
\begin{equation}
       \Omega_{n} = \bigcup\limits_{d_{i},d_{w} \in <\mspace{-4mu}> }
\Omega_{n,d_{i},d_{w}},
\end{equation}
where
\begin{equation}
       \Omega_{n,d_{i},d_{w}} = \{ \omega \in \Omega_{n} \mid D_{r}(\omega)=(d_{i},d_{_{w}})  \}.
\end{equation}
Therefore, the outcome set have 11 elements, represented as triples, specified as follows:
\begin{enumerate}
       \item ``$ (n,d_{i},d_{w}) $'': the computation continuous and the
positions of the input and work tape heads
               are updated     with respect to $ d_{i} $ and $ d_{w} $, respectively;
       \item ``$ (a,\downarrow,\downarrow) $'': the computation halts and the
input is accepted with no head movement;
       \item ``$ (r,\downarrow,\downarrow) $'': the computation halts and the
input is rejected with no head movement.
\end{enumerate}

The transition function of CQTMs are specified so that
when the CQTM is in state $ q $ and reads $ \sigma $ and $ \gamma $
respectively on the input and work tapes,
it enters state $ q' $, and writes $ \gamma' $ and
$ \omega $ respectively
on the work tape and the finite register with the amplitude
\begin{equation}
       \delta(q,\sigma,\gamma,q',\gamma',\omega).
\end{equation}
Since the update of the positions of the input and work tape heads is
performed classically, it is no longer a part of the transitions.
Note that the transition function of 2QFAs with classical head
(2CQFAs) \cite{AW02}
is obtained by removing the mention of the work tape from the above description.

Moreover, as with unidirectional QTMs (resp. unidirectional 2QFAs),
for each $ \sigma \in \tilde{\Sigma} $ and $ \omega \in \Omega $,
CQTMs (2CQFAs) can be described by transition matrices
$ \{ E_{\sigma,\omega} \} $ satisfying the same properties.

As also argued in \cite{Wa03}, CQTMs are sufficiently general for simulating any classical TM.
We present a trivial simulation.

\begin{lemma}
	\label{qtm:lem:pm-simulated-by-qm}
       CQTMs can simulate any PTM exactly.
\end{lemma}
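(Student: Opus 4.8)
The plan is to exhibit a direct, step-by-step emulation of an arbitrary PTM $\mathcal{P}=(Q,\Sigma,\Gamma,\delta_{\mathcal{P}},q_1,Q_a,Q_r)$ by a CQTM $\mathcal{M}$ that uses the \emph{same} configuration set and keeps its global state at every step a probability distribution (equivalently, a diagonal density matrix) over the configurations of $\mathcal{P}$, with the diagonal entries equal to the probabilities $\mathcal{P}$ assigns to those configurations. Since a CQTM has classical heads, the head positions never enter superposition, so there is no obstruction to mimicking the (classical) head moves of $\mathcal{P}$ exactly; the only thing that needs care is turning the stochastic branching of $\delta_{\mathcal{P}}$ into a well-formed quantum operator realized via the finite register.

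\textbf{Construction of the register and transition amplitudes.}
First I would set $\mathcal{M}$'s internal state set, input and work alphabets, and initial state equal to those of $\mathcal{P}$, and take $\Omega_a,\Omega_r$ to encode the accepting/rejecting decisions dictated by $Q_a,Q_r$ (using the ``standard usage'' partition $\Delta=\{n,a,r\}$ and, for the CQTM, the refinement of $\Omega_n$ by head-direction pairs described just above the lemma). Second, for the nonhalting transitions, I would let the finite register record the branch taken: for each domain triple $(q,\sigma,\gamma)$ enumerate the branches $(q',\gamma',d_i,d_w)$ of $\mathcal{P}$ with $\delta_{\mathcal{P}}(q,\sigma,\gamma,q',\gamma',d_i,d_w)=p>0$, assign to each such branch a distinct register symbol $\omega\in\Omega_{n,d_i,d_w}$ (enlarging $\Omega$ as needed, since $|\Omega|$ is unconstrained), and define the CQTM amplitude
\begin{equation}
\delta_{\mathcal{M}}(q,\sigma,\gamma,q',\gamma',\omega)=\sqrt{\delta_{\mathcal{P}}(q,\sigma,\gamma,q',\gamma',d_i,d_w)}.
\end{equation}
Because distinct source triples $(q,\sigma,\gamma)$ get disjoint blocks of register symbols, the columns of the matrix $\mathsf{E}$ (indexed by configurations, hence by such triples together with head positions) are mutually orthogonal, and the local PTM wellformedness condition $\sum_{\bar y}\delta_{\mathcal{P}}(\bar x;\bar y)=1$ becomes exactly $\sum_{\omega}|\delta_{\mathcal{M}}(\cdots)|^2=1$, i.e. each column of $\mathsf{E}$ is a unit vector. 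Hence $\sum_{\omega\in\Omega}E_\omega^\dagger E_\omega=I$ and $\mathcal{M}$ is well-formed in the sense of Section~\ref{qtm:general-framework}.

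\textbf{Correctness by induction, and the one real subtlety.}
Then I would prove by induction on the number of steps that after $t$ steps the mixed state of $\mathcal{M}$ is $\rho_t=\sum_{c} \Pr_{\mathcal{P},t}[c]\,\ket{c}\bra{c}$: applying the operator $\mathcal{E}$ to a diagonal $\rho_t$ sends $\ket{c}\bra{c}$ to $\sum_{\omega}E_\omega\ket{c}\bra{c}E_\omega^\dagger=\sum_{\text{branches }b\text{ from }c} p_b\,\ket{c_b}\bra{c_b}$ (the cross terms vanish because each branch carries its own register symbol), which is precisely one step of $\mathcal{P}$'s Markov evolution, and the selective measurement on $\{n,a,r\}$ at each step simply exposes exactly the configurations that $\mathcal{P}$ would have halted in. Summing the $\ket{c}\bra{c}$ weights over $c$ with accepting (resp. rejecting) register symbol over the whole computation yields $f^a_{\mathcal{M}}(w)=f^a_{\mathcal{P}}(w)$ and likewise for rejection, for every $w$. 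The one point that needs a remark rather than a calculation is the classical-head requirement: I must check that every branch of $\delta_{\mathcal{P}}$ out of a given $(q,\sigma,\gamma)$ that I bundle under one ``continue'' outcome uses the \emph{same} head-move pair $(d_i,d_w)$, which is automatic after splitting $\Omega_n$ along $D_r$ — branches with different head moves simply get register symbols in different cells $\Omega_{n,d_i,d_w}$, and since on a fixed configuration the head positions are determined, no superposition over positions ever arises. This is the only place the proof could go wrong, and it is handled by the register-partition bookkeeping already set up in the CQTM definition; the rest is the routine induction above. (For realtime or one-way specializations the same construction applies verbatim, deleting the $d_i$ component or restricting it to $\rhd$.)
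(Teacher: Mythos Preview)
Your proposal is correct and follows essentially the same construction as the paper: assign a distinct register symbol to each branch (the paper indexes by $(q,\gamma,q',\gamma')$, exploiting the unidirectionality convention so that $D_i(q'),D_w(q')$ determine the head moves; you additionally index by $\sigma$ and explicit $(d_i,d_w)$, which is harmless), set the amplitude to $\sqrt{\delta_{\mathcal{P}}(\cdots)}$, and route the symbol into the appropriate $\Omega_{(n,d_i,d_w)}$, $\Omega_a$, or $\Omega_r$ cell. Your orthonormality check, diagonal-density-matrix induction, and classical-head remark spell out what the paper compresses into the single sentence ``the quantum transitions behave exactly as if they are probabilistic.''
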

\begin{proof}
Let $ \mathcal{P} = (Q, \Sigma, \Gamma, \delta_{\mathcal{P}}, Q_{a},
Q_{r}) $ be a PTM and 
$ \mathcal{M} = (Q, \Sigma, \Gamma, \Omega, \delta_{\mathcal{M}}, \Delta ) $
be the CQTM simulating $ \mathcal{P} $.
For each $ (q,\gamma,q',\gamma') \in Q \times \Gamma
\times Q \times \Gamma $,
we define a register symbol $ \omega $ such that
\begin{enumerate}
       \item if $ q' \in Q_{a} $: $ \omega \in
\Omega_{(a,\downarrow,\downarrow)} $;
       \item if $ q' \in Q_{r} $: $ \omega \in
\Omega_{(r,\downarrow,\downarrow)} $;
       \item if $ q' \in Q_{n} $: $ \omega \in
\Omega_{(n,D_{i}(q'), D_{w}(q'))} $.
\end{enumerate}
We conclude with setting
\begin{equation}
       \delta_{\mathcal{M}}(q,\sigma,\gamma,q',\gamma',\omega) =
       \sqrt{ \delta_{\mathcal{P}}(q,\sigma,\gamma,q',\gamma')},
\end{equation}
i.e. the quantum transitions behave exactly as if they are probabilistic.
\end{proof}
In fact, this result can be extended for other kind of machines, such as, PDAs, CAs, and FAs.

Watrous' QTM model in \cite{Wa03}, which we call Wa03-QTM for ease of
reference, is a CQTM variant
that has an additional classical work tape and classical internal
states. Every Wa03-QTM can be simulated exactly (i.e. preserving the
same acceptance probability for every input) by CQTMs with only some
time overhead. Note that
Wa03-QTMs allow only algebraic transition amplitudes by definition.

\subsection{Realtime Quantum Finite Automata} \label{qtm:RT-QFAs}

Let us consider realtime versions of 2QFAs, 
whose tape heads are forced by definition to have classical locations. 
If the quantum machine model used is sufficiently general, 
then the intermediate measurements can be postponed easily to the end of
the algorithm in realtime computation. That final measurement can be performed on the set of
internal states, rather than the finite register.
Therefore, as with RT-FAs, we specify a subset of the internal
states of the machine as the collection of accepting states, denoted as $ Q_{a} $.

A realtime quantum finite automaton (RT-QFA) \cite{Hi08} is a 5-tuple
\begin{equation}
	\mathcal{M}=(Q,\Sigma,\{\mathcal{E}_{\sigma \in \tilde{\Sigma}}\},q_{1},Q_{a}),
\end{equation}
where each $ \mathcal{E}_{\sigma } $ is an operator having elements
$ \{ E_{\sigma,1},\ldots,E_{\sigma,k} \} $ for some $ k \in \mathbb{Z}^{+} $
satisfying
\begin{equation}
	\sum_{i=1}^{k} E_{\sigma,i}^{\dagger} E_{\sigma,i} = I.
\end{equation}
Additionally, we define the projector
\begin{equation}
	P_{a} = \sum_{q \in Q_{a}} \ket{q}\bra{q}
\end{equation}
in order to check for acceptance.
For any given input string $ w \in \Sigma^{*} $, $ \tilde{w} $ is
placed on the tape, and the computation can be traced by density matrices
\begin{equation}
	\rho_{j} = \mathcal{E}_{\tilde{w}_{j}} (\rho_{j-1}) =
	\sum_{i=1}^{k} E_{\tilde{w}_{j},i} \rho_{j-1} E_{\tilde{w}_{j},i}^{\dagger},
\end{equation}
where $ 1 \le j \le | \tilde{w} |  $ and $ \rho_{0} = \ket{q_{1}}
\bra{q_{1}} $ is the initial density matrix. 
The transition operators can be extended easily for any string as
\begin{equation}
	\mathcal{E}_{w \sigma} = \mathcal{E}_{w} \circ \mathcal{E}_{\sigma},
\end{equation}
where $ \sigma \in \tilde{\Sigma} $, $ w \in (\tilde{\Sigma})^{*} $, and $ \mathcal{E}_{\varepsilon} = I $.
Note that, if $ \mathcal{E}=\{E_{i} \mid 1 \le i \le k \} $ and 
$ \mathcal{E}'=\{ E'_{i} \mid 1 \le j \le k' \} $, then
\begin{equation}
	\mathcal{E'} \circ \mathcal{E} = \{ E_{j}'E_{i} \mid 1 \le i \le k, 1 \le j \le k' \}.
\end{equation}
The probability that RT-QFA $ \mathcal{M} $ accepts $ w $ is
\begin{equation}
	f_{\mathcal{M}}(w) = tr( P_{a} \mathcal{E}_{\tilde{w}}(\rho_{0})) = tr(P_{a} \rho_{| \tilde{w} |} ).
\end{equation}

\begin{lemma}
	\label{qtm:lem:RT-QFA-to-RT-GFA}
	For a given RT-QFA $ \mathcal{M} $ with $ n $ internal states,
	there exists a GFA  $ \mathcal{G} $ with $ n^{2} $ internal states such that
	$ f_{\mathcal{M}}(w) = f_{\mathcal{G}}(w) $ for all $ w \in \Sigma^{*} $.
\end{lemma}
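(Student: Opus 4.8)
The standard trick is to "vectorize" the density-matrix evolution. A RT-QFA on $n$ internal states evolves a density matrix $\rho_j \in \mathbb{C}^{n\times n}$ by $\rho_j = \sum_i E_{\tilde{w}_j,i}\,\rho_{j-1}\,E_{\tilde{w}_j,i}^\dagger$, and the acceptance value is the linear functional $\mathrm{tr}(P_a \rho_{|\tilde{w}|})$. Since $\rho$ lives in a space of dimension $n^2$, and each one-step map $\rho \mapsto \sum_i E_{\sigma,i}\rho E_{\sigma,i}^\dagger$ is linear in $\rho$, the whole computation is a product of $n^2\times n^2$ matrices applied to a fixed initial vector, read off by a fixed final vector — which is exactly the shape of a GFA (Figure~\ref{sbc:fig:GFA}). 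The only subtlety is that a GFA as defined in the excerpt has \emph{real} transition matrices, initial vector, and final vector, whereas the vectorized quantum evolution is a priori complex; so I would use the standard Hermiticity argument to show the relevant quantities are real.

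\textbf{Key steps, in order.} First, identify the state space of $\mathcal{G}$ with ordered pairs $(q_k,q_l) \in Q\times Q$, so $|Q\times Q| = n^2$; write $v_\rho \in \mathbb{C}^{n^2}$ for the column vector with entries $v_\rho[(k,l)] = \rho[k,l]$. Second, for each $\sigma\in\tilde\Sigma$ define the $n^2\times n^2$ matrix $\mathcal{A}_\sigma = \sum_i E_{\sigma,i}\otimes \overline{E_{\sigma,i}}$ (the conjugate on the second factor is the one that makes $v_{\mathcal{E}_\sigma(\rho)} = \mathcal{A}_\sigma v_\rho$ hold; I would double-check the index bookkeeping once, since that is the only place a sign/conjugate error can creep in). Third, observe $v_{\rho_0}$ is the fixed vector corresponding to $\ket{q_1}\bra{q_1}$ and that $\mathrm{tr}(P_a\rho) = \sum_{(k,l): q_k\in Q_a, k=l} \rho[k,l] = f^\top v_\rho$ for the fixed $0$--$1$ row vector $f$ with $f[(k,k)]=1$ iff $q_k\in Q_a$ and $f[(k,l)]=0$ otherwise. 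Fourth, chain these: $f_{\mathcal{M}}(w) = f^\top \mathcal{A}_{\tilde{w}_{|\tilde w|}}\cdots\mathcal{A}_{\tilde{w}_1} v_{\rho_0}$. The end-markers $\cent,\dollar$ are absorbed the same way RT-PFAs handle them (compose the end-marker operators into the first/last letter, or equivalently note the GFA definition in the excerpt reads $w\in\Sigma$ with the end-markers folded in), so the resulting automaton has alphabet $\Sigma$ exactly as required.

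\textbf{The real-coefficients issue.} Since $\rho_j$ is always Hermitian, $v_{\rho_j}$ always lies in the real-linear subspace $V_H$ of Hermitian-matrix vectorizations, which has real dimension $n^2$. I would pick a real orthonormal basis of $V_H$ (e.g.\ the generalized Gell-Mann / Pauli-type Hermitian basis: diagonal units $\ket{q_k}\bra{q_k}$, symmetric off-diagonals $\tfrac1{\sqrt2}(\ket{q_k}\bra{q_l}+\ket{q_l}\bra{q_k})$, antisymmetric ones $\tfrac{i}{\sqrt2}(\ket{q_k}\bra{q_l}-\ket{q_l}\bra{q_k})$), express the $\mathcal{A}_\sigma$, $v_{\rho_0}$, and $f$ in that basis, and check the resulting $n^2\times n^2$ matrices and vectors are real — which they are, because each $\mathcal{A}_\sigma$ maps Hermitian to Hermitian and the basis is real. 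This gives a genuine GFA with $n^2$ states and the same acceptance value on every $w$.

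\textbf{Expected main obstacle.} Nothing here is deep; the one place to be careful is getting the conjugate/transpose conventions in the vectorization map consistent (which factor gets the bar, column-stacking vs.\ row-stacking, and how $P_a$ pairs with it), and then checking that the change of basis to the real Hermitian basis really does clear all imaginary parts rather than just hiding them. I would verify the identity $v_{A\rho A^\dagger} = (A\otimes\overline{A})v_\rho$ on a $2\times 2$ example first to pin down conventions before writing the general argument.
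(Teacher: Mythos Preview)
Your proposal is correct and follows the same core idea as the paper: vectorize the density matrix, represent each superoperator $\mathcal{E}_\sigma$ by the $n^2\times n^2$ matrix $\sum_i E_{\sigma,i}\otimes \overline{E_{\sigma,i}}$, and absorb the end-marker operators into the initial and final vectors. The identity $vec(A\rho B^\top)=(A\otimes B)\,vec(\rho)$ is exactly what the paper invokes (its Figure~\ref{qtm:fig:vec}).

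Where you diverge is in the passage from complex to real entries. The paper uses the mechanical trick of replacing each complex entry $a+bi$ by the $2\times 2$ real block $\left(\begin{smallmatrix}a&b\\-b&a\end{smallmatrix}\right)$, which produces a GFA with $2n^2$ states (so the paper's proof actually overshoots the $n^2$ in its own lemma statement). Your route---observe that each $\mathcal{E}_\sigma$ preserves the real $n^2$-dimensional subspace of Hermitian matrices, choose a real Hermitian basis (diagonal units plus symmetric/antisymmetric off-diagonals), and express everything there---is cleaner and yields exactly $n^2$ states, matching the lemma as stated. The paper's block trick is more hands-off (no basis to write down, no Hermiticity to invoke), but your argument buys the tight state count and makes explicit \emph{why} the imaginary parts vanish rather than just doubling the space to accommodate them.
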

\begin{proof}
	Let $ \mathcal{M} = (Q_{1},\Sigma,\mathcal{E}_{\sigma \in \tilde{\Sigma}},q_{1},Q_{a} ) $ be the RT-QFA with
	$ n $ internal states, and let each $ \mathcal{E}_{\sigma \in \tilde{\Sigma}} $ have $ k $ elements, 
	without loss of generality.
	We construct GFA $ \mathcal{G}=(Q_{2},\Sigma,\{A_{\sigma \in \Sigma}\},v_{0},f) $ with 
	$ 2n^{2} $ internal states.
	We use mapping $ vec $ described in Figure \ref{qtm:fig:vec} 
	in order to linearize the computation of $ \mathcal{M} $ so that it can be traced by GFA $ \mathcal{G} $.
	
	\begin{figure}[h!]		
		\begin{center}
		\fbox{
		\begin{minipage}{0.90\textwidth}
		\footnotesize{
			(Page 73 in \cite{Wa03})
			Let $ A $, $ B $, and $ C $      be $ n \times n $ dimensional matrices.
			$ vec $ is a linear mapping from $ n \times n $ 
			matrices to $ n^{2} $ dimensional (column) vectors defined as
			\begin{equation}
				vec(A)[(i-1)n+j] = A[i,j],
			\end{equation}
			where $ 1 \le i,j \le N $. 
			One can verify the following properties:
			\begin{equation}
				\label{equation:vec-ABC}
				vec(ABC) = (A \otimes C^{\trans})vec(B)
			\end{equation}
			and
			\begin{equation}
				\label{equation:vec-AtB}
				tr(A^{\trans}B)=vec(A)^{\trans}vec(B).
			\end{equation}
		}
		\end{minipage}
		}
		\end{center}
		\caption{The definition and properties of $ vec $}
		\vskip\baselineskip
		\label{qtm:fig:vec}		
	\end{figure}

	We define
	\begin{equation}
		v_{0}' = vec(\rho_{1}),
	\end{equation}
	where
	\begin{equation}
		\rho_{1} = \mathcal{E}_{\cent}(\rho_{0}) = \sum_{i=1}^{k} E_{\cent,i} \rho_{0} E_{\cent,i}^{\dagger}.
	\end{equation}
	For each $ \sigma \in \Sigma $, we define
	\begin{equation}
		A_{\sigma}' = \sum_{i=1}^{k} E_{\sigma,i} \otimes E_{\sigma,i}^{*}
	\end{equation}
	and so we obtain (by Equation \ref{equation:vec-ABC})
	\begin{equation}
		\rho' = \mathcal{E_{\sigma}}(\rho) = \sum_{i=1}^{k} E_{\sigma,i} \rho E_{\sigma,i}^{\dagger}
		\rightarrow vec(\rho') = A_{\sigma}' vec(\rho),
	\end{equation}
	for any density matrix $ \rho $.
	Finally, we define
	\begin{equation}
		f' = vec(P_{a})^{T} \sum_{i=1}^{k} E_{\dollar,i} \otimes E_{\dollar,i}^{*}.
	\end{equation}
	It can be verified by using Equation \ref{equation:vec-AtB} that for any input string $ w \in \Sigma^{*} $,
	\begin{equation}
		f' A'_{w_{|w|}} \cdots A'_{w_{1}} v_{0}' =
		tr(P_{a} \mathcal{E}_{\dollar} \circ \mathcal{E}_{w} \circ \mathcal{E}_{\cent} (\rho_{0})) =
		f_{\mathcal{M}}(w).
	\end{equation}
	The complex entries of $ v_{0}' $, $ \{ A_{\sigma \in \Sigma}' \} $, and $ f' $
	can be replaced \cite{MC00} with $ 2 \times 2 $ 
	dimensional real matrices\footnote{$ a+bi $ is replaced with 
	$ \left( \begin{array}{cc} a & b \\ -b & a \end{array} \right) $.},
	and so we obtain the equations
	\begin{equation}
		\left(
		\begin{array}{cc}
			f_{\mathcal{M}}(w) & 0 \\
			0 & f_{\mathcal{M}}(w)
		\end{array}
		\right) = f''  A''_{w_{|w|}} \cdots 
		A''_{w_{1}} v_{0}'',
	\end{equation}
	where the terms with double primes are obtained from the corresponding terms with single primes.
	We finish the construction of $ \mathcal{G} $ by stating that
	\begin{enumerate}
		\item $ v_{0} $ is the first column of $ v_{0}'' $,
		\item $ A_{\sigma} $ is equal to $ A_{\sigma}'' $, for each $ \sigma \in \Sigma $, and
		\item $ f $ is the first row of $ f'' $.
	\end{enumerate}
    We also refer the reader to \cite{MC00,LQ08} presenting similar constructions for other
	types of realtime QFAs.
	An alternative demonstration, not using the density matrix formalism, can be found in \cite{YS09A}.	
\end{proof}

\begin{corollary}
	QAL = S.
\end{corollary}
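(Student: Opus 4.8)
The plan is to establish the two inclusions $\mathrm{QAL}\subseteq\mathrm{S}$ and $\mathrm{S}\subseteq\mathrm{QAL}$ separately. In each direction the point is to transport a cutpoint unchanged from one model to the other, which works precisely because the two lemmas we invoke preserve the \emph{acceptance value function}, not merely the accept/reject decision.

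For $\mathrm{QAL}\subseteq\mathrm{S}$: let $L\in\mathrm{QAL}$, so there is a RT-QFA $\mathcal{M}$ and a cutpoint $\lambda$ with $L=\{w\mid f_{\mathcal{M}}(w)>\lambda\}$ (or the nonstrict analogue). Applying Lemma~\ref{qtm:lem:RT-QFA-to-RT-GFA} produces a GFA $\mathcal{G}$ with $f_{\mathcal{M}}(w)=f_{\mathcal{G}}(w)$ for every $w\in\Sigma^{*}$, hence $L=\{w\mid f_{\mathcal{G}}(w)>\lambda\}$ is recognized by a GFA with strict cutpoint. Since GFAs, RT-PFAs, and 2PFAs all recognize exactly $\mathrm{S}$ with strict cutpoint (Turakainen's characterization, recalled in Section~\ref{sbc:space-classes}), we conclude $L\in\mathrm{S}$.

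For $\mathrm{S}\subseteq\mathrm{QAL}$: let $L\in\mathrm{S}$, witnessed by a RT-PFA $\mathcal{P}$ and cutpoint $\lambda$ with $L=\{w\mid f_{\mathcal{P}}(w)>\lambda\}$. By the finite-automaton instance of Lemma~\ref{qtm:lem:pm-simulated-by-qm} (the exact simulation of probabilistic machines by quantum ones, noted in the text to extend to FAs) there is a RT-QFA $\mathcal{M}$ with $f_{\mathcal{M}}(w)=f_{\mathcal{P}}(w)$ for all $w$; here the requirement that the RT-QFA head be classical is automatic, since a realtime head always advances. Thus $L=\{w\mid f_{\mathcal{M}}(w)>\lambda\}\in\mathrm{QAL}$, and combining the two inclusions gives $\mathrm{QAL}=\mathrm{S}$.

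I do not expect a genuine obstacle here; the corollary is a direct consequence of Lemma~\ref{qtm:lem:RT-QFA-to-RT-GFA} together with the earlier Turakainen equivalence and the exact classical-simulation lemma. The only item deserving a moment's care is checking that ``simulate exactly'' in both lemmas means literal equality of $f(\cdot)$ over all inputs, so that an arbitrary cutpoint---strict or nonstrict---carries over verbatim; this is exactly what the statements of Lemmas~\ref{qtm:lem:RT-QFA-to-RT-GFA} and~\ref{qtm:lem:pm-simulated-by-qm} supply.
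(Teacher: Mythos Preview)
Your proof is correct and matches the paper's intended reasoning: the corollary is placed immediately after Lemma~\ref{qtm:lem:RT-QFA-to-RT-GFA}, which supplies $\mathrm{QAL}\subseteq\mathrm{S}$ via Turakainen's GFA characterization, while $\mathrm{S}\subseteq\mathrm{QAL}$ is the exact-simulation direction (Lemma~\ref{qtm:lem:pm-simulated-by-qm} specialized to finite automata, as the paper notes right after that lemma). Your care in observing that both lemmas preserve the acceptance function pointwise, so that any cutpoint transfers verbatim, is exactly the right justification.
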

We therefore have that realtime unbounded-error probabilistic and quantum finite automata are equivalent in power.
We show in Section \ref{uerr:general} that 
this equivalence does not carry over to the one-way and two-way cases.

\subsection{Quantum Turing Machines with Restricted Measurements} \label{qtm:restricted-QTMs}
                      
In another specialization of the QTM model, the \textit{QTM with
restricted measurements}, the machine is unidirectional, the
heads can enter quantum superpositions, $ \Delta = \{n,a,r\} $, and
$ | \Omega_{n} | = | \Omega_{a} | = | \Omega_{r} | = 1  $. The first
family of QTMs that was formulated for the analysis of space
complexity issues \cite{Wa98,Wa99}, which we call the Wa98-QTM,
corresponds to such a model, with the added restriction that the transition
amplitudes are only allowed to be rational numbers.
The finite automaton versions of QTMs with restricted measurements\footnote{These models,
which also allow unrestricted transition amplitudes by the convention
in automata theory, are introduced in the paper written by Kondacs and Watrous \cite{KW97}.}
are known as Kondacs-Watrous quantum finite automata, and abbreviated
as 2KWQFAs, 1KWQFA, or RT-KWQFAs, depending on the set of allowed
directions of movement for the input head.
These are pure state models, 
since the non-halting part of the computation is always represented by
a single quantum state.
Therefore, configuration or state vectors, rather than the density
matrix formalism, can be used in order to
trace the computation easily.
To be consistent with the literature on 2KWQFAs, we specialize the
2QFA model by the following process:
\begin{enumerate}
	\item The finite register does not need to be refreshed, since the 
		computation continuous if and only if the initial symbol is observed.
	\item In fact, 2KWQFAs do not need to have the finite register at all, instead, similarly to 2PFAs, the set of
		internal states of the 2KWQFA is partitioned to sets of nonhalting, accepting, and rejecting states, 
		denoted as $ Q_{n} $, $ Q_{a} $, and $ Q_{r} $, respectively, which can be obtained easily by taking 
		the tensor product of the internal states of the 2QFA and the set $ \{n, a, r\} $.
	\item A configuration is designated as nonhalting (resp., accepting or rejecting),
		if its internal state is a member of $ Q_{n} $ (resp., $ Q_{a} $ or $ Q_{r} $). 
		Nonhalting (resp., accepting or rejecting) configurations form the set $ \mathcal{C}^{w}_{n} $
		(resp., $ \mathcal{C}^{w}_{a} $ or $ \mathcal{C}^{w}_{r} $) 
		(for a given input string $ w \in \Sigma^{*} $).
	\item The evolution of the configuration sets can be represented by a unitary matrix.
	\item The measurement is done on the configuration set with projectors $ P_{n} $, $ P_{a} $, and $ P_{r}$,
		defined as
        \begin{equation}
			P_{\tau \in \{n,a,r\}}  =\sum_{c \in \mathcal{C}^{w}_{\tau}} \ket{c}\bra{c}
		\end{equation}
        for a given input string $ w \in \Sigma^{*} $, where 
        the standard actions are associated with the outcomes ``$n$'', ``$a$'', and ``$r$''.
\end{enumerate}

Formally, a 2KWQFA is a 6-tuple
\begin{equation}
	\mathcal{M} = \{Q,\Sigma,\delta,q_{1},Q_{a},Q_{r}\},
\end{equation}
where $ Q_{n} = Q \setminus \{ Q_{a} \cup Q_{r} \} $ and $ q_{1} \in Q_{n} $.
$ \delta $ induces a unitary matrix $ U_{\sigma}$, whose rows and columns are
indexed by internal states for each input symbol $ \sigma $.
Note that, since all 2KWQFAs are unidirectional, 
we use the notations $ \leftstate{q} $, $ \stopstate{q} $, and $ \rightstate{q} $
for internal state $ q $ in order to represent the value of $ D_{i}(q) $ as
$ \leftarrow $, $ \downarrow $, and $ \rightarrow $, respectively.

A RT-KWQFA is a 6-tuple
\begin{equation}
	\mathcal{M} = \{Q,\Sigma,\{U_{\sigma \in \tilde{\Sigma}}\},q_{1},Q_{a},Q_{r}\},
\end{equation}
where $ \{U_{\sigma \in \tilde{\Sigma}}\} $ are unitary transition matrices.
In contrast to the other kinds of realtime finite automata, a RT-KWQFA is
measured at each step during computation
after the unitary transformation is applied.
The projectors are defined as
\begin{equation}
	 P_{\tau \in \Delta}     =\sum_{q \in Q_{\tau}} \ket{q}\bra{q}.
\end{equation}
The nonhalting portion of the computation of a RT-KWQFA can be traced by
a state vector, say $ \ket{u} $, such that
$ \ket{u}[i] $ corresponds to state $ q_{i} $.
The computation begins with $ \ket{u_{0}} = \ket{q_{1}} $. For a given
input string $ w \in \Sigma^{*} $,
at step $ j $ $ (1 \le j \le |\tilde{w}|) $:
\begin{equation}
	\ket{u_{j}} = P_{n}U_{\tilde{w}_{j}} \ket{u_{j-1}},
\end{equation}
the input is accepted with probability
\begin{equation}
	|| P_{a}U_{\tilde{w}_{j}} \ket{u_{j-1}} ||^{2},
\end{equation}
and rejected with probability
\begin{equation}
	|| P_{r}U_{\tilde{w}_{j}} \ket{u_{j-1}} ||^{2}.
\end{equation}
The overall acceptance and rejection probabilities are accumulated by
summing up these values at each step. Note that, the state vector
representing the nonhalting portion is not normalized in the
description given above.

The most restricted QFA model is the Moore-Crutchfield QFA (MCQFA) \cite{MC00}, 
which can be seen as a special case of RT-KWQFA
such that only a unique measurement is done after reading symbol $ \dollar $.

\section{Quantum Pushdown Automata} \label{qtm:QPDAs}

Two-way quantum pushdown automaton (2QPDA),
one-way quantum pushdown automaton (1QPDA), and
realtime quantum pushdown automaton (RT-QPDA) can all be represented to be a 7-tuple
\begin{equation}
	\mathcal{P} = ( \Sigma,\Gamma, \Omega, Q, \delta, q_{1}, \Delta ).
\end{equation}
We refer the reader to Appendix \ref{well:QPDAs} for the list of the local conditions for 2QPDAs, 1QPDAs,
and RT-QPDAs wellformedness.

If a 2QPDA (resp., 1QPDA or RT-QPDA) is unidirectional, 
then there is only one local condition for well-formedness:
for any choice of $ q_{1},q_{2} \in Q $, $ \sigma \in \tilde{\Sigma} $, and
$ \gamma_{1} \in \Gamma,\gamma_{2} \in \tilde{\Gamma} $,
\small
\begin{equation}
	\sum\limits_{q' \in Q,\gamma' \in \Upsilon,\omega \in \Omega} \mspace{-15mu}
	\overline{\delta( q_{1},\sigma, \gamma_{1},q',\gamma',\omega )}
	\delta( q_{2},\sigma, \gamma_{2},q',\gamma',\omega )
	= \left\lbrace
	\begin{array}{ll}
		1 ~~ & q_{1} = q_{2} \mbox{ and } \gamma_{1} = \gamma_{2} \\
		0 & \mbox{otherwise}
	\end{array}
	\right. .
\end{equation}
\normalsize
In the unidirectional case, we can define an admissible operator for each $ \sigma \in \tilde{\Sigma} $,
i.e. $ \mathcal{E}_{\sigma} =\{ E_{\sigma,\omega}\} $,
where $\omega \in \Omega  $ and  $ E_{\sigma,\omega}[j,i] $ represents
the amplitude of the transition $ \delta(q_{i},\sigma,\gamma_{i},q_{j},\gamma_{j},\omega) $,
where $ (q_{i},\gamma_{i}) $ and $ (q_{j},\gamma_{j}) $ are the pairs corresponding to
the $ i^{th} $ and $ j^{th} $ columns (rows), respectively, and
$ q_{i},q_{j} \in Q; \gamma_{i},\gamma_{j} \in \Upsilon $.

It is an open problem whether the computation powers of 2QPDAs (resp., 1QPDAs or RT-QPDAs) and uni-2QPDAs 
(resp., uni-1QPDAs or uni-RT-QPDAs) are the same or not.

\section{Quantum Counter Automata} \label{qtm:QCAs}

Since quantum CAs are a special case of quantum PDAs, we do not give the details for all
kind of quantum CAs. Instead, we focus on the realtime counterpart of quantum counter automata.

A realtime quantum $ k $-counter automaton (RT-Q$ k $CA) is a 6-tuple
\begin{equation}
	\mathcal{M} = (Q,\Sigma,\Omega,\delta,q_{1},Q_{a}).
\end{equation}
We give the local conditions of well-formedness for RT-Q1CAs:
For any choice of $ q_{1},q_{2} \in Q $, $ \sigma \in \tilde{\Sigma} $, and
$ \theta_{1},\theta_{2} \in \Theta $,
\small
\begin{equation}
       \sum\limits_{q^{\prime} \in Q, c \in \lozenge, \omega \in \Omega }
\mspace{-27mu}
               \overline{
               \delta(q_{1},\sigma,\theta_{1},q',c,\omega ) }
               \delta(q_{2},\sigma,\theta_{2},q',c,\omega )
               = \left\lbrace
               \begin{array}{lll}
                       1 ~~ &  q_{1} = q_{2} \mbox{ and } \theta_{1} = \theta_{2} \\
                       0 &  \mbox{otherwise}
               \end{array}
               \right. ,
\end{equation}
\normalsize
\begin{equation}
       \begin{array}{ll}
               \displaystyle \sum\limits_{q^{\prime} \in Q, \omega \in \Omega }
               &
               \overline{\delta(q_{1},\sigma,\theta_{1},q',+1,\omega)}
               \delta(q_{2},\sigma,\theta_{2},q', 0 ,\omega)
               \\
               & + ~
               \overline{\delta(q_{1},\sigma,\theta_{1},q',0,\omega)}
               \delta(q_{2},\sigma,\theta_{2},q', -1 ,\omega) = 0
       \end{array}     ,
\end{equation}
and
\begin{equation}
               \displaystyle \sum\limits_{q^{\prime} \in Q, \omega \in \Omega }
               \overline{\delta(q_{1},\sigma,\theta_{1},q',+1,\omega)}
               \delta(q_{2},\sigma,\theta_{2},q', -1 ,\omega) = 0.
\end{equation}

If the RT-Q1CA (and RT-Q$ k $CA) is unidirectional, then there is only one local condition for well-formedness:
for any choice of $ q_{1},q_{2} \in Q $, $ \sigma \in \tilde{\Sigma} $, and
$ \bar{\theta_{1}},\bar{\theta_{2}} \in \Theta^{k} $,
\begin{equation}
	\mspace{-1mu}
	\sum\limits_{q^{\prime} \in Q,  \omega \in \Omega } \mspace{-15mu}
	\overline{
	\delta(q_{1},\sigma,\bar{\theta_{1}},q',\omega ) }
	\delta(q_{2},\sigma,\bar{\theta_{2}},q',\omega )
	= \left\lbrace
	\begin{array}{lll}
		1 ~~ &  q_{1} = q_{2} \mbox{ and } \bar{\theta_{1}} = \bar{\theta_{2}} \\
		0 &  \mbox{otherwise}
	\end{array}
	\right. .
\end{equation}
In the unidirectional case, we can define an admissible operator $ \mathcal{E}_{\sigma,\bar{\theta}}$ 
for each $ \sigma \in \tilde{\Sigma} $ and $ \bar{\theta} \in \Theta^{k} $, 
which is described by a collection $\{ E_{\sigma,\bar{\theta},\omega}\} $,
where $\omega \in \Omega  $ and  $ E_{\sigma,\bar{\theta},\omega}[j,i] $ represents
the amplitude of the transition  from state $ q_{i} $ to $ q_{j} $ 
when reading symbol $ \sigma $, having counter signs $ \bar{\theta} $, 
and writing $ \omega $ on the finite register 
(and so the value of the counter is updated by $ D_{c}(q') $).
Note that, $ \mathcal{E}_{\sigma,\bar{\theta}} $ is admissible if and only if
\begin{equation}
	\sum_{\omega \in \Omega} E_{\sigma,\bar{\theta},\omega}^{\dagger} E_{\sigma,\bar{\theta},\omega} = I.
\end{equation}
It is an open problem whether the computation powers of RT-Q$ k $CA and uni-RT-Q$ k $CA are the same or not.

We close the section by showing the isomorphism of RT-Q$ k $CA($ m $) and RT-Q$ k $CA.

\begin{lemma}
	\label{qtm:lem:CA-m-isomorphic-CA}
	Any RT-Q$ k $CA($ m $) can be exactly simulated by a RT-Q$ k $CA.
\end{lemma}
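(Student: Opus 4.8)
The plan is to use the classical ``keep the quotient in the counter and the remainder in the finite control'' construction, and then to observe that in the quantum setting the only extra thing to check --- admissibility of the new transition function --- comes for free, because the configuration spaces of the two machines are related by a fixed permutation.

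First I would fix a RT-Q$k$CA($m$) $ \mathcal{M} = (Q,\Sigma,\Omega,\delta,q_{1},Q_{a}) $ and build a RT-Q$k$CA $ \mathcal{M}' $ with internal state set $ Q' = Q \times \{0,1,\ldots,m-1\}^{k} $, initial state $ (q_{1},0,\ldots,0) $, and accepting set $ Q_{a}' = Q_{a} \times \{0,\ldots,m-1\}^{k} $. The intended invariant is that whenever $ \mathcal{M} $ holds counter values $ (v_{1},\ldots,v_{k}) $, the machine $ \mathcal{M}' $ is in an internal state $ (q,r_{1},\ldots,r_{k}) $ and holds counter values $ (v_{1}',\ldots,v_{k}') $ with $ v_{l} = m v_{l}' + r_{l} $ and $ 0\le r_{l}<m $. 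Crucially, the status of $ v_{l} $ --- all that $ \delta $ ever reads through $ \Theta $ --- is recoverable: $ v_{l}=0 $ iff $ v_{l}'=0 $ and $ r_{l}=0 $, so the ``true'' status vector $ \bar\theta $ is a function of $ \mathcal{M}' $'s own counter statuses together with its stored remainders. For a single-step counter update $ c_{l}\in\{-m,\ldots,m\} $ of $ \mathcal{M} $ I set $ r_{l}'=(r_{l}+c_{l})\bmod m $ and $ \mathrm{carry}_{l}=\lfloor (r_{l}+c_{l})/m\rfloor $; from $ 0\le r_{l}\le m-1 $ and $ -m\le c_{l}\le m $ we get $ -m\le r_{l}+c_{l}\le 2m-1 $, hence $ \mathrm{carry}_{l}\in\lozenge=\{-1,0,+1\} $ is a legal single-step counter operation. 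Reading $ \sigma $ in state $ (q,\bar r) $ with counter statuses $ \bar\theta' $, the automaton $ \mathcal{M}' $ applies the original transition of $ \mathcal{M} $ on $ (q,\sigma,\bar\theta) $ --- same amplitude, same register symbol $ \omega $, same next original state $ q' $ --- moves to $ (q',\bar r') $, and issues the counter operation whose $ l $-th component is $ \mathrm{carry}_{l} $. All of this is one step, so $ \mathcal{M}' $ is still realtime.

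Next I would check the one-line identity $ m(v_{l}'+\mathrm{carry}_{l})+r_{l}' = (m v_{l}'+r_{l})+c_{l} = v_{l}+c_{l} $, which says that the obvious map $ \beta:(q,\bar r,\bar v')\mapsto\bigl(q,(m v_{l}'+r_{l})_{l}\bigr) $ --- a bijection from configurations of $ \mathcal{M}' $ onto configurations of $ \mathcal{M} $, since division with remainder is bijective --- intertwines the two computations step by step: letting $ P $ be the permutation matrix of $ \beta $ and $ \{E_{\omega}\},\{E_{\omega}'\} $ the per-step configuration transition operators of $ \mathcal{M} $ and $ \mathcal{M}' $, we have $ E_{\omega}'=P^{\trans}E_{\omega}P $ for every $ \omega $; moreover $ \beta $ sends $ \mathcal{M}' $'s initial configuration to $ \mathcal{M} $'s, and sends the $ Q_{a}' $-configurations exactly onto the $ Q_{a} $-configurations.

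Everything then follows formally. Well-formedness of $ \mathcal{M}' $ is immediate from that of $ \mathcal{M} $, since $ \sum_{\omega}E_{\omega}'^{\dagger}E_{\omega}' = P^{\trans}\bigl(\sum_{\omega}E_{\omega}^{\dagger}E_{\omega}\bigr)P = P^{\trans}IP = I $. And the acceptance probabilities coincide: by induction $ \rho_{j}'=P^{\trans}\rho_{j}P $, while $ P_{a}'=P^{\trans}P_{a}P $, so $ \tr(P_{a}'\rho_{|\tilde w|}') = \tr(P_{a}\rho_{|\tilde w|}) = f_{\mathcal{M}}(w) $ for every $ w\in\Sigma^{*} $. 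The genuine content of the proof is just the decomposition of a $ \{-m,\ldots,m\} $-update into a finite-control change plus one $ \{-1,0,+1\} $-carry, together with the accompanying counter-status bookkeeping; the one point I expect to be mildly tedious (rather than a real obstacle) is writing $ \delta' $ out explicitly in the non-unidirectional case, where the range components include the counter operations --- but since admissibility was defined as $ \sum_{\omega}E_{\omega}^{\dagger}E_{\omega}=I $ at the level of configuration operators, the permutation argument lets us avoid re-deriving the long list of local well-formedness conditions by hand.
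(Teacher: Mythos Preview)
Your proposal is correct and follows essentially the same approach as the paper: both use the quotient--remainder decomposition with state set $Q\times\{0,\ldots,m-1\}^{k}$, compute carries $\lfloor (r_{l}+c_{l})/m\rfloor\in\{-1,0,+1\}$, and conclude via the bijection of configurations that the configuration operators are permutation-conjugate, so well-formedness and acceptance probabilities transfer. You are in fact slightly more careful than the paper about how $\mathcal{M}'$ recovers the true counter-status $\bar\theta$ from its own status $\bar\theta'$ and the stored remainders $\bar r$, a point the paper glosses over in its transition description.
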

\begin{proof}
	Let $ \mathcal{M} = (Q,\Sigma,\Omega,\delta,q_{1},Q_{a}) $ be the RT-Q$ k $CA($ m $) and 
	$ \mathcal{M}' = (Q',\Sigma,\Omega,\delta',q_{1}',Q'_{a}) $ be
	the RT-Q$ k $CA simulating $ \mathcal{M} $ exactly.
	For each internal state of $ \mathcal{M} $, say $ q \in Q $, 
	we define $ m^{k} $ internal states, i.e. $ \LRofC{q,i_{1},\ldots,i_{k}} \in Q' $ 
	$ (0 \le i_{j} \le m-1, 1 \le j \leq k) $, for $ \mathcal{M}' $.
	Moreover, $ q_{1}' = \LRofC{q_{1},0,\ldots,0} $
	and $ Q'_{a} = Q_{a} \times \{0,\ldots,m-1\}^{k} $.

	Let 
	\begin{equation}
		\varphi : \mathbb{Z}^{k} \rightarrow \mathbb{Z}^{k} \times \{ 0,\ldots,m-1\}^{k}
	\end{equation} 
	be a bijection such that 
	\begin{equation}
		\varphi(x_{1},\ldots,x_{k})= 
		\left( \left\lfloor \frac{x_{1}}{m} \right\rfloor, \ldots, 
		\left\lfloor \frac{x_{k}}{m} \right\rfloor, (x_{1} \mod m),\ldots,(x_{k} \mod m) \right).
	\end{equation}	
	Hence, we can say that the counter values of $ \mathcal{M} $, say $ \bar{x} \in \mathbb{Z}^{k} $, 
	can be equivalently represented by $ \varphi(\bar{x}) $, based on which we construct $ \mathcal{M}' $,
	where $ \varphi(\bar{x})[i] $ is stored by the $ i^{th} $ counter and $ \varphi(\bar{x})[k+i] $
	is stored by the internal state.	
	That is, for any configuration of $ \mathcal{M} $, say ($ q,\bar{x} $),
	we have an equivalent configuration of $ \mathcal{M}' $ as 
	\begin{equation}
		( \LRofC{ q , \varphi(\bar{x})[k+1],\ldots, \varphi(\bar{x})[2k] },
			\varphi(\bar{x})[1],\ldots,\varphi(\bar{x})[k]).
	\end{equation}
	Moreover, the transitions of $ \mathcal{M}' $ can be obtained from those of $ \mathcal{M} $
	in the following way:
	for any $ (i_{1},\ldots,i_{k}) \in \{-m,\ldots,m\}^{k} $ 
	and $ (j_{1},\ldots,j_{k}) \in \{0,\ldots,m-1\}^{k} $, 
	the part of the transition 
	\begin{equation}
		(q,\sigma) \overset{\delta}{\longrightarrow} \alpha (q',i_{1},\ldots,i_{k},\omega)
	\end{equation} 
	of $ \mathcal{M} $ is replaced by transition
	\begin{equation}
		\small
		\begin{array}{l}
		( \LRofC{q,j_{1},\ldots,j_{k}},\sigma) \overset{\delta'}{\longrightarrow} 
		\\
		~~~~~
		\alpha \left( \LRofC{ q', (j_{1}+i_{1} \mod m),\ldots,(j_{k}+i_{k} \mod m)}, 
		\left\lfloor \frac{j_{1}+i_{1}}{m} \right\rfloor, \ldots,
		\left\lfloor \frac{j_{k}+i_{k}}{m} \right\rfloor,
		\omega \right)
		\end{array}
	\end{equation}
	in $ \mathcal{M}' $, where $ q \in Q $, $ \sigma \in \tilde{\Sigma} $, $\omega \in \Omega $, and
	$ \alpha \in \mathbb{C} $ is the amplitude of the transition.
	Since $ \varphi $ is a bijection, the configuration matrix of $ \mathcal{M} $ is isomorphic to
	the one of $ \mathcal{M}' $ for any input string $ w \in \Sigma^{*} $.
	(Similarly, $ \mathcal{M} $ is well-formed if and only if $ \mathcal{M}' $ is well-formed.) 
	Therefore, they process exactly the same computation on a given input string, say $ w \in \Sigma^{*} $,
	and so
	\begin{equation}
		f_{\mathcal{M}}(w) = f_{\mathcal{M}'}(w).
	\end{equation}
\end{proof}

\section{Nondeterministic Quantum Machines} \label{qtm:nondeterministic}

A nondeterministic quantum machine is a quantum machine with the error setting of 
positive one-sided unbounded error.
``N'' is used before ``Q'' in the abbreviations of quantum machines.

\chapter{SUBLOGARITHMIC-SPACE UNBOUNDED-ERROR COMPUTATION} \label{uerr}

In this chapter, we focus on unbounded error quantum computation in sublogatihmic space 
(Section \ref{uerr:general})
and nondeterminitic quantum computation in constant space (Section \ref{uerr:nondeterministic}).

\section{Unbounded Error Results} \label{uerr:general}

Watrous compared the unbounded-error probabilistic space complexity
classes (PrSPACE$ _{\mathbb{Q}} $($s$) and PrSPACE$ _{\mathbb{A}} $($s$)) 
with the corresponding classes for both Wa98-QTMs \cite{Wa98,Wa99} and 
Wa03-QTMs \cite{Wa03}, respectively, for space bounds $ s=\Omega(\log n) $, 
establishing the identity of the associated
quantum space complexity classes with each other, and also with the
corresponding probabilistic ones.
The case of $s=o(\log n)$ was left as an open question \cite{Wa99}. 
In this section, we provide an answer to that question.

\subsection{Probabilistic versus Quantum Computation with Sublogarithmic Space} 
\label{uerr:probabilistic-versus-quantum}

We already know that QTMs allowing superoperators are at least as powerful as PTMs for any common space bound. 
We now exhibit a 1KWQFA which performs a task that is impossible for PTMs with small space bounds.

Consider the nonstochastic and context-free language \cite{NH71}
	\begin{equation}
		\small
		L_{NH} = \{a^{x}ba^{y_{1}}ba^{y_{2}}b \cdots a^{y_{t}}b \mid x,t,y_{1}, \cdots, y_{t}
       \in \mathbb{Z}^{+} \mbox{ and } \exists k ~ (1 \le k \le t), x=\sum_{i=1}^{k}y_{i} \}
	\end{equation}
over the alphabet $ \Sigma = \{a,b \} $.
Freivalds and Karpinski \cite{FK94} have
proven the following facts about $ L_{NH} $:

\begin{fact}
	No PTM using space $ o(\log\log n) $ can recognize $ L_{NH} $ with unbounded error.
\end{fact}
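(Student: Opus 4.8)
The plan is to derive a contradiction from the small number of ``surface behaviours'' available to a probabilistic Turing machine that uses only $o(\log\log n)$ cells of work tape, exploiting the fact that $L_{NH}$ forces the acceptance probability, viewed as a function of the length of the first $a$-block, to oscillate across the cutpoint very many times.

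I would assume for contradiction that a PTM $\mathcal{M}$ recognises $L_{NH}$ with cutpoint $\lambda$ in space $s(n)$ with $s(n)=o(\log\log n)$. The first step is the standard surface-configuration / crossing-sequence set-up: on inputs of length $n$, $\mathcal{M}$ has at most $D(n)=|Q|\cdot s(n)\cdot|\Gamma|^{s(n)}=2^{O(s(n))}$ surface configurations (internal state, work-tape contents, work-head position), and the net effect of the input head entering, bouncing inside, and leaving a maximal block $a^{m}$ can be simulated by a fixed weighted finite automaton of dimension $2^{O(s(n))}$ scanning $a^{m}$ once. Consequently, after fixing $t$ and a tail $(y_{1},\dots,y_{t})$, the acceptance probability $f_{\mathcal{M}}(\cent a^{x}ba^{y_{1}}b\cdots a^{y_{t}}b\dollar)$, as a function of $x$, has the form $\gamma^{\top}M^{x}\delta$, where $M$ is a fixed matrix of dimension $D=2^{O(s(n))}$ and $\gamma,\delta$ are fixed vectors encoding the left portion, the end markers, and the entire tail.

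The second step bounds how often such a function can switch sides of $\lambda$. Expanding $f(x)=\gamma^{\top}M^{x}\delta$ as a generalised power sum in the spectrum of $M$ and using that $M$ arises from a finite Markov-type process (its unit-modulus eigenvalues are roots of unity, the rest decaying geometrically), $f(x)$ is, past a transient, a fixed pattern of period at most Landau's function $g(D)=e^{(1+o(1))\sqrt{D\ln D}}$; hence it crosses $\lambda$ at most $N(n)=2^{2^{O(s(n))}}$ times over any interval of $x$-values. The third step uses the combinatorial richness of $L_{NH}$ to violate this bound: pick $y_{i}=2^{i}$, so the partial sums $\sigma_{k}=\sum_{i\le k}y_{i}$ are spread out non-periodically, and let $x$ range over $\{1,\dots,\sigma_{t}\}$; then $x\mapsto\mathbf{1}[\,\cent a^{x}ba^{y_{1}}b\cdots a^{y_{t}}b\dollar\in L_{NH}\,]$ jumps $2t$ times (once up and once down near each of $\sigma_{1},\dots,\sigma_{t}$), so correctness forces $f$ to cross $\lambda$ at least $\approx 2t$ times. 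On this family the length $n$ is bounded by a fixed function of $t$, and $s(n)=o(\log\log n)$ makes $N(n)$ grow strictly more slowly in $t$ than $2t$, so taking $t$ large gives $2t>N(n)$ --- the desired contradiction.

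The main obstacle is the circular dependence between the matrix dimension $D(n)=2^{O(s(n))}$ and the input length $n$, which itself grows with the parameters ($x$, $t$, the $y_{i}$) being sent to infinity; this is precisely why the threshold here is $\log\log n$ --- not $\log n$, and not a constant --- and it is handled by arranging the growth rates so that $s(n)\le\epsilon\log\log n$ holds uniformly over the finite test family while the forced number of cutpoint crossings still exceeds $N(n)=2^{2^{O(s(n))}}$. A further technical point is to make the crossing bound uniform as $D$ grows, i.e.\ to control the transient length and the eventual period simultaneously for a family of Markov-type matrices of unbounded size; this is the heart of the argument, for which we refer to Freivalds and Karpinski \cite{FK94}.
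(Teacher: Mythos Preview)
The paper does not prove this statement at all: it is recorded as a \emph{Fact} and attributed to Freivalds and Karpinski \cite{FK94}, with no argument given. So there is no ``paper's own proof'' to compare against; your sketch is a plausible outline of the crossing-sequence/oscillation argument underlying the cited result, and you yourself defer the technical core to \cite{FK94} at the end.
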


\begin{fact}
	No 1PTM using space $ o(\log n) $ can recognize $ L_{NH} $ with unbounded error.
\end{fact}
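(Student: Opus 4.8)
The plan is to argue by contradiction: from a hypothetical space-efficient one-way probabilistic recognizer of $L_{NH}$ I would extract a family of points in a low-dimensional real space that is forced to be shattered by affine halfspaces, which is impossible by Radon's theorem. So suppose $\mathcal{M}$ is a 1PTM recognizing $L_{NH}$ with cutpoint $\lambda$ using space $s(n)=o(\log n)$; I treat the strict-cutpoint case, the nonstrict one being identical after replacing open halfspaces by closed ones.

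The first thing I would record is a combinatorial feature of $L_{NH}$: for any finite set $S=\{s_1<s_2<\cdots<s_t\}\subseteq\mathbb{Z}^+$, the block word $w_S = b\,a^{s_1} b\,a^{s_2-s_1} b\cdots b\,a^{s_t-s_{t-1}} b$ (all block lengths are positive integers since the $s_i$ are strictly increasing) has prefix sums of its block lengths equal to $s_1,\dots,s_t$, so that $a^x b\,w_S\in L_{NH}$ if and only if $x\in S$. Thus the membership questions for the strings $a^x b\,w_S$ encode \emph{arbitrary} subset-membership queries ``is $x\in S$?''.

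Next I would use one-wayness. After consuming the prefix $\cent a^x b$, the machine is described by a probability distribution $\pi_x$ over its finitely many internal configurations (internal state, work-tape contents, work-tape head position), and for any remaining input $v$ the probability that $\mathcal{M}$ ultimately accepts is an affine function $F_v(\pi_x)=\sum_c \pi_x(c)\Pr[\mathcal{M}\text{ halts-and-accepts from }c\text{ on }v\dollar]$ of $\pi_x$. Fix a large integer $N$ and restrict to $x\in\{1,\dots,N\}$ and $S\subseteq\{1,\dots,N\}$; then every string $a^x b\,w_S$ has length $O(N)$, so each $\pi_x$ lies in $\mathbb{R}^d$ where $d$ is the number of internal configurations available on inputs of length $O(N)$, and since $s(n)=o(\log n)$ one has $d\le 2^{O(s(O(N)))}=N^{o(1)}$. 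For each $S\subseteq\{1,\dots,N\}$ (realizing $S=\emptyset$ by instead using the suffix whose prefix-sum set is $\{N+1\}$, which is disjoint from $\{1,\dots,N\}$) put $H_S=\{v\in\mathbb{R}^d : F_{w_S}(v)>\lambda\}$; then $\pi_x\in H_S \iff f_{\mathcal{M}}(a^x b\,w_S)>\lambda \iff a^x b\,w_S\in L_{NH} \iff x\in S$, so $H_S\cap\{\pi_1,\dots,\pi_N\}=\{\pi_x:x\in S\}$. Hence $\pi_1,\dots,\pi_N$ are shattered by affine halfspaces of $\mathbb{R}^d$; but Radon's theorem forbids any $d+2$ points of $\mathbb{R}^d$ from being shattered in this way, so $N\le d+1 = N^{o(1)}+1$, which is false for $N$ large. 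This contradiction finishes the proof.

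The step I expect to cost the most care is making the affine-functional claim rigorous for the one-way PTM model: precisely handling a possibly stalling input head, computations that never halt, and the exact definition of the acceptance value in the unbounded-error setting, so that $F_v$ really is affine and the configuration count on inputs of length $O(N)$ really is $N^{o(1)}$. Once those model details are nailed down, the remainder is the short geometric punchline (Radon's theorem, i.e.\ the finite VC dimension of halfspaces). I would expect Fact~1 to follow a related scheme, but the two-way head makes the configuration bookkeeping (crossing-sequence style) considerably more delicate, which is what lowers the threshold to $o(\log\log n)$.
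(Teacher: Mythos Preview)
The paper does not supply its own proof of this statement; it is stated as a \emph{Fact} attributed to Freivalds and Karpinski \cite{FK94} and used only as a black box to contrast with the quantum result of Theorem~\ref{theorem:1KWQFA}. So there is no in-paper argument to compare your proposal against.

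That said, your argument is correct in outline and is the standard one: it is the natural extension of Rabin's isolated-cutpoint argument for PFAs (and its Turakainen-style generalisations) to sublogarithmic-space one-way machines, packaged via a shattering/Radon-theorem contradiction. Two small points are worth tightening. First, your definition of $w_S$ carries a stray leading $b$: as written, $a^x b\,w_S$ begins $a^x b b\ldots$, which is never in $L_{NH}$ since all $y_i$ must be positive; either drop that leading $b$ from $w_S$ or write $a^x w_S$ instead. Second---and you already flag this---the ``crossing distribution'' $\pi_x$ needs a little care because a 1PTM may halt while the head is still on the prefix, or may loop forever there without ever moving past the $b$; the clean fix is to augment $\pi_x$ with two absorbing coordinates recording the probabilities of having already accepted or rejected, which adds only $O(1)$ to the dimension $d$ and leaves the Radon bound $N\le d+O(1)=N^{o(1)}$ intact.
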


There exists a one-way \textit{deterministic} TM that recognizes $ L_{NH} $ within the optimal space bound 
$O(\log n)$ \cite{FK94}. 
No (two-way) PTM which recognizes $ L_{NH} $ using $ o(\log n) $ space is known as of the time of writing.

\begin{theorem}
       \label{theorem:1KWQFA}
       There exists a 1KWQFA that recognizes $ L_{NH} $ with unbounded error.
\end{theorem}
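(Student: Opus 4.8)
The plan is to build a one‑way (measure‑many) quantum finite automaton $M$ recognizing $L_{NH}$ with an appropriate cutpoint. First I would peel off the trivial part: a string not of the regular shape $a^{+}b(a^{+}b)^{+}$ is detected by a deterministic, permutation‑only component of $M$'s transition function that runs in parallel and sends such inputs to a rejecting state with certainty; being a permutation it is automatically well‑formed and interferes with nothing, so from now on assume $w=a^{x}ba^{y_{1}}b\cdots a^{y_{t}}b$ with all exponents positive, and write $s_{k}=y_{1}+\cdots+y_{k}$. Since every $y_{i}\ge1$ the partial sums are strictly increasing, so at most one index $k$ can satisfy $x=s_{k}$; detecting whether such a $k$ exists is the whole task.

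The heart of the construction exploits the one‑way head's freedom to stand still in order to run two "clocks'' $A$ and $B$ in superposition, starting from $\tfrac1{\sqrt2}(\ket{A}+\ket{B})$. Clock $A$ spends two steps on each $a$ of the first block and one step on every later $a$; clock $B$ does the opposite. A short count gives that a clock positioned on the $b$ closing the block $a^{y_{k}}$ is there at step $2x+s_{k}+k+1$ if it is clock $A$ and at step $x+2s_{k}+k+1$ if it is clock $B$; as the head is one‑way, its position alone fixes the number of symbols consumed, so the two clocks occupy \emph{the same configuration at the same step} exactly when $x=s_{k}$. At each internal $b$ the transition applies a fixed interference (Hadamard‑type) step on the two‑dimensional subspace spanned by the two clock states, followed by the step measurement, arranged so that a genuine meeting of the clocks dumps their amplitude into an accepting state while the end‑marker $\$$ sends whatever still survives to a rejecting state. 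Thus $w\in L_{NH}$ produces exactly one meeting and $w\notin L_{NH}$ produces none.

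The hard part will be the amplitude bookkeeping, and getting it to close is what dictates the precise design. Well‑formedness (orthonormality of the two output vectors of the $b$‑step) rules out a transition that would simply keep $A$ and $B$ distinct and also detect their meeting; the natural fix makes the non‑halting output a single shared state that must re‑branch into two fresh clocks for the next block, so an input outside $L_{NH}$ produces an exponentially branching — but absolutely summable — family of clock pairs, each passing $b$'s alone and each siphoning acceptance amplitude that, because $M$ is measured every step, is committed for good and cannot be cancelled later. The plan is to track this family, show that each solitary pass only removes a geometrically shrinking share of the amplitude still in play, and verify that when no meeting ever occurs the total accepted is a value strictly below $1$ (one aims for exactly $1-2^{-t}$), whereas the unique meeting drives the accepted probability of every $w\in L_{NH}$ all the way up to $1$ (or at least keeps a clean gap, which already suffices for unbounded error). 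Once the numbers work out, checking the local well‑formedness conditions for unidirectional $2$QFAs/KWQFAs is routine, and the computation can be traced by a single sub‑normalized state vector over configurations as in the excerpt's description of RT‑KWQFAs — there is no clash with QAL $=$ S, since here the configuration set grows with $|w|$ (the head position is part of it), which is exactly the extra power one‑wayness buys over a realtime machine.
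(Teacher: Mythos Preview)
Your two–clock idea (one path spending two steps per $a$ in the first block and one step thereafter, the other doing the opposite, so that they coincide on the $(k{+}1)$st $b$ iff $x=s_k$) is exactly the mechanism the paper uses. The gap is in how you propose to make the $b$–transition unitary.

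You assert that ``well-formedness rules out a transition that would simply keep $A$ and $B$ distinct and also detect their meeting.'' That assertion is false, and it is what sends you down the wrong road. Take
\[
U_b\ket{A}=\alpha\ket{\mathrm{acc}}+\alpha\ket{\mathrm{rej}}+\beta\ket{A'},\qquad
U_b\ket{B}=\alpha\ket{\mathrm{acc}}-\alpha\ket{\mathrm{rej}}+\beta\ket{B'},
\]
with $2\alpha^2+\beta^2=1$ and $\ket{A'}\perp\ket{B'}$. These images are orthonormal (the $\pm\alpha$ on $\ket{\mathrm{rej}}$ kills the cross term), the clocks survive as \emph{themselves} in $\ket{A'},\ket{B'}$, and a genuine meeting makes the $\ket{\mathrm{rej}}$ components interfere destructively while a solo pass contributes equally to acceptance and rejection. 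This is precisely the paper's construction: its $\mathsf{path}_1,\mathsf{path}_2$ each shed amplitude at every internal $b$ into a common pair $\mathsf{path}_{\mathrm{accept}},\mathsf{path}_{\mathrm{reject}}$ (with opposite signs on the reject target), while continuing unchanged. The paper also sends the surviving clocks to an equiprobable accept/reject split on $\dollar$, so nonmembers get acceptance exactly $\tfrac12$ and members strictly more---a one-sided cutpoint at $\tfrac12$, not the $1-2^{-t}$ versus $1$ you are aiming for.

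Your alternative, collapsing both clocks into a single shared nonhalting state $\ket{C}$ at each $b$ and then ``re-branching'' into fresh clocks, does not work: the re-branched pair starts at the same position and time, so from then on it carries no information about $x$ whatsoever and its future meetings (or non-meetings) are determined solely by the remaining block structure, not by membership in $L_{NH}$. The exponentially branching family you describe therefore does not separate members from nonmembers, and the probabilities you quote cannot come out of it. Replace the re-branching by the sign trick above and the rest of your outline goes through.
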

\begin{proof}
       Consider the 1KWQFA $ \mathcal{M}=(Q, \Sigma, \delta, q_{0}, Q_{a},Q_{r}) $, 
       where $ \Sigma=\{a,b\} $ and the state sets are as follows:
       \begin{equation}
               \begin{array}{lcl}
                       Q_{n} & = &
                               \{ \rightstate{q_{0}} \} \cup \{ \rightstate{q_{i}} \mid 1 \le i \le 6 \} \cup \{
\rightstate{p_{i}} \mid 1 \le i \le 6 \}
                               \cup \{ \rightstate{a_{i}} \mid 1 \le i \le 4 \} \\
                               & & \cup~\{ \rightstate{r_{i}} \mid 1 \le i \le 4 \}
                                       \cup \{ \stopstate{w_{i}} \mid 1 \le i \le 6 \}, \\
                       Q_{a} &  =  & \{ \stopstate{A_{i}} \mid 1 \le i \le 18  \},
                               ~Q_{r} = \{ \stopstate{R_{i}} \mid 1 \le i \le 18  \}. \\
               \end{array}
        \end{equation}
       Let each $ U_{\sigma} $ induced by $ \delta $ act as indicated in Figures
\ref{figure:1.5KWQFA-1} and \ref{figure:1.5KWQFA-2}, and extend each to be unitary.
\begin{figure}[!h]
	\begin{center}
\footnotesize
\begin{tabular}{|c|l|l|}
\hline
       Stages & \multicolumn{1}{c|}{$ U_{\cent}, U_{a} $} &
                \multicolumn{1}{c|}{$ U_{\dollar} $} \\
\hline
       & $ U_{\cent} \rightconf{q_{0}} = \frac{1}{\sqrt{2}}\rightconf{q_{1}} +
\frac{1}{\sqrt{2}}\rightconf{p_{1}} $  & \\
\hline
       $ \begin{array}{@{}c@{}} \mbox{I} \\ ( \mathsf{path_{1}} ) \end{array} $
       &
       $ \begin{array}{@{}l@{}}
               U_{a} \rightconf{q_{1}} = \frac{1}{\sqrt{2}}\rightconf{q_{2}} + \frac{1}{2}
\stopconf{A_{1}} + \frac{1}{2} \stopconf{R_{1}} \\
               U_{a} \rightconf{q_{2}} = \frac{1}{\sqrt{2}}\rightconf{q_{2}} - \frac{1}{2}
\stopconf{A_{1}} - \frac{1}{2} \stopconf{R_{1}}
       \end{array} $
       &
       $ \begin{array}{@{}l@{}}
               U_{\dollar} \rightconf{q_{1}}= \frac{1}{\sqrt{2}} \stopconf{A_{1}} + \frac{1}{\sqrt{2}}\stopconf{R_{1}} \\
               U_{\dollar} \rightconf{q_{2}}= \frac{1}{\sqrt{2}} \stopconf{A_{2}} + \frac{1}{\sqrt{2}}\stopconf{R_{2}} \\
               U_{\dollar} \rightconf{q_{3}}= \frac{1}{\sqrt{2}} \stopconf{A_{3}} + \frac{1}{\sqrt{2}}\stopconf{R_{3}}
       \end{array}$
       \\
\hline
       $ \begin{array}{@{}c@{}} \mbox{I} \\ ( \mathsf{path_{2}} ) \end{array} $
       &
       $ \begin{array}{@{}l@{}}
               U_{a} \rightconf{p_{1}} = \stopconf{w_{1}} \\
               U_{a} \stopconf{w_{1}} = \frac{1}{\sqrt{2}}\rightconf{p_{2}} + \frac{1}{2}
\stopconf{A_{2}} + \frac{1}{2} \stopconf{R_{2}} \\
               U_{a} \rightconf{p_{2}} = \stopconf{w_{2}} \\
               U_{a} \stopconf{w_{2}} = \frac{1}{\sqrt{2}}\rightconf{p_{2}} - \frac{1}{2}
\stopconf{A_{2}} - \frac{1}{2} \stopconf{R_{2}}
       \end{array} $
       &
       $ \begin{array}{@{}l@{}}
               U_{\dollar} \rightconf{p_{1}}= \frac{1}{\sqrt{2}} \stopconf{A_{4}} + \frac{1}{\sqrt{2}}\stopconf{R_{4}} \\
               U_{\dollar} \rightconf{p_{2}}= \frac{1}{\sqrt{2}} \stopconf{A_{5}} + \frac{1}{\sqrt{2}}\stopconf{R_{5}} \\
               U_{\dollar} \rightconf{p_{3}}= \frac{1}{\sqrt{2}} \stopconf{A_{6}} + \frac{1}{\sqrt{2}}\stopconf{R_{6}}
       \end{array}$
       \\
\hline
       $ \begin{array}{@{}c@{}} \mbox{II} \\ ( \mathsf{path_{1}} ) \end{array} $
       &
       $ \begin{array}{@{}l@{}}
               U_{a} \rightconf{q_{3}}=\stopconf{w_{3}} \\
               U_{a} \stopconf{w_{3}} = \frac{1}{\sqrt{2}}\rightconf{q_{4}} + \frac{1}{2}
\stopconf{A_{3}} + \frac{1}{2} \stopconf{R_{3}} \\
               U_{a} \rightconf{q_{4}}=\stopconf{w_{4}} \\
               U_{a} \stopconf{w_{4}} = \frac{1}{\sqrt{2}}\rightconf{q_{4}} - \frac{1}{2}
\stopconf{A_{3}} - \frac{1}{2} \stopconf{R_{3}}
       \end{array} $
       &
       $ \begin{array}{@{}l@{}}
               U_{\dollar} \rightconf{q_{4}} = \frac{1}{\sqrt{2}} \stopconf{A_{7}} + \frac{1}{\sqrt{2}}\stopconf{R_{7}} \\
               U_{\dollar} \rightconf{q_{5}}=  \frac{1}{\sqrt{2}} \stopconf{A_{8}} + \frac{1}{\sqrt{2}}\stopconf{R_{8}}
       \end{array} $
       \\
\hline
       $ \begin{array}{@{}c@{}} \mbox{II} \\ ( \mathsf{path_{2}} ) \end{array} $
       &
       $ \begin{array}{@{}l@{}}
               U_{a} \rightconf{p_{3}}=\frac{1}{\sqrt{2}} \rightconf{p_{4}} + \frac{1}{2}
\stopconf{A_{4}} + \frac{1}{2} \stopconf{R_{4}} \\
               U_{a} \rightconf{p_{4}}=\frac{1}{\sqrt{2}} \rightconf{p_{4}} - \frac{1}{2}
\stopconf{A_{4}} - \frac{1}{2} \stopconf{R_{4}}
       \end{array} $
       &
       $ \begin{array}{@{}l@{}}
               U_{\dollar} \rightconf{p_{4}} = \frac{1}{\sqrt{2}} \stopconf{A_{9}} + \frac{1}{\sqrt{2}}\stopconf{R_{9}} \\
               U_{\dollar} \rightconf{p_{5}}= \frac{1}{\sqrt{2}} \stopconf{A_{10}} + \frac{1}{\sqrt{2}}\stopconf{R_{10}}
       \end{array} $
       \\
\hline
       $ \begin{array}{@{}c@{}} \mbox{III} \\ ( \mathsf{path_{1}} ) \end{array} $
       &
       $ \begin{array}{@{}l@{}}
               U_{a} \rightconf{q_{5}}=\stopconf{w_{5}} \\
               U_{a} \stopconf{w_{5}} = \frac{1}{\sqrt{2}}\rightconf{q_{6}} + \frac{1}{2}
\stopconf{A_{5}} + \frac{1}{2} \stopconf{R_{5}} \\
               U_{a} \rightconf{q_{6}}=\stopconf{w_{6}} \\
               U_{a} \stopconf{w_{6}} = \frac{1}{\sqrt{2}}\rightconf{q_{6}} - \frac{1}{2}
\stopconf{A_{5}} - \frac{1}{2} \stopconf{R_{5}}
       \end{array} $
       &
       $ \begin{array}{@{}l@{}}
               U_{\dollar} \rightconf{q_{6}} = \frac{1}{\sqrt{2}} \stopconf{A_{11}} + \frac{1}{\sqrt{2}}\stopconf{R_{11}}
       \end{array} $
       \\
\hline
       $ \begin{array}{@{}c@{}} \mbox{III} \\ ( \mathsf{path_{2}} ) \end{array} $
       &
       $ \begin{array}{@{}l@{}}
               U_{a} \rightconf{p_{5}}=\frac{1}{\sqrt{2}} \rightconf{p_{6}} + \frac{1}{2}
\stopconf{A_{6}} + \frac{1}{2} \stopconf{R_{6}} \\
               U_{a} \rightconf{p_{6}}=\frac{1}{\sqrt{2}} \rightconf{p_{6}} - \frac{1}{2}
\stopconf{A_{6}} - \frac{1}{2} \stopconf{R_{6}}
       \end{array} $
       &
       $ \begin{array}{@{}l@{}}
               U_{\dollar} \rightconf{p_{6}} = \frac{1}{\sqrt{2}} \stopconf{A_{12}} + \frac{1}{\sqrt{2}}\stopconf{R_{12}}
       \end{array} $
       \\
\hline
       $ \begin{array}{@{}c@{}} \mbox{III} \\ ( \mathsf{path_{accept}} )
\end{array} $
       &
       $ \begin{array}{@{}l@{}}
               U_{a} \rightconf{a_{1}}=\frac{1}{\sqrt{2}} \rightconf{a_{2}} + \frac{1}{2}
\stopconf{A_{7}} + \frac{1}{2} \stopconf{R_{7}} \\
               U_{a} \rightconf{a_{2}}=\frac{1}{\sqrt{2}} \rightconf{a_{2}} - \frac{1}{2}
\stopconf{A_{7}} - \frac{1}{2} \stopconf{R_{7}} \\
               U_{a} \rightconf{a_{3}}=\frac{1}{\sqrt{2}} \rightconf{a_{4}} + \frac{1}{2}
\stopconf{A_{8}} + \frac{1}{2} \stopconf{R_{8}} \\
               U_{a} \rightconf{a_{4}}=\frac{1}{\sqrt{2}} \rightconf{a_{4}} - \frac{1}{2}
\stopconf{A_{8}} - \frac{1}{2} \stopconf{R_{8}}
       \end{array} $
       &
       $ \begin{array}{@{}l@{}}
               U_{\dollar} \rightconf{a_{1}} = \stopconf{A_{17}} \\
               U_{\dollar} \rightconf{a_{3}} = \stopconf{A_{18}} \\
               U_{\dollar} \rightconf{a_{2}} = \frac{1}{\sqrt{2}} \stopconf{A_{13}} + \frac{1}{\sqrt{2}}\stopconf{R_{13}} \\
               U_{\dollar} \rightconf{a_{4}} = \frac{1}{\sqrt{2}} \stopconf{A_{14}} + \frac{1}{\sqrt{2}}\stopconf{R_{14}}
       \end{array} $
       \\
\hline
       $ \begin{array}{@{}c@{}} \mbox{III} \\ ( \mathsf{path_{reject}} )
\end{array} $
       &
       $ \begin{array}{@{}l@{}}
               U_{a} \rightconf{r_{1}}=\frac{1}{\sqrt{2}} \rightconf{r_{2}} + \frac{1}{2}
\stopconf{A_{9}} + \frac{1}{2} \stopconf{R_{9}} \\
               U_{a} \rightconf{r_{2}}=\frac{1}{\sqrt{2}} \rightconf{r_{2}} - \frac{1}{2}
\stopconf{A_{9}} - \frac{1}{2} \stopconf{R_{9}} \\
               U_{a} \rightconf{r_{3}}=\frac{1}{\sqrt{2}} \rightconf{r_{4}} + \frac{1}{2}
\stopconf{A_{10}} + \frac{1}{2} \stopconf{R_{10}} \\
               U_{a} \rightconf{r_{4}}=\frac{1}{\sqrt{2}} \rightconf{r_{4}} - \frac{1}{2}
\stopconf{A_{10}} - \frac{1}{2} \stopconf{R_{10}}
       \end{array} $
       &
       $ \begin{array}{@{}l@{}}
               U_{\dollar} \rightconf{r_{1}} = \stopconf{R_{17}} \\
               U_{\dollar} \rightconf{r_{3}} = \stopconf{R_{18}} \\
               U_{\dollar} \rightconf{r_{2}} = \frac{1}{\sqrt{2}} \stopconf{A_{15}} + \frac{1}{\sqrt{2}}\stopconf{R_{15}} \\
               U_{\dollar} \rightconf{r_{4}} = \frac{1}{\sqrt{2}} \stopconf{A_{16}} + \frac{1}{\sqrt{2}}\stopconf{R_{16}}
       \end{array} $
       \\
       \hline
\end{tabular}
\end{center}	
	\caption{Specification of the transition function of the 1KWQFA presented in
		the proof of Theorem \ref{theorem:1KWQFA} (I)}
	\vskip\baselineskip
	\label{figure:1.5KWQFA-1}
\end{figure}
\begin{figure}[!h]
	\begin{center}
       \centering
\footnotesize
\begin{tabular}{|c|l|l|}
       \hline
       Stages & \multicolumn{2}{c|}{$ U_{b} $} \\
       \hline
       $ \begin{array}{@{}c@{}} \mbox{I} \\ ( \mathsf{path_{1}} ) \end{array} $
       &
       \multicolumn{2}{l|}{
               $ \begin{array}{@{}l@{}}
                       U_{b} \rightconf{q_{1}} = \frac{1}{\sqrt{2}} \stopconf{A_{1}} + \frac{1}{\sqrt{2}}\stopconf{R_{1}} \\
                       U_{b} \rightconf{q_{2}} = \rightconf{q_{3}} \\
                       U_{b} \rightconf{q_{3}} = \frac{1}{\sqrt{2}} \stopconf{A_{2}} + \frac{1}{\sqrt{2}}\stopconf{R_{2}}
               \end{array} $ }
       \\
       \hline
       $ \begin{array}{@{}c@{}} \mbox{I} \\ ( \mathsf{path_{2}} ) \end{array} $
       &
       \multicolumn{2}{l|}{
               $ \begin{array}{@{}l@{}}
                       U_{b} \rightconf{p_{1}} = \frac{1}{\sqrt{2}} \stopconf{A_{3}} + \frac{1}{\sqrt{2}}\stopconf{R_{3}} \\
                       U_{b} \rightconf{p_{2}} = \rightconf{p_{3}} \\
                       U_{b} \rightconf{p_{3}} = \frac{1}{\sqrt{2}} \stopconf{A_{4}} + \frac{1}{\sqrt{2}}\stopconf{R_{4}}
               \end{array} $
       }
       \\
       \hline
       $ \begin{array}{@{}c@{}} \mbox{II} \\ ( \mathsf{path_{1}} ) \end{array} $
       &
       \multicolumn{2}{l|}{
               $ \begin{array}{@{}l@{}}
                       U_{b} \rightconf{q_{4}} = \frac{1}{2}
\rightconf{q_{5}}+\frac{1}{2\sqrt{2}}\rightconf{a_{1}}+\frac{1}{2\sqrt{2}}\rightconf{r_{1}}
                               + \frac{1}{2} \stopconf{A_{11}} + \frac{1}{2} \stopconf{R_{11}}\\
                       U_{b} \rightconf{q_{5}} = \frac{1}{\sqrt{2}} \stopconf{A_{5}} + \frac{1}{\sqrt{2}}\stopconf{R_{5}}
               \end{array} $
       }
       \\
       \hline
       $ \begin{array}{@{}c@{}} \mbox{II} \\ ( \mathsf{path_{2}} ) \end{array} $
       &
       \multicolumn{2}{l|}{
               $ \begin{array}{@{}l@{}}
                       U_{b} \rightconf{p_{4}} = \frac{1}{2} \rightconf{p_{5}} + \frac{1}{2\sqrt{2}}\rightconf{a_{1}}
                       - \frac{1}{2\sqrt{2}}\rightconf{r_{1}} + \frac{1}{2} \stopconf{A_{12}} +
\frac{1}{2} \stopconf{R_{12}}\\
                       U_{b} \rightconf{p_{5}} = \frac{1}{\sqrt{2}} \stopconf{A_{6}} + \frac{1}{\sqrt{2}}\stopconf{R_{6}}
               \end{array} $
       }
       \\
       \hline
       $ \begin{array}{@{}c@{}} \mbox{III} \\ ( \mathsf{path_{1}} ) \end{array} $
       &
       \multicolumn{2}{l|}{
               $ \begin{array}{@{}l@{}}
                       U_{b} \rightconf{q_{6}} = \frac{1}{2}
\rightconf{q_{5}}+\frac{1}{2\sqrt{2}}\rightconf{a_{1}}+\frac{1}{2\sqrt{2}}\rightconf{r_{1}}
                               - \frac{1}{2} \stopconf{A_{11}} - \frac{1}{2} \stopconf{R_{11}}
               \end{array} $
       }
       \\
       \hline
       $ \begin{array}{@{}c@{}} \mbox{III} \\ ( \mathsf{path_{2}} ) \end{array} $
       &
       \multicolumn{2}{l|}{
               $ \begin{array}{@{}l@{}}
                       U_{b} \rightconf{p_{6}} = \frac{1}{2} \rightconf{p_{5}} + \frac{1}{2\sqrt{2}}\rightconf{a_{1}}
                               - \frac{1}{2\sqrt{2}}\rightconf{r_{1}} - \frac{1}{2} \stopconf{A_{12}} -
\frac{1}{2} \stopconf{R_{12}}
               \end{array} $
       }
       \\
       \hline
       $ \begin{array}{@{}c@{}} \mbox{III} \\ ( \mathsf{path_{accept}} )
\end{array} $
       &
       \multicolumn{2}{l|}{
               $ \begin{array}{@{}l@{}}
                       U_{b} \rightconf{a_{2}}=\frac{1}{\sqrt{2}} \rightconf{a_{3}} + \frac{1}{2}
\stopconf{A_{13}} + \frac{1}{2} \stopconf{R_{13}} \\
                       U_{b} \rightconf{a_{1}} = \frac{1}{\sqrt{2}} \stopconf{A_{7}} + \frac{1}{\sqrt{2}}\stopconf{R_{7}} \\
                       U_{b} \rightconf{a_{4}}=\frac{1}{\sqrt{2}} \rightconf{a_{3}} - \frac{1}{2}
\stopconf{A_{13}} - \frac{1}{2} \stopconf{R_{13}} \\
                       U_{b} \rightconf{a_{3}} = \frac{1}{\sqrt{2}} \stopconf{A_{8}} + \frac{1}{\sqrt{2}}\stopconf{R_{8}}
               \end{array} $
       }
       \\
       \hline
       $ \begin{array}{@{}c@{}} \mbox{III} \\ ( \mathsf{path_{reject}} )
\end{array} $
       &
       \multicolumn{2}{l|}{
               $ \begin{array}{@{}l@{}}
                       U_{b} \rightconf{r_{2}}=\frac{1}{\sqrt{2}} \rightconf{r_{3}} + \frac{1}{2}
\stopconf{A_{14}} + \frac{1}{2} \stopconf{R_{14}} \\
                       U_{b} \rightconf{r_{1}} = \frac{1}{\sqrt{2}} \stopconf{A_{9}} + \frac{1}{\sqrt{2}}\stopconf{R_{9}} \\
                       U_{b} \rightconf{r_{4}}=\frac{1}{\sqrt{2}} \rightconf{r_{3}} - \frac{1}{2}
\stopconf{A_{14}} - \frac{1}{2} \stopconf{R_{14}} \\
                       U_{b} \rightconf{r_{3}} = \frac{1}{\sqrt{2}} \stopconf{A_{10}} + \frac{1}{\sqrt{2}}\stopconf{R_{10}}
               \end{array} $
       }
       \\
       \hline
\end{tabular}
\end{center}
       \caption{Specification of the transition function of the 1KWQFA presented in
		the proof of Theorem \ref{theorem:1KWQFA} (II)}
		\vskip\baselineskip
	\label{figure:1.5KWQFA-2}
\end{figure}

       Machine $ \mathcal{M} $ starts computation on symbol $ \cent $ by
branching into two paths,
       $ \mathsf{path_{1}} $ and $ \mathsf{path_{2}} $, with equal amplitude.
       Each path and their subpaths, to be described later, check whether
the input is of the form
       $ (aa^{*}b)(aa^{*}b)(aa^{*}b)^{*} $.
       The different stages of the program indicated in Figures
\ref{figure:1.5KWQFA-1} and \ref{figure:1.5KWQFA-2}
       correspond to the
       subtasks of this regular expression check. Stage I ends successfully
if the input begins with $ (aa^{*}b) $.
       Stage II checks the second $ (aa^{*}b)$. Finally, Stage III controls
whether the input ends with $ (aa^{*}b)^{*} $.

       The reader note that many transitions in the machine are of the form
       \begin{equation} U_{\sigma} \ket{q_{i}} = \ket{\psi} + \alpha \ket{A_{k}} + \alpha
\ket{R_{k}}, \end{equation}
       where $ \ket{\psi} $ is a superposition of configurations such that $
\braket{\psi}{\psi}=1-2\alpha^{2} $,
       $ A_{k} \in Q_{a} $, $ R_{k} \in Q_{r} $.
       The equal-probability transitions to the ``twin halting states'' $
A_{k} $ and $ R_{k} $ are included to ensure that
       the matrices are unitary, without upsetting the ``accept/reject
balance'' until a final decision about
       the membership of the input in $ L_{NH} $ is reached. If the regular
expression check mentioned
       above fails, each path in question splits equiprobably to one
rejecting and one accepting configuration, and the overall probability
of
       acceptance of the machine
       turns out to be precisely $ \frac{1}{2} $.      If the input is indeed of
the form $ (aa^{*}b)(aa^{*}b)(aa^{*}b)^{*} $, whether
       the acceptance probability exceeds $ \frac{1}{2} $ or not
       depends on the following additional tasks performed by the
computation paths in order to test for the equality
       mentioned in the definition of $ L_{NH} $:
       \begin{enumerate}
               \item $ \mathsf{path_{1}} $ walks over the $ a $'s at the speed of
one tape square per step until
               reading the first $ b $. After that point, $ \mathsf{path_{1}} $
pauses for one step over each $ a $
               before moving on to the next symbol.
               \item $ \mathsf{path_{2}} $ pauses for one step over each $ a $
until reading the first $ b $.
               After that point, $ \mathsf{path_{2}} $ walks over each $ a $ at the
speed of one square per step.
               \item On each $ b $ except the first one, $ \mathsf{path_{1}} $ and
$ \mathsf{path_{2}} $ split to take the
               following two courses of action with equal probability:
               \begin{enumerate}
                       \item In the first alternative, $ \mathsf{path_{1}} $ and $
\mathsf{path_{2}} $
                       perform a 2-way quantum Fourier transform (QFT) \cite{KW97}:
                       \begin{enumerate}
                               \item The targets of the QFT are two new computational paths,
i.e., $ \mathsf{path_{accept}} $
                               and $ \mathsf{path_{reject}} $. Disregarding the equal-probability
transitions to
                               the twin halting states mentioned above, the QFT is realized as:
                               \begin{equation} \mathsf{path_{1}} \rightarrow \frac{1}{\sqrt{2}}
\mathsf{path_{accept}} + \frac{1}{\sqrt{2}}
                                       \mathsf{path_{reject}}
                                \end{equation}
                               \begin{equation} \mathsf{path_{2}} \rightarrow \frac{1}{\sqrt{2}}
\mathsf{path_{accept}} - \frac{1}{\sqrt{2}}
                                       \mathsf{path_{reject}}
                                \end{equation}
                               \item $ \mathsf{path_{accept}} $ and $ \mathsf{path_{reject}} $
continue computation at the
                               speed of  $ \mathsf{path_{2}} $, walking over the $b$'s without
performing the QFT any more.
                       \end{enumerate}
                       \item In the second alternative, $ \mathsf{path_{1}} $ and $
\mathsf{path_{2}} $
                       continue computation without performing the QFT.
               \end{enumerate}
               \item On symbol $ \dollar $,
                       $ \mathsf{path_{accept}} $ enters an accepting state,
                       $ \mathsf{path_{reject}} $ enters a rejecting state,
                       $ \mathsf{path_{1}} $ and $ \mathsf{path_{2}} $ enter accepting and
rejecting states with
                       equal probability.
       \end{enumerate}

       Suppose that the input is of the form
       \begin{equation} w=a^{x}ba^{y_{1}}ba^{y_{2}}b \cdots a^{y_{t}}b, \end{equation}
       where $ x,t,y_{1}, \cdots, y_{t}  \in \mathbb{Z}^{+} $.

       $ \mathsf{path_{1}} $ reaches the first $ b $ earlier than $
\mathsf{path_{2}} $.
       Once it has passed the first $ b $, $ \mathsf{path_{2}} $ becomes
faster, and may or may not catch up with
       $ \mathsf{path_{1}} $, depending on the number of $ a $'s in the
input after the first $ b $.
       The two paths can meet on the symbol following the $x$'th $a$ after
the first $ b $, since at that point
       $ \mathsf{path_{1}} $ has paused for the same number of steps
as $ \mathsf{path_{2}} $.
       Only if that symbol is a $ b $, the two paths perform a QFT in
the same place and at the same time.
       To paraphrase, if there exists a $ k $ $ (1 \le k \le t) $ such that
$ x=\sum_{i=1}^{k}y_{i}\ $,
       $ \mathsf{path_{1}} $ and $ \mathsf{path_{2}} $ meet over the $
(k+1)^{th} $ $ b $ and perform the QFT
       at the same step. If there is no such $ k $, the paths either never
meet, or meet over an $ a $ without a QFT.

       The $ \mathsf{path_{accept}} $ and $ \mathsf{path_{reject}} $s that
are offshoots of $ \mathsf{path_{1}} $
       continue their traversal of the string faster than $
\mathsf{path_{1}} $. On the other hand,
       the offshoots of $ \mathsf{path_{2}} $ continue their traversal at
the same speed as $ \mathsf{path_{2}} $.

       By definition, the twin halting states reached during the computation
contribute equal amounts to the
       acceptance and rejection probabilities. $ \mathsf{path_{1}} $ and $
\mathsf{path_{2}} $ accept and reject
       equiprobably when they reach the end of the string. If  $
\mathsf{path_{1}} $ and $ \mathsf{path_{2}} $ never
       perform the QFT at the same time and in the same position, every QFT
produces two equal-probability paths which
       perform identical tasks, except that one accepts and the other one
rejects at the end.

       The overall acceptance and rejection probabilities are equal, $
\frac{1}{2} $, unless a $ \mathsf{path_{reject}} $
       with positive amplitude and a $ \mathsf{path_{reject}} $ with
negative amplitude can meet and therefore cancel
       each other. In such a case, the surviving $ \mathsf{path_{accept}}
$'s contributes the additional acceptance
       probability that tips the balance. As described above, such a
cancellation is only possible when
       $ \mathsf{path_{1}} $ and $ \mathsf{path_{2}} $ perform the QFT together.

       Therefore, if $ w \in L_{NH} $,
       the overall acceptance probability is greater than $ \frac{1}{2} $.
If $ w \notin L_{NH} $,
       the overall acceptance probability equals $ \frac{1}{2} $.
\end{proof}

\begin{corollary}
	For any space bound $s$ satisfying $ s(n)=o(\log \log n) $, 
	\begin{equation}
		\mbox{PrSPACE}(s) \subsetneq \mbox{PrQSPACE}(s). 
	\end{equation}
\end{corollary}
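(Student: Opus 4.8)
The plan is to combine two facts already on the table: Theorem~\ref{theorem:1KWQFA}, which exhibits a 1KWQFA recognizing $L_{NH}$ with unbounded error, and the Fact of Freivalds and Karpinski that no PTM recognizes $L_{NH}$ with unbounded error in space $o(\log\log n)$. From these I want to conclude simultaneously that $\mbox{PrSPACE}(s)\subseteq\mbox{PrQSPACE}(s)$, that $L_{NH}\in\mbox{PrQSPACE}(s)$, and that $L_{NH}\notin\mbox{PrSPACE}(s)$; the strict inclusion is then immediate.

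First I would record the inclusion $\mbox{PrSPACE}(s)\subseteq\mbox{PrQSPACE}(s)$, valid for every $s$. By Lemma~\ref{qtm:lem:pm-simulated-by-qm}, every PTM is simulated exactly by a CQTM, which is a specialization of our QTM model; the simulating machine uses the square roots of the probabilistic transition values and is otherwise identical in tape alphabet and head behaviour, so on every input it visits exactly the same work-tape squares as the PTM it simulates. Hence any language recognized by a PTM with unbounded error in space $s$ is recognized by a QTM with unbounded error in the same space, i.e.\ it belongs to $\mbox{PrQSPACE}(s)$.

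Next I would place $L_{NH}$ inside $\mbox{PrQSPACE}(s)$ for \emph{every} $s$. A 1KWQFA is a specialization of the QTM model that has no work tape, so it uses zero work-tape squares on any input and is trivially space-$s$ bounded for any $s$. The machine $\mathcal{M}$ of Theorem~\ref{theorem:1KWQFA} satisfies $f_{\mathcal{M}}(w)>\frac{1}{2}$ exactly when $w\in L_{NH}$ and $f_{\mathcal{M}}(w)=\frac{1}{2}$ otherwise, so $L_{NH}=\{w\mid f_{\mathcal{M}}(w)>\frac{1}{2}\}$ is recognized by $\mathcal{M}$ with the strict cutpoint $\frac{1}{2}$, hence with unbounded error; therefore $L_{NH}\in\mbox{PrQSPACE}(s)$. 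On the other hand, since $s(n)=o(\log\log n)$, the quoted Fact gives $L_{NH}\notin\mbox{PrSPACE}(s)$, because otherwise there would be a PTM recognizing $L_{NH}$ with unbounded error in space $o(\log\log n)$. Combining the three observations yields $\mbox{PrSPACE}(s)\subsetneq\mbox{PrQSPACE}(s)$.

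There is essentially no obstacle here beyond bookkeeping, since all the substance is carried by Theorem~\ref{theorem:1KWQFA} and the cited Fact. The only points that merit a sentence of care are that a 1KWQFA genuinely counts as a space-$s$ QTM (it uses no work tape, so any space bound applies vacuously) and that the simulation of Lemma~\ref{qtm:lem:pm-simulated-by-qm} is exactly, not merely asymptotically, space-preserving — both immediate from the definitions.
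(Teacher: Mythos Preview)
Your argument is correct and is exactly the reasoning the paper has in mind: the corollary is stated without proof precisely because it follows immediately from Theorem~\ref{theorem:1KWQFA} (giving $L_{NH}\in\mbox{PrQSPACE}(s)$ via a workspace-free machine), the Freivalds--Karpinski fact (giving $L_{NH}\notin\mbox{PrSPACE}(s)$), and the containment $\mbox{PrSPACE}(s)\subseteq\mbox{PrQSPACE}(s)$ noted earlier via Lemma~\ref{qtm:lem:pm-simulated-by-qm}. Your write-up makes these three ingredients explicit and handles the bookkeeping correctly.
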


\begin{openproblem}
	Is PrSPACE($ s $) $ \subsetneq $ PrQSPACE($ s $), for $ s \in \Omega(\log \log n) \cap o(\log n) $?
\end{openproblem}

\begin{corollary}
	For any space bound $s$ satisfying $ s(n)=o(\log n) $,
	\begin{equation}
		\mbox{one-way-PrSPACE}(s) \subsetneq \mbox{one-way-PrQSPACE}(s).
	\end{equation}
\end{corollary}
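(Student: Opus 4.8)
The plan is to derive this corollary directly from Theorem~\ref{theorem:1KWQFA}, the exact simulation of probabilistic machines by quantum ones, and the second Fact stated above (no $1$PTM using space $o(\log n)$ recognizes $L_{NH}$ with unbounded error). There is no new machinery to build; all of the substance already lives in Theorem~\ref{theorem:1KWQFA}, and what remains is to package it correctly.

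First I would settle the inclusion $\mbox{one-way-PrSPACE}(s)\subseteq\mbox{one-way-PrQSPACE}(s)$ for every space bound $s$. By Lemma~\ref{qtm:lem:pm-simulated-by-qm} together with the remark immediately following it (extending the construction to the FA and realtime/one-way variants), any $1$PTM is simulated exactly by a $1$QTM — in fact a $1$CQTM — reproducing its acceptance value $f_{\mathcal{M}}(w)$ on every input while using the same number of work-tape squares. Since the two machines agree on $f_{\mathcal{M}}(w)$ for all $w$, they recognize the same language with any fixed (strict or nonstrict) cutpoint, so every language recognized by a $1$PTM in space $s$ with unbounded error is recognized by a $1$QTM in space $s$ with unbounded error.

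Next I would exhibit the separating language $L_{NH}$. By Theorem~\ref{theorem:1KWQFA} there is a $1$KWQFA recognizing $L_{NH}$ with strict cutpoint $\frac{1}{2}$, hence with unbounded error. A $1$KWQFA is a specialization of the one-way QTM model of Chapter~\ref{qtm} (a unidirectional QTM with restricted measurements and no work tape, unitary-then-measure being a legitimate superoperator), and it uses only constant space; since a constant function is $o(\log n)$, this places $L_{NH}$ in $\mbox{one-way-PrQSPACE}(s)$ for every $s$ with $s(n)=o(\log n)$ — in fact already in $\mbox{one-way-PrQSPACE}_{\mathbb{A}}(s)$, as the amplitudes appearing in the proof of Theorem~\ref{theorem:1KWQFA} are algebraic. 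On the other hand, the second Fact above asserts precisely that no $1$PTM using space $o(\log n)$ recognizes $L_{NH}$ with unbounded error, so $L_{NH}\notin\mbox{one-way-PrSPACE}(s)$ for any such $s$. Combining the inclusion with this witness yields the proper containment. (The preceding corollary, for the $o(\log\log n)$ two-way case, is obtained in exactly the same way from the first Fact.)

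The only point that deserves care — and the closest thing to an obstacle here — is the bookkeeping that links the restricted Kondacs--Watrous model back to the general complexity classes: one must check that the new QTM model genuinely subsumes RT-KWQFAs and $1$KWQFAs as a special case, and that counting an automaton with no work tape as using $O(1)=o(\log n)$ space is consistent with the space conventions of Section~\ref{sbc:space-classes}. Once that is granted, the corollary is immediate.
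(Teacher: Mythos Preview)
Your proposal is correct and matches the paper's intended argument: the corollary is stated there without proof precisely because it follows immediately from Theorem~\ref{theorem:1KWQFA} (placing $L_{NH}$ in one-way-PrQSPACE$(1)$), the second Fact of Freivalds and Karpinski (excluding $L_{NH}$ from one-way-PrSPACE$(s)$ for $s=o(\log n)$), and the standard inclusion via Lemma~\ref{qtm:lem:pm-simulated-by-qm}. Your additional care about the 1KWQFA being a legitimate instance of the 1QTM model and about the constant-space bookkeeping is appropriate and does not deviate from the paper's implicit reasoning.
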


\begin{corollary}
	\label{corollary:coCSPACE-coCQSPACE}
	$ \mbox{coC}_{=}\mbox{SPACE}(1) \subsetneq \mbox{coC}_{=}\mbox{QSPACE}(1). $
\end{corollary}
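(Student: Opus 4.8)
The plan is to establish the two inclusions separately. The containment $\mathrm{coC}_{=}\mathrm{SPACE}(1)\subseteq \mathrm{coC}_{=}\mathrm{QSPACE}(1)$ is immediate from exact simulation: if $L\in\mathrm{coC}_{=}\mathrm{SPACE}(1)$, then $\overline{L}$ is recognized by some halting $O(1)$-space PTM $\mathcal{P}$ with inclusive cutpoint $\tfrac12$, i.e. $\overline{L}=\{w\mid f_{\mathcal{P}}(w)=\tfrac12\}$. By Lemma~\ref{qtm:lem:pm-simulated-by-qm}, a CQTM $\mathcal{M}$ simulates $\mathcal{P}$ exactly; since that simulation is step-faithful and adds only $O(1)$ workspace, $\mathcal{M}$ is again a halting $O(1)$-space machine with $f_{\mathcal{M}}(w)=f_{\mathcal{P}}(w)$ for every $w$, so $\overline{L}\in\mathrm{C}_{=}\mathrm{QSPACE}(1)$ and hence $L\in\mathrm{coC}_{=}\mathrm{QSPACE}(1)$.

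For properness I would use $L_{NH}$ as the separating language. First, $L_{NH}\in\mathrm{coC}_{=}\mathrm{QSPACE}(1)$: the $1$KWQFA $\mathcal{M}$ of Theorem~\ref{theorem:1KWQFA} is a constant-space quantum machine all of whose computations halt after $O(|w|)$ steps, and it satisfies $f_{\mathcal{M}}(w)>\tfrac12$ exactly when $w\in L_{NH}$ and $f_{\mathcal{M}}(w)=\tfrac12$ exactly when $w\notin L_{NH}$. Hence $\overline{L_{NH}}=\{w\mid f_{\mathcal{M}}(w)=\tfrac12\}$, which places $\overline{L_{NH}}$ in $\mathrm{C}_{=}\mathrm{QSPACE}(1)$ and therefore $L_{NH}$ in $\mathrm{coC}_{=}\mathrm{QSPACE}(1)$.

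Next, $L_{NH}\notin\mathrm{coC}_{=}\mathrm{SPACE}(1)$. Suppose otherwise; then there is a halting $O(1)$-space PTM $\mathcal{P}$ with $L_{NH}=\{w\mid f_{\mathcal{P}}(w)\neq\tfrac12\}$. I would amplify it to a PTM $\mathcal{P}'$ that, on input $w$, runs $\mathcal{P}$ on $w$, stores its verdict in the finite control, rewinds the two-way input head to the left end-marker, runs $\mathcal{P}$ a second (independent) time, and accepts iff the two verdicts agree. Then $\mathcal{P}'$ still uses $O(1)$ space and halts on every input, and if $f_{\mathcal{P}}(w)=p$ then $f_{\mathcal{P}'}(w)=p^{2}+(1-p)^{2}=2(p-\tfrac12)^{2}+\tfrac12\ge\tfrac12$, with equality iff $p=\tfrac12$. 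Thus $L_{NH}=\{w\mid f_{\mathcal{P}'}(w)>\tfrac12\}$, so $\mathcal{P}'$ recognizes $L_{NH}$ with unbounded error in $O(1)=o(\log\log n)$ space, contradicting the fact (stated just before Theorem~\ref{theorem:1KWQFA}) that no PTM using space $o(\log\log n)$ recognizes $L_{NH}$ with unbounded error. Combining this with the first two paragraphs yields $\mathrm{coC}_{=}\mathrm{SPACE}(1)\subsetneq\mathrm{coC}_{=}\mathrm{QSPACE}(1)$.

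The step that needs the most care is the last amplification argument: converting exclusive-cutpoint recognition into strict-cutpoint (unbounded-error) recognition must not raise the space bound above $o(\log\log n)$ and must preserve the halting guarantee. This is exactly why the construction should be a \emph{sequential} double run of $\mathcal{P}$ (feasible because $\mathcal{P}$ halts and its input head is two-way, so it can be rewound), rather than any parallel product construction that might blow up the state space only by a constant but could endanger the halting property or the space accounting. Everything else is bookkeeping: checking that Lemma~\ref{qtm:lem:pm-simulated-by-qm}'s simulation keeps the machine halting and constant-space, and reading off the cutpoint conventions.
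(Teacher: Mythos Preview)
Your proof is correct and shares the paper's overall structure: both use $L_{NH}$ as the separating language and invoke Theorem~\ref{theorem:1KWQFA} for membership in $\mbox{coC}_{=}\mbox{QSPACE}(1)$. The difference lies in how $L_{NH}\notin\mbox{coC}_{=}\mbox{SPACE}(1)$ is argued. The paper simply cites the known inclusion $\mbox{coC}_{=}\mbox{SPACE}(1)\subsetneq\mbox{S}$ from \cite{Pa71} and combines it with the nonstochasticity of $L_{NH}$ stated at the start of the subsection; it also leaves the containment direction (via Lemma~\ref{qtm:lem:pm-simulated-by-qm}) implicit. Your sequential double-run amplification is effectively an in-line proof that $\mbox{coC}_{=}\mbox{SPACE}(1)\subseteq\mbox{PrSPACE}(1)$, after which you invoke the Freivalds--Karpinski $o(\log\log n)$ lower bound instead of nonstochasticity. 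Your route is more self-contained---it needs neither Paz's structural result nor the Ka\c{n}eps identification $\mbox{S}^{\neq}=\mbox{coC}_{=}\mbox{SPACE}(1)$---while the paper's is shorter by delegating that step to the literature. The amplification step you flagged as delicate is indeed sound: sequential reruns with an intermediate rewind of the two-way head and a constant-size reset of the work tape preserve both the $O(1)$ space bound and the halting guarantee.
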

\begin{proof}
	Since $ \mbox{coC}_{=}\mbox{SPACE}(1) $ is a proper subset of $ S $ \cite{Pa71}, 
	$ L_{NH} $ is not a member of $ \mbox{coC}_{=}\mbox{SPACE}(1) $.
	On the other hand, as shown in Theorem \ref{theorem:1KWQFA}, $ L_{NH} $
	is also a member of $ \mbox{one-way-coC}_{=}\mbox{QSPACE}(1) $.
\end{proof}

In the next section, we prove a fact which allows us to state
a similar inclusion relationship between the classes of languages
recognized by QTMs with restricted measurements and PTMs using constant space.

As noted before, Watrous proved the equality
PrQSPACE($s$)=PrSPACE($s$) ($ s \in \Omega(\log n)$) for the cases where
PrQSPACE is defined in terms of Wa98-QTMs \cite{Wa98,Wa99}, and Wa03-QTMs \cite{Wa03}. 
However, we do not know how to prove these results for our more general QTMs.

\begin{openproblem}
	Is PrQSPACE($ s $) $ \subseteq $ PrSPACE($ s $), for $ s \in \Omega(\log n) $?
\end{openproblem}

We continue with presenting some other nonstochastic languages\footnote{Note that, 
their complementary languages are also nonstochastic.} in one-way-PrQSPACE(1)
by extending the 1KWQFA algorithm for $ L_{NH} $ as follows:
\begin{itemize}
	\item On symbol $ \cent $, the computation splits into two paths, 
		$ \mathsf{path}_{1} $ and $ \mathsf{path}_{2} $,
		with equal amplitude. $ \mathsf{path}_{1} $ (resp., or $ \mathsf{path}_{2} $) immediately 
		moves to the right 
		and $ \mathsf{path}_{2} $ (resp., or $ \mathsf{path}_{1} $) 
		stays $ c $ steps on $ \cent $ before moving to the right.
	\item While scanning the input, $ \mathsf{path}_{1} $ (resp., $ \mathsf{path}_{2} $) stays 
		on each symbol $ c_{1} $ (resp., $ c_{2} $) step(s); and,	
		just before moving to the right, it produces a subpath, 
		say $ \mathsf{subpath}_{1} $ (resp., $ \mathsf{subpath}_{2} $),
		with some amplitude when reading $ \sigma_{1} \in \Sigma $ (resp.,  $ \sigma_{2} \in \Sigma $).
	\item $ \mathsf{subpath}_{1} $ (resp., $ \mathsf{subpath}_{2} $) travels the remaining part of the input 
		by staying $ c_{1}' $ (resp., $ c_{2}' $) step(s) on each symbol.
	\item When $ \mathsf{path}_{1} $ and $ \mathsf{path}_{2} $ read $ \dollar $,
		the input is equiprobably accepted and rejected by each of them.
	\item The subpaths, on the other hand, make the following 2-QFT
		\begin{equation}
			\begin{array}{lcl}
				\mathsf{subpath}_{1}: \ket{S_{1}} & \rightarrow & 
					\frac{1}{\sqrt{2}} \ket{A} + \frac{1}{\sqrt{2}} \ket{R} \\
				\mathsf{subpath}_{2}: \ket{S_{2}} & \rightarrow &
					\frac{1}{\sqrt{2}} \ket{A} - \frac{1}{\sqrt{2}} \ket{R}			
			\end{array},
		\end{equation}
	where $ \ket{S_{1}} $ (resp., $ \ket{S_{2}} $) is the configuration that 
	$ \mathsf{subpath}_{1} $ (resp., $ \mathsf{subpath}_{2} $) is in before reading $ \dollar $;
	$ \ket{A} $ (resp., $ \ket{R} $) is a specified accepting (resp., rejecting) configuration.
\end{itemize}

After reading $ \dollar $, the overall accepting probability exceeds $ \frac{1}{2} $ only if 
any pair of $ \mathsf{subpath}_{1} $ and $ \mathsf{subpath}_{2} $
reach to $ \dollar $ at the same time.
Otherwise, the overall accepting and rejecting probabilities become equal.

Let $ L \subseteq \Sigma $ be the language recognized by the above 
1KWQFA algorithm with one-sided cutpoint $ \frac{1}{2} $.
One can easily verifies that $ w \in \Sigma^{*} $ is a member of $ L $
if and only if 
there are two indexes, say $ i $ and $ j $ ($ 1 \le i,j \le |w| $), satisfying
$ w_{i} = \sigma_{1} $ and $ w_{j}=\sigma_{2} $  and
\begin{equation}
	c_{1}i+c_{1}'(|w|-i) = c_{2}j+c_{2}'(|w|-j)+c',
\end{equation}
where $ c' = - c $ if $ \mathsf{path}_{1} $ reads $ \cent $ at least twice and $ c' = c $ otherwise.
By simplifying the equation, we obtain
\begin{equation}
	(c_{1}-c_{1}')i = (c_{2}-c_{2}')j + (c_{2}'-c_{1}')|w| + c',
\end{equation}
or 
\begin{equation}
	d_{1}i = d_{2}j + d_{3}|w| + d,
\end{equation}
where $ d_{1} = c_{1} - c_{1}' $, $ d_{2}=c_{2} - c_{2}' $, $ d_{3} = c_{2}'-c_{1}' $, and $ d=c' $.
Note that, $ c_{1} $, $ c_{1}' $, $ c_{2} $, and $ c_{2}' $ are nonnegative integers,
but $ d_{1} $, $ d_{2} $, and $ d_{3} $ can take negative values.
By setting $ d $, $ d_{1}' $, $ d_{2}' $, $ d_{2} $, $ d_{1} $, 
$ \sigma_{1} $, and $ \sigma_{2} $ appropriately,
many different languages in one-way-PrQSPACE(1) are obtained.
The languages given in Figure \ref{uerr:nonstochastic-languages} are examples of such languages.
(Except the special ones, they are nonstochatic).
Another (nonstochastic) example is 
$ L_{center} = \{ w \in \{a,b\}^{*} \mid |w| = 2k-1, w_{k}=b, k > 0 \} $
due to the fact that it can be characterized 
by the equation $ 2i=|w|+1 $ and $ \sigma_{1} = \sigma_{2} = b $.

\begin{figure}[h!]
	\begin{center}
	\fbox{
	\small
	\begin{minipage}{0.9\textwidth}
		In \cite{FYS10A}, a joint work with R. Freivalds, 
		we presented a new family of languages:
		\begin{equation}
			\footnotesize
			L_{d_{1},\sigma_{1},d_{2},\sigma_{2},d,\Sigma} =
			\left\{w \in \Sigma^{*} \mid \exists i,j \left(w_{i}=\sigma_{1},w_{j}=\sigma_{2}, 
			d_{1}i=d_{2}(|w|+1-j)+d\right) \right\},
		\end{equation}
		where $ d,d_{1},d_{2} \in \mathbb{Z} $, $ \sigma_{1},\sigma_{2} \in \Sigma $.
		We can classify those languages as follows:
		\begin{itemize}
			\item They are in REG, if
			\begin{enumerate}
				\item $ \Sigma $ is unary,
				\item $ d_{1} $ and $ d_{2} $ have different signs, or
				\item $ \gcd(d_{1},d_{2}) $ does not divide $ d $.
			\end{enumerate}
			\item They are in S, if they are members of $ \{ L_{1,a,1,b,d,\{a,b\}} \mid d \in \mathbb{Z} \} $.
			\item They are not in uS, otherwise.
		\end{itemize}
		One of the simplest languages that are not in uS is $ L_{1,b,1,b,0,\{a,b\}} $, that is,
		\begin{equation}
			L_{say} = \{ w  \mid \exists u_{1},u_{2},v_{1},v_{2} \in \{a,b\}^{*},
			 w = u_{1}bu_{2} = v_{1}bv_{2}, |u_{1}| = |v_{2}| \},
		\end{equation}
		where the name comes from the surname of A. C. Cem Say, who considered this language
		for the first time.
	\end{minipage}
	}
	\end{center}
	\caption{A new family of nonstochastic languages}
	\vskip\baselineskip
	\label{uerr:nonstochastic-languages}
\end{figure}

\begin{theorem}
 The nonstochastic language\footnote{See page 88 on \cite{SS78}.}
	\begin{equation} L_{div}=\{ a^{n}ba^{kn} \mid k,n \in \mathbb{Z}^{+} \} 
	\end{equation} can be recognized by a 2QFA with unbounded error.
\end{theorem}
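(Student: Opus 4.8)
The plan is to build the 2QFA by the same interference template used for $L_{NH}$ in Theorem~\ref{theorem:1KWQFA}, but upgrading it from a one-way to a genuinely two-way head so that a \emph{multiplicative} condition ($m$ a positive multiple of $n$) can be verified rather than a merely linear one. First I would have the machine deterministically and reversibly verify that the input has the form $a^{+}ba^{+}$; if it does not, the computation splits equiprobably into one accepting and one rejecting configuration, so that $f_{\mathcal{M}}(w)=\frac{1}{2}$ on all such inputs. On an input of the form $\cent a^{n}ba^{m}\dollar$ (so $n\ge 1$, $m\ge 1$ are guaranteed), the remaining task is to decide whether $n\mid m$, since $m=kn$ with $k\in\mathbb{Z}^{+}$ is equivalent to $n\mid m$ together with $m\ge 1$.

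For the arithmetic part I would use the first block $a^{n}$ as a length-$n$ ``ruler'' and let the two-way head zig-zag between the two blocks, re-scanning the ruler once for each successive length-$n$ chunk of the second block: the head repeatedly sweeps over $a^{n}$ and then extends its ``frontier'' in $a^{m}$ by $n$ cells, so that after $j$ rounds the frontier sits $jn$ cells into the second block. Exactly at each chunk boundary the computation performs the two-way quantum Fourier transform as in the $L_{NH}$ proof, spawning a $\mathsf{path_{accept}}$ and a $\mathsf{path_{reject}}$, with every transition padded by ``twin'' accepting/rejecting states so the induced matrices stay unitary without disturbing the accept/reject balance. The design goal is that the QFT fired at the \emph{last} boundary --- the one at which the frontier meets $\dollar$ --- produces a $\mathsf{path_{reject}}$ whose amplitude is the negative of one produced earlier, so the two cancel and a surviving $\mathsf{path_{accept}}$ contributes a surplus acceptance probability; this cancellation is arranged to be possible \emph{only} when some chunk boundary falls exactly on $\dollar$, i.e.\ only when $m=kn$. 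When $n\nmid m$ (including $1\le m<n$), the frontier overshoots $\dollar$ strictly inside the last chunk, no opposite-sign reject amplitudes ever meet, and $f_{\mathcal{M}}(w)=\frac{1}{2}$. Hence $L_{div}$ is recognized with strict cutpoint $\frac{1}{2}$, that is, with unbounded error.

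The main obstacle is the two-way bookkeeping: one must choose the head trajectory --- the precise sweep/rewind pattern over $a^{n}$, the rule for extending the frontier into $a^{m}$, and the timing of each QFT --- so that ``the last chunk ends precisely at $\dollar$'' is exactly the event enabling the cancellation, and so that when it fails the positive- and negative-amplitude reject branches stay permanently out of sync. In the one-way $L_{NH}$ argument the analogous synchronization was read off two head speeds; here it must be read off a looping, back-and-forth trajectory, which is more delicate to specify and to verify. The second, routine-but-necessary obstacle is well-formedness: each $U_{\sigma}$ must be completed to a unitary, which I would handle exactly as in the $L_{NH}$ construction by attaching equal-amplitude transitions into the twin halting states wherever needed, and then check the local well-formedness conditions for 2QFAs listed in Section~\ref{qtm:2QFAs}.
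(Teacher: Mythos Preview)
Your proposal has a genuine gap in the central mechanism. The zig-zag/frontier scheme you sketch requires the head, after re-scanning the ruler $a^{n}$, to resume at the \emph{current} frontier position inside $a^{m}$ --- but a finite automaton (quantum or classical) has no way to store that moving position while the head is away in the left block. You flag this yourself as ``the main obstacle'' without resolving it, and there is in fact no resolution along these lines: without a pebble, counter, or writable tape, the frontier is irrecoverable once the head leaves it, and putting the head in superposition does not help, since each branch of the superposition faces the same memoryless-return problem.

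The paper's construction sidesteps this entirely with a much simpler trick. After the format check, the machine splits into two paths \emph{at the symbol $b$}, which then serves as the unique fixed meeting point for the rest of the computation. $\mathsf{path}_{1}$ bounces indefinitely between $b$ and the left end-marker, returning to $b$ every $2n$ steps; at each return it either (with probability $\frac{1}{2}$) performs $\frac{1}{\sqrt{2}}\ket{\mathrm{Accept}}+\frac{1}{\sqrt{2}}\ket{\mathrm{Reject}}$, or loops again. $\mathsf{path}_{2}$ makes a single round trip to the right end-marker and back, reaching $b$ after $2m$ steps, and then performs $\frac{1}{\sqrt{2}}\ket{\mathrm{Accept}}-\frac{1}{\sqrt{2}}\ket{\mathrm{Reject}}$. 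The two transitions interfere --- cancelling the reject amplitudes and pushing acceptance above $\frac{1}{2}$ --- precisely when both paths are at $b$ at the same step, i.e.\ when $2m$ is a multiple of $2n$, which is exactly $n\mid m$. No frontier is ever tracked: the periods of the two bounces encode $n$ and $m$, and divisibility becomes synchronization at a single tape cell.
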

\begin{proof}
	We give a sketch of the algorithm:
	\begin{enumerate}
		\item If the input string is not of the form $ a^{+} b a^{+} $, 
			it is accepted with probability $ \frac{1}{2} $.
		\item Otherwise, the computation splits into two paths, 
			say $ \mathsf{path}_{1} $ and $ \mathsf{path}_{2} $, on the symbol $ b $.
		\item In an infinite loop, $ \mathsf{path}_{1} $ 
			travels to the left end-marker and comes back to the $b$ with the speed of one symbol per step.
			On the $ b $, it
			\begin{enumerate}
							\item performs the following transition with probability $ \frac{1}{2} $ 
				\begin{equation}
					\mathsf{path}_{1} \rightarrow \frac{1}{\sqrt{2}} (Accept) + \frac{1}{\sqrt{2}} (Reject),
				\end{equation}
				\item and goes on with probability $ \frac{1}{2} $.
			\end{enumerate}
		\item $ \mathsf{path}_{2} $ travels to the right end-marker and comes back to the $b$
			with the speed of one symbol per step, 
			on which it performs the following transition exactly:
			\begin{equation}
				\mathsf{path}_{2} \rightarrow \frac{1}{\sqrt{2}} (Accept) - \frac{1}{\sqrt{2}} (Reject).
			\end{equation}
	\end{enumerate}
	The two paths meet only if the input string is a member of $L_{div}$, in which case the acceptance 		
	probability would exceed $ \frac{1}{2} $ due to interference. 
	Otherwise, the acceptance probability becomes equal to $ \frac{1}{2} $.
\end{proof}

\begin{openproblem}
	Is $ L_{div} $ in one-way-PrQSPACE(1)?
\end{openproblem}

\begin{openproblem}
	Is one-way-PrQSPACE(1) $ \subsetneq $ PrQSPACE(1)?  
\end{openproblem}

\subsection{Languages Recognized by Kondacs-Watrous Quantum Finite Automata} \label{uerr:KWQFA-languages}

In this section, we settle an open problem of Brodsky and Pippenger
\cite{BP02}, giving a complete characterization of the class of
languages recognized with unbounded error by RT-KWQFAs. It turns out
that these restricted RT-QFAs, which are known to be inferior to RT-PFAs
in the bounded error case, are equivalent to them in the unbounded
error setting.

\begin{figure}[h!]
	\begin{center}
	\fbox{
	\begin{minipage}{0.9\textwidth}
		\footnotesize
		Let $ S $ be a finite set and $ \{ A_{s} \mid s \in S \} $ 
		be a set of $ m \times m $-dimensional matrices such that
		the norm of each column belonging to $ A_{s \in S} $ does not exceed 1.
		We present a method in order to find a set of $ m \times m $-dimensional matrices, 
		$ \{ B_{s} \mid s \in S\} $, with a generic constant $ l $ such that
		the columns of the matrix 
		\begin{equation}
			\frac{1}{l}			
			\left( \begin{array}{c}  A_{s} \\ \hline  B_{s}  \end{array} \right)
		\end{equation}
		form an orthonormal set
		for each $ s \in S $. The details of the method is given below.
		\begin{itemize}
			\item The entries of $ B_{s \in S} $ are set to 0.
			\item $ l $ is set to $ 2m+1 $.
			\item For each $ s \in S $, by executing the following loop, the entries of $ B_{s} $ 
				are updated to make the length of each column of 
				$ \left( \begin{array}{c} A_{s} \\ \hline B_{s} \end{array} \right) $ equal to $ l $,		
				and also to make the columns of 
				$ \left( \begin{array}{c} A_{s} \\ \hline B_{s} \end{array} \right) $
				pairwise orthogonal,
				where $ l $ must have been set to a value so that the loop works properly\footnote{
					\scriptsize
					The unique constraint for the loop to work properly is that 
					the value of $ l_{i} $, calculated at the (ii)$ ^{nd} $ step, must be at most $ l $.
					By setting $ l $ to $ 2m+1 $, the following bounds can be easily verified
					for each iteration of the loop:
					\begin{itemize}
						\item $ l_{i} < 2 $ at the (ii)$ ^{nd} $ step;
						\item $ 2m < |b_{i,i}| < 2m+1 $ at the (iii)$ ^{rd} $ step;
						\item $ |b_{j,i}| < \frac{1}{m} $ at the (v)$ ^{th} $ step.
					\end{itemize}
				}.\\			
				\begin{tabular}{ll}
					i.   & for $ i=1 $ to $ n+2 $ \\
					ii.  & ~~~set $ l_{i} $ to the current length of the $ i^{th} $ column \\
					iii. & ~~~set $ b_{i,i} $ to $ \sqrt{l^{2}-l_{i}^{2}} $ \\
					iv.  & ~~~for $ j=i+1 $ to $ n+2 $ \\
					v.   & ~~~~~~set $ b_{i,j} $ to some value so that $ i^{th} $ and $ j^{th} $ 
						 				columns can become orthogonal
				\end{tabular}
		\end{itemize}
	\end{minipage}
	}
	\end{center}
	\caption{General template to build a unitary matrix (I)}
	\vskip\baselineskip
	\label{uerr:fig:general-template-1}
\end{figure}
	
\begin{lemma}
       \label{uerr:lem:rt-kwqfa}
       Any language recognized with cutpoint (or nonstrict cutpoint) 
       $ \frac{1}{2} $ by a RT-PFA with $ n $ internal states 
       can be recognized with cutpoint (or nonstrict cutpoint) $ \frac{1}{2} $ 
       by a RT-KWQFA with $ O(n) $ internal states.
\end{lemma}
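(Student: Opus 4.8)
The plan is to start from a RT-PFA $\mathcal{P}$ with $n$ states given by left-stochastic matrices $\{A_\sigma \mid \sigma \in \tilde\Sigma\}$ and initial vector $v_0 = e_1$, and to mimic its computation inside a RT-KWQFA $\mathcal{M}$ whose non-halting subspace carries the (unnormalized) probability vector of $\mathcal{P}$. The obstacle is that the matrices $A_\sigma$ are stochastic but not unitary, so we cannot use them directly as the transition matrices $U_\sigma$ of a RT-KWQFA; moreover, a RT-KWQFA performs a projective measurement after every step, which would collapse the state and destroy the information we are trying to carry. Both difficulties are handled by the standard dilation trick already invoked elsewhere in the excerpt (Figure~\ref{uerr:fig:general-template-1}): we enlarge the state set and embed each $A_\sigma$ as the top $n$ rows of a column block of a unitary matrix $U_\sigma$, routing the ``overflow'' amplitude into a block of states that are never measured as accepting or rejecting until the end.

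Concretely, I would first rescale: since the columns of $A_\sigma$ have $\ell_2$-norm at most $\sqrt n$ (each column is a probability vector, so its norm is between $1/\sqrt n$ and $1$, in fact $\le 1$), set $l = 2n+1$ and apply the template of Figure~\ref{uerr:fig:general-template-1} to the matrices $\frac{1}{l}A_\sigma$ to obtain $n\times n$ matrices $B_\sigma$ so that the columns of $\frac{1}{l}\left(\begin{smallmatrix} A_\sigma \\ B_\sigma\end{smallmatrix}\right)$ form an orthonormal set; extend this $2n$-row, $n$-column block to a full $2n \times 2n$ unitary $U_\sigma$ by filling in the remaining $n$ columns arbitrarily so as to complete the orthonormal basis. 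Use the state set $Q = Q_{\mathrm{track}} \cup Q_{\mathrm{garbage}} \cup Q_a \cup Q_r$ where $Q_{\mathrm{track}}$ and $Q_{\mathrm{garbage}}$ each have $n$ states (indexed like the states of $\mathcal{P}$), $q_1$ is the first state of $Q_{\mathrm{track}}$, and initially $Q_a = Q_r = \emptyset$ for the intermediate steps; all $2n$ tracking/garbage states lie in $Q_n$, so the intermediate measurements $P_n U_\sigma$ act as the identity on the relevant subspace and the vector $\ket{u_j}$ evolving as $\ket{u_j} = P_n U_{\tilde w_j}\ket{u_{j-1}}$ simply equals $\frac{1}{l}$ times $A_{\tilde w_j}$ applied to the $Q_{\mathrm{track}}$-part of $\ket{u_{j-1}}$, plus garbage. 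Thus after reading $\cent w$ one has, on the tracking coordinates, the vector $l^{-(|w|+1)} A_w A_\cent v_0$, i.e.\ exactly the RT-PFA state vector scaled by a fixed positive constant.

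For the final step on $\dollar$ I would have $U_\dollar$ first incorporate $A_\dollar$ in the same dilated way, and then map the tracking coordinates corresponding to $Q_a$ of $\mathcal{P}$ into genuine accepting states and everything else (remaining tracking coordinates, all garbage coordinates) into rejecting states, again extending to a unitary via the template; the garbage contributions only ever add to the rejection probability, and a uniform positive scalar does not affect the comparison with the cutpoint. Hence $f_{\mathcal{M}}(w) = l^{-2(|w|+2)} f_{\mathcal{P}}(w) \cdot (\text{const})$? — more precisely the acceptance probability of $\mathcal{M}$ is a fixed positive multiple of that of $\mathcal{P}$ plus a term that is identically a fixed positive multiple of the rejection probability of $\mathcal{P}$ is wrong as stated, so the cleaner route is: arrange that $f_{\mathcal{M}}(w)$ and $\tfrac12$ compare the same way as $f_{\mathcal{P}}(w)$ and $\tfrac12$ by noting $f_{\mathcal{M}}(w) = c\, f_{\mathcal{P}}(w)$ for a constant $c$ depending only on $|w|$; to kill the $|w|$-dependence I would instead absorb the scaling into a single extra ``renormalizing'' coordinate that accumulates the discarded norm and is routed half to accept and half to reject at the end, so that $f_{\mathcal{M}}(w) = \tfrac12(1 - c_{|w|}) + c_{|w|} f_{\mathcal{P}}(w)$ with $c_{|w|} \in (0,1]$; then $f_{\mathcal{M}}(w) \gtrless \tfrac12 \iff f_{\mathcal{P}}(w) \gtrless \tfrac12$ and $f_{\mathcal{M}}(w) = \tfrac12 \iff f_{\mathcal{P}}(w) = \tfrac12$, which preserves both strict and nonstrict cutpoint $\tfrac12$. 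The total state count is $O(n)$, proving the lemma. The main thing to check carefully is that the half-to-accept/half-to-reject balancing of the discarded amplitude can itself be realized unitarily alongside the embedding of $A_\dollar$ — this is again an instance of the template, but it is where the bookkeeping must be done precisely.
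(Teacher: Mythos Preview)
Your overall strategy --- dilate each stochastic matrix $A_\sigma$ to a column block of a unitary via the template of Figure~\ref{uerr:fig:general-template-1}, and balance the leaked amplitude half-to-accept/half-to-reject so that the comparison with cutpoint $\tfrac12$ is preserved --- is exactly the paper's approach. But there is a real gap in where you put the garbage.

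You place both $Q_{\mathrm{track}}$ and $Q_{\mathrm{garbage}}$ inside $Q_n$. That is the problem. With your $2n\times 2n$ unitary
\[
U_\sigma=\begin{pmatrix}\tfrac1l A_\sigma & D_\sigma^{(1)}\\[2pt]\tfrac1l B_\sigma & D_\sigma^{(2)}\end{pmatrix},
\]
the garbage amplitude from step $j-1$ survives the projection $P_n$ (it is nonhalting) and is then hit by the $D_\sigma$ columns on step $j$. For your claimed invariant ``the tracking part of $\ket{u_j}$ equals $\tfrac1l A_{\tilde w_j}$ times the tracking part of $\ket{u_{j-1}}$'' you would need $D_\sigma^{(1)}=0$; but a block-lower-triangular unitary has unitary diagonal blocks, which would force $\tfrac1l A_\sigma$ itself to be unitary. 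So garbage inevitably leaks back into the tracking coordinates and corrupts the simulation. The defect is not just a length-dependent scale factor, and routing a single ``renormalizing'' coordinate half/half \emph{at the end} cannot repair it.

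The paper's fix is to make the overflow \emph{halting at every step}. One takes two copies of the overflow block, $B'_\sigma=B''_\sigma=\tfrac{1}{\sqrt2\,l}B_\sigma$, declares the rows of $B'_\sigma$ accepting and those of $B''_\sigma$ rejecting, and introduces two dedicated states $r_{n+1}\in R_a$, $r_{n+2}\in R_r$ into which the final $\dollar$-step funnels $f_{\mathcal P}(w)$ and $1-f_{\mathcal P}(w)$. Because the overflow states are halting, the post-step measurement strips them off immediately; the continuing vector is supported purely on the $n$ tracking coordinates, and the ``arbitrary'' completion columns $D_\sigma$ are never touched. The equal split of $B_\sigma$ guarantees that these twin halting states contribute identically to the overall acceptance and rejection probabilities at every step, so $f_{\mathcal M}(w)>\tfrac12 \iff f_{\mathcal P}(w)>\tfrac12$ (and likewise for $\ge$). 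The state count is $3(n+2)=3n+6$. Once you move the half/half balancing from ``at the end'' to ``at every step via halting twin states'', your sketch becomes the paper's proof.
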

\begin{proof}
	Let $ L $ be a language recognized by a RT-PFA with $ n $ internal states
	\begin{equation} \mathcal{P} =(Q,\Sigma,\{A_{\sigma \in \tilde{\Sigma}} \},q_{1},Q_{a}) \end{equation}
	with (nonstrict) cutpoint $ \frac{1}{2} $.
	We construct a RT-KWQFA 
	\begin{equation} 
		\mathcal{M}=(R, \Sigma, \{U_{\sigma \in \tilde{\Sigma}} \}, r_{1},R_{a}, R_{r}) 
	\end{equation} 
	with $ (3n+6)$ internal states recognizing $ L $ with (nonstrict) cutpoint $ \frac{1}{2} $. 
	The idea is to ``embed'' the (not necessarily
	unitary) matrices $A_{\sigma}$ of the RT-PFA within the larger unitary matrices $U_{\sigma}$ of the RT-KWQFA.
	
	We define $ Q' $, $ v_{0}' $, and $ \{ A_{\sigma \in \tilde{\Sigma}}' \} $ as follows:
	\begin{enumerate}
		\item $ Q' = Q \cup \{ q_{n+1}, q_{n+2} \} $;
		\item $ v_{0}' = (1,0,\ldots,0)^{T} $  is an $ (n+2) $-dimensional column vector;
		\item Each $ A_{\sigma}' $ is a $ (n+2) \times (n+2) $-dimensional matrix:
			\begin{equation}			
             A_{\sigma \in \Sigma \cup \{\cent\}}'= \left(
				\begin{array}{c|c}
					A_{\sigma} &  0_{n \times 2}  \\
					\hline
					0_{2 \times n} & I_{2 \times 2} \\
                     \end{array}
			\right),
             A_{\dollar}'=\left(
				\begin{array}{c|c}
					0_{n \times n} & 0_{2 \times n}  \\
					\hline
					T_{2 \times n} & I_{2 \times 2} \\
				\end{array}
			\right)
			\left(
				\begin{array}{c|c}
					A_{\dollar} &  0_{n \times 2}  \\
					\hline
					0_{2 \times n} & I_{2 \times 2} \\
                     \end{array}
			\right),
			\end{equation}
     where $ T(1,i)=1 $ and $ T(2,i)=0 $ when $ q_{i} \in Q_{a} $, and
           $ T(1,i)=0 $ and $ T(2,i)=1 $ when $ q_{i} \notin Q_{a} $
           for $ 1 \le i \le n $.
     \end{enumerate}
     For a given input $ w \in \Sigma^{*} $, 
     \begin{equation}
		v_{| \tilde{w} |}' = A_{\dollar}' A_{w_{|w|}}' \cdots
			A_{w_{1}}' A_{\cent}' v_{0}.
	\end{equation}
     It can easily be verified that
	\begin{equation} 
		v_{| \tilde{w}|}' = (0_{1 \times n} \mid f_{\mathcal{P}}(w), 1 - f_{\mathcal{P}}(w))^{T}. 
	\end{equation}	
	
	For each $ \sigma \in \tilde{\Sigma} $, we obtain constant $ l $ and the set 
	$ \{B_{\sigma \in \tilde{\Sigma}} \} $ for the set $ \{A'_{\sigma \in \tilde{\Sigma}} \} $ 
	according to template described in Figure \ref{uerr:fig:general-template-1} 
	such that the columns of the matrix 
	\begin{equation}
		\frac{1}{l} 
		\left( \begin{array}{c} A'_{\sigma} \\ \hline B_{\sigma} \end{array} \right)
	\end{equation}	
	form an orthonormal set.
	Then, we obtain $ U_{\sigma \in \tilde{\Sigma}} $ as
	\begin{equation}
		U_{\sigma}=\left( 
			\begin{array}{c|c}				
				\begin{array}{c}
					A_{\sigma}''
					\\ \hline 
					B_{\sigma}'
					\\ \hline
					B_{\sigma}''
				\end{array}
				 &
				D_{\sigma}
			\end{array}
		\right),
	\end{equation}
	where $ A_{\sigma}'' = \frac{1}{l} A_{\sigma}' $, 
	$ B_{\sigma}' = B_{\sigma}'' = \frac{1}{\sqrt{2}l} B_{\sigma} $, and 
	the entries of $ D_{\sigma} $ are selected to make $ U_{\sigma} $ a unitary matrix.
	
	The state set $ R = R_{n} \cup R_{a} \cup R_{r}  $ is specified as:
	\begin{enumerate}
		\item $ r_{n+1} \in R_{a} $ corresponds to state $ q_{n+1} $;
		\item $ r_{n+2} \in R_{r} $ corresponds to state $ q_{n+2} $;
		\item $ \{r_{1},\ldots,r_{n}\} \in R_{n} $ correspond to the states of $ Q $, 
			where $ r_{1} $ is the start state;
		\item All the states defined for the rows of $ B_{\sigma}' $ 
			and $ B_{\sigma}'' $ are respectively accepting and rejecting states.
	\end{enumerate}
	
	$ \mathcal{M} $ simulates the computation of $ \mathcal{P} $
	for the input string $ w $ by multiplying the amplitude of each non-halting state with
	$ \frac{1}{l} $ in each step. 
	Hence, the top $ n+2 $ entries of the state vector of $ \mathcal{M} $ equal
	\begin{equation}	\left( \frac{1}{l} \right)^{|\tilde{w}|}
		\left(0_{1 \times n} , f_{\mathcal{P}}(w), 1- f_{\mathcal{P}}(w) \right)^{\trans}
	\end{equation}
	just before the last measurement on the right end-marker. 
	Note that, the halting states, except $ q_{n+1} $ and $ q_{n+2} $, come in accept/reject pairs, so
	that transitions to them during the computation add equal amounts to the overall 
	acceptance and rejection probabilities, and therefore not affect the decision on the membership of the
	input in $ L $. We conclude that 
	\begin{equation}
		f_{\mathcal{M}}(w) > \frac{1}{2} \mbox{ if and only if } f_{\mathcal{P}}(w) > \frac{1}{2},
	\end{equation}
	and
	\begin{equation}
		f_{\mathcal{M}}(w) \geq \frac{1}{2} \mbox{ if and only if } f_{\mathcal{P}}(w) \geq \frac{1}{2}.
	\end{equation}
\end{proof}

\begin{theorem}
       \label{theorem:SLUMM}
       The class of languages recognized by RT-KWQFAs with unbounded error is uS (uQAL).
\end{theorem}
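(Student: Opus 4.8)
The plan is to prove the two inclusions $\mbox{uS} \subseteq \mathcal{L} \subseteq \mbox{uQAL}$, where $\mathcal{L}$ denotes the class of languages recognized with unbounded error by RT-KWQFAs, and to observe separately that $\mbox{uS}=\mbox{uQAL}$, whence all three classes coincide. The identity $\mbox{uS}=\mbox{uQAL}$ follows from the Corollary after Lemma \ref{qtm:lem:RT-QFA-to-RT-GFA}, which gives $\mathrm{QAL}=\mathrm{S}$: since a RT-QFA keeps its density matrix normalized at every step, replacing its accepting projector $P_a$ by $I-P_a$ turns a machine with acceptance value $f_{\mathcal{M}}(w)$ into one with value $1-f_{\mathcal{M}}(w)$, and a one-line inequality manipulation then shows that $\mathrm{coQAL}$ is exactly the class of complements of $\mathrm{QAL}$-languages; hence $\mathrm{coQAL}=\mathrm{coS}$ and $\mbox{uQAL}=\mathrm{QAL}\cup\mathrm{coQAL}=\mathrm{S}\cup\mathrm{coS}=\mbox{uS}$.

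For $\mathcal{L}\subseteq\mbox{uQAL}$: a RT-KWQFA is a restricted RT-QFA. Indeed its per-step ``apply $U_{\sigma}$, then measure with $\{P_n,P_a,P_r\}$'' operation is already a superoperator, and by routing the $a$- and $r$-outcomes into fresh absorbing states and reading off the accepting projector only after $\dollar$, one obtains a bona fide RT-QFA with exactly the same acceptance value on every input string. Consequently a language recognized by a RT-KWQFA with strict (resp.\ nonstrict) cutpoint belongs to $\mathrm{QAL}$ (resp.\ $\mathrm{coQAL}$), so $\mathcal{L}\subseteq\mathrm{QAL}\cup\mathrm{coQAL}=\mbox{uQAL}$.

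For $\mbox{uS}\subseteq\mathcal{L}$, the engine is Lemma \ref{uerr:lem:rt-kwqfa}, which converts a RT-PFA with (strict or nonstrict) cutpoint $\frac{1}{2}$ into a RT-KWQFA with the same kind of cutpoint $\frac{1}{2}$; the remaining work is to force any given cutpoint down to exactly $\frac{1}{2}$. If $L\in\mathrm{S}$, write $L=\{w\mid f_{\mathcal{P}}(w)>\lambda\}$ for a RT-PFA $\mathcal{P}$ and some $\lambda\in[0,1)$. Build $\mathcal{P}'$ which on $\cent$ enters $\mathcal{P}$'s computation with probability $\frac{1}{2}$ and, with probability $\frac{1}{2}$, a lazy branch that ignores the input and accepts with probability $1-\lambda$ upon reading $\dollar$; then $f_{\mathcal{P}'}(w)=\frac{1}{2}f_{\mathcal{P}}(w)+\frac{1}{2}(1-\lambda)\in[0,1]$ and $f_{\mathcal{P}'}(w)>\frac{1}{2}$ iff $f_{\mathcal{P}}(w)>\lambda$, so $\mathcal{P}'$ recognizes $L$ with strict cutpoint $\frac{1}{2}$ and Lemma \ref{uerr:lem:rt-kwqfa} yields $L\in\mathcal{L}$. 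If $L\in\mathrm{coS}$, then $\overline{L}\in\mathrm{S}$, so by the previous case $\overline{L}$ is recognized by some RT-PFA $\mathcal{P}$ with strict cutpoint $\frac{1}{2}$; complementing the accepting set of $\mathcal{P}$ gives $\mathcal{P}'$ with $f_{\mathcal{P}'}(w)=1-f_{\mathcal{P}}(w)$, which recognizes $L$ with \emph{nonstrict} cutpoint $\frac{1}{2}$, and Lemma \ref{uerr:lem:rt-kwqfa} again gives $L\in\mathcal{L}$. Thus $\mbox{uS}\subseteq\mathcal{L}$, and combining with the previous paragraph, $\mathcal{L}=\mbox{uS}=\mbox{uQAL}$.

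The step I expect to require the most care is $\mathcal{L}\subseteq\mbox{uQAL}$: making the intermediate measurements of a RT-KWQFA into a legitimate trace-preserving superoperator demands freezing the halting states correctly, so that the single final $P_a$-measurement of the resulting RT-QFA reproduces exactly the acceptance probability accumulated step by step. The cutpoint-shifting gadget in the other direction is elementary once one checks that it can indeed be realized by a realtime PFA, but it too must be arranged so that the new cutpoint is \emph{exactly} $\frac{1}{2}$, which is what Lemma \ref{uerr:lem:rt-kwqfa} requires.
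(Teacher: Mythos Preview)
Your proof is correct and follows essentially the same approach as the paper's (very terse) proof, which simply cites Lemma~\ref{uerr:lem:rt-kwqfa}, Lemma~\ref{qtm:lem:RT-QFA-to-RT-GFA}, and \cite{Tu69}. You have merely unpacked those citations: your cutpoint-shifting gadget is the content of \cite{Tu69}/Fact~\ref{fact:PFA-PFA}, your embedding of RT-KWQFAs into RT-QFAs (by treating $P_nU_\sigma,P_aU_\sigma,P_rU_\sigma$ as operation elements and freezing the halting branches) is exactly what is needed to invoke Lemma~\ref{qtm:lem:RT-QFA-to-RT-GFA} on a RT-KWQFA, and the rest is Lemma~\ref{uerr:lem:rt-kwqfa}.
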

\begin{proof}
       Follows from Lemma~\ref{uerr:lem:rt-kwqfa}, Lemma \ref{qtm:lem:RT-QFA-to-RT-GFA} and \cite{Tu69}.
\end{proof}

\begin{corollary}
	UMM = QAL $ \cap $ coQAL = S $ \cap $ coS.
\end{corollary}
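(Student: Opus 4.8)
The plan is to reduce everything to the identity QAL $=$ S established right after Lemma~\ref{qtm:lem:RT-QFA-to-RT-GFA}, which makes the first equality QAL $\cap$ coQAL $=$ S $\cap$ coS immediate, and then to prove UMM $=$ QAL $\cap$ coQAL. The one ingredient I extract from the proof of Theorem~\ref{theorem:SLUMM} is the sharper statement implicit in it: a language lies in S if and only if it is recognized by some RT-KWQFA with a \emph{strict} cutpoint, and, after the standard renormalisation of the cutpoint (exactly as in Lemma~\ref{uerr:lem:rt-kwqfa}, which is stated for cutpoint $\frac{1}{2}$), with strict cutpoint exactly $\frac{1}{2}$. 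The inclusion ``$\subseteq$'' here is Lemma~\ref{uerr:lem:rt-kwqfa} together with the classical fact that every stochastic language is recognized by a RT-PFA with strict cutpoint $\frac{1}{2}$; the inclusion ``$\supseteq$'' follows because a RT-KWQFA is a special RT-QFA, so Lemma~\ref{qtm:lem:RT-QFA-to-RT-GFA} and Turakainen's theorem \cite{Tu69} place the recognized language in S.

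For UMM $\subseteq$ QAL $\cap$ coQAL: let a RT-KWQFA $\mathcal{M}$ and $\lambda\in[0,1]$ witness $L\in{}$UMM, so $f_{\mathcal{M}}(w)>\lambda$ for $w\in L$ and $f_{\mathcal{M}}(w)<\lambda$ for $w\notin L$. The first inequality says $L$ is recognized by $\mathcal{M}$ with strict cutpoint $\lambda$, hence $L\in{}$S $=$ QAL. For the complement, form $\mathcal{M}^{c}$ from $\mathcal{M}$ by exchanging the accepting and rejecting state sets; since a RT-KWQFA halts on every input with acceptance and rejection probabilities summing to $1$, we have $f_{\mathcal{M}^{c}}(w)=1-f_{\mathcal{M}}(w)$, so $f_{\mathcal{M}^{c}}(w)>1-\lambda$ exactly when $w\notin L$. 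Thus $\overline{L}$ is recognized by the RT-KWQFA $\mathcal{M}^{c}$ with strict cutpoint $1-\lambda$, so $\overline{L}\in{}$S, i.e. $L\in{}$coS $=$ coQAL.

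For QAL $\cap$ coQAL $\subseteq$ UMM: let $L\in{}$QAL $\cap$ coQAL, so $L\in{}$S and $\overline{L}\in{}$S. By the sharpened form of Theorem~\ref{theorem:SLUMM} there is a RT-KWQFA $\mathcal{A}$ with $f_{\mathcal{A}}(w)>\frac{1}{2}\iff w\in L$ (so $f_{\mathcal{A}}(w)\le\frac{1}{2}$ for $w\notin L$) and a RT-KWQFA $\mathcal{B}$ with $f_{\mathcal{B}}(w)>\frac{1}{2}\iff w\in\overline{L}$; passing to $\mathcal{B}^{c}$ as above gives $f_{\mathcal{B}^{c}}(w)<\frac{1}{2}$ for $w\notin L$ and $f_{\mathcal{B}^{c}}(w)\ge\frac{1}{2}$ for $w\in L$. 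Now construct a RT-KWQFA $\mathcal{M}$ which, on reading $\cent$, maps its initial state to $\frac{1}{\sqrt{2}}$ times the post-$\cent$ superposition of $\mathcal{A}$ plus $\frac{1}{\sqrt{2}}$ times that of $\mathcal{B}^{c}$, runs $\mathcal{A}$ and $\mathcal{B}^{c}$ on disjoint (orthogonal) blocks of coordinates thereafter, and whose accepting states are the union of those of $\mathcal{A}$ and of $\mathcal{B}^{c}$. This is well formed: the single branching step extends to a unitary $U_{\cent}$ and all later transitions are block diagonal, so the two branches never interfere and $f_{\mathcal{M}}(w)=\frac{1}{2}f_{\mathcal{A}}(w)+\frac{1}{2}f_{\mathcal{B}^{c}}(w)$. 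This quantity exceeds $\frac{1}{2}$ when $w\in L$ (both summands exceed $\frac{1}{4}$) and is below $\frac{1}{2}$ when $w\notin L$ (one summand is $\le\frac{1}{4}$, the other $<\frac{1}{4}$), so $L\in{}$UMM with threshold $\lambda=\frac{1}{2}$.

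The main obstacle is this last inclusion, and exactly the place where the hypothesis $L\in{}$coQAL is indispensable: a strict-cutpoint recognizer of $L$ alone gives only $f\le\frac{1}{2}$ off $L$, whereas UMM demands the strict inequality $f<\frac{1}{2}$ there, and the only way to manufacture that strictness is to superpose a second machine that separates $\overline{L}$ strictly. The rest is routine bookkeeping: checking that the $\cent$-branching extends to a unitary matrix and that the two blocks remain orthogonal so the acceptance probabilities add with weight $\frac{1}{2}$, and invoking the standard normalisation that lets one take the stochastic cutpoint to be $\frac{1}{2}$.
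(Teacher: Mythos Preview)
Your argument for the nontrivial inclusion $\mathrm{QAL}\cap\mathrm{coQAL}\subseteq\mathrm{UMM}$ is correct and matches the paper exactly: both combine, with equal weight $\frac{1}{2}$, a RT-KWQFA recognizing $L$ with strict cutpoint $\frac{1}{2}$ and one recognizing $L$ with nonstrict cutpoint $\frac{1}{2}$ (you obtain the latter by complementing a strict recognizer for $\overline{L}$, the paper takes it directly from $L\in\mathrm{coQAL}$; these are the same machine). One cosmetic slip: in your parenthetical for $w\in L$ you write ``both summands exceed $\frac{1}{4}$'', but $\frac{1}{2}f_{\mathcal{B}^{c}}(w)$ is only $\ge\frac{1}{4}$; the sum still strictly exceeds $\frac{1}{2}$, so the conclusion stands.

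There is, however, a small genuine wrinkle in your treatment of the easy direction $\mathrm{UMM}\subseteq\mathrm{QAL}\cap\mathrm{coQAL}$. You assert that for an arbitrary RT-KWQFA $\mathcal{M}$ one has $f_{\mathcal{M}^{c}}(w)=1-f_{\mathcal{M}}(w)$ ``since a RT-KWQFA halts on every input with acceptance and rejection probabilities summing to $1$''. The model as defined here does not force this: after the last measurement some amplitude may remain on nonhalting states, giving $f^{a}_{\mathcal{M}}+f^{r}_{\mathcal{M}}<1$, and then swapping $Q_{a}$ and $Q_{r}$ yields $f_{\mathcal{M}^{c}}=f^{r}_{\mathcal{M}}\neq 1-f_{\mathcal{M}}$. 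The paper sidesteps this entirely: if $\mathcal{M}$ witnesses $L\in\mathrm{UMM}$ with threshold $\lambda$, then $f_{\mathcal{M}}$ never takes the value $\lambda$, so $\{w:f_{\mathcal{M}}(w)>\lambda\}=\{w:f_{\mathcal{M}}(w)\ge\lambda\}=L$, and the \emph{same} machine $\mathcal{M}$ places $L$ in both $\mathrm{QAL}$ and $\mathrm{coQAL}$ with no complementation needed. (Your route can also be repaired by first routing any residual nonhalting amplitude on $\dollar$ to a reject state, which preserves $f^{a}$ and forces halting, but the direct observation is cleaner.)
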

\begin{proof}
	It is obvious that UMM $ \subseteq $ QAL $ \cap $ coQAL. Let $ L \in $ QAL $ \cap $ coQAL.
	Then, there exist two RT-KWQFAs $ \mathcal{M}_{1} $ and $ \mathcal{M}_{2} $ such that
	for all $ w \in L $, $ f_{\mathcal{M}_{1}} (w) > \frac{1}{2} $
	and $ f_{\mathcal{M}_{2}} (w) \geq \frac{1}{2} $ and
	for all $ w \notin L $, $ f_{\mathcal{M}_{1}} (w) \leq \frac{1}{2} $
	and $ f_{\mathcal{M}_{2}} (w) < \frac{1}{2} $.
	Let $ \mathcal{M}_{3} $ be a RT-KWQFA running $ \mathcal{M}_{1} $ and $ \mathcal{M}_{2} $
	with equal probability. Thus, we obtain that for all $ w \in L $,
	$ f_{\mathcal{M}_{3}} (w) > \frac{1}{2}$, and for all $ w \notin L $,
	$ f_{\mathcal{M}_{3}} (w) < \frac{1}{2} $. 
	Therefore, $ L \in  $ UMM.
\end{proof}

Considering this result together with Theorem \ref{theorem:1KWQFA}, we
conclude that, unlike classical deterministic and probabilistic finite automata,
allowing the tape head to ``stay put'' for
some steps during its left-to-right traversal of the input increases
the language recognition power of
quantum finite automata in the unbounded error case.

Since unbounded-error RT-PFAs and 2PFAs are equivalent in
computational power \cite{Ka89}, we are now able to state the
following corollary to Theorem \ref{theorem:1KWQFA}:
\begin{corollary}
	The class of languages recognized with unbounded error by
	constant-space PTMs is a proper subclass of the respective class for
	QTMs with restricted measurements.
\end{corollary}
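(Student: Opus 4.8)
The plan is to sandwich the unbounded-error probabilistic class between two copies of a known characterization and then break the sandwich with a single witness language. First I would recall that a constant-space PTM is precisely a 2PFA, so the class of languages recognized by constant-space PTMs with unbounded error is $\mathrm{PrSPACE}(1)$, which (by the equivalence of unbounded-error RT-PFAs, GFAs and 2PFAs, \cite{Tu69,Ka89}) equals $\mathrm{uS} = \mathrm{S}\cup\mathrm{coS}$. This fixes the left-hand side of the claimed inclusion.

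For the right-hand side, I would observe that a constant-space QTM with restricted measurements is, by definition, a Kondacs--Watrous QFA, and that the realtime version RT-KWQFA is the most restricted member of this family. Theorem~\ref{theorem:SLUMM} already states that RT-KWQFAs recognize exactly $\mathrm{uS}$ with unbounded error; since every RT-KWQFA is in particular a constant-space QTM with restricted measurements, the probabilistic class $\mathrm{uS}$ is contained in the quantum one, so the inclusion in the statement holds, and it only remains to establish strictness. Strictness is immediate from material already developed in this section: by Theorem~\ref{theorem:1KWQFA} there is a 1KWQFA---hence a constant-space QTM with restricted measurements---recognizing $L_{NH}$ with unbounded error, whereas $L_{NH}$ is nonstochastic \cite{NH71} and (as noted above) so is its complement, so $L_{NH}\notin\mathrm{S}$ and $L_{NH}\notin\mathrm{coS}$, i.e. $L_{NH}\notin\mathrm{uS}$. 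Thus $L_{NH}$ lies in the quantum class but in no constant-space PTM class, and the inclusion is proper.

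I do not expect a genuine obstacle, since every ingredient is already in place. The only point requiring a little care is the bookkeeping identification ``constant-space QTM with restricted measurements'' $\equiv$ 2KWQFA, together with the remark that a 1KWQFA qualifies as one of these machines---its input head is merely forbidden to move left, which is a restriction of the 2KWQFA model rather than an extension---so that Theorem~\ref{theorem:1KWQFA} really does produce a machine of the required type. Once that is granted, the corollary follows in a couple of lines.
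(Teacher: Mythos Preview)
Your proposal is correct and follows essentially the same route as the paper: the inclusion comes from Theorem~\ref{theorem:SLUMM} (RT-KWQFAs already capture $\mathrm{uS}=\mathrm{PrSPACE}(1)$), and strictness comes from Theorem~\ref{theorem:1KWQFA}, which puts the nonstochastic language $L_{NH}$ inside the quantum class via a 1KWQFA. The only cosmetic difference is that the paper appeals directly to the Freivalds--Karpinski fact (``no PTM using space $o(\log\log n)$ recognizes $L_{NH}$ with unbounded error'') for $L_{NH}\notin\mathrm{uS}$, whereas you unpack this as $L_{NH}\notin\mathrm{S}$ and $\overline{L_{NH}}\notin\mathrm{S}$; these are equivalent once $\mathrm{PrSPACE}(1)=\mathrm{uS}$ is in hand.
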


Also note that, since the algorithm described in the proof of Theorem
\ref{theorem:1KWQFA} is
presented for a 1KWQFA, Corollary \ref{corollary:coCSPACE-coCQSPACE} is still valid when $
\mbox{coC}_{=}\mbox{QSPACE}(1) $
is defined for QTMs with restricted measurements.

\section{Nondeterministic Quantum Computation} \label{uerr:nondeterministic}

We begin with some basic facts. 

\subsection{Basic Facts} \label{uerr:non-basic-fact}

For a fixed $ \Sigma $, the pair $ (\mathcal{A},\lambda) $ is said to be 
\textit{equivalent under cutpoint separation to} the pair
$ (\mathcal{A'},\lambda') $, denoted as $ (\mathcal{A},\lambda) \equiv 
(\mathcal{A'},\lambda') $,      
if the equalities 
\begin{eqnarray}
	 \{w \in \Sigma^{*} \mid f_{\mathcal{A}}(w)<\lambda \} 
	 	& = & \{w \in \Sigma^{*} \mid f_{\mathcal{A'}}(w)<\lambda' \}, \\
	 \{w \in \Sigma^{*} \mid f_{\mathcal{A}}(w)=\lambda \} 
	 	& = & \{w \in \Sigma^{*} \mid f_{\mathcal{A'}}(w)=\lambda' \}, \mbox{ and} \\
	 \{w \in \Sigma^{*} \mid f_{\mathcal{A}}(w) > \lambda \} 
	 	& = & \{w \in \Sigma^{*} \mid f_{\mathcal{A'}}(w) > \lambda' \}
\end{eqnarray}
hold, where $ \mathcal{A} $, $ \mathcal{A'} $ are machines and 
$ \lambda, \lambda' \in \mathbb{R} $ are cutpoints.

\begin{fact}
	\label{fact:GFA-GFA}
	\cite{Tu69} Let $ \mathcal{G}_{1} $ be a GFA  and $\lambda_{1} \in \mathbb{R} $ be a cutpoint.
	For any cutpoint $ \lambda_{2} \in \mathbb{R} $, there exists a GFA $ \mathcal{G}_{2} $  such that 
	$ (\mathcal{G}_{1},\lambda_{1}) \equiv (\mathcal{G}_{2},\lambda_{2}) $.
\end{fact}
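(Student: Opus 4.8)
The plan is to take the GFA $\mathcal{G}_1 = (Q_1, \Sigma, \{A_{\sigma}\}, v_0, f)$ with cutpoint $\lambda_1$ and build a new GFA $\mathcal{G}_2$ whose acceptance value is an affine transformation of that of $\mathcal{G}_1$, namely $f_{\mathcal{G}_2}(w) = f_{\mathcal{G}_1}(w) + (\lambda_2 - \lambda_1)$ for every $w \in \Sigma^*$. If we can arrange this, then $f_{\mathcal{G}_2}(w) \lessgtr \lambda_2$ exactly when $f_{\mathcal{G}_1}(w) \lessgtr \lambda_1$, and similarly for equality, so $(\mathcal{G}_1,\lambda_1) \equiv (\mathcal{G}_2,\lambda_2)$ as required. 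The only thing to check is that adding a constant to the acceptance value can be realized within the GFA formalism, which is easy because GFAs allow arbitrary real initial and final vectors and transition matrices (no stochasticity constraint).

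First I would add one extra coordinate to carry the constant. Set $Q_2 = Q_1 \cup \{q^{*}\}$, so the state space has dimension $n+1$ where $n = |Q_1|$. Define the new initial vector $v_0'$ to agree with $v_0$ on the $Q_1$-coordinates and to have a $1$ in the $q^{*}$-coordinate. For each $\sigma \in \Sigma$ define the block matrix
\begin{equation}
	A_{\sigma}' = \left( \begin{array}{c|c} A_{\sigma} & 0 \\ \hline 0 & 1 \end{array} \right),
\end{equation}
so the extra coordinate is simply preserved under every transition. Then for any $w = w_1 \cdots w_m$ we have $A_w' v_0' = (A_w v_0 \mid 1)^{\trans}$, i.e. the $Q_1$-block evolves exactly as in $\mathcal{G}_1$ and the last coordinate stays equal to $1$. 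Finally define the new final (row) vector $f'$ to equal $f$ on the $Q_1$-coordinates and to have value $\lambda_2 - \lambda_1$ in the $q^{*}$-coordinate. Then
\begin{equation}
	f_{\mathcal{G}_2}(w) = f' A_w' v_0' = f A_w v_0 + (\lambda_2 - \lambda_1) = f_{\mathcal{G}_1}(w) + (\lambda_2 - \lambda_1).
\end{equation}

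With this identity in hand, the three set-equalities defining $\equiv$ follow immediately: $f_{\mathcal{G}_2}(w) < \lambda_2 \iff f_{\mathcal{G}_1}(w) < \lambda_1$, and likewise with $=$ and $>$. There is essentially no obstacle here; the only mild point worth a sentence is that the construction genuinely uses the generality of GFAs — for honest RT-PFAs one could not append a coordinate holding an arbitrary real constant — but since Fact~\ref{fact:GFA-GFA} is stated for GFAs this is exactly what Turakainen's model permits. An alternative, even more economical, route would be to leave the state set unchanged and simply replace $f$ by $f + (\lambda_2 - \lambda_1)\, u$ where $u$ is any fixed row vector with $u A_w v_0 = 1$ for all $w$ (for instance if the $A_\sigma$ are stochastic one may take $u = (1,\dots,1)$ after normalizing $v_0$); I would mention the block construction as the clean general argument and note this shortcut as a remark.
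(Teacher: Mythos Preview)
Your construction is correct: adding a single ``constant'' coordinate that is fixed by every $A_\sigma'$ and then loading $\lambda_2-\lambda_1$ into the corresponding entry of the final vector yields $f_{\mathcal{G}_2}(w)=f_{\mathcal{G}_1}(w)+(\lambda_2-\lambda_1)$, from which the three set equalities defining $\equiv$ are immediate. Note, however, that the paper does not give its own proof of this statement---it is stated as a Fact with a citation to Turakainen~\cite{Tu69}---so there is no in-paper argument to compare against; your proof is precisely the standard one and matches what Turakainen does.
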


\begin{fact}
	\label{fact:PFA-PFA}
	\cite{Pa71} Let $ \mathcal{P}_{1} $ be a RT-PFA  and $ \lambda_{1} \in [0,1) $ be a cutpoint.
	For any cutpoint $ \lambda_{2} \in (0,1) $, there exists a RT-PFA $ \mathcal{P}_{2} $  such that
	$ (\mathcal{P}_{1},\lambda_{1}) \equiv (\mathcal{P}_{2},\lambda_{2}) $.
\end{fact}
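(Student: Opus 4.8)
The plan is to realize a suitable affine reparametrization of the acceptance function of $\mathcal{P}_1$ by a convex-combination construction that stays inside the class of RT-PFAs. Writing $f_1 = f_{\mathcal{P}_1}$, I want to build a RT-PFA $\mathcal{P}_2$ with $f_{\mathcal{P}_2}(w) = \alpha f_1(w) + \beta$ for constants $\alpha > 0$ and $\beta \geq 0$ chosen so that $\alpha + \beta \leq 1$ (so that the new acceptance value stays in $[0,1]$ and the transition matrices remain stochastic) and $\alpha \lambda_1 + \beta = \lambda_2$. Since the map $g(x) = \alpha x + \beta$ is then strictly increasing and sends $\lambda_1$ to $\lambda_2$, it follows at once that $f_1(w) < \lambda_1 \Leftrightarrow f_{\mathcal{P}_2}(w) < \lambda_2$, $f_1(w) = \lambda_1 \Leftrightarrow f_{\mathcal{P}_2}(w) = \lambda_2$, and $f_1(w) > \lambda_1 \Leftrightarrow f_{\mathcal{P}_2}(w) > \lambda_2$, which is exactly $(\mathcal{P}_1,\lambda_1) \equiv (\mathcal{P}_2,\lambda_2)$.

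First I would check that such $\alpha,\beta$ exist. Solving $\beta = \lambda_2 - \alpha\lambda_1$, the requirements $\beta \geq 0$ and $\alpha + \beta \leq 1$ become $\alpha \leq \lambda_2/\lambda_1$ (when $\lambda_1 > 0$) and $\alpha \leq (1-\lambda_2)/(1-\lambda_1)$; since $\lambda_1 \in [0,1)$ and $\lambda_2 \in (0,1)$, both bounds are strictly positive, so any $\alpha$ in the resulting nonempty interval $(0,\,\cdot\,]$ works, with $\beta$ determined accordingly. In the boundary case $\lambda_1 = 0$ one simply takes $\beta = \lambda_2$ and any $\alpha \in (0,\,1-\lambda_2]$.

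Next I would carry out the construction. Let $\mathcal{P}_1 = (Q,\Sigma,\{A_{\sigma \in \tilde{\Sigma}}\},q_1,Q_a)$. I adjoin to $Q$ two fresh states: a permanently accepting sink $q_{\mathrm{acc}}$ (placed in the accepting set) and a permanently rejecting sink $q_{\mathrm{rej}}$ (kept out of it), each mapped to itself by every transition matrix, together with a fresh start state $q_1'$. The new matrix for the left end-marker sends $q_1'$ to the distribution $\alpha\,(A_\cent \ket{q_1}) + \beta \ket{q_{\mathrm{acc}}} + (1-\alpha-\beta)\ket{q_{\mathrm{rej}}}$, while on all remaining original coordinates every matrix $A_\sigma$ ($\sigma \in \Sigma \cup \{\dollar\}$) acts exactly as in $\mathcal{P}_1$. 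Each matrix so defined is (left) stochastic, so $\mathcal{P}_2$ is a genuine RT-PFA. Tracing the computation with a stochastic state vector: after reading $\cent$ the original block carries total mass $\alpha$ and, by induction on $|w|$, evolves precisely as $\alpha$ times the state vector of $\mathcal{P}_1$; hence after reading $\dollar$ its mass on $Q_a$ is $\alpha f_1(w)$, the sink $q_{\mathrm{acc}}$ always contributes $\beta$, and $q_{\mathrm{rej}}$ contributes $0$, giving $f_{\mathcal{P}_2}(w) = \alpha f_1(w) + \beta$ as required.

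The routine parts are the stochasticity bookkeeping and the inductive trace; the only place needing a little care — and the reason the statement restricts $\lambda_2$ to the open interval $(0,1)$ while allowing $\lambda_1=0$ on the left — is the feasibility of $(\alpha,\beta)$: the convex-combination realization forces $\alpha,\beta \geq 0$ and $\alpha+\beta \leq 1$, so, unlike the GFA version in Fact~\ref{fact:GFA-GFA} where an arbitrary real affine reparametrization is available, here one must verify that a strictly positive slope $\alpha$ compatible with $\alpha\lambda_1 + \beta = \lambda_2$ can still be chosen within the stochastic constraints, which is exactly the elementary case analysis of the second paragraph.
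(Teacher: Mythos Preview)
Your proof is correct. The paper does not supply its own proof of this statement---it is recorded as a cited Fact from \cite{Pa71}---and your convex-combination construction (branching on $\cent$ into the original machine with weight $\alpha$, an absorbing accept sink with weight $\beta$, and an absorbing reject sink with weight $1-\alpha-\beta$) is precisely the classical argument from that reference.
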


\begin{fact}
	\label{fact:PFA-KWQFA} (see also Lemma \ref{uerr:lem:rt-kwqfa})
	\cite{YS09D,YS10C} For any RT-PFA $ \mathcal{P} $, there exists a RT-KWQFA $\mathcal{M} $ such that
	$ (\mathcal{P},\frac{1}{2}) \equiv (\mathcal{M},\frac{1}{2}) $.
\end{fact}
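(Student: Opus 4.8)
The plan is to obtain Fact \ref{fact:PFA-KWQFA} directly from the embedding already carried out in the proof of Lemma \ref{uerr:lem:rt-kwqfa}, after noting that that construction not only preserves membership with respect to the strict and nonstrict cutpoint $\frac{1}{2}$, but actually reproduces the whole partition of $\Sigma^{*}$ into the ``below $\frac{1}{2}$'', ``equal to $\frac{1}{2}$'', and ``above $\frac{1}{2}$'' classes, which is exactly what $\equiv$ asks for. So, given a RT-PFA $\mathcal{P}=(Q,\Sigma,\{A_{\sigma \in \tilde{\Sigma}}\},q_{1},Q_{a})$ with $n$ states, I would take for $\mathcal{M}$ the RT-KWQFA built as in Lemma \ref{uerr:lem:rt-kwqfa}: pad the state set by two ``accumulator'' states $q_{n+1},q_{n+2}$ (declared accepting and rejecting in $\mathcal{M}$) and, at $\dollar$, route the total accepting and rejecting weight of $\mathcal{P}$ into them via the $\{0,1\}$-matrix $T$, so that the auxiliary $(n+2)$-dimensional evolution gives $v_{|\tilde{w}|}' = (0,\ldots,0,f_{\mathcal{P}}(w),1-f_{\mathcal{P}}(w))^{\trans}$; then scale each $A_{\sigma}'$ by $\frac{1}{l}$ for the generic constant $l$ produced by the template of Figure \ref{uerr:fig:general-template-1}, append the completion blocks $B_{\sigma}'=B_{\sigma}''=\frac{1}{\sqrt{2}\,l}B_{\sigma}$ (directed into accepting and rejecting halting states respectively) and a filler block, and extend the result to a unitary $U_{\sigma}$.

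The one computation that needs to be carried through is the acceptance probability of $\mathcal{M}$. On $\tilde{w}$ the surviving non-halting amplitude is multiplied by $\frac{1}{l}$ at each step, so just before the measurement on $\dollar$ the $q_{n+1}$- and $q_{n+2}$-amplitudes are $(\frac{1}{l})^{|\tilde{w}|}f_{\mathcal{P}}(w)$ and $(\frac{1}{l})^{|\tilde{w}|}(1-f_{\mathcal{P}}(w))$, contributing $(\frac{1}{l})^{2|\tilde{w}|}f_{\mathcal{P}}(w)^{2}$ and $(\frac{1}{l})^{2|\tilde{w}|}(1-f_{\mathcal{P}}(w))^{2}$ to $f_{\mathcal{M}}(w)$ and $f_{\mathcal{M}}^{r}(w)$; and since $B_{\sigma}'=B_{\sigma}''$, every halting transition at every step contributes the same amount to $f_{\mathcal{M}}(w)$ as to $f_{\mathcal{M}}^{r}(w)$. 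Writing $c=(\frac{1}{l})^{2|\tilde{w}|}$, using that $\mathcal{M}$ halts with probability $1$ (so $f_{\mathcal{M}}(w)+f_{\mathcal{M}}^{r}(w)=1$) to eliminate the common contribution, and invoking the cancellation $f^{2}-(1-f)^{2}=2f-1$ with $f=f_{\mathcal{P}}(w)$, one gets
\begin{equation}
	f_{\mathcal{M}}(w) = \frac{1}{2} + c\left(f_{\mathcal{P}}(w)-\tfrac{1}{2}\right), \qquad c>0 .
\end{equation}
Hence $f_{\mathcal{M}}(w)$ lies strictly below, exactly at, or strictly above $\frac{1}{2}$ precisely when $f_{\mathcal{P}}(w)$ does, which gives all three set equalities in the definition of $\equiv$, so $(\mathcal{P},\frac{1}{2})\equiv(\mathcal{M},\frac{1}{2})$.

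The part that does genuine work — and which I would quote from Lemma \ref{uerr:lem:rt-kwqfa} and Figure \ref{uerr:fig:general-template-1} rather than redo — is the well-formedness (unitarity) of the $U_{\sigma}$: one must check that a single value of $l$ (the template uses $l=2(n+2)+1$) simultaneously makes every column of the matrix obtained by stacking $A_{\sigma}'$ on top of the correction block $B_{\sigma}$ and dividing by $l$ have norm $1$, while keeping distinct columns orthogonal, so that the stack extends to a unitary matrix. This rests on the columns of each $A_{\sigma}'$ having norm at most $1$ (inherited from the column-stochasticity of the $A_{\sigma}$, with $A_{\dollar}'$ being a product of two such column-contractions) together with the explicit Gram--Schmidt-type bounds footnoted in Figure \ref{uerr:fig:general-template-1}; everything else — the twin accept/reject padding, the accumulator coordinates, and the uniformity of the construction in $\lambda=\lambda'=\frac{1}{2}$ — is bookkeeping. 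Thus the Fact is precisely the cutpoint-separation reformulation of Lemma \ref{uerr:lem:rt-kwqfa}.
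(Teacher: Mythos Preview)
Your proposal is correct and follows the paper's own route: the Fact is stated without proof and simply refers back to Lemma \ref{uerr:lem:rt-kwqfa}, whose construction you reproduce faithfully. Your explicit formula $f_{\mathcal{M}}(w)=\tfrac{1}{2}+c\bigl(f_{\mathcal{P}}(w)-\tfrac{1}{2}\bigr)$ is a clean sharpening of the paper's conclusion in that lemma (which only records the two biconditionals for $>$ and $\geq$), and it makes the three-way cutpoint separation immediate; but the underlying argument and construction are identical.
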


\begin{fact}
	\label{fact:KWQFA-GFA} (see also Lemma \ref{qtm:lem:RT-QFA-to-RT-GFA})
	\cite{LQ08,YS09D} For any RT-KWQFA $ \mathcal{M} $ and cutpoint $ \lambda \in [0,1) $, there exists a
	GFA $ \mathcal{G} $ such that $ (\mathcal{M},\lambda) \equiv (\mathcal{G},\lambda) $.
\end{fact}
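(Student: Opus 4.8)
The plan is to linearise the step‑by‑step evolution of the RT-KWQFA by exactly the device used in the proof of Lemma~\ref{qtm:lem:RT-QFA-to-RT-GFA}, but with one extra coordinate that bookkeeps the acceptance mass injected by the intermediate measurements. Recall that for a RT-KWQFA $\mathcal{M}=(Q,\Sigma,\{U_{\sigma\in\tilde{\Sigma}}\},q_{1},Q_{a},Q_{r})$ the non‑halting part is a (non‑normalised) pure state $\ket{u_{j}}=P_{n}U_{\tilde{w}_{j}}\ket{u_{j-1}}$, and at step $j$ the quantity $\|P_{a}U_{\tilde{w}_{j}}\ket{u_{j-1}}\|^{2}$ is added to the running acceptance probability, with $f_{\mathcal{M}}(w)$ equal to the sum of these increments over $1\le j\le|\tilde{w}|$. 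Writing $\rho_{j}=\ket{u_{j}}\bra{u_{j}}$, this is the linear recursion $\rho_{j}=(P_{n}U_{\tilde{w}_{j}})\,\rho_{j-1}\,(P_{n}U_{\tilde{w}_{j}})^{\dagger}$ together with the scalar increment $\Delta p_{j}=\tr\big(P_{a}U_{\tilde{w}_{j}}\rho_{j-1}U_{\tilde{w}_{j}}^{\dagger}\big)$, using $P_{a}^{2}=P_{a}$.

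First I would apply the $vec$ map of Figure~\ref{qtm:fig:vec}. By Equation~\ref{equation:vec-ABC}, $vec(\rho_{j})=\big((P_{n}U_{\tilde{w}_{j}})\otimes(P_{n}U_{\tilde{w}_{j}})^{*}\big)\,vec(\rho_{j-1})$, and since $P_{a}$ is real diagonal, Equation~\ref{equation:vec-AtB} gives $\Delta p_{j}=vec(P_{a})^{\trans}\big(U_{\tilde{w}_{j}}\otimes U_{\tilde{w}_{j}}^{*}\big)\,vec(\rho_{j-1})$. Hence the augmented vector $\big(vec(\rho_{j})\,;\,p_{j}\big)\in\mathbb{C}^{n^{2}+1}$, where $n=|Q|$ and $p_{j}$ is the accumulated acceptance probability, evolves under the block‑triangular matrix whose top‑left block is $(P_{n}U_{\sigma})\otimes(P_{n}U_{\sigma})^{*}$, whose bottom row restricted to the first block is $vec(P_{a})^{\trans}(U_{\sigma}\otimes U_{\sigma}^{*})$, and whose bottom‑right entry is $1$. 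As in Lemma~\ref{qtm:lem:RT-QFA-to-RT-GFA} I would fold $\cent$ into the initial vector and $\dollar$ together with the final readout into the final row vector: set $v_{0}'=\big(vec(P_{n}U_{\cent}\rho_{0}U_{\cent}^{\dagger}P_{n})\,;\,\|P_{a}U_{\cent}\ket{q_{1}}\|^{2}\big)$ with $\rho_{0}=\ket{q_{1}}\bra{q_{1}}$, let $A_{\sigma}'$ be the block‑triangular matrix above for each $\sigma\in\Sigma$, and let $f'=\big(vec(P_{a})^{\trans}(U_{\dollar}\otimes U_{\dollar}^{*})\mid 1\big)$. A short induction on $|w|$ then shows $f'A_{w_{|w|}}'\cdots A_{w_{1}}'v_{0}'=f_{\mathcal{M}}(w)$ for every $w\in\Sigma^{*}$. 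Replacing each complex entry by the standard real $2\times2$ block (exactly as in Lemma~\ref{qtm:lem:RT-QFA-to-RT-GFA}) and then extracting the first coordinate yields a genuine real‑valued GFA $\mathcal{G}$ with $f_{\mathcal{G}}(w)=f_{\mathcal{M}}(w)$ for all $w$.

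Since $f_{\mathcal{G}}$ and $f_{\mathcal{M}}$ coincide identically, we immediately get $(\mathcal{M},\lambda)\equiv(\mathcal{G},\lambda)$ for \emph{every} $\lambda\in\mathbb{R}$, in particular for $\lambda\in[0,1)$, which is the claim; afterwards Fact~\ref{fact:GFA-GFA} even lets us shift the cutpoint freely. I do not expect a deep obstacle here: the only point that needs genuine care is the treatment of the intermediate measurements, namely verifying that the extra coordinate accumulates \emph{exactly} the right acceptance mass step by step (the $1$ in the bottom‑right corner makes $p_{j}$ non‑decreasing and the off‑diagonal row injects $\Delta p_{j}$ at each step), together with the end‑marker bookkeeping folded into $v_{0}'$ and $f'$. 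Everything else is the routine $vec$‑calculus already carried out for RT-QFAs in Lemma~\ref{qtm:lem:RT-QFA-to-RT-GFA}.
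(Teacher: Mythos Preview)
Your argument is correct. The paper does not give its own proof of this statement: it records it as a Fact with citations to \cite{LQ08,YS09D} and a cross-reference to Lemma~\ref{qtm:lem:RT-QFA-to-RT-GFA}, so there is no in-paper proof to compare against line by line.

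That said, your construction is exactly in the spirit of what the cross-reference suggests. Lemma~\ref{qtm:lem:RT-QFA-to-RT-GFA} handles the general RT-QFA case, where a single final measurement suffices and the $vec$ map linearises the density-matrix evolution directly. The RT-KWQFA differs only in that it measures after every step and may halt early, so acceptance mass accumulates along the run rather than being read off at the end. Your addition of one extra coordinate with the block-triangular update (bottom-right entry~$1$, bottom row $vec(P_{a})^{\trans}(U_{\sigma}\otimes U_{\sigma}^{*})$) is precisely the right bookkeeping device for this, and the verification that the accumulator picks up $\Delta p_{j}$ at each step is straightforward. Folding the end-markers into $v_{0}'$ and $f'$ and then passing to real entries via the $2\times 2$ block trick are routine and match what the paper does in Lemma~\ref{qtm:lem:RT-QFA-to-RT-GFA}. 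The conclusion $(\mathcal{M},\lambda)\equiv(\mathcal{G},\lambda)$ for all $\lambda$ follows immediately from $f_{\mathcal{G}}=f_{\mathcal{M}}$, which is actually stronger than the stated Fact requires.
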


\begin{fact}
	\label{fact:GPFA-PFA}
	\cite{Tu69} For any GFA $ \mathcal{G} $ and cutpoint $ \lambda_{1} \in \mathbb{R} $,
	there exist a RT-PFA $ \mathcal{P} $ and a cutpoint $ \lambda_{2} \in (0,1) $ such that
	$ (\mathcal{G},\lambda_{1}) \equiv (\mathcal{P},\lambda_{2}) $.
\end{fact}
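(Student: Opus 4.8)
This is Turakainen's classical theorem \cite{Tu69} that every language recognized by a GFA with a cutpoint is stochastic, in the sharper ``cutpoint separation'' form. The plan is to reproduce Turakainen's normalization, turning $\mathcal{G}$ into a RT-PFA whose acceptance function is a positively (though input-length-dependently) scaled affine image of $f_{\mathcal{G}}$, and then to invoke Fact~\ref{fact:PFA-PFA} to reposition the cutpoint inside $(0,1)$. Write $\mathcal{G}=(Q,\Sigma,\{A_{\sigma\in\Sigma}\},v_{0},f)$ with real $n\times n$ matrices $A_{\sigma}$. A preliminary step eliminates the cutpoint: adjoin one coordinate that is fixed by every transition, put a $1$ in it in the initial vector, and read it off with weight $-\lambda_{1}$ in the final vector; this produces a GFA $\mathcal{G}'$ with $f_{\mathcal{G}'}(w)=f_{\mathcal{G}}(w)-\lambda_{1}$ for all $w$, so that $(\mathcal{G},\lambda_{1})\equiv(\mathcal{G}',0)$ and it suffices to handle $\mathcal{G}'$ ``against $0$''.

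The core is the normalization of $\mathcal{G}'$. Following \cite{Tu69}, adjoin a bounded number of auxiliary coordinates in a block-lower-triangular way -- which never lets the new coordinates feed back into the old ones, hence preserves $f_{\mathcal{G}'}(w)$ exactly -- so as to obtain transition matrices $D_{\sigma}$ all of whose row sums and all of whose column sums equal a common value $t$, and a final vector $f''$ whose entries sum to $0$, still with $f'' D_{w} v_{0}'' = f_{\mathcal{G}'}(w)$. Pick a constant $c>0$ large enough that $E_{\sigma}:=D_{\sigma}+c\,\mathbf{1}\mathbf{1}^{\trans}$ is entrywise nonnegative. The balance conditions force, by induction on $|w|$, that $E_{w}=D_{w}+\gamma_{|w|}\,\mathbf{1}\mathbf{1}^{\trans}$ for suitable scalars $\gamma_{|w|}$, so
\[ f'' E_{w} v_{0}'' = f'' D_{w} v_{0}'' + \gamma_{|w|}\,(f''\mathbf{1})(\mathbf{1}^{\trans}v_{0}'') = f_{\mathcal{G}'}(w), \]
the correction term dying because $f''\mathbf{1}=0$. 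Dividing by the common sum $t':=t+cN>0$ makes the $B_{\sigma}:=E_{\sigma}/t'$ (left-)stochastic, and then $f'' B_{w} v_{0}'' = (t')^{-|w|} f_{\mathcal{G}'}(w)$.

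Finally convert this signed evaluation into a genuine acceptance probability. Rescaling $f''$ by $1/\|f''\|_{\infty}$ (a positive constant, irrelevant for comparison with $0$) so that $|f''_{i}|\le 1$ for all $i$, take the RT-PFA $\mathcal{P}$ with transition matrices $B_{\sigma}$ on the symbols of $\Sigma$, the identity on $\cent$, and an evaluation step on $\dollar$ that from each state $q_{i}$ goes to a dedicated accepting state with probability $\tfrac12+\tfrac12 f''_{i}$ and to a dedicated rejecting state with probability $\tfrac12-\tfrac12 f''_{i}$ (a legitimate left-stochastic step). With $\mu_{w}:=B_{w}v_{0}''$ a probability distribution, $f_{\mathcal{P}}(w)=\sum_{i}(\mu_{w})_{i}(\tfrac12+\tfrac12 f''_{i})=\tfrac12+\tfrac12 f'' B_{w} v_{0}''=\tfrac12+\tfrac12(t')^{-|w|}(f_{\mathcal{G}}(w)-\lambda_{1})$. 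Since $(t')^{-|w|}>0$, $f_{\mathcal{P}}(w)$ is $>\tfrac12$, $=\tfrac12$, or $<\tfrac12$ exactly when $f_{\mathcal{G}}(w)$ is $>\lambda_{1}$, $=\lambda_{1}$, or $<\lambda_{1}$ respectively; that is, $(\mathcal{G},\lambda_{1})\equiv(\mathcal{P},\tfrac12)$. As $\tfrac12\in[0,1)$, Fact~\ref{fact:PFA-PFA} yields, for the prescribed $\lambda_{2}\in(0,1)$, a RT-PFA $\mathcal{P}'$ with $(\mathcal{P},\tfrac12)\equiv(\mathcal{P}',\lambda_{2})$, and transitivity of $\equiv$ gives $(\mathcal{G},\lambda_{1})\equiv(\mathcal{P}',\lambda_{2})$, as required.

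The main obstacle is the balancing step of the second paragraph: one must choose the dimension padding so that, simultaneously, all row sums and all column sums of every $D_{\sigma}$ are one common constant and the entries of $f''$ sum to zero, all while the block-triangular structure keeps $f'' D_{w} v_{0}'' = f_{\mathcal{G}'}(w)$ exact. It is precisely these three bookkeeping constraints that make the perturbation introduced by the nonnegativity shift $c\,\mathbf{1}\mathbf{1}^{\trans}$ both affine in the stored value and independent of the input length -- in fact identically zero after the $f''\mathbf{1}=0$ cancellation -- and checking that they can all be arranged at once is the delicate part; everything afterward (the stochastic normalization, the $\dollar$-gadget, and the final cutpoint shift via Fact~\ref{fact:PFA-PFA}) is routine.
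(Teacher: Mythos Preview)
The paper does not supply its own proof of this statement: it is recorded as a \emph{Fact} with a citation to \cite{Tu69}, so there is nothing in the paper to compare against beyond the reference itself. Your proposal is a faithful sketch of Turakainen's original normalization argument, and the overall strategy is correct.

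Two small remarks. First, your bookkeeping list for the padding step names three constraints (common row/column sums for every $D_{\sigma}$, zero sum for $f''$, and preservation of the value $f''D_{w}v_{0}''$), but you also need $v_{0}''$ to be a probability vector so that $\mu_{w}=B_{w}v_{0}''$ is a distribution and so that a legitimate $\cent$-transition can carry the RT-PFA's deterministic start state $\ket{q_{1}}$ to $v_{0}''$; as written, you set the $\cent$-matrix to the identity yet then compute with $v_{0}''$ instead of $e_{1}$, which is inconsistent. This is easily repaired in the same padding step (Turakainen normalizes the initial vector too), but it belongs on your list of ``delicate'' constraints. Second, the final appeal to Fact~\ref{fact:PFA-PFA} is unnecessary: the statement only asserts the \emph{existence} of some $\lambda_{2}\in(0,1)$, and your construction already delivers $\lambda_{2}=\tfrac12$.
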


\begin{fact}
     \label{fact:MCLpropersubsetS}
     \cite{BC01B} MCL $ \subsetneq $ S$ ^{>} $.
\end{fact}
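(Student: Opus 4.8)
Write $\mbox{S}^{>}$ for the class of stochastic languages, i.e. those recognized with strict cutpoint. The statement has two halves, which I would treat separately: the inclusion $\mbox{MCL}\subseteq\mbox{S}^{>}$, which is essentially free from what is already available, and its strictness, which is the real content.

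For the inclusion: an MCQFA performs a single measurement, after reading $\dollar$, and is in particular a RT-QFA, so $\mbox{MCL}\subseteq\mbox{QAL}$; the corollary $\mbox{QAL}=\mbox{S}$ obtained from Lemma~\ref{qtm:lem:RT-QFA-to-RT-GFA} together with \cite{Tu69} then gives $\mbox{MCL}\subseteq\mbox{S}^{>}$ at once. Concretely, the acceptance value of an MCQFA $\mathcal{M}$ is $f_{\mathcal{M}}(w)=\|P_{a}U_{\dollar}U_{w_{|w|}}\cdots U_{w_{1}}U_{\cent}\ket{q_{1}}\|^{2}$, a bilinear form in the entries of $U_{w}$ and $\overline{U_{w}}$; linearizing through $U_{\sigma}\mapsto U_{\sigma}\otimes\overline{U_{\sigma}}$ and realifying yields a GFA computing the same function, and by Turakainen's theorem GFA acceptance functions witness exactly the stochastic languages.

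For the strictness I would isolate a structural invariant satisfied by every MCQFA acceptance function but not by general GFAs, and then exhibit a stochastic language that violates it. The natural invariant is \emph{isometry}: the linearized transition matrices $U_{\sigma}\otimes\overline{U_{\sigma}}$ are unitary, so passing to the real orthogonal GFA $(\{O_{\sigma}\},v_{0},f)$ with $f_{\mathcal{M}}(w)=f^{\trans}O_{w}v_{0}$, the state $O_{w}v_{0}$ never leaves a fixed sphere and the orbit $\{O_{w}v_{0}\mid w\in\Sigma^{*}\}$ is precompact, governed by the compact group generated by the $O_{\sigma}$. From this I would extract an \emph{almost-periodicity} property of $f_{\mathcal{M}}$: along a sequence obtained by pumping a fixed factor $u$, the values $f_{\mathcal{M}}(x\,u^{k}\,y)$ cannot separate strictly and monotonically across a cutpoint without ever attaining it. It then remains to name a stochastic language that forces exactly this amplifying behaviour from every GFA witness --- a natural candidate being a threshold random-walk language such as $L=\{w\in\{a,b\}^{*}\mid \#_{a}(w)>\#_{b}(w)\}$, recognized by a biased-random-walk RT-PFA --- and to derive the contradiction with almost-periodicity, concluding $L\in\mbox{S}^{>}\setminus\mbox{MCL}$. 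A cleaner, if less explicit, alternative would be to show that MCL is closed under complement: combined with $\mbox{MCL}\subseteq\mbox{S}^{>}$ this gives $\mbox{MCL}\subseteq\mbox{S}^{>}\cap\mbox{coS}$, a proper subclass of $\mbox{S}^{>}$ since $\mbox{S}^{>}$ is not closed under complement; but the $\ge$-to-$>$ cutpoint adjustment needed for that closure inside the MO-1QFA class looks delicate.

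The main obstacle is the strictness step, and within it pinning down precisely which feature of isometric weighted automata fails for the chosen stochastic language and making the almost-periodicity argument rigorous --- essentially a Weyl-type equidistribution statement for the orbit of $v_{0}$ under the generated compact group, together with a choice of pumping factor yielding a clean contradiction with the strict separation defining $L$. This is exactly the analysis carried out in \cite{BC01B}.
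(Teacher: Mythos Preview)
The paper does not supply a proof of this statement: it is recorded as a Fact attributed to \cite{BC01B}, with no argument given in the thesis. So there is no in-paper proof to compare your proposal against.

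That said, your sketch is broadly faithful to the Bertoni--Carpentieri approach. The inclusion $\mbox{MCL}\subseteq\mbox{S}$ is immediate exactly as you say, and the strictness in \cite{BC01B} is indeed obtained via the almost-periodicity of MCQFA acceptance functions (the orbit $\{U_w\ket{q_1}\}$ lies in a compact unitary group, forcing $f_{\mathcal{M}}$ to be uniformly almost-periodic). The separating language actually used is $L_{eq}=\{w:|w|_a=|w|_b\}$, which the thesis elsewhere records as being in $\mbox{S}$ (Fact~\ref{fact:word-problem-one-SL}, since $L_{wp}(1)$ is $L_{eq}$ up to relabeling) but not in MCL; your candidate $\{w:\#_a(w)>\#_b(w)\}$ would require a similar analysis but is not the witness in \cite{BC01B}.

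Your alternative route via complementation closure would fail outright, not merely be delicate: the thesis records in Section~\ref{uerr:closure} that MCL is \emph{not} closed under complementation, again citing \cite{BC01B}. So the obstruction you sensed is real and known.
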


\begin{fact}
	\label{fact:MCLpropersubsetQAL}
    Since MCQFA is a special case of RT-KWQFA, MCL $ \subseteq $ QAL and NMCL $ \subseteq $ NQAL.
\end{fact}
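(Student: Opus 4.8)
The claim is essentially a matter of unwinding definitions, so the plan is to make the embedding of MCQFAs into RT-QFAs explicit. Recall (Section~\ref{qtm:restricted-QTMs}) that an MCQFA $\mathcal{M}=(Q,\Sigma,\{U_{\sigma\in\tilde\Sigma}\},q_{1},Q_{a})$ evolves by unitary matrices and is measured only once, after reading $\dollar$, with the projector $P_{a}=\sum_{q\in Q_{a}}\ket{q}\bra{q}$, so that $f_{\mathcal{M}}(w)=\|P_{a}U_{\tilde{w}}\ket{q_{1}}\|^{2}$ with $U_{\tilde{w}}=U_{\dollar}U_{w_{|w|}}\cdots U_{w_{1}}U_{\cent}$. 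First I would observe that such an $\mathcal{M}$ is literally an instance of the RT-QFA model of Section~\ref{qtm:RT-QFAs}: put $\mathcal{M}'=(Q,\Sigma,\{\mathcal{E}_{\sigma\in\tilde\Sigma}\},q_{1},Q_{a})$ with each $\mathcal{E}_{\sigma}$ the one-element operator $\{U_{\sigma}\}$. Since $U_{\sigma}^{\dagger}U_{\sigma}=I$, each $\mathcal{E}_{\sigma}$ meets the admissibility condition $\sum_{i}E_{\sigma,i}^{\dagger}E_{\sigma,i}=I$, so $\mathcal{M}'$ is a well-formed RT-QFA whose accepting projector coincides with $P_{a}$.

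Next I would trace the computation. From $\rho_{0}=\ket{q_{1}}\bra{q_{1}}$ one gets $\rho_{j}=\mathcal{E}_{\tilde{w}_{j}}(\rho_{j-1})=U_{\tilde{w}_{j}}\rho_{j-1}U_{\tilde{w}_{j}}^{\dagger}$, hence $\rho_{|\tilde{w}|}=\ket{\psi_{w}}\bra{\psi_{w}}$ with $\ket{\psi_{w}}=U_{\tilde{w}}\ket{q_{1}}$, and therefore $f_{\mathcal{M}'}(w)=\mathrm{tr}(P_{a}\rho_{|\tilde{w}|})=\|P_{a}\ket{\psi_{w}}\|^{2}=f_{\mathcal{M}}(w)$ for every $w\in\Sigma^{*}$. (The same conclusion also follows, more circuitously, by composing the identification of MCQFA as a special RT-KWQFA in Section~\ref{qtm:restricted-QTMs} with the obvious embedding of RT-KWQFAs into RT-QFAs; the direct singleton-operator construction is cleaner and settles both inclusions at once.)

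Finally I would read off the two inclusions from the identity $f_{\mathcal{M}'}\equiv f_{\mathcal{M}}$ on $\Sigma^{*}$: any recognition mode defined purely through the acceptance-probability function is preserved by $\mathcal{M}\mapsto\mathcal{M}'$. For $\mbox{MCL}\subseteq\mbox{QAL}$ apply this with a strict cutpoint $\lambda$: a language recognized by the MCQFA $\mathcal{M}$ with cutpoint $\lambda$ is recognized by the RT-QFA $\mathcal{M}'$ with the same cutpoint. For $\mbox{NMCL}\subseteq\mbox{NQAL}$, recall that NMCL is the class recognized by MCNQFAs, i.e.\ MCQFAs in the positive one-sided unbounded-error setting ($L=\{w\mid f_{\mathcal{M}}(w)>0\}$), and NQAL the corresponding class for RT-NQFAs (RT-QFAs in that same setting); since $f_{\mathcal{M}'}(w)>0$ precisely when $f_{\mathcal{M}}(w)>0$, the inclusion follows.

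There is no genuine obstacle here: the one point needing care is verifying that a bare unitary constitutes a legitimate one-element operator in the general RT-QFA formalism, so that no measurement-induced branching is forced on the simulating machine and its terminal projector may be taken to be exactly $P_{a}$; everything else is bookkeeping. As a remark, combining $\mbox{MCL}\subseteq\mbox{QAL}$ with the corollary $\mbox{QAL}=\mbox{S}$ following Lemma~\ref{qtm:lem:RT-QFA-to-RT-GFA} recovers $\mbox{MCL}\subseteq\mbox{S}$, which Fact~\ref{fact:MCLpropersubsetS} sharpens to $\mbox{MCL}\subsetneq\mbox{S}^{>}$.
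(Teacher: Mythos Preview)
Your proposal is correct. The paper gives no separate proof for this Fact; the entire justification is the clause ``since MCQFA is a special case of RT-KWQFA'' in the statement itself, with the further step RT-KWQFA $\subseteq$ RT-QFA left implicit. Your direct embedding of an MCQFA as an RT-QFA with singleton operator sets $\mathcal{E}_{\sigma}=\{U_{\sigma}\}$ is a clean and fully explicit version of the same special-case observation, and you even note the indirect route through RT-KWQFA that the paper's wording suggests.
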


\begin{fact}
     \label{fact:RLpropersubsetSLneqSLeq}
     \cite{Pa71} REG is a proper subset of both S$ ^{\neq} $ and S$ ^{=} $.
\end{fact}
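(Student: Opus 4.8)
The plan is to prove the two inclusions REG~$\subseteq$~S$^{=}$ and REG~$\subseteq$~S$^{\neq}$ and then to exhibit a single nonregular language witnessing the strictness of both. Since S$^{\neq}$ is by definition the complement class of S$^{=}$ and REG is closed under complementation, it suffices to establish REG~$\subseteq$~S$^{=}$ together with one nonregular member of S$^{=}$.

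For the inclusion, let $L$ be regular, recognized by a DFA $D$. I would build an RT-PFA $\mathcal{P}$ which, upon reading $\cent$, branches with probability $\frac{1}{2}$ into a faithful deterministic simulation of $D$ (with the final states of $D$ designated as the accepting states of $\mathcal{P}$) and with probability $\frac{1}{2}$ into a nonaccepting absorbing ``dead'' state. Then $f_{\mathcal{P}}(w)=\frac{1}{2}$ when $w\in L$ and $f_{\mathcal{P}}(w)=0$ otherwise, so $L=\{w\mid f_{\mathcal{P}}(w)=\frac{1}{2}\}$; since an RT-PFA is a special case of a one-way PFA, this places $L$ in S$^{=}$ (equivalently, in one-way-C$_{=}$SPACE$(1)$). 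Applying the same construction to $\overline{L}$, which is again regular, yields $\overline{L}\in$~S$^{=}$, i.e. $L\in$~S$^{\neq}$; hence REG~$\subseteq$~S$^{\neq}$ as well.

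For strictness, take the language $L_{eq}=\{w\in\{a,b\}^{*}\mid \#_{a}(w)=\#_{b}(w)\}$, which is nonregular because $L_{eq}\cap a^{*}b^{*}=\{a^{n}b^{n}\mid n\ge0\}$. Consider the two-state GFA $\mathcal{G}$ over $\{a,b\}$ with initial column vector $(0,1)^{\trans}$, final row vector $(1,0)$, and transition matrices $A_{a}=\left(\begin{smallmatrix}1&1\\0&1\end{smallmatrix}\right)$ and $A_{b}=\left(\begin{smallmatrix}1&-1\\0&1\end{smallmatrix}\right)$. These unipotent upper-triangular matrices commute, so for every $w$ the product $A_{w}$ is the unipotent matrix whose $(1,2)$-entry equals $\#_{a}(w)-\#_{b}(w)$; hence $f_{\mathcal{G}}(w)=\#_{a}(w)-\#_{b}(w)$ and $\{w\mid f_{\mathcal{G}}(w)=0\}=L_{eq}$. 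By Fact~\ref{fact:GPFA-PFA} there are an RT-PFA $\mathcal{P}$ and a cutpoint $\lambda\in(0,1)$ with $(\mathcal{G},0)\equiv(\mathcal{P},\lambda)$, and by Fact~\ref{fact:PFA-PFA} there is an RT-PFA $\mathcal{P}'$ with $(\mathcal{P},\lambda)\equiv(\mathcal{P}',\frac{1}{2})$. Because equivalence under cutpoint separation matches the sets $\{f<\lambda\}$, $\{f=\lambda\}$ and $\{f>\lambda\}$ simultaneously, we obtain $\{w\mid f_{\mathcal{P}'}(w)=\frac{1}{2}\}=L_{eq}$, so $L_{eq}$ lies in S$^{=}$ but not in REG, and therefore $\overline{L_{eq}}$ lies in S$^{\neq}$ but not in REG. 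This makes both inclusions proper.

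The constructions are elementary; the only points I would spell out with care are the handling of the end-markers $\cent$ and $\dollar$ in the simulation of $D$, and --- the real heart of the argument --- the fact that the Turakainen-style transformations of Facts~\ref{fact:GPFA-PFA} and~\ref{fact:PFA-PFA} preserve the equality language $\{w\mid f(w)=\lambda\}$ rather than merely the strict-cutpoint language, which is precisely what equivalence under cutpoint separation guarantees. Beyond this bookkeeping, no step presents a genuine obstacle.
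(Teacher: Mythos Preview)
The paper does not prove this statement; it is recorded as a Fact with a bare citation to Paz~\cite{Pa71}. Your argument is correct and essentially reconstructs the classical proof: the inclusion via a halved DFA simulation is standard, and your GFA for $L_{eq}$ (computing $\#_{a}(w)-\#_{b}(w)$ via commuting unipotent matrices) followed by the Turakainen conversions of Facts~\ref{fact:GPFA-PFA} and~\ref{fact:PFA-PFA} is exactly the traditional route, with the key observation---which you rightly emphasize---that equivalence under cutpoint separation preserves the equality set $\{w\mid f(w)=\lambda\}$ and not merely the strict-cutpoint language.
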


\begin{fact}
	\label{fact:SLneq-propersubsetSL}
	\cite{Pa71} S$ ^{\neq} $ $ \subsetneq $ S and
	S$ \setminus $S$ ^{=} \neq \emptyset $.
\end{fact}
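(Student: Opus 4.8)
The plan is to handle the two assertions $S^{\neq}\subsetneq S$ and $S\setminus S^{=}\neq\emptyset$ separately, the common engine being Turakainen's theorem (Fact~\ref{fact:GPFA-PFA} and \cite{Tu69}) that a language is stochastic exactly when it is separated at some real cutpoint by the acceptance value of a generalized finite automaton, together with the cutpoint--normalization Facts~\ref{fact:GFA-GFA} and \ref{fact:PFA-PFA}. I write the acceptance value of a GFA as $g(w)=fA_{w}v_{0}$.

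First, $S^{\neq}\subseteq S$. If $L\in S^{\neq}$ then $\overline{L}\in S^{=}$, so $\overline{L}=\{w\mid g(w)=0\}$ for some GFA-computable $g$ (a RT-PFA acceptance value shifted by a constant), i.e. $L=\{w\mid g(w)\neq 0\}$. The point is that the pointwise square $g^{2}(w)=(f\otimes f)(A_{w}\otimes A_{w})(v_{0}\otimes v_{0})$ is again realized by a GFA, namely the one whose transition matrices are the Kronecker squares $A_{\sigma}\otimes A_{\sigma}$. Since $g^{2}(w)>0$ precisely when $w\in L$, Fact~\ref{fact:GPFA-PFA} gives $L\in S$.

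For the properness of this inclusion I would observe that $S^{\neq}$ is closed under finite union and finite intersection: for GFA-computable $g_{1},g_{2}$ one has $\{g_{1}\neq 0\}\cap\{g_{2}\neq 0\}=\{g_{1}g_{2}\neq 0\}$ and $\{g_{1}\neq 0\}\cup\{g_{2}\neq 0\}=\{g_{1}^{2}+g_{2}^{2}\neq 0\}$, and products of GFA values (via Kronecker products of the transition matrices) and sums of GFA values (via direct sums) are themselves GFA-computable, so by Facts~\ref{fact:GFA-GFA} and \ref{fact:GPFA-PFA} the resulting languages lie in $S^{\neq}$ again. Hence $S^{\neq}$ is closed under union, whereas it is classical that $S$ is not (\cite{Pa71}); combined with $S^{\neq}\subseteq S$ this forces $S^{\neq}\subsetneq S$. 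For $S\setminus S^{=}\neq\emptyset$ I would first note $S^{=}\subseteq \mathrm{coS}$, because $\{w\mid g(w)=0\}=\{w\mid -g^{2}(w)\ge 0\}$ is the complement of $\{w\mid g^{2}(w)>0\}\in S$; then, invoking the classical fact that $S\neq\mathrm{coS}$ (\cite{Pa71,Tu69}), I pick a stochastic $L$ with $\overline{L}\notin S$, so that $L\notin\mathrm{coS}$ and hence $L\notin S^{=}$, and $L$ is the required witness.

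The conceptual content is light; the care needed is bookkeeping. One must check (i) that the cutpoint $\frac{1}{2}$ built into the definition of $S^{=}$ may be traded for an arbitrary GFA cutpoint without disturbing the ``$=$'' and ``$\neq$'' sets --- which is exactly what Facts~\ref{fact:GFA-GFA}, \ref{fact:PFA-PFA} and \ref{fact:GPFA-PFA} supply --- and (ii) that the Kronecker-square and direct-sum constructions genuinely compute $g^{2}$, $g_{1}g_{2}$ and $g_{1}^{2}+g_{2}^{2}$ with the expected state counts. The only genuinely external ingredients are the two classical facts that $S$ is not union-closed and that $S\neq\mathrm{coS}$, both going back to Turakainen and Paz; I expect the non-closure of $S$ under union to be the point one would most want to double-check, since everything else reduces to routine linear algebra over GFAs.
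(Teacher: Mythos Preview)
The paper does not prove this statement; it is cited as a known fact from Paz \cite{Pa71}. So there is no paper's proof to compare against, but your argument can still be assessed on its own terms.

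Your proof that $S^{\neq}\subseteq S$ via the Kronecker square $g^{2}$ is correct and standard. The properness argument via closure under union is also valid in principle, though note that the non-closure of $S$ under union is due to Flajolet and Lacombe (1972--74), not Paz (1971); the paper itself attributes it to \cite{Fl72,Fl74,La74} in Fact~\ref{fact:closure-properties-SL}. So this cannot be how Paz obtained the result, but it is a legitimate proof today.

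Your argument for $S\setminus S^{=}\neq\emptyset$, however, has a genuine gap. You invoke ``the classical fact that $S\neq\mathrm{coS}$,'' but this is \emph{not} a fact: whether $S$ is closed under complementation is an open problem, explicitly listed as such in this very paper (Open Problem after Theorem~\ref{theorem:close-difference-with-regular-language}, citing page~158 of \cite{Pa71}). You therefore cannot produce a stochastic $L$ with $\overline{L}\notin S$ by appeal to the literature; doing so would resolve a long-standing question.

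The standard route is to exhibit an explicit witness. For instance, $L_{neq}=\{w\in\{a,b\}^{*}\mid |w|_{a}\neq|w|_{b}\}$ lies in $S^{\neq}\subseteq S$, while $L_{neq}\notin S^{=}$ follows from the pumping-type condition of Fact~\ref{fact:dieu} (take $u=a^{n}$, $y=b$, $v=b$); the paper uses exactly this style of argument in Section~\ref{uerr:two-sided}. The same example also witnesses $S^{\neq}\subsetneq S$ once one checks $L_{eq}\notin S^{\neq}$, which is again a direct application of Fact~\ref{fact:dieu}.
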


\begin{fact}
	\label{fact:RLpropersubsetNQAL}
	\cite{BP02,NIHK02}
	REG is a proper subset of NQAL.
\end{fact}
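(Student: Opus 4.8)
The plan is to prove the two halves of the statement separately: first $\mathrm{REG} \subseteq \mathrm{NQAL}$, then the properness of the inclusion. For $\mathrm{REG}\subseteq \mathrm{NQAL}$, let $L$ be regular and fix an RT-DFA for it. By the finite-automaton version of Lemma~\ref{qtm:lem:pm-simulated-by-qm} (the exact-simulation result, which the text notes extends from PTMs to FAs, CAs, and PDAs), there is an RT-QFA $\mathcal{M}$ with $f_{\mathcal{M}}(w)=1$ for $w\in L$ and $f_{\mathcal{M}}(w)=0$ for $w\notin L$. Reading $\mathcal{M}$ under the positive one-sided unbounded-error convention that defines nondeterministic quantum machines, we get $\{w\mid f_{\mathcal{M}}(w)>0\}=L$, so $L\in\mathrm{NQAL}$.

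For properness, the plan is to exhibit a single nonregular language in $\mathrm{NMCL}$, which lies in $\mathrm{NQAL}$ by Fact~\ref{fact:MCLpropersubsetQAL}. Take $\Sigma=\{a,b\}$ and $L_{\neq}=\{w\in\Sigma^{*}\mid \#_{a}(w)\neq \#_{b}(w)\}$; this is nonregular, since otherwise its complement $\{w\mid \#_{a}(w)=\#_{b}(w)\}$ would be regular, yet intersecting the latter with $a^{*}b^{*}$ yields $\{a^{n}b^{n}\mid n\geq 0\}$. I would build a two-state MCQFA with start state $q_{1}$, unique accepting state $q_{2}$, $U_{\cent}=U_{\dollar}=I$, and $U_{a}=R_{\theta}$ the planar rotation by $\theta=\arccos(3/5)$ (so the matrix has rational entries $3/5,\pm 4/5$), with $U_{b}=R_{\theta}^{-1}=R_{-\theta}$. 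Since $U_{a}$ and $U_{b}$ commute, $U_{\tilde{w}}=R_{\theta}^{\,\#_{a}(w)-\#_{b}(w)}$, hence $f_{\mathcal{M}}(w)=\bigl|\bra{q_{2}}R_{\theta}^{\,\#_{a}(w)-\#_{b}(w)}\ket{q_{1}}\bigr|^{2}=\sin^{2}\bigl((\#_{a}(w)-\#_{b}(w))\theta\bigr)$, which vanishes exactly when $(\#_{a}(w)-\#_{b}(w))\theta$ is an integer multiple of $\pi$. Viewing this machine as an MCNQFA (cutpoint $0$) then makes it recognize $L_{\neq}$, giving a nonregular language in $\mathrm{NMCL}\subseteq\mathrm{NQAL}$, so $\mathrm{REG}\subsetneq\mathrm{NQAL}$.

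The one nontrivial point — and hence the main obstacle — is showing $\theta=\arccos(3/5)$ is not a rational multiple of $\pi$, which is precisely Niven's theorem (the only rationals $q$ with $\cos(q\pi)$ rational are those with $\cos(q\pi)\in\{0,\pm 1/2,\pm 1\}$). This irrationality is exactly what guarantees $\sin^{2}((\#_{a}(w)-\#_{b}(w))\theta)=0$ iff $\#_{a}(w)=\#_{b}(w)$, since $k\theta\in\pi\mathbb{Z}$ with $k\in\mathbb{Z}$ forces $k=0$. Everything else is routine bookkeeping: the chosen amplitudes are rational (hence computable and algebraic), $U_{\cent}=U_{\dollar}=I$ makes the end-markers inert, the commutativity of $U_{a},U_{b}$ collapses $U_{\tilde{w}}$ to a single power, and the nonregularity of $L_{\neq}$ was checked above. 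One could instead simply invoke \cite{BP02,NIHK02}, whose arguments are of essentially this ``MCQFA with an irrational rotation'' form, but the self-contained construction above is just as short.
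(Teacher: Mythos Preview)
The paper does not prove this statement; it simply records it as a cited fact from \cite{BP02,NIHK02}. Your self-contained argument is correct and is essentially the standard one underlying those references: the inclusion $\mathrm{REG}\subseteq\mathrm{NQAL}$ is immediate from exact simulation, and the properness comes from the nonregular language $L_{neq}$ being in $\mathrm{NMCL}\subseteq\mathrm{NQAL}$ via a two-state rotation MCQFA with an angle that is an irrational multiple of $\pi$. The paper itself alludes to exactly this example elsewhere (it cites \cite{BC01B} for $L_{neq}\in\mathrm{NMCL}$ in Section~\ref{uerr:two-sided}, and Theorem~\ref{theorem:L_m} uses the same rotation idea), so your route is fully aligned with the surrounding material.
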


\subsection{Languages Recognized with One-Sided Error} \label{uerr:non-languages}

RT-PFAs (and 1PFAs) can recognize all and only the regular languages with cutpoint $ 0 $ \cite{Ma93}.
RT-KWQFAs (and so 1QFAs) can do more than that, as be characterized in this section.

We start the presentation of our main result by stating a fact which
is useful in several proofs in this section.

\begin{lemma}
	\label{lemma:Sneq-GPFAonesidedcutpoint0}
	For any language $ L $, $ L \in $ S$ ^{\neq} $ if and only if there exists a GFA that recognizes $ L $
	with one-sided cutpoint $ 0 $.
\end{lemma}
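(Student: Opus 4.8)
\emph{Plan.} The whole statement will fall out of two observations. First, by the classical Turakainen/Paz machinery, membership $L\in \mathrm{S}^{\neq}$ is equivalent to the existence of a GFA $\mathcal{G}$ whose acceptance value is nonzero exactly on $L$, i.e.\ $L=\{\,w\mid f_{\mathcal{G}}(w)\neq 0\,\}$. Second, the ``exclusive'' condition $f_{\mathcal{G}}(w)\neq 0$ can be upgraded to the one-sided condition ``$f_{\mathcal{G}'}(w)>0$ for $w\in L$ and $f_{\mathcal{G}'}(w)=0$ for $w\notin L$'' simply by squaring the acceptance value, which a GFA can realize by tensoring a copy of $\mathcal{G}$ with itself (using $f_{\mathcal{G}_1\otimes\mathcal{G}_2}(w)=(f_1\otimes f_2)(A^1_w\otimes A^2_w)(v^1_0\otimes v^2_0)=f_{\mathcal{G}_1}(w)\,f_{\mathcal{G}_2}(w)$). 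So the ``$>0$'' in a one-sided cutpoint is essentially free once one has an exclusive-cutpoint GFA.

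\emph{The direction ($\Leftarrow$).} Suppose a GFA $\mathcal{G}'$ recognizes $L$ with one-sided cutpoint $0$; by the definition of one-sided cutpoint this means $f_{\mathcal{G}'}(w)>0$ for $w\in L$ and $f_{\mathcal{G}'}(w)=0$ for $w\notin L$, so in particular $L=\{\,w\mid f_{\mathcal{G}'}(w)\neq 0\,\}$, i.e.\ $(\mathcal{G}',0)$ separates $L$ with exclusive cutpoint $0$. First I would apply Fact~\ref{fact:GFA-GFA} to obtain an equivalent GFA with cutpoint $\frac12$, then Fact~\ref{fact:GPFA-PFA} to pass to a RT-PFA with some cutpoint $\lambda_2\in(0,1)$, and then Fact~\ref{fact:PFA-PFA} to normalize that cutpoint to $\frac12$; each of these preserves the partition of $\Sigma^{*}$ into the strict-below, equal, and strict-above sets, hence preserves the exclusive language. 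The result is a RT-PFA $\mathcal{P}$ with $\overline{L}=\{\,w\mid f_{\mathcal{P}}(w)=\tfrac12\,\}$, and since a RT-PFA halts after finitely many steps on every input, $\overline{L}\in\mathrm{S}^{=}$, i.e.\ $L\in\mathrm{S}^{\neq}$.

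\emph{The direction ($\Rightarrow$).} Let $L\in\mathrm{S}^{\neq}$, so $\overline{L}\in\mathrm{S}^{=}$ and there is a RT-PFA $\mathcal{P}$ with $L=\{\,w\mid f_{\mathcal{P}}(w)\neq\tfrac12\,\}$. Since the state vector of a RT-PFA is stochastic at every step, I would absorb the end-markers into the initial and final vectors and replace the final row vector summing the accepting states by that vector minus $\tfrac12$ times the all-ones row vector, obtaining a GFA $\mathcal{G}$ with $f_{\mathcal{G}}(w)=f_{\mathcal{P}}(w)-\tfrac12$ for all $w$, so $L=\{\,w\mid f_{\mathcal{G}}(w)\neq 0\,\}$. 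Then I form $\mathcal{G}'=\mathcal{G}\otimes\mathcal{G}$, whose matrices and initial/final vectors are the tensor squares of those of $\mathcal{G}$; by the tensor identity above, $f_{\mathcal{G}'}(w)=f_{\mathcal{G}}(w)^{2}\ge 0$ for every $w$. Hence $f_{\mathcal{G}'}(w)>0$ iff $f_{\mathcal{G}}(w)\neq 0$ iff $w\in L$, and $f_{\mathcal{G}'}(w)=0$ for all $w\notin L$, which is exactly the statement that $\mathcal{G}'$ recognizes $L$ with one-sided cutpoint $0$.

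\emph{Expected difficulty.} There is no real obstacle; the two delicate points are purely bookkeeping: (i) that the class $\mathrm{S}^{\neq}$ (defined via RT-PFAs with exclusive cutpoint $\tfrac12$) coincides with the class of exclusive-cutpoint GFA languages for an arbitrary real cutpoint, which is handled by chaining Facts~\ref{fact:GFA-GFA}, \ref{fact:GPFA-PFA}, \ref{fact:PFA-PFA}; and (ii) checking that a GFA can compute $f_{\mathcal{P}}(w)-\tfrac12$ and the square of its acceptance value, both of which are immediate from stochasticity of the intermediate vectors and from the multiplicativity of GFA values under tensor product.
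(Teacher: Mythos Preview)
Your proof is correct and follows essentially the same route as the paper. For the reverse direction the paper simply invokes Fact~\ref{fact:GPFA-PFA} (Turakainen's GFA-to-PFA conversion preserving cutpoint separation), which is the core of your argument as well; your additional invocations of Facts~\ref{fact:GFA-GFA} and~\ref{fact:PFA-PFA} to normalize the cutpoint to $\tfrac12$ are fine but not strictly needed. For the forward direction the paper merely cites page~171 of Paz~\cite{Pa71}, and your explicit ``shift by $-\tfrac12$ using stochasticity, then tensor-square to force nonnegativity'' is exactly the standard construction behind that citation.
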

\begin{proof}
	The forward direction is proven on page 171 of \cite{Pa71}. In the reverse direction, 
	if a GFA recognizes $ L $ with one-sided cutpoint $ 0 $, 
	then  $ L \in $ S$^{\neq} $ by Fact \ref{fact:GPFA-PFA}.
\end{proof}

\begin{lemma}
   \label{lemma:S-QL_0}
   S$ ^{\neq} \subseteq $ NQAL.
\end{lemma}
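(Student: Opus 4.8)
The plan is to take an arbitrary language $L \in \mathrm{S}^{\neq}$ and produce a RT-NQFA recognizing it, i.e.\ a RT-QFA recognizing $L$ with positive one-sided unbounded error (so that $f_{\mathcal{M}}(w) > 0$ exactly when $w \in L$). By Lemma~\ref{lemma:Sneq-GPFAonesidedcutpoint0}, there is a GFA $\mathcal{G} = (Q, \Sigma, \{A_{\sigma \in \Sigma}\}, v_0, f)$ recognizing $L$ with one-sided cutpoint $0$; that is, $f_{\mathcal{G}}(w) = 0$ for $w \notin L$ and $f_{\mathcal{G}}(w) \neq 0$ for $w \in L$. The key idea is that a quantum machine, by squaring amplitudes, naturally computes nonnegative quantities: if I can arrange a RT-QFA whose acceptance probability is proportional to $(f_{\mathcal{G}}(w))^2$ (times a positive scaling factor that never vanishes), then the acceptance probability is positive precisely when $f_{\mathcal{G}}(w) \neq 0$, which is exactly membership in $L$.

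First I would normalize the GFA: rescale the matrices $A_\sigma$, the vector $v_0$, and the functional $f$ by positive constants so that the column norms of every $A_\sigma$ (and the norm of $v_0$, and the norm of $f$ as a row vector) are at most $1$; this only multiplies $f_{\mathcal{G}}(w)$ by a fixed positive constant and does not change which inputs give value zero. Next, using the general template of Figure~\ref{uerr:fig:general-template-1} (as in the proof of Lemma~\ref{uerr:lem:rt-kwqfa}), I would embed each contractive matrix $A_\sigma$ into a larger unitary $U_\sigma$ (equivalently, a valid quantum operator $\mathcal{E}_\sigma$), padding with ``garbage'' rows whose amplitudes I will route into a rejecting sink. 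The point is that after reading $\cent w$, the internal (non-garbage) part of the state vector is $(1/l)^{|\cent w|} A_w v_0$ up to the embedding. Then on $\dollar$ I apply a transition that collapses this vector onto a single designated register/state by applying the (embedded, contractive) row functional $f$: the amplitude deposited in the unique accepting configuration is proportional to $f A_w v_0 = f_{\mathcal{G}}(w)$, so the accepting probability is proportional to $(f_{\mathcal{G}}(w))^2$, with a strictly positive proportionality constant $(1/l)^{2|\tilde w|}$ that depends only on $|w|$. All the garbage amplitude is sent to rejecting configurations, which is harmless for one-sided unbounded-error recognition since we only care whether the acceptance probability is zero or positive.

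The main obstacle — and where care is needed — is that $f_{\mathcal{G}}(w)$ can be negative while quantum acceptance probabilities come from $\lvert\cdot\rvert^2$, so I must ensure no spurious cancellation: I need the final amplitude landing in accepting configurations to be a single number proportional to $f_{\mathcal{G}}(w)$ rather than a sum of several amplitudes that could cancel to zero even when $f_{\mathcal{G}}(w)\neq 0$. This is handled by performing the $f$-contraction ``in one shot'' on $\dollar$ into one accepting state (so the accept amplitude is exactly $c\cdot f_{\mathcal{G}}(w)$ for a nonzero constant $c$), and by making sure that the unitary-completion garbage is directed only to rejecting states. A second, minor technical point is the transition-amplitude class: since the GFA may have real (even irrational) entries, the resulting RT-QFA has real amplitudes, which is fine as QAL/NQAL are defined over $\mathbb{R}$ (indeed $\mathbb{C}$) by default. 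I would then conclude: $f_{\mathcal{M}}(w) = (1/l)^{2|\tilde w|}(f_{\mathcal{G}}(w))^2 > 0 \iff f_{\mathcal{G}}(w) \neq 0 \iff w \in L$, so $L \in \mathrm{NQAL}$, completing the proof. Alternatively, and perhaps more cleanly, I could cite Lemma~\ref{qtm:lem:RT-QFA-to-RT-GFA} in reverse spirit together with Fact~\ref{fact:PFA-KWQFA}: realize $f_{\mathcal{G}}$ as (a positive multiple of) the acceptance value of a GFA, then invoke the known embedding of contractive linear maps into RT-QFAs; but the direct construction above is self-contained given the template already in the paper.
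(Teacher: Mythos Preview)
Your proposal is correct and takes essentially the same approach as the paper: both arguments embed a real linear machine whose output value vanishes exactly on $\overline{L}$ into a realtime quantum automaton via the template of Figure~\ref{uerr:fig:general-template-1}, routing all completion ``garbage'' to rejecting states so that the unique accepting amplitude is proportional to that value. The only cosmetic difference is that the paper starts from a RT-PFA with exclusive cutpoint $\tfrac{1}{2}$ and explicitly adds three states to place $\tfrac{2f_{\mathcal{P}}(w)-1}{4}$ in one coordinate, whereas you factor this step through Lemma~\ref{lemma:Sneq-GPFAonesidedcutpoint0} to obtain a GFA with one-sided cutpoint $0$ directly.
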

\begin{proof}
	If $ L \in S^{\neq}$, then there exists an $ n $-state  RT-PFA
	$ \mathcal{P} = (Q,\Sigma,\{A_{\sigma \in \tilde{\Sigma}} \},q_{1},Q_{a}) $ such that
	$ w \in L \leftrightarrow f_{\mathcal{P}}(w) \neq \frac{1}{2} $.
	We define $ Q' $, $ v_{0}' $, and 
	$ \{ A_{\sigma \in \Gamma}' \} $ as follows:
	\begin{enumerate}
		\item $ Q' = Q \cup \{ q_{n+1}, q_{n+2}, q_{n+3} \} $;
		\item $ v_{0}'=(1,0,\ldots,0) $ is a $ (n+3) \times 1 $-dimensional column vector;
		\item Each $ A_{\sigma}' $ is a $ (n+3) \times (n+3) $-dimensional matrix:
			\begin{equation*}
				A_{\cent}'= \left(
				\begin{array}{c|ccc}
					& 1 & \cdots & 1 \\
					\frac{1}{2}A_{\cent}[c_{1}] & 0 & \cdots & 0  \\
					& \vdots & \ddots & \vdots \\
					& 0 & \cdots & 0 \\
					\hline
					0 & 0 & \cdots & 0 \\
					0 & 0 & \cdots & 0 \\
					\frac{1}{2} & 0 & \cdots & 0 \\
				\end{array}
				\right), ~~~~
				A_{\sigma \in \Sigma}'= \left(
				\begin{array}{c|ccc}
				\\
					A_{\sigma} &  & 0_{n \times 3} &  \\\\
					\hline
					& 1 & 0 & 0 \\
					0_{3 \times n}   & 0 & 1 & 0 \\
					& 0 & 0 & 1 \\
				\end{array}
				\right),				
			\end{equation*}
			\begin{equation*}
				A_{\dollar}'=				
				\left(
				\begin{array}{ccc|ccr}
					&&&&& \\
					& 0_{n \times n} & & & 0_{n \times 3} \\
					&&&&& \\
					\hline
					t_{1,1} & \cdots & t_{n,1} & ~1 & 0 & \mbox{-}\frac{1}{2} \\
					t_{1,2} & \cdots & t_{n,2} & ~0 & 1 & \frac{1}{2} \\
					0 & \cdots & 0 & 0 & 0 & 0 \\
				\end{array}
				\right)
				\left(
				\begin{array}{c|ccc}
					\\
					A_{\dollar} &  & 0_{n \times 3} &  \\\\
					\hline
					& 1 & 0 & 0 \\
					0_{3 \times n}   & 0 & 1 & 0 \\
					& 0 & 0 & 1 \\
				\end{array}
				\right),
			\end{equation*}
		where $ A_{\cent}[c_{1}] $ is the first column of $ A_{\cent} $;
		$ t_{i,1}=1 $ and $ t_{i,2}=0 $ when $ q_{i} \in Q_{a} $, and
		$ t_{i,1}=0 $ and $ t_{i,2}=1 $ when $ q_{i} \notin Q_{a} $
		for $ 1 \le i \le n $.
	\end{enumerate}
	For a given input $ w \in \Sigma^{*} $, let	
	$ v_{|\tilde{w}|}' = A_{\dollar}' 	A_{w_{|w|}}' \cdots
	A_{w_{|1|}}' A_{\cent}'  v_{0}' $.
	It is easily verified that this computation ``imitates''
	the processing of $w$ by $\mathcal{P}$; the first $n$ entries of
	the manipulated vector $v'$  contain exactly the state vector of $\mathcal{P}$
	(multiplied by $ \frac{1}{2} $) in the corresponding steps of its execution. 
	The last matrix multiplication results in
	\begin{equation}
		v_{|\tilde{w}|}'=\left(0_{n \times 1} ,
		\frac{2f_{\mathcal{P}}(w)-1}{4},\frac{3-2f_{\mathcal{P}}(w)}{4},0 \right).
	\end{equation}
	The $ (n+1)^{st} $ entry of $ v_{|\tilde{w}|}' $ equals $ 0 $ if and only if $ w \notin L $.

	Using a modified version of the RT-PFA simulation method described in Section 
	\ref{uerr:KWQFA-languages},
	we can construct a RT-KWQFA $ \mathcal{M}=(R, \Sigma, \{U_{\sigma \in \Gamma} \},r_{1},
	R_{a}, R_{r} ) $ recognizing $ L $ with cutpoint 0.
	For each $ \sigma \in \tilde{\Sigma} $, $ U_{\sigma} $ is built as follows:
	\begin{equation}
		U_{\sigma}=\left( 
			\begin{array}{c|c}				
				\begin{array}{c}
					\frac{1}{l}A_{\sigma}'
					\\ \hline 
					\frac{1}{l}B_{\sigma}
				\end{array}
				 &
				D_{\sigma}
			\end{array}
		\right),
	\end{equation}
	where the constant $ l $ and the set $ \{B_{\sigma} \mid \sigma \in \tilde{\Sigma} \} $ 
	are obtained from $ \{A'_{\sigma} \mid \sigma \in \tilde{\Sigma} \} $ 
	according to the template described in Figure \ref{uerr:fig:general-template-1}.
	
	The state set $ R = R_{n} \cup R_{a} \cup R_{r}  $ is specified as:
	\begin{enumerate}
		\item $ r_{n+1} \in R_{a} $ corresponds to state $ q_{n+1} $;
		\item $ r_{n+2} \in R_{r} $ corresponds to state $ q_{n+2} $;
		\item $ \{r_{1},\ldots,r_{n},r_{n+3}\} \in R_{n} $ correspond to the remaining states of
			$ Q' $, where $ r_{1} $ is the start state;
		\item All the new states that are defined during the construction of
			$ \{U_{\sigma \in \Gamma}\} $ are rejecting ones.
	\end{enumerate}
	$ \mathcal{M} $ simulates the computation of $ \mathcal{P} $ for a given input string $ w \in \Sigma^{*} $ by
	representing the probability of each state $ q_{j}$ by the amplitude of the corresponding state $ r_{j}$.
	The transitions from the $ 2n+6 $ states added
	during the construction of $ U_{\sigma \in \Gamma } $ for ensuring unitarity do not interfere
	with this simulation, since the computation halts immediately on the
	``branches'' where these states are entered.
	Therefore, the top $ n+3$ entries of the state vector of $ \mathcal{M} $ equal
	\begin{equation}
		\left( \frac{1}{l} \right)^{|\tilde{w}|}
		\left(0_{n \times 1} , \frac{2f_{\mathcal{P}}(w)-1}{4},
		\frac{3-2f_{\mathcal{P}}(w)}{4},0 \right)^{\mathtt{T}}
	\end{equation}
just before the last measurement on $ \dollar $.
Since the amplitude of the only accepting state is nonzero if and only
if $ w \in L $, $ L $ is recognized by $ \mathcal{M} $ with cutpoint $ 0 $.
\end{proof}

\begin{lemma}
	NQAL$ \subseteq  $ S$ ^{\neq} $.
\end{lemma}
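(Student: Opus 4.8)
The plan is to chain together the linearization of RT-QFAs given in Lemma~\ref{qtm:lem:RT-QFA-to-RT-GFA} with the GFA characterization of $ S^{\neq} $ provided by Lemma~\ref{lemma:Sneq-GPFAonesidedcutpoint0}. Suppose $ L \in \mathrm{NQAL} $. By definition of a nondeterministic quantum machine (positive one-sided unbounded error), there is a RT-QFA $ \mathcal{M} $ such that $ L = \{ w \in \Sigma^{*} \mid f_{\mathcal{M}}(w) > 0 \} $. Recall that $ f_{\mathcal{M}}(w) $ is the acceptance probability $ tr(P_{a}\rho_{|\tilde{w}|}) $ of a well-formed quantum machine, hence $ f_{\mathcal{M}}(w) \in [0,1] $ for every input $ w $; this nonnegativity is the only structural fact we really need.

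First I would apply Lemma~\ref{qtm:lem:RT-QFA-to-RT-GFA} to $ \mathcal{M} $ to obtain a GFA $ \mathcal{G} $ with $ f_{\mathcal{G}}(w) = f_{\mathcal{M}}(w) $ for all $ w \in \Sigma^{*} $. In particular $ f_{\mathcal{G}}(w) \ge 0 $ for all $ w $, so the condition $ f_{\mathcal{M}}(w) > 0 $ is equivalent to $ f_{\mathcal{G}}(w) \ne 0 $. Thus $ \mathcal{G} $ recognizes $ L $ with one-sided cutpoint $ 0 $: every $ w \in L $ gets a strictly positive value, while every $ w \notin L $ gets value exactly $ 0 = \lambda $. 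Then I would invoke Lemma~\ref{lemma:Sneq-GPFAonesidedcutpoint0}, which says that a language belongs to $ S^{\neq} $ if and only if it is recognized by some GFA with one-sided cutpoint $ 0 $; this immediately gives $ L \in S^{\neq} $.

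I do not expect any real obstacle here — the argument is a two-line composition of results already established. The one point that deserves a sentence of care is the passage from ``$ > 0 $'' to ``$ \ne 0 $'': this is licensed precisely because acceptance values of RT-QFAs, and therefore of the simulating GFA, are automatically nonnegative, which places us in the exclusive-cutpoint regime underlying $ S^{\neq} $. Together with the reverse inclusion $ S^{\neq} \subseteq \mathrm{NQAL} $ of Lemma~\ref{lemma:S-QL_0}, this establishes the identity $ \mathrm{NQAL} = S^{\neq} $.
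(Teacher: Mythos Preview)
Your proof is correct and follows essentially the same route as the paper: convert the quantum automaton to an equivalent GFA, note that the acceptance values are nonnegative so ``$>0$'' coincides with ``$\neq 0$'', and then invoke Lemma~\ref{lemma:Sneq-GPFAonesidedcutpoint0}. The only cosmetic difference is that the paper cites Fact~\ref{fact:KWQFA-GFA} (the RT-KWQFA$\to$GFA conversion) rather than the more general Lemma~\ref{qtm:lem:RT-QFA-to-RT-GFA} you use; either works here, and the paper itself cross-references the latter.
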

\begin{proof}
	By Fact \ref{fact:KWQFA-GFA}, there exists a GFA with one-sided cutpoint 0
	for any member $L$ of NQAL. By Lemma \ref{lemma:Sneq-GPFAonesidedcutpoint0},
	$L$ is an exclusive stochastic language.
\end{proof}

\begin{theorem}
   \label{theorem:maintheorem}
	The class of the languages recognized by RT-KWQFA 
	with one-sided unbounded error is precisely equal to S$ ^{\neq} $.
\end{theorem}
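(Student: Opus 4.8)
The plan is to establish the theorem by combining the two inclusions already proven in the immediately preceding lemmas with the notion of equivalence under cutpoint separation introduced in Section~\ref{uerr:non-basic-fact}. Recall that a RT-KWQFA $\mathcal{M}$ recognizes a language $L$ with one-sided unbounded error means, by the definitions in Sections~\ref{sbc:one-sided-unbounded-error} and~\ref{sbc:one-sided-cutpoint}, that there is a cutpoint $\lambda$ with $f_{\mathcal{M}}(w) = \lambda$ for all $w \notin L$ and $f_{\mathcal{M}}(w) \neq \lambda$ (equivalently $> \lambda$, after possibly adjusting) for all $w \in L$. So the first step is a bookkeeping reduction: show that one-sided unbounded error recognition by a RT-KWQFA at an arbitrary cutpoint $\lambda \in [0,1]$ is equivalent to recognition with one-sided cutpoint $0$. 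The forward direction here is the only slightly delicate part; I would use Fact~\ref{fact:KWQFA-GFA} to pass from the RT-KWQFA at cutpoint $\lambda$ to a GFA $\mathcal{G}$ with $(\mathcal{M},\lambda) \equiv (\mathcal{G},\lambda)$, then invoke Lemma~\ref{lemma:Sneq-GPFAonesidedcutpoint0} together with Fact~\ref{fact:GFA-GFA} to relocate the cutpoint, and finally Fact~\ref{fact:PFA-KWQFA} (or Lemma~\ref{uerr:lem:rt-kwqfa}) to return to a RT-KWQFA.

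Granting that normalization, the inclusion $S^{\neq} \subseteq \mathrm{NQAL}$ (in the one-sided-unbounded-error sense) is exactly the content of Lemma~\ref{lemma:S-QL_0}: given $L \in S^{\neq}$, that lemma constructs a RT-KWQFA $\mathcal{M}$ recognizing $L$ with cutpoint $0$, and since the only accepting state of $\mathcal{M}$ has zero amplitude precisely when $w \notin L$, this is recognition with \emph{one-sided} cutpoint $0$, i.e.\ one-sided unbounded error. Conversely, the inclusion $\mathrm{NQAL} \subseteq S^{\neq}$ is the immediately preceding unnamed lemma: from a RT-KWQFA recognizing $L$ with one-sided cutpoint $0$, Fact~\ref{fact:KWQFA-GFA} yields a GFA recognizing $L$ with one-sided cutpoint $0$, and Lemma~\ref{lemma:Sneq-GPFAonesidedcutpoint0} then gives $L \in S^{\neq}$. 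Putting the two inclusions together yields the equality.

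I expect the main obstacle to be purely definitional rather than technical: making sure that the phrase ``one-sided unbounded error'' as applied to RT-KWQFAs (whose natural cutpoint is $\tfrac{1}{2}$ in much of the surrounding text, cf.\ the UMM definition and Lemma~\ref{uerr:lem:rt-kwqfa}) really does reduce to the one-sided-cutpoint-$0$ formulation used in Lemmas~\ref{lemma:S-QL_0} and the NQAL-inclusion lemma, without smuggling in extra power or losing it. Once the cutpoint is pinned to $0$, everything else is a short citation chain through the Facts of Section~\ref{uerr:non-basic-fact} and the two lemmas just proved. The proof itself is therefore essentially one line:

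\begin{proof}
	Immediate from Lemma~\ref{lemma:S-QL_0} and the preceding lemma (NQAL $\subseteq$ S$^{\neq}$),
	since one-sided unbounded error recognition by a RT-KWQFA is, by Fact~\ref{fact:KWQFA-GFA},
	Lemma~\ref{lemma:Sneq-GPFAonesidedcutpoint0}, and Facts~\ref{fact:GFA-GFA} and~\ref{fact:PFA-KWQFA},
	equivalent to recognition with one-sided cutpoint $0$.
\end{proof}
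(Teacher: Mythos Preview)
Your core approach---combine Lemma~\ref{lemma:S-QL_0} with the immediately preceding lemma (NQAL $\subseteq$ S$^{\neq}$)---is exactly what the paper does; the theorem carries no proof environment and is meant to be immediate from those two lemmas.

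However, your ``bookkeeping reduction'' paragraph rests on a misreading of the definitions. In Section~\ref{sbc:one-sided-unbounded-error}, recognition with (positive) one-sided unbounded error is \emph{defined} as $L = \{w : f_{\mathcal{M}}(w) > 0\}$, i.e.\ the cutpoint is already pinned to~$0$ by definition. You have conflated this with the more general notion of one-sided cutpoint~$\lambda$ from Section~\ref{sbc:one-sided-cutpoint}. Consequently the whole normalization detour through Facts~\ref{fact:KWQFA-GFA}, \ref{fact:GFA-GFA}, \ref{fact:PFA-KWQFA} and Lemma~\ref{lemma:Sneq-GPFAonesidedcutpoint0} is unnecessary: Lemma~\ref{lemma:S-QL_0} already produces a RT-KWQFA recognizing $L$ with one-sided cutpoint~$0$, which is literally one-sided unbounded error recognition; and for the converse, any RT-KWQFA recognizing~$L$ with one-sided unbounded error is a RT-KWQFA with one-sided cutpoint~$0$, so the preceding lemma applies directly. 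Your one-line proof is correct once the extra citations are stripped out.
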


\begin{corollary}
	\label{corollary:SLneq-equal-NQAL}
	S$ ^{\neq} $ = NQAL.
\end{corollary}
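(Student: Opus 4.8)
The plan is to prove the theorem by establishing the two inclusions separately, reusing the machinery already developed in this section so that essentially no new construction is needed. Throughout I read ``one-sided unbounded error'' in the positive sense, i.e. recognition with one-sided cutpoint $0$: $w \in L \iff f_{\mathcal{M}}(w) > 0$, with $f_{\mathcal{M}}(w) = 0$ for $w \notin L$ (this is the setting under which NQAL is defined).

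For the inclusion S$^{\neq} \subseteq \{L : L$ is recognized by some RT-KWQFA with one-sided unbounded error$\}$, I would simply invoke Lemma~\ref{lemma:S-QL_0}. The point to verify is that the machine produced in its proof is genuinely a RT-KWQFA — it is, since it is specified by the family of unitary matrices $\{U_{\sigma}\}$ obtained through the embedding template of Figure~\ref{uerr:fig:general-template-1} — and that it recognizes $L$ with one-sided cutpoint $0$, which the proof establishes by observing that the amplitude of the unique accepting state after reading $\dollar$ is nonzero exactly when $w \in L$.

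For the reverse inclusion, let $L$ be recognized by a RT-KWQFA $\mathcal{M}$ with one-sided unbounded error, so $\{w : f_{\mathcal{M}}(w) = 0\} = \Sigma^{*} \setminus L$. By Fact~\ref{fact:KWQFA-GFA} there is a GFA $\mathcal{G}$ with $(\mathcal{M},0) \equiv (\mathcal{G},0)$; since equivalence under cutpoint separation matches the ``equal-to-cutpoint'' sets, $\{w : f_{\mathcal{G}}(w) = 0\} = \{w : f_{\mathcal{M}}(w) = 0\} = \Sigma^{*} \setminus L$, so $\mathcal{G}$ also recognizes $L$ with one-sided cutpoint $0$. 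Lemma~\ref{lemma:Sneq-GPFAonesidedcutpoint0} then gives $L \in$ S$^{\neq}$. (This direction is in fact subsumed by the lemma NQAL $\subseteq$ S$^{\neq}$ stated just above the theorem, since every RT-KWQFA is a RT-QFA with restricted measurements.) Combining the two inclusions proves the theorem, and Corollary~\ref{corollary:SLneq-equal-NQAL} follows by sandwiching: S$^{\neq} \subseteq \{$RT-KWQFA, one-sided$\} \subseteq$ NQAL $\subseteq$ S$^{\neq}$, the middle inclusion holding because RT-KWQFAs form a restricted subclass of RT-NQFAs.

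The only genuinely delicate point in the whole argument is the bookkeeping that one-sidedness survives the passage of Fact~\ref{fact:KWQFA-GFA} — that is, that the GFA reproduces not merely the cutpoint-separated partition of $\Sigma^{*}$ but in particular the set on which its value is exactly the cutpoint — together with the sanity check that the constructions of Lemma~\ref{lemma:S-QL_0} and Fact~\ref{fact:KWQFA-GFA} do not smuggle in any work-tape resource or measurement generality beyond what a RT-KWQFA has; once those are confirmed, the proof is a direct chaining of results already in hand.
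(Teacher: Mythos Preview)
Your proposal is correct and follows essentially the same route as the paper: the corollary is obtained directly from Lemma~\ref{lemma:S-QL_0} (S$^{\neq}\subseteq$ NQAL) together with the unnamed lemma immediately preceding the theorem (NQAL $\subseteq$ S$^{\neq}$), both of which you re-derive using the very same ingredients (the RT-KWQFA construction of Lemma~\ref{lemma:S-QL_0}, and Fact~\ref{fact:KWQFA-GFA} plus Lemma~\ref{lemma:Sneq-GPFAonesidedcutpoint0} in the other direction). Your sandwich S$^{\neq}\subseteq\{$RT-KWQFA, one-sided$\}\subseteq$ NQAL $\subseteq$ S$^{\neq}$ merely spells out what the paper leaves implicit.
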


\begin{corollary}
	\label{corollary:SLeq-equal-coNQAL}
	S$ ^{=} $ = coNQAL.
\end{corollary}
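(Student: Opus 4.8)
The plan is to obtain this corollary essentially for free from Corollary~\ref{corollary:SLneq-equal-NQAL} together with the definitions of the classes involved. Recall from Section~\ref{sbc:space-classes} that S$ ^{=} $ (equivalently, one-way-C$_{=}$SPACE(1) with inclusive cutpoint $\frac12$) and S$ ^{\neq} $ are defined so as to be complementary: S$ ^{\neq} $ is precisely the class of complements of the languages in S$ ^{=} $. Likewise, coNQAL is by definition the class of complements of the languages in NQAL. So the entire argument amounts to applying the ``co'' operator to an already-established class equality.

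Concretely, I would first record the elementary fact that, for language classes $\mathcal{C}_1,\mathcal{C}_2$ over a fixed alphabet, $\mathcal{C}_1=\mathcal{C}_2$ implies $\mathrm{co}\,\mathcal{C}_1=\mathrm{co}\,\mathcal{C}_2$, and that $\mathrm{co}\,(\mathrm{co}\,\mathcal{C})=\mathcal{C}$. Applying this to Corollary~\ref{corollary:SLneq-equal-NQAL}, namely S$ ^{\neq} $ $=$ NQAL, yields $\mathrm{co}\,$S$ ^{\neq} $ $=$ coNQAL; and since $\mathrm{co}\,$S$ ^{\neq} $ $=$ S$ ^{=} $ by definition, we conclude S$ ^{=} $ $=$ coNQAL. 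If a more self-contained argument is preferred (directly mirroring the proof of Theorem~\ref{theorem:maintheorem} rather than quoting Corollary~\ref{corollary:SLneq-equal-NQAL}), I would prove both inclusions: for S$ ^{=} $ $\subseteq$ coNQAL, given $L\in$ S$ ^{=} $ its complement $\overline{L}$ lies in S$ ^{\neq} $, hence in NQAL by Lemma~\ref{lemma:S-QL_0}, so $L=\overline{\overline{L}}\in$ coNQAL; for coNQAL $\subseteq$ S$ ^{=} $, given $L\in$ coNQAL we have $\overline{L}\in$ NQAL $\subseteq$ S$ ^{\neq} $ (the inclusion being the lemma establishing NQAL $\subseteq$ S$ ^{\neq} $), so $L\in\mathrm{co}\,$S$ ^{\neq} $ $=$ S$ ^{=} $.

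There is no genuinely hard step here; the statement is a bookkeeping corollary. The only point worth care is the bookkeeping itself: making explicit that S$ ^{=} $ is read as one-way-C$_{=}$SPACE(1) with inclusive cutpoint $\frac12$ under the finiteness-of-computation convention (which RT-KWQFAs satisfy automatically, since they halt after $|\tilde w|$ steps), that S$ ^{\neq} $ is its exact complement, and that the cutpoint values line up with those used in Lemma~\ref{lemma:Sneq-GPFAonesidedcutpoint0} and in the RT-PFA simulations of Section~\ref{uerr:KWQFA-languages}. Once those identifications are stated, the corollary follows in one line.
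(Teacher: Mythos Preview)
Your proposal is correct and matches the paper's approach: the corollary is stated without proof in the paper, as it follows immediately from Corollary~\ref{corollary:SLneq-equal-NQAL} by taking complements on both sides, exactly as you describe.
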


By Fact \ref{fact:SLneq-propersubsetSL}, there exist languages that 
RT-KWQFAs can recognize with two-sided (unbounded), but not one-sided error.
The class of these languages is precisely 
$ \mbox{uS}  \setminus ( \mbox{S} ^{\neq} \cup \mbox{S} ^{=} ) $.
Note that the above results also establish that the class of languages
recognized by RT-NQFAs is not closed under complementation (Fact \ref{fact:closure-properties-Seq-Sneq}).

\begin{openproblem}
      Does S$ ^{\neq} \cap $ S$ ^{=} $ contain a nonregular language?
\end{openproblem}

\begin{openproblem}
      Is S$ ^{\neq} $ countable or uncountable?
\end{openproblem}

Watrous \cite{Wa99} has shown that
NQSPACE$ _{\mathbb{Q}} $($ s $) = coC$ _{=} $SPACE$ _{\mathbb{Q}} $($ s $) for $ s = \Omega(\log(n)) $.
Due to the fact \cite{Ka89} that S$ ^{\neq} $ = coC$ _{\mbox{=}} $SPACE(1),
we have proven that $ \mbox{co-C} _{=} $SPACE(1) $
\subseteq $ NQSPACE(1),
and whether the inclusion is strict or not depends on
whether a two-way (or one-way) head would increase the computational power of a
NQFA\footnote{For any 2PFA $ \mathcal{M} $ and cutpoint $ \lambda_{1} \in [0,1) $,
there exist a one-way PFA $ \mathcal{P} $ and a cutpoint $ \lambda_{2} \in [0,1) $ such that
$ (\mathcal{M},\lambda_{1}) \equiv (\mathcal{P},\lambda_{2}) $ \cite{Ka89}, 
whereas one-way QFAs are more powerful than RT-QFAs in the general unbounded error setting.}.

\begin{openproblem}
	Can NQFAs with a two-way (or one-way) tape head recognize more languages than 
	the realtime model discussed here?
\end{openproblem}

For sublogarithmic space, we can conclude the following corollary, firstly stated in \cite{YS10A},
 by combining some previously known facts:

\begin{corollary}
	NSPACE($ s $) $ \subsetneq $ NQSPACE($ s $) for $ s=o(\log(n)) $.
\end{corollary}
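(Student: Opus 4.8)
The plan is to treat the inclusion and the strictness separately, since only the latter carries content.

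For the inclusion $\mathrm{NSPACE}(s)\subseteq\mathrm{NQSPACE}(s)$ (valid for every space bound $s$), the idea is that a nondeterministic machine is nothing but a quantum machine whose branches never cancel. Given an NTM $\mathcal N$ running in space $s$, I would build the QTM that, at a configuration with $m\geq 1$ legal successors, writes one of $m$ fresh register symbols and moves to the corresponding successor, each with amplitude $\frac{1}{\sqrt m}$, padding the operator with a harmless identity block so that it is admissible, exactly as in the exact simulation of Lemma~\ref{qtm:lem:pm-simulated-by-qm}. All amplitudes stay real and nonnegative, so no destructive interference ever occurs and the acceptance probability is strictly positive iff $\mathcal N$ has an accepting path; thus this QTM recognizes $L(\mathcal N)$ with positive one-sided unbounded error using the same work space, i.e.\ $L(\mathcal N)\in\mathrm{NQSPACE}(s)$. (Alternatively one may factor this through the classical relation $\mathrm{NSPACE}(s)\subseteq\mathrm{coC}_{=}\mathrm{SPACE}(s)$ followed by the space-bounded lift of Corollary~\ref{corollary:coCSPACE-coCQSPACE}.)

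For the strictness I would exhibit a single witness that works for every $s=o(\log n)$, namely $L_{NH}$ (any of the nonstochastic languages of Figure~\ref{uerr:nonstochastic-languages} would serve equally well). On the quantum side, Theorem~\ref{theorem:1KWQFA} gives a one-way constant-space QFA whose acceptance value equals $\frac12$ off $L_{NH}$ and exceeds $\frac12$ on $L_{NH}$; shifting and renormalizing that value into an \emph{amplitude}, by the same device used in the proof of Lemma~\ref{lemma:S-QL_0} (equivalently, invoking Corollary~\ref{corollary:coCSPACE-coCQSPACE} together with the identity $\mathrm{S}^{\neq}=\mathrm{NQAL}$ of Corollary~\ref{corollary:SLneq-equal-NQAL}), turns it into a one-way constant-space nondeterministic QFA for $L_{NH}$ with one-sided cutpoint $0$. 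Hence $L_{NH}\in\mbox{one-way-}\mathrm{NQSPACE}(1)\subseteq\mathrm{NQSPACE}(s)$ for every $s\geq 1$. On the classical side one imports the known sublogarithmic-space lower bound for this Nasu--Honda language: the Freivalds--Karpinski bounds quoted above already give $L_{NH}\notin\mathrm{PrSPACE}(o(\log\log n))\supseteq\mathrm{NSPACE}(o(\log\log n))$ using $\mathrm{NSPACE}(s)\subseteq\mathrm{PrSPACE}(s)$, while the stronger statement $L_{NH}\notin\mathrm{NSPACE}(s)$ for $s=o(\log n)$ is the nondeterministic-space lower bound recorded in the literature \cite{FK94,Sz94}.

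The hard part is entirely this classical lower bound: the whole weight of the separation rests on the claim that a context-free, nonstochastic language that \emph{is} recognized by an $O(\log n)$-space one-way deterministic TM cannot be recognized by any two-way nondeterministic TM in space $o(\log n)$, and this must be pulled from the literature rather than reproved here. Everything else is routine bookkeeping: checking that the NTM-to-QTM simulation is well-formed and space-preserving, and checking that the constant-space quantum recognizer of Theorem~\ref{theorem:1KWQFA}, once its acceptance value has been shifted strictly below $\frac12$ off the language, genuinely certifies membership in the positive one-sided unbounded-error class $\mathrm{NQSPACE}$.
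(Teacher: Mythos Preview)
Your approach has a genuine gap on the quantum side. You want $L_{NH}\in\mathrm{NQSPACE}(1)$, i.e.\ a constant-space quantum machine with acceptance probability \emph{exactly} $0$ on non-members. Theorem~\ref{theorem:1KWQFA} only yields $f(w)=\tfrac12$ on non-members and $f(w)>\tfrac12$ on members; that places $L_{NH}$ in one-way-$\mathrm{coC}_=\mathrm{QSPACE}(1)$, not in $\mathrm{NQSPACE}(1)$. Your proposed fix---``shifting and renormalizing that value into an amplitude, by the same device used in the proof of Lemma~\ref{lemma:S-QL_0}''---does not apply. That lemma starts from a RT-PFA, whose acceptance value is a \emph{linear} functional of the state vector; the construction literally writes $\tfrac{2f_{\mathcal P}(w)-1}{4}$ into a single coordinate by matrix arithmetic and then uses it as an amplitude. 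For the 1KWQFA of Theorem~\ref{theorem:1KWQFA}, $f_{\mathcal M}(w)$ is a sum of squared moduli and is not linear in the state, so there is no analogous subtraction. Indeed $L_{NH}\notin S$ (it is nonstochastic), hence $L_{NH}\notin S^{\neq}=\mathrm{NQAL}$ by Corollary~\ref{corollary:SLneq-equal-NQAL}; your parenthetical ``equivalently, invoking Corollary~\ref{corollary:coCSPACE-coCQSPACE} together with $S^{\neq}=\mathrm{NQAL}$'' therefore cannot produce the needed conclusion either, and whether $\mathrm{coC}_=\mathrm{QSPACE}(1)\subseteq\mathrm{NQSPACE}(1)$ is precisely the question the paper leaves open after Corollary~\ref{corollary:SLeq-equal-coNQAL}. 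The same objection applies to every language in Figure~\ref{uerr:nonstochastic-languages}: being nonstochastic, none of them lies in $S^{\neq}$.

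The paper sidesteps this by choosing a witness that \emph{is} in $S^{\neq}$, namely $L_{neq}=\{w:|w|_a\neq|w|_b\}$. Then $L_{neq}\in S^{\neq}=\mathrm{NQAL}\subseteq\mathrm{NQSPACE}(1)$ is immediate, and on the classical side $L_{neq}$ is a nonregular DCFL, so the Alt--Geffert--Mehlhorn theorem \cite{AGM92} (not \cite{FK94,Sz94}) gives $L_{neq}\notin\mathrm{NSPACE}(s)$ for $s=o(\log n)$. Your inclusion argument $\mathrm{NSPACE}(s)\subseteq\mathrm{NQSPACE}(s)$ is fine; only the witness needs to change.
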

\begin{proof} (Sketch.)
	QTMs can simulate PTMs easily for any common space bound.
	$ L_{neq}=\{w\in\{a,b\}^{*} \mid |w|_{a} \neq |w|_{b}\} \in $ S$ ^{\neq} $ \cite{BC01B}.
	It is easily seen that $ L_{neq} $ is a nonregular deterministic context-free language (DCFL).
	It is known that no nonregular DCFL is in NSPACE($ s $) for $ s=o(\log(n)) $ \cite{AGM92}.
\end{proof}

For space bounds $ s \in \Omega(\log(n)) $,
all we know in this regard is the trivial fact that NSPACE($ s $) $ \subseteq $ NQSPACE($ s $).

\begin{openproblem}
	Is NSPACE($ s $) $ \subsetneq $ NQSPACE($ s $), for $ s \in \Omega(\log n) $?
\end{openproblem}

\subsection{Space Efficiency of Nondeterministic Quantum Finite Automata} \label{uerr:non-space-efficiency}

It is well known \cite{AF98,MPP01} that some infinite families of languages can
be recognized with one-sided bounded error by just tuning the
transition amplitudes of a RT-QFA with a constant number of states,
whereas the sizes of the corresponding RT-PFAs grow without bound.
After a simple example, we argue that this advantage is also valid in the unbounded error case.

For $ m \in \mathbb{Z}^{+} $, $ L_{m} \subseteq \{a\}^{*} $ is defined as
\begin{equation}
L_{m}=\{a^{i} \mid i \mod(m) \neq 0 \}.
\end{equation}

\begin{theorem}
	\label{theorem:L_m}
	For any $ m>1 $, $ L_{m} $ can be recognized by a 2-state 
	MCQFA\footnote{There is an equivalent 4-state RT-KWQFA.} with cutpoint $ 0 $.
\end{theorem}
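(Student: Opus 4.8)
The plan is to exploit the fact that over a unary alphabet a single unitary rotation, iterated once per input letter, produces an acceptance value that is a squared sine of a multiple of a fixed angle; choosing that angle to be $\pi/m$ makes the value vanish exactly at the multiples of $m$. Concretely, I would take the $2$-state MCQFA $\mathcal{M}$ with $Q=\{q_1,q_2\}$, initial state $q_1$, and $Q_a=\{q_2\}$, working over the real plane spanned by $\ket{q_1}$ and $\ket{q_2}$, and set $U_{\cent}=U_{\dollar}=I$ together with the planar rotation
\begin{equation}
	U_a=\begin{pmatrix} \cos\frac{\pi}{m} & -\sin\frac{\pi}{m} \\ \sin\frac{\pi}{m} & \cos\frac{\pi}{m}\end{pmatrix}.
\end{equation}

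The first step is to note that each of $U_{\cent},U_a,U_{\dollar}$ is orthogonal, hence unitary, so $\mathcal{M}$ is a well-formed RT-KWQFA; and since an MCQFA performs its only measurement after reading $\dollar$, no intermediate measurement disturbs the rotation. The second step is to trace the computation on $\tilde{w}=\cent a^i\dollar$: beginning from $\ket{q_1}=(1,0)^{\trans}$, applying $U_{\cent}=I$, then $U_a$ exactly $i$ times, then $U_{\dollar}=I$, the state just before measurement is $(\cos(i\pi/m),\sin(i\pi/m))^{\trans}$, so the acceptance probability is
\begin{equation}
	f_{\mathcal{M}}(a^i)=\sin^2\!\left(\frac{i\pi}{m}\right).
\end{equation}
The third step is the elementary observation that for a nonnegative integer $i$ one has $\sin(i\pi/m)=0$ if and only if $m\mid i$ (because $\sin x=0$ precisely when $x\in\pi\mathbb{Z}$). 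Hence $f_{\mathcal{M}}(a^i)>0$ exactly when $i\bmod m\neq 0$, i.e. exactly when $a^i\in L_m$, so $\mathcal{M}$ recognizes $L_m$ with the (strict) cutpoint $0$, as required.

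There is essentially no hard step here; the construction is a one-parameter tuning of a fixed-size machine. The only point one must be careful about is the choice of the rotation angle: taking $2\pi/m$ instead of $\pi/m$ would fail for even $m$, since then the acceptance value would also vanish at odd multiples of $m/2$, so $\pi/m$ (equivalently, any angle whose set of integer multiples lands in $\pi\mathbb{Z}$ exactly on the multiples of $m$) is forced. Finally, for the claim in the footnote one would convert the single end-of-input measurement into a step-wise measured RT-KWQFA by adjoining a pair of ``dead'' halting states that are never entered, so that the rotation block is applied unobserved until $\dollar$; this costs two extra states and yields the equivalent $4$-state RT-KWQFA.
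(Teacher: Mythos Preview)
Your proof is correct and takes essentially the same approach as the paper: the paper's proof is a one-sentence sketch stating that the machine starts at $q_0$ and each $a$ effects a rotation by $\pi/m$ in the $\ket{q_0}$--$\ket{q_1}$ plane with $q_1$ accepting, which is exactly your construction (with different state names) fleshed out in detail.
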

\begin{proof}
    $ \mathcal{M} $ begins the computation at state $ q_{0} $, and
each transition with the symbol $a$
    corresponds to a rotation\footnote{For details of a similar
construction for a nonregular language,
    see \cite{BC01B}.} by angle $ \frac{\pi}{m} $ in the $
\ket{q_{0}} $-$ \ket{q_{1}} $ plane,
    where $ q_{1} $ is the accepting state.
\end{proof}

For any positive $n$, it is known \cite{Ma93} that every $n$-state RT-PFA with cutpoint $ 0 $ has an equivalent
nondeterministic finite automaton with the same number of states.
Therefore, only finitely many distinct languages can be recognized
with one-sided unbounded error by RT-PFAs with at most $n$ states.

Combining this with the fact that any $n$-state RT-PFA with cutpoint $ 0 $ can be simulated by a
RT-KWQFA with $ 2n+6 $ states (see Section \ref{uerr:KWQFA-languages}),
the superiority of RT-KWQFAs (and so RT-QFAs) over RT-PFAs in this regard is established.

\subsection{Languages Recognized with Two-Sided Error} \label{uerr:two-sided}

To gain a better understanding of the classes of languages
recognizable by positive one-sided, negative one-sided, and
necessarily two-sided error by QFAs, we examine some examples from
each of those families. Bertoni and Carpentieri \cite{BC01B} showed
that $ L_{neq} $ is in NMCL, and that its complement, 
say, $ L_{eq} $,
is not in MCL. Now that we have Theorem \ref{theorem:maintheorem}, we
can use the well-known
results \cite{Pa71, Ma93} from the RT-PFA literature that state that
$ L_{eq} \in $  S$ ^{=} $, $ L_{neq} \in $ S$ ^{\neq} $, but not
vice versa, to conclude that stronger RT-QFA variants also can not recognize $ L_{eq} $ with positive
one-sided error, and neither can they recognize $ L_{neq} $ with
negative one-sided error. Similarly, L\={a}ce \textit{et al.} \cite{LSF09}
proved recently that the complement of the palindrome language $
L_{pal}=\{w \in \{a,b\}^{*} \mid w=w^{r} \} $ is in NQAL. We can show the corresponding
result for $ L_{pal} $ using the following fact:

\begin{fact} \cite{Di71}
	\label{fact:dieu}
	Let $ L \in $ S$ ^{=} $. Then there exists a natural number\footnote{This number can be chosen
	as the number of states of a PFA $ \mathcal{P} $ such that $ L=\mathbb{L}(\mathcal{P},=\lambda) $ for some 
	$ \lambda \in [0,1] $.} $ ~n \ge 1 $ such that for any strings $ u, v, y \in \Sigma^{*} $,
	\begin{equation}
		\mbox{if } uv, uyv, \ldots, uy^{n-1}v \in L, \mbox{then } uy^{*}v \subseteq L.
	\end{equation}
\end{fact}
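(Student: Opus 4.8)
The plan is to read the statement as a pumping lemma for inclusive stochastic languages, and — as the footnote suggests — to take $ n $ to be the number of states of a recognizing automaton. First I would fix, using Paz's characterization of exclusive/inclusive stochastic languages \cite{Pa71}, an $ n $-state RT-PFA $ \mathcal{P}=(Q,\Sigma,\{A_{\sigma\in\tilde{\Sigma}}\},q_{1},Q_{a}) $ together with a cutpoint $ \lambda\in[0,1] $ such that $ L=\{w\in\Sigma^{*}\mid f_{\mathcal{P}}(w)=\lambda\} $. Recall that then $ f_{\mathcal{P}}(w)=\eta\,A_{\dollar}A_{w}A_{\cent}\,v_{0} $, where $ v_{0} $ is the stochastic initial vector, $ \eta $ is the $ 0/1 $ row vector marking the states of $ Q_{a} $, and each $ A_{\sigma} $ is left stochastic.

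Next, for arbitrary strings $ u,v,y $, I would isolate the dependence of $ f_{\mathcal{P}}(uy^{k}v) $ on the exponent $ k $. Writing $ M=A_{y} $, $ \pi=A_{u}A_{\cent}v_{0} $ and $ \psi^{\trans}=\eta\,A_{\dollar}A_{v} $, one has $ f_{\mathcal{P}}(uy^{k}v)=\psi^{\trans}M^{k}\pi=:g(k) $. The observation that makes the bound $ n $ (rather than $ n+1 $) work is that $ M $ is a product of left-stochastic matrices, hence itself left stochastic, so $ \mathbf{1}^{\trans}M=\mathbf{1}^{\trans} $ and $ 1 $ is an eigenvalue of $ M $. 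Consequently the characteristic polynomial has the form $ \chi_{M}(x)=x^{n}-c_{n-1}x^{n-1}-\cdots-c_{0} $ with $ \chi_{M}(1)=0 $, i.e. $ c_{0}+c_{1}+\cdots+c_{n-1}=1 $; by the Cayley--Hamilton theorem $ M^{n}=\sum_{j=0}^{n-1}c_{j}M^{j} $, and hence $ g(k+n)=\sum_{j=0}^{n-1}c_{j}\,g(k+j) $ for every $ k\ge 0 $.

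I would then pass to $ h(k)=g(k)-\lambda $. Because $ \sum_{j}c_{j}=1 $, the inhomogeneous term cancels and $ h $ obeys the same \emph{homogeneous} order-$ n $ recurrence, $ h(k+n)=\sum_{j=0}^{n-1}c_{j}\,h(k+j) $. The hypothesis that $ uv,uyv,\ldots,uy^{n-1}v\in L $ is precisely $ h(0)=h(1)=\cdots=h(n-1)=0 $, so a trivial induction on $ k $ gives $ h(k)=0 $, i.e. $ f_{\mathcal{P}}(uy^{k}v)=\lambda $, for all $ k\ge 0 $, which is the assertion $ uy^{*}v\subseteq L $. (When $ y=\varepsilon $ one has $ M=A_{\varepsilon}=I $, so $ g $ is constant and the conclusion is immediate from $ uv\in L $.)

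The only genuine obstacle is the opening reduction: one must secure a representation of $ L $ by a \emph{stochastic} automaton, because the analogous computation performed with a GFA collapses — the matrices $ A_{y} $ of a GFA need not be stochastic, $ 1 $ need not be an eigenvalue of $ A_{y} $, and then the recurrence for $ h $ picks up a generally nonzero additive constant that no longer propagates the zeros. Everything after that step is linear algebra. As the intended application I would record: L\={a}ce \textit{et al.} \cite{LSF09} give $ \overline{L_{pal}}\in\mathrm{NQAL}=\mathrm{S}^{\neq} $ (Corollary \ref{corollary:SLneq-equal-NQAL}), so $ L_{pal}\in\mathrm{S}^{=} $; were $ \overline{L_{pal}} $ also in $ \mathrm{S}^{=} $, the statement above would apply to $ \overline{L_{pal}} $, but taking $ u=a^{N}b $, $ y=a $, $ v=\varepsilon $ with $ N\ge n $ makes $ a^{N}ba^{i}\ (0\le i\le n-1) $ non-palindromes while $ a^{N}ba^{N} $ is a palindrome, contradicting $ uy^{*}v\subseteq\overline{L_{pal}} $. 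Hence $ \overline{L_{pal}}\notin\mathrm{S}^{=} $, i.e. $ L_{pal}\notin\mathrm{S}^{\neq}=\mathrm{NQAL} $ — the analogue, for $ L_{pal} $, of the fact that $ L_{eq}\notin\mathrm{MCL} $.
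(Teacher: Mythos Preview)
The paper does not prove this statement: it is recorded as a \emph{Fact} with citation \cite{Di71} and used as a black box (notably in the proof of Theorem~\ref{theorem:L_pal}). So there is no ``paper's own proof'' to compare against.

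Your argument is correct and is essentially the classical one. The only point worth flagging is notational: the stochastic matrices in this paper are \emph{left} (column) stochastic, so the eigenvalue~$1$ of $M=A_{y}$ is witnessed by the row vector $\mathbf{1}^{\trans}$ on the left, exactly as you wrote; just be careful not to mix this up with the right-stochastic convention. Everything else---Cayley--Hamilton, the observation that $\chi_{M}(1)=0$ forces $\sum_{j}c_{j}=1$, and the consequent homogeneity of the recurrence for $h(k)=g(k)-\lambda$---is sound, and your remark that the reduction must land on a genuine PFA rather than a GFA is precisely the right caveat.

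The application you appended is Theorem~\ref{theorem:L_pal} of the paper, and your instantiation $u=a^{N}b$, $y=a$, $v=\varepsilon$ with $N\ge n$ matches the paper's proof verbatim (the paper takes $N=n$).
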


\begin{theorem}
	\label{theorem:L_pal}
	$ L_{pal} \notin $  S$ ^{\neq} $.
\end{theorem}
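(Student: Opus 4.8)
The statement $L_{pal} \notin \mathbf{S}^{\neq}$ says that the palindrome language is not an exclusive stochastic language. By Corollary~\ref{corollary:SLneq-equal-NQAL} this is equivalent to $L_{pal} \notin \mathrm{NQAL}$, so either formulation could be attacked, but the cleanest route is a direct argument via $\mathbf{S}^{\neq}$ itself, playing it off against Fact~\ref{fact:dieu}. The key observation is that $\mathbf{S}^{\neq}$ is the complement class of $\mathbf{S}^{=}$ (by definition / Corollary~\ref{corollary:SLeq-equal-coNQAL}), so $L_{pal} \in \mathbf{S}^{\neq}$ would force $\overline{L_{pal}} \in \mathbf{S}^{=}$, and then Fact~\ref{fact:dieu} becomes available as a pumping-type constraint on $\overline{L_{pal}}$. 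So the plan is: assume $L_{pal} \in \mathbf{S}^{\neq}$, deduce $\overline{L_{pal}} \in \mathbf{S}^{=}$, obtain the natural number $n \ge 1$ promised by Fact~\ref{fact:dieu}, and then exhibit strings $u, v, y$ that make the hypothesis of Fact~\ref{fact:dieu} hold but violate its conclusion — i.e. $uv, uyv, \dots, uy^{n-1}v$ are all non-palindromes, yet some $uy^{k}v$ with $k \ge n$ is a palindrome.

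**Constructing the witness strings.** I would look for $u, y, v$ over $\{a,b\}$ so that $uy^{m}v$ is a palindrome for exactly one value of $m$ (or for a set of values that excludes $\{0,1,\dots,n-1\}$ but contains something). A natural candidate: take $y = ab$ (or more simply a single letter won't work because $uv, uyv,\dots$ would then be $ua^{m}v$-type and controlling parity is easy, but we need to be careful that the palindrome condition is met for a large $m$ and fails for the small ones). Concretely, consider $u = b\,a^{n}$, $v = a^{n}\,b$, and $y = a$. Then $uy^{m}v = b\,a^{n}\,a^{m}\,a^{n}\,b = b\,a^{2n+m}\,b$, which is a palindrome for every $m \ge 0$ — that's the wrong direction. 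Instead I want the palindrome property to be *rare*. Try $u = a b$, $y = a$, $v = b a$: then $u y^{m} v = a\,b\,a^{m}\,b\,a$, which is a palindrome iff the string reads the same backwards, i.e. iff $a b a^{m} b a = a b a^{m} b a$ — always true by symmetry. Hmm, again too easy. The right move is to break the left-right symmetry of $u$ and $v$ so that $uy^{m}v$ is palindromic only for one special $m$. For instance take $u = a a b$, $y = a$, $v = b a$: then $u y^{m} v = aab\,a^{m}\,ba$; reversed this is $ab\,a^{m}\,baa$; these are equal only if $aaba^{m}ba = aba^{m}baa$, which fails for all $m$ — wrong way again (never a palindrome, so the hypothesis of Fact~\ref{fact:dieu} is never satisfied). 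So I need $uy^{m}v$ palindromic for *exactly one* large $m$. The trick: let the "center" of the palindrome sit inside the $y^{m}$ block, with asymmetric decorations whose mismatch is absorbed only at one particular length. E.g. $u = a^{s}\,b$, $v = b\,a^{t}$ with $s \ne t$, $y = a$: then $uy^{m}v = a^{s}\,b\,a^{m}\,b\,a^{t}$, reversed $= a^{t}\,b\,a^{m}\,b\,a^{s}$ — equal iff $s = t$, never. The genuinely working construction uses $y$ whose own reversal creates the needed length shift; a standard choice in this literature is $u = ba$, $v = ab$, $y = ba$ — but rather than guess blindly I would search among $y \in \{ab, ba, aab, \dots\}$ for one where $uy^{m}v = (uy^{m}v)^{\mathrm{r}}$ has a solution set of the form $\{m : m = m_0\}$ for some computable $m_0$; then choosing $n$ via Fact~\ref{fact:dieu} and noting we may also rescale (replace $y$ by $y^{\ell}$) to push $m_0$ past $n$, we get the contradiction. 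The cleanest concrete instance I expect to work: $u = a$, $v = \varepsilon$, $y = ab$ — but I'd verify by hand which decorations force the palindrome condition to hold only on a shifted arithmetic progression missing $\{0,\dots,n-1\}$.

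**The main obstacle.** The real work is pinning down the right triple $(u, y, v)$: we need that $uy^{m}v$ is a palindrome for *some* $m \ge n$ but for *no* $m \in \{0, 1, \dots, n-1\}$, and this must hold simultaneously for the $n$ that Fact~\ref{fact:dieu} hands us (which depends on the hypothetical PFA, hence is not under our control in advance). The way around the "$n$ is adversarial" issue is the usual scaling trick: first fix a family $\{(u, y^{\ell}, v)\}_{\ell \ge 1}$ in which $uy^{\ell m}v$ is a palindrome exactly when $m$ equals some fixed target; since $n$ is finite, choose $\ell$ large enough that the first palindromic member of the progression $uv, u y^{\ell} v, u y^{2\ell} v, \dots$ occurs at index $\ge n$, while also arranging (by a small fixed decoration) that none of $uv, u y^{\ell} v, \dots, u y^{(n-1)\ell} v$ is a palindrome. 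Actually the more robust approach, and the one I'd ultimately commit to: choose $u,y,v$ so that *no* $uy^{m}v$ is a palindrome for $m$ in a long initial segment but infinitely many are palindromes later — e.g. exploit that palindromes of $\{a,b\}^{*}$ have the property that inserting a non-palindromic block $y$ generically destroys palindromicity unless $y$ itself and the surrounding context conspire. Getting the arithmetic of the indices to line up with an arbitrary $n$ is the delicate point; everything else (invoking $\mathbf{S}^{\neq} = \mathrm{co}\text{-}\mathbf{S}^{=}$, invoking Fact~\ref{fact:dieu}, and drawing the contradiction) is routine.
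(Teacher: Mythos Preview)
Your overall strategy is exactly the paper's: assume $L_{pal}\in\mathrm{S}^{\neq}$, pass to $\overline{L_{pal}}\in\mathrm{S}^{=}$, and violate Fact~\ref{fact:dieu}. The gap is that you never actually produce a working triple $(u,y,v)$, and you make this step look far harder than it is.

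The point you are circling around but not using is that $n$ is fixed \emph{before} you choose $u,y,v$, so you may build $n$ straight into $u$. Take
\[
u = a^{n}b,\qquad y = a,\qquad v = \varepsilon.
\]
Then $uy^{m}v = a^{n}ba^{m}$, whose reverse is $a^{m}ba^{n}$; this is a palindrome if and only if $m=n$. Hence $a^{n}b,\,a^{n}ba,\,\ldots,\,a^{n}ba^{n-1}$ all lie in $\overline{L_{pal}}$, so Fact~\ref{fact:dieu} forces $a^{n}ba^{n}\in\overline{L_{pal}}$ --- but $a^{n}ba^{n}$ is a palindrome. Contradiction.

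All of your attempted triples kept $u$ and $v$ roughly symmetric (or, in the one case with $v=\varepsilon$, took $u$ of fixed length independent of $n$), which is why you kept getting ``always a palindrome'' or ``never a palindrome''. The scaling tricks, arithmetic progressions, and the worry that ``$n$ is adversarial'' are all unnecessary once you allow $u$ itself to have length $n+1$.
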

\begin{proof}
	Suppose that $ L_{pal} \in $ S$ ^{\neq} $. Then $ \overline{L_{pal}} \in $ S$ ^{=} $.
	Let $ u=a^{n}b $, $ y=a $, and $ v=\varepsilon $.
	\begin{equation}
		a^{n}b, a^{n}ba,\ldots,a^{n}ba^{n-1} \in \overline{L_{pal}}
	\end{equation}
	imply that $ a^{n}ba^{n} \in $ $ \overline{L_{pal}} $ by Fact \ref{fact:dieu}.
    Since this string is actually a member of $ L_{pal} $,  we have a contradiction.
\end{proof}
We now exhibit some languages which can only be recognized by two-sided error by a QFA.

\begin{theorem}
	\label{theorem:L_twosided-1}
	$ L=\{aw_{1} \cup bw_{2} \mid w_{1} \in L_{eq}, w_{2} \in L_{neq} \}  
	\in $ S$ \setminus ($S$ ^{=} \cup $ S$ ^{\neq} )$.
\end{theorem}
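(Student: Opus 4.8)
The plan is to establish the three required facts separately: that $L \in \mathrm{S}$ (membership in the stochastic languages), that $L \notin \mathrm{S}^{=}$, and that $L \notin \mathrm{S}^{\neq}$. For membership in $\mathrm{S}$, I would build a RT-PFA (or more conveniently a GFA, then invoke Fact~\ref{fact:GPFA-PFA}) that reads the first symbol, remembers whether it is $a$ or $b$ in its internal states, and then simulates on the suffix $w_{1}$ (resp.\ $w_{2}$) a RT-PFA witnessing $L_{eq} \in \mathrm{S}^{=}$ (resp.\ $L_{neq} \in \mathrm{S}^{\neq}$), which exist by the Paz/Macarie results cited just before the theorem. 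The subtlety is that $\mathrm{S}^{=}$ uses inclusive cutpoint $\frac12$ and $\mathrm{S}^{\neq}$ uses exclusive cutpoint $\frac12$, so I cannot directly combine them at a single strict cutpoint; instead I would use Fact~\ref{fact:PFA-PFA} to move the two component automata to convenient cutpoints, then combine them with a first-symbol-dependent affine adjustment of the acceptance value so that, after the dust settles, $f(w) > \lambda$ exactly when $w \in L$. Concretely: if the first symbol is $a$, arrange the acceptance value to exceed $\lambda$ iff the $L_{eq}$-machine hits its cutpoint (equality), and if the first symbol is $b$, arrange it to exceed $\lambda$ iff the $L_{neq}$-machine avoids its cutpoint; the GFA formalism of Figure~\ref{sbc:fig:GFA} makes such affine post-processing easy since final vectors are arbitrary real vectors, and closure of $\mathrm{S}$ under this kind of construction is routine.

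For $L \notin \mathrm{S}^{\neq}$, I would argue by a pumping/contradiction argument analogous to the proof of Theorem~\ref{theorem:L_pal}. If $L \in \mathrm{S}^{\neq}$, then $\overline{L} \in \mathrm{S}^{=}$, so Fact~\ref{fact:dieu} applies with some natural number $n \ge 1$. I would then pick strings of the form $a\,u y^{i} v$ (the first symbol $a$ so that membership is governed by the $L_{eq}$ condition on the suffix) that all lie in $\overline{L}$ for $i = 0, \ldots, n-1$ but whose limit $a\,u y^{n} v$ lies in $L$, contradicting the conclusion $a\,u y^{*} v \subseteq \overline{L}$. The suffix needs to pass through the "equal number of $a$'s and $b$'s" membership boundary as $i$ grows, e.g.\ choosing $u, y, v$ over $\{a,b\}$ so that the count difference $|{\cdot}|_a - |{\cdot}|_b$ in $u y^{i} v$ is nonzero for the first $n$ values of $i$ but zero for $i = n$; this is easy to arrange by making $y$ change the difference by a fixed nonzero amount. (One must be slightly careful since $L_{eq}$ here refers to the suffix $w_1$ and $L$ is about $a w_1$, but prepending the fixed symbol $a$ does not affect the count difference of the suffix, so the boundary-crossing behavior is preserved.)

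For $L \notin \mathrm{S}^{=}$, I would dualize: if $L \in \mathrm{S}^{=}$ then $\overline{L} \in \mathrm{S}^{\neq}$, and I again want a contradiction. Here I cannot directly reuse Fact~\ref{fact:dieu}, which is a property of $\mathrm{S}^{=}$, so instead I would use the fact (from the Paz/Dieu line of results) that $L_{neq} \notin \mathrm{S}^{=}$, equivalently $L_{eq} \notin \mathrm{S}^{\neq}$, together with a closure-under-derivatives argument: the left quotient $b^{-1} L = \{ w_2 \mid b w_2 \in L\} = L_{neq}$, and $\mathrm{S}^{=}$ is closed under taking left quotients by a fixed symbol (a GFA for $L$ restricted to its behavior after reading $b$ yields a GFA with inclusive cutpoint for $L_{neq}$). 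Hence $L \in \mathrm{S}^{=}$ would force $L_{neq} \in \mathrm{S}^{=}$, which is false; analogously $L \notin \mathrm{S}^{\neq}$ follows from $a^{-1}L = L_{eq} \notin \mathrm{S}^{\neq}$. In fact this quotient argument is cleaner than the pumping argument and handles both non-membership claims at once, so I would lead with it and keep the Fact~\ref{fact:dieu} pumping argument only as a fallback if the closure-under-quotient property needs more justification in this thesis's framework.

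The main obstacle I expect is the $\mathrm{S}$-membership step: one must combine a machine for $L_{eq}$ (an equality/inclusive-cutpoint language) with one for $L_{neq}$ (a disequality/exclusive-cutpoint language) into a single strict-cutpoint stochastic machine, and the two "error directions" pull against each other. The resolution is to exploit that the decision is dictated by the first input symbol, so the two sub-behaviors are never active simultaneously; still, getting the acceptance values to line up on a common strict cutpoint $\lambda$ requires careful affine rescaling of each branch (using Fact~\ref{fact:PFA-PFA} / Fact~\ref{fact:GFA-GFA} to relocate cutpoints) and a short verification that the resulting GFA has the claimed accepted language, after which Fact~\ref{fact:GPFA-PFA} converts it back to a RT-PFA with a cutpoint in $(0,1)$, witnessing $L \in \mathrm{S}$.
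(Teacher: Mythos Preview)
Your quotient/derivative argument for the two non-membership claims is exactly the paper's proof: assuming $L \in \mathrm{S}^{\neq}$, the paper replaces the initial vector $v_{0}$ of a witnessing GFA by $A_{a}v_{0}$ to obtain a GFA for $a^{-1}L = L_{eq}$ with one-sided cutpoint $0$, contradicting $L_{eq}\notin\mathrm{S}^{\neq}$; the $\mathrm{S}^{=}$ case is handled by passing to $\overline{L}$ and repeating the same trick. Your Dieu-pumping fallback is unnecessary, and you were right to prefer the quotient route.

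For $L \in \mathrm{S}$ you are making your life harder than the paper does. The paper simply observes that both $L_{eq}$ and $L_{neq}=\overline{L_{eq}}$ are already stochastic (this uses $L_{eq}\in\mathrm{S}^{=}_{\mathbb{Q}}\subsetneq\mathrm{S}$ by Fact~\ref{fact:SLeq-rat-subset-of-SL}, and $L_{neq}\in\mathrm{S}^{\neq}\subsetneq\mathrm{S}$ by Fact~\ref{fact:SLneq-propersubsetSL}), so each already has a RT-PFA with a \emph{strict} cutpoint. Once both components are in $\mathrm{S}$, branching on the first symbol and aligning the two strict cutpoints via Fact~\ref{fact:PFA-PFA} is immediate; there is no inclusive/exclusive mismatch to reconcile. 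Your affine-adjustment plan would work, but it solves a harder problem than is actually present.
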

\begin{proof}
    Suppose that $ L \in $ S$ ^{\neq} $, then there exists a GFA
    \begin{equation} \mathcal{G}=(S,\Sigma,\{A_{\sigma \in \{a,b\}}
\},v_{0},f) \end{equation}
    recognizing $ L$ with one-sided cutpoint $ 0 $. The GFA
    \begin{equation} \mathcal{G}'=(S,\Sigma,\{A_{\sigma \in \{a,b\}} \},A_{a}v_{0},f) \end{equation}
    recognizes $ L_{eq} $ with one-sided cutpoint 0, meaning that $
L_{eq} \in $ S$ ^{\neq} $.
    This contradicts the well-known fact mentioned in the first
paragraph of this section.
    Suppose now that $ L \in $ S$ ^{=} $,  then
    \begin{equation} \overline{L}=\{\varepsilon \cup aw_{2} \cup bw_{1} \mid w_{1}
\in L_{eq}, w_{2} \in L_{neq} \} \end{equation}
    is in S$ ^{\neq} $, which also results in a contradiction for
the same reason.
    Since both $ L_{eq} $ and its complement are stochastic, it is
not difficult to show that $ L $ is stochastic.
\end{proof}

\begin{lemma}
      \label{lemma:L_twosided-2}
      $ L_{lt}= \{w \in \{a,b\}^{*} \mid |w|_{a} < |w|_{b} \} $ $ \notin ($S$ ^{=} \cup $  S$ ^{\neq} )$.
\end{lemma}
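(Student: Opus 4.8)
The plan is to show that $L_{lt}$ is neither in $\mathrm{S}^{=}$ nor in $\mathrm{S}^{\neq}$ by leveraging, as in the proofs of Theorems \ref{theorem:L_pal} and \ref{theorem:L_twosided-1}, the structural obstruction of Fact \ref{fact:dieu} together with the fact that $\mathrm{S}^{=}$ is closed under complementation (so $L \in \mathrm{S}^{\neq}$ iff $\overline{L} \in \mathrm{S}^{=}$). So it suffices to prove the single statement: $L_{lt} \notin \mathrm{S}^{=}$ \emph{and} $\overline{L_{lt}} \notin \mathrm{S}^{=}$. Note $\overline{L_{lt}} = \{w \mid |w|_a \ge |w|_b\}$.

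First I would dispatch $L_{lt} \notin \mathrm{S}^{=}$. Suppose it were; let $n$ be the natural number given by Fact \ref{fact:dieu}. Choose $u = b$, $y = ab$, $v = \varepsilon$ (or an analogous pumping triple). Then for each $0 \le k \le n-1$ the string $u y^k v = b(ab)^k = b a b a b \cdots$ has one more $b$ than $a$, so $|{uy^kv}|_a = k < k+1 = |{uy^kv}|_b$, hence $uy^kv \in L_{lt}$. Fact \ref{fact:dieu} then forces $uy^n v = b(ab)^n \in L_{lt}$ for \emph{all} exponents — but already $u y^n v = b(ab)^n$ still has $|w|_a = n < n+1 = |w|_b$, so this particular instance is not yet a contradiction; I must instead pump past equality. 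The clean fix is to pick $u$ with a surplus of $b$'s that is exactly erodable: take $u = b^{n}$, $y = a$, $v = \varepsilon$. Then $uy^kv = b^n a^k$ lies in $L_{lt}$ precisely for $0 \le k \le n-1$ (since $|w|_a = k$, $|w|_b = n$), so $uv, uyv, \dots, uy^{n-1}v \in L_{lt}$, and Fact \ref{fact:dieu} yields $uy^{*}v \subseteq L_{lt}$, i.e. $b^n a^k \in L_{lt}$ for every $k$. Taking $k = n$ gives $b^n a^n$ with $|w|_a = |w|_b$, which is \emph{not} in $L_{lt}$ — contradiction.

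Second, $\overline{L_{lt}} = \{w \mid |w|_a \ge |w|_b\} \notin \mathrm{S}^{=}$: suppose it is, with $n$ from Fact \ref{fact:dieu}. This time take $u = a^n$, $y = b$, $v = \varepsilon$, so $uy^kv = a^n b^k$ has $|w|_a = n \ge k = |w|_b$ for $0 \le k \le n-1$, hence all of $uv,\dots,uy^{n-1}v \in \overline{L_{lt}}$; Fact \ref{fact:dieu} gives $a^n b^k \in \overline{L_{lt}}$ for all $k$, contradicting the case $k = n+1$ where $|w|_a = n < n+1 = |w|_b$. Combining the two parts: $L_{lt} \notin \mathrm{S}^{=}$ directly, and $L_{lt} \notin \mathrm{S}^{\neq}$ because otherwise $\overline{L_{lt}} \in \mathrm{S}^{=}$ by closure of $\mathrm{S}^{=}$ under complement (Fact \ref{fact:closure-properties-Seq-Sneq}, which the excerpt references), contradicting the second part. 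Hence $L_{lt} \notin (\mathrm{S}^{=} \cup \mathrm{S}^{\neq})$.

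The main obstacle is purely bookkeeping: choosing the pumping triples $(u,y,v)$ so that the $n$ initial witnesses genuinely lie in the claimed language while the pumped string $uy^n v$ (or $uy^{n+1}v$) crosses the $|w|_a = |w|_b$ threshold and thereby leaves it — the asymmetry between "$<$" and "$\ge$" must be handled on the correct side in each of the two cases, and one must make sure the iterated family really is $uy^*v$ and not something requiring two independent pumps. No delicate analytic estimate is needed; everything reduces to the combinatorial closure property of $\mathrm{S}^{=}$ stated in Fact \ref{fact:dieu}.
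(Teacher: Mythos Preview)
Your proof is correct and follows essentially the same approach as the paper: both use Fact~\ref{fact:dieu} with pumping triples that push the string across the $|w|_a = |w|_b$ threshold (the paper takes $u=\varepsilon,\,y=a,\,v=b^{n}$ and $u=a^{n},\,y=b,\,v=b$, which are cosmetic variants of your choices). One correction, though: the duality $L \in \mathrm{S}^{\neq} \Leftrightarrow \overline{L} \in \mathrm{S}^{=}$ holds \emph{by definition} of these complementary classes, not ``by closure of $\mathrm{S}^{=}$ under complement''; in fact Fact~\ref{fact:closure-properties-Seq-Sneq} states explicitly that neither $\mathrm{S}^{=}$ nor $\mathrm{S}^{\neq}$ is closed under complementation, so that justification should be rephrased.
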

\begin{proof}
      Suppose that $ L_{lt} \in $ S$ ^{=} $. Let $ u=\varepsilon  $, $ y=a
$, and $ v=b^{n} $.
      \begin{equation} b^{n}, ab^{n},\ldots,a^{n-1}b^{n} \in L_{lt} \end{equation}
      imply that $ a^{n}b^{n} \in $ $ L_{lt} $ by Fact \ref{fact:dieu}.
      Since this string is actually a member of $ \overline{L_{lt}} $,  we
have a contradiction.

      Similarly, suppose that $ L_{lt} \in $ S$ ^{\neq} $, or $
\overline{L_{lt}} \in $ S$ ^{=} $.
      Let $ u=a^{n} $, $ y=b $, and $ v=b $.
      \begin{equation} a^{n}b, a^{n}b^{2},\ldots,a^{n}b^{n} \in \overline{L_{lt}} \end{equation}
      imply that $ a^{n}b^{n+1} \in \overline{L_{lt}} $ by Fact
\ref{fact:dieu}.
      Since this string is actually a member of $ L_{lt} $,  we have a contradiction.
\end{proof}

\begin{corollary}
      \label{corollary:L_twosided-2}
      $ L_{lt} \in $ S$  \setminus ($S$ ^{=} \cup $  S$ ^{\neq} )$.
\end{corollary}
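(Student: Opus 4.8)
The plan is to separate the claim into its two parts. The exclusion $L_{lt}\notin S^{=}\cup S^{\neq}$ is exactly Lemma~\ref{lemma:L_twosided-2}, which we may assume, so the whole task reduces to proving $L_{lt}\in S$. For that I would produce a generalized finite automaton that recognizes $L_{lt}$ with a strict cutpoint and then appeal to Turakainen's characterization \cite{Tu69} of $S$ as the class of languages recognized by GFAs with cutpoint.

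The GFA I would use is the two-state machine $\mathcal{G}=(\{q_{1},q_{2}\},\{a,b\},\{A_{a},A_{b}\},v_{0},f)$ with
\[
	A_{a}=\begin{pmatrix} 1 & -1 \\ 0 & 1 \end{pmatrix},\qquad
	A_{b}=\begin{pmatrix} 1 & \phantom{-}1 \\ 0 & 1 \end{pmatrix},\qquad
	v_{0}=\begin{pmatrix} 0 \\ 1 \end{pmatrix},\qquad
	f=\begin{pmatrix} 1 & 0 \end{pmatrix}.
\]
The key observation is that $A_{a}$ and $A_{b}$ commute (each is $I$ plus a scalar multiple of the single nilpotent matrix $\left(\begin{smallmatrix}0&1\\0&0\end{smallmatrix}\right)$), so for every $w$ we have $A_{w}=A_{a}^{|w|_{a}}A_{b}^{|w|_{b}}=\begin{pmatrix}1 & |w|_{b}-|w|_{a}\\ 0 & 1\end{pmatrix}$ and hence $f_{\mathcal{G}}(w)=fA_{w}v_{0}=|w|_{b}-|w|_{a}$. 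Therefore $f_{\mathcal{G}}(w)>\tfrac12$ holds if and only if $|w|_{b}-|w|_{a}\ge 1$, i.e.\ if and only if $|w|_{a}<|w|_{b}$, i.e.\ if and only if $w\in L_{lt}$. This shows $L_{lt}$ is recognized by $\mathcal{G}$ with strict cutpoint $\tfrac12$, so $L_{lt}\in S$ by \cite{Tu69}; combined with Lemma~\ref{lemma:L_twosided-2} this gives the corollary.

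There is essentially no obstacle here. The only things to check carefully are that the two transition matrices commute — which is what makes $A_{w}$ depend only on the letter counts and not on the order of symbols — and that the choice of strict cutpoint $\tfrac12$ correctly places the ``balanced'' strings with $|w|_{a}=|w|_{b}$ (including $\varepsilon$) outside $L_{lt}$, which it does since then $f_{\mathcal{G}}(w)=0<\tfrac12$. A slightly less explicit alternative would be to derive $L_{lt}\in S$ from the stochasticity of $L_{eq}$ and $L_{neq}$ (already noted in the proof of Theorem~\ref{theorem:L_twosided-1}) together with closure properties of $S$, but the direct GFA construction is the cleanest route.
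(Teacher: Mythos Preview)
Your proof is correct. The exclusion part is identical to the paper's use of Lemma~\ref{lemma:L_twosided-2}, but your argument for $L_{lt}\in S$ takes a different route: the paper simply cites \cite{Ra92,Ka89}, relying on the fact that $L_{lt}$ can be recognized by a 2PFA with bounded error and that the 2PFA cutpoint languages coincide with $S$. Your explicit two-state GFA, with the commuting matrices yielding $f_{\mathcal{G}}(w)=|w|_{b}-|w|_{a}$, is more elementary and fully self-contained; it avoids invoking the 2PFA machinery altogether and makes the stochasticity of $L_{lt}$ visible in a single line. The paper's approach, on the other hand, fits into its broader narrative of using 2PFA bounded-error algorithms to certify membership in $S$ (as it does again for $L_{eq\cdot b}$ in Theorem~\ref{theorem:L_twosided-3}).
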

\begin{proof}
      This follows from Lemma \ref{lemma:L_twosided-2} and the fact that $ L_{lt} \in  $ S \cite{Ra92,Ka89}.
\end{proof}

\begin{theorem}
      \label{theorem:L_twosided-3}
      $ L_{eq \cdot b}= L_{eq} \cdot b^{+} $ $ \in $
      S$  \setminus ($S$ ^{=} \cup $  S$ ^{\neq} )$.
\end{theorem}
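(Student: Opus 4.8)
**Proof proposal for Theorem (the statement $L_{eq\cdot b} = L_{eq}\cdot b^{+} \in \mathrm{S}\setminus(\mathrm{S}^{=}\cup\mathrm{S}^{\neq})$).**

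The plan is to prove the three membership/non-membership facts separately, using the same toolbox employed for $L_{lt}$ and $L_{pal}$: Fact \ref{fact:dieu} (the pumping-type property of $\mathrm{S}^{=}$) to rule out $\mathrm{S}^{=}$ and $\mathrm{S}^{\neq}$, and a direct construction (or a reference to known closure/stochasticity results) to establish membership in $\mathrm{S}$. Recall that $L_{eq}=\{w\in\{a,b\}^{*}\mid |w|_{a}=|w|_{b}\}$, so $L_{eq\cdot b}$ consists of strings $ub^{k}$ with $k\ge 1$ and $u\in L_{eq}$; equivalently $w\in L_{eq\cdot b}$ iff $w$ ends in $b$ and there is a proper prefix $u$ of $w$ with the remaining suffix a nonempty block of $b$'s and $|u|_{a}=|u|_{b}$.

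First I would show $L_{eq\cdot b}\notin \mathrm{S}^{=}$. Suppose it were, and let $n$ be the natural number granted by Fact \ref{fact:dieu}. Choose $u=b$, $y=a$, $v=b^{n+1}$ — or more carefully, pick a ``seed'' in the language and a pump $y$ that stays inside for the first $n$ powers but escapes afterward. A convenient choice: take $u=\varepsilon$, $y=ab$, $v=b$. Then $uy^{j}v = (ab)^{j}b$, and $(ab)^{j}$ has equal $a$'s and $b$'s and $(ab)^{j}b$ ends in $b$ with $b$ a nonempty trailing block, so $(ab)^{j}b\in L_{eq\cdot b}$ for every $j\ge 0$; this gives membership for all powers, which is the wrong direction. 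Instead I want a family that is inside $L_{eq\cdot b}$ for $j=0,\dots,n-1$ but outside for $j=n$. Take $u=a^{n}b^{n}$, $y=b$, $v=\varepsilon$: then $uy^{j}v=a^{n}b^{n+j}$ which is in $L_{eq\cdot b}$ precisely when $j\ge 1$ (since we need a nonempty trailing $b$-block after an equal-count prefix, and $a^{n}b^{n}$ works as the prefix $u$). Hmm, but then $j=0$ fails. The clean instance is $u=a^{n}b^{n-1}$, $y=b$, $v=\varepsilon$: then $uy^{j}v = a^{n}b^{n-1+j}$, which lies in $L_{eq\cdot b}$ for $j=1,\dots,n$ (prefix $a^{n}b^{n}$, suffix $b^{j-1}$... need $j\ge 1$ and... careful). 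The right setup is to arrange $uv,uyv,\dots,uy^{n-1}v\in L_{eq\cdot b}$ while $uy^{n}v\notin L_{eq\cdot b}$; I would take $v$ to supply a trailing $a$ that eventually unbalances or breaks the trailing-$b$ requirement. Concretely: $u=a^{n}b^{n}$, $y=b$, $v=a$ gives $uy^{j}v=a^{n}b^{n+j}a$ which ends in $a$, hence is \emph{never} in $L_{eq\cdot b}$ — wrong again. The correct instance, parallel to the $L_{lt}$ proof, is: $u=a^{n}$, $y=b$, $v=\varepsilon$, so $uy^{j}v=a^{n}b^{j}\in L_{eq\cdot b}$ iff $j>n$ (prefix $a^{n}b^{n}$, trailing block $b^{j-n}$, need $j-n\ge 1$). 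So $a^{n}b^{j}\notin L_{eq\cdot b}$ for $j=1,\dots,n$ but $a^{n}b^{n+1}\in L_{eq\cdot b}$; applying Fact \ref{fact:dieu} to $\overline{L_{eq\cdot b}}$ (which would be in $\mathrm{S}^{=}$ if $L_{eq\cdot b}\in\mathrm{S}^{\neq}$) with the witnesses $a^{n}b,\dots,a^{n}b^{n}\in\overline{L_{eq\cdot b}}$ forces $a^{n}b^{n+1}\in\overline{L_{eq\cdot b}}$, a contradiction. This handles $L_{eq\cdot b}\notin\mathrm{S}^{\neq}$. A symmetric choice of $u,y,v$ — one that is inside $L_{eq\cdot b}$ for the first $n$ powers and escapes at the $n$-th — rules out $\mathrm{S}^{=}$ directly; the natural candidate is $u=\varepsilon$, $y=ab$, $v=b$ modified so that after $n$ repetitions the $a$-count strictly exceeds, e.g. $u=\varepsilon$, $y=a$, $v=ab$ giving $a^{j}\cdot ab = a^{j+1}b$, in $L_{eq\cdot b}$ only for $j=0$... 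I will pin down the exact triple in the writeup; the point is that $L_{eq\cdot b}$ fails the Dieu condition in both directions.

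Finally, for $L_{eq\cdot b}\in\mathrm{S}$: since $L_{eq}\in\mathrm{S}$ and $\overline{L_{eq}}\in\mathrm{S}$ (both are standard, e.g.\ via Fact~\ref{fact:RLpropersubsetSLneqSLeq} and the $L_{neq}/L_{eq}$ discussion, or directly by a GFA counting $|w|_a-|w|_b$), and $b^{+}$ is regular hence stochastic, I would build a GFA for $L_{eq\cdot b}$ by the usual concatenation-with-a-regular-language argument: guess the split point nondeterministically in the linear (GFA) sense, run the $L_{eq}$-counting GFA on the prefix and a small DFA on the $b^{+}$ suffix, and combine the acceptance values so that the cutpoint is crossed exactly when some split works. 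The main obstacle is the bookkeeping for $\mathrm{S}$-membership — GFAs are not closed under concatenation in general, so I cannot simply invoke a closure lemma; instead I must exploit the special structure, namely that the second factor $b^{+}$ is \emph{suffix-testable} (membership of $w$ in $L_{eq\cdot b}$ depends only on the position of the last $a$ in $w$ and the parity/difference of counts up to there). This lets me encode the whole acceptance condition into a single GFA of size linear in that of the $L_{eq}$-machine, analogous to the construction behind Corollary~\ref{corollary:L_twosided-2} for $L_{lt}$. Once the GFA is in hand, $L_{eq\cdot b}\in\mathrm{S}$ follows from \cite{Tu69}, and combining with the two non-membership results gives $L_{eq\cdot b}\in\mathrm{S}\setminus(\mathrm{S}^{=}\cup\mathrm{S}^{\neq})$.
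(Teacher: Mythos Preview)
Your argument for $L_{eq\cdot b}\notin\mathrm{S}^{\neq}$ is correct and essentially the paper's (the paper takes $u=a^{n}$, $y=b$, $v=b$; your variant with $v=\varepsilon$ works just as well). The other two parts, however, are left as genuine gaps.

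For $L_{eq\cdot b}\notin\mathrm{S}^{=}$ you never land on a working triple; note also that your last attempt $u=\varepsilon$, $y=a$, $v=ab$ yields $a^{j+1}b$, which is \emph{never} in $L_{eq\cdot b}$ (not even for $j=0$, since $ab$ has no equal-count prefix with a nonempty trailing $b$-block). The paper's choice is $u=\varepsilon$, $y=a$, $v=b^{n}$: then $uy^{j}v=a^{j}b^{n}$, and for $j=0,1,\dots,n-1$ the prefix $a^{j}b^{j}\in L_{eq}$ leaves a trailing block $b^{n-j}\in b^{+}$, so $a^{j}b^{n}\in L_{eq\cdot b}$; but $a^{n}b^{n}\notin L_{eq\cdot b}$ because the only equal-count prefix is the whole string, leaving an empty suffix. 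Fact~\ref{fact:dieu} then gives the contradiction.

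For $L_{eq\cdot b}\in\mathrm{S}$ your GFA sketch runs straight into the obstacle you identify --- $\mathrm{S}$ is not closed under concatenation --- and the ``suffix-testable'' remark does not by itself produce a machine. The paper avoids GFAs entirely by first proving the characterization: $w\in L_{eq\cdot b}$ iff (i) $w$ ends in $b$, (ii) $w\in L_{lt}=\{w:|w|_{a}<|w|_{b}\}$, and (iii) the longest prefix $u$ of $w$ ending in $a$ (or $u=\varepsilon$ if $w\in b^{*}$) lies in $\overline{L_{lt}}$. Each of (i)--(iii) is decidable by a 2PFA with bounded error (the first is regular; the other two are $a$/$b$-count comparisons on deterministically locatable substrings), and bounded-error 2PFAs are closed under intersection, so by \cite{Ra92,Ka89} one gets $L_{eq\cdot b}\in\mathrm{S}$. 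This route is shorter and sidesteps the concatenation-closure problem altogether.
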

\begin{proof}
      The proof of $ L_{eq \cdot b} \notin ($S$ ^{=} \cup $  S$ ^{\neq} ) $ uses the
      setup presented in Lemma \ref{lemma:L_twosided-2}, i.e.,
      \begin{enumerate}
              \item select $ u=\varepsilon  $, $ y=a $, and $ v=b^{n} $ to 
              contradict with $ L_{eq \cdot b} \in $ S$ ^{=} $,
              \item select $ u=a^{n} $, $ y=b $, and $ v=b $ to contradict with $ L_{eq \cdot b} \in $ S$ ^{\neq} $.
      \end{enumerate}
      Any string $ w $ is a member of $ L_{eq \cdot b} $ if and only if it
      has the following three properties:
      \begin{itemize}
              \item $ w $ ends with $ b $.
              \item $ w \in L_{lt} $.
              \item Let $ u $ be the longest prefix of $ w $ ending with $ a $
              ($ u = \varepsilon $ if $ w \in \{b^{*}\} $). Then, $ u \in \overline{L_{lt}} $.
      \end{itemize}
      Since these properties can be checked easily by a 2PFA with bounded error,
      $ L_{eq \cdot b} \in $ S \cite{Ra92,Ka89}.
\end{proof}

Now, we show the stochasticity of an important family of languages.

The \textit{word problem} for a group is the problem of deciding whether or not a product of group elements 
is equal to the identity element \cite{LZ77}.
Let $ \mathcal{G}^{k}=(G,\circ) $ be a finitely generated free group with a basis
\begin{equation} 
	\Sigma =\{\sigma_{1},\ldots,\sigma_{k},\sigma_{1}^{-1},\ldots,\sigma_{k}^{-1}\},
\end{equation}
where $ k \in \mathbb{Z}^{+} $ is the rank of $ \mathcal{G}^{k} $.
$ L_{wp}(\mathcal{G}^{k}) \subseteq \Sigma^{*} $ is the language defined as
\begin{equation} L_{wp}(\mathcal{G}^{k})=\{ w \mid w_{i} \in \Sigma, 1 \le i \le |w|,
w_{1} \circ \cdots \circ w_{|w|} = \imath \}, \end{equation}
where $ \imath \in G $ is the identity element of $ \mathcal{G}^{k} $.

\begin{fact}
      \label{fact:isomorphic-free-group}
      (Page 1 of \cite{LS77})
      Let $ \mathcal{G}_{1}^{k_{1}} $ and $ \mathcal{G}_{2}^{k_{2}} $ be finitely generated free groups.
      Then $ \mathcal{G}_{1}^{k_{1}} $ and $ \mathcal{G}_{2}^{k_{2}} $ are 
      isomorphic if and only if $ k_{1} = k_{2} $.
\end{fact}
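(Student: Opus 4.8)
The plan is to prove the two implications separately, with the reverse implication being an immediate consequence of the universal property of free groups and the forward implication — that the rank is an isomorphism invariant — being the substantive part, which I would handle by passing to a quotient in which the rank becomes visible.

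For the ``if'' direction, suppose $ k_{1} = k_{2} =: k $, with $ \mathcal{G}_{1}^{k} $ free on a basis $ B_{1} $ and $ \mathcal{G}_{2}^{k} $ free on a basis $ B_{2} $, where $ |B_{1}| = |B_{2}| = k $. Fix any bijection $ \beta : B_{1} \rightarrow B_{2} $. By the universal property of free groups, $ \beta $ extends uniquely to a homomorphism $ \varphi : \mathcal{G}_{1}^{k} \rightarrow \mathcal{G}_{2}^{k} $, and $ \beta^{-1} $ extends uniquely to a homomorphism $ \psi : \mathcal{G}_{2}^{k} \rightarrow \mathcal{G}_{1}^{k} $. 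The composite $ \psi \circ \varphi $ restricts to the identity on $ B_{1} $, and the identity homomorphism of $ \mathcal{G}_{1}^{k} $ does as well; by the uniqueness clause of the universal property these two homomorphisms coincide. The same argument shows $ \varphi \circ \psi $ is the identity, so $ \varphi $ is an isomorphism.

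For the ``only if'' direction, suppose $ \theta : \mathcal{G}_{1}^{k_{1}} \rightarrow \mathcal{G}_{2}^{k_{2}} $ is an isomorphism. The idea is to count the homomorphisms from each group into the two-element group $ \mathbb{Z}/2\mathbb{Z} $. Since $ \mathcal{G}_{i}^{k_{i}} $ is free on a basis of size $ k_{i} $, the universal property says that a homomorphism to $ \mathbb{Z}/2\mathbb{Z} $ may be prescribed arbitrarily, and is then uniquely determined, by its values on the $ k_{i} $ basis elements; hence $ |\mathrm{Hom}(\mathcal{G}_{i}^{k_{i}}, \mathbb{Z}/2\mathbb{Z})| = 2^{k_{i}} $. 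Precomposition with $ \theta $ is a bijection from $ \mathrm{Hom}(\mathcal{G}_{2}^{k_{2}}, \mathbb{Z}/2\mathbb{Z}) $ to $ \mathrm{Hom}(\mathcal{G}_{1}^{k_{1}}, \mathbb{Z}/2\mathbb{Z}) $, so $ 2^{k_{1}} = 2^{k_{2}} $ and therefore $ k_{1} = k_{2} $. An alternative for this direction is to abelianize: the abelianization of a free group of rank $ k $ is $ \mathbb{Z}^{k} $, so the isomorphism $ \theta $ descends to an isomorphism $ \mathbb{Z}^{k_{1}} \cong \mathbb{Z}^{k_{2}} $, which forces $ k_{1} = k_{2} $ (for instance, tensor with $ \mathbb{Q} $ and compare dimensions of the resulting vector spaces).

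The only mild obstacle is carrying out the counting argument rigorously, which amounts to a careful double invocation of the universal property of free groups — both the existence of a homomorphism realizing each assignment of basis images and its uniqueness; once that property is available this is entirely routine, so I expect no real difficulty here.
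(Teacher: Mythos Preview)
Your proof is correct and entirely standard. However, the paper does not prove this statement at all: it is stated as a \textbf{Fact} with a citation to page~1 of \cite{LS77} (Lyndon--Schupp), and no argument is given. So there is no ``paper's own proof'' to compare against; you have supplied a proof where the paper simply quotes the literature.
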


\begin{corollary}
      \label{corollary:isomorphic-free-group}
      $ L_{wp}(\mathcal{G}_{1}^{k_{1}}) $ and $ L_{wp}(\mathcal{G}_{2}^{k_{2}}) $ are isomorphic
      if and only if $ k_{1} = k_{2} $,
      where $ \mathcal{G}_{1}^{k_{1}} $ and $ \mathcal{G}_{2}^{k_{2}} $ are finitely generated free groups.
\end{corollary}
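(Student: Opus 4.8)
The plan is to derive the statement directly from Fact \ref{fact:isomorphic-free-group} together with a careful bookkeeping of what ``isomorphic'' means for the associated languages. First I would fix notation: let $\mathcal{G}_1^{k_1}$ have basis $\Sigma_1=\{\sigma_1,\ldots,\sigma_{k_1},\sigma_1^{-1},\ldots,\sigma_{k_1}^{-1}\}$ and let $\mathcal{G}_2^{k_2}$ have basis $\Sigma_2=\{\tau_1,\ldots,\tau_{k_2},\tau_1^{-1},\ldots,\tau_{k_2}^{-1}\}$, so that the word problems $L_{wp}(\mathcal{G}_1^{k_1})\subseteq\Sigma_1^{*}$ and $L_{wp}(\mathcal{G}_2^{k_2})\subseteq\Sigma_2^{*}$ are languages over alphabets of sizes $2k_1$ and $2k_2$ respectively. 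Two languages being ``isomorphic'' here should mean that there is a bijective renaming of the underlying alphabets — a bijection $\phi:\Sigma_1\to\Sigma_2$ extended to a monoid isomorphism $\Sigma_1^{*}\to\Sigma_2^{*}$ — carrying one language onto the other. The first thing I would check is that such an alphabet bijection forces $|\Sigma_1|=|\Sigma_2|$, i.e. $k_1=k_2$, which already gives the forward direction ``$L_{wp}(\mathcal{G}_1^{k_1})\cong L_{wp}(\mathcal{G}_2^{k_2})\Rightarrow k_1=k_2$'' almost for free.

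For the reverse direction, suppose $k_1=k_2=:k$. By Fact \ref{fact:isomorphic-free-group} the free groups $\mathcal{G}_1^{k}$ and $\mathcal{G}_2^{k}$ are isomorphic as groups; let $\Psi:\mathcal{G}_1^{k}\to\mathcal{G}_2^{k}$ be a group isomorphism. The key point is that the isomorphism $\Psi$ need not send basis elements to basis elements, so I cannot simply read off an alphabet bijection from $\Psi$. Instead I would argue more crudely: since $k_1=k_2$, there is a trivial bijection $\phi$ matching $\sigma_i\mapsto\tau_i$ and $\sigma_i^{-1}\mapsto\tau_i^{-1}$; this $\phi$ extends to a monoid isomorphism of the free monoids, and under $\phi$ a word $w=w_1\cdots w_m$ over $\Sigma_1$ multiplies to the identity of $\mathcal{G}_1^{k}$ if and only if $\phi(w)$ multiplies to the identity of $\mathcal{G}_2^{k}$ — this is immediate because the defining relations of a free group on $k$ generators (only $\sigma_i\sigma_i^{-1}=\sigma_i^{-1}\sigma_i=\imath$, no others) are preserved symbol-for-symbol by the relabeling. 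Hence $\phi(L_{wp}(\mathcal{G}_1^{k}))=L_{wp}(\mathcal{G}_2^{k})$, establishing that the two languages are isomorphic.

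The main obstacle, and the place where care is required, is pinning down the precise meaning of ``isomorphic languages'' that the paper intends, and then making sure the forward implication actually uses it correctly: I must be certain that a language isomorphism is required to be induced by a bijection of the alphabets (so that a cardinality count applies), rather than some weaker notion of equivalence. If the intended notion is indeed the alphabet-relabeling one, the argument above is complete; I would phrase the write-up so that the forward direction is ``an isomorphism of the languages induces a bijection of the alphabets, hence $2k_1=2k_2$,'' and the reverse direction is ``$k_1=k_2$ gives the canonical relabeling, which preserves the word-problem condition.'' No nontrivial computation is involved; the only subtlety is the definitional one just noted, plus the harmless observation that Fact \ref{fact:isomorphic-free-group} is not even strictly needed for the reverse direction once one works with the canonical relabeling directly.
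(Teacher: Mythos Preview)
Your argument is correct and essentially matches the paper's intent: the paper states this corollary without proof, treating it as an immediate consequence of Fact~\ref{fact:isomorphic-free-group}, and your write-up makes that inference explicit. Your observation that the reverse direction can be handled by the canonical relabeling alone (so that Fact~\ref{fact:isomorphic-free-group} is not strictly required there) is a nice sharpening, and your care about the definitional point---that ``isomorphic languages'' must mean alphabet-relabeling equivalence---is exactly the right caveat to flag, since the paper leaves this implicit.
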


As a generic name, $ L_{wp}(k) $ can be used instead of $ L_{wp}(\mathcal{G}^{k}) $
due to Corollary \ref{corollary:isomorphic-free-group}, where $ k \in \mathbb{Z}^{+} $.

\begin{fact}
      \label{fact:word-problem-one-SL}
      \cite{Tu81}
      $ L_{wp}(1) \in $ S.
\end{fact}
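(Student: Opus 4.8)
The plan is to recognize $ L_{wp}(1) $ by an explicit generalized finite automaton (GFA) with a strict cutpoint, and then to appeal to Turakainen's theorem \cite{Tu69} that GFAs with cutpoint recognize exactly the stochastic languages. First I would note that the rank-$ 1 $ free group is just $ (\mathbb{Z},+) $: writing $ \Sigma=\{\sigma_{1},\sigma_{1}^{-1}\} $ and $ d(w)=|w|_{\sigma_{1}}-|w|_{\sigma_{1}^{-1}} $, the homomorphism sending $ \sigma_{1} $ to $ +1 $ and $ \sigma_{1}^{-1} $ to $ -1 $ identifies $ L_{wp}(1)=\{w\in\Sigma^{*}\mid d(w)=0\} $, i.e.\ (up to renaming letters) the language $ L_{eq} $.

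The key idea is to track $ d(w) $ with \emph{hyperbolic}, rather than rotational, dynamics, so that the event $ d(w)=0 $ stays separated from $ d(w)\neq 0 $ by a genuine gap. Fix a real $ r>1 $ (say $ r=2 $) and consider the $ 2 $-state GFA $ \mathcal{G}=(Q,\Sigma,\{A_{\sigma}\},v_{0},f) $ given by the diagonal data
\begin{equation}
	A_{\sigma_{1}}=\left(\begin{array}{cc} r & 0 \\ 0 & r^{-1}\end{array}\right),\quad
	A_{\sigma_{1}^{-1}}=\left(\begin{array}{cc} r^{-1} & 0 \\ 0 & r\end{array}\right),\quad
	v_{0}=\left(\begin{array}{c} 1 \\ 1\end{array}\right),\quad f=(-1,-1).
\end{equation}
Because the $ A_{\sigma} $ commute, $ A_{w}v_{0}=(r^{d(w)},r^{-d(w)})^{\trans} $, hence $ f_{\mathcal{G}}(w)=-\bigl(r^{d(w)}+r^{-d(w)}\bigr) $. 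Thus $ f_{\mathcal{G}}(w)=-2 $ when $ d(w)=0 $, while, since $ t\mapsto t+t^{-1} $ is increasing on $ [1,\infty) $ and $ r^{|d(w)|}\ge r $ whenever $ d(w)\neq 0 $, we get $ f_{\mathcal{G}}(w)\le-(r+r^{-1})<-2 $ for every $ w\notin L_{wp}(1) $. Taking the cutpoint $ \lambda=-\tfrac{1}{2}\bigl(r+r^{-1}+2\bigr)\in\bigl(-(r+r^{-1}),-2\bigr) $ then yields $ \{w\mid f_{\mathcal{G}}(w)>\lambda\}=L_{wp}(1) $, so by \cite{Tu69} this language is stochastic, i.e.\ $ L_{wp}(1)\in $ S.

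The only real content is the choice of exponential weights: with a rotation by an angle incommensurable with $ \pi $ one would only obtain $ f_{\mathcal{G}}(w)\to 1 $ along sequences with $ |d(w)|\to\infty $, so no strict cutpoint below $ 1 $ could isolate $ d(w)=0 $, whereas the hyperbolic choice keeps the separating gap open; the rest (commutativity of the transition matrices, monotonicity of $ t+t^{-1} $, and the GFA-to-PFA conversion of \cite{Tu69}) is routine verification. An alternative route, more in the style used elsewhere in the thesis, would be to recall that $ L_{eq} $ is recognized by a 2PFA with bounded error \cite{Fr81} and that 2PFAs with cutpoint recognize exactly S \cite{Ka89}.
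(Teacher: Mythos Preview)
Your argument is correct: the GFA with diagonal matrices $\mathrm{diag}(r,r^{-1})$ and $\mathrm{diag}(r^{-1},r)$ yields $f_{\mathcal{G}}(w)=-(r^{d(w)}+r^{-d(w)})$, which equals $-2$ exactly when $d(w)=0$ and is at most $-(r+r^{-1})<-2$ otherwise, so any cutpoint in the open interval $(-(r+r^{-1}),-2)$ works, and Turakainen's GFA-to-PFA conversion (Fact~\ref{fact:GPFA-PFA}) finishes the job. The alternative route via \cite{Fr81} and \cite{Ka89} that you mention at the end is also valid.

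The paper, however, does not prove this statement at all: it is recorded as a \emph{Fact} with a bare citation to \cite{Tu81}. So there is no ``paper's own proof'' to compare against; your proposal supplies a self-contained argument where the thesis simply quotes the literature. Your hyperbolic-dynamics construction is a clean, explicit witness and has the pleasant feature of using only rational entries (take $r=2$), so it even shows $L_{wp}(1)\in\mathrm{S}_{\mathbb{Q}}$ directly, paralleling what the thesis later does for $L_{wp}(k)$ with $k\ge 2$ via $\mathrm{SO}_{3}(\mathbb{Q})$.
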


\begin{fact}
      \label{fact:free-group-co-NMCL}
      \cite{BP02}
      $ L_{wp}(k) \in $ coNMCL, the class of languages whose complements are in NMCL,
       for any $ k \in \mathbb{Z}^{+} $.
\end{fact}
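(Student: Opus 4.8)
The plan is to prove the equivalent statement that $\overline{L_{wp}(k)}$ lies in NMCL; since coNMCL is by definition the class of languages whose complements lie in NMCL, this suffices. Recall that an MCNQFA is an MCQFA $\mathcal{M}$ run under positive one-sided unbounded error, so that an input $w$ is accepted exactly when $f_{\mathcal{M}}(w)>0$. Hence it is enough to build an MCQFA $\mathcal{M}$ over the alphabet $\Sigma=\{\sigma_{1},\ldots,\sigma_{k},\sigma_{1}^{-1},\ldots,\sigma_{k}^{-1}\}$ satisfying
\begin{equation}
	f_{\mathcal{M}}(w) = 0 \quad\Longleftrightarrow\quad w_{1}\circ\cdots\circ w_{|w|}=\imath ,
\end{equation}
i.e. exactly when $w\in L_{wp}(k)$; then $f_{\mathcal{M}}(w)>0$ characterizes $\overline{L_{wp}(k)}$, which is what we want.

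The key ingredient I would import is a \emph{faithful} finite-dimensional unitary representation $\rho\colon\mathcal{G}^{k}\to SO(3)\subseteq U(3)$. This exists because (i) a free group of rank $2$ admits a faithful representation into $SO(3)$ — the classical fact underlying the Banach--Tarski paradox that two suitable rotations of $\mathbb{R}^{3}$ about perpendicular axes (e.g. both through the angle $\arccos\tfrac{3}{5}$) generate a free group — and (ii) a free group of rank $2$ contains free subgroups of every finite rank $k\ge 1$, so one composes an embedding $\mathcal{G}^{k}\hookrightarrow\mathcal{G}^{2}$ with such a faithful $\mathcal{G}^{2}\hookrightarrow SO(3)$. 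Since the only scalar matrix in $SO(3)$ is $I$ (because $-I$ has determinant $-1$), faithfulness yields that $\rho(g)$ is scalar only when $g=\imath$.

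Given such a $\rho$, I would take $\mathcal{M}$ with state space $\mathbb{C}^{3}\otimes\mathbb{C}^{3}$ (nine internal states), distinguished basis vector $\ket{q_{1}}$, and $Q_{a}=Q\setminus\{q_{1}\}$, so that $P_{a}=I-\ket{q_{1}}\bra{q_{1}}$. Let $\ket{\Phi}=\tfrac{1}{\sqrt{3}}\sum_{i=1}^{3}\ket{i}\otimes\ket{i}$ and fix a unitary $V$ with $V\ket{q_{1}}=\ket{\Phi}$. Set $U_{\cent}=V$; for every input letter $\sigma\in\Sigma$ set $U_{\sigma}=\rho(\sigma)\otimes I$ (so $U_{\sigma_{i}^{-1}}=\rho(\sigma_{i})^{\dagger}\otimes I$, which is exactly right for processing the unreduced word letter by letter); and set $U_{\dollar}=V^{\dagger}$, whose role is to rotate the direction $\ket{\Phi}$ back onto the basis vector $\ket{q_{1}}$ so that a coordinate measurement can detect it. All these operators are unitary, so $\mathcal{M}$ is automatically well-formed. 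After reading $\tilde w$ the state is $V^{\dagger}(\rho(w)\otimes I)V\ket{q_{1}}$, and using $\bra{\Phi}(\rho(w)\otimes I)\ket{\Phi}=\tfrac{1}{3}\tr(\rho(w))$ one obtains
\begin{equation}
	f_{\mathcal{M}}(w) = 1 - \bigl| \bra{q_{1}} V^{\dagger}(\rho(w)\otimes I) V \ket{q_{1}}\bigr|^{2} = 1 - \tfrac{1}{9}\,\bigl|\tr(\rho(w))\bigr|^{2}.
\end{equation}
Since $\rho(w)$ is unitary, $|\tr(\rho(w))|=3$ forces all eigenvalues of $\rho(w)$ to coincide, i.e. $\rho(w)$ to be scalar, which by the choice of $\rho$ happens precisely when $w$ reduces to $\imath$. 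Hence $f_{\mathcal{M}}(w)=0$ iff $w\in L_{wp}(k)$, so $\overline{L_{wp}(k)}\in$ NMCL and $L_{wp}(k)\in$ coNMCL.

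The routine parts — the unitarity of the $U_{\sigma}$, the elementary identity for $\bra{\Phi}(\rho(w)\otimes I)\ket{\Phi}$, and the equality case of $|\tr|\le 3$ — are straightforward. The only genuinely non-elementary step, and the one I expect to be the main obstacle, is obtaining the faithful unitary representation $\rho$: it rests on the existence of free subgroups of $SO(3)$ (equivalently $SU(2)$) and on the small but essential observation that, inside $SO(3)$, the identity is the sole scalar matrix, which is what converts ``$\rho(w)=I$'' into a clean on/off condition detectable by a measure-once quantum automaton.
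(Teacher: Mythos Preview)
The paper does not supply a proof here; it records the statement as a fact from \cite{BP02}. Your argument is correct and complete: the faithful embedding of $\mathcal{G}^{k}$ into $SO(3)$ furnishes unitary letter-matrices, and the maximally-entangled-state trick turns the acceptance probability into $1-\tfrac{1}{9}\lvert\tr\rho(w)\rvert^{2}$, which vanishes exactly when $\rho(w)=I$, i.e.\ when $w\in L_{wp}(k)$.

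It is worth comparing your construction with the only related one the paper actually carries out, namely the GFA in the proof of Theorem~\ref{theorem:word-problem-is-stochastic}. There the same $SO(3)$ representation is used, but with acceptance value $\langle e_{1}\,|\,\rho(w)\,|\,e_{1}\rangle$; this tests whether $\rho(w)$ fixes the particular direction $e_{1}$, which for $SO(3)$ is strictly weaker than $\rho(w)=I$ (every rotation has an axis), so that argument implicitly relies on the specific free subgroup containing no nontrivial rotation about $e_{1}$. Your trace criterion $\lvert\tr\rho(w)\rvert=3$ characterises $\rho(w)=I$ inside $SO(3)$ for \emph{any} faithful embedding, at the price of tensoring and thus squaring the state-space dimension. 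That is exactly the extra idea needed to upgrade a GFA/$S^{=}$-style argument to a genuine MCQFA, and you isolated it cleanly.
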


\begin{corollary}
      \label{corollary:free-group-SLeq}
      $ L_{wp}(k) \in $ S$ ^{=} $ for any $ k \in \mathbb{Z}^{+} $.
\end{corollary}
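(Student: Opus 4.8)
The statement should follow immediately by chaining three facts already available in the excerpt, with no new construction required. The plan is to start from the known membership of the complementary language in a weak quantum class, lift that membership along the inclusion of quantum classes, and then invoke the characterization of $\mathrm{S}^=$ in terms of nondeterministic quantum automata.

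\textbf{Step 1.} First I would record that, by Fact~\ref{fact:free-group-co-NMCL}, $L_{wp}(k)\in$ coNMCL for every $k\in\mathbb{Z}^+$; equivalently, $\overline{L_{wp}(k)}\in$ NMCL.

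\textbf{Step 2.} Next, since an MCQFA is a special case of a RT-KWQFA, and hence a nondeterministic MCQFA (MCNQFA) is a special case of a RT-NQFA, Fact~\ref{fact:MCLpropersubsetQAL} gives NMCL~$\subseteq$~NQAL. Applying this to the language obtained in Step~1 yields $\overline{L_{wp}(k)}\in$ NQAL, i.e.\ $L_{wp}(k)\in$ coNQAL.

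\textbf{Step 3.} Finally, by Corollary~\ref{corollary:SLeq-equal-coNQAL} we have $\mathrm{S}^{=}=$ coNQAL, so from Step~2 we conclude $L_{wp}(k)\in\mathrm{S}^{=}$ for every $k\in\mathbb{Z}^+$, which is exactly the claim.

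\textbf{Main obstacle.} There is essentially no technical obstacle here: the corollary is a bookkeeping consequence of Fact~\ref{fact:free-group-co-NMCL}, Fact~\ref{fact:MCLpropersubsetQAL}, and Corollary~\ref{corollary:SLeq-equal-coNQAL}. The only point deserving a moment's care is that all the classes involved (NMCL, NQAL, coNQAL, $\mathrm{S}^=$) are taken over the same input alphabet $\Sigma=\{\sigma_1,\dots,\sigma_k,\sigma_1^{-1},\dots,\sigma_k^{-1}\}$, so that the complement operation and the inclusions apply verbatim; this is automatic from the definitions. (One could alternatively route through $L_{wp}(k)\in$ coNMCL $\subseteq$ coNQAL directly, avoiding explicit mention of the complementary language, but the two phrasings are identical in content.)
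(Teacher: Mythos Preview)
Your proposal is correct and matches the paper's intended reasoning: the corollary is stated immediately after Fact~\ref{fact:free-group-co-NMCL} without an explicit proof, and the implicit chain is precisely the one you spell out, namely coNMCL $\subseteq$ coNQAL $=$ S$^{=}$ via Fact~\ref{fact:MCLpropersubsetQAL} and Corollary~\ref{corollary:SLeq-equal-coNQAL}.
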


We now provide a proof of the following theorem.

\begin{theorem}
      \label{theorem:word-problem-is-stochastic}
      $ L_{wp}(k) \in $ S for any $ k \ge 2 $.
\end{theorem}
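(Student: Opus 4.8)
The plan is to reduce the case of rank $k$ to the already-known case of rank $1$ (Fact~\ref{fact:word-problem-one-SL}, $L_{wp}(1)\in\mathrm{S}$), using the fact that $\mathrm{S}$ is closed under the operations we need, and exploiting a standard embedding of the free group $\mathcal{G}^k$ into a direct product of copies of the integers $\mathbb{Z}=\mathcal{G}^1$ \emph{up to distinguishing the identity element}. The key classical idea: a word $w$ over $\Sigma=\{\sigma_1,\ldots,\sigma_k,\sigma_1^{-1},\ldots,\sigma_k^{-1}\}$ represents the identity in the free group if and only if, after freely reducing $w$, nothing remains. Freely reduced length cannot be detected by abelian counting alone, but it can be detected by a \emph{random homomorphism into a matrix group} (the classic trick behind $L_{wp}(k)\in\mathrm{S}$): embed $\mathcal{G}^k$ faithfully into, say, $SL_2(\mathbb{Z})$ (for $k=2$ this is classical, e.g.\ via the matrices $\left(\begin{smallmatrix}1&2\\0&1\end{smallmatrix}\right)$ and $\left(\begin{smallmatrix}1&0\\2&1\end{smallmatrix}\right)$ generating a free subgroup, and every $\mathcal{G}^k$ embeds in $\mathcal{G}^2$), and then build a GFA whose transition matrices realize this homomorphism so that the acceptance value is a fixed positive quantity exactly when the product matrix is the identity.

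Concretely, first I would fix a faithful homomorphism $\varphi:\mathcal{G}^k\hookrightarrow SL_2(\mathbb{Z})$, sending each generator $\sigma_i$ to an integer matrix $M_i$ and $\sigma_i^{-1}$ to $M_i^{-1}$; a word $w$ is in $L_{wp}(k)$ iff $\varphi(w_1)\cdots\varphi(w_{|w|}) = I$. Next I would turn the matrix entries into the state of a GFA: the integer entries grow, so I cannot store them directly in a finite-state device, but I can use the standard "random projection" argument from the stochastic-languages literature (as used for $L_{wp}(1)$ and in \cite{Tu81,BC01B}): choose a finite set of real parameters (e.g.\ evaluate the matrix product against a randomly chosen vector, or take a random linear combination of the $4$ entries of the product matrix) so that the resulting linear functional is zero with certainty when the product is $I$ and is nonzero with positive probability otherwise. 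This is encodable as a GFA because matrix multiplication is linear: the state vector $v_0$ carries the flattened $2\times 2$ identity, each $A_\sigma$ acts as $M_\sigma \otimes (\text{something})$ on the relevant coordinates, and the final vector $f$ reads off the discrepancy from $I$. Then I would invoke the equivalence of GFAs and RT-PFAs with cutpoint (stated in the preliminaries, \cite{Tu69}, "RT-PFAs, GFAs, and 2PFAs recognize the same class S with cutpoint") to conclude $L_{wp}(k)\in\mathrm{S}$.

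Alternatively — and this may be the cleaner route to write up — I would avoid re-deriving the projection machinery and instead \emph{directly reduce to Fact~\ref{fact:word-problem-one-SL}} plus closure properties of $\mathrm{S}$: it is a classical fact (Turakainen; also implicit in the group-theory literature) that $\mathrm{S}$ is closed under inverse homomorphism, and that $L_{wp}(\mathcal{G}^k)$ is the preimage of $L_{wp}(\mathcal{G}^2)$ under a letter-to-letter homomorphism induced by an embedding $\mathcal{G}^k\hookrightarrow\mathcal{G}^2$, so it suffices to handle $k=2$; and $L_{wp}(\mathcal{G}^2)$, via the embedding of the free group of rank $2$ into the modular group and the fact that the relevant word problem over $\mathbb{Z}$-by-$\mathbb{Z}$ (two counters with a commutator twist) can be tracked by a $2$PFA with bounded error / a GFA, reduces to finitely many instances of $L_{wp}(1)\in\mathrm{S}$ combined under the Boolean and concatenation closure properties already cited for $\mathrm{S}$ in the proofs above (see the uses of "$\in\mathrm{S}$ \cite{Ra92,Ka89}" in Theorems~\ref{theorem:L_twosided-1}, \ref{theorem:L_twosided-3}). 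The main obstacle I anticipate is precisely this last point: $\mathrm{S}$ is \emph{not} closed under intersection in general, so I must be careful that the reduction from rank $k$ to rank $1$ is realized by a single GFA (a tensor/direct-sum construction over the $2\times 2$ matrix coordinates) rather than as a naive intersection of several stochastic languages — i.e.\ the whole point is to build one GFA computing an acceptance value that is a fixed constant on $L_{wp}(k)$ and strictly below it outside, which is exactly what the random-projection construction delivers.
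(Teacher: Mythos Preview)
Your first route---embed $\mathcal{G}^{k}$ faithfully into a matrix group and let the matrices be the transitions of a GFA---is exactly the paper's approach. The paper uses a free subgroup of $SO_{3}(\mathbb{Q})$ (Fact~\ref{fact:SQ3-rational-free-group}) where you propose $SL_{2}(\mathbb{Z})$; either choice is fine for what follows.

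The genuine gap in your plan is the passage from the GFA to membership in $\mathrm{S}$. What the matrix-embedding GFA actually gives you is: $f_{\mathcal{G}}(w)$ equals a fixed constant $c$ iff the product matrix is the identity, and $f_{\mathcal{G}}(w)\neq c$ otherwise. That is $L_{wp}(k)\in\mathrm{S}^{=}$, not $L_{wp}(k)\in\mathrm{S}$. With your $SL_{2}(\mathbb{Z})$ embedding the entries are unbounded, so non-members can land on either side of $c$; there is no way to read this off as a single strict cutpoint. Your ``random projection'' suggestion does not fix this: any linear (or tensored-quadratic) functional of $A_{w}$ that vanishes exactly on $I$ will still take both signs on non-identity matrices, and squaring only moves you from $\mathrm{S}^{=}$ to $\mathrm{S}^{=}$.

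The paper faces this head-on: it remarks that whether $\mathrm{S}^{=}\subseteq\mathrm{S}$ holds in general is an open problem, so one cannot simply pass from Corollary~\ref{corollary:free-group-SLeq} to the theorem. The key lemma you are missing is Turakainen's result (Fact~\ref{fact:SLeq-rat-subset-of-SL}) that $\mathrm{S}^{=}_{\mathbb{Q}}\subsetneq\mathrm{S}$. Because the chosen matrix group has \emph{rational} entries ($SO_{3}(\mathbb{Q})$ in the paper; your $SL_{2}(\mathbb{Z})$ would serve equally well), the GFA is rational, hence $L_{wp}(k)\in\mathrm{S}^{=}_{\mathbb{Q}}$, and Turakainen's inclusion finishes the proof. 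That single fact replaces all of your projection machinery.

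Your second route you have already correctly diagnosed as blocked by the non-closure of $\mathrm{S}$ under intersection; the paper likewise explicitly dismisses the earlier published argument (via $\mathrm{coNMCL}\subseteq\mathrm{MCL}$) on related grounds.
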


In fact, Theorem \ref{theorem:word-problem-is-stochastic} was stated
as a corollary on page 1463 of \cite{BP02}, but the purported proof there
was based on the claim that coNMCL$ \subseteq $MCL$ \subseteq $ S.
It is however known \cite{BC01B}, as we mentioned above, that a member
of coNMCL ($ L_{eq} $) lies outside MCL.
Furthermore, the same demonstration can be easily extended to $
L_{wp}(k) $, where $ k \in \mathbb{Z}^{+} $.

\begin{corollary}
      \label{corollary:word-problem-is-not-in-MCL}
      $ L_{wp}(k) \notin $ MCL for any $ k \in \mathbb{Z}^{+} $.
\end{corollary}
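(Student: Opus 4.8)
The plan is to reduce the general statement to the single already-known instance $L_{eq}\notin$ MCL, established in \cite{BC01B} and quoted in the paragraph preceding Theorem \ref{theorem:L_pal}, rather than to re-run any analysis of MCQFA acceptance functions. The crucial observation is that the rank-one word problem $L_{wp}(1)$ is, up to a bijective renaming of the input alphabet, exactly $L_{eq}$: over the two-symbol basis $\Sigma=\{\sigma_{1},\sigma_{1}^{-1}\}$ of the free group $\mathcal{G}^{1}\cong\mathbb{Z}$, a word reduces to the identity if and only if its $\sigma_{1}$-count equals its $\sigma_{1}^{-1}$-count, so $L_{wp}(1)=\{w\mid |w|_{\sigma_{1}}=|w|_{\sigma_{1}^{-1}}\}$, which becomes $L_{eq}$ under the relabeling $\sigma_{1}\mapsto a,\ \sigma_{1}^{-1}\mapsto b$. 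Since MCL is trivially closed under a bijective relabeling of the alphabet (one merely relabels the transition operators $U_{\sigma}$ of the MCQFA, keeping the same cutpoint), $L_{eq}\notin$ MCL immediately forces $L_{wp}(1)\notin$ MCL, which settles the case $k=1$.

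For $k\ge 2$ I would argue by contradiction via restriction to a sub-alphabet. Suppose $L_{wp}(k)\in$ MCL, witnessed by an MCQFA $\mathcal{M}=(Q,\Sigma,\{U_{\sigma\in\tilde{\Sigma}}\},q_{1},Q_{a})$ with cutpoint $\lambda$. Put $\Sigma_{1}=\{\sigma_{1},\sigma_{1}^{-1}\}$ and let $\mathcal{M}'$ be obtained from $\mathcal{M}$ by retaining only the end-marker operators $U_{\cent},U_{\dollar}$ together with $U_{\sigma}$ for $\sigma\in\Sigma_{1}$. Because $\mathcal{M}'$ keeps a subset of the unitary operators of $\mathcal{M}$ along with the same end-marker operators, it is again a well-formed MCQFA, and on every $w\in\Sigma_{1}^{*}$ it processes $\cent w\dollar$ exactly as $\mathcal{M}$ does and performs the identical single final measurement; hence $f_{\mathcal{M}'}(w)=f_{\mathcal{M}}(w)$ for all $w\in\Sigma_{1}^{*}$, and so $\mathcal{M}'$ recognizes $L_{wp}(k)\cap\Sigma_{1}^{*}$ with cutpoint $\lambda$, i.e. this language lies in MCL.

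It then remains to identify the restricted language, for which the relevant free-group fact is that the subgroup of $\mathcal{G}^{k}$ generated by a single basis element $\sigma_{1}$ is infinite cyclic — a subset of a basis generates a free subgroup of a free group (cf. \cite{LS77}) — so a product of factors drawn from $\{\sigma_{1},\sigma_{1}^{-1}\}$ equals $\imath$ in $\mathcal{G}^{k}$ precisely when the two counts agree. Consequently $L_{wp}(k)\cap\Sigma_{1}^{*}=L_{wp}(1)$ (consistent with the generic-name convention of Corollary \ref{corollary:isomorphic-free-group}), and the second paragraph then yields $L_{wp}(1)\in$ MCL, contradicting the case $k=1$; this establishes $L_{wp}(k)\notin$ MCL for every $k\in\mathbb{Z}^{+}$. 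I expect the only delicate points, and hence the main obstacle, to be the two bookkeeping verifications — that the sub-alphabet restriction preserves both well-formedness and the cutpoint of the MCQFA (routine once the end-marker operators are left intact) and that the group-theoretic identity $L_{wp}(k)\cap\Sigma_{1}^{*}=L_{wp}(1)$ holds — since all of the genuinely non-MCL content is imported from the known result $L_{eq}\notin$ MCL and no fresh impossibility argument is needed.
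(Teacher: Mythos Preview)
Your argument is correct. The reduction via sub-alphabet restriction is sound: keeping only $U_{\cent},U_{\dollar}$ and $U_{\sigma}$ for $\sigma\in\Sigma_{1}$ yields a valid MCQFA with the same acceptance probabilities on $\Sigma_{1}^{*}$, and the group-theoretic identification $L_{wp}(k)\cap\Sigma_{1}^{*}=L_{wp}(1)$ is immediate since $\langle\sigma_{1}\rangle$ is infinite cyclic in any free group containing $\sigma_{1}$ as a basis element.

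The paper's route is different. It does not reduce to $k=1$; rather, it asserts that the Bertoni--Carpentieri demonstration showing $L_{eq}\notin$ MCL ``can be easily extended to $L_{wp}(k)$'' directly, i.e.\ one re-runs the same analysis of MCQFA acceptance functions on the larger alphabet for each $k$. Your approach treats the $L_{eq}$ result as a black box and uses a closure property (restriction to a sub-alphabet, which MCL obviously enjoys) to transport it; the paper's approach reopens the black box and applies its contents to each $L_{wp}(k)$. Your version is more modular and requires no knowledge of what the \cite{BC01B} argument actually does, at the cost of the two small bookkeeping checks you already flagged; the paper's version is shorter to state but presupposes familiarity with the original proof.
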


Since it is still an open problem whether S$ ^{=} \subseteq $ S or not,
we cannot use Corollary \ref{corollary:free-group-SLeq} directly to
prove Theorem \ref{theorem:word-problem-is-stochastic}.
Instead, we focus on a subclass of S$ ^{=} $ that is known to be 
a subset of S, S$ ^{=}_{\mathbb{Q}} $ \cite{Tu69B}.

\begin{fact}
      \label{fact:SLeq-rat-subset-of-SL}
      \cite{Tu69B}
      S$ ^{=}_{\mathbb{Q}} $ $ \subsetneq $ S.
\end{fact}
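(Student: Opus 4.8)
The plan is to prove the two halves separately: first the inclusion $ \mathrm{S}^{=}_{\mathbb{Q}} \subseteq \mathrm{S} $ by an explicit construction, and then strictness by a cardinality argument. Let $ L \in \mathrm{S}^{=}_{\mathbb{Q}} $, witnessed by a RT-PFA $ \mathcal{P} $ with rational transition probabilities recognizing $ L $ with inclusive cutpoint $ \frac{1}{2} $, i.e. $ w \in L $ iff $ f_{\mathcal{P}}(w) = \frac{1}{2} $. First I would linearize $ \mathcal{P} $ into a GFA as in the computation of RT-PFAs (Section \ref{sbc:computation-of-machines} and Figure \ref{sbc:fig:GFA}), absorbing the end-markers $ \cent $ and $ \dollar $ into the initial and final vectors, so that $ f_{\mathcal{P}}(w) = g\, A_{w}\, u_{0} $ with rational data. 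Choosing a common denominator $ d $ for every entry of every $ A_{\sigma} $ and setting $ B_{\sigma} = d\, A_{\sigma} $ (integer matrices), the value $ g\, A_{w}\, u_{0} $ becomes $ d^{-|w|} $ times an integer. The decisive use of rationality is that denominators can be cleared \emph{uniformly}: I would assemble an integer-valued GFA $ \mathcal{G}_{N} $ computing $ N(w) = 2\, g'\, B_{w}\, u_{0}' - c\, d^{|w|} $ for suitable integer constants, where the term $ c\, d^{|w|} $ is produced by a one-dimensional component whose $ \sigma $-matrix is the scalar $ d $, combined by direct sum with the integer GFA built from the $ B_{\sigma} $. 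By construction $ N(w) $ is an integer for every $ w $, and $ N(w) = 0 $ exactly when $ w \in L $.

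Next I would trade the equality test for a strict inequality, which is what recognition with a cutpoint needs. Since $ N(w) $ is always an integer, $ w \in L $ iff $ N(w)^2 = 0 $ iff $ N(w)^2 < 1 $. The square $ N(w)^2 $ is itself computed by the GFA $ \mathcal{G}_{N} \otimes \mathcal{G}_{N} $, whose $ \sigma $-matrix is $ C_{\sigma} \otimes C_{\sigma} $ and whose initial and final vectors are the corresponding tensor squares; this is the standard Kronecker trick, valid because $ (C_{w} \otimes C_{w}) = (C_{w_{|w|}} \otimes C_{w_{|w|}}) \cdots (C_{w_{1}} \otimes C_{w_{1}}) $. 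Negating the final vector yields a GFA computing $ -N(w)^2 $, and then $ w \in L $ iff $ -N(w)^2 > -1 $. Thus $ L $ is recognized by a GFA with strict cutpoint $ -1 $, and by Fact \ref{fact:GPFA-PFA} (equivalently, by Turakainen's characterization that the strict-cutpoint languages of GFAs are exactly $ \mathrm{S} $ \cite{Tu69}) we conclude $ L \in \mathrm{S} $.

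For strictness I would argue by cardinality. A rational RT-PFA is determined by a finite state set, a finite alphabet, rational stochastic matrices, and a set of accepting states, so there are only countably many such machines; hence $ \mathrm{S}^{=}_{\mathbb{Q}} $ is countable. On the other hand, it is a classical result of Rabin \cite{Ra63} that already two-state probabilistic automata carrying a real parameter recognize uncountably many distinct languages, so $ \mathrm{S} $ is uncountable. Together with the inclusion just established this forces $ \mathrm{S}^{=}_{\mathbb{Q}} \subsetneq \mathrm{S} $.

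The main obstacle is the inclusion step, and specifically the exploitation of rationality: without the reduction to an integer-valued quantity there is no gap between ``$ N(w) = 0 $'' and ``$ N(w) \neq 0 $'', and the equality test could not be converted into a strict inequality. I expect the only genuinely fiddly point to be the bookkeeping that packages the end-markers and the exponential term $ d^{|w|} $ into a single integer GFA; the squaring via the Kronecker product and the final appeal to Turakainen's theorem are routine. (Strictness could instead be witnessed by an explicit stochastic language whose intrinsic cutpoint is irrational and which therefore lies outside $ \mathrm{S}^{=}_{\mathbb{Q}} $, but the counting argument is cleaner and self-contained.)
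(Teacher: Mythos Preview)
The paper does not actually supply a proof of this statement: it is recorded as a \textbf{Fact} with attribution to Turakainen \cite{Tu69B} and is used as a black box in the proof of Theorem~\ref{theorem:word-problem-is-stochastic}. So there is no ``paper's own proof'' to compare against.

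Your proposal is correct and is essentially the classical argument. The key idea---clearing denominators so that the acceptance value becomes an integer-valued affine function $N(w)$ of $|w|$, and then observing that for integers the predicate $N(w)=0$ is equivalent to the strict inequality $N(w)^{2}<1$---is exactly the reason rationality matters here, and it is precisely what makes the case $\mathrm{S}^{=}_{\mathbb{Q}}$ tractable while the general question $\mathrm{S}^{=}\subseteq\mathrm{S}$ remains open (as the paper itself remarks just before stating the Fact). The rest---direct sums to form $N(w)$, Kronecker products to square it, and the appeal to Fact~\ref{fact:GPFA-PFA} to pass from a GFA with cutpoint to a RT-PFA---are routine and correctly invoked. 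Your cardinality argument for strictness is also sound; the alternative you mention (an explicit stochastic language with an irrational isolated cutpoint) would work too, but counting is cleaner.

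One small cosmetic point: when you write $N(w)=2\,g' B_{w} u_{0}' - c\, d^{|w|}$, make sure the constants are chosen so that both summands are genuinely integers for every $w$ (i.e., absorb the denominators of the initial and final vectors into $g'$, $u_{0}'$, and $c$). You allude to this, but in a write-up it is worth making the bookkeeping explicit so the reader sees immediately that $N(w)\in\mathbb{Z}$.
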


SO$ _{3}(\mathbb{Q}) $ is the group of rotations on $ \mathbb{R}^{3} $ that are
$ 3 \times 3 $ dimensional orthogonal matrices having only rational entries with determinant $ +1 $.

\begin{fact}
       \label{fact:SQ3-rational-free-group}
       \cite{Sw94,DD06} For any $ k \ge 2 $, SO$ _{3}(\mathbb{Q}) $
       contains a free subgroup with rank $ k $, namely $ \mathcal{S}^{k} $.
\end{fact}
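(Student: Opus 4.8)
The plan is to first exhibit a free subgroup of rank $2$ inside $\mathrm{SO}_{3}(\mathbb{Q})$ by means of an explicit pair of rational rotations, and then to extract a free subgroup $\mathcal{S}^{k}$ of every finite rank $k \ge 2$ from it. The crucial point, which is exactly why the statement must be restricted to the rational case, is that one cannot use the classical Hausdorff rotations by $\arccos(\frac{1}{3})$, whose matrices contain $\sqrt{2}$; instead one exploits the Pythagorean triple $(3,4,5)$.

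First I would take the two rotations
\begin{equation}
	A = \frac{1}{5}\left( \begin{array}{ccc} 3 & -4 & 0 \\ 4 & 3 & 0 \\ 0 & 0 & 5 \end{array} \right),
	\qquad
	B = \frac{1}{5}\left( \begin{array}{ccc} 5 & 0 & 0 \\ 0 & 3 & -4 \\ 0 & 4 & 3 \end{array} \right),
\end{equation}
which are rotations by $\arccos(\frac{3}{5})$ about the $z$- and $x$-axes, respectively. Both clearly lie in $\mathrm{SO}_{3}(\mathbb{Q})$, since all entries are rational. The heart of the proof is to show that $\langle A, B \rangle$ is free of rank $2$, i.e. that every nonempty reduced word $w = g_{n} \cdots g_{1}$ with $g_{i} \in \{ A^{\pm 1}, B^{\pm 1} \}$ acts nontrivially on $\mathbb{R}^{3}$.

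For this I would use a reduction-modulo-$5$ induction. Fixing a suitable integer starting vector $v$ and observing that each of $A^{\pm 1}, B^{\pm 1}$ is $\frac{1}{5}$ times an integer matrix, the vector $5^{n}(w v)$ is integral for every word $w$ of length $n$; one then tracks the residues of its coordinates modulo $5$. The claim to be established by induction on $n$ is that a prescribed coordinate of $5^{n}(w v)$ is never $\equiv 0 \pmod 5$, the relevant coordinate depending on whether the leftmost letter $g_{n}$ is a power of $A$ or of $B$. The four cases $A$, $A^{-1}$, $B$, $B^{-1}$ are handled separately, each reducing to the inductive hypothesis applied to the shorter word $g_{n-1} \cdots g_{1}$. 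Since $5^{n}(w v)$ then has a coordinate coprime to $5$, we obtain $w v \ne v$, hence $w \ne I$. This case-by-case bookkeeping of residues across the four generator types is the main obstacle and the only genuinely delicate part of the argument.

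Having obtained $F_{2} \cong \langle A, B \rangle \le \mathrm{SO}_{3}(\mathbb{Q})$, I would produce the rank-$k$ subgroup $\mathcal{S}^{k}$ by the Nielsen--Schreier theorem: every subgroup of a free group is free, and a subgroup of index $j$ in a free group of rank $2$ is free of rank $1 + j(2-1) = j+1$ by the Schreier index formula. Thus, for any $k \ge 2$, taking a subgroup of index $k-1$ in $\langle A, B \rangle$ -- for instance the kernel of a surjection $\langle A, B \rangle \to \mathbb{Z}/(k-1)\mathbb{Z}$ -- yields a free subgroup of $\mathrm{SO}_{3}(\mathbb{Q})$ of rank exactly $k$. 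Alternatively, one may take the explicit generating set $\{ B^{i} A B^{-i} \mid 0 \le i \le k-1 \}$, whose freeness follows from that of $\langle A, B \rangle$. Either route furnishes $\mathcal{S}^{k}$ for every $k \ge 2$, completing the argument.
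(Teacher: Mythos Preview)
The paper does not actually prove this statement; it is recorded as a \emph{Fact} with citations to \cite{Sw94,DD06} and is then used as a black box in the proof of Theorem~\ref{theorem:word-problem-is-stochastic}. So there is no ``paper's own proof'' to compare against.

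That said, your sketch is correct and is essentially the argument from the cited sources. The pair of $(3,4,5)$-rotations about orthogonal axes and the reduction-modulo-$5$ induction on the word length is exactly \'{S}wierczkowski's proof that $\langle A,B\rangle$ is free of rank $2$ in $\mathrm{SO}_{3}(\mathbb{Q})$; your description of tracking which coordinate of $5^{n}(wv)$ is nonzero mod $5$ depending on the leftmost letter is the right invariant. The passage from rank $2$ to arbitrary rank $k\ge 2$ via Nielsen--Schreier (or via the explicit family $\{B^{i}AB^{-i}\}$) is also standard and correct. The only thing worth tightening, if you were to write this out in full, is the base case and the precise statement of the residue invariant so that all four transitions $A^{\pm1},B^{\pm1}$ visibly preserve it; but the outline as given is sound.
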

\begin{proof}[Proof of Theorem \ref{theorem:word-problem-is-stochastic}]
	For $ L_{wp}(k) $, we define a rational GFA
	\begin{equation}
		\mathcal{G}_{k}=(\{s_{1},s_{2},s_{3}\}, \Sigma, \{A_{\sigma \in \Sigma}\},v_{0},f),
	\end{equation}
	where
	\begin{enumerate}
		\item $ \Sigma=\{R_{1},\ldots,R_{k},R_{1}^{-1},\ldots,R_{k}^{-1}\} $ is a basis of $ \mathcal{S}^{k} $;
		\item $ A_{\sigma}=\sigma $ for each $ \sigma \in \Sigma $;
		\item $ v_{0}=(1 ~~ 0 ~~ 0)^{\trans} $;
		\item $ f=(1 ~~ 0 ~~ 0) $.
	\end{enumerate}
	It is obvious that $ w \in L_{wp}(k) $ if and only if 
	$ A_{w} \cdots A_{1} = I_{3 \times 3} $ if and only if
	$ f_{\mathcal{G}_{k}}(w)= f A_{w} \cdots A_{1} v_{0} = 1 $,
	where $ w \in \Sigma^{*} $.
	Thus, $ L_{wp}(k) \in $ S$ ^{=}_{\mathbb{Q}} $ by selecting the cutpoint as $ 1 $.
	We can conclude with Fact \ref{fact:SLeq-rat-subset-of-SL}.
\end{proof}

\subsection{Closure Properties} \label{uerr:closure}

The previously discovered closure properties of S, S$ ^{\neq} $ and S$ ^{=} $ are listed below.

\begin{fact}
	\label{fact:closure-properties-SL}~
	\begin{enumerate}
		\item S is not closed under union and intersection \cite{Fl72,Fl74,La74}.
		\item S is closed under union and intersection with a regular language \cite{Bu68,Tu68}.
		\item S is closed under reversal \cite{Tu69}.
		\item S is not closed under concatenation, Kleene closure, and homomorphism \cite{Co69,Tu71}.
		\item S is closed under complementation over unary alphabets \cite{Fl74}.
	\end{enumerate}
\end{fact}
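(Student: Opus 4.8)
The plan is to handle the five items separately, since they rest on different techniques, and to use the GFA normal form throughout: by Fact~\ref{fact:GFA-GFA} one may replace any GFA/cutpoint pair by an equivalent one with any prescribed (in particular positive) cutpoint, which simplifies the Boolean constructions.

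The two positive items are direct matrix manipulations. For closure under reversal (item~3), given a GFA $\mathcal{G}=(Q,\Sigma,\{A_{\sigma\in\Sigma}\},v_{0},f)$ I would put $\mathcal{G}'=(Q,\Sigma,\{A_{\sigma}^{\trans}\},f^{\trans},v_{0}^{\trans})$; then $f_{\mathcal{G}'}(w)=v_{0}^{\trans}A_{w_{|w|}}^{\trans}\cdots A_{w_{1}}^{\trans}f^{\trans}=(f A_{w_{1}}\cdots A_{w_{|w|}}v_{0})^{\trans}=f_{\mathcal{G}}(w^{r})$, so $\mathcal{G}'$ recognizes the reversal of $L$ with the same cutpoint. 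For closure under intersection with a regular language $R$ (item~2), I would take a DFA for $R$ whose transition matrices $\{B_{\sigma}\}$ are $0$--$1$ and form the tensor-product GFA with matrices $A_{\sigma}\otimes B_{\sigma}$ and suitably tensored initial and final vectors; its value on $w$ is $f_{\mathcal{G}}(w)\cdot[w\in R]$, so with a positive cutpoint it recognizes $L\cap R$. Closure under union with $R$ follows by writing $L\cup R=(L\cap\overline{R})\cup R$ as a disjoint union: realize $L\cap\overline{R}$ by the product construction, realize via the DFA a GFA whose value equals a constant $C>\sup f_{\mathcal{G}}$ exactly on $R$ and $0$ elsewhere, and take the direct sum of the two GFAs, so that the sum of the value functions exceeds the cutpoint precisely on $L\cup R$.

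For the non-closure items~1 and~4 the method I would use is to exhibit explicit witnesses rather than argue abstractly. One produces stochastic languages $L_{1},L_{2}$ in which membership is a linear/arithmetic relation among block lengths of the kind a GFA handles with cutpoint (so that stochasticity is the easy direction), chosen so that $L_{1}\cap L_{2}$, $L_{1}\cup L_{2}$, $L_{1}L_{2}$ or $L_{1}^{*}$ is a language already known to lie outside S --- for instance a member of the $L_{NH}$ family or one of the counting languages ruled out of S by the classical non-stochasticity arguments. Non-closure under homomorphism is handled the same way, choosing $L$ and $h$ so that $h$ identifies letters in a way that forces $h(L)$ to encode such a non-stochastic condition. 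For each operation the easy half is checking that the chosen $L_{i}$ (and $L$, $h$) behave as claimed; the hard half, and the real obstacle here, is re-proving that each resulting combined language is non-stochastic, which must be done separately for each operation via the known obstructions for S.

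Item~5 (closure under complementation over a unary alphabet) exploits the special structure of unary GFAs: the value function has the closed form $f(a^{n})=\sum_{i}p_{i}(n)\lambda_{i}^{n}$ with polynomials $p_{i}$ and complex constants $\lambda_{i}$, so that $\{n:f(a^{n})=\lambda\}$ is highly structured. I would then write $\overline{L}=\{a^{n}:f(a^{n})\le\lambda\}=\{a^{n}:\lambda-f(a^{n})>0\}\cup\{a^{n}:f(a^{n})=\lambda\}$: the first piece is a GFA value set by the constant-subtraction trick together with Fact~\ref{fact:GFA-GFA}, while the second, exceptional piece is assembled from the eigenvalue analysis. I expect this eigenvalue step, together with the non-stochasticity half of items~1 and~4, to be the main technical difficulty; items~2 and~3 are essentially bookkeeping.
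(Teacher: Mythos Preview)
The paper does not prove this statement: Fact~\ref{fact:closure-properties-SL} is simply a list of previously known results, each item attributed to the cited references (Flachs, Lacombe, Bukharaev, Turakainen, Conway) with no argument given. So there is nothing to compare your proof against --- you are attempting proofs that the thesis itself does not supply.

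That said, a brief assessment of your sketches: the constructions for items~2 and~3 are correct and essentially the standard ones. For items~1 and~4 you have correctly identified the \emph{shape} of the argument (exhibit stochastic $L_1,L_2$ whose combination is provably non-stochastic) but you have not actually produced any witnesses, and you acknowledge that the non-stochasticity half ``must be done separately for each operation.'' That is the entire content of those results, so as written this is a plan rather than a proof. For item~5 your eigenvalue decomposition is the right starting point, but the step you flag as the ``main technical difficulty'' --- handling the set $\{n:f(a^n)=\lambda\}$ --- really is where all the work lies (one must show this set is ultimately periodic, which uses a Skolem--Mahler--Lech type argument), and your sketch does not carry it out.
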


\begin{fact}
	\label{fact:closure-properties-Seq-Sneq}~
	\begin{enumerate}
		\item Both S$ ^{\neq} $ and S$ ^{=} $ are closed under union and intersection \cite{Pa71}.
		\item Neither S$ ^{\neq} $ nor S$ ^{=} $ is closed under complementation \cite{Di71}.
		\item S  is closed under intersection with a member of S$ ^{\neq} $ \cite{Pa71}.
	\end{enumerate}
\end{fact}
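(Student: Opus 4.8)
The plan is to work entirely with generalized finite automata (GFAs), exploiting Turakainen's characterization that S is exactly the class of languages recognized by GFAs with a strict cutpoint (the corollary QAL $=$ S together with \cite{Tu69}), and the standard fact that the acceptance functions $w\mapsto f_{\mathcal{G}}(w)$ of GFAs form a commutative ring under pointwise addition and multiplication and are closed under scalar multiplication: sums are realized by a direct sum of automata, products by a tensor (Kronecker) product, exactly as $\otimes$ is used in Lemma \ref{qtm:lem:RT-QFA-to-RT-GFA}. By Fact \ref{fact:GFA-GFA} I may freely shift cutpoints, so I will represent every S$^{=}$ language as $L=\{w : f_{\mathcal{G}}(w)=0\}$ for some GFA $\mathcal{G}$, and (via Lemma \ref{lemma:Sneq-GPFAonesidedcutpoint0}) every S$^{\neq}$ language through a GFA with one-sided cutpoint $0$.

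For item (i) I first handle S$^{=}$. Given $L_1=\{w : g_1(w)=0\}$ and $L_2=\{w : g_2(w)=0\}$ with $g_1,g_2$ GFA-computable, observe $L_1\cup L_2=\{w : g_1(w)\,g_2(w)=0\}$ and $L_1\cap L_2=\{w : g_1(w)^2+g_2(w)^2=0\}$. Both $g_1 g_2$ and $g_1^2+g_2^2$ are again GFA-computable by the ring closure above, so both languages lie in S$^{=}$. Closure of S$^{\neq}$ then follows by De Morgan: if $L_1,L_2\in$ S$^{\neq}$ then $\overline{L_1},\overline{L_2}\in$ S$^{=}$, and $L_1\cap L_2=\overline{\overline{L_1}\cup\overline{L_2}}$ together with $L_1\cup L_2=\overline{\overline{L_1}\cap\overline{L_2}}$ put both back into S$^{\neq}$ using the two S$^{=}$ closures just established.

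For item (ii) I exhibit a single separating pair. Since $L_{neq}\in$ S$^{\neq}$ \cite{BC01B}, its complement $L_{eq}$ lies in S$^{=}$, so it suffices to show $L_{neq}\notin$ S$^{=}$ (equivalently $L_{eq}\notin$ S$^{\neq}$). Assume $L_{neq}\in$ S$^{=}$ and let $n$ be the constant from Fact \ref{fact:dieu}. Taking $u=\varepsilon$, $y=a$, $v=b^{n}$, each word $a^{i}b^{n}$ with $0\le i\le n-1$ has $i\neq n$ and hence lies in $L_{neq}$; Fact \ref{fact:dieu} would then force $a^{n}b^{n}\in L_{neq}$, contradicting $|a^{n}b^{n}|_a=|a^{n}b^{n}|_b$. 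Thus $L_{eq}\in$ S$^{=}\setminus$ S$^{\neq}$ simultaneously witnesses that S$^{=}$ is not closed under complementation (its complement $L_{neq}\notin$ S$^{=}$) and that S$^{\neq}$ is not closed under complementation ($L_{neq}\in$ S$^{\neq}$ while $L_{eq}\notin$ S$^{\neq}$).

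For item (iii), let $L_1\in$ S be recognized by a PFA with acceptance value $f_1(w)\in[0,1]$ and strict cutpoint $\lambda_1$, and let $L_2\in$ S$^{\neq}$. By Lemma \ref{lemma:Sneq-GPFAonesidedcutpoint0} there is a GFA with a nonnegative acceptance value $g_2$ such that $g_2(w)>0$ exactly when $w\in L_2$ and $g_2(w)=0$ otherwise. Set $h(w)=(f_1(w)-\lambda_1)\,g_2(w)$, which is GFA-computable as a product and scalar shift of GFA values. Off $L_2$ we have $h(w)=0\le 0$, while on $L_2$, since $g_2(w)>0$, the sign of $h(w)$ matches that of $f_1(w)-\lambda_1$; hence $h(w)>0$ iff $w\in L_1\cap L_2$, so $L_1\cap L_2=\{w : h(w)>0\}\in$ S by the GFA characterization. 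The ring-closure verifications are mechanical; the genuinely delicate points are (a) choosing the pumping triple in (ii) so that precisely the first $n$ pumped words remain in $L_{neq}$ while the $n$-th leaves it, and (b) in (iii) insisting on the nonnegative, vanishing-exactly-off-$L_2$ representative $g_2$ furnished by Lemma \ref{lemma:Sneq-GPFAonesidedcutpoint0}, since this is exactly what lets the single strict cutpoint $0$ of $h$ encode membership in $L_1$ and $L_2$ at once.
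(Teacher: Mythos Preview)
The paper does not prove this statement; it is listed as a \emph{Fact} with external citations to \cite{Pa71} and \cite{Di71}, so there is no in-paper proof to compare against. Your argument is correct and essentially reconstructs the classical proofs: the GFA ring structure (product for union, sum of squares for intersection) handles item~(i); the Dieudonn\'e pumping via Fact~\ref{fact:dieu} handles item~(ii) in exactly the style the paper itself uses later in Lemma~\ref{lemma:L_twosided-2} and Theorem~\ref{theorem:L_pal}; and the product $(f_1-\lambda_1)g_2$ with $g_2$ vanishing off $L_2$ handles item~(iii).

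One small remark on item~(iii): you assert that Lemma~\ref{lemma:Sneq-GPFAonesidedcutpoint0} furnishes a \emph{nonnegative} $g_2$. This is true, but only implicitly: the definition of positive one-sided cutpoint~$0$ forces $g_2(w)=0$ for $w\notin L_2$ and $g_2(w)>0$ for $w\in L_2$, hence $g_2\ge 0$ everywhere. It would be cleaner to say this explicitly rather than presenting nonnegativity as an additional property delivered by the lemma. Alternatively, you could simply square whatever GFA function the lemma provides, which also lands in the GFA ring and is manifestly nonnegative.
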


In \cite{YS10A}, we proved several new nontrivial closure properties of the 
``one-sided'' classes S$ ^{\neq} $ and S$ ^{=} $.
We refer the reader to \cite{YS10A} for the proofs of the following theorems.

\begin{theorem}
	\label{theorem:one-sided-closure-concatenation} S$ ^{\neq} $ is closed under concatenation.
\end{theorem}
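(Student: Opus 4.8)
The plan is to use the characterization of $S^{\neq}$ by generalized finite automata from Lemma~\ref{lemma:Sneq-GPFAonesidedcutpoint0}: a language $L$ lies in $S^{\neq}$ iff there is a GFA $\mathcal{G}$ with $L=\{w\mid f_{\mathcal{G}}(w)>0\}$ and $f_{\mathcal{G}}(w)=0$ for all $w\notin L$. The only feature of this model I will use is that such a $\mathcal{G}$ automatically satisfies $f_{\mathcal{G}}(w)\ge 0$ for every $w$; this nonnegativity is precisely what rules out the destructive cancellations that keep $S$ itself from being closed under concatenation (Fact~\ref{fact:closure-properties-SL}), and it is what makes the ordinary ``guess the cut point'' construction for concatenation work here. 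So, given $L_{1},L_{2}\in S^{\neq}$, I would fix GFAs $\mathcal{G}_{j}=(Q_{j},\Sigma,\{A^{(j)}_{\sigma}\},v_{0}^{(j)},f^{(j)})$ recognizing $L_{j}$ with one-sided cutpoint $0$ for $j=1,2$, and build a GFA $\mathcal{G}$ on $Q_{1}\sqcup Q_{2}$ that runs $\mathcal{G}_{1}$ on a guessed prefix and $\mathcal{G}_{2}$ on the remaining suffix, summing the products of the two acceptance values over all ways to guess.

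Writing a state of $\mathcal{G}$ as a pair $(x,y)$ with $x\in\mathbb{R}^{Q_{1}}$ and $y\in\mathbb{R}^{Q_{2}}$, I would take, for each $\sigma\in\Sigma$,
\begin{equation}
	A_{\sigma}=\left(\begin{array}{cc} A^{(1)}_{\sigma} & 0 \\ v_{0}^{(2)}\, f^{(1)}A^{(1)}_{\sigma} & A^{(2)}_{\sigma}\end{array}\right),\qquad
	v_{0}=\left(\begin{array}{c} v_{0}^{(1)} \\ (f^{(1)}v_{0}^{(1)})\, v_{0}^{(2)}\end{array}\right),\qquad
	f=(\,0\ \ f^{(2)}\,),
\end{equation}
where $v_{0}^{(2)}\, f^{(1)}A^{(1)}_{\sigma}$ is the rank-one matrix (column $v_{0}^{(2)}$ times row $f^{(1)}A^{(1)}_{\sigma}$) that injects into the $Q_{2}$-block a fresh copy of $\mathcal{G}_{2}$'s start vector scaled by the current acceptance value of $\mathcal{G}_{1}$. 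After reading $w_{1}\cdots w_{i}$ the $x$-block is exactly $\mathcal{G}_{1}$'s state on that prefix, and a one-line induction on $i$ using $y_{i}=A^{(2)}_{w_{i}}y_{i-1}+(f^{(1)}A^{(1)}_{w_{i}}x_{i-1})\,v_{0}^{(2)}$ with base case $y_{0}=f_{\mathcal{G}_{1}}(\varepsilon)\,v_{0}^{(2)}$ shows that
\begin{equation}
	f_{\mathcal{G}}(w)=\sum_{uv=w} f_{\mathcal{G}_{1}}(u)\, f_{\mathcal{G}_{2}}(v),
\end{equation}
the sum ranging over all factorizations of $w$, including those with $u=\varepsilon$ or $v=\varepsilon$.

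From this identity the conclusion is immediate: since $f_{\mathcal{G}_{1}}$ and $f_{\mathcal{G}_{2}}$ are nonnegative, so is $f_{\mathcal{G}}$, and $f_{\mathcal{G}}(w)>0$ exactly when some factorization $w=uv$ has $f_{\mathcal{G}_{1}}(u)>0$ and $f_{\mathcal{G}_{2}}(v)>0$, i.e. exactly when $u\in L_{1}$ and $v\in L_{2}$, i.e. exactly when $w\in L_{1}L_{2}$. Hence $\mathcal{G}$ recognizes $L_{1}L_{2}$ with one-sided cutpoint $0$, and $L_{1}L_{2}\in S^{\neq}$ by Lemma~\ref{lemma:Sneq-GPFAonesidedcutpoint0}. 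I do not expect a genuine obstacle here; the only delicate points are the verification of the $y$-recurrence and the treatment of the empty prefix and empty suffix (which is why the term $f_{\mathcal{G}_{1}}(w_{1}\cdots w_{i})\,v_{0}^{(2)}$ must be added at every step and why $v_{0}$ must carry the $f_{\mathcal{G}_{1}}(\varepsilon)$ factor), both of which are routine bookkeeping rather than real difficulty.
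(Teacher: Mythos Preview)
The paper does not actually include a proof of this theorem; it only states the result and refers the reader to \cite{YS10A} for the argument, so there is no in-paper proof to compare against.

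That said, your proposal is correct. The crucial observation is exactly the one you isolate: a GFA recognizing a language with one-sided cutpoint $0$ necessarily has $f_{\mathcal{G}}(w)\ge 0$ for every $w$, so in the convolution
\[
f_{\mathcal{G}}(w)=\sum_{uv=w} f_{\mathcal{G}_{1}}(u)\,f_{\mathcal{G}_{2}}(v)
\]
no cancellation can occur, and positivity of the sum is equivalent to positivity of some summand, i.e.\ to $w\in L_{1}L_{2}$. Your block-triangular matrices are the standard weighted-automaton construction for concatenation (the Cauchy product of the associated formal power series), and the induction you sketch for the $y$-block goes through verbatim: with $x_{i}=A^{(1)}_{w_{1}\cdots w_{i}}v_{0}^{(1)}$ one gets $y_{i}=\sum_{k=0}^{i} f_{\mathcal{G}_{1}}(w_{1}\cdots w_{k})\,A^{(2)}_{w_{k+1}\cdots w_{i}}v_{0}^{(2)}$, the $k=0$ term coming from the initial vector and the $k=i$ term from the off-diagonal injection at step $i$. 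Applying $f^{(2)}$ then yields the claimed identity, and Lemma~\ref{lemma:Sneq-GPFAonesidedcutpoint0} closes the argument.
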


\begin{theorem}
      \label{theorem:negative-one-sided-not-close-concatenation}
      S$ ^{=} $ is not closed under concatenation.
\end{theorem}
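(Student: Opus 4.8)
The plan is to establish the failure of closure by exhibiting two explicit languages of S$ ^{=} $ whose concatenation lies outside S$ ^{=} $. The natural candidates are $ L_{eq} = \{ w \in \{a,b\}^{*} \mid |w|_{a} = |w|_{b} \} $ (the complement of $ L_{neq} $) as the left factor and the regular language $ b^{+} $ as the right factor, precisely because their concatenation is the language $ L_{eq \cdot b} = L_{eq} \cdot b^{+} $ that was already analyzed in Theorem \ref{theorem:L_twosided-3}.

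First I would check that both factors belong to S$ ^{=} $: the language $ b^{+} $ is regular and hence lies in S$ ^{=} $ by Fact \ref{fact:RLpropersubsetSLneqSLeq}, and $ L_{eq} \in $ S$ ^{=} $ is the classical RT-PFA result already recalled in Section \ref{uerr:two-sided}. It then suffices to invoke Theorem \ref{theorem:L_twosided-3}, which places $ L_{eq \cdot b} = L_{eq} \cdot b^{+} $ in S but in neither S$ ^{=} $ nor S$ ^{\neq} $; in particular $ L_{eq} \cdot b^{+} \notin $ S$ ^{=} $. Combining these three observations, the concatenation of the two S$ ^{=} $ languages $ L_{eq} $ and $ b^{+} $ is not in S$ ^{=} $, which is the assertion of the theorem.

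For a more self-contained treatment I would reprove the one nontrivial ingredient, $ L_{eq \cdot b} \notin $ S$ ^{=} $, directly from Dieu's pumping property (Fact \ref{fact:dieu}). Assume $ L_{eq \cdot b} \in $ S$ ^{=} $ and let $ n $ be the corresponding constant. Choosing $ u = \varepsilon $, $ y = a $, and $ v = b^{n} $, one checks that $ a^{i} b^{n} = (a^{i} b^{i}) \cdot b^{n-i} \in L_{eq} \cdot b^{+} $ for every $ 0 \le i \le n-1 $, since $ a^{i} b^{i} \in L_{eq} $ (with $ a^{0} b^{0} = \varepsilon $) and $ b^{n-i} \in b^{+} $; on the other hand $ a^{n} b^{n} \notin L_{eq} \cdot b^{+} $, because deleting any nonempty suffix of $ b $'s from $ a^{n} b^{n} $ always leaves strictly more $ a $'s than $ b $'s. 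By Fact \ref{fact:dieu} the membership of $ u y^{i} v $ for $ 0 \le i \le n-1 $ forces $ u y^{n} v = a^{n} b^{n} \in L_{eq \cdot b} $, contradicting the previous sentence.

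The argument is short, and the only delicate point is the bookkeeping in the last paragraph: one must confirm that $ a^{i} b^{n} $ admits a decomposition $ w b^{k} $ with $ w \in L_{eq} $ and $ k \ge 1 $ exactly when $ i \le n-1 $ (take $ k = n-i $) and fails when $ i = n $. There is no real obstacle beyond this verification — the genuinely hard part, showing that $ L_{eq \cdot b} $ is stochastic at all, was carried out in Theorem \ref{theorem:L_twosided-3} and is not needed here; only its negative half, $ L_{eq \cdot b} \notin $ S$ ^{=} $, together with the elementary facts $ L_{eq} \in $ S$ ^{=} $ and $ b^{+} \in \mathrm{REG} \subseteq $ S$ ^{=} $, is used.
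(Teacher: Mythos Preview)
Your proposal is correct and follows the same line as the paper: although the paper defers the actual proof of Theorem~\ref{theorem:negative-one-sided-not-close-concatenation} to \cite{YS10A}, the witness you chose, $L_{eq}\cdot b^{+}$, is precisely the language analyzed in Theorem~\ref{theorem:L_twosided-3}, and your Dieu-pumping argument with $u=\varepsilon$, $y=a$, $v=b^{n}$ reproduces that theorem's proof of $L_{eq\cdot b}\notin\mathrm{S}^{=}$ verbatim. The remaining ingredients, $L_{eq}\in\mathrm{S}^{=}$ and $b^{+}\in\mathrm{REG}\subseteq\mathrm{S}^{=}$, are exactly the easy facts you cite.
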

	
\begin{theorem}
     \label{theorem-one-sided-closure-star}
     S$ ^{\neq} $ is closed under Kleene closure.
\end{theorem}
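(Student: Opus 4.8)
The plan is to avoid the naive attack $L^{*}=\bigcup_{k\ge 0}L^{k}$, since S$^{\neq}$ is not known to be closed under infinite unions; instead I will exhibit a single GFA for $L^{*}$. The starting point is the characterization in Lemma~\ref{lemma:Sneq-GPFAonesidedcutpoint0}: $L\in{}$S$^{\neq}$ iff some GFA recognizes $L$ with one-sided cutpoint $0$, which in particular means its acceptance value is nonnegative everywhere (it is exactly $0$ off $L$ and strictly positive on $L$). So fix such a GFA $\mathcal{G}=(Q,\Sigma,\{A_{\sigma\in\Sigma}\},v_{0},f)$ with $f_{\mathcal{G}}(w)\ge 0$ for all $w$ and $L=\{w\mid f_{\mathcal{G}}(w)>0\}$. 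The key observation is that, because of this nonnegativity, the quantity $\sum\prod_{i}f_{\mathcal{G}}(u_{i})$, summed over \emph{all} factorizations $w=u_{1}\cdots u_{m}$ of $w$ into nonempty pieces, is itself $\ge 0$, and it is strictly positive exactly when at least one such factorization uses only pieces lying in $L$, i.e.\ exactly when $w\in L^{+}$. Everything then reduces to realizing this quantity as the acceptance value of a fixed-size GFA.

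For that I would use a transfer-matrix construction. Take state set $Q\cup\{s_{0}\}$ with a fresh ``hub'' state $s_{0}$, write a state vector as a pair $(\alpha;x)$ with $\alpha\in\mathbb{R}$ the weight on $s_{0}$ and $x\in\mathbb{R}^{|Q|}$ the weight on $Q$, and take $(1;0)$ as the initial vector. For each $\sigma\in\Sigma$ set
\begin{equation}
	M_{\sigma}(\alpha;x)=\bigl(0;\ A_{\sigma}x+(\alpha+f x)\,A_{\sigma}v_{0}\bigr),
\end{equation}
i.e.\ in block form $M_{\sigma}=\left(\begin{array}{c|c} 0 & 0 \\ \hline A_{\sigma}v_{0} & A_{\sigma}(I+v_{0}f) \end{array}\right)$, and take $f'(\alpha;x)=\alpha+f x$ as the final vector. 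The intended reading is that, after a prefix $p$ has been read, the $x$-component equals $\sum\bigl(\prod_{i<m}f_{\mathcal{G}}(u_{i})\bigr)A_{v}v_{0}$, the sum ranging over all ways to cut $p$ into nonempty blocks $u_{1}\cdots u_{m-1}v$ with $v$ the block still in progress: the term $A_{\sigma}x$ extends the current block by $\sigma$, the term $(f x)A_{\sigma}v_{0}$ closes the current block (legitimate, since after reading at least one symbol it is nonempty), contributes its value $f_{\mathcal{G}}(v)$, and opens a new block whose first symbol is $\sigma$; the hub coordinate $\alpha$ only serves to launch the first block and to hand $\varepsilon$ the value $1$.

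I would then check, by a straightforward induction on $|w|$, that $\mathcal{G}'=(Q\cup\{s_{0}\},\Sigma,\{M_{\sigma}\},(1;0),f')$ satisfies $f_{\mathcal{G}'}(\varepsilon)=1$ and, for $w\neq\varepsilon$,
\begin{equation}
	f_{\mathcal{G}'}(w)=\sum_{\substack{w=u_{1}\cdots u_{m}\\ u_{i}\neq\varepsilon,\ m\ge 1}}\ \prod_{i=1}^{m}f_{\mathcal{G}}(u_{i})\ \ge\ 0
\end{equation}
(the in-progress block is always nonempty once a symbol has been read, so $f_{\mathcal{G}}(\varepsilon)$ never intervenes). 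Hence $f_{\mathcal{G}'}\ge 0$ everywhere, $f_{\mathcal{G}'}(\varepsilon)>0$, and for $w\neq\varepsilon$ we have $f_{\mathcal{G}'}(w)>0$ iff some factorization of $w$ into nonempty pieces lies entirely in $L$ iff $w\in L^{+}$; so $\{w\mid f_{\mathcal{G}'}(w)>0\}=\{\varepsilon\}\cup L^{+}=L^{*}$ while $f_{\mathcal{G}'}$ vanishes off $L^{*}$. Thus $\mathcal{G}'$ recognizes $L^{*}$ with one-sided cutpoint $0$, and $L^{*}\in{}$S$^{\neq}$ by Lemma~\ref{lemma:Sneq-GPFAonesidedcutpoint0} (equivalently, by Fact~\ref{fact:GPFA-PFA}).

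The one genuinely delicate point — and the place where this would fail for the larger class S — is keeping the sum over factorizations nonnegative: without it the transfer matrix $A_{\sigma}(I+v_{0}f)$ merely produces some signed generating function whose sign carries no information, whereas under the one-sided-cutpoint-$0$ normalization supplied by Lemma~\ref{lemma:Sneq-GPFAonesidedcutpoint0} it produces a bona fide S$^{\neq}$ witness; verifying the transfer-matrix identity correctly (in particular that no spurious empty block sneaks in) is the routine-but-careful part. As a variant one may drop the hub state, build a GFA for $L^{+}$ directly by taking a suitably rescaled $v_{0}$ as initial vector to kill the spurious empty leading block, and then conclude $L^{*}=L^{+}\cup\{\varepsilon\}\in{}$S$^{\neq}$ using closure of S$^{\neq}$ under union (Fact~\ref{fact:closure-properties-Seq-Sneq}) together with REG~$\subseteq{}$S$^{\neq}$ (Fact~\ref{fact:RLpropersubsetSLneqSLeq}).
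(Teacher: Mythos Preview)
The paper does not actually contain a proof of this theorem: immediately before the block of closure results it states ``We refer the reader to \cite{YS10A} for the proofs of the following theorems,'' and Theorem~\ref{theorem-one-sided-closure-star} is one of those deferred statements. So there is no in-paper proof to compare against.

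Your argument is correct and is in fact the natural one. The crucial preliminary observation --- that Lemma~\ref{lemma:Sneq-GPFAonesidedcutpoint0} supplies a GFA $\mathcal{G}$ with $f_{\mathcal{G}}\ge 0$ everywhere --- is exactly what makes the factorization sum $\sum_{w=u_1\cdots u_m}\prod_i f_{\mathcal{G}}(u_i)$ a valid one-sided-cutpoint-$0$ witness for $L^{*}$; you identify this point explicitly and correctly. The transfer-matrix realization via the hub state $s_0$ and the block matrix $M_\sigma=\left(\begin{smallmatrix}0&0\\ A_\sigma v_0 & A_\sigma(I+v_0 f)\end{smallmatrix}\right)$ is the standard weighted-automaton star construction, and your inductive invariant (the $x$-component records the weighted sum over all ways to split the prefix into closed blocks plus one open block) is stated and verified correctly; the check that no empty block appears, because the hub coordinate is zeroed after the first symbol and a block is only closed via the $fx$ term after at least one symbol has been read, is the right place to be careful and you handle it. The alternative route through $L^{+}$ and closure under union with $\{\varepsilon\}$ is also fine.
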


\begin{theorem}
      \label{theorem:negative-one-sided-not-close-Kleene}
      S$ ^{=} $ is not closed under Kleene closure.
\end{theorem}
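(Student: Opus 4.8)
The plan is to exhibit a language $ L \in \mathrm{S}^{=} $ whose Kleene closure $ L^{*} $ provably fails to lie in $ \mathrm{S}^{=} $, using Fact \ref{fact:dieu} (the pumping-type property of $ \mathrm{S}^{=} $) as the separating tool, exactly as was done in Lemma \ref{lemma:L_twosided-2}, Theorem \ref{theorem:L_pal}, and Theorem \ref{theorem:L_twosided-3}. The natural candidate is a simple exclusive-stochastic (indeed, regular-or-nearly-regular) building block whose iteration forces an arithmetic condition that Fact \ref{fact:dieu} cannot tolerate. A promising choice is $ L = \{ a^{n} b^{n} \mid n \ge 0 \}$-flavoured, but since that is not obviously in $ \mathrm{S}^{=} $ on its own, I would instead take something like $ L = \{ a \} \cup \{ b^{i} c^{i} \mid i \ge 1 \} $ — no, better: start from a language already known to be in $ \mathrm{S}^{=} $ from the excerpt, such as $ L_{eq} = \{ w \in \{a,b\}^{*} \mid |w|_{a} = |w|_{b} \} $, which is stated to be in $ \mathrm{S}^{=} $ (first paragraph of Section \ref{uerr:two-sided}). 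Since $ L_{eq} $ is already closed under Kleene closure (it is a submonoid of $ \{a,b\}^{*} $), that exact language will not work, so I would modify it with an endmarker: take $ L = L_{eq} \cdot c $ over alphabet $ \{a,b,c\} $. One checks $ L \in \mathrm{S}^{=} $ by composing the PFA for $ L_{eq} $ with a deterministic check that the string ends in a single $ c $ and contains no interior $ c $; this is a routine product construction preserving membership in $ \mathrm{S}^{=} $ (and can be cited to the closure of $ \mathrm{S}^{=} $ under intersection with regular languages, analogous to the argument pattern for $ \mathrm{S} $).

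Next I would analyze $ L^{*} $. A string is in $ L^{*} $ iff it decomposes into blocks each of which is a balanced $ \{a,b\} $-word followed by a $ c $; equivalently, writing the string as $ u_{1} c u_{2} c \cdots u_{m} c $ with each $ u_{j} \in \{a,b\}^{*} $, we need every $ u_{j} $ to be balanced. To derive a contradiction from $ L^{*} \in \mathrm{S}^{=} $, I would apply Fact \ref{fact:dieu}: let $ n $ be the constant it provides for $ L^{*} $, and choose $ u $, $ y $, $ v $ so that $ u y^{i} v \in L^{*} $ for $ i = 0, \dots, n-1 $ but $ u y^{n} v \notin L^{*} $. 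A clean choice is $ u = (ab c)^{?} $-type prefixes with $ y $ introducing an imbalance that is only "cancelled" for small exponents; for instance, take $ u = \varepsilon $, $ y = a b c $, $ v = \varepsilon $ gives all powers in $ L^{*} $, so that fails. Instead I would engineer imbalance: take $ u = a $, $ y = b c\, a $ (so that $ u y^{i} v = a (bca)^{i} $); the last block after the final $ c $ is $ a $, which is unbalanced, so none of these are in $ L^{*} $ — that also fails. The correct move is to put the pumping inside a single block: take $ u = \varepsilon $, $ y = ab $, $ v = c $, so $ u y^{i} v = (ab)^{i} c \in L \subseteq L^{*} $ for all $ i $; still no contradiction. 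This shows the endmarker trick alone is too weak, so I would instead use a two-block interaction: $ u = a^{n}, \; v = b^{?} \cdots $, forcing a cross-block counting constraint of the kind that appears in Lemma \ref{lemma:L_twosided-2}.

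The main obstacle — and the step I expect to require the most care — is choosing the triple $ (u, y, v) $ so that the first $ n $ iterates land in $ L^{*} $ while the $ n$-th does not, given that $ L^{*} $ is a fairly permissive monoid-like language; permissive languages are exactly the ones for which Fact \ref{fact:dieu}-style arguments are delicate. The resolution I anticipate is to pick $ L $ a little more rigidly: let $ L = \{ a^{i} b c^{i} \mid i \ge 1 \} $, which is in $ \mathrm{S}^{=} $ (it is even context-free and a standard exclusive-stochastic example via a GFA counting $ a $'s up and $ c$'s down around cutpoint), and then observe that $ L^{*} $ requires the $ j$-th "$ a$-run length" to equal the $ j$-th "$ c$-run length" blockwise; now with $ u = a\, b\, c $, $ y = a\, b\, c $, $ v = \varepsilon $ all powers are in $ L^{*} $ — still permissive. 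The genuinely non-permissive version is $ L = \{ a^{i} b^{i} \mid i \ge 1\} \cup \{\varepsilon\} $, whose Kleene closure is the non-regular "sequence of balanced $ a^{i} b^{i} $ blocks" language; one shows $ L \in \mathrm{S}^{=} $ (GFA: amplitude encodes current $ a$-$b$ difference, accept-exactly-at-zero only at block boundaries — this needs a short verification) and then applies Fact \ref{fact:dieu} with $ u = \varepsilon $, $ y = a $, $ v = b $: the words $ b, ab, a^{2}b, \dots, a^{n-1}b $ — wait, $ b \notin L^{*}$. So finally: $ u = ab,\ y = a,\ v = b $ gives $ ab\cdot a^{i}\cdot b = ab\,a^{i}b $; for $ i = 0 $ this is $ abab \in L^{*} $ (blocks $ ab \mid ab $), for $ i = 1 $ it is $ aba a b $ which is $ ab \mid a a b $?—not a valid decomposition. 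This confirms the crux: I would need to invest effort in pinning down a block-structured $ L \in \mathrm{S}^{=} $ together with a pumping triple whose small iterates happen to re-block correctly and whose $ n$-th does not, then cite Fact \ref{fact:dieu} to conclude $ L^{*} \notin \mathrm{S}^{=} $, and finally remark that since $ L \in \mathrm{S}^{=} $ this establishes non-closure under Kleene closure.
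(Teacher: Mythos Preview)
The paper does not include a proof of this theorem in the text; it defers to \cite{YS10A}. So there is no in-paper argument to compare against, and I evaluate your proposal on its own.

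Your overall strategy --- pick $L \in$ S$^{=}$ and use Fact~\ref{fact:dieu} to show $L^{*} \notin$ S$^{=}$ --- is the right one, and the candidates you reach for ($L_{eq}$, $L_{eq}\cdot\{c\}$, $L_{upal}$) are natural. The gap is that you never land on a working instance: $L_{eq}$ is already closed under star; $L_{eq}\cdot\{c\}$ has a star that checks each $c$-separated block independently and does not obviously violate Dieu's condition; and your $L_{upal}$ triples produce at most two consecutive iterates in $L^{*}$, which is useless since the constant $n$ in Fact~\ref{fact:dieu} is determined by the hypothetical PFA and is not yours to choose. Your closing paragraph concedes as much.

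The missing idea is to break the $a/b$ symmetry. Take $L = L_{eq} \cup \{b\}$. This lies in S$^{=}$ since $L_{eq}\in$ S$^{=}$, $\{b\}$ is regular (hence in S$^{=}$ by Fact~\ref{fact:RLpropersubsetSLneqSLeq}), and S$^{=}$ is closed under union (Fact~\ref{fact:closure-properties-Seq-Sneq}). For this $L$ one has $a^{m}b^{n}\in L^{*}$ iff $m\le n$: if $m\le n$, decompose as $a^{m}b^{m}\cdot b^{n-m}$; if $m>n$, the only balanced prefix of $a^{m}b^{n}$ is $\varepsilon$ and the leading symbol is $a$, so no nonempty piece from $L_{eq}\cup\{b\}$ can begin a decomposition. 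Now suppose $L^{*}\in$ S$^{=}$ with Dieu constant $N$ and take $u=\varepsilon$, $y=a$, $v=b^{N-1}$. Then $uy^{i}v=a^{i}b^{N-1}\in L^{*}$ for $i=0,\dots,N-1$, so Fact~\ref{fact:dieu} forces $a^{N}b^{N-1}\in L^{*}$, a contradiction. Your instinct that the example must be ``non-permissive'' was correct; what makes it work is adjoining a one-sided generator $b$ rather than a neutral separator $c$, so that $L^{*}$ becomes lopsided enough for the pumping argument to bite.
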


\begin{lemma}
     \label{lemma:homomorphism1}
     Let $ h: \Sigma \rightarrow \Sigma \setminus \{\kappa\} $ be a homomorphism such that
     \begin{equation}
             h(\sigma)=\left\lbrace
                     \begin{array}{ll}
                             \sigma ,&  \sigma \neq \kappa \\
                             \varepsilon, & \sigma=\kappa
                     \end{array}
             \right.,
     \end{equation}
     where $ \kappa $ is a specific symbol in $ \Sigma $.
     If $ L \subseteq \Sigma^{*} $ is in S$ ^{\neq} $, then so is $ h(L) $.
\end{lemma}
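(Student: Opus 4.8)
The plan is to use the characterization of S$^{\neq}$ by generalized finite automata (GFAs) with one-sided cutpoint $0$ (Lemma~\ref{lemma:Sneq-GPFAonesidedcutpoint0}), exploiting the fact that such a GFA assigns a \emph{nonnegative} acceptance value to every string, so that membership of $y$ in a homomorphic image can be detected by \emph{adding up} the acceptance values of all pre-images of $y$.

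First I would fix a GFA $\mathcal{G}=(Q,\Sigma,\{A_\sigma\}_{\sigma\in\Sigma},v_0,f)$ recognizing $L$ with one-sided cutpoint $0$, so that $f_{\mathcal{G}}(w)>0$ when $w\in L$ and $f_{\mathcal{G}}(w)=0$ when $w\notin L$; in particular $f_{\mathcal{G}}(w)\ge 0$ for every $w\in\Sigma^{*}$. For a word $y=y_1\cdots y_m$ over $\Sigma\setminus\{\kappa\}$, its pre-images under $h$ are exactly the strings $\kappa^{j_0}y_1\kappa^{j_1}y_2\cdots y_m\kappa^{j_m}$ with $j_0,\dots,j_m\ge 0$, and $y\in h(L)$ iff $h^{-1}(y)\cap L\neq\emptyset$, which, since every $f_{\mathcal{G}}$-value is $\ge 0$, is equivalent to $\sum_{w\in h^{-1}(y)}f_{\mathcal{G}}(w)>0$. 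So it suffices to build a GFA $\mathcal{G}'$ over $\Sigma\setminus\{\kappa\}$ computing the function $y\mapsto\sum_{w\in h^{-1}(y)}f_{\mathcal{G}}(w)$.

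Writing out the sum, $\sum_{w\in h^{-1}(y)}f_{\mathcal{G}}(w)=f\bigl(\sum_{j\ge 0}A_\kappa^{j}\bigr)A_{y_m}\bigl(\sum_{j\ge 0}A_\kappa^{j}\bigr)A_{y_{m-1}}\cdots A_{y_1}\bigl(\sum_{j\ge 0}A_\kappa^{j}\bigr)v_0$, which would equal $fKA_{y_m}KA_{y_{m-1}}\cdots A_{y_1}Kv_0$ for $K=(I-A_\kappa)^{-1}$. The obstacle is that the Neumann series $\sum_{j}A_\kappa^{j}$ need not converge. I would remove it by a harmless \emph{rescaling}: pick a positive real $c$ with $c\|A_\kappa\|<1$ (in any submultiplicative matrix norm) and let $\mathcal{G}_c$ be the GFA with matrices $cA_\sigma$; since $f_{\mathcal{G}_c}(w)=c^{|w|}f_{\mathcal{G}}(w)$ and $c>0$, $\mathcal{G}_c$ still recognizes $L$ with one-sided cutpoint $0$, while now $K:=(I-cA_\kappa)^{-1}=\sum_{j\ge 0}(cA_\kappa)^{j}$ converges absolutely. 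A routine absolute-convergence estimate (using $\|cA_\kappa\|<1$ and the finite number of non-$\kappa$ factors) then shows $\sum_{w\in h^{-1}(y)}f_{\mathcal{G}_c}(w)$ is a finite nonnegative real equal to $fK(cA_{y_m})K(cA_{y_{m-1}})\cdots(cA_{y_1})Kv_0$. I would then set $\mathcal{G}'=(Q,\,\Sigma\setminus\{\kappa\},\,\{cA_\sigma K\}_{\sigma\neq\kappa},\,v_0,\,fK)$; a direct expansion gives $f_{\mathcal{G}'}(y)=\sum_{w\in h^{-1}(y)}f_{\mathcal{G}_c}(w)$, so $f_{\mathcal{G}'}(y)\ge 0$ for all $y$, and $f_{\mathcal{G}'}(y)>0$ iff $y\in h(L)$. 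Hence $\mathcal{G}'$ recognizes $h(L)$ with one-sided cutpoint $0$, and $h(L)\in$ S$^{\neq}$ by Lemma~\ref{lemma:Sneq-GPFAonesidedcutpoint0}.

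The one point that requires care is the justification for interchanging the infinite sum over pre-images with the matrix products, which is precisely what the rescaling step legitimizes; the rest is bookkeeping, and I expect no further difficulty.
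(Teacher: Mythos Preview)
The paper does not actually prove this lemma; it only states it and refers the reader to \cite{YS10A} for the proof, so there is no in-paper argument to compare against.

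Your argument is correct. The characterization via Lemma~\ref{lemma:Sneq-GPFAonesidedcutpoint0} gives a GFA with one-sided cutpoint $0$, hence $f_{\mathcal{G}}(w)\ge 0$ for every $w$, so $y\in h(L)$ is indeed equivalent to the nonnegative series $\sum_{w\in h^{-1}(y)}f_{\mathcal{G}_c}(w)$ being positive. The rescaling by a positive $c$ with $c\|A_\kappa\|<1$ is exactly the right device: it preserves the one-sided cutpoint (since $f_{\mathcal{G}_c}(w)=c^{|w|}f_{\mathcal{G}}(w)$), makes $K=(I-cA_\kappa)^{-1}=\sum_{j\ge 0}(cA_\kappa)^j$ converge absolutely, and the resulting absolute convergence of the $(m{+}1)$-fold sum over $(j_0,\dots,j_m)$ justifies factoring it into the product $fK(cA_{y_m})K\cdots(cA_{y_1})Kv_0$. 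Your choice of $\mathcal{G}'$ with transition matrices $cA_\sigma K$, initial vector $v_0$, and final row $fK$ reproduces precisely this product (including the $m=0$ case, where $f_{\mathcal{G}'}(\varepsilon)=fKv_0$). One small remark worth adding for completeness: $K$ is real because $A_\kappa$ is, so $\mathcal{G}'$ is a genuine GFA in the sense of Figure~\ref{sbc:fig:GFA}.
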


\begin{lemma}
     \label{lemma:homomorphism2}
     Let $ h : \Sigma \rightarrow \Upsilon^{*} $ be a homomorphism such
that $ |h(\sigma)|>0 $
     for all $ \sigma \in \Sigma $. If $ L \subseteq \Sigma^{*} $ is in S$
^{\neq} $, then so is $ h(L) $.
\end{lemma}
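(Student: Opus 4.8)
The plan is to work through the GFA characterization of $\mathrm{S}^{\neq}$ established in Lemma~\ref{lemma:Sneq-GPFAonesidedcutpoint0}: $L\in\mathrm{S}^{\neq}$ iff there is a GFA $\mathcal{G}=(Q,\Sigma,\{A_{\sigma\in\Sigma}\},v_{0},f)$ recognizing $L$ with one-sided cutpoint $0$, which by the cutpoint definitions means $f_{\mathcal{G}}(w)\ge 0$ for every $w$ and $f_{\mathcal{G}}(w)>0$ precisely when $w\in L$. From such a $\mathcal{G}$ I would construct a GFA $\mathcal{G}'$ over $\Upsilon$ that, while scanning $u\in\Upsilon^{*}$ left to right, runs $\mathcal{G}$ simultaneously on every word $w$ with $h(w)=u$, and show that $f_{\mathcal{G}'}(u)=c\sum_{w:\,h(w)=u}f_{\mathcal{G}}(w)$ for a fixed positive constant $c$. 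Since $|h(\sigma)|\ge 1$ for all $\sigma$, only finitely many such $w$ exist for each $u$, and because every term $f_{\mathcal{G}}(w)$ is nonnegative the sum is $>0$ iff some $w\in L$ has $h(w)=u$, i.e. iff $u\in h(L)$; hence $\mathcal{G}'$ recognizes $h(L)$ with one-sided cutpoint $0$ and Lemma~\ref{lemma:Sneq-GPFAonesidedcutpoint0} gives $h(L)\in\mathrm{S}^{\neq}$. (Note no ``squaring'' gadget is needed, since the one-sided cutpoint $0$ characterization already hands us a GFA with nonnegative acceptance values.)

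\textbf{Construction.} The state set of $\mathcal{G}'$ would be the triples $(q,\sigma,j)$ with $q\in Q$, $\sigma\in\Sigma$, $1\le j\le|h(\sigma)|$, where $(q,\sigma,j)$ encodes ``$\mathcal{G}$ is in basis state $q$, the block currently being matched is $h(\sigma)$, and the next input letter should be $h(\sigma)_{j}$.'' The initial vector launches one copy of the simulation for each candidate first symbol: $v_{0}'=\sum_{\sigma\in\Sigma}\sum_{q\in Q}v_{0}[q]\,e_{(q,\sigma,1)}$. On reading $b\in\Upsilon$ from basis state $(q,\sigma,j)$ the transition matrix $A_{b}'$ acts as: send $e_{(q,\sigma,j)}$ to the zero vector if $b\neq h(\sigma)_{j}$; to $e_{(q,\sigma,j+1)}$ if $b=h(\sigma)_{j}$ and $j<|h(\sigma)|$; and, when $b=h(\sigma)_{j}$ and $j=|h(\sigma)|$ (the block for $\sigma$ is now complete), to $\sum_{q'\in Q}A_{\sigma}[q',q]\sum_{\sigma'\in\Sigma}e_{(q',\sigma',1)}$ — this is exactly where $\mathcal{G}$'s step $A_{\sigma}$ is applied and where all possible continuations $\sigma'$ are spawned. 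Finally $f'[(q,\sigma,j)]=f[q]$ if $j=1$ and $0$ otherwise, so that only configurations sitting exactly at a block boundary (equivalently, corresponding to a complete parse of $u$ into $h$-blocks) are read off. All of this is legitimate because GFAs permit arbitrary real transition matrices, so no well-formedness condition has to be checked.

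\textbf{Correctness and main obstacle.} A straightforward induction on $|u|$ shows that after reading $u$ the state vector of $\mathcal{G}'$ is the sum, over all partial parses $u=h(\sigma_{1})\cdots h(\sigma_{m})\,p$ with $p$ a proper prefix of some $h(\sigma_{m+1})$, of $A_{\sigma_{m}}\cdots A_{\sigma_{1}}v_{0}$ placed in the corresponding buffer states; applying $f'$ then selects exactly the full parses ($p=\varepsilon$) and yields $f_{\mathcal{G}'}(u)=|\Sigma|\sum_{w:\,h(w)=u}f_{\mathcal{G}}(w)$ (the factor $|\Sigma|$ coming from the final next-symbol guess; the $u=\varepsilon$ case checks out since then $w=\varepsilon$ is forced and $f_{\mathcal{G}'}(\varepsilon)=|\Sigma|f_{\mathcal{G}}(\varepsilon)$). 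The one place requiring care — and the reason this works for $\mathrm{S}^{\neq}$ but not for $\mathrm{S}$, which is not closed under homomorphism by Fact~\ref{fact:closure-properties-SL} — is that there can be \emph{no destructive cancellation}: each of the finitely many parses contributes a nonnegative term $f_{\mathcal{G}}(w)\ge 0$, so the total is zero iff every parse gives zero. The rest is the routine bookkeeping of verifying that ill-formed branches are indeed killed (sent to the zero vector) and that the buffer states segment $u$ correctly; this is entirely parallel to the argument for the erasing case in Lemma~\ref{lemma:homomorphism1}, only run ``in the expanding direction.''
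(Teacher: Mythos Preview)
The paper does not actually include a proof of this lemma; the surrounding passage explicitly defers all the proofs in this block (Theorems~\ref{theorem:one-sided-closure-concatenation}--\ref{theorem:close-difference-with-regular-language} and Lemmas~\ref{lemma:homomorphism1}--\ref{lemma:homomorphism2}) to \cite{YS10A}. So there is nothing in the present text to compare against line by line.

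That said, your argument is correct and is the natural one. You use Lemma~\ref{lemma:Sneq-GPFAonesidedcutpoint0} exactly as intended: the one-sided cutpoint~$0$ characterization guarantees $f_{\mathcal{G}}(w)\ge 0$ for every $w$ and $f_{\mathcal{G}}(w)>0$ iff $w\in L$, so when you superpose all $h$-parses of~$u$ the resulting sum $\sum_{h(w)=u}f_{\mathcal{G}}(w)$ has no cancellation, and is positive iff some preimage lies in~$L$. Your GFA~$\mathcal{G}'$ that buffers a guessed source letter~$\sigma$ and a position~$j$ inside $h(\sigma)$, kills mismatching branches, and applies $A_\sigma$ when a block completes, realizes this sum linearly; the stated induction on~$|u|$ goes through, and the finiteness of the preimage set (forced by $|h(\sigma)|\ge 1$) is exactly what makes the sum well-defined. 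The observation that this is where the argument breaks for~$\mathrm{S}$ (Fact~\ref{fact:closure-properties-SL}) is also to the point: without nonnegativity of the summands, a positive and a negative contribution from two different parses could cancel.

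Two cosmetic remarks. First, the constant factor $|\Sigma|$ is an artifact of spawning the next guess immediately upon block completion; it is harmless, but you could eliminate it by adding a single ``boundary'' state and letting $f'$ read only from there. Second, when you verify the $u=\varepsilon$ case it is worth saying explicitly that the hypothesis $|h(\sigma)|>0$ forces $h^{-1}(\varepsilon)=\{\varepsilon\}$, which you do implicitly but could state outright.
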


\begin{theorem}
     S$ ^{\neq} $ is closed under homomorphism.
\end{theorem}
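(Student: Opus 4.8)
The plan is to reduce an arbitrary homomorphism to the two special cases already handled by Lemmas \ref{lemma:homomorphism2} and \ref{lemma:homomorphism1}, by factoring $h$ as a non-erasing homomorphism followed by a finite sequence of single-symbol erasures. Concretely, let $h:\Sigma^{*}\rightarrow\Upsilon^{*}$ be an arbitrary homomorphism and let $L\subseteq\Sigma^{*}$ be a language in S$^{\neq}$. Put $\Sigma_{0}=\{\sigma\in\Sigma\mid h(\sigma)=\varepsilon\}$, the set of erased letters, and introduce for each $\sigma\in\Sigma_{0}$ a fresh symbol $c_{\sigma}$ not occurring in $\Upsilon$; set $\Delta=\Upsilon\cup\{c_{\sigma}\mid\sigma\in\Sigma_{0}\}$. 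Define $g:\Sigma^{*}\rightarrow\Delta^{*}$ by $g(\sigma)=h(\sigma)$ when $\sigma\notin\Sigma_{0}$ and $g(\sigma)=c_{\sigma}$ when $\sigma\in\Sigma_{0}$, so that $|g(\sigma)|\geq 1$ for every $\sigma$ and $g$ is non-erasing. Define also $e:\Delta^{*}\rightarrow\Upsilon^{*}$ to be the homomorphism fixing each symbol of $\Upsilon$ and erasing each $c_{\sigma}$. A direct check on a word $\sigma_{1}\cdots\sigma_{n}$ shows $e(g(\sigma_{1}\cdots\sigma_{n}))=h(\sigma_{1})\cdots h(\sigma_{n})=h(\sigma_{1}\cdots\sigma_{n})$, hence $h=e\circ g$ and $h(L)=e(g(L))$.

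Next I would apply the two lemmas in turn. Since $g$ is non-erasing and $L\in$ S$^{\neq}$, Lemma \ref{lemma:homomorphism2} gives $g(L)\in$ S$^{\neq}$. Now enumerate $\Sigma_{0}=\{\sigma^{(1)},\ldots,\sigma^{(m)}\}$ and factor $e$ as $e_{m}\circ\cdots\circ e_{1}$, where $e_{1}$ erases $c_{\sigma^{(1)}}$ from $\Delta$ while fixing everything else, $e_{2}$ erases $c_{\sigma^{(2)}}$ from $\Delta\setminus\{c_{\sigma^{(1)}}\}$, and so on. Each $e_{i}$ is precisely a homomorphism of the form treated in Lemma \ref{lemma:homomorphism1}: it erases one designated symbol and acts as the identity on the remaining alphabet. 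Applying that lemma $m$ times, starting from $g(L)\in$ S$^{\neq}$, yields $e(g(L))\in$ S$^{\neq}$; since $e(g(L))=h(L)$, this proves $h(L)\in$ S$^{\neq}$, i.e. S$^{\neq}$ is closed under homomorphism.

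I do not expect a serious obstacle here, since the content is entirely in the two preceding lemmas; the argument is just a standard decomposition. The only points needing care are choosing the $c_{\sigma}$ disjoint from $\Upsilon$ so that $e$ erases exactly the intended symbols, handling the degenerate cases (if $L=\emptyset$ then $h(L)=\emptyset$, and if $\Sigma_{0}=\Sigma$ the construction still runs with $g(L)$ a language over $\{c_{\sigma}\mid\sigma\in\Sigma_{0}\}^{*}$, and in either situation the result lies in S$^{\neq}$ since REG $\subseteq$ S$^{\neq}$ by Fact \ref{fact:RLpropersubsetSLneqSLeq}), and observing that S$^{\neq}$ is taken over arbitrary alphabets so that passing from $\Sigma$ to $\Delta$ to $\Upsilon$ is legitimate.
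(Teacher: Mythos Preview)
Your proof is correct and follows exactly the approach the paper sets up: the theorem is stated immediately after Lemmas \ref{lemma:homomorphism1} and \ref{lemma:homomorphism2}, and the intended argument is precisely your decomposition of an arbitrary homomorphism into a non-erasing homomorphism followed by a sequence of single-symbol erasures, applying the two lemmas in turn.
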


\begin{theorem}
      \label{theorem:close-inverse-homomorphism}
     S$ ^{\neq} $ and S$ ^{=} $ are closed under inverse homomorphism.
\end{theorem}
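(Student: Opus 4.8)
The plan is to push both statements through the generalized-finite-automaton (GFA) characterization of S$^{\neq}$ supplied by Lemma~\ref{lemma:Sneq-GPFAonesidedcutpoint0}, and then to obtain the S$^{=}$ case for free by complementation. The GFA model of Figure~\ref{sbc:fig:GFA} reads only symbols of its input alphabet (no end-markers), which is exactly what makes an inverse homomorphism harmless: a symbol of the new alphabet can be substituted by the corresponding word over the old alphabet without disturbing any boundary behaviour.

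First I would fix $ L \subseteq \Upsilon^{*} $ in S$^{\neq}$ and a homomorphism $ h : \Sigma \rightarrow \Upsilon^{*} $, extended to $ h : \Sigma^{*} \rightarrow \Upsilon^{*} $ in the usual way. By Lemma~\ref{lemma:Sneq-GPFAonesidedcutpoint0} there is a GFA $ \mathcal{G}=(Q,\Upsilon,\{A_{\sigma \in \Upsilon}\},v_{0},f) $ recognizing $ L $ with one-sided cutpoint $ 0 $, so $ f_{\mathcal{G}}(u) > 0 $ for $ u \in L $ and $ f_{\mathcal{G}}(u)=0 $ for $ u \notin L $; in particular $ f_{\mathcal{G}} $ is nonnegative on $ \Upsilon^{*} $. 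Recalling that the transition matrices extend to strings by $ A_{u}=A_{u_{|u|}} \cdots A_{u_{1}} $ with $ A_{\varepsilon}=I $, I would build the GFA $ \mathcal{G}'=(Q,\Sigma,\{A_{\sigma \in \Sigma}'\},v_{0},f) $ over $ \Sigma $ by putting $ A_{\sigma}'=A_{h(\sigma)} $ for every $ \sigma \in \Sigma $ (so $ A_{\sigma}'=I $ when $ h(\sigma)=\varepsilon $). An easy induction on $ |w| $, using $ A_{uv}=A_{v}A_{u} $, shows $ A_{w}'=A_{h(w)} $ for all $ w \in \Sigma^{*} $, hence
\begin{equation}
	f_{\mathcal{G}'}(w) = f A_{w}' v_{0} = f A_{h(w)} v_{0} = f_{\mathcal{G}}(h(w)).
\end{equation}
Thus $ f_{\mathcal{G}'} $ is nonnegative, and $ f_{\mathcal{G}'}(w) \neq 0 $ iff $ h(w) \in L $ iff $ w \in h^{-1}(L) $; so $ \mathcal{G}' $ recognizes $ h^{-1}(L) $ with one-sided cutpoint $ 0 $, and Lemma~\ref{lemma:Sneq-GPFAonesidedcutpoint0} gives $ h^{-1}(L) \in $ S$^{\neq}$.

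For S$^{=}$ I would simply use that inverse images commute with complementation, $ \overline{h^{-1}(L)}=h^{-1}(\overline{L}) $. If $ L \in $ S$^{=}$, then $ \overline{L} \in $ S$^{\neq}$ since S$^{\neq}$ is the complementary class of S$^{=}$; the case just proved then yields $ h^{-1}(\overline{L})=\overline{h^{-1}(L)} \in $ S$^{\neq}$, and therefore $ h^{-1}(L) \in $ S$^{=}$.

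I do not expect a real obstacle here: the construction is bookkeeping, and the only places that need attention are the empty-word case $ h(\sigma)=\varepsilon $ (handled by the empty product $ A_{\sigma}'=I $) and keeping the multiplication order consistent with $ A_{u}=A_{u_{|u|}} \cdots A_{u_{1}} $. The one load-bearing ingredient is Lemma~\ref{lemma:Sneq-GPFAonesidedcutpoint0}: besides giving an end-marker-free model, it guarantees a nonnegative acceptance function, which is what lets the substituted automaton $ \mathcal{G}' $ still satisfy the \emph{one-sided} cutpoint-$ 0 $ condition rather than merely an exclusive-cutpoint one.
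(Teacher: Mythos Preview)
Your argument is correct. The GFA substitution $A'_{\sigma}=A_{h(\sigma)}$ is the standard device for inverse homomorphism, and the identity $f_{\mathcal{G}'}(w)=f_{\mathcal{G}}(h(w))$ transfers the one-sided cutpoint-$0$ condition verbatim; the S$^{=}$ case then follows from $\overline{h^{-1}(L)}=h^{-1}(\overline{L})$ exactly as you say. One small remark: your final comment about nonnegativity being ``load-bearing'' is slightly overstated --- since $f_{\mathcal{G}'}(w)$ equals $f_{\mathcal{G}}(h(w))$ on the nose, the one-sided condition (strictly positive on members, zero on non-members) carries over directly, without needing nonnegativity as a separate ingredient.

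As for comparison with the paper: the paper does not actually prove this theorem in the body of the text --- it states the result and defers all proofs in this closure-property block to \cite{YS10A}. So there is no in-paper argument to compare against. Your approach via Lemma~\ref{lemma:Sneq-GPFAonesidedcutpoint0} is the natural one given the tools developed here, and is almost certainly what the referenced paper does as well.
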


\begin{theorem}
      \label{theorem:close-reversal}
    S$ ^{\neq} $ and S$ ^{=} $ are closed under reversal.
\end{theorem}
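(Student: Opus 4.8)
The plan is to reduce both claims to the closure of the generalized finite automaton (GFA) model under reversal, which is exactly the transpose argument of Turakainen that already yields closure of S under reversal (Fact~\ref{fact:closure-properties-SL}). The crucial structural observation is that, unlike a PFA, a GFA (Figure~\ref{sbc:fig:GFA}) is built from an arbitrary tuple of real matrices and real row/column vectors with no stochasticity or normalization requirement, so it may be transposed freely without leaving the model.

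For the S$^{\neq}$ case I would start from the characterization of Lemma~\ref{lemma:Sneq-GPFAonesidedcutpoint0}: $L\in$ S$^{\neq}$ iff there is a GFA $\mathcal{G}=(Q,\Sigma,\{A_{\sigma\in\Sigma}\},v_{0},f)$ recognizing $L$ with one-sided cutpoint $0$, i.e.\ $f_{\mathcal{G}}(w)>0$ for $w\in L$ and $f_{\mathcal{G}}(w)=0$ for $w\notin L$. Given such a $\mathcal{G}$, define the reverse GFA $\mathcal{G}^{\rev}=(Q,\Sigma,\{A^{\trans}_{\sigma\in\Sigma}\},f^{\trans},v_{0}^{\trans})$, with initial column vector $f^{\trans}$ and final row vector $v_{0}^{\trans}$. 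The single computation to carry out is that, for every $w=w_{1}\cdots w_{n}$,
\[
	f_{\mathcal{G}^{\rev}}(w)=v_{0}^{\trans}A^{\trans}_{w_{n}}\cdots A^{\trans}_{w_{1}}f^{\trans}=(fA_{w_{1}}\cdots A_{w_{n}}v_{0})^{\trans}=f_{\mathcal{G}}(w^{\rev}),
\]
where the middle equality uses that a $1\times1$ matrix equals its transpose and the last one is the definition of $f_{\mathcal{G}}$ on the reversed word $w^{\rev}=w_{n}\cdots w_{1}$. Hence $\{w:f_{\mathcal{G}^{\rev}}(w)>0\}=\{w:w^{\rev}\in L\}=L^{\rev}$ and $f_{\mathcal{G}^{\rev}}$ vanishes off $L^{\rev}$, so $\mathcal{G}^{\rev}$ recognizes $L^{\rev}$ with one-sided cutpoint $0$; by Lemma~\ref{lemma:Sneq-GPFAonesidedcutpoint0} (or Fact~\ref{fact:GPFA-PFA}) this gives $L^{\rev}\in$ S$^{\neq}$.

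The S$^{=}$ case then follows formally: by the definition of S$^{=}$ as the complementary class of S$^{\neq}$, $L\in$ S$^{=}$ means $\overline{L}\in$ S$^{\neq}$; the previous paragraph gives $(\overline{L})^{\rev}\in$ S$^{\neq}$; and since reversal commutes with complementation, $(\overline{L})^{\rev}=\overline{L^{\rev}}$, so $\overline{L^{\rev}}\in$ S$^{\neq}$, i.e.\ $L^{\rev}\in$ S$^{=}$. I expect no real obstacle: the only point needing care is the bookkeeping of the matrix-product order in the GFA acceptance function (the matrices are applied right-to-left, so transposing the product formed for the input $w$ produces the left-to-right product, which is precisely the acceptance computation on $w^{\rev}$). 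All stochasticity concerns are sidestepped by carrying out the construction at the GFA level, where closure under transposition is immediate, and the two translations back to S$^{\neq}$ and S$^{=}$ are handled entirely by the already-established characterizations and closure facts.
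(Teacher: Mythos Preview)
Your argument is correct. The paper does not actually include a proof of this theorem in the thesis body; it states the result and refers the reader to \cite{YS10A} for the proofs of this and the surrounding closure properties. So there is no in-paper proof to compare against, but your approach---pass to a GFA via Lemma~\ref{lemma:Sneq-GPFAonesidedcutpoint0}, transpose, and use that $f_{\mathcal{G}^{\rev}}(w)=f_{\mathcal{G}}(w^{\rev})$---is exactly the standard Turakainen reversal argument underlying Fact~\ref{fact:closure-properties-SL}(iii), specialized to the one-sided cutpoint $0$ characterization of S$^{\neq}$, and the reduction of the S$^{=}$ case to the S$^{\neq}$ case by complementation is clean. The bookkeeping with the matrix order matches the paper's convention $f_{\mathcal{G}}(w)=fA_{w_{|w|}}\cdots A_{w_{1}}v_{0}$, so the displayed identity is correct as written.
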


\begin{theorem}
      \label{theorem:close-word-quotient}
     S$ ^{\neq} $ and S$ ^{=} $  are closed under word quotient.
\end{theorem}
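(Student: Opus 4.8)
The plan is to use the generalized finite automaton (GFA) characterizations of these two classes together with the observation that a GFA processes its input by left-multiplying a fixed matrix for each symbol, so that prepending or appending a fixed word to the input can be absorbed into the initial or final vector. Recall (Fact~\ref{fact:GFA-GFA}, Fact~\ref{fact:GPFA-PFA}, and, in the cutpoint-$0$ case, Lemma~\ref{lemma:Sneq-GPFAonesidedcutpoint0}) that a language $ L \subseteq \Sigma^{*} $ lies in S$ ^{=} $ (resp.\ S$ ^{\neq} $) if and only if there is a GFA $ \mathcal{G}=(Q,\Sigma,\{A_{\sigma \in \Sigma}\},v_{0},f) $ and a cutpoint $ \lambda \in \mathbb{R} $ with $ L = \{ w \mid f_{\mathcal{G}}(w)=\lambda \} $ (resp.\ $ L = \{ w \mid f_{\mathcal{G}}(w) \neq \lambda \} $). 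Because the word quotient by $ u = u_{1}\cdots u_{m} \in \Sigma^{*} $ decomposes into a sequence of single-symbol quotients, it would already suffice to treat $ |u|=1 $, but the construction is no harder for arbitrary $ u $, so I would carry it out directly.

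For the left quotient $ u^{-1}L = \{ w \mid uw \in L \} $, write $ A_{u} = A_{u_{m}}\cdots A_{u_{1}} $ (the matrix the GFA accumulates while reading $ u $) and define $ \mathcal{G}' = (Q,\Sigma,\{A_{\sigma \in \Sigma}\}, A_{u}v_{0}, f) $; that is, keep the transition matrices and the final vector, and replace the initial vector by $ A_{u}v_{0} $. Using the GFA convention $ A_{wx}=A_{x}A_{w} $ one checks $ f_{\mathcal{G}'}(w) = f A_{w} A_{u} v_{0} = f A_{uw} v_{0} = f_{\mathcal{G}}(uw) $ for every $ w $, hence $ \{ w \mid f_{\mathcal{G}'}(w)=\lambda \} = \{ w \mid uw \in L \} = u^{-1}L $, and identically with $ \neq $ in place of $ = $. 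Since the cutpoint $ \lambda $ is unchanged, $ u^{-1}L $ belongs to S$ ^{=} $ whenever $ L $ does and to S$ ^{\neq} $ whenever $ L $ does. For the right quotient $ Lu^{-1} = \{ w \mid wu \in L \} $ one symmetrically leaves $ v_{0} $ alone and replaces $ f $ by $ f A_{u} $: then $ f_{\mathcal{G}''}(w) = f A_{u} A_{w} v_{0} = f A_{wu} v_{0} = f_{\mathcal{G}}(wu) $, giving $ Lu^{-1} $ in the same class by the identical argument.

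The whole argument is elementary, and I do not expect any real obstacle; the only points needing care are getting the multiplication order right in the GFA convention (so that prepending is folded into $ v_{0} $ and appending into $ f $), and noticing that a single GFA with a single cutpoint witnesses both the ``$=$'' and the ``$\neq$'' descriptions, which lets S$ ^{=} $ and S$ ^{\neq} $ be handled simultaneously. It is worth stressing that this simplicity is special to quotient by one word: quotient by a finite language follows by a finite union together with closure of S$ ^{=} $ and S$ ^{\neq} $ under union (Fact~\ref{fact:closure-properties-Seq-Sneq}), but quotient by an arbitrary regular or stochastic language is a genuinely harder question that this approach does not settle.
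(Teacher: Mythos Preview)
Your argument is correct. The paper itself does not give a proof of this theorem; it states the result and refers the reader to \cite{YS10A} for the proofs of this and the surrounding closure properties. Your GFA-based construction---absorbing a fixed prefix into the initial vector and a fixed suffix into the final row vector---is the standard one, and in fact the paper uses precisely the left-quotient half of it (replacing $v_{0}$ by $A_{a}v_{0}$) in the proof of Theorem~\ref{theorem:L_twosided-1}, so your approach is entirely in line with the techniques the thesis employs.
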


\begin{theorem}
      \label{theorem:close-difference}
     S$ ^{\neq} $ and S$ ^{=} $ are not closed under difference.
\end{theorem}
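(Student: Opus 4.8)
The plan is to reduce this to the already-established non-closure under complementation (Fact~\ref{fact:closure-properties-Seq-Sneq}). The key observation is that, for any alphabet $\Sigma$, the full language $\Sigma^{*}$ is regular, hence belongs to both $S^{\neq}$ and $S^{=}$ by Fact~\ref{fact:RLpropersubsetSLneqSLeq}. Since $\Sigma^{*}\setminus L=\overline{L}$ (complement taken over $\Sigma$), closure of either class under difference would immediately imply closure of that class under complementation, which is false. So the entire argument is: pick a witness language showing failure of complementation closure, and present it as a difference against $\Sigma^{*}$.

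Concretely, I would exhibit explicit witnesses using $L_{eq}$ and $L_{neq}=\{w\in\{a,b\}^{*}\mid |w|_{a}\neq |w|_{b}\}$, whose status is recalled in Section~\ref{uerr:two-sided}: $L_{neq}\in S^{\neq}$ and $L_{eq}\in S^{=}$, but not vice versa, i.e.\ $L_{eq}\notin S^{\neq}$ and $L_{neq}\notin S^{=}$. For $S^{\neq}$: both $\{a,b\}^{*}$ and $L_{neq}$ lie in $S^{\neq}$, yet $\{a,b\}^{*}\setminus L_{neq}=L_{eq}\notin S^{\neq}$, so $S^{\neq}$ is not closed under difference. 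For $S^{=}$: both $\{a,b\}^{*}$ and $L_{eq}$ lie in $S^{=}$ (the former by Fact~\ref{fact:RLpropersubsetSLneqSLeq}, the latter as noted above), yet $\{a,b\}^{*}\setminus L_{eq}=L_{neq}\notin S^{=}$, so $S^{=}$ is not closed under difference either. (Alternatively, one could route through the abstract statement, taking any complementation witness $L_{0}$ from Fact~\ref{fact:closure-properties-Seq-Sneq} and the pair $(\Sigma^{*},L_{0})$.)

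There is essentially no obstacle here; the only point requiring a moment's care is to ensure that the complement in Fact~\ref{fact:closure-properties-Seq-Sneq} is taken over the same alphabet as the witness language, so that $\Sigma^{*}\setminus L_{0}$ and $\overline{L_{0}}$ genuinely coincide --- this is the standard convention and holds for the $L_{eq}/L_{neq}$ pair over $\{a,b\}$. I would therefore write the proof in two short lines, once for each class, citing Fact~\ref{fact:RLpropersubsetSLneqSLeq} for membership of $\{a,b\}^{*}$ and the results quoted at the start of Section~\ref{uerr:two-sided} for the non-membership of $L_{eq}$ in $S^{\neq}$ and of $L_{neq}$ in $S^{=}$.
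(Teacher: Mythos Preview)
Your argument is correct: reduce non-closure under difference to non-closure under complementation via $\Sigma^{*}\setminus L=\overline{L}$, using that $\Sigma^{*}\in\mbox{REG}\subseteq S^{\neq}\cap S^{=}$ (Fact~\ref{fact:RLpropersubsetSLneqSLeq}) and the $L_{eq}/L_{neq}$ witnesses from Section~\ref{uerr:two-sided}. The paper itself does not give a proof of this theorem, referring instead to \cite{YS10A}; your argument is the standard one and almost certainly coincides with what is done there.
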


\begin{theorem}
      \label{theorem:close-difference-with-regular-language}
    S$ ^{\neq} $ and S$ ^{=} $ are closed under difference with a
regular language.
\end{theorem}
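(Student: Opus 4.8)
The plan is to notice that difference with a regular language is the very same operation as intersection with a regular language, and then to invoke the closure facts already on the books. Concretely, for any language $L$ and any regular $R$ we have $L \setminus R = L \cap \overline{R}$, and since REG is closed under complementation, $\overline{R}$ is again regular. By Fact~\ref{fact:RLpropersubsetSLneqSLeq}, every regular language lies in both S$^{\neq}$ and S$^{=}$, so $\overline{R}$ belongs to whichever of the two classes we are working in. Hence, if $L \in$ S$^{\neq}$ (resp. $L \in$ S$^{=}$), then $L \setminus R = L \cap \overline{R}$ is the intersection of two members of S$^{\neq}$ (resp. S$^{=}$), and by Fact~\ref{fact:closure-properties-Seq-Sneq}(1) both classes are closed under intersection; therefore $L \setminus R$ lies in the same class. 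This disposes of both statements at once.

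If one prefers a self-contained argument that does not route through general closure under intersection, I would instead build the witnessing device directly for S$^{\neq}$ and then dualize. By Lemma~\ref{lemma:Sneq-GPFAonesidedcutpoint0} take a GFA $\mathcal{G}=(Q,\Sigma,\{A_{\sigma}\},v_{0},f)$ recognizing $L$ with one-sided cutpoint $0$; after replacing $A_{\sigma}$ by $A_{\sigma}\otimes A_{\sigma}$, $v_{0}$ by $v_{0}\otimes v_{0}$, and $f$ by $f\otimes f$, one may also assume $f_{\mathcal{G}}(w)\ge 0$ for every $w$. Let $\mathcal{D}$ be a DFA for $\overline{R}$ with start state $p_{1}$ and accepting set $P_{a}$, and form the product GFA on state set $Q\times P$: its transition matrix for $\sigma$ acts as $A_{\sigma}$ on the first component and as the (deterministic, hence $0/1$) transition matrix of $\mathcal{D}$ on the second; its initial vector places the mass of $v_{0}$ on states whose $\mathcal{D}$-component is $p_{1}$; and its final row vector is $f$ restricted to states $(q,p)$ with $p\in P_{a}$. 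A routine check shows this machine's acceptance value on $w$ equals $f_{\mathcal{G}}(w)$ when $w\in\overline{R}$ and $0$ otherwise, so it recognizes $L\setminus R$ with one-sided cutpoint $0$, and $L\setminus R\in$ S$^{\neq}$ by Lemma~\ref{lemma:Sneq-GPFAonesidedcutpoint0} again. For S$^{=}$, observe $\overline{L\setminus R}=\overline{L}\cup R$ with $\overline{L}\in$ S$^{\neq}$ and $R\in$ REG $\subseteq$ S$^{\neq}$; since S$^{\neq}$ is closed under union (Fact~\ref{fact:closure-properties-Seq-Sneq}(1)), $\overline{L\setminus R}\in$ S$^{\neq}$, i.e. $L\setminus R\in$ S$^{=}$.

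There is essentially no hard step here: the crux is simply recognizing that ``difference with a regular language'' is ``intersection with the (regular) complement'', after which the result is immediate from Facts~\ref{fact:RLpropersubsetSLneqSLeq} and~\ref{fact:closure-properties-Seq-Sneq}. Should the direct construction be preferred, the only points deserving care are the bookkeeping that lets the product machine's final vector select exactly the accepting states of the DFA without disturbing the underlying GFA's acceptance value, and the squaring trick that guarantees nonnegativity so that the ``one-sided cutpoint $0$'' characterization is preserved.
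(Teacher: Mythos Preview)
Your proposal is correct. Note, however, that the paper does not actually present a proof of this theorem here; it states the result and refers the reader to \cite{YS10A} for the argument, so there is no in-paper proof to compare against.

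Your first argument is the natural one and is fully supported by facts already recorded in the paper: $L\setminus R = L\cap\overline{R}$, REG is closed under complementation, REG $\subseteq$ S$^{\neq}\cap$ S$^{=}$ (Fact~\ref{fact:RLpropersubsetSLneqSLeq}), and both classes are closed under intersection (Fact~\ref{fact:closure-properties-Seq-Sneq}). This is almost certainly what the cited paper does as well, since once those closure facts are available the result is a one-line corollary. Your alternative direct construction is also sound; the squaring step to force nonnegativity of the GFA's acceptance value is a nice touch that keeps the one-sided cutpoint characterization intact, though it is unnecessary given the first route.
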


For completeness, we list below the following easy facts about MCL and NMCL:
\begin{enumerate}
      \item NMCL is closed under both union and intersection.
      \item Neither MCL nor NMCL is closed under
complementation \cite{BC01B}.
      \item Both MCL and NMCL are closed under inverse homomorphism
\cite{MC00}.
      \item Both MCL and NMCL are closed under word quotient
\cite{BP02}.
\end{enumerate}

Some related open problems are as follows:
\begin{openproblem}
      Is MCL closed under union? Intersection?
\end{openproblem}
\begin{openproblem}
      Do NMCL and MCL coincide?
\end{openproblem}
\begin{openproblem}
      Is S closed under complementation? (page 158 of \cite{Pa71})
\end{openproblem}
\begin{openproblem}
      Is S$ ^{=} $ a subset of S? (page 173 of \cite{Pa71})
\end{openproblem}

\chapter{CONSTANT-SPACE BOUNDED-ERROR COMPUTATION} \label{berr}

In this chapter, we focus on bounded-error computation in constant space.
The results presented in Sections 
\ref{berr:reset-probabilistic}, \ref{berr:reset-quantum}, \ref{berr:succinctness},
and \ref{berr:amplification} are obtained by our new two-way PFAs and QFAs 
(defined in Section \ref{berr:reset-definition}).
In the second part of the chapter (Section \ref{berr:postsel}),
we present our results related with realtime PFAs and QFAs with postselection.

\section{Probabilistic and Quantum Automata with Resetting} \label{berr:resetting-models}

In this section, we keep the original definitions of the quantum machines 
having capability of resetting as stated in \cite{YS10B}
(the underlying model is RT-KWQFA)
since the computational power of them does not increase 
(as shown in Theorem \ref{berr:thm:RT-GFA-restart-simulated-by-RT-QFA-restart}) 
when the underlying model is selected as RT-QFA.

\subsection{Definitions} \label{berr:reset-definition}
A \textit{two-way quantum finite automaton with reset} (2QFA$^{\curvearrowleft} $) is a 7-tuple
\begin{equation}
	\label{equation:2qfa-with-reset-tuple} 
	\mathcal{M}=(Q,\Sigma,\delta,q_{1},Q_{a},Q_{r},Q_{reset}=\cup_{ q \in Q_{n}} Q^{\curvearrowleft}_{q} ).
\end{equation}
In contrast to the previous models,
\begin{enumerate}
	\item $ Q_{n} = Q \setminus (Q_{a} \cup Q_{r} \cup Q_{reset}) $ is the set of nonhalting 
		and nonresetting states;
	\item $ Q_{reset} $ is the union of disjoint reset sets, i.e., each $ Q_{q \in Q_{n}}^{\curvearrowleft} $ 
		contains reset states that cause the computation to restart with state $ q $.
\end{enumerate}
We assume that the states in $ Q_{n} $ have smaller indices than
other members of $Q$; $ q_{i} \in Q_{n} $ for $ 1 \le i < | Q_{n} | $.

Apart from the left reset capability, 2QFA$ ^{\curvearrowleft} $s are identical to 2KWQFAs. 
(We refer the reader to Section \ref{qtm:restricted-QTMs} 
and \cite{KW97} for detailed coverage of the technical properties of 2KWQFAs.)
In each step of its execution, a 2QFA$ ^{\curvearrowleft} $ undergoes two linear
operations: The first one is a unitary transformation of the current
superposition according to $\delta$, and the second one is the measurement,
done on the configuration set, $ \mathcal{C}^{w} $, with set of projectors, $ P_{\tau \in \Delta} $,
for a given input string $ w \in \Sigma^{*} $, where 
\begin{itemize}
	\item $ \Delta=\{a,n,r\} \cup \{\mbox{reset-}i \mid 1 \le i \le |Q_{n}| \} $,
	\item $ \mathcal{C}^{w}_{\tau \in \{a,n,r\}} = \{ c \mid c \in Q_{\tau} \times \{1,\ldots,|\tilde{w}|\}  \} $,
	\item $ \mathcal{C}^{w}_{\scriptsize \mbox{reset-}i} = 
		\{ c \mid c \in Q_{q_{i} \in Q_{n}}^{\curvearrowleft} \times \{1,\ldots,|\tilde{w}| \}  \} $ 
		for $ 1 \le i \le |Q_{n}| $, and
	\item $ P_{\tau \in \Delta} = \sum\limits_{c \in \mathcal{C}_{\tau}^{w}} \ket{c} \bra{c} $.
\end{itemize}
Thus, the outcome of the measurement is one of ``accept'', ``reject'', ``continue without resetting'', 
or ``reset with state $ q $'', for any $ q \in Q_{n} $.
Note that, if ``reset with state $ q $'' is measured, the tape head is reset to point
to the left end-marker, and the machine continuous from the superposition $ \ket { q,1 }$ in the next step. 

A \textit{two-way quantum finite automaton with restart} (2QFA$ ^{\circlearrowleft} $) is a
restricted 2QFA$ ^{\curvearrowleft} $ in which the ``reset moves'' can
target only the original start state of
the machine, that is, in terms of Equation \ref{equation:2qfa-with-reset-tuple},
all the $ Q^{\curvearrowleft}_{q} $ of a 2QFA$ ^{\circlearrowleft} $ are
empty, with the exception of $ Q^{\curvearrowleft}_{q_{1}} $,
represented as $ Q^{\circlearrowleft} $.

Other variants of two-way automata with reset that are examined in this thesis are
\begin{enumerate}
	\item A \textit{realtime (Kondacs-Watrous) quantum finite automaton with reset} (RT-QFA$ ^{\curvearrowleft} $) 
		is a restricted 2QFA$ ^{\curvearrowleft} $ which uses neither ``move one square to the left'' 
		nor ``stay put'' transitions, and whose tape head is therefore classical,
	\item A \textit{realtime (Kondacs-Watrous) quantum finite automaton with restart} 
		(RT-QFA$ ^{\circlearrowleft} $) is a RT-QFA$ ^{\curvearrowleft} $ where the reset moves can target 
		only the original start state, and,
	\item A \textit{realtime probabilistic finite automaton with restart} (RT-PFA$ ^{\circlearrowleft} $) 
		is a RT-PFA which has been enhanced with the capability of resetting the tape head to the left 
		end-marker and swapping to the original start state.
\end{enumerate}

\subsection{Basic Facts} \label{berr:reset-basic-fact}

We start by stating some basic facts concerning automata with restart, which are used in later sections.

A segment of computation which begins with a (re)start, and ends with
a halting or restarting configuration is called a \textit{round}.
Clearly, every automaton with restart which makes nontrivial use of
its restarting capability runs for infinitely many rounds on some
input strings. Throughout this chapter, we make the assumption that our
two-way automata do not contain infinite loops within a round, that
is, the computation restarts or halts with probability 1
in a finite number steps for each round.

Everywhere in this section, $ \mathcal{R} $ stands for a finite
state automaton with restart, and $ w \in \Sigma^{*} $ represents an input string using the alphabet $ \Sigma $.
$ p_{\mathcal{R}}^{a}(w) $ (resp., $ p_{\mathcal{R}}^{r}(w) $) denote the probability that 
$ \mathcal{R} $ accepts (resp., rejects) $ w $, in the first round.
Moreover, $ p_{\mathcal{R}}^{h}(w) = p_{\mathcal{R}}^{a}(w) + p_{\mathcal{R}}^{r}(w) $.

\begin{lemma}
	\label{lemma:acc-rej}
	\begin{equation}
		\label{equation:Paccw-and-Prejw}
		f_{\mathcal{R}}^{a}(w)=\dfrac{1}{1+\dfrac{p_{\mathcal{R}}^{r}(w)}{p_{\mathcal{R}}^{a}(w)}};~~
		f_{\mathcal{R}}^{r}(w)=\dfrac{1}{1+\dfrac{p_{\mathcal{R}}^{a}(w)}{p_{\mathcal{R}}^{r}(w)}}.
	\end{equation}
\end{lemma}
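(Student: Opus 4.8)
The plan is to analyze the infinite-round structure of an automaton with restart as a geometric process. In a single round, the computation starting from the (re)start configuration terminates with probability $p_{\mathcal{R}}^{h}(w) = p_{\mathcal{R}}^{a}(w) + p_{\mathcal{R}}^{r}(w)$ (by the standing assumption that loops within a round are absent, so each round halts or restarts with probability $1$ after finitely many steps), and restarts with probability $1 - p_{\mathcal{R}}^{h}(w)$. Because every restart returns the machine to exactly the same configuration (tape head on $\cent$, internal state $q_{1}$, and — in the quantum case — the relevant measurement having collapsed the superposition to the single designated restart configuration), the rounds are independent and identically distributed. So the overall computation is a sequence of i.i.d.\ Bernoulli-type trials, each of which ``accepts now'' with probability $p_{\mathcal{R}}^{a}(w)$, ``rejects now'' with probability $p_{\mathcal{R}}^{r}(w)$, and ``try again'' with probability $1 - p_{\mathcal{R}}^{h}(w)$.

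First I would make this precise: the event ``$\mathcal{R}$ accepts $w$'' is the disjoint union over $k \ge 1$ of the event ``the first $k-1$ rounds all restart and round $k$ accepts,'' which has probability $(1 - p_{\mathcal{R}}^{h}(w))^{k-1} p_{\mathcal{R}}^{a}(w)$. Summing the geometric series gives
\begin{equation}
	f_{\mathcal{R}}^{a}(w) = \sum_{k=1}^{\infty} (1 - p_{\mathcal{R}}^{h}(w))^{k-1} p_{\mathcal{R}}^{a}(w) = \frac{p_{\mathcal{R}}^{a}(w)}{p_{\mathcal{R}}^{h}(w)} = \frac{p_{\mathcal{R}}^{a}(w)}{p_{\mathcal{R}}^{a}(w) + p_{\mathcal{R}}^{r}(w)},
\end{equation}
and symmetrically $f_{\mathcal{R}}^{r}(w) = p_{\mathcal{R}}^{r}(w) / (p_{\mathcal{R}}^{a}(w) + p_{\mathcal{R}}^{r}(w))$. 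Dividing numerator and denominator of the first expression by $p_{\mathcal{R}}^{a}(w)$ yields $f_{\mathcal{R}}^{a}(w) = 1 / (1 + p_{\mathcal{R}}^{r}(w)/p_{\mathcal{R}}^{a}(w))$, which is exactly the claimed formula; the formula for $f_{\mathcal{R}}^{r}(w)$ follows the same way. (Implicitly one needs $p_{\mathcal{R}}^{h}(w) > 0$, i.e.\ the machine halts with positive probability in a round; if $p_{\mathcal{R}}^{h}(w) = 0$ the ratios are undefined and the acceptance value is conventionally $0$, a degenerate case I would mention in passing.)

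The one genuine subtlety — and the step I expect to need the most care — is justifying the claim that the rounds are truly i.i.d.\ in the \emph{quantum} case. For RT-PFA$^{\circlearrowleft}$ this is transparent. For 2QFA$^{\circlearrowleft}$ and RT-QFA$^{\circlearrowleft}$, one must argue that the measurement performed at each step (with the projector family including the ``reset'' outcomes) collapses the state, so that conditioned on the outcome ``restart,'' the post-measurement configuration is deterministically $\ket{q_{1},1}$ with no residual phase or entanglement carried into the next round; hence the conditional distribution of what happens from that point is independent of which round we are in and of the history of earlier rounds. This is where the ``God's eye view'' bookkeeping of the thesis (treating all measurement branches as surviving with their probabilities) must be invoked carefully: $p_{\mathcal{R}}^{a}(w)$, $p_{\mathcal{R}}^{r}(w)$ are defined as first-round probabilities, and the Markov/renewal property is what lets the geometric sum go through. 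Once that independence is established, the rest is the routine geometric-series computation above.
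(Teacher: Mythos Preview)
Your proof is correct and follows essentially the same approach as the paper: both compute $f_{\mathcal{R}}^{a}(w)$ as the geometric sum $\sum_{i\ge 0}(1-p_{\mathcal{R}}^{a}(w)-p_{\mathcal{R}}^{r}(w))^{i}\,p_{\mathcal{R}}^{a}(w)$ and simplify. Your additional discussion of why the rounds are i.i.d.\ in the quantum case is more thorough than the paper's proof, which silently relies on the setup described just before the lemma.
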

\begin{proof}
	\begin{eqnarray*}
		f_{\mathcal{R}}^{a}(w) & = &
        \sum_{i=0}^{\infty}\left(1-p_{\mathcal{R}}^{a}(w)-p_{\mathcal{R}}^{r}(w) \right)^{i}
        p_{\mathcal{R}}^{a}(w)\\
		& = & p_{\mathcal{R}}^{a}(w) \left(
		\dfrac{1}{1-(1-p_{\mathcal{R}}^{a}(w)-p_{\mathcal{R}}^{r}(w))} \right) \\
		& = &
		\dfrac{p_{\mathcal{R}}^{a}(w)}{p_{\mathcal{R}}^{a}(w)+p_{\mathcal{R}}^{r}(w)} \\
		& = &
		\dfrac{1}{1+\frac{p_{\mathcal{R}}^{r}(w)}{p_{\mathcal{R}}^{a}(w)}}.
	\end{eqnarray*}
	$ f_{\mathcal{R}}^{r}(w) $ is calculated in the same way.
\end{proof}

\begin{lemma}
	\label{lemma:prej-over-pacc}
	The language $ L \subseteq \Sigma^{*} $ is recognized by $ \mathcal{R} $ with error bound $ \epsilon >0 $  
	if and only if $ \frac{p_{\mathcal{R}}^{r}(w)}{p_{\mathcal{R}}^{a}(w)} \le
	\frac{\epsilon}{1-\epsilon} $  when $ w \in L $, and $ \frac{p_{\mathcal{R}}^{a}(w)}{p_{\mathcal{R}}^{r}(w)} 
	\le \frac{\epsilon}{1-\epsilon} $ when $ w \notin L $.
	Furthermore, if $ \frac{p_{\mathcal{R}}^{r}(w)}{p_{\mathcal{R}}^{a}(w)} $ 
	(resp., $ \frac{p_{\mathcal{R}}^{a}(w)}{p_{\mathcal{R}}^{r}(w)}  $) is at most $ \epsilon $, then 
	$ f_{\mathcal{R}}^{a}(w) $ (resp, $ f_{\mathcal{R}}^{r}(w) $) is at least $ 1-\epsilon $.
\end{lemma}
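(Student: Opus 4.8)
The plan is to reduce everything to Lemma \ref{lemma:acc-rej}, which already expresses $f_{\mathcal{R}}^{a}(w)$ and $f_{\mathcal{R}}^{r}(w)$ as explicit functions of the ratio of the first-round halting probabilities. Writing $\alpha = p_{\mathcal{R}}^{a}(w)$ and $\beta = p_{\mathcal{R}}^{r}(w)$ for brevity, Lemma \ref{lemma:acc-rej} gives $f_{\mathcal{R}}^{a}(w) = 1/(1+\beta/\alpha)$ and $f_{\mathcal{R}}^{r}(w) = 1/(1+\alpha/\beta)$. The key observation is that $t \mapsto 1/(1+t)$ is strictly decreasing on $[0,\infty)$, so controlling $\beta/\alpha$ (or $\alpha/\beta$) from above is the same as controlling $f_{\mathcal{R}}^{a}(w)$ (or $f_{\mathcal{R}}^{r}(w)$) from below. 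This monotonicity is really the whole content of the lemma.

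First I would handle the ``furthermore'' clause, since it is the cleanest and is used inside the main equivalence. Suppose $\beta/\alpha \le \epsilon$. Then $1 + \beta/\alpha \le 1+\epsilon$, hence $f_{\mathcal{R}}^{a}(w) = 1/(1+\beta/\alpha) \ge 1/(1+\epsilon) \ge 1-\epsilon$, where the last inequality is the elementary fact that $(1-\epsilon)(1+\epsilon) = 1-\epsilon^2 \le 1$. The symmetric statement for $f_{\mathcal{R}}^{r}(w)$ when $\alpha/\beta \le \epsilon$ follows identically. One should note the degenerate case: if $\alpha = 0$ the ratio $\beta/\alpha$ is $+\infty$ (and the hypothesis $\beta/\alpha \le \epsilon$ cannot hold for finite $\epsilon$), and if both $\alpha = \beta = 0$ then $f_{\mathcal{R}}^{a}(w) = f_{\mathcal{R}}^{r}(w) = 0$; but since the lemma's standing assumption is that each round halts with probability $1$, in fact $\alpha + \beta = p_{\mathcal{R}}^{h}(w) = 1 > 0$, so at least one of $\alpha,\beta$ is positive and the only division-by-zero issue is the harmless $+\infty$ case just noted.

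Next I would prove the main equivalence. For the forward direction, suppose $L$ is recognized by $\mathcal{R}$ with error bound $\epsilon$, i.e. $f_{\mathcal{R}}^{a}(w) \ge 1-\epsilon$ when $w \in L$ and $f_{\mathcal{R}}^{a}(w) \le \epsilon$ (equivalently $f_{\mathcal{R}}^{r}(w) \ge 1-\epsilon$, using $f_{\mathcal{R}}^{a}(w) + f_{\mathcal{R}}^{r}(w) = 1$, which itself follows from Lemma \ref{lemma:acc-rej} and $\alpha+\beta>0$) when $w \notin L$. For $w \in L$: from $1/(1+\beta/\alpha) \ge 1-\epsilon$ we get $1 + \beta/\alpha \le 1/(1-\epsilon)$, so $\beta/\alpha \le 1/(1-\epsilon) - 1 = \epsilon/(1-\epsilon)$. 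The case $w \notin L$ is symmetric, using $f_{\mathcal{R}}^{r}(w) \ge 1-\epsilon$ to get $\alpha/\beta \le \epsilon/(1-\epsilon)$. For the reverse direction, suppose $\beta/\alpha \le \epsilon/(1-\epsilon)$ when $w \in L$ and $\alpha/\beta \le \epsilon/(1-\epsilon)$ when $w \notin L$. For $w \in L$: $f_{\mathcal{R}}^{a}(w) = 1/(1+\beta/\alpha) \ge 1/(1 + \epsilon/(1-\epsilon)) = 1-\epsilon$; for $w \notin L$, symmetrically $f_{\mathcal{R}}^{r}(w) \ge 1-\epsilon$, i.e. $f_{\mathcal{R}}^{a}(w) \le \epsilon$. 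Hence $L$ is recognized with error bound $\epsilon$.

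There is no genuine obstacle here — the lemma is essentially algebraic bookkeeping on top of Lemma \ref{lemma:acc-rej}. The only point requiring a moment's care is the treatment of the boundary/degenerate values ($\alpha$ or $\beta$ equal to zero, and making sure $\epsilon/(1-\epsilon)$ is well-defined, which it is since $\epsilon \in [0,\tfrac12)$ so $1-\epsilon \in (\tfrac12,1]$); these I would dispatch with a single remark invoking the standing finite-round halting assumption that forces $\alpha+\beta=1$. Everything else is the monotonicity of $t\mapsto 1/(1+t)$ applied twice.
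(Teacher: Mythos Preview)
Your proof is correct and follows essentially the same route as the paper: both reduce to Lemma~\ref{lemma:acc-rej} and then invoke the elementary equivalence $\frac{1}{1+p}\ge 1-\epsilon \Leftrightarrow p\le \frac{\epsilon}{1-\epsilon}$ together with the implication $p\le\epsilon \Rightarrow \frac{1}{1+p}\ge 1-\epsilon$. One small slip in your degenerate-case discussion: the standing assumption is only that each round halts \emph{or restarts} with probability $1$, so $\alpha+\beta=p_{\mathcal{R}}^{h}(w)$ need not equal $1$; all you actually need (and what is implicitly assumed for $f_{\mathcal{R}}^{a}+f_{\mathcal{R}}^{r}=1$) is $\alpha+\beta>0$.
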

\begin{proof}
	This follows from Lemma \ref{lemma:acc-rej}, since, for all $ p \ge 0 $, $ \epsilon \in [0,\frac{1}{2}) $,
	\begin{equation}
		\label{equation:1overp-epsilon}
		\frac{1}{1+p} \ge 1-\epsilon \Leftrightarrow p \le \frac{\epsilon}{1-\epsilon}, \mbox{ and }
	\end{equation}
	\begin{equation}
		\label{equation:1overp-epsilon-simplified}
		p \le \epsilon \Rightarrow \frac{1}{1+p} \ge 1-\epsilon.
	\end{equation}
\end{proof}

\begin{lemma}
	\label{lemma:expected-runtime}
	Let $ p=p_{\mathcal{R}}^{h}(w) $, and let $ s(w) $ be the maximum number of steps in any branch of a
	round of $ \mathcal{R} $ on $ w $.
	The worst-case expected runtime of $ \mathcal{R} $ on $ w $ is
	\begin{equation}
		\label{equation:expected-runtime}
		\frac{1}{p}(s(w)).	
	\end{equation}
\end{lemma}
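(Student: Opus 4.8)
\textbf{Proof plan for Lemma~\ref{lemma:expected-runtime}.}
The plan is to model the multi-round computation of $\mathcal{R}$ on $w$ as a sequence of independent Bernoulli trials, where each ``trial'' is one round, and a round ``succeeds'' (i.e.\ the computation halts) with probability $p = p_{\mathcal{R}}^{h}(w)$ and ``fails'' (i.e.\ restarts) with probability $1-p$. By the standing assumption made just before Lemma~\ref{lemma:acc-rej}, each round terminates (halts or restarts) with probability $1$ in finitely many steps, so this is well defined and the rounds are genuinely i.i.d.\ because a restart returns the machine to exactly its initial configuration on $\tilde w$. First I would let $N$ denote the (random) number of rounds until the computation halts; then $N$ is geometrically distributed with parameter $p$, so $\mathbb{E}[N] = \frac{1}{p}$.

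Next I would bound the cost of each round. By hypothesis, $s(w)$ is the maximum number of steps taken in any branch of a single round, so every round — whether it ends in halting or restarting — contributes at most $s(w)$ steps to the total runtime. Hence the total number of steps $T$ satisfies $T \le N \cdot s(w)$ deterministically (in the worst case over branches, with equality). Taking expectations and using linearity together with the independence of $s(w)$ from $N$ (indeed $s(w)$ is a constant for fixed $w$), we get
\begin{equation}
	\mathbb{E}[T] \le \mathbb{E}[N]\, s(w) = \frac{1}{p} s(w),
\end{equation}
which is the claimed worst-case expected runtime in Equation~\ref{equation:expected-runtime}.

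The only real subtlety — and the step I expect to require the most care — is justifying the interchange of expectation with the infinite sum, i.e.\ confirming that $\mathbb{E}[T]$ is finite and equals $\sum_{i=0}^{\infty} (1-p)^{i} p \cdot (i+1) s(w)$ rather than merely being dominated by it. This is handled by the assumption that $p > 0$ on the inputs where restarting is used nontrivially (if $p=1$ or no restart occurs, the bound is immediate with $N=1$), which makes the geometric series $\sum_{i\ge 0}(i+1)(1-p)^i = \frac{1}{p^2}$ converge absolutely; monotone convergence then licenses the term-by-term summation. One should also note the edge case where $\mathcal{R}$ makes no nontrivial use of its restart capability on $w$: then $p=1$, $N=1$, and the formula reduces correctly to $s(w)$. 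With these points addressed, the computation $\frac{1}{p} = \sum_{i=0}^{\infty} i(1-p)^{i-1}p$ is the same elementary geometric-series identity already used in the proof of Lemma~\ref{lemma:acc-rej}, so no new machinery is needed.
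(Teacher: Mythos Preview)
Your proposal is correct and follows essentially the same approach as the paper: the paper's proof is the single-line computation $\sum_{i=0}^{\infty}(i+1)(1-p)^{i}p\,s(w)=p\,s(w)\cdot\frac{1}{p^{2}}=\frac{1}{p}s(w)$, which is exactly your geometric-distribution argument with $\mathbb{E}[N]=1/p$. You have simply added more explicit justification (i.i.d.\ rounds, convergence, the $p=1$ edge case) around the same core identity.
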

\begin{proof}
	The worst-case expected running time of $ \mathcal{R} $ on $ w $ is
	\begin{equation}
		\label{equation:expected-runtime-proof}
		\sum_{i=0}^{\infty} (i+1)(1-p)^{i} (p)(s(w)) = (p)(s(w))\frac{1}{p^{2}}=\frac{1}{p}(s(w)).
	\end{equation}
\end{proof}

\begin{lemma}
	\label{lemma:restart-time}
	Any one-way automaton with restart with expected runtime 
	$ t $ can be simulated by a corresponding two-way automaton without restart 
	in expected time no more than $ 2t $.
\end{lemma}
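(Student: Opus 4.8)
The plan is to realize every ``restart'' of the one‑way automaton $\mathcal{R}$ inside a genuine two‑way automaton $\mathcal{R}'$ by having the head physically \emph{walk} back to the left end‑marker instead of being teleported there. Concretely, I would keep every non‑resetting transition of $\mathcal{R}$ unchanged, and modify each resetting state $q$ as follows: on any symbol other than $\cent$, $\mathcal{R}'$ in state $q$ moves the head one square to the left and stays in $q$; on $\cent$ (which, by the tape conventions, occupies square $1$), $\mathcal{R}'$ in state $q$ performs exactly the transition that $\mathcal{R}$ would perform from its start state $q_1$ on $\cent$. Thus whenever $\mathcal{R}$ would have jumped to the configuration $(q_1,1)$, the machine $\mathcal{R}'$ instead slides leftward until it meets $\cent$ and then resumes precisely in the configuration that $\mathcal{R}$ would have begun its next round with. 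For the probabilistic variant this construction is immediate; for the quantum variants the reset of these models is produced by a measurement outcome, so the branch on which the rewind happens has already been singled out, the rewind acts deterministically, and (in particular, since the head of a realtime quantum automaton with reset is classical) no coherence is lost — one only has to check that the enlarged transition function is still well‑formed, which I return to below.

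Next I would argue correctness of the simulation by the natural round‑by‑round correspondence. Partition each computation of $\mathcal{R}$ on $w$ into rounds, each beginning at square $1$ in state $q_1$ and ending in a halting or a resetting configuration. By construction, within any round $\mathcal{R}'$ reproduces the transitions of $\mathcal{R}$ verbatim (the only change is that a resetting configuration is now followed by a sequence of rewind steps rather than an instantaneous jump). Hence the probability that a given round of $\mathcal{R}'$ accepts, rejects, or resets equals the corresponding probability for $\mathcal{R}$, the sequence of ``probability of reaching round $k$'' values is identical for the two machines, and therefore $f^{a}_{\mathcal{R}'}(w)=f^{a}_{\mathcal{R}}(w)$ and $f^{r}_{\mathcal{R}'}(w)=f^{r}_{\mathcal{R}}(w)$ for every $w$. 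In particular $\mathcal{R}'$ recognizes the same language as $\mathcal{R}$ with the same error bound (one may also phrase this through the acceptance ratios of Lemma~\ref{lemma:acc-rej}).

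For the running‑time bound I would count the overhead one round at a time. Consider a round of $\mathcal{R}$ consisting of $s$ steps. A one‑way head advances by at most one square per step and starts the round on square $1$, so when the round ends the head is on a square of index at most $1+s$. If the round halts there is no rewind at all; if it resets, the rewind consists of at most $s$ extra leftward moves before $\cent$ is reached and the next round begins. Thus each round of $\mathcal{R}$ contributes at most twice as many steps to the computation of $\mathcal{R}'$ as it does to that of $\mathcal{R}$. Summing this inequality over all rounds, weighted by the (common) probability of reaching each round, shows that the expected runtime of $\mathcal{R}'$ on $w$ is at most $2t$, which is the claim.

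The step I expect to need the most care is the well‑formedness of $\mathcal{R}'$ in the quantum case: inserting the rewind behaviour must not break the requirement that the induced operator have orthonormal columns. This is routine because the rewind transitions act as a partial injection on the configuration set that is disjoint from the rest of the dynamics (so it extends to a unitary/superoperator exactly as the ``embedding into a larger unitary'' templates used elsewhere in the thesis), but it is the place where one genuinely has to invoke the model's definitions rather than just bookkeeping; everything else — the round decomposition, the probability‑preservation argument, and the factor‑two time estimate — is elementary.
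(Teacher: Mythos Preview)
Your proposal is correct and takes essentially the same approach as the paper: replace each restart by a deterministic leftward walk to $\cent$, observe that a round of $s$ one-way steps leaves the head at most $s$ squares from $\cent$, so the rewind costs at most $s$ additional steps, and sum over rounds. The paper's own proof is a terse two-sentence version of exactly this argument (appealing to the expected-runtime summation of Lemma~\ref{lemma:expected-runtime}); your treatment of quantum well-formedness is more careful than anything the paper supplies, but the underlying construction and time analysis are identical.
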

\begin{proof}
The program of the two-way machine ($ \mathcal{R}_{2} $) is identical
to that of the one-way machine with restart ($ \mathcal{R}_{1} $),
except for the fact that each restart move of $ \mathcal{R}_{1} $ is
imitated by $ \mathcal{R}_{2} $ by moving the head one square per step
all the way to the left end-marker. This causes the runtimes of the
$i$ nonhalting rounds in the summation in Equation
(\ref{equation:expected-runtime-proof})
in Lemma \ref{lemma:expected-runtime} to increase by a factor of 2.
\end{proof}

We now give a quick review of the technique of probability
amplification. Suppose that we are given a machine (with or without reset) $
\mathcal{A} $,
which recognizes a language $L$ with error bounded by $\epsilon$, and
we wish to construct another machine which recognizes $L$ with a much
smaller, but still positive, probability of error, say, $ \epsilon'
$. It is well known\footnote{See, for instance, pages 369-370 of
\cite{Si06}.} that one can achieve this by running $ \mathcal{A} $  $
 O(\log(\frac{1}{\epsilon'})) $ times on the same input, and then
giving the majority answer as our verdict about the membership of the
input string in $L$.

Suppose that the original machine $ \mathcal{A} $ needs
to be run $ 2k+1 $ times for the overall procedure to work with the
desired correctness probability. Two counters can be used to count the
acceptance and rejection responses, and
the overall computation
accepts (resp., rejects) when the number of recorded acceptances (resp., rejections)
reaches $ k+1 $.
To implement these counters in the finite automaton setting, we need
to ``connect'' $ (k+1)^{2} $ copies of $ \mathcal{A} $, $ \{
\mathcal{A}_{i,j} \mid 0 \le i,j \le k \} $,
where the subscripts indicate the values of the two counters,
i.e., the states of $ \mathcal{A}_{i,j} $ encode the information that
$ \mathcal{A} $
has accepted $ i $ times and rejected $ j $ times in its previous
runs. The new machine $ \mathcal{M} $ is constructed
from the $ \mathcal{A}_{i,j} $'s as follows:
\begin{itemize}
 \item The start state of $ \mathcal{M} $ is the start state of $
\mathcal{A}_{0,0} $;
 \item Upon reaching any accept state of $ \mathcal{A}_{i,j} $ ($
0 \le i,j < k $), $ \mathcal{M} $ moves the head back to the left
end-marker and then switches to the start state of $
\mathcal{A}_{i+1,j} $;
    \item Upon reaching any reject states of $ \mathcal{A}_{i,j} $ ($ 0
\le i,j < k $), $ \mathcal{M} $ moves the head back to the left
end-marker and then switches to the start state of $
\mathcal{A}_{i,j+1} $;
 \item The accept states of $ \mathcal{M} $ are the accept states
of $ \mathcal{A}_{k,j} $ ($ 0 \le j < k $);
 \item The reject states of $ \mathcal{M} $ are the reject states
of $ \mathcal{A}_{i,k} $ ($ 0 \le i < k $).
\end{itemize}
\begin{lemma}
 \label{lemma:restart-to-reset}
 If language $ L \subseteq \Sigma^{*} $ is recognized by
$\mathcal{R} $ with a fixed error
 bound $ \epsilon > 0 $, then for any positive error bound $
\epsilon' < \epsilon$, there exists a finite automaton with reset,
 $ \mathcal{R}' $, recognizing $ L $.
 Moreover, if $ \mathcal{R} $ has $ n $ states and its (expected)
runtime is $ O(s(|w|)) $,
 then $ \mathcal{R}' $ has $
O(\log^{2}(\frac{1}{\epsilon'})n) $ states, and
 its (expected) runtime is $ O(\log(\frac{1}{\epsilon'})s(|w|)) $,
 where $ w $ is the input string.
\end{lemma}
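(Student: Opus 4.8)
The plan is to prove Lemma~\ref{lemma:restart-to-reset} by combining the probability amplification construction described just above the lemma statement with the runtime bounds already established in this section. The statement asserts two things: (i) a qualitative claim that bounded error can be driven below any positive $\epsilon'$ using a finite automaton \emph{with reset}, and (ii) quantitative bounds on the number of states and the expected runtime of the amplified machine $\mathcal{R}'$. I would handle (i) by instantiating the amplification idea carefully in the reset model, and (ii) by bookkeeping on the construction.

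First I would set up the amplification parameters precisely. Given that $\mathcal{R}$ recognizes $L$ with error bound $\epsilon>0$, a single round of $\mathcal{R}$ on input $w$ gives the correct answer with probability at least $1-\epsilon$ (by Lemma~\ref{lemma:prej-over-pacc}, the ``first-round'' probabilities $p_{\mathcal{R}}^{a}(w),p_{\mathcal{R}}^{r}(w)$ satisfy the required ratio inequalities, and running the machine to completion amplifies these to $f_{\mathcal{R}}^{a}(w)\ge 1-\epsilon$ or $f_{\mathcal{R}}^{r}(w)\ge 1-\epsilon$ by Lemma~\ref{lemma:acc-rej}). By a standard Chernoff-type bound, running $2k+1$ independent copies and taking the majority vote yields error at most $\epsilon'$ provided $2k+1 = \Theta\!\left(\log\frac{1}{\epsilon'}\right)$; so $k = \Theta\!\left(\log\frac{1}{\epsilon'}\right)$. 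This is the point where the constant hidden in $\epsilon$ (namely the gap $\frac12-\epsilon$) enters the base of the implicit exponent, but it is absorbed into the $O$-notation.

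Next I would describe $\mathcal{R}'$. Using the reset capability, I take $(k+1)^2$ copies $\{\mathcal{A}_{i,j} \mid 0\le i,j\le k\}$ of $\mathcal{R}$, exactly as in the display preceding the lemma, wiring an accept of $\mathcal{A}_{i,j}$ (with $i<k$) to a reset move into the start state of $\mathcal{A}_{i+1,j}$, and a reject to a reset into $\mathcal{A}_{i,j+1}$; accepting (resp.\ rejecting) states of $\mathcal{R}'$ are those of $\mathcal{A}_{k,\cdot}$ (resp.\ $\mathcal{A}_{\cdot,k}$). The key observation for the reset model is that each ``$\mathcal{A}_{i,j}$'' is itself an automaton with restart whose internal restart moves simply return to its own start state, while the transitions \emph{between} the copies are exactly ``reset with state $q$'' moves in the sense of Section~\ref{berr:reset-definition}; hence the whole thing is a legitimate finite automaton with reset. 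Since each $\mathcal{A}_{i,j}$ is a copy of an $n$-state machine and $k = O\!\left(\log\frac1{\epsilon'}\right)$, the total state count is $(k+1)^2 n = O\!\left(\log^2\frac1{\epsilon'}\cdot n\right)$, as claimed. For the runtime, each copy runs (in expectation) in $O(s(|w|))$ steps by hypothesis, and $\mathcal{R}'$ invokes copies sequentially; a majority is reached after at most $2k+1$ invocations, so the expected runtime is $O(k\cdot s(|w|)) = O\!\left(\log\frac1{\epsilon'}\cdot s(|w|)\right)$.

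The main obstacle I expect is the correctness argument for the case where $\mathcal{R}$ is a \emph{reset} automaton rather than a plain restart automaton, because then a single ``run of $\mathcal{A}_{i,j}$'' is not a single atomic experiment but itself an infinite-round process; one must argue that conditioning on halting, the answer distribution of each copy is still the amplified $(1-\epsilon)$-correct distribution, and that these are independent across copies (which holds because each copy is entered fresh at its own start state with the head at $\cent$, and the reset semantics erase all prior internal state). A second, more routine subtlety is checking that the assumption ``no infinite loops within a round'' is preserved by the construction, so that Lemma~\ref{lemma:expected-runtime}'s finiteness conditions apply to $\mathcal{R}'$; this follows since the inter-copy transitions are deterministic state swaps and each copy inherits the loop-freeness of $\mathcal{R}$. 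I would also remark that in the purely one-way / realtime case the runtime statement specializes correctly since $s(|w|) = O(|w|)$ there, matching the later applications of this lemma.
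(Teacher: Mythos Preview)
Your proposal is correct and follows exactly the approach the paper intends: the paper's own proof is the single line ``Follows easily from the above description,'' referring to the $(k+1)^2$-copy majority-vote construction you spell out. One small terminological slip: when you write ``a single round of $\mathcal{R}$ on input $w$ gives the correct answer with probability at least $1-\epsilon$,'' you mean a full \emph{run} of $\mathcal{R}$ (possibly many rounds, until it halts), not a single round; your parenthetical already clarifies this, and the rest of the argument uses the correct interpretation.
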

\begin{proof}
 Follows easily from the above description.
\end{proof}

Finally, we note the following relationship between the
computational powers of the 2CQFA and the RT-QFA$^{\curvearrowleft}$.

\begin{lemma}
	\label{lemma:1qfareset-simulatedby-2qcfa}
	For any RT-QFA$ ^{\curvearrowleft} $ $ \mathcal{M}_{1} $  with $ n $ states and expected runtime
	$ t(|w|) $, there exists a 2CQFA $ \mathcal{M}_{2} $ with $ n $ states and 
	expected runtime $ O(t(|w|)) $, such that $ \mathcal{M}_{2} $ accepts every input string $w$ with the same 		
	probability that $ \mathcal{M}_{1} $ accepts $w$.
\end{lemma}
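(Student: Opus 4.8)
The plan is to exploit the fact that both $\mathcal{M}_{1}$ and the target 2CQFA $\mathcal{M}_{2}$ have \emph{classical} input heads, so that the only real structural difference between the two models is the way a reset is performed: in $\mathcal{M}_{1}$ the head jumps back to $\cent$ in a single step, whereas $\mathcal{M}_{2}$ has a two-way head that must walk back one square at a time. First I would give $\mathcal{M}_{2}$ exactly the state set $Q$ of $\mathcal{M}_{1}$, and have it run, on the states of $Q_{n}$, with precisely the same unitary operators $U_{\sigma}$ and the same measurement projectors as $\mathcal{M}_{1}$. The outcomes ``$a$'', ``$r$'', ``$n$'' are handled identically (halt-accept, halt-reject, move one square right). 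The outcome ``reset-$i$'' is reinterpreted: instead of teleporting to $\ket{q_{i},1}$, $\mathcal{M}_{2}$ moves its head one square left and keeps the collapsed amplitude inside the subspace spanned by $Q^{\curvearrowleft}_{q_{i}}$.

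Next I would make the states of $Q^{\curvearrowleft}_{q_{i}}$ double as a ``walk-left mode'' marker: while reading any symbol other than $\cent$, a state in $Q^{\curvearrowleft}_{q_{i}}$ applies the identity to the internal state and moves the head left; upon reaching $\cent$, $\mathcal{M}_{2}$ performs the same re-initialization that $\mathcal{M}_{1}$ performs on a reset, namely it discards the current content of the $Q^{\curvearrowleft}_{q_{i}}$-subspace and continues from the single pure configuration $\ket{q_{i},1}$, thereafter behaving exactly as $\mathcal{M}_{1}$ would. Because the head is classical, no branch of $\mathcal{M}_{2}$ is ever in superposition over head positions, so the internal quantum state is carried along verbatim during the whole leftward walk; branches that entered reset mode at different steps start their walks from different squares, but each eventually reaches $\cent$ and is re-initialized to $\ket{q_{i},1}$.

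For correctness I would argue round by round, cutting each computation into rounds as in Section~\ref{berr:reset-basic-fact}. By construction, the first-round acceptance, rejection, and ``reset-to-$q_{i}$'' probabilities of $\mathcal{M}_{2}$ on $w$ coincide with those of $\mathcal{M}_{1}$, and from a fresh restart with state $q_{i}$ the two machines again behave identically; hence, exactly as in the geometric-series computation of Lemma~\ref{lemma:acc-rej}, $f_{\mathcal{M}_{2}}^{a}(w)=f_{\mathcal{M}_{1}}^{a}(w)$ for every $w$. Here one must also check that the several branches of $\mathcal{M}_{2}$ that are simultaneously walking back to $\cent$ occupy pairwise orthogonal configurations (distinct head positions, or distinct times), so that no spurious interference is introduced and the total probability entering each fresh restart with $q_{i}$ is preserved. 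For the running time, a round of $\mathcal{M}_{1}$ of length $\ell$ that ends in a reset from head position $p\le\ell+1$ is simulated by $\mathcal{M}_{2}$ in $\ell+(p-1)=O(\ell)$ steps, since the leftward walk of a round can be no longer than that round's rightward scan; summing over rounds and taking expectations (the same bookkeeping as in Lemmas~\ref{lemma:expected-runtime} and~\ref{lemma:restart-time}) gives expected runtime $O(t(|w|))$. The state count remains $n$ because the reset states of $\mathcal{M}_{1}$ serve as the walk-mode markers of $\mathcal{M}_{2}$; no new state is added.

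The main obstacle, and the step I would treat most carefully, is preserving well-formedness of this walk-and-re-initialize gadget \emph{without} introducing new states: the step at $\cent$ must collapse a possibly many-dimensional $Q^{\curvearrowleft}_{q_{i}}$-component down to the single state $q_{i}$, and this cannot be done by an ordinary transition of a 2CQFA, since distinct internal states cannot be mapped deterministically onto a common state while keeping the columns of the transition operators orthonormal. The way around this is to treat the re-initialization exactly as $\mathcal{M}_{1}$ treats a reset — and as every model treats ``halt-and-accept'' — namely as an action triggered by a dedicated measurement outcome rather than as part of a unitary step, so that the superoperator condition $\sum_{\omega}E_{\sigma,\omega}^{\dagger}E_{\sigma,\omega}=I$ continues to hold automatically. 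One then only has to verify the (easily checkable) well-formedness conditions on the genuinely unitary transitions, which are just those of $\mathcal{M}_{1}$ together with the identity-on-the-internal-state leftward-walk transitions, and these are immediate.
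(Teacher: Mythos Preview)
The paper states this lemma without proof; it appears immediately after the proof of Lemma~\ref{lemma:restart-time} (which handles the simpler one-way-restart-to-two-way simulation in three lines), and is evidently meant to be read as ``same idea, now for the reset model and the 2CQFA target.'' Your proposal is precisely that argument, worked out with the care the paper omits: reuse the reset states $Q^{\curvearrowleft}_{q_i}$ as walk-left markers so that no new states are needed, match rounds one-for-one to get identical acceptance probabilities, and bound each simulated round by twice its original length to obtain the $O(t(|w|))$ runtime. This is correct and is the intended line.

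One small clarification on your last paragraph: in the 2CQFA model as defined here (and in \cite{AW02}), the only measurement outcomes are accept, reject, and the three head-direction outcomes; there is no built-in ``reset'' outcome you can appeal to. The re-initialization at $\cent$ therefore has to live inside the superoperator itself rather than being a post-measurement action. Concretely, for each $r_j\in Q^{\curvearrowleft}_{q_i}$ you take a separate operation element of the form $P_\tau U_{\cent}\ket{q_i}\bra{r_j}$ (one per outcome $\tau$ of the original $\cent$-step), each tagged with its own register symbol so that distinct $r_j$'s do not interfere; the completeness relation $\sum_\omega E_{\cent,\omega}^\dagger E_{\cent,\omega}=I$ then follows because $\sum_j\ket{r_j}\bra{r_j}$ is the projector onto $Q^{\curvearrowleft}_{q_i}$ and $\sum_\tau P_\tau=I$. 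This is exactly the non-unitary collapse you describe, just implemented with extra register symbols rather than an extra outcome label, and it costs no additional internal states.
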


\subsection{Computational Powers of Realtime Probabilistic Finite Automata with Restart} \label{berr:reset-probabilistic}

It is interesting to examine the power of the restart move in
classical computation. Any RT-PFA$ ^{\circlearrowleft} $ which
runs in expected $ t $ steps
can be simulated by a 2PFA which runs in expected $ 2t $ steps (see
Lemma \ref{lemma:restart-time}).
We ask in this section whether the restart move can substitute the ``left'' and
``stationary'' moves of a 2PFA without loss of computational power.
Since every polynomial-time 2PFA recognizes a regular language, which
can of course be recognized by using only ``right'' moves, we focus on
the best-known example of a nonregular language that can be recognized
by an exponential-time 2PFA.
\begin{theorem}
	There exists a natural number $k$, such that for any $ \epsilon>0 $,
	there exists a $k$-state RT-PFA$ ^{\circlearrowleft} $ $ \mathcal{P}_{\epsilon} $
	recognizing language $ L_{eq} $ with error bound $ \epsilon $ and expected runtime 
	$ O( (\frac{2}{\epsilon^{2}})^{|w|}|w|) $, where $ w $ is the input string.
\end{theorem}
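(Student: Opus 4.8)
The plan is to construct a constant-size realtime PFA with restart that, on each round, attempts to ``verify'' whether a prefix of $a$'s matches the remaining suffix of $b$'s by a random-walk / coin-flipping comparison, and then use Lemma \ref{lemma:prej-over-pacc} to convert a favorable ratio $\frac{p^{r}_{\mathcal{P}}(w)}{p^{a}_{\mathcal{P}}(w)}$ (resp.\ its reciprocal) into bounded error. Concretely, on input $w$, the machine first checks (using only right moves, a constant number of states) that $w$ has the form $a^{i}b^{j}$; if not, it halts with the appropriate fixed verdict. If $w=a^{i}b^{j}$, then during the block of $a$'s it flips a fair coin on each symbol and ``survives'' to the $b$-block only in one of the $2^{i}$ branches (say, the all-heads branch), and similarly during the block of $b$'s it flips a fair coin on each symbol and survives only in the all-heads branch; the surviving branch accepts at $\dollar$, while branches that ``die'' restart. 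The surviving acceptance probability in one round is then $2^{-i}\cdot 2^{-j}\cdot(\text{const})$ when $i=j$, but we need to separate $i=j$ from $i\neq j$, so the actual design must pit an ``accept'' sub-experiment that fires with probability $\sim c^{-i}$ against a ``reject'' sub-experiment that fires with probability $\sim c^{-j}$, so that the ratio of round-acceptance to round-rejection is exactly $c^{-(i-j)}$ — bounded away from $1$ in favor of acceptance precisely when $i=j$ fails to hold in one direction, and we run the symmetric experiment in the other branch to cover $i>j$ versus $i<j$.

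First I would fix the constant $c$ and the number of states $k$ independent of $\epsilon$: split the start state's first move into two equiprobable ``directions'', one branch running the experiment that produces an accept-signal with probability $\Theta(c^{-i})$ and a reject-signal with probability $\Theta(c^{-j})$, the other branch running the mirror experiment; a few extra states handle the end-markers and the format check. The key arithmetic identity is that on $w=a^{i}b^{j}$ with $i=j$, after a round the ratio $p^{a}/p^{r}$ (in the favorable branch) equals a fixed constant $>1$, while on $i\neq j$ one of the two branches gives $p^{r}/p^{a}\ge c^{|i-j|}\ge c$. To push the error down to an arbitrary $\epsilon$, I would not change $k$ but instead \emph{boost}: re-tune the per-symbol coin bias (which only changes transition probabilities, not the state set) so that $c=c(\epsilon)$ is large enough that $\frac{1}{1+c^{-1}}\ge 1-\epsilon$ — invoking Lemma \ref{lemma:prej-over-pacc} — at the cost of making the survival probability per round as small as $\Theta(c(\epsilon)^{-|w|})=\Theta((2/\epsilon^{2})^{-|w|})$ after a suitable choice relating $c$ to $\epsilon$.

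Next I would bound the expected runtime. By Lemma \ref{lemma:expected-runtime}, the worst-case expected runtime is $\frac{1}{p^{h}_{\mathcal{P}}(w)}\,s(w)$, where $s(w)=O(|w|)$ is the length of a single round and $p^{h}_{\mathcal{P}}(w)$ is the per-round halting probability. With the tuning above, $p^{h}_{\mathcal{P}}(w)=\Theta(c(\epsilon)^{-|w|})$, and choosing the relationship between the coin bias and $\epsilon$ so that $c(\epsilon)^{-1}=\Theta(\epsilon^{2}/2)$ — i.e.\ arranging a single ``survival'' event to fire with probability $\epsilon^{2}/2$ per symbol — yields $p^{h}_{\mathcal{P}}(w)=\Theta((\epsilon^{2}/2)^{|w|})$ and hence expected runtime $O((2/\epsilon^{2})^{|w|}|w|)$, as claimed. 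I would then verify the correctness bound: on $w\in L_{eq}$ the accepting branch dominates and $f^{a}_{\mathcal{P}}(w)\ge 1-\epsilon$ by Lemma \ref{lemma:prej-over-pacc}, while on $w\notin L_{eq}$ (format fails, or $i\neq j$) one of the mirror branches forces $f^{r}_{\mathcal{P}}(w)\ge 1-\epsilon$.

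The main obstacle I anticipate is getting a \emph{genuinely constant} number of states while still obtaining the sharp comparison $c^{-(i-j)}$: one must design the two coin-flipping sub-experiments so that the $a$-block experiment and the $b$-block experiment share states (the machine cannot ``remember'' $i$), and the ratio of the two surviving amplitudes must come out as exactly $c^{-i}$ over $c^{-j}$ with no spurious polynomial-in-$|w|$ factors that would spoil either the error analysis or the clean $(2/\epsilon^{2})^{|w|}|w|$ runtime. Handling the format-violating inputs and the boundary cases $i=0$ or $j=0$ within the same fixed state set, and checking well-formedness (that all per-symbol transition probabilities sum to $1$, the local PTM condition from Section \ref{sbc:basic-terminology}), will require care but is routine. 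The delicate quantitative step is the simultaneous choice of the coin bias as a function of $\epsilon$ that makes the error bound (via Lemma \ref{lemma:prej-over-pacc}) and the expected-runtime bound (via Lemma \ref{lemma:expected-runtime}) both come out with the stated constants.
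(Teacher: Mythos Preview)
Your two-branch ``mirror'' design has a fatal symmetry problem. If branch~1 produces an accept-signal with probability $\Theta(c^{-i})$ and a reject-signal with probability $\Theta(c^{-j})$, and branch~2 is the mirror (accept with $\Theta(c^{-j})$, reject with $\Theta(c^{-i})$), then the \emph{overall} round probabilities are
\[
p^{a}_{\mathcal{P}}(w)=\tfrac{1}{2}\bigl(c^{-i}+c^{-j}\bigr),\qquad
p^{r}_{\mathcal{P}}(w)=\tfrac{1}{2}\bigl(c^{-j}+c^{-i}\bigr),
\]
which are equal for every $i,j$. Lemma~\ref{lemma:prej-over-pacc} then gives $f^{a}_{\mathcal{P}}(w)=\tfrac{1}{2}$ identically, so the machine recognizes nothing. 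Your sentence ``on $i\neq j$ one of the two branches gives $p^{r}/p^{a}\ge c$'' is where the reasoning slips: in a RT-PFA$^{\circlearrowleft}$ the branches do not vote separately; only the aggregate $p^{a}$ and $p^{r}$ matter, and your other branch exactly cancels the gain.

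The missing idea is an \emph{asymmetric} comparison that is extremal precisely at $i=j$. The paper uses three paths: one accept path that survives every symbol with probability $x=\epsilon^{2}/2$, contributing $p^{a}\propto x^{m+n}$, and two reject paths that survive only the $a$'s (resp.\ only the $b$'s) with probability $x^{2}$ per symbol, contributing $p^{r}\propto \tfrac{\epsilon}{2}(x^{2m}+x^{2n})$. The point is the AM--GM inequality $x^{2m}+x^{2n}\ge 2x^{m+n}$ with equality iff $m=n$: at $m=n$ one gets $p^{r}/p^{a}=\epsilon$, while for $m\neq n$ a short calculation gives $p^{a}/p^{r}<\tfrac{2}{\epsilon}\cdot x=\epsilon$. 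Your runtime analysis via Lemma~\ref{lemma:expected-runtime} and the choice $x=\epsilon^{2}/2$ are fine and match the paper, but the correctness argument needs this (or an equivalent) arithmetic-versus-geometric-mean structure rather than a symmetric pair of directional tests.
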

\begin{proof}
       We construct the RT-PFA$ ^{\circlearrowleft} $ $
\mathcal{P}_{\epsilon} $, shortly $ \mathcal{P} $, as follows:
       Let $ x = \frac{\epsilon^{2}}{2} $.
       The computation splits into three paths called $ \mathsf{path_{1}} $,
$ \mathsf{path_{2}} $, and
       $ \mathsf{path_{3}} $ with equal probabilities on symbol $ \cent $.
       All three paths, while performing their main tasks, parallelly check
whether the input is of the form
       $ a^{*}b^{*} $, if not, all paths simply reject.
       The main tasks of the  paths are as follows:
       \begin{list}{$ \bullet $}{}
               \item $ \mathsf{path_{1}} $ moves on with probability $ x $ and
restarts with probability $ 1-x $
               when reading symbols $ a $ and $ b $. After reading the right
end-marker $ \dollar $,
               it accepts with probability with $ 1 $.
               \item $ \mathsf{path_{2}} $ moves on with probability $ x^{2} $ and
restarts with probability $ 1-x^{2} $
               when reading symbol $ a $. On $ b $'s, it continuous with the ``syntax'' check.
               After reading the $ \dollar $, it rejects with probability $
\frac{\epsilon}{2} $ and
               restarts with probability $ 1-\frac{\epsilon}{2} $.
               \item $ \mathsf{path_{3}} $ is similar to $ \mathsf{path_{2}} $,
except that the transitions
               of symbols $ a $ and $ b $ are interchanged.
       \end{list}

       If the input is of the form $ a^{m}b^{n} $, then the accept and
reject probabilities
       of the first round are calculated as
   \begin{equation}
       p_{\mathcal{P}}^{a}(w)=\frac{1}{3} x^{m+n},~ \mbox{ and }~
           p_{\mathcal{P}}^{r}(w)= \frac{\epsilon}{6} \left(
x^{2m} + x^{2n} \right).
    \end{equation}

       If $ m = n $, then
       \begin{equation} \frac{p_{\mathcal{P}}^{r}(w)}{p_{\mathcal{P}}^{a}(w)}
= \epsilon. \end{equation}

       If $ m \neq n $ (assume without loss of generality that $ m = n+d $
for some $ d \in \mathbb{Z}^{+} $) ,
       then
    \begin{equation}
            \frac{p_{\mathcal{P}}^{a}(w)}{p_{\mathcal{P}}^{r}(w)}
=
            \frac{2}{\epsilon} \frac{x^{2n+d}}{x^{2n+2d}+x^{2n}} =
            \frac{2}{\epsilon} \frac{x^{d}}{x^{2d}+1} <
            \frac{2}{\epsilon} x^{d} \le \frac{2}{\epsilon} x
       \end{equation}
       By replacing $ x=\dfrac{\epsilon^{2}}{2} $, we can get
       \begin{equation} \frac{p_{\mathcal{P}}^{a}(w)}{p_{\mathcal{P}}^{r}(w)}
< \epsilon. \end{equation}

       By using Lemma \ref{lemma:prej-over-pacc}, we can conclude that $ \mathcal{P} $
       recognizes $ L_{eq} $ with error bound $ \epsilon $.

       Since $ p_{\mathcal{P}}^{h}(w) $ is always greater than $ \frac{1}{3} x^{|w|}  $,
       the expected runtime of the algorithm is $ O((\frac{2}{\epsilon^{2}})^{|w|}|w|) $,
       where $ w $ is the input string.
\end{proof}

\subsection{Computational Powers of Realtime Quantum Finite Automata with Restart} \label{berr:reset-quantum}

In this section, we focus on the RT-QFA$ ^{\circlearrowleft} $, which turns out to be the simplest and 
most restricted known model of 
quantum computation that is strictly superior in terms of bounded-error language recognition to its classical 
counterpart.

Our first result shows that RT-QFA$ ^{\circlearrowleft} $s can
simulate any RT-PFA$ ^{\circlearrowleft} $ with small state cost, albeit with great slowdown.
Note that no such relation is known between the 2KWQFA and its classical counterpart, the 2PFA, 
in the bounded error case.

\begin{theorem}
	\label{berr:thm:pfa-restart-simulated-by-qfa-restart}
	Any language $ L \subseteq \Sigma^{*} $ recognized by an $ n $-state RT-PFA$ ^{\circlearrowleft} $
	with error bound $ \epsilon $ can be recognized by a $ (2n+4) $-state RT-QFA$ ^{\circlearrowleft} $
	with the same error bound. 
	Moreover, if the expected runtime of the RT-PFA$ ^{\circlearrowleft} $ is $ O(s(|w|)) $,
	then the expected runtime of the RT-QFA$  ^{\circlearrowleft} $ is $ O(l^{2|w|}s^{2}(|w|)) $
	for a constant $ l>1 $ depending on $n$, where $ w $ is the input string.
\end{theorem}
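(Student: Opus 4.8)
The plan is to embed the dynamics of the given RT-PFA$^{\circlearrowleft}$ $\mathcal{P}$ inside a well-formed (unitary) evolution, and to route every ``extra'' measurement outcome introduced by the embedding into a restart of the simulating RT-QFA$^{\circlearrowleft}$ $\mathcal{M}$. The guiding observation is Lemma~\ref{lemma:acc-rej}: for a machine with restart, $f^{a}(w)=\bigl(1+p^{r}(w)/p^{a}(w)\bigr)^{-1}$ depends only on the \emph{ratio} of the first-round accept and reject probabilities, not on their absolute sizes. Hence it suffices to build $\mathcal{M}$ so that $p^{a}_{\mathcal{M}}(w):p^{r}_{\mathcal{M}}(w)$ is a monotone, error-bound-preserving function of $p^{a}_{\mathcal{P}}(w):p^{r}_{\mathcal{P}}(w)$; shrinking both first-round probabilities by a common huge factor is harmless for correctness, and is exactly what produces the $l^{2|w|}$ slowdown.

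First I would normalize $\mathcal{P}$: make its accepting, rejecting, \emph{and} restarting states absorbing within a round, so that, writing $\hat{A}_{\sigma}$ for the resulting transition matrices and $\hat{v}_{j}$ for the associated sub-probability vectors, the mass that $\hat{v}_{|\tilde{w}|}$ places on $Q_{a}$, on $Q_{r}$, and on $Q^{\circlearrowleft}$ equals $p^{a}_{\mathcal{P}}(w)$, $p^{r}_{\mathcal{P}}(w)$, and $1-p^{a}_{\mathcal{P}}(w)-p^{r}_{\mathcal{P}}(w)$ respectively; I would also pad the state set with two accumulator states $q_{n+1},q_{n+2}$ so that, on reading $\dollar$, all $Q_{a}$-mass is collected into $q_{n+1}$ and all $Q_{r}$-mass into $q_{n+2}$. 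Then I would apply the template of Figure~\ref{uerr:fig:general-template-1} to the column-norm-$\le 1$ matrices $\hat{A}_{\sigma}$ to obtain a constant $l=l(n)>1$ and blocks $B_{\sigma}$ with the columns of $\tfrac{1}{l}\bigl(\begin{smallmatrix}\hat{A}_{\sigma}\\ B_{\sigma}\end{smallmatrix}\bigr)$ orthonormal, and complete these to square unitaries $U_{\sigma}$ on a space of dimension $2(n+2)=2n+4$. The states of $\mathcal{M}$ are the $n+2$ ``simulation'' states (a copy of padded $\mathcal{P}$), with the copy of $q_{n+1}$ accepting, the copy of $q_{n+2}$ rejecting, and the rest non-halting; together with the $n+2$ ``garbage'' states furnished by the dilation, which are \emph{all declared restart states}, so that hitting one resets $\mathcal{M}$ to its start state, as an RT-QFA$^{\circlearrowleft}$ requires. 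This is essentially the construction of Lemma~\ref{uerr:lem:rt-kwqfa}, with its ``accept/reject twin states'' replaced by restart states, and with the $\dollar$-transition arranged so that all residual mass is routed to accepting, rejecting, or restart outcomes (never to a non-halting outcome, as realtime demands).

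Next I would run the round analysis. Since the simulation states are non-halting at every step before $\dollar$, an induction shows that just before the $j$-th measurement $\mathcal{M}$'s amplitude on the simulation states is $(1/l)^{j}\hat{v}_{j}$ and on the garbage states is $(1/l)^{j}B_{\tilde{w}_{j}}\hat{v}_{j-1}$; since $\tfrac{1}{l}\bigl(\begin{smallmatrix}\hat{A}_{\sigma}\\ B_{\sigma}\end{smallmatrix}\bigr)$ is an isometry the ``continue'' and ``restart'' probabilities at step $j$ are accounted for correctly, and since $l>1$ forces $B_{\sigma}\neq 0$ the round halts or restarts with probability $1$ within $|\tilde{w}|$ steps. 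On reading $\dollar$, the accumulators carry amplitudes $(1/l)^{|\tilde{w}|}p^{a}_{\mathcal{P}}(w)$ and $(1/l)^{|\tilde{w}|}p^{r}_{\mathcal{P}}(w)$, so
\[
p^{a}_{\mathcal{M}}(w)=\left(\tfrac{1}{l}\right)^{2|\tilde{w}|}\bigl(p^{a}_{\mathcal{P}}(w)\bigr)^{2},\qquad
p^{r}_{\mathcal{M}}(w)=\left(\tfrac{1}{l}\right)^{2|\tilde{w}|}\bigl(p^{r}_{\mathcal{P}}(w)\bigr)^{2}.
\]
By Lemma~\ref{lemma:acc-rej}, $f^{a}_{\mathcal{M}}(w)=\bigl(1+(p^{r}_{\mathcal{P}}(w)/p^{a}_{\mathcal{P}}(w))^{2}\bigr)^{-1}$; since $x\mapsto x^{2}$ fixes $0$ and maps $[0,\tfrac{\epsilon}{1-\epsilon}]$ into itself (as $\tfrac{\epsilon}{1-\epsilon}<1$), Lemma~\ref{lemma:prej-over-pacc} yields that $\mathcal{M}$ recognizes $L$ with the same error bound $\epsilon$. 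For the running time, $p^{h}_{\mathcal{M}}(w)\ge(1/l)^{2|\tilde{w}|}\tfrac{1}{2}\bigl(p^{h}_{\mathcal{P}}(w)\bigr)^{2}$ by the power-mean inequality, each round of $\mathcal{M}$ has at most $|\tilde{w}|$ steps, and the hypothesis that $\mathcal{P}$'s expected runtime $\Theta\bigl(s_{\mathcal{P}}(w)/p^{h}_{\mathcal{P}}(w)\bigr)$ is $O(s(|w|))$ forces $1/p^{h}_{\mathcal{P}}(w)=O(s(|w|))$; substituting into Lemma~\ref{lemma:expected-runtime} and absorbing the polynomial factors into the base of the exponential gives expected runtime $O\bigl(l^{2|w|}s^{2}(|w|)\bigr)$ for a constant $l>1$ depending on $n$.

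The main obstacle is precisely that the Kondacs--Watrous evolution is linear \emph{on amplitudes}, so it cannot reproduce a stochastic step \emph{on probabilities} without squaring; one must therefore accept $p^{a}_{\mathcal{M}}\propto(p^{a}_{\mathcal{P}})^{2}$ and verify that squaring both probabilities still preserves the error bound — it does, but only because the threshold ratio $\tfrac{\epsilon}{1-\epsilon}$ is less than $1$ — and that the accept/reject decision is deferred to the right end-marker, so that the per-round ratio is governed by the single quantity $p^{a}_{\mathcal{P}}(w)$ rather than by a sum of incremental acceptance probabilities (for which squaring would \emph{not} respect the bound). The remaining work — the bookkeeping that the dilated matrices fit in exactly $2n+4$ states and that $U_{\dollar}$ can be completed so that the round always terminates after reading $\dollar$ — is routine.
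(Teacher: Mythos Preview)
Your proposal is correct and follows essentially the same approach as the paper: pad $\mathcal{P}$ with two accumulator states so that the first-round accept/reject probabilities land in two designated states on $\dollar$, dilate each $(n+2)\times(n+2)$ transition matrix to a $(2n+4)\times(2n+4)$ unitary via the template of Figure~\ref{uerr:fig:general-template-1}, declare all extra states to be restart states, and then observe that the accumulator amplitudes become $(1/l)^{|\tilde{w}|}p^{a}_{\mathcal{P}}(w)$ and $(1/l)^{|\tilde{w}|}p^{r}_{\mathcal{P}}(w)$, so the first-round probabilities are squared and Lemma~\ref{lemma:prej-over-pacc} applies. The only cosmetic difference is that the paper computes the improved error bound $\epsilon'=\epsilon^{2}/(1-2\epsilon+2\epsilon^{2})\le\epsilon$ explicitly, whereas you invoke the monotonicity of $x\mapsto x^{2}$ on $[0,\tfrac{\epsilon}{1-\epsilon}]\subset[0,1)$; both yield the stated bound $\epsilon$.
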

\begin{proof}
	Let $ \mathcal{P} $ be an $ n $-state RT-PFA$ ^{\circlearrowleft} $ recognizing $ L $ 
	with error bound $ \epsilon $. We construct a $ (2n+4) $-state RT-QFA$ ^{\circlearrowleft} $ 
	$ \mathcal{M} $ recognizing the same language with error bound $ \epsilon' \le \epsilon $.

	By adding two more states, $ q_{a} $ and $ q_{r} $, to $ \mathcal{P} $, 
	we obtain a new RT-PFA$ ^{\circlearrowleft} $, $ \mathcal{P}' $, 
	where the halting of the computation in each round is postponed to the last symbol, $ \dollar $, 
	on which the overall accepting and rejecting probabilities are summed up into 
	$ q_{a} $ and $ q_{r} $, respectively.
	Therefore, for any given input string $ w \in \Sigma^{*} $, 
	the value of $ q_{a} $ and $ q_{r} $ are $ p_{\mathcal{P}}^{a}(w) $ and $ p_{\mathcal{P}}^{r}(w) $,
	respectively, at the end of the first round.

	By using the method described in Section \ref{uerr:KWQFA-languages}, 
	each stochastic matrix can be converted to a unitary one
	with twice the size, i.e. each transition matrix of $ \mathcal{P}' $ 
	can be converted to a $ (2n+4) \times (2n+4) $-dimensional
	unitary matrix. These are the transition matrices of $ \mathcal{M} $. 
	The state set of $ \mathcal{M} $ can be specified as follows:
	\begin{enumerate}
		\item The initial state of $ \mathcal{M} $ is the state corresponding 
			to the initial state of $ \mathcal{P} $;
		\item The states corresponding to $ q_{a} $ and $ q_{r} $ are
		the accepting and rejecting states, $ q_{a}' $ and $ q_{r}' $, respectively;
		\item the states corresponding to the non-halting and non-restarting states of 
		$ \mathcal{P}' $ are non-halting and non-restarting states, respectively; and,
		\item all remaining states are restarting states.
	\end{enumerate}	

	When $ \mathcal{M} $ runs on input string $ w $, the amplitudes of $ q_{a}' $ and $ q_{r}' $, 
	the only halting states of $ \mathcal{M} $, at the end of the first round are
	$ \left( \frac{1}{l} \right)^{|\tilde{w}|}p_{\mathcal{P}}^{a}(w)  $ and
	$ \left( \frac{1}{l} \right)^{|\tilde{w}|}p_{\mathcal{P}}^{r}(w)  $, respectively,
	where $ l $ is set to $ 2n+5 $ with respect to the template described in 
	Figure \ref{uerr:fig:general-template-1}.
	Therefore, when $ w \in L $,
	\begin{equation}
		\frac{p_{\mathcal{M}}^{r}(w)}{p_{\mathcal{M}}^{a}(w)} =
		\frac{(p_{\mathcal{P}}^{r}(w))^{2}}{(p_{\mathcal{P}}^{a}(w))^{2}}
		\leq
		\frac{\epsilon^{2}}{(1-\epsilon)^{2}},
	\end{equation}
	and similarly, when $ w \notin L $,
	\begin{equation}
		\frac{p_{\mathcal{M}}^{a}(w)}{p_{\mathcal{M}}^{r}(w)} =
		\frac{(p_{\mathcal{P}}^{a}(w))^{2}}{(p_{\mathcal{P}}^{r}(w))^{2}}
		\leq
		\frac{\epsilon^{2}}{(1-\epsilon)^{2}}.
	\end{equation}
	By solving the equation
	\begin{equation} \frac{\epsilon'}{1-\epsilon'} = \frac{\epsilon^{2}}{(1-\epsilon)^{2}}, \end{equation}
	we obtain
	\begin{equation} \epsilon'=\frac{\epsilon^{2}}{1 - 2\epsilon + 2\epsilon^{2}} \leq \epsilon. \end{equation}

	The expected runtime of $ \mathcal{P} $ is 
	\begin{equation} 
		\frac{| \tilde{w} |}{p_{\mathcal{P}}^{a}(w)+p_{\mathcal{P}}^{r}(w)} \in O(s(|w|)), 
	\end{equation}
	and so the expected runtime of $ \mathcal{M} $ is
	\begin{equation} 
		\left( l \right)^{2|\tilde{w}|} 
			\frac{| \tilde{w} |}{(p_{\mathcal{P}}^{a}(w))^{2}+(p_{\mathcal{P}}^{r}(w))^{2}} < 3 
		\left( l \right)^{2|\tilde{w}|} 
			\left( \frac{| \tilde{w} |}{p_{\mathcal{P}}^{a}(w)+p_{\mathcal{P}}^{r}(w)} \right)^{2}
		\in O(l^{2|w|}s^{2}(|w|)).
	\end{equation}
\end{proof}

\begin{corollary}
     RT-QFA$  ^{\circlearrowleft} $s can recognize all regular languages with zero error.
\end{corollary}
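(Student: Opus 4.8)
The plan is to obtain this as an immediate consequence of Theorem~\ref{berr:thm:pfa-restart-simulated-by-qfa-restart}. First I would recall that $\mathrm{REG}$ is precisely the class of languages recognized by RT-DFAs, and that every RT-DFA can be regarded as a degenerate RT-PFA whose transition matrices are $0$--$1$ (left) stochastic matrices. Such a machine reaches the right end-marker and halts with probability $1$ on every input, and it recognizes its language with error bound $\epsilon=0$. Since, by definition, an RT-PFA is an RT-PFA$^{\circlearrowleft}$ that never exercises its restart move, this already exhibits, for each regular language, an $n$-state RT-PFA$^{\circlearrowleft}$ recognizing it with error bound $0$.

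Next I would apply Theorem~\ref{berr:thm:pfa-restart-simulated-by-qfa-restart} with $\epsilon=0$ (which is an admissible error bound under the convention $\epsilon\in[0,\tfrac12)$ fixed in Section~\ref{sbc:isolated-cutpoint}). The theorem yields a $(2n+4)$-state RT-QFA$^{\circlearrowleft}$ $\mathcal{M}$ recognizing the same language with error bound $\epsilon'\le\epsilon=0$, i.e.\ with zero error. It is worth checking that nothing in that construction degenerates at $\epsilon=0$: because the underlying automaton is deterministic, $p_{\mathcal{P}}^{a}(w)$ and $p_{\mathcal{P}}^{r}(w)$ equal $1$ and $0$ in some order, so $(p_{\mathcal{P}}^{a}(w))^{2}+(p_{\mathcal{P}}^{r}(w))^{2}=1$ and hence $p_{\mathcal{M}}^{h}(w)=(1/l)^{2|\tilde w|}>0$; thus $\mathcal{M}$ still halts with probability $1$, and the decisive ratio $p_{\mathcal{M}}^{r}(w)/p_{\mathcal{M}}^{a}(w)$ (resp.\ $p_{\mathcal{M}}^{a}(w)/p_{\mathcal{M}}^{r}(w)$) vanishes for $w\in L$ (resp.\ $w\notin L$), which by Lemma~\ref{lemma:acc-rej} gives $f_{\mathcal{M}}^{a}(w)=1$ for $w\in L$ and $f_{\mathcal{M}}^{r}(w)=1$ for $w\notin L$.

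There is essentially no obstacle; the only things to verify are exactly the two small points above --- that $\epsilon=0$ is a legitimate error setting, and that the (exponential but finite) expected runtime bound of the theorem, namely $O(l^{2|w|}s^{2}(|w|))$ with $s(|w|)=|\tilde w|$ for a DFA, keeps $\mathcal{M}$ within the class of well-defined RT-QFA$^{\circlearrowleft}$s. Alternatively, for a self-contained argument one could skip the theorem and directly embed the stochastic transition matrices of a RT-DFA for the language into unitary matrices by the template of Section~\ref{uerr:KWQFA-languages}, never using the restart move and measuring the answer on the right end-marker; but going through Theorem~\ref{berr:thm:pfa-restart-simulated-by-qfa-restart} is the shorter route.
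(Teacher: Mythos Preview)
Your proposal is correct and follows precisely the route the paper intends: the corollary is stated immediately after Theorem~\ref{berr:thm:pfa-restart-simulated-by-qfa-restart} with no separate proof, and the intended argument is exactly to regard a RT-DFA as a zero-error RT-PFA$^{\circlearrowleft}$ and invoke the theorem. Your additional sanity checks (that $\epsilon=0$ is admissible and that the halting probability stays positive) are fine and make the implicit reasoning explicit.
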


If the underlying QFA model of realtime QFA with restart is chosen as RT-QFA instead of RT-KWQFA,
we obtain the general RT-QFA$  ^{\circlearrowleft} $, denoted shortly as RT-GQFA$ ^{\circlearrowleft}$.
Thus, the accepting, rejecting, and restarting parts can easily be postponed to the end of the
computation, that is, only one observation is implemented after the whole input is read.
A \textit{general} realtime quantum finite automaton with restart is a 6-tuple
\begin{equation}
	\mathcal{M} = (Q,\Sigma,\{\mathcal{E}_{\sigma \in \tilde{\Sigma}}\},q_{1},Q_{a},Q_{r}),
\end{equation} 
where all specifications are the same as RT-QFA (see Section \ref{qtm:RT-QFAs}) except:
\begin{itemize}
	\item $ Q_{r} $ is the set of rejecting states;
	\item $ Q^{\circlearrowleft} = Q \setminus ( Q_{a} \cup Q_{r} ) $ is the set of restarting states;
	\item $ \Delta = \{a,r,\circlearrowleft\} $ with the following specifications:
		\begin{enumerate}
			\item ``a'': the computation is halted and the input is accepted,
			\item ``r'': the computation is halted and the input is rejected, and
			\item ``$ \circlearrowleft $'': the computation is restarted.
		\end{enumerate}
		The corresponding projectors, $ P_{a},~ P_{r}, \mbox{ and } P^{\circlearrowleft} $, 
		are defined in a standard way, based on the related set of states, 
		$ Q_{a},~ Q_{r},\mbox{ and }  Q^{\circlearrowleft} $, respectively.
\end{itemize}

\begin{theorem}
	\label{berr:thm:RT-GFA-restart-simulated-by-RT-QFA-restart}
	Any language $ L \subseteq \Sigma^{*} $ recognized by an $ n $-state RT-GQFA$ ^{\circlearrowleft} $
	with error bound $ \epsilon $ can be recognized by a $ O(n) $-state RT-QFA$ ^{\circlearrowleft} $
	with the same error bound. 
	Moreover, if the expected runtime of the RT-GQFA$ ^{\circlearrowleft} $ is $ O(s(|w|)) $,
	then the expected runtime of the RT-QFA$  ^{\circlearrowleft} $ is $ O(l^{2|w|}s^{2}(|w|)) $
	for a constant $ l>1 $, where $ w $ is the input string.
\end{theorem}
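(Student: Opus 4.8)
The plan is to mirror the simulation strategy already used in Theorem~\ref{berr:thm:pfa-restart-simulated-by-qfa-restart}, but now starting from a general RT-GQFA$^{\circlearrowleft}$ rather than from a RT-PFA$^{\circlearrowleft}$. First I would take an $n$-state RT-GQFA$^{\circlearrowleft}$ $\mathcal{M}_1 = (Q,\Sigma,\{\mathcal{E}_{\sigma\in\tilde\Sigma}\},q_1,Q_a,Q_r)$ recognizing $L$ with error bound $\epsilon$, and note that a single round of $\mathcal{M}_1$ is driven by superoperators $\mathcal{E}_\sigma$ with some fixed number $k$ of operation elements $\{E_{\sigma,1},\dots,E_{\sigma,k}\}$ satisfying $\sum_i E_{\sigma,i}^\dagger E_{\sigma,i}=I$, together with a single final projective measurement $\{P_a,P_r,P^{\circlearrowleft}\}$ at the end of the round. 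The key observation is that the RT-KWQFA-based RT-QFA$^{\circlearrowleft}$ model is forced to measure at \emph{every} step with projectors onto halting/restarting/continuing state subsets, so the task is to ``linearize'' the density-matrix evolution of $\mathcal{M}_1$ into a pure-state evolution that can be carried by a RT-QFA$^{\circlearrowleft}$ without the intermediate measurements collapsing anything prematurely.

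The main technical step is to invoke the $vec$ linearization machinery of Lemma~\ref{qtm:lem:RT-QFA-to-RT-GFA} (Figure~\ref{qtm:fig:vec}): the map $\rho\mapsto vec(\rho)$ turns the superoperator action $\rho\mapsto\sum_i E_{\sigma,i}\rho E_{\sigma,i}^\dagger$ into ordinary left multiplication by $A'_\sigma=\sum_i E_{\sigma,i}\otimes E_{\sigma,i}^*$ on an $n^2$-dimensional vector, and the final acceptance/rejection/restart probabilities become linear functionals $vec(P_a)^T(\cdot)$, etc. So one round of $\mathcal{M}_1$ is described by a GFA-like linear system with $O(n^2)$ (after the complex-to-real $2\times2$ trick, $O(n^2)$) dimensions, in which the round's accept, reject, and restart probabilities appear as distinguished coordinates of the final vector. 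Then I would apply exactly the embedding technique of Section~\ref{uerr:KWQFA-languages} (Figure~\ref{uerr:fig:general-template-1}): pad each real transition matrix $A''_\sigma$ with rows $B_\sigma$ and a block $D_\sigma$ to obtain a unitary $U_\sigma$, scaled by $1/l$ with $l$ a constant depending on the dimension $O(n^2)$. The newly introduced states coming from the $B_\sigma$ blocks are declared restarting states (so spurious branches just restart and do no harm, exactly as in the proof of Theorem~\ref{berr:thm:pfa-restart-simulated-by-qfa-restart}), while the coordinates carrying accept/reject probability are mapped to $Q_a$/$Q_r$ of the new RT-QFA$^{\circlearrowleft}$ $\mathcal{M}_2$. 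Just before the final step on $\dollar$ the amplitudes of the designated accepting and rejecting states are $(1/l)^{|\tilde w|}$ times the per-round accept/reject probabilities of $\mathcal{M}_1$, and squaring (from amplitude to probability) gives per-round halting probabilities $\big((1/l)^{|\tilde w|}p^a_{\mathcal{M}_1}(w)\big)^2$ and $\big((1/l)^{|\tilde w|}p^r_{\mathcal{M}_1}(w)\big)^2$ for $\mathcal{M}_2$.

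From here the argument is routine and parallels the earlier theorem: by Lemma~\ref{lemma:prej-over-pacc}, since $p^r_{\mathcal{M}_1}/p^a_{\mathcal{M}_1}\le\epsilon/(1-\epsilon)$ on $w\in L$ (and the symmetric bound off $L$), the ratio for $\mathcal{M}_2$ is the square of that, giving error bound $\epsilon'=\epsilon^2/(1-2\epsilon+2\epsilon^2)\le\epsilon$; and by Lemma~\ref{lemma:expected-runtime} the expected runtime blows up by the square of the per-round halting probability together with the $l^{2|\tilde w|}$ amplitude-scaling penalty, yielding $O(l^{2|w|}s^2(|w|))$. The state count is $O(n)$ because the linearization squares to $O(n^2)$ and the unitary embedding at most doubles it and adds a constant, but since the statement only claims $O(n)$ — wait, here is the one place I should be careful: the $vec$ construction genuinely costs $\Theta(n^2)$ states, so the claimed $O(n)$ bound must be read as $O(n)$ in the size of the \emph{linearized} machine, or the statement intends the simulation cost to be polynomial; I would phrase the state bound as $O(n^2)$ and remark that it collapses to $O(n)$ only when $\mathcal{M}_1$ is already given in a pure-state (KWQFA-style) form, which is in fact the relevant case since the theorem is about showing the general model is no more powerful. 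The main obstacle, then, is not any single estimate but getting the bookkeeping right so that (i) the intermediate per-step measurements of the RT-QFA$^{\circlearrowleft}$ target only the padding states and never touch the coordinates carrying the linearized density matrix, and (ii) the ``twin halting state'' / restart-padding trick cleanly cancels out of both the acceptance ratio and is absorbed into the runtime bound.
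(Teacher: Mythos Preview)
Your plan is essentially the paper's own argument: linearize one round of the RT-GQFA$^{\circlearrowleft}$ via the $vec$ map of Lemma~\ref{qtm:lem:RT-QFA-to-RT-GFA}, embed the resulting $n^{2}$-dimensional linear system into unitary matrices, declare the padding rows as restart states, and then carry over the ratio and runtime analysis from Theorem~\ref{berr:thm:pfa-restart-simulated-by-qfa-restart} verbatim. Two small points of divergence are worth noting. First, the paper does not reuse the template of Figure~\ref{uerr:fig:general-template-1} here, because the columns of $A_{\sigma}=\sum_{i}E_{\sigma,i}\otimes E_{\sigma,i}^{*}$ need not have norm at most~$1$; instead it introduces a second template (Figure~\ref{berr:fig:general-template-2}) that first orthogonalizes the columns with an extra block $B_{\sigma}$ and only then normalizes with a diagonal block $C_{\sigma}$, yielding $\frac{1}{l}\bigl(\begin{smallmatrix}A'_{\sigma}\\ B_{\sigma}\\ C_{\sigma}\end{smallmatrix}\bigr)$ with orthonormal columns and no a~priori bound on the column norms required. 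Your reference to template~1 would need this adjustment. Second, the complex-to-real doubling is unnecessary here since the target RT-KWQFA model already allows complex unitaries. Your observation about the state count is well taken: the paper's construction explicitly produces a $(3n^{2}+6)$-state machine, so the $O(n)$ in the statement should indeed be read as $O(n^{2})$.
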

\begin{proof}
	We use almost the same idea presented in the proof of 
	Theorem \ref{berr:thm:pfa-restart-simulated-by-qfa-restart} after linearizing the computation of
	the given RT-GQFA$ ^{\circlearrowleft} $.
	Let $ \mathcal{G} = (Q,\Sigma,\{ \mathcal{E}_{\sigma \in \tilde{\Sigma}} \},q_{1},Q_{a},Q_{r}) $ 
	be an $ n $-state RT-GQFA$ ^{\circlearrowleft} $ recognizing $ L $ 	
	with error bound $ \epsilon $. We construct a $ 3n^{2}+6 $-state RT-QFA$ ^{\circlearrowleft} $ 
	$ \mathcal{M} $ recognizing the same language with error bound $ \epsilon' \le \epsilon $.
	
	In order to linearize $ \mathcal{G} $, we use the technique described in 
	Lemma \ref{qtm:lem:RT-QFA-to-RT-GFA} and so we obtain
	$ n^{2} \times n^{2} $-dimensional matrices for each $ \sigma \in \tilde{\Sigma} $, i.e.
	$ A_{\sigma} = \sum\limits_{i=1}^{|\mathcal{E}_{\sigma}|} E_{\sigma,i} \otimes E_{\sigma,i}^{*} $.
	By adding two more states, $ q_{n^{2}+1} $ and $ q_{n^{2}+2} $, the overall accepting and rejecting 
	probabilities are respectively summed up on them, i.e. 
	\begin{equation}
		A_{\sigma \in \Sigma \cup \{\cent\}}'= \left(
		\begin{array}{c|c}
			A_{\sigma} &  0_{n \times 2}  \\
			\hline
			0_{2 \times n} & I_{2 \times 2} \\
		\end{array}
		\right),
		A_{\dollar}'=\left(
			\begin{array}{c|c}
			0_{n \times n} & 0_{2 \times n}  \\
			\hline
			T_{2 \times n} & I_{2 \times 2} \\
		\end{array}
		\right)
		\left(
		\begin{array}{c|c}
			A_{\dollar} &  0_{n \times 2}  \\
			\hline
			0_{2 \times n} & I_{2 \times 2} \\
		\end{array}
		\right),
	\end{equation}
	where all the entries of $ T $ are zeros except that 
	$ T[1,(i-1)n^{2}+i] = 1  $ when $ q_{i} \in Q_{a} $ and
	$ T[2,(i-1)n^{2}+i] = 1  $ when $ q_{i} \in Q_{r} $.
	Let $ v_{0} = (1,0,\ldots,0) $ be a $ (n^{2}+2) \times 1 $-dimensional column vector.
	It can be easily verified that, for any $ w \in \Sigma^{*} $,
	\begin{equation}
		v_{|\tilde{w}|}'=A_{\dollar}' A_{w_{|w|}}' \cdots A_{w_{1}}' A_{\cent}' v_{0}
		= (0_{n^{2} \times 1},p_{\mathcal{G}}^{a}(w),p_{\mathcal{G}}^{r}(w))
	\end{equation}
	Based on the template given on Figure \ref{berr:fig:general-template-2}, we obtain constant $ l $ and
	the sets $ B_{\sigma \in \tilde{\Sigma}} $ and $ C_{\sigma \in \tilde{\Sigma}} $
	such that the columns of the following matrix form an orthonormal set,
	\begin{equation}
		\frac{1}{l} \left( \begin{array}{c} A'_{\sigma} \\ \hline B_{\sigma} 
		\\ \hline C_{\sigma} \end{array} \right).
	\end{equation}
	The remaining part is as in the proof of Theorem \ref{berr:thm:pfa-restart-simulated-by-qfa-restart}.
\end{proof}

	\begin{figure}[h!]
	\begin{center}
		\fbox{
		\begin{minipage}{0.9\textwidth}
			\footnotesize
			Let $ S $ be a finite set and $ \{ A_{s} \mid s \in S \} $ 
			be a set of $ m \times m $-dimensional matrices.
			We present a method in order to find two sets of $ m \times m $-dimensional matrices, 
			$ \{ B_{s} \mid s \in S\} $ and $ \{ C_{s} \mid s \in S\} $,  with a generic constant $ l $ such that
			the columns of the following matrix form an orthonormal set
			\begin{equation}
				\frac{1}{l} 
				\left( \begin{array}{c} A_{s} \\ \hline B_{s} \\ \hline C_{s} \end{array} \right),
			\end{equation}
			for each $ s \in S $. The details of the method is given below.
			\begin{enumerate}
				\item The entries of $ B_{s \in S} $ and $ C_{s \in S} $ are set to 0.
				\item For each $ s \in S $, the entries of $ B_{s} $ are updated to make the columns of
					$ \left( \begin{array}{c} A_{s} \\ \hline B_{s} \end{array} \right) $  pairwise orthogonal.
					Specifically, \\
					\begin{tabular}{ll}
						& for $ i=1, \ldots, m-1 $ \\
						& ~~~~set $ b_{i,i}=1 $ \\
						& ~~~~for $ j=i+1, \ldots, m $ \\
						& ~~~~~~~~set $ b_{i,j} $ to some value so that the $ i^{th} $ and $ j^{th} $ columns 
						become orthogonal \\
						& set $ l_{s} $ to the maximum of the lengths (norms) of the columns of
						$ \left( \begin{array}{c} A_{s} \\ \hline B_{s} \end{array} \right) $
					\end{tabular}
				\item  $ l = \max( \{ l_{s} \mid s \in S \} ) $.
				\item For each $ s \in S $, the diagonal entries of $ C_{s} $ are 
					updated to make the length of each column of
					$ \left( \begin{array}{c} A_{s} \\ \hline B_{s} \\ \hline C_{s} 
					\end{array} \right) $ equal to $ l $.
			\end{enumerate}
		\end{minipage}
		}
		\end{center}
		\caption{General template to build a unitary matrix (II)}
		\vskip\baselineskip
		\label{berr:fig:general-template-2}
	\end{figure}
	
For an automaton $ \mathcal{M} $ recognizing a language $ L $, we define
the \textit{gap function}, $ g_{\mathcal{M}} : N \rightarrow [0, 1] $,
such that $ g_{\mathcal{M}}(n)$ is the
difference between the minimum
acceptance probability of a member of $ L $ with length at most $ n $ and
the maximum acceptance probability of a non-member of $ L $ with length
at most $ n $\footnote{The definition of $ g_{\mathcal{M}} $ is
due to Bertoni and Carpentieri \cite{BC01A}, who call it the ``error function.''}.

\begin{lemma}
	\label{lemma:kwqfa-to-1qfarestart-exponential}
	If a language $ L $ is recognized by a RT-KWQFA $ \mathcal{M} $ with positive (negative)
	one-sided unbounded error such that $ g_{\mathcal{M}}(n) \geq c^{-n} $ for some $ c>1 $,
	then for all $ \epsilon \in (0,\frac{1}{2}) $, $ L $ is recognized by some 
	RT-QFA$ ^{\circlearrowleft} $ having three more states than $ \mathcal{M} $ with positive (negative) 
	one-sided error $ \epsilon $ in expected time $ O(\frac{1}{\epsilon}c^{|w|}|w|) $,
	where $ w $ is an input string.
\end{lemma}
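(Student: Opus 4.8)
The plan is to amplify the exponentially small one–sided gap of $\mathcal{M}$ into a constant gap by wrapping $\mathcal{M}$ inside a restarting machine that, alongside the simulation of $\mathcal{M}$, carries a second, geometrically ``decaying'' computation branch whose per–round halting probability has the right exponential order $c^{-|w|}$. I treat the positive case in detail; the negative case is obtained by interchanging the roles of acceptance and rejection throughout. First I recall what the hypothesis provides: since $\mathcal{M}$ recognizes $L$ with positive one–sided unbounded error, $f_{\mathcal{M}}(w)=0$ for every $w\notin L$, so all non-members contribute $0$ to the gap function and $g_{\mathcal{M}}(n)$ equals the least acceptance probability over members of length at most $n$; hence for $w\in L$ we have $f_{\mathcal{M}}(w)\ge g_{\mathcal{M}}(|w|)\ge c^{-|w|}$. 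Fix $\epsilon\in(0,\tfrac12)$ and set the constant $\delta=\epsilon$.

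For the construction, start from $\mathcal{M}=(Q,\Sigma,\{U_{\sigma\in\tilde\Sigma}\},q_{1},Q_{a},Q_{r})$ and build a RT-QFA$^{\circlearrowleft}$ $\mathcal{M}'$ whose state set is $Q\cup\{q_{d},q_{\mathsf{res}},q_{\mathsf{rej}}\}$, where $q_{d}$ is non-halting, $q_{\mathsf{res}}$ is a restart state, and $q_{\mathsf{rej}}$ is a rejecting state; in $\mathcal{M}'$ the former rejecting states of $\mathcal{M}$ are declared restart states, and the former accepting states remain accepting. On the left end-marker, $\mathcal{M}'$ applies $U_{\cent}$ on the $Q$-coordinates but also splits the start amplitude, sending $\sqrt{1-\delta}$ of it into the ``simulation branch'' (which thereafter just evolves the $Q$-coordinates by the $U_{\sigma}$ of $\mathcal{M}$) and $\sqrt{\delta}$ into $q_{d}$. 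On each input symbol $w_{j}$, besides applying $U_{w_{j}}$, it rotates the decay branch by $|q_{d}\rangle\mapsto c^{-1/2}|q_{d}\rangle+\sqrt{1-c^{-1}}\,|q_{\mathsf{res}}\rangle$, which is a legitimate rotation since $c>1$. On the right end-marker, besides $U_{\dollar}$, it maps $|q_{d}\rangle\mapsto|q_{\mathsf{rej}}\rangle$. Because the $U_{\sigma}$ are already unitary, only the $q_{1}$-column of $U'_{\cent}$ (a unit vector, by the split) and one $2\times2$ block elsewhere are constrained, so each $U'_{\sigma}$ can be completed to a unitary matrix on the $(|Q|+3)$-dimensional space without introducing more states; carrying out this completion while keeping the state count at exactly three is the most delicate bookkeeping and where I expect to spend the most care (the choice of the per-step factor $c^{-1/2}$, rather than $c^{-1}$, is also essential so that the decay probability is $c^{-|w|}$ and the runtime comes out as $c^{|w|}$ and not $c^{2|w|}$).

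For the analysis, note that since $\mathcal{M}$ is a realtime Kondacs–Watrous machine, in one round of $\mathcal{M}'$ the amplitude reaching former accepting states of $\mathcal{M}$ accumulates $\mathcal{M}'$'s acceptance, the amplitude reaching former rejecting states of $\mathcal{M}$ is diverted to restart, and $q_{\mathsf{rej}}$ collects the decay branch on $\dollar$; moreover every round halts or restarts with probability $1$. Hence the first-round probabilities are
\begin{equation}
p^{a}_{\mathcal{M}'}(w)=(1-\delta)\,f_{\mathcal{M}}(w),\qquad
p^{r}_{\mathcal{M}'}(w)=\delta\,\big(c^{-1/2}\big)^{2|w|}=\delta\,c^{-|w|},
\end{equation}
the remaining probability being the restart probability. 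If $w\notin L$ then $p^{a}_{\mathcal{M}'}(w)=0$, so $f^{a}_{\mathcal{M}'}(w)=p^{a}_{\mathcal{M}'}(w)/(p^{a}_{\mathcal{M}'}(w)+p^{r}_{\mathcal{M}'}(w))=0$ by Lemma~\ref{lemma:acc-rej}; this is the one-sidedness. If $w\in L$ then $f_{\mathcal{M}}(w)\ge c^{-|w|}$ gives $\dfrac{p^{r}_{\mathcal{M}'}(w)}{p^{a}_{\mathcal{M}'}(w)}=\dfrac{\delta}{1-\delta}\cdot\dfrac{c^{-|w|}}{f_{\mathcal{M}}(w)}\le\dfrac{\delta}{1-\delta}=\dfrac{\epsilon}{1-\epsilon}$, so $f^{a}_{\mathcal{M}'}(w)\ge 1-\epsilon$ by Lemma~\ref{lemma:prej-over-pacc}; thus $\mathcal{M}'$ recognizes $L$ with positive one-sided error $\epsilon$. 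Finally $p^{h}_{\mathcal{M}'}(w)=p^{a}_{\mathcal{M}'}(w)+p^{r}_{\mathcal{M}'}(w)\ge\delta\,c^{-|w|}=\epsilon\,c^{-|w|}>0$ and every branch of a round has length $|\tilde w|=|w|+2$, so by Lemma~\ref{lemma:expected-runtime} the expected runtime of $\mathcal{M}'$ on $w$ is at most $\dfrac{|w|+2}{\epsilon\,c^{-|w|}}=O\!\big(\tfrac1{\epsilon}c^{|w|}|w|\big)$, as required.

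The negative case is symmetric: let the simulation branch reject when $\mathcal{M}$ rejects and restart when $\mathcal{M}$ accepts, and let the decay branch feed an accepting state $q_{\mathsf{acc}}$ on $\dollar$ instead of $q_{\mathsf{rej}}$. Now the hypothesis reads $1-f_{\mathcal{M}}(w)\ge g_{\mathcal{M}}(|w|)\ge c^{-|w|}$ for $w\notin L$ (and $f_{\mathcal{M}}(w)=1$ for $w\in L$), and the same computation yields $p^{r}_{\mathcal{M}'}(w)=(1-\delta)(1-f_{\mathcal{M}}(w))$ and $p^{a}_{\mathcal{M}'}(w)=\delta\,c^{-|w|}$, whence $f_{\mathcal{M}'}(w)=1$ for $w\in L$, $f_{\mathcal{M}'}(w)\le\epsilon$ for $w\notin L$, the same state count, and the same expected-runtime bound. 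Besides the unitarity bookkeeping mentioned above, the only point requiring attention is confirming that there are no infinite loops within a round, which holds because $\mathcal{M}$ terminates on $\dollar$ and the decay branch is emptied into $q_{\mathsf{rej}}$ (resp.\ $q_{\mathsf{acc}}$) on $\dollar$ as well.
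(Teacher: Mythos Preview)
Your proposal is correct and follows essentially the same approach as the paper: both build $\mathcal{M}'$ by (i) simulating $\mathcal{M}$ with its reject states turned into restart states, and (ii) running in parallel a geometrically decaying branch with per-symbol survival amplitude $c^{-1/2}$ that rejects on $\dollar$. The only cosmetic difference is where the $\epsilon$-weight is inserted: the paper splits equally on $\cent$ and then rejects with amplitude $\sqrt{\epsilon}$ (restarting with $\sqrt{1-\epsilon}$) on $\dollar$, yielding the ratio bound $p^{r}/p^{a}\le\epsilon$, whereas you split unequally on $\cent$ with weights $\sqrt{1-\epsilon},\sqrt{\epsilon}$ and reject with amplitude $1$ on $\dollar$, yielding $p^{r}/p^{a}\le\epsilon/(1-\epsilon)$; both invoke Lemma~\ref{lemma:prej-over-pacc} to conclude $f^{a}_{\mathcal{M}'}(w)\ge 1-\epsilon$, and both give the same runtime and state-count bounds.
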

\begin{proof}
	We consider the case of positive one-sided error. The adaptation to the other case is trivial.
	$ \mathcal{M} $ is converted into a RT-QFA$ ^{\circlearrowleft} $ $\mathcal{M}'_{\epsilon} $,
	shortly $ \mathcal{M}' $, as follows.
	$ \mathcal{M}' $ starts by branching to two equiprobable paths,
	$ \mathsf{path_{1}} $ and $ \mathsf{path_{2}} $, at the beginning of the computation.
	$ \mathsf{path_{1}} $ imitates the computation of $ \mathcal{M} $, 
	except that all reject states that appear in its subpaths are replaced
	by restart states. Regardless of the form of the input, $ \mathsf{path_{2}} $ moves right with amplitude $
	\frac{1}{\sqrt{c}} $, (and so restarts the computation with the remaining probability,) on every input symbol. 
	When it arrives at the right end-marker, $ \mathsf{path_{2}} $ rejects with amplitude $ \sqrt{\epsilon} $, and
	restarts the computation with amplitude $ \sqrt{1-\epsilon} $. 	

	When $ w \notin L $,
	\begin{equation}
		p_{\mathcal{M}'}^{a}(w) = 0,
		\mbox{ and }
		p_{\mathcal{M}'}^{r}(w) = \frac{\epsilon}{2c^{|w|}},
    \end{equation}
	and so the input is rejected with probability 1.
	When $ w \in L $,
	\begin{equation}
		p_{\mathcal{M}'}^{a}(w) \ge \frac{1}{2c^{|w|}},
		\mbox{ and }
		p_{\mathcal{M}'}^{r}(w) = \frac{\epsilon}{2c^{|w|}},
	\end{equation}
	and so the input is accepted with error bound $ \epsilon > 0 $ due to Lemma \ref{lemma:prej-over-pacc}, since
    \begin{equation} \frac{p_{\mathcal{M}'}^{r}(w)}{p_{\mathcal{M}'}^{a}(w)} \le \epsilon. \end{equation}
    Since $ p_{\mathcal{M}'}^{h}(w) $ is always greater than $ \frac{\epsilon}{2c^{|w|}} $,
    the expected runtime of $ \mathcal{M}'_{\epsilon} $ is $ O(\frac{1}{\epsilon}c^{|w|}|w|) $.
\end{proof}
\begin{lemma}
	\label{lemma:kwqfa-to-1qfarestart-constant}
	If a language $ L $ is recognized by a RT-KWQFA $ \mathcal{M} $ with positive (negative)
	one-sided bounded error such that $ g_{\mathcal{M}}(n) \geq c^{-1} $ for some $ c>1 $,
	then for all $ \epsilon \in (0,\frac{1}{2}) $, $ L $ is recognized by some 
	RT-QFA$ ^{\circlearrowleft} $ having three more states than $ \mathcal{M} $ with positive (negative) 
	one-sided error $ \epsilon $ in expected time $ O(\frac{1}{\epsilon}c|w|) $,
	where $ w $ is an input string.
\end{lemma}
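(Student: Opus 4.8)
The plan is to follow the same construction as in Lemma~\ref{lemma:kwqfa-to-1qfarestart-exponential}, adjusting only the ``clock'' subpath so that its survival probability is tuned to a \emph{constant} decay rate rather than one scaling with $|w|$. Recall that in the exponential-gap case the auxiliary path $\mathsf{path}_2$ moves right with amplitude $\frac{1}{\sqrt{c}}$ on \emph{every} input symbol, so that after reading all of $\tilde{w}$ its surviving amplitude is $\left(\frac{1}{\sqrt{c}}\right)^{|\tilde{w}|}$, matching the $c^{-n}$ lower bound on $g_{\mathcal{M}}(n)$. Here, since $g_{\mathcal{M}}(n)\geq c^{-1}$ is a fixed constant, I would instead have $\mathsf{path}_2$ perform the amplitude-$\frac{1}{\sqrt{c}}$ ``survive-or-restart'' move \emph{only once} (say on the left end-marker $\cent$), and then walk deterministically across the rest of the input. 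On the right end-marker, $\mathsf{path}_2$ rejects with amplitude $\sqrt{\epsilon}$ and restarts with amplitude $\sqrt{1-\epsilon}$, exactly as before.

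First I would set up $\mathcal{M}'_{\epsilon}$, shortly $\mathcal{M}'$, by branching into $\mathsf{path}_1$ and $\mathsf{path}_2$ with equal amplitude $\frac{1}{\sqrt{2}}$ at the start. $\mathsf{path}_1$ simulates $\mathcal{M}$, with every reject state in its subpaths replaced by a restart state; this is well-defined and adds no new states beyond those of $\mathcal{M}$. $\mathsf{path}_2$ uses the extra three states: one to implement the single amplitude-$\frac{1}{\sqrt{c}}$ survival move, one for the deterministic rightward traversal, and (reusing a halting state pattern) one to carry the final $\sqrt{\epsilon}$-vs-$\sqrt{1-\epsilon}$ split. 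The wellformedness of the induced transition operators follows from the template in Figure~\ref{uerr:fig:general-template-1} (or is immediate, since each listed branch is a genuine isometric column extension). The key point is that this gives, for $w\notin L$, $p^{a}_{\mathcal{M}'}(w)=0$ and $p^{r}_{\mathcal{M}'}(w)=\frac{\epsilon}{2c}$, so $w$ is rejected with probability $1$; and for $w\in L$, $p^{a}_{\mathcal{M}'}(w)\geq\frac{1}{2c}\,g_{\mathcal{M}}(|w|)/g_{\mathcal{M}}(|w|)$—more carefully, $p^{a}_{\mathcal{M}'}(w)\geq\frac{1}{2c}$ because $\mathsf{path}_1$'s accepting amplitude-squared is at least $g_{\mathcal{M}}(|w|)\geq c^{-1}$ after the $\frac{1}{2}$ branching weight—while $p^{r}_{\mathcal{M}'}(w)=\frac{\epsilon}{2c}$ from $\mathsf{path}_2$.

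Then I would invoke Lemma~\ref{lemma:prej-over-pacc}: since $\frac{p^{r}_{\mathcal{M}'}(w)}{p^{a}_{\mathcal{M}'}(w)}\leq\epsilon$ whenever $w\in L$, and the acceptance probability is $0$ whenever $w\notin L$, the machine $\mathcal{M}'$ recognizes $L$ with positive one-sided error $\epsilon$. For the runtime, each round of $\mathcal{M}'$ takes $O(|w|)$ steps in the worst branch, and $p^{h}_{\mathcal{M}'}(w)\geq\frac{\epsilon}{2c}$ always (the $\mathsf{path}_2$ contribution alone), so by Lemma~\ref{lemma:expected-runtime} the expected runtime is $\frac{1}{p^{h}_{\mathcal{M}'}(w)}\cdot O(|w|)=O\!\left(\frac{1}{\epsilon}\,c\,|w|\right)$, as claimed. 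The adaptation to the negative one-sided case is the usual swap of accept and reject roles, with $\mathsf{path}_2$ accepting (rather than rejecting) with amplitude $\sqrt{\epsilon}$ on the right end-marker.

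\textbf{Main obstacle.} The only delicate point is verifying that replacing $\mathcal{M}$'s reject states by restart states inside $\mathsf{path}_1$ does not disturb the accepting amplitude—i.e., that the quantum evolution of the surviving (non-restarted, non-accepted) branch of $\mathsf{path}_1$ still produces acceptance amplitude-squared at least $g_{\mathcal{M}}(|w|)$ at the end. This is inherited from the analysis underlying Lemma~\ref{lemma:kwqfa-to-1qfarestart-exponential}, where the same substitution is made; the subtlety is purely that one must confirm the gap bound $g_{\mathcal{M}}$ transfers unchanged because restarting only removes probability mass that would have been rejections, never affecting the accepting component of the state vector just before the final measurement on $\dollar$. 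Once that is granted (it is the same argument, verbatim, as in the exponential case), the constant-gap bookkeeping above is entirely routine.
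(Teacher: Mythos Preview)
Your proof is correct and follows essentially the same route as the paper: both modify the construction of Lemma~\ref{lemma:kwqfa-to-1qfarestart-exponential} so that $\mathsf{path}_2$'s rejection probability is the constant $\frac{\epsilon}{2c}$ rather than $\frac{\epsilon}{2c^{|w|}}$. The only cosmetic difference is that the paper has $\mathsf{path}_2$ perform its reject-or-restart split already on the left end-marker, whereas you keep it on the right end-marker after a single $\frac{1}{\sqrt{c}}$ survival step on $\cent$; the resulting probabilities and bounds are identical.
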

\begin{proof}
	The construction is almost identical to that in Lemma \ref{lemma:kwqfa-to-1qfarestart-exponential}, 
	except that $ \mathsf{path_{2}} $ rejects with amplitude $ \sqrt{\epsilon} $, and
	restarts the computation with amplitude $ \sqrt{1-\epsilon} $ immediately on the left end-marker,
	 thereby causing every input to be rejected with the constant probability $ \frac{\epsilon}{2c} $.
	Hence, the expected runtime of 
	the RT-QFA$ ^{\circlearrowleft} $s turns out to be $ O(\frac{1}{\epsilon}c|w|) $.
\end{proof}
Lemma \ref{lemma:kwqfa-to-1qfarestart-exponential} is a useful step towards an eventual
characterization of the class of languages that are recognized with one-sided bounded error by 
RT-QFA$ ^{\circlearrowleft} $s, since full classical characterizations are known 
(see Section \ref{uerr:non-languages}) for the
classes of languages recognized by one-sided unbounded error by several RT-QFA models, including the RT-KWQFA.

\begin{theorem}
	\label{theorem:1qfa-restart-s=rat}
	For every language $L$ $ \in $ S$ ^{=}_{\mathbb{Q}} $, there exists a number $n$ such that for all error bounds 
	$ \epsilon > 0 $, there exist $n$-state RT-QFA$ ^{\circlearrowleft} $s that recognize  
	$L$ and $\overline{L} $ with one-sided error bounded by $ \epsilon $.
\end{theorem}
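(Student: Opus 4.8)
The plan is to reduce Theorem~\ref{theorem:1qfa-restart-s=rat} to the machinery already established in Section~\ref{berr:reset-quantum}, in particular Lemma~\ref{lemma:kwqfa-to-1qfarestart-exponential}. The key observation is that membership in $\mathrm{S}^{=}_{\mathbb{Q}}$ gives us, by definition, a rational-valued generalized finite automaton $\mathcal{G}$ and a cutpoint $\lambda$ with $L=\{w\mid f_{\mathcal{G}}(w)=\lambda\}$. Because all the matrix and vector entries of $\mathcal{G}$ are rational with a common denominator, say $d$, after reading a word $w$ the acceptance value $f_{\mathcal{G}}(w)$ is a rational whose denominator divides $d^{|w|+O(1)}$; hence $|f_{\mathcal{G}}(w)-\lambda|$ is either $0$ (when $w\in L$) or at least $c^{-|w|}$ for a suitable constant $c>1$ depending only on $\mathcal{G}$. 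This is exactly the ``exponentially small but not smaller'' separation that Lemma~\ref{lemma:kwqfa-to-1qfarestart-exponential} is built to exploit.

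First I would pass from $\mathcal{G}$ to a RT-PFA, using Turakainen's construction (Fact~\ref{fact:GPFA-PFA} together with Fact~\ref{fact:PFA-PFA}) to get a RT-PFA $\mathcal{P}$ and a cutpoint $\lambda'\in(0,1)$ with $(\mathcal{G},\lambda)\equiv(\mathcal{P},\lambda')$; the rationality of the entries is preserved by these constructions, so the gap $|f_{\mathcal{P}}(w)-\lambda'|$ is still $0$ for $w\in L$ and at least $c'^{-|w|}$ for $w\notin L$, for some $c'>1$. Next I would invoke Fact~\ref{fact:PFA-KWQFA} (equivalently Lemma~\ref{uerr:lem:rt-kwqfa}) to embed $\mathcal{P}$ into a RT-KWQFA $\mathcal{M}$ with $O(n)$ states so that $f_{\mathcal{M}}(w)>\tfrac12$ iff $f_{\mathcal{P}}(w)>\lambda'$ and $f_{\mathcal{M}}(w)=\tfrac12$ iff $f_{\mathcal{P}}(w)=\lambda'$; I need to check that the embedding scales the separation by at most a fixed exponential factor (the $(1/l)^{|\tilde w|}$ factor appearing in the proof of Lemma~\ref{uerr:lem:rt-kwqfa}), so that $\mathcal{M}$ recognizes $\overline{L}$ with \emph{negative} one-sided unbounded error and gap function $g_{\mathcal{M}}(n)\ge C^{-n}$ for some $C>1$. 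Then Lemma~\ref{lemma:kwqfa-to-1qfarestart-exponential} directly yields a RT-QFA$^{\circlearrowleft}$ with three more states than $\mathcal{M}$ that recognizes $\overline{L}$ with one-sided error $\epsilon$; symmetrically, running the same argument on $\overline{L}\in\mathrm{S}^{=}_{\mathbb{Q}}$ (closure of $\mathrm{S}^{=}$ under complementation is not needed here because $\overline{L}$ is defined by the \emph{same} equality condition with the complementary accepting set, hence also in $\mathrm{S}^{=}_{\mathbb{Q}}$) gives a RT-QFA$^{\circlearrowleft}$ for $L$. Since in each case the state count is $O(n)+3$, independent of $\epsilon$, there is a single number $n$ (depending only on $L$) that works for all error bounds $\epsilon>0$, as claimed.

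The main obstacle I expect is the bookkeeping on the gap: I must verify that \emph{every} step of the chain $\mathcal{G}\to\mathcal{P}\to\mathcal{M}$ degrades the separation $|f(w)-\lambda|$ by at most a multiplicative factor that is itself at most exponential in $|w|$, and never by something worse (e.g. a factor depending on $|w|$ super-exponentially, or a factor that could drive a nonzero gap to zero). The Turakainen step and the Kondacs--Watrous embedding both introduce normalizing constants of the form $(\text{const})^{|w|}$, which is fine, but one has to be careful that the cutpoint-translation of Fact~\ref{fact:PFA-PFA} does not collapse distinct values; since that translation is an affine, injective reparametrization of the acceptance probability, it preserves the ``zero versus at least $c^{-n}$'' dichotomy, and the constants remain rational, so this is a routine if slightly tedious verification. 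A secondary point to handle cleanly is confirming that the ``$\overline{L}\in\mathrm{S}^{=}_{\mathbb{Q}}$'' step is legitimate: a PFA recognizing $L$ with exact cutpoint $\lambda$ also recognizes $\overline{L}$ with exact cutpoint $\lambda$ by swapping the accepting and rejecting roles of states, and rationality is untouched, so both $L$ and $\overline{L}$ feed into the same pipeline with the same state bound.
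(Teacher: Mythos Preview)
Your overall strategy---establish an exponential lower bound on the gap from rationality, convert to a RT-KWQFA with one-sided unbounded error, then invoke Lemma~\ref{lemma:kwqfa-to-1qfarestart-exponential}---is exactly the paper's. However, the specific conversion you choose breaks the argument.

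Lemma~\ref{uerr:lem:rt-kwqfa} (Fact~\ref{fact:PFA-KWQFA}) produces a RT-KWQFA with $f_{\mathcal{M}}(w)=\tfrac12$ iff $f_{\mathcal{P}}(w)=\lambda'$; this is exclusive/inclusive cutpoint recognition, not one-sided error. Members of $L$ land at exactly $\tfrac12$ and nonmembers may land on \emph{either} side of $\tfrac12$, so $\mathcal{M}$ does not recognize $\overline{L}$ with negative one-sided unbounded error (which would require $f_{\mathcal{M}}(w)=1$ for every $w\in\overline{L}$), nor with positive one-sided error (which would require $f_{\mathcal{M}}(w)=0$ on $L$). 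Hence Lemma~\ref{lemma:kwqfa-to-1qfarestart-exponential} is not applicable to this $\mathcal{M}$. The paper instead uses the construction of Lemma~\ref{lemma:S-QL_0} (the $\mathrm{S}^{\neq}\subseteq\mathrm{NQAL}$ embedding from Section~\ref{uerr:non-languages}): there the single accept state receives amplitude proportional to $f_{\mathcal{P}}(w)-\tfrac12$, so acceptance probability is zero exactly on $L$ and at least $c^{-|w|}$ on $\overline{L}$. That is genuine positive one-sided recognition of $\overline{L}$ with the required exponential gap, and Lemma~\ref{lemma:kwqfa-to-1qfarestart-exponential} then applies directly.

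A second, smaller issue: your argument that $\overline{L}\in\mathrm{S}^{=}_{\mathbb{Q}}$ by swapping accepting and rejecting states is incorrect. Swapping sends $f_{\mathcal{P}}(w)$ to $1-f_{\mathcal{P}}(w)$, and $\{w:1-f_{\mathcal{P}}(w)=1-\lambda\}$ is again $L$, not $\overline{L}$; the complement lies in $\mathrm{S}^{\neq}_{\mathbb{Q}}$, not $\mathrm{S}^{=}_{\mathbb{Q}}$. You do not need this step anyway: once you have the RT-QFA$^{\circlearrowleft}$ for $\overline{L}$ with positive one-sided error, swapping its accept and reject states gives a RT-QFA$^{\circlearrowleft}$ for $L$ with negative one-sided error and the same state count.
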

\begin{proof}
	For a language  $ L $ in S$ ^{=}_{\mathbb{Q}} $, let $ \mathcal{P} $ 
	be the rational RT-PFA (i.e. each transition probability must be a rational number) associated by $ L $.
	Turakainen \cite{Tu69B} showed that there exists a constant $ b > 1 $ such that 
	for any string $ w \notin L $,
	the probability that $ \mathcal{P} $ accepts $w$
	cannot be in the interval $ (\frac{1}{2}-b^{-|w|},\frac{1}{2}+b^{-|w|}) $.
	By using the method described in Section \ref{uerr:non-languages}, 
	we can convert $ \mathcal{P} $ to a RT-KWQFA $ \mathcal{M} $
	recognizing $ \overline{L} $ with one-sided unbounded error, so that $ \mathcal{M} $
	accepts any $ w \in \overline{L} $ with probability greater than $ c^{-|w|} $,  for a constant $ c > b $.
	We can conclude with Lemma \ref{lemma:kwqfa-to-1qfarestart-exponential}.
\end{proof}
S$ ^{=}_{\mathbb{Q}} $ contains many well-known languages, such as 
$ L_{eq} $, 
$ L_{pal} $, 
$ L_{twin} =\{ wcw \mid w \in \{a,b\}^{*} \} $,
$ L_{mult}=\{x \# y \# z \mid x,y,z \mbox{ are natural numbers in binary notation and } x \times y = z \} $, 
$ L_{square}= \{a^{n}b^{n^{2}} \mid n > 0 \} $,
$ L_{power} = \{ a^{n}b^{2^{n}} \mid n > 0 \} $, 
the word problem for finitely generated free groups, and
all \textit{polynomial languages},  \cite{Tu82} defined as
\begin{equation} \{a_{1}^{n_{1}} \cdots a_{k}^{n_{k}} b_{1}^{p_{1}(n_{1},\ldots,n_{k})} \cdots 
	b_{r}^{p_{r}(n_{1},\ldots,n_{k})} \mid p_{i}(n_{1},\ldots,n_{k}) \ge 0 \}, \end{equation}
where $ a_{1}, \ldots, a_{k},b_{1}, \ldots, b_{r} $ are distinct symbols, and 
each $ p_{i} $ is a polynomial with integer coefficients.
Note that Theorem \ref{theorem:1qfa-restart-s=rat} and 
Lemma \ref{lemma:1qfareset-simulatedby-2qcfa} answer a question posed by Ambainis and
Watrous \cite{AW02} about whether $ L_{square} $ and $ L_{power} $ can be recognized
with bounded error by 2CQFAs affirmatively.
\begin{corollary}
      The class of languages recognized by RT-QFA$  ^{\circlearrowleft} $s with bounded error
      properly contains the class of languages recognized by RT-PFA$ ^{\circlearrowleft} $s.
\end{corollary}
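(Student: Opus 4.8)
The plan is to prove the two halves of the statement --- the inclusion and its strictness --- separately. For the inclusion nothing new is needed: Theorem \ref{berr:thm:pfa-restart-simulated-by-qfa-restart} already shows that any $n$-state RT-PFA$^{\circlearrowleft}$ recognizing a language $L$ with error bound $\epsilon$ is simulated by a $(2n+4)$-state RT-QFA$^{\circlearrowleft}$ recognizing $L$ with the same error bound, so every language recognized by an RT-PFA$^{\circlearrowleft}$ with bounded error is recognized by an RT-QFA$^{\circlearrowleft}$ with bounded error.

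For strictness I would exhibit a concrete witness language. First I would note that in the classical setting the restart capability adds nothing beyond two-wayness: by Lemma \ref{lemma:restart-time} every RT-PFA$^{\circlearrowleft}$ is simulated by a 2PFA that reproduces the same acceptance probability on every input (the only cost being a constant-factor slowdown of the expected runtime, which is irrelevant for bounded-error recognition). Hence the class of languages recognized with bounded error by RT-PFA$^{\circlearrowleft}$s is contained in the class recognized with bounded error by 2PFAs. I would then take $L_{pal}=\{w\in\{a,b\}^{*}\mid w=w^{r}\}$. Since $L_{pal}\in$ S$^{=}_{\mathbb{Q}}$ (as recorded in the list following Theorem \ref{theorem:1qfa-restart-s=rat}), Theorem \ref{theorem:1qfa-restart-s=rat} provides a fixed-size RT-QFA$^{\circlearrowleft}$ recognizing $L_{pal}$ with one-sided error bounded by any desired $\epsilon>0$; in particular $L_{pal}$ lies in the RT-QFA$^{\circlearrowleft}$ bounded-error class. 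On the other hand, it is known that $L_{pal}$ cannot be recognized with bounded error by any 2PFA, regardless of running time. Combining this with the preceding observation, $L_{pal}$ is not recognized with bounded error by any RT-PFA$^{\circlearrowleft}$, so the inclusion is proper.

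The main obstacle is not the automata-theoretic bookkeeping --- which is immediate from the earlier results --- but supplying the separating fact that some member of S$^{=}_{\mathbb{Q}}$ escapes bounded-error 2PFA recognition. One must invoke (as an external result) the non-recognizability of $L_{pal}$ by bounded-error 2PFAs; alternatively one could use $L_{square}=\{a^{n}b^{n^{2}}\mid n>0\}$ or $L_{power}=\{a^{n}b^{2^{n}}\mid n>0\}$, which are also in S$^{=}_{\mathbb{Q}}$, should a non-recognizability result for one of those be more convenient to cite. A secondary point to check carefully is that the chosen witness genuinely lies in the rational subclass S$^{=}_{\mathbb{Q}}$ and not merely in S$^{=}$, since Theorem \ref{theorem:1qfa-restart-s=rat} requires the rational hypothesis.
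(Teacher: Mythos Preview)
Your proposal is correct and matches the paper's proof essentially line for line: the paper cites Theorems \ref{berr:thm:pfa-restart-simulated-by-qfa-restart} and \ref{theorem:1qfa-restart-s=rat}, Lemma \ref{lemma:restart-time}, and the external fact \cite{DS92,FK94} that $L_{pal}$ cannot be recognized with bounded error by 2PFAs. Your caution about verifying $L_{pal}\in\mathrm{S}^{=}_{\mathbb{Q}}$ is well placed, and the paper indeed lists $L_{pal}$ among the examples immediately after Theorem \ref{theorem:1qfa-restart-s=rat}.
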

\begin{proof}
	This follows from Theorems \ref{berr:thm:pfa-restart-simulated-by-qfa-restart} and 
	\ref{theorem:1qfa-restart-s=rat}, Lemma
	\ref{lemma:restart-time}, and the fact \cite{DS92,FK94} that $ L_{pal} $ cannot be recognized with
	bounded error by 2PFAs.
\end{proof}
Since RT-QFAs \cite{Pa00,Hi08,YS10C} are known to be equivalent in language
recognition power to RT-PFAs, one has to consider a two-way model to
demonstrate the superiority of quantum computers over classical ones.
The 2CQFA is known \cite{AW02} to be superior to its
classical counterpart, the 2PFA, also by virtue of $ L_{pal} $.
Recall that, by Lemma \ref{lemma:1qfareset-simulatedby-2qcfa}, 2CQFAs can simulate
RT-QFA$  ^{\circlearrowleft} $s easily, and we do not know of a simulation in the other direction.

\section{Succinctness of Two-Way Models} \label{berr:succinctness}

In this section, we demonstrate several infinite families of regular
languages which can be recognized with some fixed probability greater
than $ \frac{1}{2} $ by
just tuning the transition amplitudes of a RT-QFA$ ^{\circlearrowleft} $
with a constant number of states, whereas the sizes of the
corresponding RT-QFAs, RT-PFAs, and 2NFAs grow without bound. One of
our constructions can be
adapted easily to
show that RT-PFA$ ^{\circlearrowleft} $s, (and, equivalently, 2PFAs),
also possess the same advantage over those machines.

\begin{definition} For an alphabet $ \Sigma $ containing symbols $ a $ and $ b $,
   and $ m \in \mathbb{Z}^{+} $, the family of languages $ A_{m} $
is defined as
   \begin{equation} A_{m}=\{ ua \mid u \in \Sigma^{*}, |u| \le m \}. \end{equation}
\end{definition}
Note that Ambainis et al. \cite{ANTV02} report that any Nayak
realtime quantum finite automaton\footnote{This is a realtime QFA model of
intermediate power, subsuming the RT-KWQFA, but strictly weaker
than RT-QFA in bounded error.}
that recognizes $ A_m $ with some fixed probability greater than $
\frac{1}{2} $ has $ 2^{\Omega(m)} $ states.

\begin{theorem}
	\label{theorem:A_m}
	$ A_{m} $  is recognized by a 6-state RT-QFA$ ^{\circlearrowleft} $ $ \mathcal{M}_{m,\epsilon} $
	for any error bound $ \epsilon >0  $. Moreover, the expected runtime of $ \mathcal{M}_{m,\epsilon} $ 
	on input $ w $ is $ O( \left( \frac{1}{\epsilon} \right)^{2m}|w|) $.	
\end{theorem}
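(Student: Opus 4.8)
The plan is to construct a constant-size RT-QFA$^{\circlearrowleft}$ that separates the ``correct'' computations (those on strings $ua$ with $|u| \le m$) from the ``incorrect'' ones by making the accept-to-reject ratio in a single round lopsided in the right direction, and then appeal to Lemma~\ref{lemma:prej-over-pacc} to convert this ratio into bounded-error recognition. First I would set up two computation paths branching with equal amplitude on $\cent$: a ``counting/position'' path that, by staying put or moving at carefully chosen speeds, produces a small surviving amplitude exactly when the head position relative to the end of the input encodes that the prefix length is at most $m$; and a ``reference'' path that rejects (or restarts) with a fixed small amplitude regardless of the input, serving as the denominator $p_{\mathcal{M}}^{r}(w)$ against which the acceptance amplitude is compared. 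The key trick, exactly as in the $L_{NH}$ and $L_{div}$ constructions earlier in the chapter and in Theorem~\ref{theorem:L_m}, is to use a 2-way QFT between two subpaths so that a reject-with-positive-amplitude and a reject-with-negative-amplitude cancel precisely when the syntactic condition ``last symbol is $a$ and $|u|\le m$'' holds, leaving surplus acceptance probability; otherwise the accept and reject contributions balance.

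Concretely, I would have the machine split on $\cent$ into $\mathsf{path}_1$ and $\mathsf{path}_2$, where $\mathsf{path}_1$ walks right at unit speed while $\mathsf{path}_2$ pauses --- but since RT models cannot pause, the realtime-with-restart analogue used throughout Section~\ref{berr:reset-quantum} is to have one path ``discard'' amplitude (restart) at each step with a fixed rate, so that after $k$ steps its surviving amplitude is $c^{-k}$ for a constant $c>1$ chosen in terms of $\epsilon$. Reading the input in realtime, the machine must detect that position $|w|$ is at most $m+1$: this can be done with $O(m)$ states by an internal counter, but the theorem demands only $6$ states, so instead the bound $|u|\le m$ must be extracted from the \emph{amplitude} rather than a counter. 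I would use the exponential-decay idea of Lemma~\ref{lemma:kwqfa-to-1qfarestart-exponential}: set the decay constant so that a string of length $\le m+1$ has a surviving acceptance amplitude bounded below by $c^{-(m+1)}$, while the reference reject amplitude is a fixed $\sqrt{\epsilon}\, c^{-(m+1)}$ (applied on $\dollar$), giving ratio $\le \epsilon$ for accepted strings. For rejected strings --- those ending in $b$, or too long --- an extra ``syntax check'' path flips the balance: if the last symbol read before $\dollar$ is $b$, the machine routes all amplitude to reject; if $|w| > m+1$, the surviving amplitude $c^{-|w|}$ has shrunk below the reference amplitude $\sqrt{\epsilon}\,c^{-(m+1)}$ for $|w|$ large, forcing the ratio $p^{a}/p^{r}$ below $\epsilon$. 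Tuning $c$ (depending on $m$ and $\epsilon$) makes the threshold fall exactly between length $m+1$ and $m+2$.

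Then I would verify the two inequalities of Lemma~\ref{lemma:prej-over-pacc}: for $w = ua \in A_m$, $p_{\mathcal{M}}^{r}(w)/p_{\mathcal{M}}^{a}(w) \le \epsilon$, hence $f_{\mathcal{M}}^{a}(w) \ge 1-\epsilon$; for $w \notin A_m$ (last symbol $\ne a$, or $|u| > m$), $p_{\mathcal{M}}^{a}(w)/p_{\mathcal{M}}^{r}(w) \le \epsilon$, hence $f_{\mathcal{M}}^{r}(w) \ge 1-\epsilon$. The expected-runtime claim follows from Lemma~\ref{lemma:expected-runtime}: the halting probability per round is at least $\Omega(c^{-|w|})$ with $c = \Theta(1/\epsilon^{2})$ chosen so that the exponential thresholding works, giving expected time $O((1/\epsilon)^{2m}|w|)$ after absorbing the $c^{-(m+1)}$ normalization (which, for a \emph{fixed} $m$, is a constant factor but written in the $(1/\epsilon)^{2m}$ form to track the $\epsilon$-dependence). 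The state count must be pinned to $6$: a start-branching state, one state each for $\mathsf{path}_1$ (the decaying main path), its accept target, the reference reject path, and the two halting states $q_a, q_r$ --- the syntactic check of ``ends in $a$'' is folded into the transitions on $a$ versus $b$ rather than costing separate states.

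The hard part will be simultaneously satisfying three constraints: (i) keeping the state count at exactly $6$, which forbids any explicit $O(m)$-state counter and forces the entire dependence on $m$ into the decay rate $c$; (ii) ensuring well-formedness of the RT-QFA$^{\circlearrowleft}$ --- the transition operators must still be completable to admissible operators $\mathcal{E}_\sigma$ with $\sum_i E_{\sigma,i}^\dagger E_{\sigma,i} = I$, which means the ``discarded'' amplitudes must be routed to legitimate restart states, and I would invoke the general template of Figure~\ref{uerr:fig:general-template-1} (or Figure~\ref{berr:fig:general-template-2}) to pad the matrices into admissible form while paying only a constant additive state cost, checking that the total stays at $6$; and (iii) getting the threshold between $|w| = m+1$ and $|w| = m+2$ exactly right, which amounts to choosing $c$ so that $c^{-(m+2)} < \sqrt{\epsilon}\, c^{-(m+1)} \le c^{-(m+1)}$, i.e. $c^{-1} < \sqrt{\epsilon}$, a condition that must be compatible with $p_{\mathcal{M}}^{h}(w) = \Omega(c^{-|w|})$ driving the runtime bound. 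I expect the well-formedness bookkeeping and the precise constant-chasing to be routine once the path structure is fixed; the genuine design insight is realizing that $A_m$'s ``length $\le m$'' predicate can be enforced by amplitude decay rather than by finite-state counting, which is precisely what lets a $6$-state machine beat the $2^{\Omega(m)}$ lower bound for weaker models.
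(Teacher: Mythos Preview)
Your core insight---encode the length bound $|u|\le m$ by amplitude decay rather than by an $O(m)$-state counter, and compare the surviving acceptance probability against a fixed reference rejection probability---is exactly what the paper does. But you have wrapped it in machinery that is not present and not needed: there is no two-path equal-amplitude branching, no QFT, and no interference cancellation in the paper's construction. The first paragraph's plan, imported from the $L_{NH}$/$L_{div}$ proofs, is a detour; the second paragraph is much closer to the mark.

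The paper's machine is strikingly direct. On $\cent$ it sends amplitude $\epsilon^{(2m+5)/2}$ straight to the reject state $R$ (so $p_{\mathcal{M}}^{r}(w)=\epsilon^{2m+5}$ is fixed at the outset), amplitude $\epsilon$ to the nonhalting state $q_{1}$, and the remainder to a restart state. On each input symbol the surviving nonhalting amplitude is multiplied by $\epsilon$, with the rest restarting; the two nonhalting states $q_{0},q_{1}$ are simply a DFA for $\Sigma^{*}a$ (``last symbol was $a$'' / ``was not $a$''). On $\dollar$, $q_{0}\to A$ and $q_{1}\to R$. Hence if $w$ does not end in $a$ then $p_{\mathcal{M}}^{a}(w)=0$; if $w=ua$ then $p_{\mathcal{M}}^{a}(w)=\epsilon^{2|w|+2}$, and since $p_{\mathcal{M}}^{r}(w)=\epsilon^{2m+5}$ lies between $\epsilon^{2m+4}$ and $\epsilon^{2m+6}$, Lemma~\ref{lemma:prej-over-pacc} gives error $\le\epsilon$ on both sides of the threshold $|w|=m+1$ versus $m+2$. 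The six states are $q_{0},q_{1}$ (nonhalting), $A$, $R$, and two restart states $I_{1},I_{2}$---two are required so that the columns for $q_{0}$ and $q_{1}$ under $U_{a}$ (both mapping to $\epsilon\ket{q_{0}}$) can be made orthonormal.

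Two smaller points: your threshold inequality $c^{-1}<\sqrt{\epsilon}$ is off by a square (with reference reject probability $\epsilon\,c^{-2(m+1)}$ you need $c^{-2}\le\epsilon^{2}$, i.e.\ $c\ge 1/\epsilon$, which is exactly the paper's choice of amplitude decay $\epsilon$ per symbol); and your plan to invoke the padding templates of Figures~\ref{uerr:fig:general-template-1} or~\ref{berr:fig:general-template-2} is unnecessary, since the explicit $6$-state transitions are already well-formed.
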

\begin{proof}
	Let $ \mathcal{M}_{m,\epsilon} ~ (\mbox{shortly }\mathcal{M}) =\{Q,\Sigma,\delta,q_{0},Q_{a},Q_{r},Q^{\circlearrowleft}\} $ 
	be a RT-QFA$ ^{\circlearrowleft} $ with 
	$ Q_{n}=\{ q_{0}, q_{1} \} $,
	$ Q_{a}=\{A\} $, $ Q_{r}=\{R\} $, $ Q^{\circlearrowleft}=\{I_{1},I_{2}\} $.
	$ \mathcal{M}$ contains the transitions
	\begin{eqnarray*}
		U_{\cent} \ket{q_{0}} & = & \epsilon \ket{q_{1}} +  \epsilon^{\frac{2m+5}{2}} \ket{R} +
			\sqrt{1-\epsilon^{2}-\epsilon^{2m+5}} \ket{I_{1}} \\
		U_{a} \ket{q_{0}} & = & \epsilon \ket{q_{0}} + \sqrt{\frac{1}{2}-\epsilon^{2}} \ket{I_{1}} 
			+ \frac{1}{\sqrt{2}} \ket{I_{2}} \\
		U_{a} \ket{q_{1}} & = & \epsilon \ket{q_{0}} + \sqrt{\frac{1}{2}-\epsilon^{2}} \ket{I_{1}} 
			- \frac{1}{\sqrt{2}} \ket{I_{2}} \\
		U_{\Sigma \setminus \{a\}} \ket{q_{0}} & = & \epsilon \ket{q_{1}} 
			+ \sqrt{\frac{1}{2}-\epsilon^{2}} \ket{I_{1}} + \frac{1}{\sqrt{2}} \ket{I_{2}} \\
		U_{\Sigma \setminus \{a\}} \ket{q_{1}} & = & \epsilon \ket{q_{1}}
			+ \sqrt{\frac{1}{2}-\epsilon^{2}} \ket{I_{1}}- \frac{1}{\sqrt{2}} \ket{I_{2}} \\
        U_{\dollar} \ket{q_{0}} & = & \ket{A} \\
		U_{\dollar} \ket{q_{1}} & = & \ket{R} \\
	\end{eqnarray*}
	and the transitions not mentioned above can be completed easily, by
	extending each $ U_{\sigma}$ to be unitary.

	On the left end-marker, $\mathcal{M}$ rejects with probability $ \epsilon^{2m+5} $, goes on to 
	scan the input string with amplitude $ \epsilon $, and restarts immediately with the remaining probability.
	States $q_{0}$  and $q_{1}$ implement the check for the regular
	expression $ \Sigma^{*}a $, but the machine restarts with probability $ 1 - \epsilon^{2} $
	on all input symbols during this check.

	If $ w=u \sigma' $ for $ u \in \Sigma^{*} $, and $ \sigma' \neq a $, the input is rejected with 
	probability $1$, since $ p_{\mathcal{M}'}^{a}(w)=0 $.

	If $ w=ua $ for $ u \in \Sigma^{*} $,
	\begin{equation}
		p_{\mathcal{M}}^{a}(w)= \epsilon^{2|w|+2}, ~~
		p_{\mathcal{M}}^{r}(w)=\epsilon^{2m+5}.
	\end{equation}
	Hence, if $ w \in A_{m} $,
	\begin{equation} p_{\mathcal{M}}^{a}(w) \ge \epsilon^{2m+4}, \end{equation}
	and if $ w \notin A_{m} $,
	\begin{equation} p_{\mathcal{M}}^{a}(w) \le \epsilon^{2m+6}. \end{equation}
	In both cases, the corresponding ratio 
	$ \frac{p_{\mathcal{M}}^{r}(w)}{p_{\mathcal{M}}^{a}(w)} $ or 
	$ \frac{p_{\mathcal{M}}^{a}(w)}{p_{\mathcal{M}}^{r}(w)}$ is not greater than
	$ \epsilon $. 
	Thus, by Lemma \ref{lemma:prej-over-pacc}, 
	we conclude that $ \mathcal{M} $ recognizes $A_{m} $ with error bounded by $ \epsilon $.
	Since $ p_{\mathcal{M}}^{h}(w) $ is always greater than $ \epsilon^{2m+5} $,
	the expected runtime of $ \mathcal{M} $ is $ O( \left( \frac{1}{\epsilon} \right)^{2m}|w|) $.
\end{proof}

By a theorem of Rabin \cite{Ra63}, for any fixed error bound, if a
language $L$ is recognized with bounded error by
a RT-PFA with $n$ states, then there exists a RT-DFA that recognizes $L$ with
$ 2^{O(n)} $ states.
Parallelly, Freivalds et al. \cite{FOM09} note that one-way quantum
finite automata with mixed states are
no more than superexponentially more concise than RT-DFAs.
These facts can be used to conclude that a collection of RT-PFAs (or
RT-QFAs) with a fixed common number of
states that recognize an infinite family of languages with a fixed common
error bound less than $ \frac{1}{2} $, \textit{à la} the two-way
quantum automata of Theorem
\ref{theorem:A_m}, cannot exist, since that would imply the existence
of a similar family of RT-DFAs of fixed size.
By the same reasoning, the existence of such families of 2NFAs can
also be overruled.

The reader should note that there exists a bounded-error RT-PFA$ ^{\circlearrowleft} $
(and therefore, a 2PFA\footnote{See Section \ref{berr:reset-probabilistic}
for an examination of the relationship between the
computational powers of the RT-PFA$ ^{\circlearrowleft} $
and the 2PFA.},) for $ A_{m} $, which one can obtain simply by replacing
each transition amplitude of 1QFA$
^{\circlearrowleft} $ $ \mathcal{M}_{m,\epsilon} $
defined in Theorem \ref{theorem:A_m} by the square of its modulus.
This establishes the fact that 2PFAs also possess the succinctness
advantage discussed above over RT-PFAs, RT-QFAs and RT-NFAs.

We proceed to present two more examples.
\begin{definition}
	\label{definition:B_m}
	For $ m \in \mathbb{Z}^{+} $, the language family $ B_{m} \subseteq \{a\}^{*} $ is defined as
	\begin{equation} B_{m}=\{a^{i} \mid i \mod(m) \equiv 0 \}. \end{equation}
\end{definition}
\begin{theorem}
	\label{theorem:B_m}
	For any error bound $ \epsilon > 0 $, there exists a
	7-state RT-QFA$  ^{\circlearrowleft} $ $ \mathcal{M}_{m,\epsilon} $ which
	accepts any $ w \in B_{m} $ with certainty,
	and rejects any $ w \notin B_{m} $ with probability at least $ 1-\epsilon $.
	Moreover, the expected runtime of $ \mathcal{M}_{m,\epsilon} $ on $ w $ is
	$ O \left(\frac{1}{\epsilon}\sin^{-2}(\frac{\pi}{m})|w| \right) $.
\end{theorem}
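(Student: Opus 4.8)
The plan is to adapt the Moore--Crutchfield rotation technique used for $L_{m}$ in Theorem~\ref{theorem:L_m} to the one-sided bounded-error setting, and then use the restart move to amplify the gap to an arbitrary $\epsilon$. Among the seven states of $\mathcal{M}_{m,\epsilon}$ I would single out two non-halting states $q_{0},q_{1}$, on which reading a single $a$ acts as a rotation by angle $\tfrac{\pi}{m}$:
\begin{equation}
	U_{a}\ket{q_{0}} = \cos\tfrac{\pi}{m}\,\ket{q_{0}} + \sin\tfrac{\pi}{m}\,\ket{q_{1}},
	\qquad
	U_{a}\ket{q_{1}} = -\sin\tfrac{\pi}{m}\,\ket{q_{0}} + \cos\tfrac{\pi}{m}\,\ket{q_{1}},
\end{equation}
with the computation entering $\ket{q_{0}}$ after $\cent$. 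After scanning $a^{i}$ the non-halting part of the superposition is $\pm\bigl(\cos\tfrac{i\pi}{m}\,\ket{q_{0}}+\sin\tfrac{i\pi}{m}\,\ket{q_{1}}\bigr)$, and the arithmetic observation driving the whole construction is that $\sin\tfrac{i\pi}{m}=0$ precisely when $m\mid i$, while $\bigl|\sin\tfrac{i\pi}{m}\bigr|\ge\sin\tfrac{\pi}{m}$ whenever $m\nmid i$ (write $i\equiv r\pmod{m}$ with $1\le r\le m-1$; then $\bigl|\sin\tfrac{i\pi}{m}\bigr|=\sin\tfrac{r\pi}{m}\ge\sin\tfrac{\pi}{m}$ on that range). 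The remaining few states serve as an accepting state $A$, a rejecting state $R$, and restart/padding states needed to bring the count to seven and to complete each $U_{\sigma}$ to a unitary matrix.

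The right end-marker transition is the place where acceptance and rejection are separated: I would set, roughly, $U_{\dollar}\ket{q_{1}}=\ket{R}$ and $U_{\dollar}\ket{q_{0}}=\sqrt{\delta}\,\ket{A}+\sqrt{1-\delta}\,\ket{I}$, where $I$ is a restart state and $\delta$ is a constant of order $\epsilon\sin^{2}\tfrac{\pi}{m}$. For an input $w=a^{i}$ with $m\mid i$ the state at $\dollar$ is $\pm\ket{q_{0}}$, so $p_{\mathcal{M}}^{r}(w)=0$ and $p_{\mathcal{M}}^{a}(w)=\delta>0$; by Lemma~\ref{lemma:acc-rej} we get $f_{\mathcal{M}}^{a}(w)=1$, i.e.\ acceptance with certainty. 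For $m\nmid i$ we get $p_{\mathcal{M}}^{a}(w)=\delta\cos^{2}\tfrac{i\pi}{m}\le\delta$ and $p_{\mathcal{M}}^{r}(w)=\sin^{2}\tfrac{i\pi}{m}\ge\sin^{2}\tfrac{\pi}{m}$, so $p_{\mathcal{M}}^{a}(w)/p_{\mathcal{M}}^{r}(w)\le\delta/\sin^{2}\tfrac{\pi}{m}\le\epsilon$ for a suitable choice of $\delta$, and Lemma~\ref{lemma:prej-over-pacc} then yields rejection probability at least $1-\epsilon$.

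For the running time, I would note that in both cases the single-round halting probability satisfies $p_{\mathcal{M}}^{h}(w)=p_{\mathcal{M}}^{a}(w)+p_{\mathcal{M}}^{r}(w)\ge\min\{\delta,\sin^{2}\tfrac{\pi}{m}\}=\Omega\bigl(\epsilon\sin^{2}\tfrac{\pi}{m}\bigr)$, while each round consists of $|\tilde{w}|=O(|w|)$ steps; Lemma~\ref{lemma:expected-runtime} then gives the claimed expected runtime $O\bigl(\tfrac{1}{\epsilon}\sin^{-2}\tfrac{\pi}{m}\,|w|\bigr)$. The only genuinely delicate points are the trigonometric lower bound $\bigl|\sin\tfrac{i\pi}{m}\bigr|\ge\sin\tfrac{\pi}{m}$ for $m\nmid i$ (which forces the rotation angle to be $\tfrac{\pi}{m}$ rather than $\tfrac{2\pi}{m}$, the latter failing for even $m$ at $i=m/2$) and the bookkeeping check that the chosen $\dollar$-transition extends to a unitary within the seven-state budget; both are routine but are where a careless choice of angle or state count would break the argument.
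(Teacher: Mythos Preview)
Your proof is correct and rests on the same core idea as the paper: the rotation by $\tfrac{\pi}{m}$ together with the bound $\bigl|\sin\tfrac{i\pi}{m}\bigr|\ge\sin\tfrac{\pi}{m}$ for $m\nmid i$. The packaging differs. The paper first builds a $4$-state RT-KWQFA recognizing $\overline{B_m}$ with positive one-sided bounded error (gap function at least $\sin^{2}\tfrac{\pi}{m}$), then invokes the general Lemma~\ref{lemma:kwqfa-to-1qfarestart-constant} to add three states and obtain a $7$-state RT-QFA$^{\circlearrowleft}$ for $\overline{B_m}$, and finally swaps accepting and rejecting states. You instead aim at $B_m$ directly and fold the restart mechanism into the $\dollar$-transition, which is a bit more economical (your named states $q_{0},q_{1},A,R,I$ already suffice, with the rest as padding) but gives up the modularity the paper exploits when it reuses the same lemma for Theorem~\ref{theorem:C_m}.
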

\begin{proof}
	We construct a $ 4 $-state RT-KWQFA recognizing $ \overline{B_{m}} $ 
	with positive one-sided bounded error, as described in \cite{AF98}.
	Let $ \mathcal{M}_{m} = (Q,\Sigma,\delta,q_{0},Q_{a},Q_{r}) $ be RT-KWQFA with
	$ Q_{n}=\{q_{0},q_{1}\} $, $ Q_{a}=\{A\} $, and $ Q_{r}=\{R\} $.
	$ \mathcal{M}_{m} $ contains the transitions
	\begin{eqnarray*}
		U_{\cent} \ket{q_{0}} & = & \ket{q_{0}} \\
		U_{a} \ket{q_{0}} & = & \cos(\frac{\pi}{m})\ket{q_{0}}+\sin(\frac{\pi}{m})\ket{q_{1}} \\
		U_{a} \ket{q_{1}} & = & -\sin(\frac{\pi}{m})\ket{q_{0}}+\cos(\frac{\pi}{m})\ket{q_{1}} \\
		U_{\dollar}\ket{q_{0}} & = & \ket{R} \\
		U_{\dollar}\ket{q_{1}} & = & \ket{A}
	\end{eqnarray*}
	and the transition amplitudes not listed above are filled in to satisfy unitarity.
	$ \mathcal{M}_{m} $ begins computation at the
	$ \ket{q_{0}} $-axis, and performs a rotation by angle $ \frac{\pi}{m} $ 
	in the $ \ket{q_{0}} $-$ \ket{q_{1}} $ plane for each $ a $ it reads.
	Therefore, the value of the gap function, $ g_{\mathcal{M}_{m}} $, is not less than $ \sin^{2}(\frac{\pi}{m}) $
	for $ |w|>0 $.
	By Lemma \ref{lemma:kwqfa-to-1qfarestart-constant}, there exists a $ 7 $-state 
	RT-QFA$  ^{\circlearrowleft} $ $ \mathcal{M}_{m,\epsilon} $ recognizing $ \overline{B_{m}} $
	with positive one-sided bounded error and whose expected runtime is 
	$ O \left(\frac{1}{\epsilon}\sin^{-2}(\frac{\pi}{m})|w| \right) $.
	By swapping the accepting and rejecting states of $ \mathcal{M}_{m,\epsilon} $,
	we can get the desired machine.
\end{proof}

\begin{definition} 
	For an alphabet $ \Sigma $, and $ m \in \mathbb{Z}^{+} $, 
	the language family $ C_{m} $ is defined as
	 \begin{equation} C_{m}=\{ w \in \Sigma^{*} \mid |w| = m \}. \end{equation}
\end{definition}
\begin{theorem}
	\label{theorem:C_m}
	For any error bound $ \epsilon > 0 $, there exists a 7-state
	RT-QFA$ ^{\circlearrowleft} $ $ \mathcal{M}_{m,\epsilon} $ which accepts any $ w \in C_{m} $ with certainty, and
	rejects any $ w \notin C_{m} $ with probability at least $ 1-\epsilon $.
	Moreover, the expected runtime of $ \mathcal{M}_{m,\epsilon} $ on $ w $ is
	$ O(\frac{1}{\epsilon}2^{m}|w|) $.
\end{theorem}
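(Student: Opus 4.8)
The plan is to follow the template of Theorems~\ref{theorem:A_m} and~\ref{theorem:B_m}: reduce the statement to the construction of a small realtime Kondacs--Watrous automaton together with one application of Lemma~\ref{lemma:kwqfa-to-1qfarestart-constant}. Concretely, I would first build a constant-size RT-KWQFA $\mathcal{M}_{m}$ that recognizes $\overline{C_{m}}$ with positive one-sided bounded error and whose gap function satisfies $g_{\mathcal{M}_{m}}(n)\geq 2^{-m}$ for every $n>0$ (equivalently, a RT-KWQFA recognizing $C_{m}$ with negative one-sided bounded error and the same gap). Feeding $\mathcal{M}_{m}$ into Lemma~\ref{lemma:kwqfa-to-1qfarestart-constant} with the constant $c=2^{m}$ yields, for each $\epsilon\in(0,\tfrac12)$, a RT-QFA$^{\circlearrowleft}$ having three more states than $\mathcal{M}_{m}$ that recognizes $\overline{C_{m}}$ with positive one-sided error $\epsilon$ in expected time $O(\tfrac{1}{\epsilon}2^{m}|w|)$; swapping its accepting and rejecting states (exactly as at the end of the proof of Theorem~\ref{theorem:B_m}) gives a RT-QFA$^{\circlearrowleft}$ $\mathcal{M}_{m,\epsilon}$ that accepts every $w\in C_{m}$ with certainty and rejects every $w\notin C_{m}$ with probability at least $1-\epsilon$, with the same state count and runtime. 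Designing $\mathcal{M}_{m}$ with the same $4$ states used in the $B_{m}$ construction (two non-halting states, one accepting, one rejecting) produces the claimed total of $7$ states.

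The content of the argument is thus the design of $\mathcal{M}_{m}$ (or, alternatively, a direct RT-QFA$^{\circlearrowleft}$ construction in the style of Theorem~\ref{theorem:A_m}). The machine must act as a ``quantum chronometer'': after reading $k$ input symbols its surviving amplitude should be steered so that the amplitude routed to the accepting state on $\dollar$ is exactly $0$ precisely when $k=m$, and has squared modulus at least $2^{-m}$ for every $k\neq m$. A naive rotation by $\tfrac{\pi}{2m}$ per symbol fails for two reasons: its acceptance probability $\sin^{2}(k\tfrac{\pi}{2m})$ (or $\cos^{2}(k\tfrac{\pi}{2m})$) is periodic, so the machine would behave identically on the ``revival'' lengths $m,3m,5m,\dots$, and near $k=m$ the rotation is flat, so lengths $m-1,m,m+1$ give nearly equal rotational components. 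The remedy I would pursue is to combine the rotation with the restart-based slow-down already exploited in Theorems~\ref{theorem:A_m} and~\ref{theorem:B_m} --- weighting the non-restarting amplitude by a factor $\epsilon$ per symbol, so the contribution of a length $k$ to the acceptance probability scales as $\epsilon^{2k}$ --- together with a fixed rejecting term of order $\epsilon^{2m+O(1)}$ emitted on $\cent$. Long strings are then killed by the $\epsilon^{2k}$ factor, the ``$|w|=m$'' test becomes a comparison between $p^{a}_{\mathcal{M}}(w)$ and $p^{r}_{\mathcal{M}}(w)$ of exactly the shape handled by Lemma~\ref{lemma:prej-over-pacc}, and the unitarity of each $U_{\sigma}$ is restored by the padding templates of Figures~\ref{uerr:fig:general-template-1} and~\ref{berr:fig:general-template-2}. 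One then checks, in the three separate regimes $|w|<m$, $|w|=m$, $|w|>m$, that the relevant ratio $p^{r}_{\mathcal{M}}(w)/p^{a}_{\mathcal{M}}(w)$ or $p^{a}_{\mathcal{M}}(w)/p^{r}_{\mathcal{M}}(w)$ is at most $\epsilon$, tuning the slow-down exponent and the constant in the $\cent$-term so that all three regimes close simultaneously.

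The hard part will be the case analysis near $|w|=m$. Because the rotation is flat there, the lengths $m-1$ and $m+1$ are the worst cases, and one must verify that the multiplicative $\epsilon^{\pm 2}$ separation supplied by the slow-down, together with the fixed $\cent$-term, still places $w\in C_{m}$ above and $w\notin C_{m}$ below the cutpoint with margin $\epsilon$, while no other length (in particular none of the revival lengths $3m,5m,\dots$ of the underlying rotation, now suppressed by $\epsilon^{2k}$) slips across. The $2^{-m}$ bound on the gap --- equivalently, the $2^{m}$ factor in the runtime --- is precisely what is needed to absorb this worst case. Producing one parametrised family of $7$-state machines that works uniformly for all $m$ and all $\epsilon$, with the stated expected runtime, is where the bookkeeping must be done carefully; everything else is routine given Lemmas~\ref{lemma:prej-over-pacc} and~\ref{lemma:kwqfa-to-1qfarestart-constant}.
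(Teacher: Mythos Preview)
Your high-level plan---build a $4$-state RT-KWQFA $\mathcal{M}_{m}$ recognizing $\overline{C_{m}}$ with positive one-sided error and constant gap $\Omega(2^{-m})$, feed it to Lemma~\ref{lemma:kwqfa-to-1qfarestart-constant} with $c=\Theta(2^{m})$, then swap accepting and rejecting states---is exactly what the paper does. Where you diverge is in the design of $\mathcal{M}_{m}$, and there you are making the problem harder than it is.

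You reach for a rotation and then spend the rest of the proposal fighting its periodicity and its flatness near $k=m$; the paper uses no rotation at all. Its $\mathcal{M}_{m}$ has nonhalting states $q_{0},q_{1}$ and halting states $A,R$. On $\cent$ it loads amplitude $\tfrac{1}{\sqrt{2}}$ into $q_{0}$ and amplitude $(\tfrac{1}{\sqrt{2}})^{m+1}$ into $q_{1}$ (dumping the remainder to $R$). On each $\sigma\in\Sigma$, $q_{0}$ keeps a $\tfrac{1}{\sqrt{2}}$ fraction of its amplitude and sends the rest to $R$, while $q_{1}\mapsto q_{1}$ unchanged. Hence after reading $w$ the amplitude of $q_{0}$ is $(\tfrac{1}{\sqrt{2}})^{|w|+1}$---this \emph{encodes} $|w|$---while $q_{1}$ still carries the hard-wired target $(\tfrac{1}{\sqrt{2}})^{m+1}$. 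On $\dollar$ a two-way QFT sends $q_{0}\mapsto\tfrac{1}{\sqrt{2}}\ket{A}+\tfrac{1}{\sqrt{2}}\ket{R}$ and $q_{1}\mapsto-\tfrac{1}{\sqrt{2}}\ket{A}+\tfrac{1}{\sqrt{2}}\ket{R}$, so the amplitude on $A$ is $(\tfrac{1}{\sqrt{2}})^{|w|+2}-(\tfrac{1}{\sqrt{2}})^{m+2}$: zero iff $|w|=m$, and strictly monotone in $|w|$. There are no revival lengths and no flat region; the worst case is immediately $|w|=m+1$, where a one-line computation gives accepting probability $>(\tfrac{1}{2})^{m+6}$, hence $g_{\mathcal{M}_{m}}(n)>(\tfrac{1}{2})^{m+6}$ for all $n$.

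So the ``quantum chronometer'' you specified is realized not by tracking a phase but by tracking a real, monotone amplitude and subtracting it from a fixed reference via the QFT. This sidesteps every difficulty you flagged and delivers the claimed $7$ states and $O(\tfrac{1}{\epsilon}2^{m}|w|)$ runtime without any case analysis. Your rotation-plus-slowdown route might be salvageable, but it is messier, mixes the RT-KWQFA and RT-QFA$^{\circlearrowleft}$ layers that the paper keeps separate, and---as you yourself note---is still incomplete precisely at the $|w|=m\pm 1$ check.
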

\begin{proof}
	We construct a $ 4 $-state RT-KWQFA recognizing $ \overline{C_{m}} $ with 
	positive one-sided unbounded error.
	Let $ \mathcal{M}_{m} = (Q,\Sigma,\delta,q_{0},Q_{a},Q_{r}) $ be RT-KWQFA with
	$ Q_{n}=\{q_{0},q_{1}\} $, $ Q_{a}=\{A\} $, and $ Q_{r}=\{R\} $.
	$ \mathcal{M}_{m} $ contains the transitions
	\begin{eqnarray*}
		U_{\cent}\ket{q_{0}} & = &  \frac{1}{\sqrt{2}} \ket{q_{0}} + 
			\left( \frac{1}{\sqrt{2}} \right)^{m+1} \ket{q_{1}} + 
			\sqrt{\frac{1}{2}-\left( \frac{1}{2} \right)^{m+1} } \ket{R} \\
		U_{\sigma \in \Sigma} \ket{q_{0}} & = & \frac{1}{\sqrt{2}} \ket{q_{0}}
			+ \frac{1}{\sqrt{2}} \ket{R} \\
		U_{\sigma \in \Sigma} \ket{q_{1}} & = & \ket{q_{1}} \\
		U_{\dollar}\ket{q_{0}} & = & \frac{1}{\sqrt{2}} \ket{A} + \frac{1}{\sqrt{2}} \ket{R} \\
		U_{\dollar}\ket{q_{1}} & = & -\frac{1}{\sqrt{2}} \ket{A} + \frac{1}{\sqrt{2}} \ket{R} \\
	\end{eqnarray*}
	with the amplitudes of the transitions not mentioned above filled in to ensure unitarity. 

	$ \mathcal{M}_{m}$ encodes the length of the input string in the amplitude of state $ q_{0} $, 
	which equals $ \left( \frac{1}{\sqrt{2}} \right)^{|w|+1} $ just before the processing of the right end-marker.
	The desired length $m$ is ``hardwired'' into the amplitudes of $ q_{1} $.
	For a given input string $ w \in \Sigma^{*} $, if $ w \in C_{m} $, 
	then the amplitudes of states $ q_{0} $ and $ q_{1} $ are equal, and
	the QFT \cite{KW97} performed on 
	the right end-marker sets the amplitude of $ A $ to 0. 
	Therefore, $ w $ is rejected with certainty.
	If $ w \in \overline{C_{m}} $, then the accepting probability is equal to
	\begin{equation}
		\left( \left( \frac{1}{\sqrt{2}} \right)^{|w|+2} -
		 \left( \frac{1}{\sqrt{2}} \right)^{m+2} \right)^{2}
	\end{equation}
	and it is minimized when $ |w|=m+1 $, which gives us the inequality
	\begin{equation}  g_{\mathcal{M}_{m}}(w) > \left( \frac{1}{2} \right)^{m+6}. \end{equation}
	By Lemma \ref{lemma:kwqfa-to-1qfarestart-constant}, there exists a $ 7 $-state 
	RT-QFA$  ^{\circlearrowleft} $ $ \mathcal{M}_{m,\epsilon} $ recognizing $ \overline{C_{m}} $
	with positive one-sided bounded error and whose expected runtime is 
	$ O \left(\frac{1}{\epsilon}2^{m}|w| \right) $.
	By swapping the accepting and rejecting states of $ \mathcal{M}_{m,\epsilon} $,
	we can get the desired machine.
\end{proof}

Note that, unlike what we had with Theorem \ref{theorem:A_m}, the QFAs of Theorems \ref{theorem:B_m}
and \ref{theorem:C_m} cannot be converted so easily to 2PFAs.
In fact, we can prove that there exist no 2PFA families of fixed size which recognize $ B_m $
and $ C_m $ with fixed one-sided error less than $ \frac{1}{2} $, like those QFAs:
Assume that such a 2PFA family exists. Switch the accept and reject states to obtain a family for the complements
of the languages. The 2PFAs thus obtained operate with cutpoint 0.
Obtain an equivalent 2NFA with the same number of states by converting all 
transitions with nonzero weight to nondeterministic transitions.
But there are only finitely many 2NFAs of this size, meaning that 
they cannot recognize our infinite family of languages.

\section{Probability Amplification} \label{berr:amplification}

Many automaton descriptions in this thesis, and elsewhere in the theory
of probabilistic and quantum automata, describe not a single
algorithm, but a general template which one can use for building a
machine $ M_{\epsilon} $ that operates with a desired error bound $ \epsilon $.
The dependences of the runtime and number of states of $ M_{\epsilon} $ on $ \frac{1}{\epsilon} $
are measures of the complexity of the probability
amplification process involved in the construction method used.
Viewed as such, the constructions described in the theorems in
Section \ref{berr:succinctness} are maximally efficient in terms of the state cost,
with no dependence on the error bound. In this section, we present improvements over 
previous results about the efficiency of probability amplification in 2QFAs.

\subsection{Improved Algorithms for UPAL Language} \label{berr:L-upal-two-way}

In classical computation, one only needs to sequence $ O(\log(\frac{1}{\epsilon})) $
identical copies of a given probabilistic automaton with one sided
error $ p < 1 $ to run on the same
input in order to obtain a machine with error bound $ \epsilon $.
Yakary{\i}lmaz and Say \cite{YS09B} noted that this method of
probability amplification
does not yield efficient results for 2KWQFAs; the number of machine
copies required to reduce the error to $\epsilon$
can be as high as $ ( \frac{1}{\epsilon} )^2 $.
The most succinct 2KWQFAs for $ L_{upal} $, $ \{ a^{n}b^{n} \mid n > 0 \} $,
produced by alternative methods developed in \cite{YS09B}
have $ O(\log^{2}(\frac{1}{\epsilon})\log\log(\frac{1}{\epsilon})) $ states,
and runtime linear in the size of the input $ w $.
In Section \ref{berr:L-upal-RT}, we present a construction  which yields (exponential
time) RT-QFA$ ^{\circlearrowleft} $s that recognize $ L_{upal} $ within any desired error bound
$ \epsilon $, with no dependence of the state set size on $ \epsilon $.
Ambainis and Watrous \cite{AW02} present a method which can be used to build
2QCFAs that recognize $ L_{upal} $ also with constant state set size,
where the ``tuning'' of the automaton for a particular
error bound is achieved by setting some transition amplitudes
appropriately, and the expected runtime of those machines is $ O(|w|^4) $.
We now show that the 2QFA$ ^{\circlearrowleft} $ formalism allows more
efficient probability amplification.
\begin{theorem}
	\label{theorem:2qfarestart_anbn}
	There exists a constant $n$, such that, for any $ \epsilon>0 $, an $n$-state 2QFA$ ^{\circlearrowleft} $
	which recognizes $ L_{upal} $ with one-sided error bound $ \epsilon $ 
	within $ O(\frac{1}{\epsilon}|w|) $ expected runtime can be constructed, where $ w $ is the input string.
\end{theorem}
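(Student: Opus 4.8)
## Proof Proposal for Theorem \ref{theorem:2qfarestart_anbn}

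The plan is to build the desired $2$QFA$^{\circlearrowleft}$ directly as a two-path machine that checks the two ``halves'' of the condition defining $L_{upal} = \{a^n b^n \mid n>0\}$, using the same quantum-interference trick that underlies the earlier constructions (Theorem \ref{theorem:1KWQFA}, and the algorithms in Section \ref{berr:reset-quantum}), but now exploiting both the two-way head and the restart capability. First I would have the machine, on the left end-marker $\cent$, split with equal amplitude into two computational paths, $\mathsf{path}_1$ and $\mathsf{path}_2$. Both paths first perform a cheap deterministic syntax check that the input lies in $a^+ b^+$; on any violation the machine simply rejects. Assuming the input is $a^m b^n$, $\mathsf{path}_1$ moves across the input measuring the difference between the block lengths by the ``unequal speeds'' technique: it traverses the $a$-block at one square per step and pauses one extra step on each $b$ (or vice versa), so that it arrives at $\dollar$ having spent a number of steps that encodes $m-n$; symmetrically, $\mathsf{path}_2$ does the mirror-image traversal. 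When $m = n$, the two paths reach $\dollar$ synchronized and at that moment perform a two-way QFT
\begin{equation}
\mathsf{path}_1 \rightarrow \tfrac{1}{\sqrt 2}\,(Accept) + \tfrac{1}{\sqrt 2}\,(Reject), \qquad
\mathsf{path}_2 \rightarrow \tfrac{1}{\sqrt 2}\,(Accept) - \tfrac{1}{\sqrt 2}\,(Reject),
\end{equation}
so that the $Reject$ branches cancel by destructive interference and only $Accept$ survives; when $m \neq n$ the paths are out of phase, no cancellation occurs, and the round produces equal accept and reject weight. To make this a genuine bounded-error machine rather than an unbounded-error one, I would add, as in Lemma \ref{lemma:kwqfa-to-1qfarestart-constant} and the proof of Theorem \ref{theorem:A_m}, a third path which on $\cent$ immediately rejects with a small amplitude $\sqrt{\epsilon}$ and otherwise restarts, so that every input — member or not — accrues a fixed ``reference'' rejection probability proportional to $\epsilon$ per round, against which the interference-generated acceptance signal is compared.

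The analysis then proceeds by the round-accounting machinery already available in Section \ref{berr:reset-basic-fact}. I would compute $p^a_{\mathcal{M}}(w)$ and $p^r_{\mathcal{M}}(w)$, the first-round accept and reject probabilities. For $w \in L_{upal}$ one gets $p^a_{\mathcal{M}}(w) \ge c_1 / \mathrm{poly}(|w|)$ from the surviving $Accept$ branch, while $p^r_{\mathcal{M}}(w)$ is dominated by the $\epsilon$-scaled reference term, so that $p^r_{\mathcal{M}}(w)/p^a_{\mathcal{M}}(w) \le \epsilon$; for $w \notin L_{upal}$ (and of syntactic form $a^+b^+$) the interference paths contribute equally to accept and reject, and the reference path tips the ratio $p^a_{\mathcal{M}}(w)/p^r_{\mathcal{M}}(w)$ below $1$, indeed to $0$ since no $Accept$ can survive without synchronization — giving one-sided error. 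Invoking Lemma \ref{lemma:prej-over-pacc} (with Lemma \ref{lemma:acc-rej}) then yields $f^a_{\mathcal{M}}(w) \ge 1-\epsilon$ on members and $f^r_{\mathcal{M}}(w) = 1$ on non-members. The runtime claim follows from Lemma \ref{lemma:expected-runtime}: since the reference path guarantees $p^h_{\mathcal{M}}(w)$ is at least a constant times $\epsilon$ (it halts or restarts within $O(1)$ steps with amplitude-squared $\Theta(\epsilon)$), and a single round takes $O(|w|)$ steps, the expected total runtime is $O\!\big(\tfrac{1}{\epsilon}|w|\big)$. The number of states is a fixed constant $n$: a bounded number for the syntax check, the path-split, the speed-control pauses, the QFT targets, and the accept/reject/restart states — crucially, none of these scales with $\epsilon$, because the error bound is absorbed entirely into the transition \emph{amplitude} $\sqrt{\epsilon}$ of the reference path, not into the state set.

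The main obstacle, and the step I would spend the most care on, is ensuring well-formedness: the transition matrices $U_\cent, U_a, U_b, U_\dollar$ (now acting on configurations, since the head is two-way and may ``stay put'', so this is a genuine $2$KWQFA-style unitarity condition in the sense of Section \ref{qtm:restricted-QTMs} rather than the simpler superoperator condition) must be completable to unitary matrices, and the two interfering paths must be arranged so that they can occupy \emph{the same configuration} — same internal state and same head position — at the synchronization step, which is exactly what makes the QFT cancellation legitimate. This is the familiar subtlety from the Kondacs–Watrous $L_{upal}$ algorithm, and the standard fix is to introduce ``twin halting states'' $A_k, R_k$ reached with equal amplitude on the side transitions, so that the off-diagonal unitarity constraints can be met without disturbing the accept/reject balance before the final decision, precisely as in the proof of Theorem \ref{theorem:1KWQFA}. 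A secondary technical point is that the restart path must be orthogonal, at the level of configurations, to the computing paths at every step so that the intermediate measurements cleanly separate ``restart'' from ``continue''; this is handled by assigning the reference path its own dedicated internal states and routing it into $Q^{\circlearrowleft}$. Once these bookkeeping constraints are discharged, the rest is the routine round-geometric-series computation already packaged in Lemmas \ref{lemma:acc-rej}–\ref{lemma:expected-runtime}.
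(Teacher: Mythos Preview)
There is a genuine gap in your analysis for nonmembers. With your QFT
\[
\mathsf{path}_1 \to \tfrac{1}{\sqrt2}(Accept)+\tfrac{1}{\sqrt2}(Reject),\qquad
\mathsf{path}_2 \to \tfrac{1}{\sqrt2}(Accept)-\tfrac{1}{\sqrt2}(Reject),
\]
the target carrying the $\pm$ sign is $Reject$, so on synchronization $Reject$ cancels and $Accept$ survives. But when $m\neq n$ the two paths sit in distinct configurations and each one independently contributes probability $\tfrac12$ to $Accept$ and $\tfrac12$ to $Reject$. Your own phrase ``the interference paths contribute equally to accept and reject'' is correct; the clause that follows, ``indeed to $0$ since no $Accept$ can survive without synchronization,'' flatly contradicts it and is false. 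Hence for $w=a^mb^n\notin L_{upal}$ you get $p^a_{\mathcal M}(w)\approx p^r_{\mathcal M}(w)$ up to the $O(\epsilon)$ reference-reject term, and therefore $f^a_{\mathcal M}(w)\approx\tfrac12$, not $0$. This is cutpoint-$\tfrac12$ recognition, not bounded error, and certainly not the one-sided ``$f^r_{\mathcal M}(w)=1$'' you claim. The appeal to Lemma~\ref{lemma:kwqfa-to-1qfarestart-constant} is misplaced: that lemma's reference-reject trick works only because the base machine there already has zero acceptance probability on nonmembers, which your interference paths do not.

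The paper's construction fixes exactly this orientation. It takes the Kondacs--Watrous $M_2$ (whose QFT already has $Reject$ as the $\pm\tfrac{1}{\sqrt2}$ target, cancelling on members) and, instead of adding a separate reference path, embeds the $\epsilon$-dependence into the QFT targets themselves:
\[
\mathsf{path}_{1,2}\ \longrightarrow\ \pm\tfrac{1}{\sqrt2}\ket{\mbox{Reject}}+\sqrt{\tfrac{\epsilon}{2}}\,\ket{\mbox{Accept}}+\sqrt{\tfrac{1-\epsilon}{2}}\,\ket{\mbox{Restart}}.
\]
Then members give $p^r=0$, $p^a=\epsilon$ (genuine one-sided error, in the direction opposite to the one you asserted), while nonmembers give $p^r=\tfrac12$, $p^a=\tfrac{\epsilon}{2}$, so $p^a/p^r=\epsilon$ and Lemma~\ref{lemma:prej-over-pacc} finishes the error bound; the minimum halting probability $\epsilon$ (attained on members) yields the $O(\tfrac{1}{\epsilon}|w|)$ runtime via Lemma~\ref{lemma:expected-runtime}. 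Your high-level architecture---interference test plus restart, constant state count with $\epsilon$ absorbed into a transition amplitude---is right; the missing pieces are that the cancelling QFT target must be $Reject$, and the small $Accept$ amplitude must be present unconditionally rather than supplied by a reference \emph{reject}.
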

\begin{proof}
	We start with Kondacs and Watrous' original 2KWQFA \cite{KW97} $ M_N $, 
	which recognizes $ L_{upal} $ with one-sided error $ \frac{1}{N} $, for any integer $ N>1 $. 
	After a deterministic test for membership of $ a^{*}b^{*} $, 
	$ M_N $ branches to $ N $ computational paths, each of which perform a QFT at the end of the computation. 
	Set $ N=2 $. $ M_2 $ accepts all members of $ L_{eq} $ with probability 1.
	Non-members of $ L_{eq} $ are rejected with probability at least $ \frac{1}{2} $.
	We convert $ M_2 $ to a 2QFA$ ^{\circlearrowleft} $ $ \mathcal{M}_{\epsilon}' $ 
	by changing the target states of the QFT as follows:
	\begin{equation}
		\mathsf{path_{1}} \rightarrow \frac{1}{\sqrt{2}}\ket{\mbox{Reject}}
		+ \sqrt{\frac{\epsilon}{2}} \ket{\mbox{Accept}}+ \sqrt{\frac{1-\epsilon}{2}}\ket{\mbox{Restart}}
	\end{equation}
	\begin{equation}
		\mathsf{path_{2}} \rightarrow -\frac{1}{\sqrt{2}}\ket{\mbox{Reject}}
		+ \sqrt{\frac{\epsilon}{2}} \ket{\mbox{Accept}}+ \sqrt{\frac{1-\epsilon}{2}}\ket{\mbox{Restart}}
	\end{equation}
	where the amplitude of each path is $ \frac{1}{\sqrt{2}} $.
	For a given input $ w \in \Sigma^{*} $, 
	\begin{enumerate}
		\item if $ w $ is not of the form $ a^{*}b^{*} $, then $ p_{\mathcal{M}'}^{r}(w)=1 $;
		\item if $ w $ is of the form $ a^{*}b^{*} $ and $ w \notin L $, 
			then $ p_{\mathcal{M}'}^{r}(w)=\frac{1}{2} $, and 
			$ p_{\mathcal{M}'}^{a}(w)=\frac{\epsilon}{2} $;
		\item if $ w \in L $, then $ p_{\mathcal{M}'}^{r}(w)=0 $ and 
			$ p_{\mathcal{M}'}^{a}(w)=\epsilon $.
	\end{enumerate}
	It is easily seen that the error is one-sided. 
	Since $ \frac{p_{\mathcal{M}'}^{a}(w)}{p_{\mathcal{M}'}^{r}(w)} = \epsilon $,
	we can conclude with Lemma \ref{lemma:prej-over-pacc}.
	Moreover, the minimum halting probability occurs in the third case above, and so
	the expected runtime of $ \mathcal{M}_{\epsilon}' $ is $ O(\frac{1}{\epsilon}|w|) $.
\end{proof}
\begin{theorem}
	\label{theorem:2qfareset_anbn}
	For any $ \epsilon \in (0,\frac{1}{2}) $, there exists a 2QFA$ ^{\curvearrowleft} $ with
	$ O(\log(\frac{1}{\epsilon})) $ states that recognizes $ L_{upal} $ with 
	one-sided error bound $ \epsilon $ in $ O(\log(\frac{1}{\epsilon})|w|) $ steps, 
	where $ w $ is the input string.
\end{theorem}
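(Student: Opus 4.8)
The plan is to combine Theorem~\ref{theorem:2qfarestart_anbn} with the probability-amplification machinery for automata with reset developed earlier in this section, specifically Lemma~\ref{lemma:restart-to-reset}. Recall that Theorem~\ref{theorem:2qfarestart_anbn} gives, for each fixed error bound (say $\epsilon_0 = \frac{1}{3}$), a $2$QFA$^{\circlearrowleft}$ $\mathcal{M}$ with a \emph{constant} number of states $n$ that recognizes $L_{upal}$ with one-sided error $\epsilon_0$ and expected runtime $O(|w|)$. The idea is to feed this fixed-error machine into the reset-based amplification construction. By Lemma~\ref{lemma:restart-to-reset}, from an $n$-state finite automaton with restart recognizing $L$ with error bound $\epsilon_0$, we obtain a finite automaton \emph{with reset} $\mathcal{R}'$ recognizing $L$ with any smaller error $\epsilon$, having $O(\log^2(\frac{1}{\epsilon})\, n)$ states and expected runtime $O(\log(\frac{1}{\epsilon})\, s(|w|))$, where $s(|w|)$ is the runtime of the original machine.

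Naively applying Lemma~\ref{lemma:restart-to-reset} to $\mathcal{M}$ would yield $O(\log^2(\frac{1}{\epsilon}))$ states rather than the claimed $O(\log(\frac{1}{\epsilon}))$, so the first thing I would check is whether the quadratic blow-up can be reduced to linear in this one-sided setting. The point is that when the error is one-sided (every member of $L_{upal}$ is accepted with certainty by the $N=2$ Kondacs--Watrous construction used in Theorem~\ref{theorem:2qfarestart_anbn}, and only non-members are ever rejected), the majority-vote amplification does not need two counters of size $k+1$ each: a single ``reject counter'' suffices. One runs $k = O(\log(\frac{1}{\epsilon}))$ copies of the one-sided machine in sequence via reset moves, accepting only if \emph{every} copy accepts, and rejecting as soon as one copy rejects. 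Since a correct rejection happens with probability $\ge \frac{1}{2}$ per independent copy and a false rejection never happens, $k$ copies drive the error to $2^{-k} \le \epsilon$. This linear ``chain of copies'' needs only $O(\log(\frac{1}{\epsilon}))$ extra state information to track which copy is currently running, so the total state count is $O(\log(\frac{1}{\epsilon}))$ times the constant from Theorem~\ref{theorem:2qfarestart_anbn}, i.e. $O(\log(\frac{1}{\epsilon}))$.

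For the runtime, each copy runs in expected $O(|w|)$ steps (the expected runtime in Theorem~\ref{theorem:2qfarestart_anbn} with $\epsilon_0$ fixed is $O(|w|)$), the number of reset moves between copies is $O(\log(\frac{1}{\epsilon}))$, and each reset costs $O(|w|)$ steps to walk the head back to $\cent$; so the total expected runtime is $O(\log(\frac{1}{\epsilon})\, |w|)$, as claimed. I would write the machine as a $2$QFA$^{\curvearrowleft}$ (not merely with restart) precisely because the chaining requires reset moves that target the \emph{start state of the next copy} rather than the original start state --- this is exactly the extra capability that $Q_{reset} = \bigcup_{q \in Q_n} Q_q^{\curvearrowleft}$ provides over the restart-only model, and it is why the statement gains a $\log$ in state efficiency compared to the $\log^2$ one would get from a pure restart construction.

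The main obstacle I anticipate is bookkeeping the interaction between the \emph{quantum} branching inside each copy $\mathcal{M}$ and the \emph{classical} reset bookkeeping between copies: one must verify that when a copy produces an intermediate ``reject'' or ``accept'' outcome, the reset move correctly collapses to the classical configuration $\ket{q, 1}$ (start state of the next copy, head on $\cent$) and that no spurious interference arises between the state-vectors of different copies --- they are run sequentially and separated by measurements, so this should be clean, but it needs to be stated carefully. A secondary technical point is confirming that the fixed-error machine of Theorem~\ref{theorem:2qfarestart_anbn} can be taken to have \emph{one-sided} error (which it does, by the remark there that the error is one-sided), since the linear-state amplification crucially relies on one-sidedness; two-sided error would force the two-counter scheme and the weaker $\log^2$ bound.
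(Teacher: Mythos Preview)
Your core idea---sequence $O(\log(\frac{1}{\epsilon}))$ copies via reset moves and accept only if every copy accepts, exploiting one-sidedness to avoid the two-counter $\log^{2}$ blow-up of Lemma~\ref{lemma:restart-to-reset}---is exactly the paper's approach. However, you chain copies of the \emph{restart} machine of Theorem~\ref{theorem:2qfarestart_anbn}, whereas the paper chains copies of $M_{2}$ directly: the underlying 2KWQFA (with one-sided error $\frac{1}{2}$) that the restart construction merely wraps.

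This choice of base machine matters for the time bound. $M_{2}$ is a halting two-way machine that runs in $O(|w|)$ steps in the worst case, so sequencing $O(\log(\frac{1}{\epsilon}))$ copies yields the worst-case bound of $O(\log(\frac{1}{\epsilon})|w|)$ steps stated in the theorem. Your restart machine, by contrast, has only \emph{expected} $O(|w|)$ runtime per copy (it may restart indefinitely within a copy), so your construction delivers only an expected-time bound; since the theorem says ``steps'' rather than ``expected runtime'' (contrast the explicit wording in Theorem~\ref{theorem:2qfarestart_anbn}), you do not quite establish the statement as written. The fix is immediate: drop the restart wrapper and chain $M_{2}$ itself. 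Once you do that, the detour through Lemma~\ref{lemma:restart-to-reset} and the $\log^{2}$-versus-$\log$ discussion evaporates, and the proof collapses to the two-line argument the paper gives. (Incidentally, in the 2QFA$^{\curvearrowleft}$ model a reset move places the head on $\cent$ in a single step, so no $O(|w|)$ walk-back is needed between copies.)
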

\begin{proof}
	Let $ M_{2} $ be the 2KWQFA recognizing $ L_{upal} $ with one-sided error bound $ \frac{1}{2} $ 
	mentioned in the proof of Theorem \ref{theorem:2qfarestart_anbn}.
	Then, a 2QFA$ ^{\curvearrowleft} $ that is constructed by sequentially connecting
	$ O(\log(\frac{1}{\epsilon})) $ copies of $ M_{2} $, so that the input is accepted only if it is accepted by
	all the copies, and rejected otherwise, can recognize $ L_{upal} $ with one-sided error bound $ \epsilon $.
\end{proof}

\subsection{A Realtime Quantum Finite Automata with Restart Algorithm for UPAL Language} \label{berr:L-upal-RT}

\begin{theorem}
	\label{berr:L-upal-for-RT-QFA-restart}
	For any $ \epsilon>0 $, there exists a 15-state RT-QFA$ ^{\circlearrowleft} $ $ \mathcal{M}_{\epsilon} $,
	which accepts any $ w \in L_{upal} $ with certainty, and rejects any $ w \notin L_{upal} $
	with probability at least $ 1-\epsilon $.
	Moreover, the expected runtime of $ \mathcal{M}_{\epsilon} $ on $ w $ is 
	$ O(\frac{1}{\epsilon}(2\sqrt{2})^{|w|}|w|) $ .
\end{theorem}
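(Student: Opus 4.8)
The plan is to follow the same two-stage recipe used for Theorems~\ref{theorem:B_m} and~\ref{theorem:C_m}: first exhibit a constant-size RT-KWQFA $\mathcal{M}_{0}$ recognizing the complement $\overline{L_{upal}}$ with positive one-sided \emph{unbounded} error whose gap function decays no faster than $(2\sqrt{2})^{-n}$, and then feed it to Lemma~\ref{lemma:kwqfa-to-1qfarestart-exponential} with $c=2\sqrt{2}$ and finally interchange the accepting and rejecting states. (Since $L_{upal}\in$~S$^{=}_{\mathbb{Q}}$, Theorem~\ref{theorem:1qfa-restart-s=rat} already yields \emph{some} constant-state RT-QFA$^{\circlearrowleft}$ family for it; the point here is to pin down the explicit $15$ states and the explicit $(2\sqrt{2})^{|w|}$ runtime factor through a concrete automaton rather than Turakainen's generic transformation.)

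For $\mathcal{M}_{0}$ I would branch equiprobably on $\cent$ into two computation paths $\mathsf{path}_{1}$ and $\mathsf{path}_{2}$, each of which also runs the trivial deterministic check that the input has the form $a^{*}b^{*}$ (reading an $a$ after a $b$ sends the branch to an accepting state, which is correct since such an input lies in $\overline{L_{upal}}$). In the $a$-region $\mathsf{path}_{1}$ multiplies its surviving amplitude by $\tfrac{1}{2}$ per symbol while $\mathsf{path}_{2}$ multiplies by $\tfrac{1}{\sqrt{2}}$ per symbol, and in the $b$-region the two rates are exchanged (with all residual amplitude absorbed into rejecting states). A direct computation then shows that, for an input $a^{n}b^{m}$ with $m\ge 1$, the surviving amplitudes of $\mathsf{path}_{1}$ and $\mathsf{path}_{2}$ just before $\dollar$ are proportional to $2^{-(2n+m)/2}$ and $2^{-(n+2m)/2}$ respectively, so they are \emph{equal} precisely when $n=m$. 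On $\dollar$ the two paths undergo a two-way quantum Fourier transform whose accepting target receives amplitude proportional to $2^{-(n+m)/2}\bigl(2^{-n/2}-2^{-m/2}\bigr)$ (an additional accepting channel handling the degenerate inputs $\varepsilon$, $a^{n}$, $b^{m}$). Hence every member of $L_{upal}$ is rejected by $\mathcal{M}_{0}$ with certainty; and when $n\neq m$, the estimate $|2^{-n/2}-2^{-m/2}|\ge 2^{-\min(n,m)/2}(1-2^{-1/2})$ together with $\min(n,m)<|w|/2$ gives accepting probability $\Omega\bigl((2\sqrt{2})^{-|w|}\bigr)$, so $g_{\mathcal{M}_{0}}$ satisfies the hypothesis of Lemma~\ref{lemma:kwqfa-to-1qfarestart-exponential} with $c=2\sqrt{2}$ (the harmless constant being absorbed into the expected runtime, exactly as in that lemma's proof). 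Accounting for the start state, the four ``mode'' states of the two paths, one accepting and one rejecting state, and the constantly many auxiliary states needed to unitarise every $U_{\sigma}$, keeps $\mathcal{M}_{0}$ to $12$ states.

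Applying Lemma~\ref{lemma:kwqfa-to-1qfarestart-exponential} then produces, for every $\epsilon\in(0,\tfrac12)$, an RT-QFA$^{\circlearrowleft}$ with $12+3=15$ states recognizing $\overline{L_{upal}}$ with positive one-sided error $\epsilon$ in expected time $O\bigl(\tfrac{1}{\epsilon}(2\sqrt{2})^{|w|}|w|\bigr)$; swapping its accepting and rejecting states yields $\mathcal{M}_{\epsilon}$, which accepts every $w\in L_{upal}$ with probability $1$ and rejects every $w\notin L_{upal}$ with probability at least $1-\epsilon$ within the claimed expected runtime.

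The real work is the design of $\mathcal{M}_{0}$: one must simultaneously (i) arrange the per-symbol factors of the two paths to be asymmetric in exactly the right way so that the surviving amplitudes coincide on $\{a^{n}b^{n}\}$ and only there, (ii) route all ``non-$a^{*}b^{*}$'' evidence to accepting states while routing the counting residue to rejecting states, and (iii) complete every transition matrix to a genuine unitary using only a constant number of auxiliary states so that the final count is exactly $15$ rather than merely $O(1)$. The interference/gap estimate and the unbounded-to-bounded-error conversion are then routine given Lemma~\ref{lemma:kwqfa-to-1qfarestart-exponential}.
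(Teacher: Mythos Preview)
Your proposal is correct and follows essentially the same approach as the paper: a 12-state RT-KWQFA splits on $\cent$ into two paths that shrink their surviving amplitudes by the asymmetric factors $\tfrac{1}{2}$ and $\tfrac{1}{\sqrt{2}}$ on $a$'s (swapped on $b$'s), so that after $a^{m}b^{n}$ the two surviving amplitudes are $\tfrac{1}{\sqrt{2}}(\tfrac{1}{2})^{m}(\tfrac{1}{\sqrt{2}})^{n}$ and $\tfrac{1}{\sqrt{2}}(\tfrac{1}{\sqrt{2}})^{m}(\tfrac{1}{2})^{n}$, a 2-way QFT on $\dollar$ yields accepting probability with gap $g_{\mathcal{M}}(|w|)>(\tfrac{1}{2})^{\frac{3|w|}{2}+5}$, and Lemma~\ref{lemma:kwqfa-to-1qfarestart-exponential} with $c=2\sqrt{2}$ plus the final accept/reject swap gives the 15-state machine with the stated runtime. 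The paper supplies the explicit transition table realising exactly this scheme; your high-level plan matches it.
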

\begin{proof}
	We construct a $ 12 $-state RT-KWQFA recognizing $ \overline{L_{eq}} $ 
	with positive one-sided unbounded error.
	Let $ \mathcal{M} = (Q,\Sigma,\delta,q_{0},Q_{a},Q_{r}) $ be RT-KWQFA with	
	$ Q_{a}=\{A_{1},A_{2},A_{3}\} $, $ Q_{rej}=\{R_{1},R_{2},R_{3}\} $, and
	$ Q_{n}=\{p_{0},p_{1},p_{2},q_{0},q_{1},q_{2}\} $.
	The transition function of $ \mathcal{M} $ is shown in Figure \ref{figure:anbn1QFARestart}.	
	As before, we assume that the transitions not specified in the figure are filled in to ensure that
	the $ U_{\sigma} $ are unitary.

	\begin{figure}[here]		
		\setlength{\extrarowheight}{2pt}
		\scriptsize{
		\begin{center}
		\begin{tabular}{|c|l|l|l|}
			\hline
			Paths & \multicolumn{1}{c|}{$ U_{\cent}, U_{a} $} &
			\multicolumn{1}{c|}{$ U_{b} $} & \multicolumn{1}{c|}{$ U_{\dollar} $} \\
			\hline
			& $ U_{\cent} \ket{q_{0}} = \frac{1}{\sqrt{2}} \ket{p_{0}} + \frac{1}{\sqrt{2}}\ket{q_{0}} $ 
			& & \\
			\hline
			$ \mathsf{path_{1}} $
			&
			$ \begin{array}{@{}l@{}}
				U_{a} \ket{p_{0}} = \frac{1}{2} \ket{p_1} + \frac{1}{2}\ket{R_{1}} 
				+ \frac{1}{\sqrt{2}}\ket{R_{2}} \\
				U_{a} \ket{p_{1}} = \frac{1}{2} \ket{p_1} + \frac{1}{2}\ket{R_{1}}
				- \frac{1}{\sqrt{2}}\ket{R_{2}} \\
				U_{a} \ket{p_{2}} = \ket{A_{1}}
			\end{array} $
			&
			$ \begin{array}{@{}l@{}}
				U_{b} \ket{p_{0}} = \ket{A_{1}} \\
				U_{b} \ket{p_{1}} = \frac{1}{\sqrt{2}} \ket{p_{2}}+\frac{1}{\sqrt{2}}\ket{R_{1}} \\
				U_{b} \ket{p_{2}} = \frac{1}{\sqrt{2}} \ket{p_{2}}-\frac{1}{\sqrt{2}}\ket{R_{1}}
			\end{array} $
			&
			$ \begin{array}{@{}l@{}}
				U_{\dollar}\ket{p_{0}}=\ket{R_{1}} \\
				U_{\dollar}\ket{p_{1}} = \ket{A_{1}} \\
				U_{\dollar}\ket{p_{2}} = \frac{1}{\sqrt{2}} \ket{R_{2}} + \frac{1}{\sqrt{2}} \ket{A_{2}} \\
			\end{array} $
			\\
			\hline
			$ \mathsf{path_{2}}$
			&
			$ \begin{array}{@{}l@{}}
				U_{a} \ket{q_{0}} = \frac{1}{\sqrt{2}} \ket{q_1} + \frac{1}{\sqrt{2}} \ket{R_{3}} \\
				U_{a} \ket{q_{1}} = \frac{1}{\sqrt{2}} \ket{q_1}-\frac{1}{\sqrt{2}} \ket{R_{3}} \\
				U_{a} \ket{q_{2}} = \ket{A_{2}}
			\end{array} $
			&
			$ \begin{array}{@{}l@{}}
				U_{b} \ket{q_{0}} = \ket{A_{2}} \\
				U_{b} \ket{q_{1}} = \frac{1}{2} \ket{q_2} + \frac{1}{2}\ket{R_{2}} 
				+ \frac{1}{\sqrt{2}}\ket{R_{3}} \\
				U_{b} \ket{q_{2}} = \frac{1}{2} \ket{q_2} +
				\frac{1}{2}\ket{R_{2}} - \frac{1}{\sqrt{2}}\ket{R_{3}}
			\end{array} $
			&
			$ \begin{array}{@{}l@{}}
				U_{\dollar}\ket{q_{0}} = \ket{R_{3}} \\
				U_{\dollar}\ket{q_{1}} = \ket{A_{3}} \\
				U_{\dollar}\ket{q_{2}} = \frac{1}{\sqrt{2}} \ket{R_{2}} - \frac{1}{\sqrt{2}} \ket{A_{2}} \\
			\end{array} $
			\\
			\hline       
		\end{tabular}
		\end{center}
		}
		\caption{Specification of the transition function of the RT-KWQFA presented in
		the proof of Theorem \ref{berr:L-upal-for-RT-QFA-restart}}
		\vskip\baselineskip
		\label{figure:anbn1QFARestart}
	\end{figure}

	As seen in the figure, $ \mathcal{M} $ branches to two paths on the left end-marker. 
	Both paths reject immediately if the input $ w \in \{a,b\}^{*} $ is the empty string, 
	and accept with nonzero probability, say $ \alpha $, if it is of the form 
	$ (\{a,b\}^{*} \setminus a^{*}b^{*}) \cup a^{+} \cup b^{+}  $.
	Otherwise, $ w = a^{m}b^{n}$ $ (m,n>0) $, and the amplitudes of the
	paths just before the transition associated with the right end-marker in the first round are as follows:
	\begin{itemize}
		\item State $ p_{2} $ has amplitude $ \frac{1}{\sqrt{2}} (\frac{1}{2})^{m} (\frac{1}{\sqrt{2}})^{n}$,
		\item state $ q_{2} $ has amplitude $ \frac{1}{\sqrt{2}} (\frac{1}{\sqrt{2}})^{m} (\frac{1}{2})^{n}$.
	\end{itemize}
	If $ m=n $,  then the accepting probability is zero. 
	If $ m \neq n $ (assume without loss of generality that $ m=n+d $ for some $ d \in \mathbb{Z}^{+} $), 
	then the accepting probability is equal to
	\begin{equation} \left( \frac{1}{2} \right)^{m+n+1} \left( \left( \frac{1}{\sqrt{2}} \right)^{m} - 
	\left( \frac{1}{\sqrt{2}} \right)^{n} \right)^{2}
	=
	\underbrace{\left( \frac{1}{2} \right)^{m+2n+1}}_{ > \left( \frac{1}{2} \right)^{\frac{3|w|}{2}+1} } 
	\underbrace{\left( 1 - \left( \frac{1}{\sqrt{2}} \right)^{d-2}
	+ \left( \frac{1}{2} \right)^{d} \right)}_{ > \frac{1}{16}} 
	\end{equation}
	Since $ \alpha $ is always greater than this value,
	\begin{equation} g_{\mathcal{M}}(|w|) > \left( \frac{1}{2} \right)^{\frac{3|w|}{2}+5}, \end{equation}
	for $ |w|>0 $.
	By Lemma \ref{lemma:kwqfa-to-1qfarestart-exponential}, there exists a $ 15 $-state 
	RT-QFA$  ^{\circlearrowleft} $ $ \mathcal{M}_{\epsilon} $ recognizing $ \overline{L_{upal}} $
	with positive one-sided bounded error and whose expected runtime is 
	$ O(\frac{1}{\epsilon}(2\sqrt{2})^{|w|}|w|) $.
	By swapping accepting and rejecting states of $ \mathcal{M}_{m} $,
	we can get the desired machine.
\end{proof}

\subsection{An Improved Algorithm for PAL Language} \label{berr:L_pal}

Ambainis and Watrous \cite{AW02} present a 2CQFA construction which
decides $ L_{pal} $ in expected time $ O( \left( \frac{1}{\epsilon} \right)^{|w|}|w|) $
with error bounded by $ \epsilon > 0 $, where $ w $ is the input string.
(Watrous \cite{Wa98} describes a 2KWQFA which accepts all members of the
complement of $ L_{pal} $ with probability 1, and fails to halt for all palindromes;
it is not known if 2KWQFAs can recognize this language by halting for all inputs.)
We now present a RT-QFA$ ^{\circlearrowleft} $ construction, which,
by Lemma \ref{lemma:1qfareset-simulatedby-2qcfa},
can be adapted to yield 2CQFAs with the same complexity, which reduces
the dependence of the Ambainis-Watrous method on the desired error bound considerably.

\begin{theorem}
	\label{berr:L-pal-for-RT-QFA-restart}
	For any $ \epsilon>0 $, there exists a 15-state RT-QFA$ ^{\circlearrowleft} $ $ \mathcal{M}_{\epsilon} $
	which accepts any $ w \in L_{pal} $ with certainty, and rejects any $ w \notin L_{pal} $
	with probability at least $ 1-\epsilon $.
	Moreover, the expected runtime of $ \mathcal{M}_{\epsilon} $ on $ w $ is $ O(\frac{1}{\epsilon}3^{|w|}|w|) $.
\end{theorem}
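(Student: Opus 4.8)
The plan is to follow the same template used throughout Section~\ref{berr} for $L_{upal}$, $B_m$, and $C_m$: first build a constant-state RT-KWQFA recognizing $\overline{L_{pal}}$ with positive one-sided unbounded error and an inverse-exponential gap function, and then invoke Lemma~\ref{lemma:kwqfa-to-1qfarestart-exponential} to convert it into a RT-QFA$^{\circlearrowleft}$ recognizing $\overline{L_{pal}}$, and finally swap accepting and rejecting states to obtain the machine for $L_{pal}$ itself. So first I would design a RT-KWQFA $\mathcal{M}$ that, upon reading the input left-to-right, distinguishes palindromes from non-palindromes by encoding the string as a number. The standard trick (as in Watrous \cite{Wa98} and Ambainis--Watrous \cite{AW02}) is to treat $w$ as the base-$k$ expansion of an integer read ``forwards'' in one computational path and ``backwards'' in another; the two amplitudes coincide exactly when $w=w^{\mathrm{r}}$. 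Concretely, I would have $\mathcal{M}$ branch on $\cent$ into $\mathsf{path}_1$ and $\mathsf{path}_2$ with equal amplitude, where $\mathsf{path}_1$ accumulates the value $\sum_i w_i k^{i}$ (scaled down so amplitudes stay bounded by $1$, e.g. dividing by a constant at each step) and $\mathsf{path}_2$ accumulates $\sum_i w_i k^{|w|+1-i}$; on $\dollar$ a QFT \cite{KW97} between the two paths sends the ``difference'' amplitude to an accepting state, so the acceptance probability is $0$ iff $w$ is a palindrome and is bounded below by an inverse-exponential quantity otherwise.

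The technical heart is verifying that the gap function $g_{\mathcal{M}}(n)$ is at least $c^{-n}$ for a suitable constant $c$. This requires two things: (i) arranging the scaling so that for a non-palindrome the two path amplitudes before the final QFT differ by at least something like $c^{-n}$ — this follows because if $w\neq w^{\mathrm{r}}$ then the two ``encoded integers'' differ by at least $1$ in absolute value, and the per-step normalization factor is a fixed constant, so the amplitude difference is at least (constant)$^{-|w|}$; and (ii) checking that strings not of the expected syntactic form (which here is just all of $\{a,b\}^*$, since every string has a reverse, but one must still handle $\varepsilon$ and worry about possible coincidental cancellations) contribute correctly. Since the alphabet is $\{a,b\}$, I would use $k=2$ or $k=3$ and arrange the constants so that the minimum gap, minimized over all lengths, comes out as $\left(\tfrac{1}{2}\right)^{c|w|+O(1)}$ for an explicit small constant $c$; this is what forces the $3^{|w|}$ in the runtime bound (just as the $2\sqrt 2$ appeared in Theorem~\ref{berr:L-upal-for-RT-QFA-restart}). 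I expect the machine to need about $12$ internal states for $\overline{L_{pal}}$ (two symmetric groups of non-halting states plus twin halting states to maintain unitarity), so after adding the three extra states from Lemma~\ref{lemma:kwqfa-to-1qfarestart-exponential} we get $15$ states, matching the statement.

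Once the RT-KWQFA with $g_{\mathcal{M}}(n)\ge c^{-n}$ is in hand, Lemma~\ref{lemma:kwqfa-to-1qfarestart-exponential} immediately yields, for every $\epsilon\in(0,\tfrac12)$, a RT-QFA$^{\circlearrowleft}$ with three more states recognizing $\overline{L_{pal}}$ with positive one-sided error $\epsilon$ and expected runtime $O(\tfrac{1}{\epsilon}c^{|w|}|w|)$; taking $c=3$ (or showing $c$ can be pushed down to $3$ by a careful choice of base and scaling) gives the claimed $O(\tfrac1\epsilon 3^{|w|}|w|)$ bound. Swapping the accepting and rejecting states of this automaton (exactly as done at the end of the proofs of Theorems~\ref{theorem:B_m}, \ref{theorem:C_m}, and \ref{berr:L-upal-for-RT-QFA-restart}) turns positive one-sided error for $\overline{L_{pal}}$ into negative one-sided error for $L_{pal}$, i.e. a machine that accepts every palindrome with certainty and rejects every non-palindrome with probability at least $1-\epsilon$, completing the proof.

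The main obstacle I anticipate is step (i) of the gap analysis: one must be careful that the QFT on $\dollar$ does not accidentally cause a zero acceptance amplitude for some \emph{non}-palindrome due to an unlucky cancellation among the several halting states introduced for unitarity, and that the twin halting states reached during the scan contribute equally to accept and reject (so they do not corrupt the one-sidedness). Getting a clean, explicit lower bound of the form $\left(\tfrac12\right)^{c|w|+O(1)}$ — rather than merely ``nonzero'' — requires choosing the per-step scaling constant and the base $k$ so that the minimum over all lengths $n$ and all non-palindromes of length $n$ is controlled; this is exactly the kind of bookkeeping carried out in Theorem~\ref{berr:L-upal-for-RT-QFA-restart}, and I would handle it the same way, by writing the acceptance probability of a non-palindrome as a product of an ``exponentially small but length-dependent'' factor and a ``bounded below by a constant'' factor, and then bounding each piece separately.
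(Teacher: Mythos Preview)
Your plan is correct and matches the paper's proof almost exactly: the paper too builds a $12$-state RT-KWQFA for $\overline{L_{pal}}$ (a modification of the L\={a}ce--Scegulnaja--Freivalds automaton \cite{LSF09}) that encodes $(0.w_1\cdots w_l)_2$ and $(0.w_l\cdots w_1)_2$ into two state amplitudes, subtracts them via a QFT on $\dollar$, bounds the gap function, applies Lemma~\ref{lemma:kwqfa-to-1qfarestart-exponential}, and swaps accept and reject states.

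The one place your sketch needs sharpening is the gap bound. Your ``the encoded integers differ by at least $1$'' argument is sound but too crude: with the paper's unitary embedding (per-step factor $\sqrt{2/3}$) it yields only $g_{\mathcal{M}}(|w|)\ge (1/6)^{|w|}$, hence $c=6$ and runtime $O(\tfrac{1}{\epsilon}6^{|w|}|w|)$, not $3^{|w|}$. The constant $c=3$ does \emph{not} come from tuning the base or the scaling; it comes from a palindrome-specific observation: the minimum of $|(0.w)_2-(0.w^{\mathrm r})_2|$ over non-palindromes of length $|w|$ is $\Theta(2^{-|w|/2})$, not $\Theta(2^{-|w|})$, because any position where $w$ and $w^{\mathrm r}$ differ is mirrored by another, and the extremal case is an almost-palindrome with a single defect at positions $\lfloor |w|/2\rfloor,\lfloor |w|/2\rfloor+1$. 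Plugging this into the acceptance probability gives $g_{\mathcal{M}}(|w|)\ge \tfrac{1}{8}(1/3)^{|w|}$, which is exactly what the paper proves and what is needed for the stated runtime.
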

\begin{proof}
	We first construct a modified version of the RT-KWQFA algorithm of L\={a}ce et al. \cite{LSF09} 
	for recognizing the nonpalindrome language.
	The idea behind the construction is that we encode both the input 
	string and its reverse into the amplitudes of two of the states of the machine, 
	and then perform a substraction between these amplitudes using the QFT \cite{LSF09}.
	If the input is not a palindrome, the two amplitudes do not cancel each other completely, 
	and the nonzero difference is transferred to an accept state. 
	Otherwise, the accepting probability is zero.

	Let $ \mathcal{M} = (Q,\Sigma,\delta,q_{0},Q_{a},Q_{r}) $ be RT-KWQFA with
	$ Q_{n}=\{p_{1},p_{2},q_{0},q_{1},q_{2},q_{3}\} $ ,
	$ Q_{a}=\{A\} $,
	$ Q_{r}=\{R_{i} \mid 1 \le i \le 5\} $.
	The transition function of $ \mathcal{M} $ is shown in Figure \ref{figure:pal1QFARestart}.	
	As before, we assume that the transitions not specified in the figure are filled in to ensure that
	the $ U_{\sigma} $ are unitary.
	 \begin{figure}[here]		
		\setlength{\extrarowheight}{5pt}
		\scriptsize{
		\begin{center}
		\begin{tabular}{|c|l|l|}
			\hline
			Paths & \multicolumn{1}{c|}{$ U_{\cent}, U_{a} $} &
			\multicolumn{1}{c|}{$ U_{b} $} \\
			\hline
			& 
			$ U_{\cent} \ket{q_{0}} = \frac{1}{\sqrt{2}} \ket{p_{1}} +  \frac{1}{\sqrt{2}} \ket{q_{1}} $
			& \\
			\hline
			$ \mathsf{path_{1}} $ 
			&
			$ \begin{array}{@{}l@{}}
				U_{a} \ket{p_{1}} = \sqrt{\frac{2}{3}} \ket{p_{1}} - \frac{1}{\sqrt{3}} \ket{R_{1}} \\
				U_{a} \ket{p_{2}} = \frac{1}{\sqrt{6}} \ket{p_{1}} + \frac{1}{\sqrt{6}} \ket{p_{2}}
				+ \frac{1}{\sqrt{3}} \ket{R_{1}} + \frac{1}{\sqrt{3}} \ket{R_{2}}
			\end{array} $
			&
			$ \begin{array}{@{}l@{}}
				U_{b} \ket{p_{1}} = \frac{1}{\sqrt{6}} \ket{p_{1}} + \frac{1}{\sqrt{6}} \ket{p_{2}}
				+ \frac{1}{\sqrt{3}} \ket{R_{1}} + \frac{1}{\sqrt{3}} \ket{R_{2}} \\
				U_{b} \ket{p_{2}} = \sqrt{\frac{2}{3}} \ket{p_{2}} -  \frac{1}{\sqrt{3}} \ket{R_{1}} \\
			\end{array} $
			\\
			\hline
			$ \mathsf{path_{2}}$
             &
             $ \begin{array}{@{}l@{}}
				U_{a} \ket{q_{1}} = \frac{1}{\sqrt{6}} \ket{q_{1}} + \frac{1}{\sqrt{6}} \ket{q_{3}}
				-\frac{1}{\sqrt{3}} \ket{R_{3}} + \frac{1}{\sqrt{3}} \ket{R_{4}} \\
				U_{a} \ket{q_{2}} = \sqrt{\frac{2}{3}} \ket{q_{2}} + \frac{1}{\sqrt{3}} \ket{R_{5}} \\
				U_{a} \ket{q_{3}} = \sqrt{\frac{2}{3}} \ket{q_{3}} + \frac{1}{\sqrt{3}} \ket{R_{3}} \\
			\end{array} $
			&
			$ \begin{array}{@{}l@{}}
				U_{b} \ket{q_{1}} =\frac{1}{\sqrt{6}} \ket{q_{1}} + \frac{1}{\sqrt{6}} \ket{q_{2}}
				-\frac{1}{\sqrt{3}} \ket{R_{3}} + \frac{1}{\sqrt{3}} \ket{R_{4}} \\
				U_{b} \ket{q_{2}} = \sqrt{\frac{2}{3}} \ket{q_{2}} + \frac{1}{\sqrt{3}} \ket{R_{3}} \\
				U_{b} \ket{q_{3}} = \sqrt{\frac{2}{3}} \ket{q_{3}} + \frac{1}{\sqrt{3}} \ket{R_{5}} \\
			\end{array} $
			\\
			\hline
			&
			\multicolumn{2}{c|}{$ U_{\dollar} $}
			\\
			\hline
			$ \mathsf{path_{1}} $
			&
			\multicolumn{2}{l|}{
				$ \begin{array}{@{}l@{}}
					U_{\dollar}\ket{p_{1}} = \ket{R_{1}} \\
					U_{\dollar}\ket{p_{2}} = \frac{1}{\sqrt{2}} \ket{A} + \frac{1}{\sqrt{2}} \ket{R_{2}} \\
				\end{array} $
			}
			\\
			\hline
			$ \mathsf{path_{2}} $
			&
			\multicolumn{2}{l|}{
				$ \begin{array}{@{}l@{}}
					U_{\dollar}\ket{q_{1}} = \ket{R_{3}} \\
					U_{\dollar}\ket{q_{2}} = -\frac{1}{\sqrt{2}} \ket{A} + \frac{1}{\sqrt{2}} \ket{R_{2}} \\
					U_{\dollar}\ket{q_{3}} = \ket{R_{4}} \\
				\end{array} $
			}
			\\
             \hline
		\end{tabular}
		\end{center}
		}
		\caption{Specification of the transition function of the RT-KWQFA presented in
		the proof of Theorem \ref{berr:L-pal-for-RT-QFA-restart}}
		\vskip\baselineskip
		\label{figure:pal1QFARestart}
	\end{figure}

	$ \mathsf{path_{2}} $ and $ \mathsf{path_{1}} $ encode the input
	string and its reverse \cite{Ra63,Pa71} into the amplitudes of  states $ q_{2} $ and $ p_{2} $, respectively. 
	If the input is $ w=w_{1}w_{2}\cdots w_{l} $, then the values of
	these amplitudes just before the transition associated with the right end-marker in 
	the first round are as follows:
	\begin{itemize}
		\item State $ p_{2} $ has amplitude $ \frac{1}{\sqrt{2}} \left(\sqrt{\frac{2}{3}}\right)^{|w|}
		(0.w_{l}w_{l-1}\cdots w_{1})_{2} $, and 
		\item state $ q_{2} $ has amplitude $ \frac{1}{\sqrt{2}} \left(\sqrt{\frac{2}{3}}\right)^{|w|}
		(0.w_{1}w_{2} \cdots w_{l})_{2}. $
	\end{itemize}
	The factor of $ \sqrt{\frac{2}{3}} $ is due to the ``loss'' of amplitude necessitated by the fact that the
	originally non-unitary encoding matrices of \cite{Ra63,Pa71} have to be ``embedded'' 
	in a unitary matrix \cite{YS09D,YS10C}. 
	Note that the symbols $ a $ and $ b $ are encoded by 0 and 1, respectively.

	If $ w \in L_{pal} $, the acceptance probability is zero.
	If $ w \in \overline{L_{pal}} $, the acceptance probability is minimized by
	strings which are almost palindromes, except for a single defect in the middle, that is,
	when $ |w|=2k $ for $ k \in \mathbb{Z}^{+} $,  $ w_{i}=w_{2k-i+1} $, where $ 1 \le i \le k-1 $, and
	$ w_{k} \neq w_{k+1} $, so,
	\begin{equation} g_{\mathcal{M}}(w) \ge \frac{1}{8} \left( \frac{1}{3} \right)^{|w|}. \end{equation}
	By Lemma \ref{lemma:kwqfa-to-1qfarestart-exponential}, there exists a $ 15 $-state 
	RT-QFA$  ^{\circlearrowleft} $ $ \mathcal{M}_{\epsilon} $ recognizing $ \overline{L_{pal}} $
	with positive one-sided bounded error, whose expected runtime is 
	$ O(\frac{1}{\epsilon}3^{|w|}|w|) $.
	By swapping accepting and rejecting states of $ \mathcal{M}_{m} $,
	we can get the desired machine.
\end{proof}

Note that the technique used in the proof above can be extended easily
to handle bigger input alphabets by using the matrices defined on Page 169 of \cite{Pa71},
and the method of simulating stochastic matrices by unitary matrices described in 
Sections \ref{uerr:KWQFA-languages} and \ref{uerr:non-languages}.

\section{Probabilistic and Quantum Automata with Postselection} \label{berr:postsel}

In this section, we define the realtime finite automaton with
postselection (RT-PostFA) in the spirit of Aaronson \cite{Aa05}.
In fact, there is no difference between a standard finite automaton
model and its counterpart with postselection
in the processing of the input except for the final decision.
Instead of accepting states, RT-PostFAs have a set of \textit{postselection states},
denoted as $ Q_{p} $, which is the union of two disjoint subsets
$ Q_{pa} $ and $ Q_{pr} $, the \textit{accepting} and \textit{rejecting} \textit{postselection states},
and have the capability of discarding all computation branches except
the ones belonging to $ Q_{p} $ at the end,
on which a normalization is performed, and the output is given.
Therefore, the probability of being in a postselection state at the
end must be nonzero.

\subsection{Definitions} \label{berr:postsel-definition}

The acceptance and rejection probabilities of a machine, say $ \mathcal{M} $,
on a given input, say $ w \in \Sigma^{*} $, in postselection are denoted as 
$ p_{\mathcal{M}}^{a}(w) $ and $ p_{\mathcal{M}}^{r}(w) $, respectively.
Thus, by normalizing these probabilities, we obtain
\begin{equation}
       f_{\mathcal{P}}^{a}(w) =
\dfrac{p_{\mathcal{P}}^{a}(w)}{p_{\mathcal{P}}^{a}(w)+p_{\mathcal{P}}^{r}(w)},
\end{equation}
and
\begin{equation}
       f_{\mathcal{P}}^{r}(w) =
\dfrac{p_{\mathcal{P}}^{r}(w)}{p_{\mathcal{P}}^{a}(w)+p_{\mathcal{P}}^{r}(w)}.
\end{equation}

A RT-PFA with postselection (RT-PostPFA) is a 5-tuple
\begin{equation}
	\mathcal{P}=(Q,\Sigma,\{ A_{\sigma \in \tilde{\Sigma}} \},q_{1},Q_{p}),
\end{equation}
satisfying that for each input string $ w \in \Sigma^{*} $,
\begin{equation}
       \sum_{q_{i} \in Q_{p}} v_{|\tilde{w}|}[i] > 0.
\end{equation}
The acceptance and rejection probabilities of $ \mathcal{P} $ on $ w \in \Sigma^{*} $ in postselection
are defined respectively as 
\begin{equation}
       p_{\mathcal{P}}^{a}(w) = \sum_{q_{i} \in Q_{pa}} v_{|\tilde{w}|}[i]
\end{equation}
and
\begin{equation}
       p_{\mathcal{P}}^{r}(w) = \sum_{q_{i} \in Q_{pr}} v_{|\tilde{w}|}[i].
\end{equation}

We call the class of languages recognized by RT-PostPFAs with bounded error 
\textit{PostS} (post-stochastic languages).

A RT-QFA with postselection (RT-PostQFA) is a 5-tuple
\begin{equation}
       \mathcal{M}=(Q,\Sigma,\{\mathcal{E}_{\sigma \in \tilde{\Sigma}}\},q_{1},Q_{p}),
\end{equation}
satisfying that for each input string $ w \in \Sigma^{*} $,
\begin{equation}
       tr(P_{p} \rho_{| \tilde{w} |} ) > 0,
\end{equation}
where $ P_{p} $ is the projector defined as
\begin{equation}
       P_{p} = \sum_{q \in Q_{p}} \ket{q}\bra{q}.
\end{equation}
Additionally we define projectors $ P_{pa} $ and $ P_{pr} $ as follows:
\begin{equation}
       P_{pa} = \sum_{q \in Q_{pa}} \ket{q}\bra{q}
\end{equation}
and
\begin{equation}
       P_{pr} = \sum_{q \in Q_{pr}} \ket{q}\bra{q}.
\end{equation}
The acceptance and rejection probabilities of $ \mathcal{M} $ on $ w
\in \Sigma^{*} $ in postselection
are defined as
\begin{equation}
       p_{\mathcal{M}}^{a}(w) = tr(P_{pa} \rho_{| \tilde{w} |} )
\end{equation}
and
\begin{equation}
       p_{\mathcal{M}}^{r}(w) = tr(P_{pr} \rho_{| \tilde{w} |} ).
\end{equation}
We call the class of languages recognized by RT-PostQFAs with bounded
error \textit{PostQAL} (post-quantum automaton languages).

The error bound of a given postselection machine can be
improved by performing a tensor product of the machine with itself as
many times as required.
Specifically, if we combine $ k $ copies of a machine with postselection state set
$ Q_{pa} \cup Q_{pr} $, the new accepting and rejecting postselection
state sets can be chosen as
\begin{equation}
       Q_{pa}' = \underbrace{Q_{pa} \times \cdots \times Q_{pa}}_{k
\mbox{ times}}
\end{equation}
and
\begin{equation}
       Q_{pr}' = \underbrace{Q_{pr} \times \cdots \times Q_{pr}}_{k
\mbox{ times}},
\end{equation}
respectively. 
\begin{lemma}
	If $ L $ is recognized by RT-PostQFA (resp., RT-PostPFA) $ \mathcal{M} $ with error bound 
	$ \epsilon \in (0,\frac{1}{2})  $,
	then there exists a RT-PostQFA (resp., RT-PostPFA), say $ \mathcal{M}' $,
	recognizing $ L $ with error bound $  \epsilon^{2} $.
\end{lemma}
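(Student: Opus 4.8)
The plan is to build $\mathcal{M}'$ as the $k$-fold tensor product of $\mathcal{M}$ with itself, for a suitable constant $k$ depending only on $\epsilon$, using the accepting and rejecting postselection state sets $Q_{pa}' = Q_{pa}^{\times k}$ and $Q_{pr}' = Q_{pr}^{\times k}$ exactly as described just before the statement. First I would record what the tensor product does to the relevant quantities: if $\mathcal{M}$ ends (before postselection) in the state $\rho_{|\tilde w|}$ on input $w$, then the $k$-fold product ends in $\rho_{|\tilde w|}^{\otimes k}$, so the unnormalized accepting and rejecting masses of $\mathcal{M}'$ are $p_{\mathcal{M}}^a(w)^k$ and $p_{\mathcal{M}}^r(w)^k$ respectively (the other ``mixed'' outcomes, where the $k$ copies disagree, fall outside $Q_p'$ and are discarded by postselection). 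The crucial point is that the postselection-well-formedness condition is preserved: since $p_{\mathcal{M}}^a(w)+p_{\mathcal{M}}^r(w)>0$ and these are nonnegative, at least one of them is positive, hence its $k$-th power is positive, so $\operatorname{tr}(P_{p}'\rho_{|\tilde w|}^{\otimes k})=p_{\mathcal{M}}^a(w)^k+p_{\mathcal{M}}^r(w)^k>0$, and $\mathcal{M}'$ is a legal RT-PostQFA (resp.\ RT-PostPFA).

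Next I would carry out the error analysis. Suppose $w\in L$; then by hypothesis $f_{\mathcal{M}}^a(w)=\frac{p_{\mathcal{M}}^a(w)}{p_{\mathcal{M}}^a(w)+p_{\mathcal{M}}^r(w)}\ge 1-\epsilon$, equivalently $\frac{p_{\mathcal{M}}^r(w)}{p_{\mathcal{M}}^a(w)}\le\frac{\epsilon}{1-\epsilon}$. For $\mathcal{M}'$ this ratio is raised to the $k$-th power:
\begin{equation}
\frac{p_{\mathcal{M}'}^r(w)}{p_{\mathcal{M}'}^a(w)} = \left(\frac{p_{\mathcal{M}}^r(w)}{p_{\mathcal{M}}^a(w)}\right)^{k} \le \left(\frac{\epsilon}{1-\epsilon}\right)^{k},
\end{equation}
so that $f_{\mathcal{M}'}^a(w)=\frac{1}{1+(p_{\mathcal{M}'}^r/p_{\mathcal{M}'}^a)}\ge \frac{1}{1+(\epsilon/(1-\epsilon))^k}$. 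The symmetric computation handles $w\notin L$. It then remains to check that $k=2$ already suffices to push the error down to $\epsilon^2$: one verifies the elementary inequality $\frac{1}{1+r^2}\ge 1-\epsilon^2$ whenever $\frac{1}{1+r}\ge 1-\epsilon$ (i.e.\ whenever $r\le\frac{\epsilon}{1-\epsilon}$), for $\epsilon\in(0,\tfrac12)$. Indeed $r\le\frac{\epsilon}{1-\epsilon}$ gives $r^2\le\frac{\epsilon^2}{(1-\epsilon)^2}\le\frac{\epsilon^2}{1-\epsilon^2}$ (since $(1-\epsilon)^2\ge 1-\epsilon^2$ for $\epsilon\ge 0$), and $r^2\le\frac{\epsilon^2}{1-\epsilon^2}$ is exactly equivalent to $\frac{1}{1+r^2}\ge 1-\epsilon^2$. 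Hence taking $\mathcal{M}'=\mathcal{M}\otimes\mathcal{M}$ with the product postselection sets gives the claimed bound $\epsilon^2$.

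The main obstacle I anticipate is bookkeeping rather than depth: one must be careful that the tensor-product construction stays within the respective model classes (the composed operator is still an admissible RT-QFA operator since a tensor product of operators satisfying $\sum_i E_i^\dagger E_i = I$ again satisfies this; the composed probabilistic transition matrices are still stochastic), and one must confirm that postselection ``factors correctly'' over the product — that the discarded branches are precisely those where at least one coordinate lies outside $Q_p$, so the surviving mass splits exactly as $p_{\mathcal{M}}^a(w)^k$ versus $p_{\mathcal{M}}^r(w)^k$ with no cross terms. Once those two routine verifications are in place, the quantitative part is the short inequality above. I would write the argument uniformly for both the quantum and probabilistic cases, invoking the density-matrix description of RT-QFAs from Section~\ref{qtm:RT-QFAs} for the former and the stochastic-vector description for the latter, and pointing back to Lemma~\ref{lemma:acc-rej}-style identities only to the extent of the single algebraic fact $\frac{1}{1+r}\ge 1-\epsilon \iff r\le\frac{\epsilon}{1-\epsilon}$.
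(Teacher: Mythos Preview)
Your overall approach---tensoring $k$ copies of $\mathcal{M}$ and using the product postselection sets $Q_{pa}^{\times k}$, $Q_{pr}^{\times k}$---is exactly the paper's construction, and your observations that the postselection masses become $p_{\mathcal{M}}^a(w)^k$ and $p_{\mathcal{M}}^r(w)^k$, and that the well-formedness condition is preserved, are correct.

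The genuine gap is the claim that $k=2$ suffices. Your justification relies on the inequality $(1-\epsilon)^2 \ge 1-\epsilon^2$ for $\epsilon \ge 0$, but this is false on the relevant range: for $\epsilon\in(0,1)$ one has $(1-\epsilon)^2 - (1-\epsilon^2) = 2\epsilon(\epsilon-1) < 0$, so in fact $(1-\epsilon)^2 < 1-\epsilon^2$. Concretely, take $\epsilon=0.4$: then $r\le\frac{\epsilon}{1-\epsilon}=\tfrac{2}{3}$ gives $\frac{1}{1+r^2}\ge\frac{9}{13}\approx 0.692$, whereas you need $\ge 1-\epsilon^2=0.84$. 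So two copies are not enough in general.

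The paper handles this by solving $(\tfrac{\epsilon}{1-\epsilon})^k \le \tfrac{\epsilon^2}{1-\epsilon^2}$ for $k$, obtaining
\[
k = 1 + \left\lceil \frac{\log\!\big(\tfrac{1}{\epsilon}+1\big)}{\log\!\big(\tfrac{1}{\epsilon}-1\big)} \right\rceil,
\]
which is finite for every $\epsilon\in(0,\tfrac12)$ (since then $\tfrac{1}{\epsilon}-1>1$). Replacing your ``$k=2$'' claim with this choice of $k$ (and the one-line computation leading to it) fixes the proof; everything else in your write-up can stand as is.
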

\begin{proof}
	We give a proof for RT-PostQFAs and it can easily be extended for RT-PostPFAs.
	$ M' $ can be obtained by tensoring $ k $ copies of $ \mathcal{M} $, where
	the new accepting (resp., rejecting) postselection states, $ Q_{pa}' $  (resp., $ Q_{pr}' $),
	are $ \otimes_{i=1}^{k} Q_{pa} $ (resp., $ \otimes_{i=1}^{k} Q_{pa} $),
	where $ Q_{pa} $ (resp., $ Q_{pr} $) are accepting postselection states of $ \mathcal{M} $.
	
	Let $ \rho_{\tilde{w}} $ and $ \rho_{\tilde{w}}' $ be the respectively density matrices of 
	$ \mathcal{M} $ and $ \mathcal{M}' $
	after reading $ \tilde{w} $ for a given input string $ w \in \Sigma^{*} $. 
	By definition, we have
	\begin{equation}
		p_{\mathcal{M}}^{a}(w) = \sum_{q_{i} \in Q_{pa} }\rho_{\tilde{w}}[i,i],
		~~~~
		p_{\mathcal{M}'}^{a}(w) = \sum_{q_{i'} \in Q_{pa}' }\rho_{\tilde{w}}[i',i']		
	\end{equation}
	and 
	 \begin{equation}
		p_{\mathcal{M}}^{r}(w) = \sum_{q_{i} \in Q_{pr} }\rho_{\tilde{w}}[i,i],
		~~~~
		p_{\mathcal{M}'}^{r}(w) = \sum_{q_{i'} \in Q_{pr}' }\rho_{\tilde{w}}[i',i'].
	\end{equation}
	By using the equality $ \rho_{\tilde{w}}' = \otimes_{i=1}^{k} \rho_{\tilde{w}}  $,
	the following can be obtained after a straightforward calculation:
	\begin{equation}
		p_{\mathcal{M}'}^{a}(w) = \left( p_{\mathcal{M}}^{a}(w) \right)^{k}
	\end{equation}
	and
	\begin{equation}
		p_{\mathcal{M}'}^{r}(w) = \left( p_{\mathcal{M}}^{r}(w) \right)^{k}.
	\end{equation}
	
	We examine the case of $ w \in L $ (the case $ w \notin L $ is symmetric).
	Since $ L $ is recognized by $ \mathcal{M} $ with error bound $ \epsilon $, 
	we have
	\begin{equation}
		 \frac{p_{\mathcal{M}}^{r}(w)}{p_{\mathcal{M}}^{a}(w)}
		 \leq 
		 \frac{\epsilon}{1-\epsilon}.
	\end{equation}
	If $ L $ is recognized by $ \mathcal{M}' $ with error bound $ \epsilon^{2} $,
	we must have
	\begin{equation}
		\frac{p_{\mathcal{M}'}^{r}(w)}{p_{\mathcal{M}'}^{a}(w)}
		\leq
		\frac{\epsilon^{2}}{1-\epsilon^{2}}.
	\end{equation}
	Thus, any $ k $ satisfying the following inequality provides the desired machine $ \mathcal{M}' $:
	\begin{equation}		
		\left( \frac{\epsilon}{1-\epsilon} \right)^{k}
		\leq		
		\frac{\epsilon^{2}}{1-\epsilon^{2}}
	\end{equation}
	due to the fact that 
	\begin{equation}
		\label{berr:eq:k}
		\frac{p_{\mathcal{M}'}^{r}(w)}{p_{\mathcal{M}'}^{a}(w)} =
		\left( \frac{p_{\mathcal{M}}^{r}(w)}{p_{\mathcal{M}}^{a}(w)} \right)^{k}.
	\end{equation}
	By solving Equation \ref{berr:eq:k}, we can get
	\begin{equation}
		k = 1 + \left\lceil \frac{ \log \left( \frac{1}{\epsilon} + 1 \right) }{ 
		\log \left( \frac{1}{\epsilon} - 1 \right) } \right\rceil.
	\end{equation}
	Therefore, for any $ 0 < \epsilon < \frac{1}{2} $, we can find a value for $ k $.
\end{proof}

\begin{corollary}
	If $ L $ is recognized by RT-PostQFA (resp., RT-PostPFA) $ \mathcal{M} $ with error bound 
	$ 0 < \epsilon < \frac{1}{2}  $,
	then there exists a RT-PostQFA (resp., RT-PostPFA), say $ \mathcal{M}' $, recognizing $ L $ with error bound 
	$ \epsilon' < \epsilon $ such that $ \epsilon' $ can be arbitrarily close to 0.
\end{corollary}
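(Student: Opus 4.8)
The plan is to obtain $\mathcal{M}'$ by iterating the construction of the preceding lemma a finite number of times. That lemma already shows that from any RT-PostQFA (resp., RT-PostPFA) recognizing $L$ with error bound $\epsilon \in (0,\frac{1}{2})$ one can build another machine of the same type recognizing $L$ with error bound $\epsilon^{2}$; since $\epsilon^{2} < \epsilon < \frac{1}{2}$, the hypothesis of the lemma is again satisfied, so the construction may be applied to the new machine as well.

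First I would define a sequence of machines $\mathcal{M}_{0} = \mathcal{M}, \mathcal{M}_{1}, \mathcal{M}_{2}, \ldots$, where $\mathcal{M}_{t+1}$ is obtained from $\mathcal{M}_{t}$ by the lemma, so that $\mathcal{M}_{t}$ recognizes $L$ with error bound $\epsilon_{t} = \epsilon^{2^{t}}$. Each $\mathcal{M}_{t}$ is a legitimate RT-PostQFA (resp., RT-PostPFA): at stage $t$ the lemma tensors together $k_{t} = 1 + \lceil \log(\frac{1}{\epsilon_{t}}+1) / \log(\frac{1}{\epsilon_{t}}-1) \rceil$ copies of the current machine, and $k_{t}$ is finite because $\epsilon_{t} \in (0,\frac{1}{2})$; hence $\mathcal{M}_{t}$ still has finitely many states, and its postselection-state condition $tr(P_{p}\rho_{|\tilde{w}|}) > 0$ (resp., $\sum_{q_{i}\in Q_{p}} v_{|\tilde{w}|}[i] > 0$) is inherited from the tensor construction exactly as in the proof of the lemma.

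Next, given an arbitrary target $\epsilon' \in (0,\epsilon)$, I would choose $t$ large enough that $\epsilon^{2^{t}} \le \epsilon'$. Such a $t$ exists because $0 < \epsilon < 1$ implies $\epsilon^{2^{t}} \to 0$ as $t \to \infty$; concretely it suffices to take any integer $t \ge \log_{2}\!\bigl(\log(1/\epsilon')/\log(1/\epsilon)\bigr)$. Setting $\mathcal{M}' = \mathcal{M}_{t}$ then yields a machine of the required type recognizing $L$ with error bound $\epsilon^{2^{t}} \le \epsilon'$, and since $\epsilon'$ was arbitrary in $(0,\epsilon)$, the attainable error bound can be made arbitrarily close to $0$.

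There is essentially no hard step here; the only point to check with any care is that the iteration never leaves the regime in which the lemma applies, i.e., that every intermediate error bound stays strictly below $\frac{1}{2}$, which is immediate since the sequence $\epsilon^{2^{t}}$ is strictly decreasing and bounded above by $\epsilon < \frac{1}{2}$. I would also note in passing that the number of states of $\mathcal{M}'$ grows like a product $|Q|^{\,k_{0}k_{1}\cdots k_{t-1}}$ of the successive tensor powers, so the construction, while correct, is highly state-inefficient; optimizing this is not needed for the statement as given.
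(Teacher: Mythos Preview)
Your proposal is correct and matches the paper's approach: the paper states this corollary without an explicit proof, treating it as immediate from the preceding lemma, and your argument---iterating the $\epsilon \mapsto \epsilon^{2}$ construction finitely many times until $\epsilon^{2^{t}}$ falls below any prescribed target---is exactly the intended justification. Your additional remarks on why each iterate stays in the lemma's domain and on the state blow-up are accurate and go slightly beyond what the paper spells out.
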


\subsection{Characterization of Realtime Postselection Automata} \label{berr:postsel-QFA-PFA}

\begin{theorem}
	The classes of languages recognized by RT-PFA$ ^{\circlearrowleft} $ and RT-QFA$ ^{\circlearrowleft} $
	with bounded error are identical to PostS and PostQAL, respectively.
\end{theorem}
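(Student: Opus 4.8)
The plan is to prove both equalities by the single elementary observation underlying Lemma~\ref{lemma:acc-rej}: a finite automaton with restart accepts $w$ with probability exactly $\dfrac{p^a}{p^a+p^r}$, where $p^a,p^r$ are the acceptance and rejection probabilities accrued in a \emph{single} round on $\tilde w$; and this conditional (``post\-selected'') quantity is precisely what a postselection automaton outputs by definition. Thus restart machines and postselection machines compute the same value, and translating between them is just relabelling which terminal states/outcomes count as ``accept'', ``reject'', and ``discard''. Concretely I would establish PostS $\subseteq \{L : L$ recognized by a bounded-error RT-PFA$^{\circlearrowleft}\}$ and the reverse inclusion, and their quantum analogues, keeping the error bound unchanged throughout.

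For the probabilistic case, given a bounded-error RT-PFA$^{\circlearrowleft}$ $\mathcal{R}$ I build a RT-PostPFA $\mathcal{P}$ that simulates one round of $\mathcal{R}$: $\mathcal{P}$ copies the within-round transitions of $\mathcal{R}$; wherever $\mathcal{R}$ would halt accepting (resp.\ rejecting) at the end of the round, $\mathcal{P}$ sits in an accepting (resp.\ rejecting) postselection state; and wherever $\mathcal{R}$ would restart --- possibly in the interior of the string --- $\mathcal{P}$ instead enters a dead, non-postselection state that consumes the rest of the input harmlessly. Then $p_{\mathcal{P}}^a(w)=p_{\mathcal{R}}^a(w)$ and $p_{\mathcal{P}}^r(w)=p_{\mathcal{R}}^r(w)$; since $\mathcal{R}$ recognizes a language (so its acceptance value is defined) we have $p_{\mathcal{R}}^a(w)+p_{\mathcal{R}}^r(w)>0$, which is exactly the RT-PostPFA side condition $\sum_{q_i\in Q_p}v_{|\tilde w|}[i]>0$, and Lemma~\ref{lemma:acc-rej} yields $f_{\mathcal{P}}^a(w)=f_{\mathcal{R}}^a(w)$. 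Conversely, from a RT-PostPFA $\mathcal{P}$ I build a RT-PFA$^{\circlearrowleft}$ $\mathcal{R}$ with the same states and transitions which, upon reading $\dollar$, routes states of $Q_{pa}$ to a halt-accept action, states of $Q_{pr}$ to a halt-reject action, and all remaining states to a restart; the nonzero-postselection hypothesis forces $p^a+p^r>0$ in one round, so $\mathcal{R}$ halts with probability $1$ (indeed in finite expected time, cf.\ Lemma~\ref{lemma:expected-runtime}), and Lemma~\ref{lemma:acc-rej} again gives $f_{\mathcal{R}}^a(w)=f_{\mathcal{P}}^a(w)$.

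For the quantum case I would first pass, via Theorem~\ref{berr:thm:RT-GFA-restart-simulated-by-RT-QFA-restart}, between RT-QFA$^{\circlearrowleft}$ and its general form RT-GQFA$^{\circlearrowleft}$, because RT-PostQFA is defined with arbitrary admissible operators $\{\mathcal{E}_\sigma\}$ rather than Kondacs--Watrous unitaries. A RT-GQFA$^{\circlearrowleft}$ performs one end-of-computation measurement with projectors $(P_a,P_r,P^{\circlearrowleft})$ onto the partition $(Q_a,Q_r,Q^{\circlearrowleft})$ of its state set, so a single round on $w$ produces $\rho_{\tilde w}=\mathcal{E}_{\tilde w}(\rho_0)$ with $p^a=tr(P_a\rho_{\tilde w})$ and $p^r=tr(P_r\rho_{\tilde w})$. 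This is exactly the data $(Q_{pa},Q_{pr},Q\setminus Q_p)$ of a RT-PostQFA: keeping the same operators and identifying $Q_{pa}=Q_a$, $Q_{pr}=Q_r$ makes $p^a,p^r$ agree, the side conditions match ($tr(P_p\rho_{\tilde w})>0 \Leftrightarrow p^a+p^r>0$), and Lemma~\ref{lemma:acc-rej} gives equality of acceptance values and hence of error bounds in both directions; to stay with Kondacs--Watrous RT-QFA$^{\circlearrowleft}$s one simply inserts Theorem~\ref{berr:thm:RT-GFA-restart-simulated-by-RT-QFA-restart} at both ends. I expect the arithmetic to be entirely routine; the only points needing care --- and hence the main obstacle --- are (i) absorbing an interior probabilistic restart into a dead state so that it neither contributes to $p^a,p^r$ of the round nor violates realtimeness, and (ii) keeping the two sets of technical conventions aligned (which register/measurement model is in force, that restarts target the start state, and that the ``each round halts with probability $1$'' hypothesis and the ``postselection probability nonzero'' hypothesis are each precisely what licenses the opposite translation). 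Once these are reconciled, the theorem follows.
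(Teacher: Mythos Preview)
Your proposal is correct and follows essentially the same approach as the paper: both directions amount to relabelling accept/reject/restart outcomes as $Q_{pa}$/$Q_{pr}$/non-postselection (and vice versa), with Lemma~\ref{lemma:acc-rej} providing the bridge, and with Theorem~\ref{berr:thm:RT-GFA-restart-simulated-by-RT-QFA-restart} invoked in the quantum case to pass through RT-GQFA$^{\circlearrowleft}$. The only cosmetic difference is that where you route interior restarts into an explicit dead non-postselection state, the paper simply asserts (as it may, by standard deferral) that restarts and halts can be assumed to occur only at the end of the input.
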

\begin{proof}
	As shown in Theorem \ref{berr:thm:RT-GFA-restart-simulated-by-RT-QFA-restart}, 
	the computational power of RT-GQFA$ ^{\circlearrowleft} $ and RT-QFA$ ^{\circlearrowleft} $
	are identical. Therefore, we assume RT-GQFA$ ^{\circlearrowleft} $ as the restart machine in the 
	remaining part of the proof.
	For a given RT-PostFA, we obtain a machine with restart by converting $ Q_{pa} $ and $ Q_{pr} $ to
	respectively the accepting and rejecting states, and restarting computation at the end of the input 
	in the cases where  the original machine halts in a state not in $ Q_{p} $.
	For a given machine with restart, (we assume the computation is restarted and halted
	only at the end of the input,) we obtain a RT-PostFA by taking the accepting and rejecting states of
	the original machine as the members of $ Q_{pa} $ and $ Q_{pr} $, respectively, and converting the
	remaining states to nonpostselection states.
\end{proof}

\begin{corollary}
	PostQAL and PostS are subsets of the class of the languages recognized
	by 2QFAs and 2PFAs, respectively, with bounded error.
\end{corollary}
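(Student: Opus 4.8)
The plan is to read off the corollary from the theorem it follows, together with the elementary simulation recorded in Lemma \ref{lemma:restart-time}, which says that a restart move costs nothing more than a deterministic right-to-left traversal of the input for a machine whose head is already two-way. By the preceding theorem, PostS (resp. PostQAL) is exactly the class of languages recognized with bounded error by RT-PFA$^{\circlearrowleft}$s (resp. RT-QFA$^{\circlearrowleft}$s), so it suffices to show that every such restart machine can be simulated, with the same error bound, by a 2PFA (resp. 2QFA) without restart.

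First I would take a RT-QFA$^{\circlearrowleft}$ $\mathcal{M}$ recognizing $L$ with error bound $\epsilon<\frac{1}{2}$ and build a 2QFA $\mathcal{M}'$ (in the general sense of Section \ref{qtm:2QFAs}) whose program coincides with that of $\mathcal{M}$ on all transitions, except that wherever $\mathcal{M}$ would reset to its start configuration $\ket{q_1,1}$, $\mathcal{M}'$ instead enters a dedicated nonhalting ``rewind'' state that moves the input head one square to the left per step until it reads $\cent$, whereupon it resumes as $q_1$ at the start. Since the head position of a RT-QFA$^{\circlearrowleft}$ is classical and, after the step-wise measurements, a reset branch is a single configuration, this rewind is a deterministic reversible walk and hence compatible with the well-formedness conditions of the 2QFA model; moreover the general 2QFA model supports exactly the register-mediated measurements used by the Kondacs--Watrous-style base machine. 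The computation trees of $\mathcal{M}$ and $\mathcal{M}'$ are in bijection round by round, with identical accepting, rejecting, and restarting probabilities in each round, so $f_{\mathcal{M}'}^{a}(w)=f_{\mathcal{M}}^{a}(w)$ for every $w$ and $\mathcal{M}'$ recognizes $L$ with the same error bound $\epsilon$; the only change is the at-most-doubling of the expected runtime on each nonhalting round, exactly as in Lemma \ref{lemma:restart-time}. The probabilistic case is the same construction applied to a RT-PFA$^{\circlearrowleft}$, yielding a 2PFA, and is immediate since no well-formedness obligations arise. It therefore follows that PostQAL is contained in the class of languages recognized with bounded error by 2QFAs and PostS is contained in the class recognized with bounded error by 2PFAs.

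The one point that needs care — and which I regard as the main obstacle — is the quantum well-formedness bookkeeping: one must verify that adjoining the rewind state with its left-moving transitions, together with the measurement that separates the ``restart'' branch from the ``accept''/``reject''/``continue'' branches, still yields a legal operator, i.e. that the conditions of Section \ref{qtm:2QFAs} (or, in the simple case, those of Figure \ref{figure:uni2QFA-wellformedness}) remain satisfied. This is routine because the rewind branch is orthogonal to every other branch and evolves by a permutation of configurations, so the relevant columns stay orthonormal; but it is the place where a careless handling of the base model's measurement structure could break the argument.
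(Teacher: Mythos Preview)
Your proposal is correct and follows essentially the same route as the paper: the corollary is stated without proof, and the intended justification is exactly the combination you give --- the preceding theorem identifying PostS and PostQAL with the bounded-error classes of RT-PFA$^{\circlearrowleft}$s and RT-QFA$^{\circlearrowleft}$s, together with Lemma~\ref{lemma:restart-time}, whose proof replaces each restart by a deterministic left-walk to $\cent$. Your additional discussion of the quantum well-formedness check is more detail than the paper offers (Lemma~\ref{lemma:restart-time} simply asserts the simulation for ``corresponding'' two-way automata), but it is sound and does not depart from the paper's argument.
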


\begin{corollary}
	\label{corollary:L-pal}
	$ L_{pal} $ is a member of PostQAL but not PostS.
\end{corollary}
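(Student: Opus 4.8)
The plan is to assemble the statement from three ingredients that are already in place: the RT-QFA$^{\circlearrowleft}$ algorithm for $L_{pal}$, the characterization of PostQAL and PostS in terms of automata with restart, and a known impossibility result for $2$PFAs. First I would invoke Theorem~\ref{berr:L-pal-for-RT-QFA-restart}, which furnishes, for every $\epsilon>0$, a $15$-state RT-QFA$^{\circlearrowleft}$ that accepts every $w\in L_{pal}$ with certainty and rejects every $w\notin L_{pal}$ with probability at least $1-\epsilon$. In particular, fixing any $\epsilon<\tfrac12$, this machine recognizes $L_{pal}$ with bounded error. By the theorem immediately preceding this corollary (the characterization ``the class of languages recognized by RT-QFA$^{\circlearrowleft}$ with bounded error is PostQAL''), this places $L_{pal}\in$ PostQAL.

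For the negative half, I would argue by contradiction from $L_{pal}\in$ PostS. By Corollary~\ref{corollary:L-pal}'s companion statement (``PostQAL and PostS are subsets of the classes recognized by $2$QFAs and $2$PFAs, respectively, with bounded error''), membership in PostS would give a $2$PFA recognizing $L_{pal}$ with bounded error. But it is a known fact~\cite{DS92,FK94} --- the same one cited in the proof that RT-QFA$^{\circlearrowleft}$s strictly contain RT-PFA$^{\circlearrowleft}$s --- that $L_{pal}$ cannot be recognized with bounded error by any $2$PFA. This contradiction yields $L_{pal}\notin$ PostS, completing the argument.

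There is essentially no calculation to grind through here: the work has all been done in Theorem~\ref{berr:L-pal-for-RT-QFA-restart} and in the characterization theorem, so the corollary is a two-line chase through already-established equivalences and a cited lower bound. The only point deserving care is to make sure the chain PostS $\subseteq$ bounded-error $2$PFA languages is applied in the right direction (it is an inclusion, which is exactly what a proof by contradiction needs), and that the $2$PFA impossibility result for $L_{pal}$ is stated for bounded error rather than merely for cutpoint isolation of some weaker kind. If one wanted to avoid routing through $2$PFAs, an alternative would be to show directly that PostS consists only of stochastic-type languages closed in a way that excludes $L_{pal}$, but invoking the $2$PFA route is cleaner and uses only facts already recorded in the excerpt.

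Thus the proof would read, in full: $L_{pal}\in$ PostQAL by Theorem~\ref{berr:L-pal-for-RT-QFA-restart} together with the characterization of PostQAL; and $L_{pal}\notin$ PostS since otherwise $L_{pal}$ would be recognized with bounded error by a $2$PFA, contradicting \cite{DS92,FK94}.
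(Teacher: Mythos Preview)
Your proposal is correct and follows essentially the same approach as the paper: the paper's proof also invokes the RT-QFA$^{\circlearrowleft}$ algorithm for $L_{pal}$ (together with the PostQAL characterization) for membership, and appeals to the fact \cite{DS92} that $L_{pal}$ is not recognizable with bounded error by any 2PFA (together with the PostS $\subseteq$ bounded-error-2PFA inclusion) for non-membership. Your write-up is merely more explicit about the intermediate steps.
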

\begin{proof}
	$ L_{pal} \in $ PostQAL since there is a RT-QFA$ ^{\circlearrowleft} $ algorithm for $ L_{pal} $ 
	(See \ref{berr:reset-quantum} and \cite{YS10B}).
	However,  $ L_{pal} $ cannot be recognized with bounded error even by 2PFAs \cite{DS92}.
\end{proof}

\begin{theorem}
	\label{theorem:post-closure}
	PostQAL and PostS are closed under complementation, union, and intersection.
\end{theorem}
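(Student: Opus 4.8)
The plan is to handle the three closure properties using standard product constructions, exploiting the fact that postselection automata only differ from ordinary realtime automata in how the final decision is extracted. First I would prove closure under complementation, which is immediate: given a RT-PostQFA (resp.\ RT-PostPFA) $\mathcal{M}$ recognizing $L$ with error bound $\epsilon$, the machine $\mathcal{M}'$ obtained by swapping the roles of $Q_{pa}$ and $Q_{pr}$ (leaving $Q_p$, the transition operators, and everything else unchanged) satisfies $p^a_{\mathcal{M}'}(w)=p^r_{\mathcal{M}}(w)$ and $p^r_{\mathcal{M}'}(w)=p^a_{\mathcal{M}}(w)$ for all $w$, since the normalizing denominator $p^a+p^r$ is the same and the postselection-validity condition $tr(P_p\rho_{|\tilde w|})>0$ (resp.\ $\sum_{q_i\in Q_p}v_{|\tilde w|}[i]>0$) is unaffected. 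Hence $\mathcal{M}'$ recognizes $\overline{L}$ with the same error bound $\epsilon$.

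Next I would do intersection, which is the technically central step. Let $L_1,L_2$ be recognized by RT-PostQFAs $\mathcal{M}_1,\mathcal{M}_2$ with error bounds made small enough (using the amplification corollary just proved, we may assume both have error $\epsilon$ as small as we like). Form the tensor-product machine $\mathcal{M}=\mathcal{M}_1\otimes\mathcal{M}_2$, whose operators are $\mathcal{E}_\sigma = \mathcal{E}^{(1)}_\sigma\otimes\mathcal{E}^{(2)}_\sigma$ and whose initial state is $\ket{q_1^{(1)}}\ket{q_1^{(2)}}$; this is again a well-formed RT-PostQFA, and its density matrix factors as $\rho_{|\tilde w|}=\rho^{(1)}_{|\tilde w|}\otimes\rho^{(2)}_{|\tilde w|}$. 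Choose the postselection state sets
\begin{equation}
Q_{pa} = Q_{pa}^{(1)}\times Q_{pa}^{(2)}, \qquad
Q_{pr} = \bigl(Q_{pa}^{(1)}\times Q_{pr}^{(2)}\bigr)\cup\bigl(Q_{pr}^{(1)}\times Q_{pa}^{(2)}\bigr)\cup\bigl(Q_{pr}^{(1)}\times Q_{pr}^{(2)}\bigr),
\end{equation}
so that $Q_p = Q_{pa}\cup Q_{pr} = Q_p^{(1)}\times Q_p^{(2)}$ and postselection-validity follows from that of the factors (the product of two positive traces is positive). Writing $a_i=p^a_{\mathcal{M}_i}(w)$, $r_i=p^r_{\mathcal{M}_i}(w)$, one computes $p^a_{\mathcal{M}}(w)=a_1a_2$ and $p^r_{\mathcal{M}}(w)=a_1r_2+r_1a_2+r_1r_2$. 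Then I would verify: if $w\in L_1\cap L_2$, both ratios $r_i/a_i\le\epsilon/(1-\epsilon)$ are small, so $p^r_{\mathcal{M}}/p^a_{\mathcal{M}} = r_2/a_2 + r_1/a_1 + (r_1/a_1)(r_2/a_2)$ is bounded by a quantity that tends to $0$ with $\epsilon$; whereas if $w\notin L_1\cap L_2$, at least one factor has $a_i/r_i\le\epsilon/(1-\epsilon)$, and a short case analysis shows $p^a_{\mathcal{M}}/p^r_{\mathcal{M}}$ is then small. Picking $\epsilon$ small enough at the outset makes both bounds lie in $[0,\frac12)$, so $\mathcal{M}$ recognizes $L_1\cap L_2$ with bounded error. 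The RT-PostPFA case is identical with state-vector entries replacing diagonal density-matrix entries. Finally, union follows from complementation and intersection via De~Morgan, $L_1\cup L_2 = \overline{\overline{L_1}\cap\overline{L_2}}$.

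The main obstacle I expect is purely bookkeeping rather than conceptual: getting the inequality estimates in the intersection case clean enough to conclude a genuine bounded-error gap, in particular checking that all the cross-terms in $p^a_{\mathcal{M}}$ and $p^r_{\mathcal{M}}$ are distributed correctly between $Q_{pa}$ and $Q_{pr}$ so that the "membership'' side dominates for $w$ in the language and the "non-membership'' side dominates otherwise, uniformly over all $w$. Once the amplification corollary is invoked to drive $\epsilon$ arbitrarily small, the arithmetic closes, but one must be careful that the number of tensor copies needed depends only on the target error and not on $w$ — which it does, since the per-input ratios are uniformly bounded by $\epsilon/(1-\epsilon)$ from the start.
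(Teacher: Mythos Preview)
Your proposal is correct and matches the paper's approach: complementation by swapping $Q_{pa}$ and $Q_{pr}$, and intersection by the tensor product $\mathcal{M}_1\otimes\mathcal{M}_2$ with exactly the postselection sets you describe. The only cosmetic difference is that the paper handles union by the dual direct construction (taking $Q_{pr}'=Q_{pr}^{(1)}\times Q_{pr}^{(2)}$ and $Q_{pa}'$ the rest of $Q_p^{(1)}\times Q_p^{(2)}$) rather than via De~Morgan, and fixes a concrete error bound ($\epsilon<\tfrac14$) instead of appealing to amplification---but these are equivalent routes to the same argument.
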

\begin{proof}
	If a language is recognized by a RT-PostFA with bounded error,
	by swapping the accepting and rejecting postselection states, we obtain a new RT-PostFA
	recognizing the complement of the language with bounded error.
	Therefore, both classes are closed under complementation.
	
	Let $ L_{1} $ and $ L_{2} $ be members of PostQAL (resp., PostS).
	Then, there are two RT-PostQFAs (resp., RT-PostPFAs) $ \mathcal{P}_{1} $ and $ \mathcal{P}_{2} $	
	recognizing $ L_{1} $ and $ L_{2} $ with error bound $ \epsilon < \frac{3}{4} $, respectively.
	Moreover, let $ Q_{pa_{1}} $ and $ Q_{pr_{1}} $ (resp., $ Q_{pa_{2}} $ and $ Q_{pr_{2}} $)
	represent the sets of the accepting and rejecting postselection states of $ \mathcal{P}_{1} $
	(resp., $ \mathcal{P}_{2} $), respectively, and let $ Q_{p_{1}} =  Q_{pa_{1}} \cup Q_{pr_{1}} $
	and $ Q_{p_{2}} =  Q_{pa_{2}} \cup Q_{pr_{2}} $.
	By tensoring  $ \mathcal{P}_{1} $ and $ \mathcal{P}_{2} $, we obtain two new machines, 
	say $ \mathcal{M}_{1} $ and $ \mathcal{M}_{2} $, such that
	\begin{itemize}
		\item the sets of the accepting and rejecting postselection states of $ \mathcal{M}_{1} $ is
			\begin{equation}
				Q_{p_{1}} \otimes Q_{p_{2}} \setminus Q_{pr_{1}} \otimes Q_{pr_{2}}
			\end{equation} 
			and
			\begin{equation}
				Q_{pr_{1}} \otimes Q_{pr_{2}},
			\end{equation}
			respectively, and
		\item the sets of the accepting and rejecting postselection states of $ \mathcal{M}_{2} $ is
			\begin{equation}
				Q_{pa_{1}} \otimes Q_{pa_{2}},
			\end{equation} 
			and
			\begin{equation}
				Q_{p_{1}} \otimes Q_{p_{2}} \setminus Q_{pa_{1}} \otimes Q_{pa_{2}},
			\end{equation}
			respectively.
	\end{itemize}
	Thus, the following inequalities can be verified for a given input string $ w \in \Sigma^{*} $:
	\begin{itemize}
		\item if $ w \in L_{1} \cup L_{2} $, $ f_{\mathcal{M}_{1}}^{a}(w) \ge \frac{15}{16} $;
		\item if $ w \notin L_{1} \cup L_{2} $, $ f_{\mathcal{M}_{1}}^{a}(w) \leq \frac{7}{16} $;
		\item if $ w \in L_{1} \cap L_{2} $, $ f_{\mathcal{M}_{2}}^{a}(w) \ge \frac{9}{16} $; 
		\item if $ w \notin L_{1} \cap L_{2} $, $ f_{\mathcal{M}_{2}}^{a}(w) \leq \frac{1}{16} $.
	\end{itemize}
	We can conclude that both classes are closed under union and intersection.
\end{proof}

\begin{theorem}
	\label{theorem:PostQ-subset-Q}
	PostQAL and PostS are subsets of S (QAL).
\end{theorem}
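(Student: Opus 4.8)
The plan is to reduce the statement to the equivalence $\mathrm{QAL}=\mathrm{S}$ (the corollary following Lemma~\ref{qtm:lem:RT-QFA-to-RT-GFA}) by linearizing a postselection automaton exactly as in the proof of that lemma, observing that the postselection state sets enter only through the choice of the \emph{final} vector while leaving the transition dynamics untouched. So it suffices, given a language $L$ recognized with bounded error by a RT-PostQFA $\mathcal{M}$, to produce a GFA that separates $L$ at some real cutpoint. First I would unwind the definitions: $L\in\mathrm{PostQAL}$ with bounded error means there is a $p\in(\tfrac12,1]$ with $f_{\mathcal{M}}^a(w)\ge p$ for $w\in L$ and $f_{\mathcal{M}}^a(w)\le 1-p$ for $w\notin L$, where $f_{\mathcal{M}}^a(w)=p_{\mathcal{M}}^a(w)/(p_{\mathcal{M}}^a(w)+p_{\mathcal{M}}^r(w))$ and the denominator is strictly positive by the postselection condition. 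Hence
\[
	w\in L \iff f_{\mathcal{M}}^a(w)>\tfrac12 \iff p_{\mathcal{M}}^a(w)>p_{\mathcal{M}}^r(w).
\]

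Next I would apply the $vec$ construction of Lemma~\ref{qtm:lem:RT-QFA-to-RT-GFA} with the register operators of $\mathcal{M}$: put $v_0'=vec(\mathcal{E}_{\cent}(\rho_0))$, $A_\sigma'=\sum_i E_{\sigma,i}\otimes E_{\sigma,i}^{*}$ for $\sigma\in\Sigma$, and define two final row vectors
\[
	f_{pa}'=vec(P_{pa})^{\trans}\sum_i E_{\dollar,i}\otimes E_{\dollar,i}^{*},
	\qquad
	f_{pr}'=vec(P_{pr})^{\trans}\sum_i E_{\dollar,i}\otimes E_{\dollar,i}^{*}.
\]
The same verification as in that lemma (using $tr(A^{\trans}B)=vec(A)^{\trans}vec(B)$ and Hermiticity of the projectors) gives $f_{pa}'A_w'v_0'=p_{\mathcal{M}}^a(w)$ and $f_{pr}'A_w'v_0'=p_{\mathcal{M}}^r(w)$ for every $w$, with \emph{the same} matrices $A_\sigma'$ and the same initial vector. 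Therefore $g(w):=p_{\mathcal{M}}^a(w)-p_{\mathcal{M}}^r(w)=(f_{pa}'-f_{pr}')A_w'v_0'$ is computed by a single complex GFA, which I would then turn into a genuine real-valued GFA $\mathcal{G}$ by the standard replacement of each complex entry by a $2\times2$ real block, exactly as at the end of the proof of Lemma~\ref{qtm:lem:RT-QFA-to-RT-GFA}. Then $f_{\mathcal{G}}(w)=g(w)$, so $L=\{w\mid f_{\mathcal{G}}(w)>0\}$, i.e.\ $\mathcal{G}$ recognizes $L$ with strict cutpoint $0$; by the definition of $\mathrm{S}$ (or Fact~\ref{fact:GFA-GFA} if a cutpoint in $(0,1)$ is preferred) this yields $L\in\mathrm{S}=\mathrm{QAL}$.

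For $\mathrm{PostS}$ the argument goes through verbatim and is even simpler: a RT-PostPFA already supplies stochastic matrices $A_{\sigma\in\tilde\Sigma}$, a stochastic initial vector $v_0$, and $g(w)=p_{\mathcal{P}}^a(w)-p_{\mathcal{P}}^r(w)=fA_{\tilde w}v_0$, where $f$ has entry $+1$ on the accepting postselection states, $-1$ on the rejecting ones, and $0$ elsewhere — this is already a GFA, so $\mathrm{PostS}\subseteq\mathrm{S}$ by the identical reasoning. (Alternatively one may first observe $\mathrm{PostS}\subseteq\mathrm{PostQAL}$, since the exact simulation of Lemma~\ref{qtm:lem:pm-simulated-by-qm} carries postselection states over directly, and then prove only the quantum inclusion.) The one point that needs a little care — and the closest thing to an obstacle here — is checking that the bounded-error promise collapses cleanly to the threshold criterion $g(w)>0\iff w\in L$: this relies on the strict positivity of $p_{\mathcal{M}}^a(w)+p_{\mathcal{M}}^r(w)$ together with $f_{\mathcal{M}}^a(w)$ being bounded away from $\tfrac12$ on both sides, so that no input has $g(w)=0$ and the sign of $g$ decides membership; beyond this everything is routine linearization already done in the excerpt.
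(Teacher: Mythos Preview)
Your proof is correct, but it takes a different route from the paper's. The paper gives a one-line operational construction: take the RT-PostFA and turn it into an ordinary RT-QFA (resp.\ RT-PFA) by declaring the states in $Q_{pa}$ to be accepting and, at the end of the computation, routing every nonpostselection state to an accepting state with probability $\tfrac12$. The resulting acceptance probability is exactly $p_{\mathcal{M}}^{a}(w)+\tfrac12\bigl(1-p_{\mathcal{M}}^{a}(w)-p_{\mathcal{M}}^{r}(w)\bigr)=\tfrac12+\tfrac12\bigl(p_{\mathcal{M}}^{a}(w)-p_{\mathcal{M}}^{r}(w)\bigr)$, so the cutpoint $\tfrac12$ separates $L$ for precisely the reason you identified, namely that bounded error forces $\operatorname{sign}(p^{a}-p^{r})$ to decide membership. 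You instead linearize via the $vec$ machinery of Lemma~\ref{qtm:lem:RT-QFA-to-RT-GFA}, building a GFA that computes $g(w)=p^{a}(w)-p^{r}(w)$ directly and recognizing $L$ at cutpoint $0$. Both arguments pivot on the same inequality; the paper's stays inside the automaton model and is shorter, while yours makes the reduction to $\mathrm{S}$ fully explicit through Turakainen's GFA framework and would generalize more readily to other ``difference of two tracked quantities'' situations.
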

\begin{proof}
	A given RT-PostFA can be converted to its corresponding standard model (without postselection) as follows:
	\begin{itemize}
		\item All nonpostselection states of the RT-PostFA are made to transition to accepting states
			with probability $ \frac{1}{2} $ at the end of the computation.
		\item All members of $ Q_{pa} $ are accepting states in the new machine.
	\end{itemize}
	Therefore, for the members, the overall accepting probability of the new machine 
	exceeds $ \frac{1}{2} $, and for nonmembers, it can be at most $ \frac{1}{2} $.
\end{proof}

By using the fact that S is not closed under union and intersection
\cite{Fl72,Fl74,La74}, Corollary \ref{corollary:L-pal}, and Theorems
\ref{theorem:post-closure} and \ref{theorem:PostQ-subset-Q},
we obtain the following corollary.

\begin{corollary}
	PostS $ \subsetneq $ PostQAL $ \subsetneq $ S (QAL).
\end{corollary}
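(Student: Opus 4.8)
The plan is to glue together four facts already established in this section, namely the identification of PostS and PostQAL with the bounded-error classes of the restart models, Theorem \ref{berr:thm:pfa-restart-simulated-by-qfa-restart}, Corollary \ref{corollary:L-pal}, Theorems \ref{theorem:post-closure} and \ref{theorem:PostQ-subset-Q}, and the classical non-closure of S. The chain to be proved is PostS $\subseteq$ PostQAL $\subseteq$ S together with strictness at both ends (recalling that S $=$ QAL).

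First I would record the two inclusions. For PostS $\subseteq$ PostQAL, I would use the theorem of this section that equates PostS (resp.\ PostQAL) with the class of languages recognized with bounded error by RT-PFA$^{\circlearrowleft}$s (resp.\ RT-QFA$^{\circlearrowleft}$s), and then apply Theorem \ref{berr:thm:pfa-restart-simulated-by-qfa-restart}, which converts any $n$-state RT-PFA$^{\circlearrowleft}$ into a $(2n+4)$-state RT-QFA$^{\circlearrowleft}$ with the same error bound. (Equivalently, one can simulate a given RT-PostPFA directly by a RT-PostQFA via the exact PFA-by-QFA simulation of Lemma \ref{qtm:lem:pm-simulated-by-qm}, keeping the partition of postselection states fixed so that the normalized acceptance and rejection probabilities coincide.) For PostQAL $\subseteq$ S I would simply cite Theorem \ref{theorem:PostQ-subset-Q}.

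Next I would argue that the left inclusion is proper. This is immediate from Corollary \ref{corollary:L-pal}: the palindrome language $ L_{pal} $ lies in PostQAL by the RT-QFA$^{\circlearrowleft}$ algorithm of Section \ref{berr:reset-quantum}, whereas $ L_{pal} \notin $ PostS, since PostS is contained in the bounded-error class of 2PFAs (the corollary preceding Theorem \ref{theorem:post-closure}) and $ L_{pal} $ is not recognized with bounded error by any 2PFA \cite{DS92}. Hence PostS $\subsetneq$ PostQAL. For the right inclusion I would run a closure-property argument: if PostQAL $=$ S, then by Theorem \ref{theorem:post-closure} the class S would be closed under union and intersection, contradicting the classical results \cite{Fl72,Fl74,La74}. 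Therefore PostQAL $\subsetneq$ S, and combining everything gives PostS $\subsetneq$ PostQAL $\subsetneq$ S (QAL).

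There is no deep obstacle here; the work is almost entirely in the prior results, and the main thing to be careful about is bookkeeping. One must check that the correspondence between the postselection formalism and the restart formalism is tight enough that the simulation results, the error bounds, and the $ L_{pal} $ witness transfer verbatim, and — for the right-hand strictness — that the non-closure under union/intersection is invoked against PostQAL (the larger of the two post-classes), so that the resulting contradiction is genuinely with S rather than merely with PostS.
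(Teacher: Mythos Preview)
Your proposal is correct and matches the paper's own justification essentially step for step: Corollary~\ref{corollary:L-pal} for the left strictness, Theorem~\ref{theorem:PostQ-subset-Q} for the right inclusion, and Theorem~\ref{theorem:post-closure} together with the non-closure of S under union and intersection for the right strictness. The only difference is that you spell out the inclusion PostS $\subseteq$ PostQAL explicitly (via the restart-model equivalence and Theorem~\ref{berr:thm:pfa-restart-simulated-by-qfa-restart}, or alternatively the direct PFA-by-QFA simulation), whereas the paper leaves this containment implicit.
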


\subsection{Latvian Postselection Automata} \label{berr:latvian-postsel}

In \cite{LSF09,DF10,SLF10}, a somewhat different RT-PostQFA model,
that violates the assumption that the postselection is done on a set
of computation branches having nonzero probability, is presented.
Although the motivation for this feature is not clear in
\cite{LSF09,DF10,SLF10}, such a violation seems reasonable
due to some fundamental reasons related to the capabilities of finite automata.
For example, when we are given ``more'' resources, we can create some
computational paths with
sufficiently small probabilities as a part of the postselection set such that
they do not affect the overall computation but can help to accept or to reject
the input as desired whenever there is zero probability of observing the other postselection states.
However, we do not know how to implement such a solution
for quantum or probabilistic automata\footnote{As a similar issue,
we do not know how to increase or decrease an
acceptance probability that is exactly equal to the cutpoint,
for a given quantum or probabilistic automaton (in most cases).
A related open problem is whether $ coS = S $ or not \cite{Pa71,YS10A}
even when we restrict ourselves to computable transition probabilities \cite{Di77}.}.

We call the machines defined in \cite{LSF09,SLF10} Latvian RT-PostFAs (RT-LPostFAs). These
have an additional component $ \tau \in \{A,R\} $ such that 
whenever the postselection probability is zero for a given input string $ w \in \Sigma^{*} $,
\begin{itemize}
       \item $ w $ is accepted if $ \tau = A $,
       \item $ w $ is rejected if $ \tau = R $.
\end{itemize}
The bounded-error classes corresponding to the RT-LPostPFA and RT-LPostQFA
models are called LPostS and LPostQAL, respectively.

\begin{theorem}
	LPostS = PostS.
\end{theorem}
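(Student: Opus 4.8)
The plan is to establish the two inclusions $\mathrm{PostS} \subseteq \mathrm{LPostS}$ and $\mathrm{LPostS} \subseteq \mathrm{PostS}$. The first inclusion is immediate: any RT-PostPFA $\mathcal{P}$ is, syntactically, a RT-LPostPFA in which the extra component $\tau$ is simply never invoked, since by definition a RT-PostPFA has strictly positive postselection probability on every input. Formally one sets $\tau = A$ (or $R$) arbitrarily; because $p_{\mathcal{P}}^{a}(w) + p_{\mathcal{P}}^{r}(w) > 0$ always holds, the behaviour of the machine viewed as a RT-LPostPFA coincides exactly with its behaviour as a RT-PostPFA, so the same language is recognized with the same error bound. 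Hence $\mathrm{PostS} \subseteq \mathrm{LPostS}$.

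The substantive direction is $\mathrm{LPostS} \subseteq \mathrm{PostS}$. Given a RT-LPostPFA $\mathcal{P} = (Q, \Sigma, \{A_\sigma\}, q_1, Q_p, \tau)$ recognizing $L$ with bounded error, I would construct a RT-PostPFA $\mathcal{P}'$ recognizing $L$. The key idea is to manufacture a new pair of postselection states, $q_{pa}'$ and $q_{pr}'$, that carry an exponentially small but nonzero amount of probability at all times, with their relative weight determined by $\tau$ (say, $q_{pa}'$ dominates if $\tau = A$, and $q_{pr}'$ dominates if $\tau = R$). First I would add a ``leak'' branch that, on reading $\cent$, diverts a tiny probability $\eta^{|\tilde w|}$-worth of weight (achieved by moving right with probability $\eta < 1$ on every input symbol, accumulating the residual into a dedicated state) into these new postselection states, while the bulk of the probability mass runs $\mathcal{P}$'s original computation scaled by $(1-\eta')$ for an appropriate constant. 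The original postselection states of $\mathcal{P}$ are kept as postselection states of $\mathcal{P}'$ (rescaled), and $q_{pa}', q_{pr}'$ are added to $Q_{pa}'$ and $Q_{pr}'$ respectively. Because the leak contributes strictly positive weight on every input, $\mathcal{P}'$ satisfies the nonzero-postselection-probability requirement of a RT-PostPFA. When $\mathcal{P}$'s original postselection probability on $w$ is nonzero, that (constant-order) mass dominates the exponentially small leak, so $\mathcal{P}'$ gives essentially the same normalized verdict as $\mathcal{P}$ — one checks the error bound degrades only slightly and can be restored by the amplification lemma (tensoring). When $\mathcal{P}$'s original postselection probability on $w$ is zero, only the leak survives, and its internal $A$-vs-$R$ bias reproduces exactly the verdict dictated by $\tau$, which is what $\mathcal{P}$ does by definition in that case. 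Therefore $\mathcal{P}'$ recognizes $L$ with bounded error, giving $L \in \mathrm{PostS}$.

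The main obstacle I anticipate is bounding the error in the ``nonzero postselection'' case: the leak probability $\eta^{|\tilde w|}$ is input-length-dependent, whereas the original postselection mass $p_{\mathcal{P}}^{a}(w) + p_{\mathcal{P}}^{r}(w)$ could in principle also shrink with $|w|$, so one must verify that the ratio $\eta^{|\tilde w|} / (p_{\mathcal{P}}^{a}(w) + p_{\mathcal{P}}^{r}(w))$ stays bounded — or, more carefully, that even if it does not, the \emph{sign} of the verdict is preserved because the leak's bias always agrees with $\tau$, which in turn agrees with correct membership whenever $\mathcal{P}$ would have relied on $\tau$. Handling this cleanly may require a small case analysis on whether the original postselection mass is ``large'' or ``small'' relative to the leak, and invoking Lemma~\ref{lemma:prej-over-pacc}-style ratio bounds together with the amplification corollary to drive the combined error below $\frac{1}{2}$. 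Once that bookkeeping is done, combining both inclusions yields $\mathrm{LPostS} = \mathrm{PostS}$.
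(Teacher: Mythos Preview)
Your approach differs from the paper's, and the gap you flag is real and not closed by the argument you sketch.

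The paper's proof is much simpler and avoids all the probability bookkeeping. Its key observation is that the set $L'$ of inputs on which the postselection probability of $\mathcal{P}$ vanishes is \emph{regular}: replacing every nonzero transition probability by $1$ turns $\mathcal{P}$ into a RT-NFA whose accepted language is exactly $\overline{L'}$, hence $L'$ is recognized by some RT-DFA $\mathcal{D}$. Tensoring $\mathcal{P}$ with $\mathcal{D}$ yields a RT-PostPFA that deterministically detects membership in $L'$ and, on those inputs, routes all probability to a postselection accept or reject state according to $\tau$; on inputs outside $L'$ it behaves exactly as $\mathcal{P}$. No leak, no ratio analysis, same error bound.

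Your leak construction has a genuine hole. The assertion that on inputs with nonzero postselection mass ``that (constant-order) mass dominates the exponentially small leak'' is false: the postselection mass of a RT-PFA can itself decay exponentially in $|w|$. Your fallback --- that the leak's bias ``agrees with $\tau$, which in turn agrees with correct membership whenever $\mathcal{P}$ would have relied on $\tau$'' --- does not rescue this: $\tau$ is only guaranteed to give the correct verdict when the postselection mass is \emph{exactly} zero, not when it is merely small. Concretely, take $\tau = A$ and $w \notin L$ with nonzero but tiny postselection mass; the original machine correctly rejects (conditionally), but if your leak dominates, the combined machine accepts. Your approach can be repaired by noting that any nonzero postselection mass is at least $p_{\min}^{|\tilde w|}$, where $p_{\min}$ is the smallest nonzero transition probability of $\mathcal{P}$, and then choosing $\eta$ strictly smaller than $p_{\min}$ so that the leak is always strictly dominated --- but you neither state this lower bound nor choose $\eta$ accordingly, so as written the proof is incomplete.
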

\begin{proof}
	We need to show that LPostS $ \subseteq $ PostS.
	Let $ L $ be in LPostS and $ \mathcal{P} $ with $ \tau \in \{A,R\} $ be the RT-LPostPFA
	recognizing $ L $ with error bound $ \epsilon < \frac{1}{2} $.
	Suppose that $ L' $ is the language such that for each member of $ L' $,
	the probability of postselection assigned by $ \mathcal{P} $ is zero.
	By designating all postselection states as accepting states and removing the probability
	values of transitions, we obtain a RT-NFA which recognizes $ \overline{L'} $.
	Thus, there exists a RT-DFA, say $ \mathcal{D} $, recognizing $ L' $.
	
	By combining (tensoring) $ \mathcal{P} $ and $ \mathcal{D} $, a RT-PostPFA, say $ \mathcal{P}' $,
	can be obtained such that
	if the input string is a member of $ L' $, 
	the decision is given deterministically with respect to $ \tau $, and
	if it is not a member of $ L' $ (the probability of the postselection is nonzero), 
	the decision is given by standard postselection procedure.
	Therefore, $ L $ is recognized by $ \mathcal{P}' $ with the same error bound
	and so $ L $ is in PostS, too.
\end{proof}

However, we cannot use the same idea in the quantum case due to the
fact that the class of the languages recognized
by NQFAs, is a proper superclass of the regular languages (see Section \ref{uerr:nondeterministic}).

\begin{theorem}
	NQAL $ \cup $ coNQAL $ \subseteq $ LPostQAL.
\end{theorem}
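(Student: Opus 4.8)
The plan is to prove the two inclusions $\mbox{NQAL} \subseteq \mbox{LPostQAL}$ and $\mbox{coNQAL} \subseteq \mbox{LPostQAL}$ separately; the stated inclusion then follows at once, since $A \subseteq C$ and $B \subseteq C$ give $A \cup B \subseteq C$. Note that we genuinely need both arguments, because $\mbox{NQAL} = $ S$^{\neq}$ and $\mbox{coNQAL} = $ S$^{=}$ are distinct (Fact \ref{fact:closure-properties-Seq-Sneq}), so neither inclusion reduces to the other by complementation.

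First I would handle NQAL. Given $L \in \mbox{NQAL}$, by definition there is a RT-NQFA $\mathcal{M} = (Q,\Sigma,\{\mathcal{E}_{\sigma \in \tilde{\Sigma}}\},q_1,Q_a)$ recognizing $L$ with positive one-sided unbounded error, i.e. $f_{\mathcal{M}}(w) > 0$ exactly when $w \in L$, and $f_{\mathcal{M}}(w) = 0$ otherwise. I would build a RT-LPostQFA $\mathcal{M}'$ with the same state set, transition operators, and start state, but declare $Q_{pa} = Q_a$, $Q_{pr} = \emptyset$ (all other states nonpostselection), and set the Latvian component $\tau = R$. Since the evolution $\rho_j$ is unchanged, $p_{\mathcal{M}'}^{a}(w) = tr(P_{pa}\rho_{|\tilde{w}|}) = tr(P_a \rho_{|\tilde{w}|}) = f_{\mathcal{M}}(w)$ and $p_{\mathcal{M}'}^{r}(w) = 0$ for every $w$. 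Hence for $w \in L$ the total postselection probability $p_{\mathcal{M}'}^{a}(w) + p_{\mathcal{M}'}^{r}(w)$ is positive, postselection is performed, and $f_{\mathcal{M}'}^{a}(w) = 1$; while for $w \notin L$ the postselection probability is $0$, so $\mathcal{M}'$ rejects because $\tau = R$. Thus $\mathcal{M}'$ recognizes $L$ with zero error, and $L \in \mbox{LPostQAL}$.

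For coNQAL I would argue symmetrically: if $L \in \mbox{coNQAL}$, then $\overline{L} \in \mbox{NQAL}$ via some RT-NQFA $\mathcal{N}$ with $f_{\mathcal{N}}(w) > 0 \Leftrightarrow w \notin L$. Reinterpret $\mathcal{N}$ as a RT-LPostQFA $\mathcal{N}'$ by declaring the accepting states of $\mathcal{N}$ to be the rejecting postselection states $Q_{pr}$, taking $Q_{pa} = \emptyset$, and setting $\tau = A$. Then $w \in L$ forces the postselection probability to be $0$, so $\mathcal{N}'$ accepts via $\tau = A$; and $w \notin L$ gives positive postselection probability with $f_{\mathcal{N}'}^{a}(w) = 0$, so $\mathcal{N}'$ rejects. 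Again recognition is exact.

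The only delicate point — and the reason the argument works — is the observation that in the nondeterministic (positive one-sided unbounded error) setting the event ``the postselection probability is exactly zero'' coincides precisely with non-membership (resp. membership), so the Latvian convention governing the all-branches-discarded case can be dialed to produce the correct verdict. I expect the main thing to be careful about is making explicit that this crucially exploits the feature of the RT-LPostQFA model, \emph{not} shared by the ordinary RT-PostQFA, that the postselection states may carry zero total probability; in particular these constructions do not yield legal ordinary RT-PostQFAs, which is consistent with the fact that $\mbox{NQAL} \not\subseteq \mbox{PostQAL}$ (as $\mbox{PostQAL} \subseteq $ S). No probability amplification is needed, since both machines already have error $0$.
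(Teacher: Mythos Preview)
Your proposal is correct and follows essentially the same approach as the paper: for $L \in \mbox{NQAL}$, take the accepting states of the NQFA as the postselection accepting states (with no postselection rejecting states) and set $\tau = R$; for $L \in \mbox{coNQAL}$, take the accepting states of the NQFA for $\overline{L}$ as the postselection rejecting states (with no postselection accepting states) and set $\tau = A$. The paper's proof is simply more terse, omitting the verification details you spell out.
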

\begin{proof}
	For $ L \in $ NQAL, take the accepting states of the NQFA recognizing $ L $ as postselection accepting states
	with $ \tau = R $. (There are no postselection rejecting states.)
	For $ L \in $ coNQAL, take the accepting states of the NQFA recognizing $ \overline{L} $
	as postselection rejecting states with $ \tau = A $. (There are no postselection accepting states.)
\end{proof}

Interestingly, all languages in NQAL $ \cup $ coNQAL are recognized with zero error by 
RT-LPostQFAs\footnote{L\={a}ce et al. \cite{LSF09} describe a zero error machine for $ L_{pal} $.}.

\begin{theorem}
	LPostQAL is closed under complementation.
\end{theorem}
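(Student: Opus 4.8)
The plan is to show that the class \textbf{LPostQAL} is closed under complementation by exhibiting, for any given \textbf{RT-LPostQFA} recognizing a language $L$, an \textbf{RT-LPostQFA} recognizing $\overline{L}$ with the same (or a suitably transformed) error bound. The natural first move is to imitate the argument used in Theorem~\ref{theorem:post-closure} for \textbf{PostQAL} and \textbf{PostS}: given a postselection machine for $L$, simply swap the roles of the accepting and rejecting postselection states, $Q_{pa} \leftrightarrow Q_{pr}$, to obtain a machine whose postselection outcome is flipped on every input. Since the postselection probabilities $p_{\mathcal{M}}^{a}(w)$ and $p_{\mathcal{M}}^{r}(w)$ are thereby interchanged, and the normalized acceptance value becomes $f_{\mathcal{M}}^{r}(w)$, bounded error for $L$ turns into bounded error for $\overline{L}$.

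The extra ingredient that distinguishes the Latvian model is the default component $\tau \in \{A,R\}$, which fixes the verdict precisely on those input strings $w$ for which the postselection probability $tr(P_{p}\rho_{|\tilde w|})$ is zero. The key observation is that this ``zero-postselection'' set of inputs does not depend on which postselection states are labelled accepting versus rejecting --- it depends only on the set $Q_{p} = Q_{pa}\cup Q_{pr}$ itself, which is left untouched by the swap. So after swapping $Q_{pa}$ and $Q_{pr}$, the behaviour on ordinary (nonzero-postselection) inputs is exactly complemented, and on the zero-postselection inputs the original machine was giving a fixed verdict determined by $\tau$; to complement the language there, I would simply flip $\tau$, i.e.\ set $\tau' = R$ if $\tau = A$ and $\tau' = A$ if $\tau = R$. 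First I would verify that the zero-postselection inputs of the new machine coincide with those of the old one (immediate, since $Q_p$ and all the operators $\mathcal{E}_\sigma$ are unchanged), then check the two cases: for $w$ with positive postselection probability, $f^{a}_{\mathcal{M}'}(w) = f^{r}_{\mathcal{M}}(w)$, so $w\in\overline L$ is accepted with probability at least $1-\epsilon$ and $w\notin\overline L$ with probability at most $\epsilon$; for $w$ with zero postselection probability, the verdict is now $\tau'$, which is correct for $\overline L$ exactly because $\tau$ was correct for $L$.

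I do not expect any serious obstacle here --- the proof is essentially a one-line construction plus a short verification --- but the one point that requires a moment's care is confirming that ``bounded error'' in the Latvian setting is genuinely preserved: one must make sure that the error bound $\epsilon < \tfrac12$ carries over verbatim (it does, since interchanging $p^a$ and $p^r$ sends the ratio $p^r/p^a$ to $p^a/p^r$ and vice versa, exactly matching the defining inequalities of bounded-error recognition in Section~\ref{sbc:isolated-cutpoint}). A secondary subtlety worth a sentence is that the well-formedness of the quantum machine is unaffected, since we only relabel internal states and do not alter any $\mathcal{E}_\sigma$; hence $\mathcal{M}'$ is a legitimate \textbf{RT-LPostQFA}. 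With these checks in place the theorem follows, and one could remark (paralleling the NQAL discussion) that unlike the classical analysis there is no need to detour through \textbf{LPostS}.
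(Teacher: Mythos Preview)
Your proposal is correct and matches the paper's approach exactly: swap $Q_{pa}$ with $Q_{pr}$ and flip $\tau$ to its complement in $\{A,R\}$. The paper's proof is the one-line construction you describe, without the additional verification details you spell out.
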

\begin{proof}
	If a language is recognized by a RT-LPostQFA with bounded error,
	by swapping the accepting and rejecting postselection states and
	by setting $ \tau $ to $ \{A,R\} \setminus \tau $, we obtain a new RT-LPostQFA
	recognizing the complement of the language with bounded error.
	Therefore, LPostQAL is closed under complementation.
\end{proof}

\begin{theorem}
	LPostQAL $ \subseteq $ uQAL (uS).
\end{theorem}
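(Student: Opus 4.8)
The plan is to show that any language $L$ recognized by a RT-LPostQFA with bounded error lies in uQAL, i.e.\ it is recognized by some RT-QFA with (strict or nonstrict) unbounded-error cutpoint. Since by Corollary to Lemma~\ref{qtm:lem:RT-QFA-to-RT-GFA} we have QAL $=$ S, and since uQAL $=$ uS, it suffices to exhibit a RT-QFA (or, by linearization, a GFA) whose acceptance value separates members of $L$ from nonmembers at some real cutpoint. Let $\mathcal{M}$ with distinguished symbol $\tau\in\{A,R\}$ be the RT-LPostQFA recognizing $L$ with error bound $\epsilon<\tfrac12$, with postselection projectors $P_{pa}$, $P_{pr}$, and $P_p=P_{pa}+P_{pr}$. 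Recall that for input $w$ with $\mathrm{tr}(P_p\rho_{|\tilde w|})>0$ the decision is $f_{\mathcal{M}}^a(w)=\mathrm{tr}(P_{pa}\rho_{|\tilde w|})/\mathrm{tr}(P_p\rho_{|\tilde w|})$, while when $\mathrm{tr}(P_p\rho_{|\tilde w|})=0$ the decision is fixed by $\tau$.

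First I would treat the case $\tau=R$ (the case $\tau=A$ is symmetric, using the complement together with the already-established closure of LPostQAL under complementation and of uQAL $=$ uS under complementation over the relevant construction, or just by mirroring the argument). The key idea is to build a RT-QFA $\mathcal{M}'$ whose bare (non-normalized) acceptance probability equals a suitable linear combination of $\mathrm{tr}(P_{pa}\rho_{|\tilde w|})$ and $\mathrm{tr}(P_{pr}\rho_{|\tilde w|})$ so that the sign of $f_{\mathcal{M}'}(w)-\lambda$ tracks membership in $L$. Concretely, when $w\in L$ we have $\mathrm{tr}(P_{pa}\rho)\ge(1-\epsilon)\,\mathrm{tr}(P_p\rho)$ and when $w\notin L$ we have $\mathrm{tr}(P_{pr}\rho)\ge(1-\epsilon)\,\mathrm{tr}(P_p\rho)$; equivalently $\mathrm{tr}(P_{pa}\rho)-\mathrm{tr}(P_{pr}\rho)$ is $\ge(1-2\epsilon)\,\mathrm{tr}(P_p\rho)>0$ on members with nonzero postselection and $\le-(1-2\epsilon)\,\mathrm{tr}(P_p\rho)<0$ on nonmembers with nonzero postselection, and is exactly $0$ on every $w$ (member \emph{or} nonmember) with zero postselection. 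The plan is to realize, as the acceptance value of a GFA $\mathcal{G}$ (hence of a RT-QFA by Turakainen's argument and Lemma~\ref{qtm:lem:RT-QFA-to-RT-GFA} running in reverse), the quantity
\begin{equation}
 f_{\mathcal{G}}(w) \;=\; \mathrm{tr}(P_{pa}\rho_{|\tilde w|}) \;-\; \mathrm{tr}(P_{pr}\rho_{|\tilde w|}),
\end{equation}
using the $vec$-linearization of Figure~\ref{qtm:fig:vec}: the operators $\mathcal{E}_\sigma$ become real matrices $A_\sigma=\sum_i E_{\sigma,i}\otimes E_{\sigma,i}^{*}$ (further split into $2\times2$ real blocks), the initial vector is $vec(\mathcal{E}_{\cent}(\rho_0))$, and the final row vector is $vec(P_{pa}-P_{pr})^{\mathrm T}$ composed with $\sum_i E_{\dollar,i}\otimes E_{\dollar,i}^{*}$. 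Then $f_{\mathcal{G}}(w)>0 \iff w\in L$ when $w$ has nonzero postselection, and $f_{\mathcal{G}}(w)=0$ otherwise; since $\tau=R$, the $w$ with zero postselection are exactly \emph{not} in $L$, so in fact $L=\{w: f_{\mathcal{G}}(w)>0\}$, i.e.\ $L$ is recognized by $\mathcal{G}$ with strict cutpoint $0$. By Fact~\ref{fact:GFA-GFA} (shifting the cutpoint) this moves to a cutpoint in $(0,1)$, and by Fact~\ref{fact:GPFA-PFA} / the equivalence QAL $=$ S, $L\in$ S $\subseteq$ uS $=$ uQAL.

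The main obstacle I anticipate is handling the ``zero postselection'' boundary correctly: one must be sure that the linearized final functional genuinely returns \emph{exactly} $0$ (not merely something small) on those inputs, so that the strict-cutpoint separation is clean, and one must make the $\tau=R$ versus $\tau=A$ bookkeeping line up with strict versus nonstrict cutpoint. For $\tau=R$ we get $L=\{w:f_{\mathcal{G}}(w)>0\}$ directly (strict cutpoint $0$); for $\tau=A$ the symmetric construction yields $\overline L=\{w:f_{\mathcal{G}}(w)>0\}$, hence $L=\{w:f_{\mathcal{G}}(w)\le 0\}=\{w:-f_{\mathcal{G}}(w)\ge 0\}$, a nonstrict-cutpoint language, which is still in uQAL $=$ uS by definition of the unbounded-error class. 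A secondary routine point is verifying that all the matrices produced are legitimate GFA transition matrices (real, finite-dimensional) — this is immediate from the constructions in Lemma~\ref{qtm:lem:RT-QFA-to-RT-GFA} — and that $\epsilon<\tfrac12$ is exactly what guarantees the strict inequality $1-2\epsilon>0$ needed for separation. No amplification is required, since unbounded-error recognition only needs the sign of $f_{\mathcal{G}}(w)-\lambda$, not a gap.
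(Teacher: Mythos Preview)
Your proposal is correct and takes essentially the same approach as the paper: both hinge on the observation that the quantity $p_{\mathcal{M}}^{a}(w)-p_{\mathcal{M}}^{r}(w)$ is positive on members with nonzero postselection, negative on nonmembers with nonzero postselection, and exactly zero when postselection vanishes, so that $\tau=R$ yields strict-cutpoint recognition and $\tau=A$ yields nonstrict-cutpoint recognition. The only cosmetic difference is that the paper realizes this directly inside a RT-QFA (sending every nonpostselection state to accept with probability $\tfrac12$ at the end, so the new acceptance probability is $\tfrac12+\tfrac12(p^{a}-p^{r})$ with cutpoint $\tfrac12$), whereas you route through the GFA linearization with cutpoint $0$; the two are affinely equivalent.
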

\begin{proof}
	The proof is similar to the proof of Theorem \ref{theorem:PostQ-subset-Q} with the exception that
	\begin{itemize}
		\item if $ \tau = A $, we have recognition with nonstrict cutpoint;
		\item if $ \tau = R $, we have recognition with strict cutpoint.
	\end{itemize}
\end{proof}

\chapter{WRITE-ONLY MEMORY} \label{wom}

In this chapter, we examine realtime quantum finite automaton (RT-QFA) 
models augmented with a ``write-only memory'' (WOM) under several types of restrictions related to WOM access.

\section{Definitions} \label{wom:definition}

A WOM is a two-way write-only work tape having alphabet $ \Gamma $ containing $ \# $ and 
$ \varepsilon $, where $ \varepsilon $ means that the square under the tape head is not changed.
If we restrict the tape head movement of WOM to $ \rhd $, i.e. one-way,
we obtain a ``push-only stack'' (POS).
For POSs, we assume that, if $ \varepsilon $ is written, the work tape head does not move and
if a symbol different than $ \varepsilon $ is written, the work tape head automatically 
moves one square to the right.
A special case of POS is obtained by restricting $ \Gamma $ with $ \varepsilon $ and a \textit{counting} symbol 
(different than $ \# $), this is called an ``increment-only counter'' (IOC).
As a further restriction, one symbol (except $ \varepsilon $) is required to be written on the work tape
at every step of the computation. We call this type of memory as ``trash tape'' (TT).

In this chapter, we examine the power of a RT-QFA augmented by WOM, POS, or IOC,
namely RT-QFA-WOM, RT-QFA-POS (0-rev-RT-QPDA), or RT-QFA-IOC (0-rev-RT-Q1CA), respectively.
Note that, RT-QFA-WOMs, RT-QFA-POSs, and RT-QFA-IOCs are special cases of
realtime quantum Turing machines, 
realtime quantum pushdown automata, and 
realtime quantum one-counter automata.

Formally, a RT-QFA-WOM $ \mathcal{M} $ is a 7-tuple
\begin{equation}
	\label{def:RT-QFA-WOM}
	(Q,\Sigma,\Gamma,\Omega,\delta,q_{1},Q_{a}).
\end{equation}
	When in state $ q \in Q $ and reading symbol $ \sigma \in \tilde{\Sigma} $ 
	on the input tape, $ \mathcal{M} $ changes its state to $ q' \in Q $, writes $ \gamma \in \Gamma $ 
	and $ \omega \in \Omega $ on the WOM tape and the finite register, respectively,
	and then updates the position of the WOM tape head with respect to $ d_{w} \in \setD $
	with transition amplitude $ \delta(q,\sigma,q',\gamma,d_{w},\omega) = \alpha $, where
	$ \alpha \in \mathbb{C} $ and $ |\alpha| \leq 1 $.
	
	In order to represent all transitions from the case where $ \mathcal{M} $ is in state $ q \in Q $ and reading
	symbol $ \sigma \in \tilde{\Sigma} $ together, we use the notation
	\begin{equation}
		\delta(q,\sigma) =  \sum_{(q',\gamma,d_{w},\omega) \in Q \times \Gamma \times \setD \times \Omega }
			\delta(q,\sigma,q',\gamma,d_{w},\omega) (q',\gamma,d_{w},\omega),
	\end{equation}
	where
	\begin{equation}
		\sum_{(q',\gamma,d_{w},\omega) \in Q \times \Gamma \times \setD \times \Omega }
			|\delta(q,\sigma,q',\gamma,d_{w},\omega)|^{2} = 1.
	\end{equation}

A configuration of a RT-QFA-WOM is the collection of
\begin{itemize}
	\item the internal state of the machine,
	\item the position of the input tape head,
	\item the contents of the WOM tape, and the position of the WOM tape head.
\end{itemize}

The formal definition of the RT-QFA-POS is similar to that of the RT-QFA-WOM,
except that the movement of the WOM tape head is restricted to $ \rhd $, and so
the position of that head does not need to be a part of a configuration.
On the other hand, the definition of the RT-QFA-IOC can be simplified by removing the $ \Gamma $
component from (\ref{def:RT-QFA-WOM}):

A RT-QFA-IOC $ \mathcal{M} $ is a 6-tuple
\begin{equation}
	(Q,\Sigma,\Omega,\delta,q_{1},Q_{a}).
\end{equation}
	When in state $ q \in Q $, and reading symbol $ \sigma \in \tilde{\Sigma} $ 
	on the input tape,  $ \mathcal{M} $ changes its state to $ q' \in Q $, writes $ \omega $ in the register, and updates the value of its counter
	by $ c \in \vartriangle = \{0,+1\} $
	with transition amplitude $ \delta(q,\sigma,q',c,\omega) = \alpha $, where
	$ \alpha \in \mathbb{C} $ and $ |\alpha| \leq 1 $.
	
	In order to show all transitions from the case where $ \mathcal{M} $ is in state $ q \in Q $ and reads
	symbol $ \sigma \in \tilde{\Sigma} $ together, we use the notation
	\begin{equation}
		\delta(q,\sigma) =  \sum_{(q',c,\omega) \in Q \times \vartriangle \times \Omega }
			\delta(q,\sigma,q',c,\omega) (q',c,\omega),
	\end{equation}
	where
	\begin{equation}
		\sum_{(q',c,\omega) \in Q \times \vartriangle \times \Omega }
			|\delta(q,\sigma,q',c,\omega)|^{2} = 1.
	\end{equation}

A configuration of a RT-QFA-IOC is the collection of
\begin{itemize}
       \item the internal state of the machine,
       \item the position of the input tape head, and
       \item the value of the counter.
\end{itemize}

RT-QFA-IOC($ m $) is a RT-QFA-IOC with capability of incrementing its counter by a 
value from the set $ \{0,\ldots,m\} $, where $ m>1 $.

\section{Basic Facts} \label{wom:basic-facts}

We present some facts that are useful in the next parts in this section.

\begin{lemma}
	\label{wom:lem:classical-wom}
	The computational power of any realtime classical finite automaton is unchanged
	when the model is augmented with a WOM.
\end{lemma}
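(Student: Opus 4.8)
The plan is to show that a realtime classical (deterministic, nondeterministic, or probabilistic) finite automaton augmented with a WOM can be simulated by an ordinary realtime finite automaton of the same type, since the converse inclusion is trivial. The key observation is that a classical machine never reads the WOM: the transition function of a classical RT-FA-WOM depends only on the current internal state and the currently scanned input symbol, not on the contents of the write-only work tape. Hence the sequence of (internal state, input head position) pairs visited during a computation is completely determined by the input-driven part of the transition function alone, and the WOM contents and WOM head position have no feedback effect on acceptance. Acceptance of a realtime FA-WOM is decided (as for RT-FAs) by the internal state after reading $\dollar$, which is a function only of the input and the state-transition component of $\delta$.

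The concrete steps I would carry out are as follows. First I would fix a realtime classical FA-WOM $\mathcal{M}=(Q,\Sigma,\Gamma,\delta,q_1,Q_a)$ (with the appropriate transition-value set $Z$ depending on whether $\mathcal{M}$ is deterministic, nondeterministic, or probabilistic) and define a new realtime FA $\mathcal{M}'=(Q,\Sigma,\delta',q_1,Q_a)$ over the same state set, where $\delta'$ is obtained from $\delta$ by ``projecting out'' the WOM-related range components: for each $q\in Q$, $\sigma\in\tilde\Sigma$, $q'\in Q$, set
\begin{equation}
	\delta'(q,\sigma,q') = \sum_{\gamma\in\Gamma,\ d_w\in\setD} \delta(q,\sigma,q',\gamma,d_w).
\end{equation}
(In the probabilistic case this is a genuine marginalization, so $\delta'$ remains a stochastic transition function by the local wellformedness condition for PTMs; in the nondeterministic case one takes the analogous disjunction, i.e. $\delta'(q,\sigma,q')=1$ iff some $\delta(q,\sigma,q',\gamma,d_w)=1$; the deterministic case is immediate since there is exactly one transition.) Second, I would argue by induction on the length of the consumed input prefix that the distribution (resp.\ set, resp.\ single value) of internal states of $\mathcal{M}'$ after reading $\tilde w_1\cdots\tilde w_j$ coincides with the corresponding marginal distribution of internal states of $\mathcal{M}$, because the WOM component of the configuration evolves without ever influencing the state component. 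Third, since membership in $Q_a$ after reading $\dollar$ is the sole acceptance criterion and depends only on internal states, I would conclude $f_{\mathcal{M}'}(w)=f_{\mathcal{M}}(w)$ for all $w\in\Sigma^*$, hence $\mathcal{M}$ and $\mathcal{M}'$ recognize the same language under any of the recognition modes defined in Section \ref{sbc:language-recognition}. Combined with the trivial simulation of an RT-FA by an RT-FA-WOM (which simply never writes anything but $\varepsilon$, or equivalently ignores the tape), this gives the claimed equivalence.

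The only point requiring a little care is bookkeeping rather than a genuine obstacle: one must check that the marginalized $\delta'$ still satisfies the relevant wellformedness constraint for the machine type in question (stochasticity for PTMs, the $\{0,1\}$ condition for NTMs), which follows directly from summing the corresponding constraint for $\mathcal{M}$ over the WOM range components. A second, equally routine, point is to confirm that the induction goes through uniformly for the three classical variants; this is immediate because in each case the state-component update is a function of $(q,\sigma)$ alone. I do not anticipate any hard step; the content of the lemma is essentially the remark that ``write-only, never-read'' auxiliary storage is useless for a classical device, and the proof is just making that remark precise via the projection of $\delta$.
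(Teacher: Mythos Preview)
Your proposal is correct and follows essentially the same idea as the paper: since the WOM is never read, the internal-state evolution is independent of the WOM component, so marginalizing out the WOM actions yields an equivalent ordinary RT-FA. The paper phrases this via a homomorphism from the configuration tree of the WOM-augmented machine onto the state tree of the underlying machine (preserving total probabilities at each node), whereas you phrase it via an explicit projection of $\delta$ and an induction on the state distribution; these are two presentations of the same argument.
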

\begin{proof}
	For a given machine $ \mathcal{M} $ and an input string $w$, consider the tree $ \mathcal{T} $ of states, where the root is the initial state, each subsequent level corresponds to the processing of the next input symbol, and the children of each node $N$ are the states that have nonzero-probability transitions from $S$ with the input symbol corresponding to that level. Each such edge in the tree is labeled with the corresponding transition probability. The probability of node $N$ is the product of the probabilities on the path to $N$ from the root. The acceptance probability is the sum of the probabilities of the accept states at the last level.

Now consider attaching a WOM to $ \mathcal{M} $, and augmenting its program so that every transition now also specifies the action to be taken on the WOM. Several new transitions of this new machine may correspond to a single transition of $ \mathcal{M} $, since, for example, a transition with probability $p$ can be divided into two transitions with probability $ p \over 2 $, whose effects on the internal state are identical, but which write different symbols on the WOM. It is clear that many different programs can be obtained by augmenting $ \mathcal{M} $ in this manner with different WOM actions. Visualize the configuration tree $ \mathcal{T}_{new} $ of any one of these new machines on input $w$. There exists a homomorphism $h$ from $ \mathcal{T}_{new} $ to $ \mathcal{T} $, where $h$ maps nodes in $ \mathcal{T}_{new} $ to nodes on the same level in $ \mathcal{T} $, the configurations in $h^{-1}(N)$ all have $N$ as their states, and the total probability of the members of $h^{-1}(N)$ equals the probability of $N$ in $ \mathcal{T} $, for any $N$. We conclude that all the machines with WOM accept $w$ with exactly the same probability as $w$, so the WOM does not make any difference.
\end{proof}
 
\begin{fact}
	\cite{Fr79}
	For every $k$, if $ L $ is recognized by a deterministic RT-$ k $BCA (RT-D$ k $BCA), then for every $ \epsilon \in (0, \frac{1}{2}) $, then
	there exists a probabilistic RT-1BCA (RT-P1BCA) recognizing $L$ with negative one-sided error bound $ \epsilon $.
\end{fact}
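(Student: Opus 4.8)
The plan is to simulate a deterministic realtime $k$-counter blind automaton ($ \mathrm{RT\mbox{-}D}k\mathrm{BCA} $) by a single probabilistic realtime blind one-counter automaton ($ \mathrm{RT\mbox{-}P1BCA} $) by folding all $k$ counters into one, paying for the loss of information by randomization and accepting a one-sided error. The key observation is that a blind counter only needs to ``know'' whether its final value is zero; so if $ \bar{x} = (x_1, \dots, x_k) \in \mathbb{Z}^k $ is the final counter vector of the $ \mathrm{RT\mbox{-}D}k\mathrm{BCA} $ $ \mathcal{M} $ on input $ w $, then $ w \in L $ iff $ x_1 = \cdots = x_k = 0 $. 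Since all $ |x_i| $ are bounded by the length of $ w $, say $ |x_i| \le n = |w| $, I would fix a prime (or just a sufficiently large integer) $ N > 2n $ and have the simulating machine $ \mathcal{P} $ guess a random tuple $ (r_1, \dots, r_k) $ of coefficients and maintain the single counter value $ \sum_{i=1}^k r_i x_i $, where $\mathcal{P}$ updates its one counter by $\sum_i r_i c_i$ whenever $\mathcal{M}$ would update counter $i$ by $c_i \in \lozenge$. If $ \bar{x} = \bar{0} $ this linear combination is always $ 0 $, so $ \mathcal{P} $ always ``accepts'' (counter zero) on members of $ L $; if $ \bar{x} \neq \bar{0} $, a random linear combination is nonzero with probability bounded away from $ 0 $, so $ \mathcal{P} $ rejects nonmembers with probability at least some constant.

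First I would make precise the finite-automaton implementation: the coefficients $ r_i $ can be chosen from a fixed finite set (e.g. $ \{0,1\} $ or residues mod a fixed small prime) so that only finitely many choices exist, and the choice is made once, at the start, by a probabilistic branching on the left end-marker $ \cent $; the chosen tuple is remembered in the finite state. Updating the single counter by $ \sum_i r_i c_i $ in one step requires incrementing/decrementing by an amount in a bounded range $ \{-km, \dots, km\} $, which is permissible for a $ 1\mathrm{CA}(m') $ and, by the standard isomorphism between $ k\mathrm{CA}(m) $ and $ k\mathrm{CA} $ (cf. the analogous Lemma~\ref{qtm:lem:CA-m-isomorphic-CA} for the quantum case, or its well-known classical counterpart), can be turned into an honest blind one-counter automaton with a constant factor more states. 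Blindness is preserved because $ \mathcal{M} $ is blind and $ \mathcal{P} $ never inspects its counter during the computation — it only checks the final value is zero.

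Next I would do the error analysis. Using coefficients modulo a prime $ N > 2n $: since $ |x_i| \le n $, the value $ x_i \bmod N $ determines $ x_i $ (as $ x_i = 0 $ iff $ x_i \equiv 0 $), and for a uniformly random nonzero vector $ (r_1, \dots, r_k) \in (\mathbb{Z}/N)^k $ the probability that $ \sum r_i x_i \equiv 0 $ is at most $ 1/N $ when $ \bar{x} \not\equiv \bar{0} $ — but of course $ \mathcal{P} $ cannot depend $ N $ on the input length, which is the genuine difficulty (see below). The cleaner route is to use a \emph{fixed} small set of coefficient tuples and argue that for \emph{any} nonzero $ \bar{x} \in \mathbb{Z}^k $, at least one tuple in the set gives a nonzero combination; choosing uniformly among, say, the $ k $ unit vectors $ e_1, \dots, e_k $ already works, giving rejection probability at least $ 1/k $ for nonmembers and certainty for members. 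Then standard amplification — running $ O(\log(1/\epsilon)) $ independent copies in parallel via a product construction, accepting only if every copy's counter is zero — boosts the one-sided error below any desired $ \epsilon \in (0,\tfrac12) $, while the product of finitely many blind one-counter machines is again a blind \emph{multi}-counter machine, to which we re-apply the counter-folding trick once more. (One must check the fold-then-amplify order carefully so that a single blind one-counter remains at the end.)

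The main obstacle is precisely the interplay between \emph{blindness} and the need to keep only one counter while still detecting nonzero vectors with bounded error: a naive sum $ \sum_i x_i $ can vanish even when $ \bar{x} \neq \bar{0} $ (e.g. $ x_1 = 1, x_2 = -1 $), so the random-coefficient idea is essential, and one has to confirm that the finitely many fixed coefficient tuples (independent of $ n $) genuinely suffice — which they do, since distinguishing $ \bar{x} = \bar{0} $ from $ \bar{x} \neq \bar{0} $ requires only that \emph{some} coordinate is nonzero, detectable by the corresponding unit-vector tuple. The remaining bookkeeping — turning bounded-increment updates into a genuine blind $ 1\mathrm{CA} $, verifying the product construction preserves realtime and blindness, and tracking the state count — is routine and follows the patterns already used elsewhere in this thesis.
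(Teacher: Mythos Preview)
Your general strategy---fold the $k$ counters into one via a random linear combination $\sum_i r_i x_i$ and exploit that acceptance only asks whether the final vector is zero---is exactly right, but the specific route you commit to has a genuine gap. Projecting onto a random unit vector $e_i$ gives a one-round rejection probability of only $1/k$ on nonmembers, so the error bound is $(k-1)/k\ge\tfrac12$ for every $k\ge2$. You then propose to amplify by running $m$ independent copies and accepting iff all $m$ counters are zero; but this produces an $m$-counter blind machine, and re-applying the unit-vector fold to those $m$ counters just picks one copy at random and tests whether \emph{its} coordinate $x_{i_j}$ is zero---the resulting acceptance probability on a nonmember is again $\tfrac{1}{m}\sum_j\Pr[x_{i_j}=0]=(k-1)/k$ in the worst case. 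The fold/amplify iteration is genuinely circular: the error never drops below $(k-1)/k$.

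You actually had the correct idea earlier and talked yourself out of it. You do \emph{not} need the coefficient range to depend on the input length: the counter holds the honest integer $\sum_i r_i x_i$, with no modular wrap-around, so no bound like $N>2n$ is needed. Freivalds' trick (which the paper quotes in Figure~\ref{wom:fig:fr-79}) takes a single random $r\in\{1,\dots,R\}$ with $R$ a fixed constant depending only on $k$ and $\epsilon$, and uses the structured coefficients $r,r^2,\dots,r^k$, maintaining $\sum_i r^i x_i$. If $\bar x\neq0$ this is a nonzero integer polynomial in $r$ of degree at most $k$, hence vanishes for at most $k$ values of $r$; choosing $R$ large enough makes the error at most $k/R\le\epsilon$ in one shot, with no amplification. (Independently drawing each $r_i$ uniformly from $\{1,\dots,R\}$ also works, with error $\le 1/R$ by fixing all but one coordinate.) Either variant then passes through the $k\mathrm{CA}(m)\cong k\mathrm{CA}$ isomorphism exactly as you describe.
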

We can generalize this result to probabilistic RT-$ k $BCAs (RT-P$ k $BCAs), where $ k>1 $.
\begin{lemma}
	\label{wom:lem:RT-PkBCA-by-RT-P1BCA}
	Let $ \mathcal{P} $ be a given RT-P$ k $BCA and $ \epsilon \in (0,\frac{1}{2}) $ be a given error bound. 
	Then, there exists a RT-P1BCA($ R $) $ \mathcal{P}' $ such that for all $ w \in \Sigma^{*} $,
	\begin{equation}
		f_{\mathcal{P}}(w) \le f_{\mathcal{P}'}(w) \le f_{\mathcal{P}}(w)+\epsilon(1-f_{\mathcal{P}}(w)),
	\end{equation}
	where $ R = 2^{\left\lceil \frac{k}{\epsilon} \right\rceil} $.
\end{lemma}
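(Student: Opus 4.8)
The plan is to simulate a RT-P$k$BCA $\mathcal{P}$ by a RT-P1BCA($R$) $\mathcal{P}'$ that packs the $k$ counter values of $\mathcal{P}$ into the single counter of $\mathcal{P}'$ using a positional encoding with a large base, in the spirit of the isomorphism argument in Lemma \ref{qtm:lem:CA-m-isomorphic-CA}, but adapted to the blind setting where the acceptance test only checks whether each counter is zero. The key obstacle is that a blind automaton cannot detect overflow or underflow between ``digits'' of the packed counter, so an honest base-$b$ encoding of $(x_1,\ldots,x_k)$ as $\sum_i x_i b^{i-1}$ would fail whenever some $x_i$ grows beyond $b$; since the $x_i$ are only bounded by the length of the input, no fixed base works. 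The standard fix is to allow the digits to ``float'': instead of reducing mod $b$, we let $\mathcal{P}'$ keep a single running sum $S = \sum_{i=1}^{k} x_i 2^{i-1}$ in its counter, incrementing the counter by $2^{i-1}$ (an operation available to a RT-P1BCA($R$) with $R = 2^{\lceil k/\epsilon\rceil} \ge 2^{k-1}$) whenever $\mathcal{P}$ would increment its $i$-th counter. Then $S=0$ forces $x_1=\cdots=x_k=0$, so every input accepted by $\mathcal{P}'$ is accepted by $\mathcal{P}$, giving $f_{\mathcal{P}}(w) \le f_{\mathcal{P}'}(w)$. The only error introduced is the possibility that $\mathcal{P}$ rejects $w$ (some $x_i \ne 0$) but the weighted sum $S$ happens to vanish; the remaining work is to bound the probability of this event by $\epsilon(1-f_{\mathcal{P}}(w))$, which is where the choice $R = 2^{\lceil k/\epsilon\rceil}$ enters.

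First I would set up $\mathcal{P}'$ precisely: its state set is that of $\mathcal{P}$, its input processing is identical, and each transition of $\mathcal{P}$ that applies the counter-operation vector $\bar{c} = (c_1,\ldots,c_k) \in \{-1,0,+1\}^k$ is replaced by a transition of $\mathcal{P}'$ that applies the scalar counter-operation $\sum_{i=1}^{k} c_i 2^{i-1} \in \{-(2^k-1),\ldots,2^k-1\}$; since we work with a RT-P1BCA($R$) and $R \ge 2^k - 1$ (because $\lceil k/\epsilon\rceil \ge k$ when $\epsilon < 1$), this is a legal single-step counter update. I would invoke Lemma \ref{qtm:lem:CA-m-isomorphic-CA} (or rather its classical analogue, which holds by the same argument since no counter-sign information is consulted by a blind machine) to reduce a RT-P1BCA($R$) back to an ordinary RT-P1BCA if a cleaner statement is wanted, but since the lemma's target is explicitly RT-P1BCA($R$) this is optional. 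The acceptance condition of $\mathcal{P}'$ requires its single counter to be zero at the end, exactly mirroring $\mathcal{P}$'s requirement that all $k$ counters be zero.

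Next I would prove the two inequalities. For the left inequality: any computation path of $\mathcal{P}$ that ends with all counters zero corresponds, via the bijection on transitions, to a path of $\mathcal{P}'$ with the same probability whose counter value is $\sum_i 0 \cdot 2^{i-1} = 0$, hence accepting; so $f_{\mathcal{P}'}(w) \ge f_{\mathcal{P}}(w)$. For the right inequality: decompose $f_{\mathcal{P}'}(w) = f_{\mathcal{P}}(w) + \Pr[\text{$\mathcal{P}'$ accepts but $\mathcal{P}$ rejects}]$, and the latter probability is the total weight of paths on which the counter vector $\bar{x} = (x_1,\ldots,x_k) \ne \bar 0$ but $\sum_i x_i 2^{i-1} = 0$. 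Here I expect the main calculation: one must argue that, \emph{conditioned on any fixed rejecting path of $\mathcal{P}$ up to the decision point}, the corresponding event in $\mathcal{P}'$ has been arranged to have conditional probability at most $\epsilon$. The natural way to get a genuinely random cancellation (rather than relying on a deterministic miracle) is to have $\mathcal{P}'$ first make an internal probabilistic choice at the left end-marker among $\lceil k/\epsilon\rceil$ (or more) possible ``weight assignments'' — i.e., permutations or scalings of the exponents $1, 2, \ldots, 2^{k-1}$ within the budget $R = 2^{\lceil k/\epsilon\rceil}$ — so that for any fixed nonzero $\bar{x}$, at most a $\le \epsilon$ fraction of these assignments give $\sum_i x_i w_i = 0$; a counting/linear-algebra argument over the integers (a nonzero integer vector lies in the kernel of at most $k$ linearly independent weight functionals, and we have $\lceil k/\epsilon\rceil$ pairwise ``independent enough'' choices) then yields the bound, giving $\Pr[\cdots] \le \epsilon \cdot f_{\mathcal{P}}^{r}(w) = \epsilon(1-f_{\mathcal{P}}(w))$ (using that $\mathcal{P}$ as a blind machine halts and decides on every path). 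I would treat the exact design of these $\lceil k/\epsilon\rceil$ weight vectors and the verification that $R = 2^{\lceil k/\epsilon\rceil}$ suffices to store any of them as the technical heart of the argument, citing \cite{Fr79} for the $k=1$ base case and for the prototype of the randomized-weights trick.
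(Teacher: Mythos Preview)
Your high-level plan is right --- randomize the weight vector used to collapse $k$ counters into one --- but the concrete realization has gaps, and the paper's proof (following Freivalds~\cite{Fr79}) uses a specific weight family you do not identify.

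First, two local errors. The implication ``$S=0$ forces $x_1=\cdots=x_k=0$'' is false for the fixed weights $2^{i-1}$: the counter values $x_i$ are unbounded integers (they can reach $\Theta(|w|)$), so e.g.\ $(x_1,x_2)=(2,-1)$ gives $S=2\cdot 1+(-1)\cdot 2=0$. What \emph{is} trivially true for any weights is the converse, $x_1=\cdots=x_k=0\Rightarrow S=0$, and that is what actually yields $f_{\mathcal{P}}(w)\le f_{\mathcal{P}'}(w)$; your sentence derives the right inequality from the wrong implication. Second, your linear-algebra sketch (``a nonzero integer vector lies in the kernel of at most $k$ linearly independent weight functionals'') does not bound anything: for fixed $\bar x\neq 0$, the set of weight vectors $w$ with $\langle w,\bar x\rangle=0$ is a hyperplane, so without a concrete family of weight vectors you cannot say only $k$ of your $\lceil k/\epsilon\rceil$ choices are bad. ``Permutations or scalings of $1,2,\ldots,2^{k-1}$'' will not do: permutations give only $k!$ vectors, and a single hyperplane can contain many of them.

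The paper's construction is this: at the left end-marker, $\mathcal{P}'$ picks $r$ uniformly from $\{1,\ldots,R\}$ and then uses the weight vector $(r,r^2,\ldots,r^k)$, i.e.\ whenever $\mathcal{P}$ updates counter $i$ by $c_i\in\{-1,0,1\}$, $\mathcal{P}'$ updates its single counter by $\sum_{i=1}^{k} c_i r^{i}$. On any fixed rejecting path of $\mathcal{P}$ with final counter vector $\bar x\neq 0$, the quantity $\sum_i x_i r^{i}$ is a nonzero integer polynomial in $r$ of degree at most $k$, hence vanishes for at most $k$ values of $r$; since $R=2^{\lceil k/\epsilon\rceil}\ge \lceil k/\epsilon\rceil\ge k/\epsilon$, the conditional probability of a spurious zero is at most $k/R\le\epsilon$. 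Summing over rejecting paths gives $f_{\mathcal{P}'}(w)\le f_{\mathcal{P}}(w)+\epsilon\,(1-f_{\mathcal{P}}(w))$. The missing idea in your proposal is precisely this ``moment-curve'' family and the polynomial-root bound that makes the $\epsilon$ estimate work.
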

\begin{proof} 
	Based on the probabilistic method described in Figure \ref{wom:fig:fr-79},
	we can obtain $ \mathcal{P}' $ by making the following modifications on $ \mathcal{P} $:
	\begin{figure}[h!]
		\begin{center}
       \fbox{
       \begin{minipage}{0.9\textwidth}
               \small
               In this figure, we review a method presented by Freivalds in \cite{Fr79}:
               Given a machine with $ k>1 $ counters, say $ C_{1},\ldots,C_{k}  $,
whose values
               can be updated using the increment set $ \{-1,0,1\} $, we can build
a machine with a single counter, say $ C $,  whose value can be
updated using the increment set $ \{ -R,\ldots,R \} $
               ($ R = 2^{\left\lceil \frac{k}{\epsilon} \right\rceil} $), such that
               all updates on $ C_{1},\ldots,C_{k} $ can be simulated on $ C $ in
the sense that
               (i) if all values of $ C_{1},\ldots,C_{k} $ are zeros, then the
value of $ C $ is zero; and
               (ii) if the value of at least one of $ C_{1},\ldots,C_{k} $ is nonzero, then
               the value of $ C $ is nonzero with probability $ 1-\epsilon $,
               where $ \epsilon \in (0,\frac{1}{2}) $.
               The  probabilistic method for this simulation is as follows:
               \begin{itemize}
                       \item Choose a number $ r $ equiprobably from the set $ \{1,\ldots,R\} $.
                       \item The value of $ C $ is increased (resp., decreased) by $ r^{i} $
                               if the value of $ C_{i} $ is increased (resp., decreased) by 1.
               \end{itemize}
       \end{minipage}
       }
       \end{center}
       \caption{Probabilistic zero-checking of multiple counters by one counter}
       \vskip\baselineskip
       \label{wom:fig:fr-79}
\end{figure}
	\begin{enumerate}
		\item At the beginning of the computation, $ \mathcal{P}' $ equiprobably 
			chooses a number $ r $ from the set $ \{1,\ldots,R\} $.
		\item For each transition of $ \mathcal{P} $, in which the values of counters are updated
		by $ (c_{1},\ldots,c_{k}) $ $ \in \{-1,0,1\}^{k} $, i.e., the value of the $ i^{th} $ counter
		is updated by $ c_{i} $ ($ 1 \le i \le k $), $ \mathcal{P} $ 
		makes the same transition by updating its counter values by $ \sum \limits_{i=1}^{k} r^{i}c_{i} $.
	\end{enumerate}
	Hence, (i) for each accepting path of $ \mathcal{P} $, the input is accepted by $ \mathcal{P}' $, too;
	(ii) for each rejecting path of $ \mathcal{P} $, the input may be accepted by $ \mathcal{P}' $
	with a probability at most $ \epsilon $.
	By combining these cases, we obtain the following inequality for any input string $ w \in \Sigma^{*} $: 
	\begin{equation}
		f_{\mathcal{P}}(w) \le f_{\mathcal{P}'}(w) \le f_{\mathcal{P}}(w)+\epsilon(1-f_{\mathcal{P}}(w))
	\end{equation}
 \end{proof}
\begin{theorem}
	\label{wom:thm:RT-PkBCA-by-RT-P1BCA}
	If $ L $ is recognized by a RT-P$ k $BCA
	with error bound $ \epsilon \in (0, \frac{1}{2}) $, then
	$ L $ is recognized by a RT-P1BCA with error bound 
	$ \epsilon' $ ($ 0 < \epsilon < \epsilon' < \frac{1}{2} $). 
	Moreover, $ \epsilon' $ can be tuned to be arbitrarily close to $ \epsilon $.
\end{theorem}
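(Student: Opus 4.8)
The plan is to combine Lemma~\ref{wom:lem:RT-PkBCA-by-RT-P1BCA} with an error-reduction (probability amplification) argument, followed by a ``cleanup'' step that converts the counter with large increments back into a blind one-counter automaton with increment set $\{-1,0,+1\}$. So I would proceed in three stages.

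First, I would start with the given RT-P$k$BCA $\mathcal{P}$ recognizing $L$ with error bound $\epsilon$. The difficulty is that Lemma~\ref{wom:lem:RT-PkBCA-by-RT-P1BCA} only gives a \emph{one-sided} guarantee: the resulting single-counter machine $\mathcal{P}'$ has $f_{\mathcal{P}}(w) \le f_{\mathcal{P}'}(w) \le f_{\mathcal{P}}(w) + \delta(1 - f_{\mathcal{P}}(w))$ for a chosen $\delta$, i.e.\ it can only ``over-accept''. If $\mathcal{P}$ itself had negative one-sided error (always accepting members with probability $1$), then $\mathcal{P}'$ would inherit that, and choosing $\delta$ small enough to keep the rejection confidence on non-members above $1/2$ would finish the proof immediately with $\epsilon' $ close to $\epsilon$. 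In the general two-sided-error case, though, the extra acceptance mass on non-members must be controlled. The idea is to first amplify $\mathcal{P}$: by running $t$ independent copies of $\mathcal{P}$ in parallel (a tensor-product / majority-vote construction, still a RT-P$(kt)$BCA since each copy's counters are disjoint) and taking the majority verdict, we obtain for any target $\epsilon_0 \in (\epsilon, 1/2)$ fixed in advance --- actually for any $\epsilon_0 \in (0,1/2)$ --- a RT-P$k'$BCA $\mathcal{P}_0$ with $k' = kt$ counters recognizing $L$ with error bound $\epsilon_0$ arbitrarily small. The point of making $\epsilon_0$ tiny is to leave room for the one-sided slack introduced in the next step.

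Second, I would apply Lemma~\ref{wom:lem:RT-PkBCA-by-RT-P1BCA} to $\mathcal{P}_0$ with a parameter $\delta$ chosen so that the combined error stays below the desired $\epsilon'$. For $w \in L$ we get $f_{\mathcal{P}'}(w) \ge f_{\mathcal{P}_0}(w) \ge 1 - \epsilon_0$; for $w \notin L$ we get $f_{\mathcal{P}'}(w) \le f_{\mathcal{P}_0}(w) + \delta(1 - f_{\mathcal{P}_0}(w)) \le \epsilon_0 + \delta$. So it suffices to take $\epsilon_0$ and $\delta$ with $\max(\epsilon_0,\, \epsilon_0 + \delta) < \epsilon'$, which is achievable for any $\epsilon' > \epsilon$ (indeed for any $\epsilon' \in (0,1/2)$, but the statement only claims it for $\epsilon' > \epsilon$ and tunable close to $\epsilon$). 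At this point we have a RT-P1BCA($R$) --- a blind one-counter machine whose increment set is $\{-R,\dots,R\}$ --- recognizing $L$ with error bound $\epsilon'$.

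Third, I would eliminate the large-increment capability. This is the classical-counter analogue of Lemma~\ref{qtm:lem:CA-m-isomorphic-CA}: a RT-P1BCA($R$) can be simulated exactly by an ordinary RT-P1BCA with increment set $\{-1,0,+1\}$, by the standard radix trick of storing the residue $x \bmod R$ in the finite state and the quotient $\lfloor x/R \rfloor$ in the counter, updating both consistently on each step; crucially the blindness and the zero-acceptance condition are preserved because $x = 0 \iff \lfloor x/R\rfloor = 0$ and the residue is $0$. (One must check this constant-factor blow-up of the state set is harmless, which it is.) Composing the three stages yields a RT-P1BCA recognizing $L$ with error bound $\epsilon'$, and $\epsilon'$ can be pushed arbitrarily close to $\epsilon$ by taking $t$ large and $\delta$ small. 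The main obstacle, and the place where the most care is needed, is the bookkeeping in stage two: verifying that the one-sided slack from Lemma~\ref{wom:lem:RT-PkBCA-by-RT-P1BCA} composes correctly with the two-sided amplification so that the final bound is genuinely below $1/2$ and tends to $\epsilon$ --- in particular making sure the majority-vote amplification of a \emph{two-sided}-error RT-P$k$BCA is itself realizable within the RT-P$k$BCA model with disjoint counters, which it is since parallel copies simply use separate counters and the majority decision is a finite-state postprocessing of the copies' accept/reject outcomes.
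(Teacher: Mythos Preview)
Your Stage~1 amplification is both unnecessary and, as written, incorrect. The claim that ``the majority decision is a finite-state postprocessing of the copies' accept/reject outcomes'' fails for blind counter automata: a copy's accept/reject outcome depends on whether \emph{its} counters are zero, and that information is invisible to the finite control. The acceptance condition of the tensored RT-P$(kt)$BCA requires \emph{all} $kt$ counters to be zero simultaneously, which encodes ``every copy accepts'', not ``a majority accepts''. There is no way for the finite-state part to see which copies have zero counters and which do not, so the majority cannot be computed.

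The good news is that Stage~1 is not needed at all, and once you delete it your argument becomes exactly the paper's proof. Apply Lemma~\ref{wom:lem:RT-PkBCA-by-RT-P1BCA} directly to the original machine $\mathcal{P}$ with slack parameter $\epsilon''$: for $w\in L$ you get $f_{\mathcal{P}'}(w)\ge f_{\mathcal{P}}(w)\ge 1-\epsilon$, and for $w\notin L$ you get $f_{\mathcal{P}'}(w)\le f_{\mathcal{P}}(w)+\epsilon''(1-f_{\mathcal{P}}(w))\le \epsilon+\epsilon''(1-\epsilon)$, since $x\mapsto x+\epsilon''(1-x)$ is increasing. Thus the new error bound is $\epsilon'=\epsilon+\epsilon''(1-\epsilon)$, which is below $\tfrac12$ for $\epsilon''$ small enough and tends to $\epsilon$ as $\epsilon''\to 0$. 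Your Stage~3 cleanup (converting the RT-P1BCA($R$) to an ordinary RT-P1BCA via the radix trick) is correct and is exactly what the paper invokes in its last sentence.
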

\begin{proof}
	Let $ \mathcal{P} $ be a RT-P$ k $BCA recognizing $ L $ with error bound $ \epsilon $.
	By using the previous lemma (Lemma \ref{wom:lem:RT-PkBCA-by-RT-P1BCA}), 
	for any $ \epsilon'' \in (0,\frac{1}{2}) $,
	we can construct a RT-P1BCA($ R $), say $ \mathcal{P}'' $, from $ \mathcal{P} $,
	where $ R=2^{\left\lceil \frac{k}{\epsilon''} \right\rceil} $.
	Hence, depending on the value of $ \epsilon $,
	we can select $ \epsilon'' $ to be sufficiently small such that
	$ L $ is recognized by $ \mathcal{P}'' $ with error bound 
	$ \epsilon'=\epsilon+\epsilon''(1-\epsilon) < \frac{1}{2} $.
	Since for each RT-P1BCA($ m $), there is an equivalent RT-P1BCA for any $ m>1 $,
	$ L $ is also recognized by	a RT-P1BCA 
	with bounded error $ \epsilon' $, which can be tuned to be arbitrarily close to $ \epsilon $.
 \end{proof}
\begin{corollary}
	If $ L $ is recognized by a RT-P$ k $BCA
	with negative one-sided error bound $\epsilon \in (0, 1) $, then
	$ L $ is recognized by a RT-P1BCA with negative one-sided error bound 
	$ \epsilon' $ ($ 0 < \epsilon  < \epsilon' $). 
	Moreover, $ \epsilon' $ can be tuned to be arbitrarily close to $ \epsilon  $.
\end{corollary}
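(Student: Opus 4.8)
The plan is to reduce the corollary to Theorem~\ref{wom:thm:RT-PkBCA-by-RT-P1BCA} by tracking how the one-sidedness of the error is preserved through the two simulation steps that underlie that theorem. Recall that negative one-sided error bound $\epsilon$ for a language $L$ means $f_{\mathcal{M}}(w)=1$ for all $w\in L$ and $f_{\mathcal{M}}(w)\le 1-\epsilon$ (equivalently $\epsilon\le 1-f_{\mathcal{M}}(w)$) for all $w\notin L$. So I would start with a RT-P$k$BCA $\mathcal{P}$ recognizing $L$ with negative one-sided error bound $\epsilon\in(0,1)$, meaning $f_{\mathcal{P}}(w)=1$ on members and $f_{\mathcal{P}}(w)\le 1-\epsilon$ on non-members.

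The key observation is the direction of the inequality in Lemma~\ref{wom:lem:RT-PkBCA-by-RT-P1BCA}: the constructed RT-P1BCA($R$) $\mathcal{P}'$ satisfies $f_{\mathcal{P}}(w)\le f_{\mathcal{P}'}(w)\le f_{\mathcal{P}}(w)+\epsilon''(1-f_{\mathcal{P}}(w))$ for a chosen $\epsilon''\in(0,\tfrac12)$. On a member $w\in L$ we have $f_{\mathcal{P}}(w)=1$, which forces $f_{\mathcal{P}'}(w)=1$ — the error stays exactly one-sided, with no leakage on the ``yes'' side. On a non-member $w\notin L$, we get $f_{\mathcal{P}'}(w)\le f_{\mathcal{P}}(w)+\epsilon''(1-f_{\mathcal{P}}(w))=1-(1-\epsilon'')(1-f_{\mathcal{P}}(w))\le 1-(1-\epsilon'')\epsilon$. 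Hence $\mathcal{P}'$ recognizes $L$ with negative one-sided error bound $\epsilon'=(1-\epsilon'')\epsilon$. Since $\epsilon''$ can be taken arbitrarily small (at the cost of a larger multiplier $R=2^{\lceil k/\epsilon''\rceil}$), $\epsilon'$ can be tuned arbitrarily close to $\epsilon$, and in any case $\epsilon<\epsilon'<1$ once we note I actually want $\epsilon'$ \emph{above} $\epsilon$; here a small subtlety arises, which I address next.

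The subtlety is that the inequality above gives $\epsilon'=(1-\epsilon'')\epsilon<\epsilon$, i.e. the bound \emph{improves}, whereas the corollary as stated asks for $\epsilon<\epsilon'$. This is harmless: a machine recognizing $L$ with negative one-sided error bound $(1-\epsilon'')\epsilon$ a fortiori recognizes it with any larger bound $\epsilon'\in((1-\epsilon'')\epsilon,1)$, in particular with an $\epsilon'$ arbitrarily close to $\epsilon$ and satisfying $\epsilon<\epsilon'$. (Alternatively, one can phrase the whole argument as: for every $\epsilon'>\epsilon$ choose $\epsilon''$ small enough that $(1-\epsilon'')\epsilon<\epsilon'$.) The final step is the reduction RT-P1BCA($R$) $\to$ RT-P1BCA: by the remark in the excerpt — made explicit for the quantum case in Lemma~\ref{qtm:lem:CA-m-isomorphic-CA} and holding classically as well — for each $m>1$ there is an equivalent RT-P1BCA, which preserves the accepting probabilities exactly and therefore preserves negative one-sided error bounds. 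Composing the two steps yields a RT-P1BCA recognizing $L$ with negative one-sided error bound $\epsilon'$, arbitrarily close to $\epsilon$.

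I do not expect any real obstacle: the entire content is the bookkeeping that on members the probability stays pinned at $1$ through both transformations. The only point requiring a little care is the sign/direction of the tuning parameter, i.e. reconciling the ``bound improves'' conclusion of Lemma~\ref{wom:lem:RT-PkBCA-by-RT-P1BCA} with the ``$\epsilon<\epsilon'$'' phrasing of the corollary, which is resolved by the monotonicity remark that a smaller error bound implies recognition with any larger one. Everything else is a direct specialization of the two-sided proof of Theorem~\ref{wom:thm:RT-PkBCA-by-RT-P1BCA} to the one-sided setting.
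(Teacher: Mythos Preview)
Your overall approach is correct and is exactly the paper's: the corollary is an immediate specialization of Lemma~\ref{wom:lem:RT-PkBCA-by-RT-P1BCA} (and the RT-P1BCA($R$)~$\to$~RT-P1BCA equivalence), noting that $f_{\mathcal{P}}(w)=1$ forces $f_{\mathcal{P}'}(w)=1$ on members.

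However, you have the definition of ``negative one-sided error bound $\epsilon$'' backwards. In the paper (Section~\ref{ssbc:one-sided-bounded-error}) it means $f_{\mathcal{M}}(w)=1$ for $w\in L$ and $f_{\mathcal{M}}(w)\le\epsilon$ (not $\le 1-\epsilon$) for $w\notin L$; the parameter is $\epsilon=1-p$, so smaller $\epsilon$ is better. With the correct definition, for $w\notin L$ the lemma gives directly
\[
f_{\mathcal{P}'}(w)\le f_{\mathcal{P}}(w)+\epsilon''(1-f_{\mathcal{P}}(w))\le \epsilon+\epsilon''(1-\epsilon)=:\epsilon'>\epsilon,
\]
and $\epsilon'\to\epsilon$ as $\epsilon''\to 0$. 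The ``subtlety'' paragraph is then an artifact of the wrong definition and should be deleted. In fact, under \emph{your} stated convention the monotonicity claim you invoke would be false: if error bound $\beta$ meant $f\le 1-\beta$, then a larger $\beta$ would be a \emph{stronger} guarantee, not a weaker one. Fix the definition and the proof becomes a one-line specialization, identical to the paper's intended argument.
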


\begin{lemma}
	\label{wom:lem:inc-m-equal-inc-1}
	For a given RT-QFA-IOC($ m $) $ \mathcal{M} $, these exists a RT-QFA-IOC $ \mathcal{M'} $
	such that
	\begin{equation}
		f_{\mathcal{M}}(w)=f_{\mathcal{M'}}(w),
	\end{equation}
	for all $ w \in \Sigma^{*} $, where $ m > 2 $.
\end{lemma}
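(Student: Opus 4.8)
The plan is to reuse the modular-encoding idea from the proof of Lemma~\ref{qtm:lem:CA-m-isomorphic-CA}, specialized to a single, increment-only counter. The value $x$ of the big-step counter is written as $x = m\lfloor x/m\rfloor + (x\bmod m)$; the quotient $\lfloor x/m\rfloor$ is kept on the unit-increment counter of $\mathcal{M'}$, while the residue $x\bmod m$ is stored inside the finite control. Concretely, for each state $q\in Q$ of $\mathcal{M}$ I would introduce $m$ states $\langle q,j\rangle$ ($0\le j\le m-1$) in $\mathcal{M'}$, set $q_1'=\langle q_1,0\rangle$ and $Q_a'=Q_a\times\{0,\dots,m-1\}$, leave $\Sigma$ and $\Omega$ untouched, and replace every transition component $(q,\sigma)\overset{\delta}{\longrightarrow}\alpha\,(q',c,\omega)$ of $\mathcal{M}$ (with $c\in\{0,\dots,m\}$, $\alpha\in\mathbb{C}$) by, for each residue $j\in\{0,\dots,m-1\}$,
\[
\bigl(\langle q,j\rangle,\sigma\bigr)\ \overset{\delta'}{\longrightarrow}\ \alpha\left(\left\langle q',(j+c)\bmod m\right\rangle,\ \left\lfloor\frac{j+c}{m}\right\rfloor,\ \omega\right),
\]
the amplitude $\alpha$ being carried over unchanged.

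First I would check that $\mathcal{M'}$ is indeed a legal RT-QFA-IOC, i.e. that the counter increment $\lfloor(j+c)/m\rfloor$ always lies in $\{0,+1\}$: since $0\le j\le m-1$ and $0\le c\le m$, we have $0\le j+c\le 2m-1$, so the carry is $0$ or $1$; in particular the simulating counter is never decremented. This is exactly the step that uses the increment-only restriction on $\mathcal{M}$ — the analogous claim would fail for a general RT-Q1CA($m$), whose increments range over $\{-m,\dots,m\}$ and could force a carry of $-1$. I would also observe that the new transition consults the residue $j$ only through the \emph{state} $\langle q,j\rangle$ and never through the status of the counter, so the blind, write-only nature of the IOC model is respected.

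Next I would introduce the bijection $\varphi:\mathbb{N}\to\mathbb{N}\times\{0,\dots,m-1\}$ given by $\varphi(x)=(\lfloor x/m\rfloor,\ x\bmod m)$ and use it to put the configurations of $\mathcal{M}$ and $\mathcal{M'}$ in one-to-one correspondence: a configuration $(q,p,x)$ of $\mathcal{M}$ (state $q$, input-head position $p$, counter value $x$) matches the configuration $\bigl(\langle q,\,x\bmod m\rangle,\,p,\,\lfloor x/m\rfloor\bigr)$ of $\mathcal{M'}$. Under this correspondence the configuration matrices induced by $\delta$ and $\delta'$ — and hence the operators $\{E_{\sigma,\omega}\}$ they generate — are isomorphic for every $\sigma\in\tilde{\Sigma}$, so $\mathcal{M'}$ is well-formed if and only if $\mathcal{M}$ is, and the two machines process any input string $w\in\Sigma^*$ through identical step-by-step evolutions of (relabeled) amplitudes. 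Since accepting configurations correspond to accepting configurations, the acceptance probabilities agree at every step, giving $f_{\mathcal{M}}(w)=f_{\mathcal{M'}}(w)$.

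The only point needing any care is the carry bound $\lfloor(j+c)/m\rfloor\in\{0,+1\}$ together with the observation that it is precisely the increment-only hypothesis that makes it hold; everything else — the state blow-up by a factor of $m$, the definition of $\varphi$, the isomorphism of configuration matrices, and the preservation of well-formedness — is routine bookkeeping carried over unchanged from Lemma~\ref{qtm:lem:CA-m-isomorphic-CA}, and it goes through verbatim for $m=2$ as well, so the hypothesis $m>2$ is only a convenience.
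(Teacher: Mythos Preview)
Your proposal is correct and follows exactly the route the paper intends: the paper's proof simply refers back to Lemma~\ref{qtm:lem:CA-m-isomorphic-CA}, specializing it to a single counter with updates restricted to $\{0,\ldots,m\}$, and you have spelled out that specialization in full. Your explicit check that the carry $\lfloor(j+c)/m\rfloor$ stays in $\{0,+1\}$ (which is precisely what the increment-only hypothesis buys) and your remark that the counter status is never consulted are the only model-specific points, and you have handled them correctly.
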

\begin{proof}
	It can be easily followed from Lemma \ref{qtm:lem:CA-m-isomorphic-CA} 
	by considering only one counter\footnote{In fact, the proof is independent
	from the number of the counters.} and then 
	restricting its update values with $ \{0,\ldots,m\} $. 
\end{proof}

\section{Realtime Quantum Finite Automata with Incremental-Only Counters} \label{wom:IOCs}

We examine the capabilities of RT-QFA-IOCs in both the bounded and unbounded error settings, 
and show that they can simulate a family of conventional counter machines, which are themselves 
superior to RT-QFAs, in both these cases. 

\subsection{Bounded Error} \label{wom:ioc:bounded}

The main theorem to be proven in this subsection is

\begin{theorem}
	\label{wom:thm:RT-Q1BCA-by-RT-QFA-IOC}
	The class of languages recognized with bounded error by RT-QFA-IOCs contains all languages recognized with 
	bounded error by conventional realtime quantum automata with one blind counter (RT-Q1BCAs).
\end{theorem}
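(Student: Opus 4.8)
The plan is to take a RT-Q1BCA $\mathcal{M}$ recognizing $L$ with error bound $\epsilon<\frac12$ and build a RT-QFA-IOC recognizing $L$ in three stages: first make the counter increment-only by a length-linear shift, then replace the ``counter equals zero'' acceptance test by an interference test carried out at the right end-marker, and finally clean up and amplify. For the first stage, note that since $\mathcal{M}$ is blind its counter never influences the state evolution; along each computation history $h$ it merely records the net displacement $V_h\in\mathbb{Z}$, and $f_{\mathcal{M}}(w)=\sum_{q\in Q_a}|a_{q,V{=}0}|^2$, where $a_{q,V}$ is the amplitude $\mathcal{M}$ assigns to the configuration $(q,V)$ after reading $\tilde w$. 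Replacing every counter update $\delta\in\{-1,0,+1\}$ by $\delta+2\in\{1,2,3\}$ turns the counter into an increment-only one and produces a RT-QFA-IOC$(3)$ whose configuration evolution is isomorphic to that of $\mathcal{M}$: at step $i$ the configuration $(q,V)$ of $\mathcal{M}$ corresponds to $(q,V+2i)$, and because the machine is realtime the offset $2i$ is common to the whole superposition, so all interference among histories is reproduced faithfully. In particular, after reading $\tilde w$ the ``counter-zero'' configurations of $\mathcal{M}$ become exactly those with counter value $N:=2|\tilde w|$.

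The second stage is the heart of the construction, because a RT-QFA-IOC decides membership only from its internal state and can never inspect the counter. The idea is to have $\mathcal{M}'$ branch, with equal amplitude, into a \emph{simulation} branch behaving as above and one or more \emph{reference} branches that merely increment the counter at the same base rate, so that each reference branch reaches counter value $N$ in a designated internal state at the same time. On reading $\dollar$ one applies a quantum Fourier transform between the simulation branch and the reference branch(es), arranged so that the two sides interfere constructively into accepting internal states precisely when the simulated counter equals $N$ (equivalently $V=0$), while for every other counter value their contributions cancel between accepting and rejecting outcomes. The general-operator / finite-register capability of our RT-QFA model (Section~\ref{qtm:RT-QFAs}) is used to tag the distinct accepting states $q\in Q_a$ of $\mathcal{M}$ so that they do not spuriously interfere with one another, and to route the leftover (``garbage'') amplitude into outcomes contributing equally to acceptance and rejection. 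I expect this to be the main obstacle: the bare reference--branch interference yields only a signal \emph{linear} in the target amplitudes (and only their real parts), so the argument hinges on first putting $\mathcal{M}$ into a normal form --- obtained by appending a decohering measurement to collapse onto a one-dimensional accepting subspace and, if needed, composing with a conjugate copy so that the extracted quantity becomes real and nonnegative, at the cost of squaring the error gap --- in which this linear read-out coincides with the $\sum_{q\in Q_a}|a_{q,0}|^2$ information that separates $L$.

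For the third stage, the machine produced is a RT-QFA-IOC$(3)$, which by Lemma~\ref{wom:lem:inc-m-equal-inc-1} (an instance of Lemma~\ref{qtm:lem:CA-m-isomorphic-CA}) can be replaced by an equivalent RT-QFA-IOC. It then remains to check that bounded error is preserved: the construction carries the gap of $\mathcal{M}$ to a strictly positive (though smaller) gap for $\mathcal{M}'$ --- acceptance probability bounded above $\frac12$ on members of $L$ and below $\frac12$ on non-members --- after which tensoring several independent copies of $\mathcal{M}'$ and taking a majority vote, as in the amplification arguments of Section~\ref{berr:amplification}, drives the error below any prescribed $\epsilon'>0$. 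Since every conventional RT-Q1BCA fits the description above, this shows that the class of languages recognized with bounded error by RT-QFA-IOCs contains the corresponding class for RT-Q1BCAs, which is the content of Theorem~\ref{wom:thm:RT-Q1BCA-by-RT-QFA-IOC}.
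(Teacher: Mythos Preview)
Your first and third stages are fine: the $+2$ shift faithfully embeds the blind-counter evolution into an increment-only one, and Lemma~\ref{wom:lem:inc-m-equal-inc-1} handles the step-size reduction. The gap is in the second stage, and it is precisely the obstacle you flag: with a single simulation branch and a passive reference branch, the QFT on $\dollar$ reads out a quantity linear in the amplitudes $a_{q,0}$ (and sensitive to their phases relative to the reference), not the quantity $\sum_{q\in Q_a}|a_{q,0}|^2$ you need. Your proposed repair does not close it. A decohering collapse of $Q_a$ onto a single state via the register destroys exactly the sim/reference coherence the interference test relies on: in each register outcome either only one of the two branches survives, or---if you let the reference pass through every outcome---no decoherence has actually occurred and you are back to the original problem. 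And tensoring with a conjugate copy that runs the same blind counter makes the two branches follow the \emph{same} counter trajectory, so they match at every counter value, not just at zero; the ``counter equals $N$'' test evaporates.

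The paper avoids this obstacle by never introducing an asymmetric reference branch. It runs $m$ \emph{synchronized} copies of $\mathcal{M}$ in parallel, all sharing the same register symbol $\omega$ at every step, so in every register history the $m$ copies carry identical internal amplitudes. Copy $j$ encodes a $+1$ of $\mathcal{M}$ by $+j$ and a $-1$ by $+(m{-}j{+}1)$; the counters of copies $j$ and $j'$ then differ by $(j-j')(\#\mathrm{inc}-\#\mathrm{dec})$, hence all $m$ copies agree on the counter value iff $\mathcal{M}$'s net counter is zero. An $m$-way QFT on the copy index at the end of each accepting path therefore concentrates the full weight on the distinguished accept target when the counter is zero, and spreads it as $\frac{1}{m}$ accept versus $1-\frac{1}{m}$ reject otherwise---with no linear-read-out problem, because the branches being interfered carry identical internal amplitudes by construction. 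Choosing $m>\frac{2-2\epsilon}{1-2\epsilon}$ already gives bounded error, so no separate amplification stage is needed.
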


Before presenting our proof of Theorem \ref{wom:thm:RT-Q1BCA-by-RT-QFA-IOC}, 
let us demonstrate the underlying idea by showing how RT-QFA-IOCs can simulate a simpler 
family of machines, namely, deterministic automata with one blind counter. 

\begin{lemma}
	\label{wom:lem:RT-D1BCA-by-RT-QFA-IOC}
	If a language $ L $ is recognized by a RT-D1BCA,
	then $ L $ can also be recognized by a RT-QFA-IOC 
	with negative one-sided error bound $ \frac{1}{m} $, for any desired value of $m$.
\end{lemma}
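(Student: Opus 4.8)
The plan is to take a RT-D1BCA $\mathcal{D}$ recognizing $L$ and simulate it by a RT-QFA-IOC $\mathcal{M}$ that runs $m$ copies of a quantum-reversible version of $\mathcal{D}$ in superposition. The main obstacle is the mismatch between the two counter primitives: $\mathcal{D}$'s blind counter can increment, decrement, or stay put, and acceptance requires the counter value to be zero at the end, whereas $\mathcal{M}$'s counter can only increment (or stay put), and $\mathcal{M}$ cannot read the counter at all during computation. The standard trick to bridge this is to keep \emph{two} increment-only counters whose difference tracks the value of $\mathcal{D}$'s blind counter; but a RT-QFA-IOC has only one counter, so instead I would have $\mathcal{M}$ record, in its single increment-only counter, the total number of increments $p(w)$ and separately the total number of decrements $n(w)$ that $\mathcal{D}$ performs — and get these to cancel by a careful use of the finite register and a final interference step. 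Concretely, I would run $m$ branches, indexed $j\in\{1,\dots,m\}$, each of which increments the counter by $1$ on an ``increment'' move of $\mathcal{D}$ and by $(m-1)$ (equivalently, by $-1 \bmod m$, after appealing to Lemma~\ref{wom:lem:inc-m-equal-inc-1} to reduce an IOC$(m)$ to an IOC) on a ``decrement'' move, so that the counter accumulates $p(w) + (m-1)n(w)$; this is $\equiv p(w)-n(w) \pmod m$, which is $0 \bmod m$ exactly when $\mathcal{D}$'s blind counter ends at a multiple of $m$, and in particular is $0$ when it ends at $0$.

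So here is the cleaner route I would actually carry out. First, make $\mathcal{D}$ reversible: since $\mathcal{D}$ is deterministic and realtime, its single-step transition on each input symbol is a partial injection on (state, counter-status) pairs; extend it to a permutation by the standard technique (Lemma~\ref{qtm:lem:pm-simulated-by-qm} shows CQTMs simulate classical machines, and the reversibilization is routine for realtime deterministic devices), giving a unitary (in fact permutation) transition for each $\sigma\in\tilde\Sigma$ on the internal-state space. Second, attach to this a single increment-only counter that, on each step, is incremented by the amount $\mu_\sigma(q) \in \{0,1,m-1\}$ dictated by what $\mathcal{D}$'s blind counter does in that step (recall $\mathcal{D}$'s counter update is a function of the current state and input symbol only, since the machine is blind — it never reads the counter — and unidirectional); this keeps the machine well-formed because the counter action is a deterministic function of the state to be entered, so the local well-formedness conditions for RT-QFA-IOCs reduce to those of the underlying RT-QFA, which hold. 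Third, at the right end-marker, branch each surviving computation into $m$ equally-weighted sub-branches and perform a quantum Fourier transform over $\mathbb{Z}_m$ on an $m$-dimensional register whose basis state is the counter value mod $m$ — except that $\mathcal{M}$ cannot read the counter. This is the crux.

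To get around the unreadability of the counter, I would not try to read it; instead I would have the $m$ parallel branches carry phases $\omega_m^{jk}$ where $\omega_m = e^{2\pi i/m}$, $j$ indexes the branch, and $k$ is incremented (as a register quantity, \emph{not} the counter) in lockstep with the true counter — but $k$ lives in the \emph{internal state}, which has only finitely many values, so $k$ is exactly the counter value \emph{mod $m$}, computable on the fly with $O(m)$ extra states. Then $\mathcal{D}$'s blind counter is zero at the end iff $k\equiv 0\pmod m$, and the $m$ phased branches, summed at the end-marker, interfere to give amplitude $1$ in a designated accepting configuration when $k\equiv 0$ and amplitude $0$ there when $k\not\equiv 0$; meanwhile the ``did $\mathcal{D}$ reach an accept state'' bit is checked deterministically in parallel. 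Following the accepting branch only when both the state-acceptance bit and the $k\equiv 0$ interference fire, and otherwise routing amplitude into an equiprobable accept/reject split, gives: $w\in L \Rightarrow f_{\mathcal{M}}(w) = 1$, and $w\notin L \Rightarrow f_{\mathcal{M}}(w)\le \tfrac12 + O(\tfrac1m)$ — actually one can arrange the non-members to be rejected with probability $1-\tfrac1m$ by sizing the phased register so the residual ``leak'' is $\tfrac1m$, yielding negative one-sided error bounded by $\tfrac1m$ as claimed. The IOC actually used here only ever increments or stays, so the construction respects the RT-QFA-IOC definition; the amount-$(m-1)$ increments are eliminated at the start via Lemma~\ref{wom:lem:inc-m-equal-inc-1}. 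I expect the bookkeeping of exactly which configurations are designated accepting, and verifying that the well-formedness (superoperator) conditions survive the addition of the counter actions and the final QFT branching, to be the delicate part; the interference computation itself is the usual $\tfrac1m\sum_{j=0}^{m-1}\omega_m^{jk}$ Kronecker-delta argument.
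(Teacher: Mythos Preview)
Your approach has a fundamental gap: tracking the counter value modulo $m$ in the internal state (your quantity $k$) only lets you test whether $\mathcal{D}$'s final counter value is a \emph{multiple} of $m$, not whether it is exactly zero. Your claim that ``$\mathcal{D}$'s blind counter is zero at the end iff $k\equiv 0\pmod m$'' is false in the $\Leftarrow$ direction. For a non-member $w$ on which $\mathcal{D}$ ends in an accepting state with counter value $m$ (or $-m$, or $2m$, \dots), your machine accepts with probability $1$, so the claimed error bound fails outright. Moreover, in your ``cleaner route'' the increment-only counter is vestigial: every branch writes the \emph{same} amount to it at each step, and all the discriminating work is done by the finite-state register $k$. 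What you have really built is a RT-QFA, and RT-QFAs recognize only regular languages with bounded error, whereas RT-D1BCAs recognize nonregular ones.

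The paper's construction uses the IOC in an essential way that your proposal misses: the $m$ branches write \emph{different} amounts to the counter. Branch $j$ increments by $j$ when $\mathcal{D}$ increments and by $m+1-j$ when $\mathcal{D}$ decrements, so after $p$ increments and $n$ decrements of $\mathcal{D}$, branch $j$'s counter holds $jp + (m+1-j)n = j(p-n) + (m+1)n$. These values coincide for all $j$ exactly when $p=n$, i.e., when $\mathcal{D}$'s counter is \emph{exactly} zero; otherwise they are pairwise distinct. The final $m$-way QFT over the branches then yields acceptance probability $1$ (full constructive interference, same configuration) in the former case and $\tfrac{1}{m}$ (no interference, $m$ distinct configurations) in the latter. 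The write-only counter is not a value to be read, even modulo $m$; it is the \emph{arena} in which the branches are separated or collapsed.
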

\begin{proof}
We build a RT-QFA-IOC($ m $) that recognizes $L$, 
which is sufficient by Lemma \ref{wom:lem:inc-m-equal-inc-1}. 

	Throughout this proof, the symbol ``$ i $'' is reserved for the imaginary number $ \sqrt{-1} $.
	Let the given RT-D1BCA be $ \mathcal{D} = (Q,\Sigma,\delta,q_{1},Q_{a}) $,
	where $ Q = \{q_{1}, \ldots, q_{n} \} $. We build $ \mathcal{M} = (Q',\Sigma,\Omega,\delta',q_{1,1},Q_{a}') $,
	where
	\begin{itemize}
		\item $ Q'= \{q_{j,1},\ldots,q_{j,n} \mid 1 \le j \le m\} $,
		\item $ Q'_{a} = \{ q_{m,i} \mid q_{i} \in Q_{a} \} $, and 
		\item $ \Omega = \{ \omega_{1}, \ldots, \omega_{n} \} $.
	\end{itemize}
	$ \mathcal{M} $ splits the computation into $m$ paths, i.e. $ \mathsf{path}_{j} $ ($ 1 \le j \le m $),
	with equal amplitude on the left end-marker $ \cent $. That is,
	\begin{equation}
		\delta' (q_{1,1},\cent) =
			\underbrace{\frac{1}{\sqrt{m}}(q_{1,t},0,\omega_{1})}_{\mathsf{path}_{1}} + \cdots +
			\underbrace{\frac{1}{\sqrt{m}}(q_{m,t},0,\omega_{1})}_{\mathsf{path}_{m}},
	\end{equation}
	whenever $ \delta(q_{1},\cent,q_{t},0) = 1 $, where $ 1 \le t \le n $.
	Until reading the right end-marker $ \dollar $, $ \mathsf{path}_{j} $ proceeds in the following way:
	For each  $ \sigma \in \Sigma $ and $  s \in \{1,\ldots,n\} $,
	\begin{eqnarray}
		\label{eq:deterministic-transition}
		\mathsf{path}_{j}: \delta'(q_{j,s},\sigma) & = & (q_{j,t},c_{j},\omega_{s})	
	\end{eqnarray}
	whenever $ \delta(q_{s},\sigma,q_{t},c) = 1 $, where $ 1 \le t \le n $, and
	\begin{itemize}
		\item $ c_{j} = j $ if $ c = 1 $,
		\item $ c_{j} = m-j+1 $ if $ c = -1 $, and
		\item $ c_{j} = 0 $, otherwise.
	\end{itemize} 
	
To paraphrase, each path separately simulates\footnote{Note that each transition of $ \mathcal{M} $ 
in Equation \ref{eq:deterministic-transition} writes a symbol determined by the source state of 
the corresponding transition of  $ \mathcal{D} $ to the register. 
This ensures the orthonormality condition for quantum machines described earlier.}
the computation of $ \mathcal{D} $ on the input string, going through states that correspond 
to the states of $ \mathcal{D} $, and incrementing their counters whenever $ \mathcal{D} $ changes its counter, as follows:
	\begin{itemize}
		\item $ \mathsf{path}_{j} $ increments the counter by $ j $
			whenever $ \mathcal{D} $ increments the counter by 1,
		\item $ \mathsf{path}_{j} $ increments the counter by $ m-j+1 $
			whenever $ \mathcal{D} $ decrements the counter by 1, and
		\item $ \mathsf{path}_{j} $ does not make any incrementation, otherwise.
	\end{itemize}

	On symbol $ \dollar $, the following transitions are executed 
	(note that the counter updates in this last step are also made according to the setup described above):
	\newline
	If $ q_{t} \in Q_{a} $,
	\begin{equation}
		\label{eq:deterministic-accept}
		\mathsf{path}_{j}: \delta' (q_{j,s},\dollar) =
		\frac{1}{\sqrt{m}} \sum_{l=1}^{m} e^{\frac{2 \pi i}{m}jl} (q_{l,t},c_{j},\omega_{s})
	\end{equation}
	and if $ q_{t} \notin Q_{a} $,
	\begin{equation}
		\label{eq:deterministic-reject}
		\mathsf{path}_{j}: \delta' (q_{j,s},\dollar) = (q_{j,t},c_{j},\omega_{s}),
	\end{equation}
	whenever $ \delta(q_{s},\dollar,q_{t},c) = 1 $, where $ 1 \le t \le n $. 
	
The essential idea behind this setup, where different paths increment their counters with different values to represent increments and decrements performed by $ \mathcal{D} $ is that the increment values used by $ \mathcal{M} $ have been selected carefully to ensure that the counter has the same value in all of $ \mathcal{M} $'s paths at any time if $ \mathcal{D} $'s counter is zero at that time. Furthermore, all of $ \mathcal{M} $'s paths are guaranteed to have different counter values if $ \mathcal{D} $'s counter is nonzero\footnote{This idea has been adapted from an algorithm by Kondacs and Watrous for a different type of quantum automaton, whose analysis can be found in \cite{KW97}.}.

\begin{figure}[h]
	\begin{center}
       \fbox{
       \begin{minipage}{0.95\textwidth}
       \footnotesize
               Let $ N > 1 $ be a integer. The $ N $-way quantum Fourier transform (QFT) is the transformation
\begin{equation}
                       \delta(d_{j}) \rightarrow \alpha
                               \sum\limits_{l = 1}^{N} e^{\frac{2 \pi i }{N}jl} (r_{l}), ~~~~ 1 \le j \le N ,
\end{equation}
       from the
               \textit{domain} states  $ d_{1}, \ldots, d_{N} $ to the
\textit{range} states $ r_{1}, \ldots, r_{N} $.
               $ r_{N} $ is the \textit{distinguished} range element. 
               $ \alpha $ is a real number such that $ \alpha^{2}N \leq 1 $.
               The QFT  can be used to check whether separate computational paths
of a quantum program that are in superposition have converged to the
same configuration at a particular step. Assume that the program has
previously split to $N$ paths, each of which have the same amplitude,
and whose state components are the  $ d_{j} $. In all the uses of the
QFT in our algorithms, one of the following conditions is satisfied:
\begin{enumerate}
                       \item The WOM component of the configuration is different in each
of the $N$ paths: In this case, the QFT further divides each path
to $N$ subpaths, that differs from each other by the internal
state component. No interference takes place.
                       \item Each path has the same WOM content at the moment of the QFT:
In this case, the paths that have $ r_{1}, \ldots, r_{N-1} $ as their
state components destructively interfere with each
other \cite{YS09B}, and $ \alpha^{2} N $ of the probability of
the $N$ incoming paths are accumulated on a single resulting path
with that WOM content, and $ r_{N} $ as its state component.
               \end{enumerate}
       \end{minipage}}
       \end{center}
       \caption{The description of $ N $-way quantum Fourier transform used by the machines having a WOM}
       \vskip\baselineskip
       \label{fig:wom:N-way-QFT}
\end{figure}

	For a given input string $ w \in \Sigma^{*} $,
	\begin{enumerate}
		\item if $ \mathcal{D} $ ends up in a state not in $ Q_{a} $ (and so $ w \notin L $), 
			then $ \mathcal{M} $ rejects the input in each of its $m$ paths, and the overall
			rejection probability is 1;
		\item if $ \mathcal{D} $ ends up in a state in $ Q_{a} $, all paths make an
			$ m $-way QFT (see Figure \ref{fig:wom:N-way-QFT}) 
			whose distinguished target is an accepting state:
			\begin{enumerate}
				\item if the counter of $ \mathcal{D} $ is zero (and so $ w \in L $),
					all paths have the same counter value, that is, they interfere with each other, and so $ \mathcal{M} $ accepts with probability 1;
				\item if the counter of $ \mathcal{D} $ is not zero (and so $ w \notin L $), there is no interference, and each path ends by accepting $ w $ with probability $ \frac{1}{m^{2}} $, leading to a total acceptance probability of $ \frac{1}{m} $, and a rejection probability of $ 1 - \frac{1}{m} $.
			\end{enumerate}
	\end{enumerate}
 \end{proof}

\begin{proof}[Proof of Theorem \ref{wom:thm:RT-Q1BCA-by-RT-QFA-IOC}]
	Given	a RT-Q1BCA that recognizes a language $L$ $ \mathcal{M} $ with error bound $ \epsilon < \frac{1}{2} $, we build a RT-QFA-IOC($ m $) $ \mathcal{M'} $, using essentially the same construction as in 
	Lemma \ref{wom:lem:RT-D1BCA-by-RT-QFA-IOC}: $ \mathcal{M'} $ simulates $m$ copies of $ \mathcal{M} $, and these copies use the set of increment sizes described in the proof of Lemma \ref{wom:lem:RT-D1BCA-by-RT-QFA-IOC} to mimic the updates to $ \mathcal{M} $'s counter. Unlike the deterministic machine of that lemma, $ \mathcal{M} $ can fork to multiple computational paths, which is handled by modifying the transformation of Equation \ref{eq:deterministic-transition} as 
	\begin{equation}
		\label{eq:quantum-transition}
		\mathsf{path}_{j}: \delta'(q_{j,s},\sigma,q_{j,t},c_{j},\omega) = \alpha
	\end{equation}
	whenever $ \delta(q_{s},\sigma,q_{t},c,\omega) = \alpha $, where $ 1 \le t \le n $, and
	$ \omega \in \Omega $, and that of Equation \ref{eq:deterministic-accept} as
	\begin{equation}
		\label{eq:quantum-accept}
		\mathsf{path}_{j}: \delta' (q_{j,s},\dollar,q_{l,t},c_{j},\omega) =
		\frac{\alpha}{\sqrt{m}} e^{\frac{2 \pi i}{m}jl},
		\mbox{ for } l \in \{1,\ldots,m\}
	\end{equation}
	whenever $ \delta(q_{s},\dollar,q_{t},c,\omega) = \alpha $, where $ 1 \le t \le n $ and $ \omega \in \Omega $;
causing the corresponding paths of the $m$ copies of $ \mathcal{M} $ to undergo the $m$-way QFTs associated by each accept state as described above at the end of the input.

We therefore have that the paths of $ \mathcal{M} $ that end in non-accept states do the same thing with the same total probability in $ \mathcal{M'} $.  The paths of $ \mathcal{M} $ that end in accept states with the counter containing zero make $ \mathcal{M'} $ accept also with their original total probability, thanks to the QFT. The only mismatch between the machines is in the remaining case of the paths of $ \mathcal{M} $ that end in accept states with a nonzero counter value. As explained in the proof of Lemma \ref{wom:lem:RT-D1BCA-by-RT-QFA-IOC}, each such path contributes $ \frac{1}{m} $ of its probability to acceptance, and the rest to rejection.

	For any given input string $ w \in \Sigma^{*} $:
	\begin{itemize}
		\item If $ w \in L $, we have $ f_{\mathcal{M}}^{a}(w) \geq 1-\epsilon $ 
			and $ f_{\mathcal{M}}^{r}(w) \leq \epsilon $, then 
			\begin{equation}
				f_{\mathcal{M'}}^{a}(w) = f_{\mathcal{M}}^{a}(w) + \frac{1}{m} f_{\mathcal{M}}^{r}(w) 
				\geq 1 -\epsilon .
			\end{equation}
		\item If $ w \notin L $, we have $ f_{\mathcal{M}}^{a}(w) \leq \epsilon $ and 
			$ f_{\mathcal{M}}^{r}(w) \geq 1 - \epsilon $, then 
			\begin{equation}
				f_{\mathcal{M'}}^{a}(w) = f_{\mathcal{M}}^{a}(w) + \frac{1}{m} f_{\mathcal{M}}^{r}(w) 
				\leq  \epsilon + \frac{1}{m}(1-\epsilon).
			\end{equation}
	\end{itemize}
	Therefore, by setting $ m $ to a value greater than $ \frac{2-2\epsilon}{1-2\epsilon} $,
	$ L $ is recognized by $ \mathcal{M'} $ with error bound 
	$ \epsilon' = \epsilon + \frac{1}{m}(1-\epsilon) < \frac{1}{2} $.
	Moreover, by setting $ m $ to sufficiently large values, $ \epsilon' $ can be tuned to be arbitrarily close to $ \epsilon $.
 \end{proof}

\begin{corollary}
	If $ L $ is recognized by a RT-Q1BCA (or a RT-P1BCA) $ \mathcal{P} $ 
	with negative one-sided error bound $ \epsilon < 1 $,
	then $ L $ is recognized by a RT-QFA-IOC $ \mathcal{M} $ with 
	negative one-sided error bound $ \epsilon' $, i.e. $ \epsilon < \epsilon' < 1 $.
	Moreover, $ \epsilon' $ can be tuned to be arbitrarily close to $ \epsilon $.
\end{corollary}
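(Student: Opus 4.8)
The plan is to prove the corollary as a straightforward consequence of Theorem \ref{wom:thm:RT-Q1BCA-by-RT-QFA-IOC} (for the quantum case) and the preceding machinery for the probabilistic case, by tracking what happens to one-sided error under the simulation construction. The key observation is that the construction used in the proof of Theorem \ref{wom:thm:RT-Q1BCA-by-RT-QFA-IOC} has a crucial asymmetry: when the simulating RT-QFA-IOC $ \mathcal{M'} $ makes a ``mistake'' relative to the simulated machine $ \mathcal{M} $, it can only do so by converting part of a \emph{rejecting} probability into an accepting one (the case of a path of $ \mathcal{M} $ ending in an accepting state with a nonzero counter). It never converts accepting probability into rejecting probability, since accepting-with-counter-zero paths still accept with full probability after the QFT interferes constructively, and paths ending in non-accepting states are reproduced exactly.

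First I would recall the negative one-sided error setting from Section \ref{ssbc:one-sided-bounded-error}: $ L $ is recognized by $ \mathcal{P} $ with negative one-sided error bound $ \epsilon $ means $ f_{\mathcal{P}}(w) = 1 $ for $ w \in L $ and $ f_{\mathcal{P}}(w) \le 1-\epsilon $ (equivalently $ f_{\mathcal{P}}^{r}(w) \ge \epsilon $) for $ w \notin L $. For the quantum case, let $ \mathcal{P} $ be a RT-Q1BCA recognizing $ L $ with negative one-sided error bound $ \epsilon < 1 $; since negative one-sided error is a special case of bounded error (with $ p = 1 $ only when... actually we must be slightly careful here since $ 1 - \epsilon $ may exceed $ \frac{1}{2} $), I would instead directly invoke the construction of the RT-QFA-IOC($ m $) $ \mathcal{M'} $ from the proof of Theorem \ref{wom:thm:RT-Q1BCA-by-RT-QFA-IOC} applied to $ \mathcal{P} $ and check the resulting acceptance probabilities. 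When $ w \in L $: $ f_{\mathcal{P}}^{a}(w) = 1 $, $ f_{\mathcal{P}}^{r}(w) = 0 $, so $ f_{\mathcal{M'}}^{a}(w) = f_{\mathcal{P}}^{a}(w) + \frac{1}{m} f_{\mathcal{P}}^{r}(w) = 1 $, i.e. $ \mathcal{M'} $ accepts every member with certainty, preserving zero error on members. When $ w \notin L $: $ f_{\mathcal{P}}^{r}(w) \ge \epsilon $ and $ f_{\mathcal{P}}^{a}(w) \le 1 - \epsilon $, so $ f_{\mathcal{M'}}^{a}(w) = f_{\mathcal{P}}^{a}(w) + \frac{1}{m} f_{\mathcal{P}}^{r}(w) \le (1-\epsilon) + \frac{1}{m}(1-(1-\epsilon))$... here I need to be careful: the bound should be in terms of $ f_{\mathcal{P}}^{r}(w) \ge \epsilon $, giving $ f_{\mathcal{M'}}^{r}(w) = f_{\mathcal{P}}^{r}(w) - \frac{1}{m} f_{\mathcal{P}}^{r}(w) = (1 - \frac{1}{m}) f_{\mathcal{P}}^{r}(w) \ge (1-\frac{1}{m})\epsilon $, which can be made arbitrarily close to $ \epsilon $ by taking $ m $ large. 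Setting $ \epsilon' = (1-\frac{1}{m})\epsilon $ then gives $ \epsilon < \epsilon' $... wait, that's backwards — $ (1-\frac{1}{m})\epsilon < \epsilon $. I should set $ \epsilon' $ to be any value with $ (1-\frac{1}{m})\epsilon \ge 1 - \text{(something)}$; the cleanest fix is to phrase it as: for any $ \epsilon' $ with $ \epsilon < \epsilon' < 1 $ wait — re-reading the corollary statement, the inequality $ \epsilon < \epsilon' < 1 $ means the error \emph{worsens} slightly, so I want $ f_{\mathcal{M'}}^{r}(w) \ge 1 - \epsilon' $, i.e. I need $ (1-\frac{1}{m})\epsilon \ge \epsilon $? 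No. The resolution: negative one-sided error bound $ \epsilon $ corresponds to rejection probability $ \ge \epsilon $ on non-members, and a \emph{larger} error bound $ \epsilon' $ actually means we tolerate rejection probability only $ \ge \epsilon' $... I would double-check the convention, but I believe the intended reading makes $ \epsilon' $ close to $ \epsilon $ from whichever side the algebra dictates, and I would state the bound $ f_{\mathcal{M'}}^{r}(w) \ge (1 - \frac{1}{m})\epsilon $ and choose $ m $ accordingly so that $ (1-\frac1m)\epsilon $ exceeds any target below $ \epsilon $, matching ``$ \epsilon' $ can be tuned to be arbitrarily close to $ \epsilon $.''

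For the probabilistic case I would do the analogous thing without needing the QFT: a RT-P1BCA $ \mathcal{P} $ recognizing $ L $ with negative one-sided error $ \epsilon $ can be fed into the same ``$ m $ copies with staggered increments'' construction — but now the copies need not be put in superposition, since the zero-check trick of Freivalds (Figure \ref{wom:fig:fr-79}), already invoked in Lemma \ref{wom:lem:RT-PkBCA-by-RT-P1BCA}, suffices: picking one of the $ m $ increment-scalings uniformly at random, the IOC contents of the chosen copy are zero exactly when the simulated counter is zero, with one-sided error $ \frac{1}{m} $ of a false zero-reading only on the side that converts rejection into acceptance. This gives the probabilistic half of the corollary with the same bookkeeping. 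I expect the main obstacle to be purely notational — pinning down the exact direction of the inequality in the ``negative one-sided error bound $ \epsilon' $'' convention and verifying that the simulation's error leakage is entirely one-directional (reject$\to$accept), never the reverse — rather than any genuine mathematical difficulty; once the earlier theorems are in hand, the proof is essentially a one-line substitution plus a sign check.
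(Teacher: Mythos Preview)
Your approach is the paper's: apply the construction from Theorem~\ref{wom:thm:RT-Q1BCA-by-RT-QFA-IOC} and note that the only probability leakage is from rejection to acceptance. The confusion that derails your write-up is purely definitional. In the paper's convention (Section~\ref{ssbc:one-sided-bounded-error}), ``negative one-sided error bound $\epsilon$'' means $f^a(w)=1$ for $w\in L$ and $f^a(w)\le\epsilon$ for $w\notin L$ --- not $f^a(w)\le 1-\epsilon$ as you wrote. A larger $\epsilon$ is \emph{worse}, which is why the corollary has $\epsilon<\epsilon'$; once you adopt the right reading, the direction of the inequality is no longer mysterious.

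With the correct convention the arithmetic is immediate. For $w\in L$: since $f^a_{\mathcal{P}}(w)=1$, every path ends in an accept state with counter zero, the QFT interferes perfectly, and $f^a_{\mathcal{M}}(w)=1$. For $w\notin L$: the bound $f^a_{\mathcal{M}}(w) \le f^a_{\mathcal{P}}(w)+\tfrac{1}{m}f^r_{\mathcal{P}}(w)$ from the proof of Theorem~\ref{wom:thm:RT-Q1BCA-by-RT-QFA-IOC} gives
\[
f^a_{\mathcal{M}}(w)\ \le\ \epsilon+\tfrac{1}{m}(1-\epsilon)\ =:\ \epsilon',
\]
with $\epsilon<\epsilon'<1$ for every $m>1$ and $\epsilon'\to\epsilon$ as $m\to\infty$. (A side remark: your equation $f^r_{\mathcal{M}'}(w)=(1-\tfrac{1}{m})f^r_{\mathcal{P}}(w)$ is not an equality --- only the accept-state-with-nonzero-counter portion of $f^r_{\mathcal{P}}$ is scaled by $\tfrac{m-1}{m}$, while the non-accept-state portion is preserved exactly --- though the inequality you need does hold.)

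Finally, the separate Freivalds-style argument for the RT-P1BCA case is unnecessary: a RT-P1BCA is a special case of a RT-Q1BCA (cf.\ Lemma~\ref{qtm:lem:pm-simulated-by-qm}), so the quantum construction already covers it.
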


For a given nonnegative integer $ k $, $ L_{eq-k} $ is the language defined over the alphabet 
$ \{a_{1},\ldots,a_{k},b_{1},\ldots,b_{k}\} $ as the set of all strings containing equal
numbers of $ a_{i} $'s and $ b_{i} $'s, for each $ i \in \{1,\ldots,k\} $.

\begin{fact}
	\cite{Fr79}
	For any nonnegative $ k $, 
	$ L_{eq-k} $ can be recognized by a RT-P$ 1 $BCA with negative one-sided bounded error $ \epsilon $,
	where $ \epsilon < \frac{1}{2} $.	
\end{fact}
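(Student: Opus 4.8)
The plan is to reduce to the deterministic multicounter case and then invoke Freivalds' simulation. First I would observe that $L_{eq-k}$ is trivially recognized by a realtime deterministic blind $k$-counter automaton (RT-D$k$BCA): on reading $a_i$ increment the $i$th counter, on reading $b_i$ decrement it, and enter an accepting state on $\dollar$. By the semantics of a blind counter automaton, the input is then accepted exactly when all $k$ counters are zero at the end, i.e. exactly when $w \in L_{eq-k}$; the counters are never inspected during the scan, which is what blindness requires. The already-stated fact of \cite{Fr79} — any language recognized by a RT-D$k$BCA is recognized with negative one-sided error bound $\epsilon$ by a RT-P1BCA, for every $\epsilon \in (0,\frac{1}{2})$ — then yields the claim immediately.

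More explicitly, I would spell out the single-counter probabilistic construction underlying that fact, essentially the scheme of Figure \ref{wom:fig:fr-79}. Fix $\epsilon \in (0,\frac{1}{2})$ and put $R = 2^{\lceil k/\epsilon \rceil}$. The RT-P1BCA$(R)$ first chooses a number $r$ uniformly from $\{1,\ldots,R\}$, then, while scanning $w$, increments its single counter by $r^i$ for each occurrence of $a_i$ and decrements it by $r^i$ for each occurrence of $b_i$, and finally enters an accepting state on $\dollar$ (so that, being blind, it accepts iff the counter is zero). Writing $m_i = |w|_{a_i} - |w|_{b_i}$, the counter value at the end of the scan equals $P(r) := \sum_{i=1}^{k} m_i r^i$.

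The error analysis is the one real step. If $w \in L_{eq-k}$ then $m_i = 0$ for all $i$, so $P(r)=0$ for every $r$ and $w$ is accepted with probability $1$; the error is therefore one-sided in the negative sense. If $w \notin L_{eq-k}$ then $P$ is a nonzero polynomial of degree at most $k$, hence has at most $k$ roots, so $P(r)=0$ for at most $k$ of the $R$ equiprobable values of $r$, and $w$ is wrongly accepted with probability at most $k/R$. Since $2^x \ge x$ for all $x \ge 0$, we get $k/R \le k/2^{k/\epsilon} \le \epsilon < \frac{1}{2}$. Finally, since every RT-P1BCA$(m)$ has an equivalent RT-P1BCA for $m>1$ (as noted in the text, cf. also Lemma \ref{qtm:lem:CA-m-isomorphic-CA}), one obtains a genuine RT-P1BCA recognizing $L_{eq-k}$ with negative one-sided error bound $\epsilon$. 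The only conceptual obstacle is seeing that the balance conditions for all $k$ symbol pairs can be folded into a single integer by evaluating the "difference polynomial" at a random point, and then bounding the failure probability by its number of roots; everything else is routine.
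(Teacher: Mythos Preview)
Your argument is correct. The paper does not actually prove this statement --- it is recorded as a \texttt{Fact} with citation \cite{Fr79} and no proof --- but your write-up is precisely the Freivalds construction that the paper later summarizes in Figure~\ref{wom:fig:fr-79} and reuses in Lemma~\ref{wom:lem:RT-PkBCA-by-RT-P1BCA}, specialized to the obvious RT-D$k$BCA for $L_{eq-k}$; so there is nothing to compare beyond noting that you have supplied the omitted details in the intended way.
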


\begin{corollary}
	RT-QFA-IOCs can recognize some non-context-free languages with bounded error.
\end{corollary}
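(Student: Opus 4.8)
The plan is to combine the immediately preceding results: the fact that $L_{eq\textrm{-}k}$ is recognized by a RT-P$1$BCA with bounded (indeed negative one-sided) error, and Theorem \ref{wom:thm:RT-Q1BCA-by-RT-QFA-IOC}, which says that every language recognized with bounded error by a RT-Q1BCA is also recognized with bounded error by a RT-QFA-IOC. First I would note that a RT-P1BCA is a special case of a RT-Q1BCA (via the standard simulation of probabilistic machines by quantum machines with classical behaviour, cf.\ Lemma \ref{qtm:lem:pm-simulated-by-qm} extended to counter automata), so $L_{eq\textrm{-}k}$ is recognized with bounded error by a RT-QFA-IOC for every $k$. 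Then I would exhibit a concrete $k$ for which $L_{eq\textrm{-}k}$ is known not to be context-free; the natural choice is $k=2$, i.e.\ the language of strings over $\{a_1,a_2,b_1,b_2\}$ having equally many $a_1$'s as $b_1$'s and equally many $a_2$'s as $b_2$'s.

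The one genuine obstacle is establishing non-context-freeness of $L_{eq\textrm{-}2}$ (or whatever $k$ is chosen). This is a routine but necessary application of Ogden's lemma or the pumping lemma for context-free languages: intersect $L_{eq\textrm{-}2}$ with the regular language $a_1^{*}b_1^{*}a_2^{*}b_2^{*}$ to obtain $\{a_1^{n}b_1^{n}a_2^{m}b_2^{m} \mid n,m \ge 0\}$, and then show this intersection is not context-free by a standard pumping argument (any decomposition that pumps within the first block cannot simultaneously maintain the balance in both the $a_1/b_1$ pair and the $a_2/b_2$ pair). Since CFLs are closed under intersection with regular languages, it follows that $L_{eq\textrm{-}2}$ itself is not context-free. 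I would not grind through the pumping details in the paper; a one-line pointer to the classical fact that $\{a^{n}b^{n}c^{m}d^{m}\}$ is not context-free suffices.

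So the proof would read: take $k=2$; by the cited fact, $L_{eq\textrm{-}2}$ is recognized by a RT-P1BCA with negative one-sided bounded error $\epsilon<\tfrac12$; since every RT-P1BCA is a RT-Q1BCA, Theorem \ref{wom:thm:RT-Q1BCA-by-RT-QFA-IOC} (or its corollary for the negative one-sided case) yields a RT-QFA-IOC recognizing $L_{eq\textrm{-}2}$ with bounded error; and $L_{eq\textrm{-}2}$ is not context-free because its intersection with $a_1^{*}b_1^{*}a_2^{*}b_2^{*}$ is the non-context-free language $\{a_1^{n}b_1^{n}a_2^{m}b_2^{m}\}$. This gives the corollary. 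I expect the whole argument to be short; the main thing to be careful about is making sure the chosen $k$ really does give a non-CFL and that the chain of simulations (probabilistic $\to$ quantum counter $\to$ RT-QFA-IOC) preserves the bounded-error property, which it does by the results already in the excerpt.
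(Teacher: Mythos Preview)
Your overall plan---combine Freivalds' fact about $L_{eq\textrm{-}k}$ with Theorem~\ref{wom:thm:RT-Q1BCA-by-RT-QFA-IOC}, using that a RT-P1BCA is a special case of a RT-Q1BCA---is exactly what the paper intends, and the chain of simulations does preserve bounded error. The problem is in your non-context-freeness argument.

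You intersect $L_{eq\textrm{-}2}$ with $a_1^{*}b_1^{*}a_2^{*}b_2^{*}$ and obtain $\{a_1^{n}b_1^{n}a_2^{m}b_2^{m}\mid n,m\ge 0\}$, then assert this is not context-free, pointing to ``the classical fact that $\{a^{n}b^{n}c^{m}d^{m}\}$ is not context-free''. That fact is false: $\{a^{n}b^{n}c^{m}d^{m}\}$ \emph{is} context-free, being the concatenation of the two CFLs $\{a^{n}b^{n}\}$ and $\{c^{m}d^{m}\}$ (and a single PDA can handle both halves sequentially). Your pumping sketch fails because a pump contained in, say, the $a_1$-block can be balanced by pumping in the $b_1$-block, and the $a_2/b_2$ pair is untouched; the two constraints are independent once the blocks are arranged this way. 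So with this choice of regular language the intersection argument collapses and you have not shown $L_{eq\textrm{-}2}\notin\mathrm{CFL}$.

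The fix is to choose the regular language so that the two balance constraints become \emph{interleaved}. Intersect instead with $a_1^{*}a_2^{*}b_1^{*}b_2^{*}$ to get $\{a_1^{n}a_2^{m}b_1^{n}b_2^{m}\}$, which is the genuinely non-context-free language (isomorphic to $\{a^{n}b^{m}c^{n}d^{m}\}$); now any pumping window of length at most $p$ in $a_1^{p}a_2^{p}b_1^{p}b_2^{p}$ spans at most two adjacent blocks and necessarily breaks one of the cross constraints. With that correction your argument goes through and matches the paper's intended reasoning.
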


We have therefore established that realtime quantum finite automata equipped with a WOM tape are more powerful than plain RT-QFAs, even when the WOM in question is restricted to be just a counter.

$ L_{eq-1} $'s complement, which can of course be recognized with positive one-sided bounded error by a 
RT-QFA-IOC by the results above, is a deterministic context-free language (DCFL). 
Using the fact \cite{AGM92} that no nonregular DCFL can be recognized by a nondeterministic 
TM using $ o(\log(n)) $ space, together with 
Lemma \ref{qtm:lem:pm-simulated-by-qm}, we are able to conclude the following.

\begin{corollary}
	\label{cor:QTM-WOMS-superior-PTM-WOMs}
       QTM-WOMs are strictly superior to PTM-WOMs for any space bound $ o(\log(n)) $
       in terms of language recognition with positive one-sided bounded error.
\end{corollary}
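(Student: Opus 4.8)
The plan is to separate the two machine classes with a single witness language, the complement $\overline{L_{eq-1}}$ of $L_{eq-1}$, establishing the inclusion in one direction and strictness via this language in the other. Fix any space bound $s\in o(\log n)$.

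\textbf{Inclusion.} First I would observe that every language recognized with positive one-sided bounded error by a PTM-WOM whose read/write work tape uses space $s$ is also recognized, with positive one-sided bounded error, by a QTM-WOM whose work tape uses space $s$. This is the write-only-memory analogue of Lemma \ref{qtm:lem:pm-simulated-by-qm}: the exact simulation of a PTM by a (classical-head) QTM given there --- keep the same internal states and work tape, and assign to each transition the square root of its probability as amplitude --- carries over verbatim to machines augmented by a WOM, reproduces the acceptance probability on every input (so positive one-sidedness and the error bound are preserved), and leaves the work-tape space untouched.

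\textbf{Strictness, upper side.} Next I would place $\overline{L_{eq-1}}$ in the $o(\log n)$-space QTM-WOM class. By Fact \cite{Fr79}, $L_{eq-1}$ (the $k=1$ case of $L_{eq-k}$) is recognized by a RT-P1BCA with negative one-sided bounded error, hence by the corollary to Theorem \ref{wom:thm:RT-Q1BCA-by-RT-QFA-IOC} it is recognized by a RT-QFA-IOC with negative one-sided bounded error. Replacing the accepting set $Q_a$ of that RT-QFA-IOC by $Q\setminus Q_a$ converts the acceptance probability $f(w)$ into $1-f(w)$ --- legitimate because the operators of a RT-QFA-IOC are trace preserving, so the final density matrix has trace $1$ --- and yields a RT-QFA-IOC recognizing $\overline{L_{eq-1}}$ with positive one-sided bounded error. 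A RT-QFA-IOC has no read/write work tape at all (its only auxiliary store is an increment-only counter, a restricted write-only memory that is treated as a separate device), so it is a QTM-WOM using $0$ work-tape cells, trivially within the bound $s$.

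\textbf{Strictness, lower side, and the main obstacle.} Finally I would show no PTM-WOM with work-tape space $s\in o(\log n)$ recognizes $\overline{L_{eq-1}}$ with positive one-sided bounded error. Suppose $\mathcal{P}$ were one. Positive one-sidedness means that on every input some branch of $\mathcal{P}$ accepts iff the input lies in $\overline{L_{eq-1}}$; discarding all transition probabilities (keeping precisely the positive-weight transitions, now read nondeterministically) gives a nondeterministic TM-WOM $\mathcal{N}$ accepting exactly $\overline{L_{eq-1}}$, whose work tape is still $o(\log n)$-bounded on every branch. Since a write-only tape can never be read, its contents never affect a later transition, so $\mathcal{N}$ is equivalent to an ordinary NTM with the same $o(\log n)$ work-tape bound --- the trivial classical analogue of Lemma \ref{wom:lem:classical-wom}. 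But $\overline{L_{eq-1}}$ is a nonregular DCFL, and \cite{AGM92} forbids recognizing any nonregular DCFL by an NTM in space $o(\log n)$, a contradiction. The point that needs the most care is the space accounting: one must consistently regard the WOM as an unbounded auxiliary device so that the RT-QFA-IOC above genuinely counts as an $o(\log n)$-space QTM-WOM, while the bound $s$ constrains only the read/write work tape; and one must check that erasing probabilities respects whatever (weak or strong) space convention the lower bound of \cite{AGM92} is stated for --- which it does, since $\mathcal{P}$'s bound holds on every branch and therefore so does $\mathcal{N}$'s.
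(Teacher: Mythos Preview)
Your proposal is correct and follows essentially the same route as the paper: the paper, in the sentence immediately preceding the corollary, singles out $\overline{L_{eq-1}}$ as a nonregular DCFL recognized with positive one-sided bounded error by a RT-QFA-IOC, invokes the result of \cite{AGM92} that no NTM in $o(\log n)$ space recognizes a nonregular DCFL, and cites Lemma~\ref{qtm:lem:pm-simulated-by-qm} for the inclusion. You have simply made explicit the two steps the paper leaves implicit --- passing from a positive one-sided bounded-error PTM-WOM to an NTM by keeping the nonzero-weight transitions, and discarding the WOM from that NTM via the trivial nondeterministic analogue of Lemma~\ref{wom:lem:classical-wom} --- and you have correctly identified the space-accounting convention (the bound constrains the read/write work tape, the WOM being auxiliary and unbounded).
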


\subsection{Unbounded Error} \label{wom:ioc:unbounded}

The simulation method introduced in Lemma \ref{wom:lem:RT-D1BCA-by-RT-QFA-IOC} turns out to be useful in the analysis of the power of increment-only counter machines in the unbounded error mode as well:

\begin{theorem}
	\label{wom:thm:RT-NQ1BCA-by-RT-QFA-IOC-one-sided}
	Any language recognized by a nondeterministic realtime automaton with one blind counter (RT-NQ$ 1 $BCA) is
	recognized by a RT-QFA-IOC with cutpoint $ \frac{1}{2} $.
\end{theorem}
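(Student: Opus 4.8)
The plan is to adapt the simulation technique of Lemma~\ref{wom:lem:RT-D1BCA-by-RT-QFA-IOC} to the nondeterministic setting, exploiting the fact that a RT-NQ$1$BCA is, by the conventions of Section~\ref{qtm:nondeterministic}, just a RT-Q$1$BCA operating with positive one-sided unbounded error: it accepts $w$ iff $f_{\mathcal{M}}(w) > 0$. So let $\mathcal{M} = (Q,\Sigma,\Omega,\delta,q_{1},Q_{a})$ be the given RT-NQ$1$BCA recognizing $L$, and let $w \in L$ exactly when some computation path of $\mathcal{M}$ ends in an accepting state with the counter holding zero. The goal is to build a RT-QFA-IOC $\mathcal{M}'$ with $f_{\mathcal{M}'}(w) > \frac{1}{2}$ for $w \in L$ and $f_{\mathcal{M}'}(w) \le \frac{1}{2}$ for $w \notin L$. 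By Lemma~\ref{wom:lem:inc-m-equal-inc-1} it suffices to build a RT-QFA-IOC$(m)$ for a suitable $m$.

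First I would set up the machine almost exactly as in the proof of Lemma~\ref{wom:lem:RT-D1BCA-by-RT-QFA-IOC}: branch equiprobably into $m$ copies $\mathsf{path}_{1},\ldots,\mathsf{path}_{m}$ on $\cent$, have $\mathsf{path}_{j}$ simulate $\mathcal{M}$ step by step while incrementing its increment-only counter by $j$ (resp.\ $m-j+1$, resp.\ $0$) whenever $\mathcal{M}$'s blind counter increments (resp.\ decrements, resp.\ stays), and write the source-state symbol on the register at each step to preserve well-formedness. The key carried-over property is that at any moment the $m$ counters agree iff $\mathcal{M}$'s counter is zero, and are pairwise distinct otherwise. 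On $\dollar$, a path ending in an accepting state of $\mathcal{M}$ undergoes the $m$-way QFT (Figure~\ref{fig:wom:N-way-QFT}) with a distinguished accepting target; a path ending in a non-accepting state just transitions to a rejecting state. The effect, as in the earlier lemma, is that the $m$-copy bundle corresponding to an accepting path of $\mathcal{M}$ with counter value zero contributes its full probability to acceptance (constructive interference), whereas an accepting path with nonzero counter contributes only a $\frac{1}{m}$ fraction. To break the $\frac{1}{2}$ barrier in the unbounded-error regime — where $f_{\mathcal{M}}(w)$ may be astronomically small for $w \in L$, and some ``spurious'' accepting paths with nonzero counter may also be present for $w \notin L$ — I would augment $\mathcal{M}'$ with an extra equiprobable ``balancing'' branch, analogous to the twin-state and $\mathsf{path}_{2}$ devices used elsewhere in this thesis (e.g.\ Theorem~\ref{theorem:1KWQFA}, Lemma~\ref{lemma:S-QL_0}): arrange the construction so that the total ``simulation'' probability mass is split accept/reject exactly evenly except for the surviving-interference contribution, so that $f_{\mathcal{M}'}(w)$ exceeds $\frac{1}{2}$ precisely when $\mathcal{M}$ has at least one accepting-and-zero-counter path, i.e.\ when $w \in L$, and equals (or falls below) $\frac{1}{2}$ otherwise.

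The main obstacle I anticipate is the cutpoint bookkeeping on the rejecting side. In the bounded-error proof of Theorem~\ref{wom:thm:RT-Q1BCA-by-RT-QFA-IOC} one could afford to lose $\frac{1}{m}$-size slivers of probability to the ``wrong'' verdict because the original error was bounded away from $\frac12$; here there is no such slack, so I must ensure that (i) for $w \notin L$, the accepting paths of $\mathcal{M}$ — all of which necessarily have nonzero counter — contribute strictly less than they would need to push $f_{\mathcal{M}'}(w)$ past $\frac12$, and (ii) for $w \in L$, the single well-placed interference event is enough to tip the balance no matter how tiny $f_{\mathcal{M}}(w)$ is. The clean way to force (ii) is to make the ``balancing'' branch reject with an amplitude small enough (scaling like $(\text{const})^{-|w|}$, achievable since the counter and state contents bound the relevant quantities) that it is dominated by any nonzero interference contribution, exactly the trick used in the RT-KWQFA-to-restart lemmas of Chapter~\ref{berr}; and for (i) one checks that since every $\mathcal{M}$-accepting path for a non-member has distinct counters across the $m$ copies, it produces no interference and hence its net effect on $f_{\mathcal{M}'}$ is exactly $\frac12$ of its mass. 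Assembling these estimates and verifying well-formedness of the resulting $\mathcal{M}'$ (via the local conditions in Section~\ref{qtm:QCAs}, which hold automatically because each QFT block and each equiprobable split is unitary on its register fiber) completes the argument; the corollary $\mathrm{coC}_{=}\mathrm{SPACE}(1)$-type consequences would then follow as in the surrounding text.
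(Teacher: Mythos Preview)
Your simulation layer (the $m$-fold branching, the counter-encoding $j$ vs.\ $m-j+1$, the QFT on $\dollar$) matches the paper's approach, and your observation that a RT-NQ$1$BCA is just a RT-Q$1$BCA with positive one-sided unbounded error is exactly how the paper begins. The gap is in the cutpoint calibration.

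You send paths of $\mathcal{M}$ that end in a non-accepting state straight to a rejecting state of $\mathcal{M}'$. That breaks the argument: for $w\in L$ the original machine may accept with probability as small as you like, with almost all mass on rejecting paths; under your construction that mass lands entirely on the reject side of $\mathcal{M}'$, so $f_{\mathcal{M}'}(w)$ can fall well below $\tfrac12$. Your proposed ``balancing branch'' with an exponentially small reject amplitude does not repair this, because the obstacle is not a tiny missing reject weight but an overwhelming one coming from $\mathcal{M}$'s own rejecting paths. (Also, your claim that a non-interfering accepting bundle contributes ``exactly $\tfrac12$ of its mass'' to acceptance is only true when $m=2$; for general $m$ the fraction is $1/m$.)

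The paper's fix is much simpler than what you attempt: take $m=2$, and on $\dollar$ replace every transition into a reject state of $\mathcal{M}$ by an equiprobable split into an (accept, reject) pair in $\mathcal{M}'$. Then every path of $\mathcal{M}$ that is not an accept-with-zero-counter path contributes exactly half its probability to acceptance and half to rejection, while each accept-with-zero-counter path contributes its full probability to acceptance via the QFT interference. Hence $f_{\mathcal{M}'}(w)=\tfrac12$ when $w\notin L$ and $f_{\mathcal{M}'}(w)>\tfrac12$ when $w\in L$, giving recognition with (one-sided) cutpoint $\tfrac12$. No auxiliary exponentially-small branch is needed.
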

\begin{proof}
	Given a RT-NQ$ 1 $BCA $ \mathcal{N} $, we note that it is just a RT-Q$ 1 $BCA recognizing a language $ L $ with positive one-sided unbounded error \cite{YS10A}, and we can simulate it using the technique described in the proof of Theorem \ref{wom:thm:RT-Q1BCA-by-RT-QFA-IOC}. We set $m$, the number of copies of the RT-Q$ 1 $BCA to be parallelly simulated, to 2. We obtain a RT-QFA-IOC(2) $ \mathcal{M} $ such that
	\begin{enumerate}
		\item paths of $ \mathcal{N} $ that end in an accepting state with the counter equaling zero
			lead  $ \mathcal{M} $ to accept with the same total probability;
		\item paths of $ \mathcal{N} $ that end in an accepting state with a nonzero counter value
			contribute half of their probability to $ \mathcal{M} $'s acceptance probability, with the other half contributing to rejection; and
		\item paths of $ \mathcal{N} $ that end in a reject state
			cause $ \mathcal{M} $ to reject with the same total probability.
	\end{enumerate}
Finally, we modify the transitions on the right end-marker that enter the reject states mentioned in the third case above, so that they are replaced by equiprobable transitions to an (accept,reject) pair of states.
	The resulting machine recognizes $ L $ with ``one-sided'' cutpoint $ \frac{1}{2} $, that is, 
	the overall acceptance probability 
	exceeds $ \frac{1}{2} $ for the members of the language, and equals $ \frac{1}{2} $ for the nonmembers.
 \end{proof}

We now present a simulation of a classical model with non-blind counter.

\begin{theorem}
	\label{wom:1-rev-RT-D1CA-by-RT-QFA-IOC}
	If $ L $ is recognized by a realtime deterministic one-reversal one-counter automaton (1-rev-RT-D1CA), then it is recognized by a RT-QFA-IOC with cutpoint $ \frac{1}{2} $.
\end{theorem}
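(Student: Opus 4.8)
The plan is to adapt the ``parallel copies with staggered counter increments plus a final quantum Fourier transform'' technique used in Lemma~\ref{wom:lem:RT-D1BCA-by-RT-QFA-IOC} and its unbounded-error refinement in Theorem~\ref{wom:thm:RT-NQ1BCA-by-RT-QFA-IOC-one-sided}. Let $\mathcal{D}$ be a 1-rev-RT-D1CA recognizing $L$. The key structural observation is that, on any input, the sequence of counter-zero-statuses seen by $\mathcal{D}$ has the form $0^{a}1^{b}0^{c}$: the counter is zero up to its first increment, nonzero throughout the ``hump,'' and --- since decrements occur only after the reversal and cannot drive the counter below zero --- zero again from the moment it first returns to zero. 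Hence the prefix of the run up to $\mathcal{D}$'s first increment, as well as the whole segment during which the counter is known to be nonzero, are forced and can be simulated step-by-step by a finite control that has no access to the actual counter value; the only genuinely nondeterministic piece of information is the step $t_{2}$ at which $\mathcal{D}$'s counter first returns to $0$ (or the fact that it never does).

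I will build a RT-QFA-IOC($2$) $\mathcal{M}$; as in Theorem~\ref{wom:thm:RT-NQ1BCA-by-RT-QFA-IOC-one-sided} this suffices. $\mathcal{M}$ carries $\mathcal{D}$'s state in its finite control. On the deterministic initial prefix it touches neither the register nor the counter; at $\mathcal{D}$'s first increment it forks into two ``twin'' sub-branches that from then on update the increment-only counter with the staggered amounts of Lemma~\ref{wom:lem:RT-D1BCA-by-RT-QFA-IOC} for $m=2$ --- whenever $\mathcal{D}$ increments, sub-branch $1$ adds $1$ and sub-branch $2$ adds $2$; whenever $\mathcal{D}$ decrements the roles are swapped --- so the two sub-branches carry equal counter values precisely when the number of increments reflected equals the number of decrements reflected. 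While the simulated counter is assumed nonzero, $\mathcal{M}$ follows $\mathcal{D}$'s forced transitions; at some step it nondeterministically either (a) \emph{declares} this step to be $t_{2}$, checks that $\mathcal{D}$'s move here is a decrement, and thereafter follows $\mathcal{D}$'s zero-mode transitions while performing no further counter updates (correct, since $\mathcal{D}$ does no counter operations once its counter is back to zero), or (b) declares ``never'' and continues in nonzero-mode to the end --- in which case it is additionally allowed, on each of $\mathcal{D}$'s increments, to non\-deter\-min\-is\-tic\-ally tag that increment as ``excess'' and \emph{not} reflect it in the counter, provided at least one increment gets tagged. At the right end-marker, every branch in which the simulated $\mathcal{D}$ rejects, or whose local checks failed, splits into a balanced accept/reject pair; every branch in which the simulated $\mathcal{D}$ accepts performs the $2$-way QFT of Figure~\ref{fig:wom:N-way-QFT} with an accept state as distinguished target.

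Correctness then rests on the same principle as Lemma~\ref{wom:lem:RT-D1BCA-by-RT-QFA-IOC}: the two twin sub-branches of a branch reach identical configurations, hence interfere constructively toward acceptance under the QFT, iff the counter values agree, and combined with the local ``decrement at $t_{2}$'' check and the elementary fact that a decrement-count which is non-decreasing, rises by at most $1$ per step, and first meets the increment-count at $t_{2}$ must be strictly below it at every earlier step, this certifies that the guessed $0^{a}1^{b}0^{c}$ history is the true one; in the ``never'' case the same equality, now read through the excess tags together with the ``at least one excess'' requirement, certifies that the peak strictly exceeded the total number of decrements, i.e. that the counter really never returned to $0$. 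Therefore, if $w\in L$ there is exactly one fully consistent guess, its QFT interferes constructively, and $f_{\mathcal{M}}(w)>\tfrac12$; if $w\notin L$ then for every guess either the simulated $\mathcal{D}$ rejects or the twin sub-branches end with different counter values, no constructive interference occurs, every surviving contribution is a balanced accept/reject pair, and $f_{\mathcal{M}}(w)=\tfrac12$. The main obstacle I anticipate is making the ``never'' branch airtight --- in particular excluding spurious acceptances from ``never'' guesses issued on inputs whose counter \emph{does} return to zero, which is exactly what forces the excess-tagging device --- and then verifying the well-formedness (orthonormality) conditions for the resulting quantum transition function, which otherwise follows the template of Lemma~\ref{wom:lem:RT-D1BCA-by-RT-QFA-IOC} routinely.
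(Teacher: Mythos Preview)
Your high-level plan (twin copies with staggered increments, a QFT to detect when the simulated increments and decrements balance) is the right mechanism, and it is also what the paper uses. But the paper organizes the construction quite differently, and your ``never'' branch --- which you correctly flag as the delicate part --- does not go through as written.

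The paper does \emph{not} guess $t_2$ and then QFT once at the end. Instead, after $\mathcal{D}$'s first decrement, the two pre-zero paths $\mathsf{path}_1,\mathsf{path}_2$ perform a small QFT \emph{at every step}, each time peeling off a pair of post-zero offshoots $\mathsf{path}_3,\mathsf{path}_4$ that henceforth follow $\mathcal{D}$'s $\theta=0$ transitions. At the step where $\mathcal{D}$'s counter actually returns to zero, $\mathsf{path}_1$ and $\mathsf{path}_2$ have equal IOC values, the offshoot $\mathsf{path}_4$'s cancel, and the surviving $\mathsf{path}_3$ carries twice the probability of the continuing pre-zero paths; at every other step the offshoots' counters differ and they contribute a balanced accept/reject amount. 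Crucially, the ``never'' case needs no tagging at all: $\mathsf{path}_1,\mathsf{path}_2$ simply run the $\theta=1$ simulation to the end and imitate $\mathcal{D}$'s decision there, which is the correct decision precisely when the counter never returned to zero; their decision then tips the balance. Path~4's invert $\mathcal{D}$'s decision so that an incorrectly-timed offshoot pair is always balanced.

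Your excess-tagging device, by contrast, has real problems. First, it is temporally inconsistent: the tags must be placed on increments, which all occur \emph{before} the reversal, yet whether a branch is a ``never'' branch is only determined after all decrements; so the computation tree necessarily contains hybrid branches (tagged and later declaring $t_2'$, or untagged and never declaring) that you do not account for. Second, your claim that ``there is exactly one fully consistent guess'' is false in the never case: if the true excess is $e=\#\mathrm{inc}-\#\mathrm{dec}>0$, then \emph{every} $e$-element subset of increments is a valid tag set, giving $\binom{\#\mathrm{inc}}{e}$ matching never-branches rather than one; these land in the same final configuration and interfere, and you have not analyzed that interference (nor the cross-interference between twin~1 with one tag pattern and twin~2 with another, which can share a counter value even when neither individually matches). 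One can probably repair all of this, but the fix is essentially to drop the tagging and let the pre-zero twins decide directly at the end --- which is exactly the paper's construction.
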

\begin{proof}
	We assume that the 1-rev-RT-D1CA $ \mathcal{D} = (Q,\Sigma,\delta,q_{1},Q_{a}) $ recognizing $ L $
	is in the following canonical form:
	\begin{itemize}		
		\item the counter value of $ \mathcal{D} $ never becomes nonnegative;
		\item the transition on $ \cent $ does not make any change 
			($ \delta(q_{1},\cent,0,q_{1})=1 $, and $ D_{c}(q_{1})=0 $);
		\item $ Q $ is the union of two disjoint subsets $ Q_{1} $ and $ Q_{2} $, i.e.
			\begin{enumerate}
				\item until the first decrement, the status of the counter is never checked
					-- this part is implemented by the members of $ Q_{1} $,
				\item during the first decrement, the internal state of $ \mathcal{D} $ switches to one of 
					the members of $ Q_{2} $, and
				\item the computation after the first decrement is implemented by the members of $ Q_{2} $;
			\end{enumerate}
		\item once the counter value is detected as zero, the status of the counter is not checked again.
	\end{itemize}	
	We construct a RT-QFA-IOC $ \mathcal{M} = (Q',\Sigma,\Omega,\delta',q_{1},Q'_{a}) $, 
	to recognize $ L $ with cutpoint $ \frac{1}{2} $,
	where 
	\begin{itemize}
		\item $ Q' = \{q_{1}\} \cup \{q_{j,i} \mid j \in \{1,\ldots,4\}, i \in \{1,\ldots,|Q|\}  \}, $
		\item $ Q'_{a} = \{ q_{j,i} \mid j \in \{1,2,3\}, q_{i} \in Q_{a} \} \cup
			\{ q_{4,i} \mid q_{i} \in Q_{r} \} $, and
		\item $ \Omega =  \{  \omega_{i} \cup \omega'_{i} \mid i \in \{1,\ldots,|Q|\} \} $.
	\end{itemize}
and the details of $ \delta' $ are given in Figures \ref{wom:fig:transition-1-rev-RT-D1CA-by-RT-QFA-IOC-1}
	and \ref{wom:fig:transition-1-rev-RT-D1CA-by-RT-QFA-IOC-2}.
	
	\begin{figure}[h!]
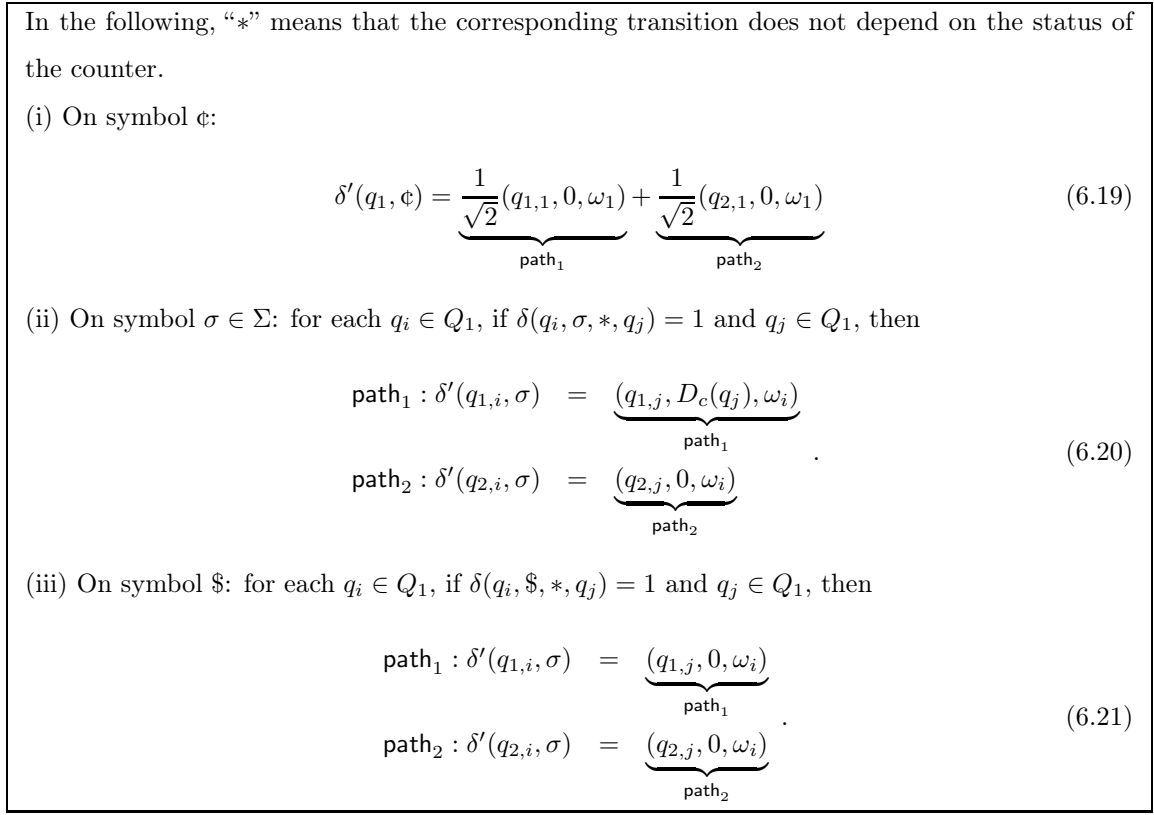

	\begin{center}
	\fbox{
	\begin{minipage}{0.95\textwidth}
		\footnotesize
		In the following, ``$ * $'' means that the corresponding transition does not depend on 
		the status of the counter.	
		\\
		(i) On symbol $ \cent $:
		\begin{equation}
			\delta'(q_{1},\cent) = \underbrace{ \frac{1}{\sqrt{2}} (q_{1,1},0,\omega_{1}) }_{\mathsf{path}_{1}} + 
				\underbrace{ \frac{1}{\sqrt{2}} (q_{2,1},0,\omega_{1}) }_{\mathsf{path}_{2}} 
		\end{equation}
		(ii) On symbol $ \sigma \in \Sigma $: for each $ q_{i} \in Q_{1} $, 
		if $ \delta(q_{i},\sigma,*,q_{j}) = 1 $ and $ q_{j} \in Q_{1} $, then 
		\begin{equation}
			\begin{array}{lcl}
				\mathsf{path}_{1}: \delta'(q_{1,i},\sigma) & = & 
					\underbrace{(q_{1,j},D_{c}(q_{j}),\omega_{i})}_{\mathsf{path}_{1}}
					\\
				\mathsf{path}_{2}: \delta'(q_{2,i},\sigma) & = & 
					\underbrace{(q_{2,j},0,\omega_{i})}_{\mathsf{path}_{2}}		
			\end{array}.
		\end{equation}
		(iii) On symbol $ \dollar $: for each $ q_{i} \in Q_{1} $, 
			if $ \delta(q_{i},\dollar,*,q_{j}) = 1 $ and $ q_{j} \in Q_{1} $, then
		\begin{equation}
			\begin{array}{lcl}
				\mathsf{path}_{1}: \delta'(q_{1,i},\sigma) & = & 
					\underbrace{(q_{1,j},0,\omega_{i})}_{\mathsf{path}_{1}}  \\
				\mathsf{path}_{2}: \delta'(q_{2,i},\sigma) & = & 
					\underbrace{(q_{2,j},0,\omega_{i})}_{\mathsf{path}_{2}}
			\end{array}.
		\end{equation}
		\end{minipage}
		}
		\end{center}
	\caption{The details of the transition function of the RT-QFA-IOC presented in the proof of 
		Theorem \ref{wom:1-rev-RT-D1CA-by-RT-QFA-IOC} (I)}
		\vskip\baselineskip
		\label{wom:fig:transition-1-rev-RT-D1CA-by-RT-QFA-IOC-1}
	\end{figure}
	
	\begin{figure}[h!]
	\begin{center}
	\fbox{
	\begin{minipage}{0.95\textwidth}
		\footnotesize		
		(iv) On symbol $ \sigma \in \Sigma $: for each $ q_{i} \in Q $, 
		if $ \delta(q_{i},\sigma,1,q_{j}) = 1 $ and $ q_{j} \in Q_{2} $, then 
		\begin{equation}
			\begin{array}{lcl}
				\mathsf{path}_{1}: \delta'(q_{1,i},\sigma) & = & 
					\underbrace{\frac{1}{\sqrt{3}}(q_{1,j},0,\omega_{i})}_{\mathsf{path}_{1}} +
					\underbrace{\frac{1}{\sqrt{3}}(q_{3,j},0,\omega'_{i})}_{\mathsf{path}_{3}} + 
					\underbrace{\frac{1}{\sqrt{3}}(q_{4,j},0,\omega'_{i})}_{\mathsf{path}_{4}}  \\
				\mathsf{path}_{2}: \delta'(q_{2,i},\sigma) & = & 
					\underbrace{\frac{1}{\sqrt{3}}(q_{2,j},c_{2},\omega_{i})}_{\mathsf{path}_{2}} +
					\underbrace{\frac{1}{\sqrt{3}}(q_{3,j},c_{2},\omega'_{i})}_{\mathsf{path}_{3}} - 
					\underbrace{\frac{1}{\sqrt{3}}(q_{4,j},c_{2},\omega'_{i})}_{\mathsf{path}_{4}} 
			\end{array},
		\end{equation}
		and if $ \delta(q_{i},\sigma,0,q_{j}) = 1 $, then
		\begin{equation}
			\begin{array}{lcl}
				\mathsf{path}_{3}: \delta'(q_{3,i},\sigma) & = & 
					\underbrace{(q_{3,j},0,\omega_{i})}_{\mathsf{path}_{3}} \\
				\mathsf{path}_{4}: \delta'(q_{4,i},\sigma) & = & 
					\underbrace{(q_{4,j},0,\omega_{i})}_{\mathsf{path}_{4}} \\
			\end{array},
		\end{equation}
		where $ c_{2}=1 $ only if $ D_{c}(q_{j})=-1 $.
		\\
		(iii) On symbol $ \dollar $: for each $ q_{i} \in Q $, if $ \delta(q_{i},\dollar,1,q_{j}) = 1 $
		and $ q_{j} \in Q_{2} $, then
		\begin{equation}
			\begin{array}{lcl}
				\mathsf{path}_{1}: \delta'(q_{1,i},\sigma) & = & 
					\underbrace{\frac{1}{\sqrt{3}}(q_{1,j},0,\omega_{i})}_{\mathsf{path}_{1}} +
					\underbrace{\frac{1}{\sqrt{3}}(q_{3,j},0,\omega_{i})}_{\mathsf{path}_{3}} + 
					\underbrace{\frac{1}{\sqrt{3}}(q_{4,j},0,\omega_{i})}_{\mathsf{path}_{4}}  \\
				\mathsf{path}_{2}: \delta'(q_{2,i},\sigma) & = & 
					\underbrace{\frac{1}{\sqrt{3}}(q_{2,j},c_{2},\omega_{i})}_{\mathsf{path}_{2}} +
					\underbrace{\frac{1}{\sqrt{3}}(q_{3,j},c_{2},\omega_{i})}_{\mathsf{path}_{3}} - 
					\underbrace{\frac{1}{\sqrt{3}}(q_{4,j},c_{2},\omega_{i})}_{\mathsf{path}_{4}} 
			\end{array},
		\end{equation}		
		and if $ \delta(q_{i},\dollar,0,q_{j}) = 1 $, then
		\begin{equation}
			\begin{array}{lcl}
				\mathsf{path}_{3}: \delta'(q_{3,i},\sigma) & = & 
					\underbrace{(q_{3,j},0,\omega_{i})}_{\mathsf{path}_{3}} \\
					\mathsf{path}_{4}: \delta'(q_{4,i},\sigma) & = & 
						\underbrace{(q_{4,j},0,\omega_{i})}_{\mathsf{path}_{4}} \\
				\end{array},
			\end{equation}
			where $ c_{2}=1 $ only if $ D_{c}(q_{j})=-1 $.
		\end{minipage}
		}
		\end{center}
	\caption{The details of the transition function of the RT-QFA-IOC presented in the proof of 
		Theorem \ref{wom:1-rev-RT-D1CA-by-RT-QFA-IOC} (II)}
		\vskip\baselineskip
		\label{wom:fig:transition-1-rev-RT-D1CA-by-RT-QFA-IOC-2}
	\end{figure}
	
	$ \mathcal{M} $ starts by branching to two paths, $ \mathsf{path}_{1} $ and  $ \mathsf{path}_{2} $, 
	with equal amplitude. These paths simulate 
	$ \mathcal{D} $ in parallel according to the specifications in Figure 
	\ref{wom:fig:transition-1-rev-RT-D1CA-by-RT-QFA-IOC-1} until $ \mathcal{D} $ decrements its counter 
	for the first time. From that step on, $ \mathsf{path}_{1} $ and  $ \mathsf{path}_{2} $ split further 
	to create new offshoots (called $ \mathsf{path}_{3} $ and  $ \mathsf{path}_{4} $,) on every symbol 
	until the end of the computation, as seen in Figure \ref{wom:fig:transition-1-rev-RT-D1CA-by-RT-QFA-IOC-2}.
	Throughout the computation, $ \mathsf{path}_{1} $ (resp., $ \mathsf{path}_{2} $) increments its counter 
	whenever $ \mathcal{D} $ is supposed to increment (resp., decrement) its counter. Since $ \mathcal{M} $'s 
	counter is write-only, it has no way of determining which transition $ \mathcal{D} $ makes depending on its 
	counter sign. This problem is solved by assigning different paths of $ \mathcal{M} $ to these branchings of 
	$ \mathcal{D} $:  $ \mathsf{path}_{1} $ and  $ \mathsf{path}_{2} $ (the ``pre-zero paths'') always assume that $ \mathcal{D} $'s counter 
	has not returned to zero yet by being decremented, whereas $ \mathsf{path}_{3} $s and  $ \mathsf{path}_{4} $s (the ``post-zero paths'') 
	carry out their simulations by assuming otherwise. Except for $ \mathsf{path}_{4} $s, all paths imitate $ 
	\mathcal{D} $'s decision at the end of the computation. $ \mathsf{path}_{4} $s, on the other hand, accept if 
	and only if their simulation of $ \mathcal{D} $ rejects the input.

	If $ \mathcal{D} $ never decrements its counter, 
	$ \mathcal{M} $ ends up with the same decision as $ \mathcal{D} $ with probability 1.
	We now focus on the other cases.
As seen in Figure \ref{wom:fig:transition-1-rev-RT-D1CA-by-RT-QFA-IOC-2}, the pre-zero paths lose some of their amplitude on each symbol in this stage by performing a QFT to a new pair of post-zero paths. The outcome of this transformation depends on the status of $ \mathcal{D} $'s counter at this point in the simulation by the pre-zero paths:
	\begin{itemize}
		\item If $ \mathcal{D} $'s counter has not yet returned to zero, then  $ \mathsf{path}_{2} $'s counter has a smaller value than $ \mathsf{path}_{1} $'s counter, and so they cannot interfere via the QFT.
			The newly created post-zero paths contribute equal amounts to the acceptance and rejection probabilities at the end of the computation.
		\item If $ \mathsf{path}_{1} $ and $ \mathsf{path}_{2} $ have the same counter value as a result of this transition, this indicates that $ \mathcal{D} $ has performed exactly as many decrements as its previous increments, and its counter is therefore zero. The paths interfere, the target $ \mathsf{path}_{4} $'s cancel each other, and $ \mathsf{path}_{3} $ survives after the QFT with a probability that is twice
			that of the total probability of the ongoing pre-zero paths. 			
\end{itemize}
	As a result, it is guaranteed that the path that is carrying out the correct simulation of $ \mathcal{D} $ dominates $ \mathcal{M} $'s decision at the end of the computation: If $ \mathcal{D} $'s counter ever returns to zero, the $ \mathsf{path}_{3} $ that is created at the moment of that last decrement has sufficient probability to tip the accept/reject balance. If $ \mathcal{D} $'s counter never returns to zero, then the common decision by the pre-zero paths on the right end-marker determines whether the overall acceptance or the rejection probability is greater than $ \frac{1}{2} $.
 \end{proof}

$ L_{NH} $ is recognizable by both 1-rev-RT-D1CAs and RT-N1BCAs\footnote{RT-N1BCAs can also recognize
$ L_{center} $,
and the languages presented in Figure \ref{uerr:nonstochastic-languages}, 
none of which can be recognized by RT-QFAs with unbounded error.} (and so RT-NQ1BCAs).
It is known  (see Section \ref{uerr:general}) that neither a RT-QFA nor a 
$o(\log(\log(n)))$-space PTM can recognize $ L_{NH} $ with unbounded error. 
We therefore have the following corollary.

\begin{corollary}
	\label{wom:cor:QTM-WOM-superior-PTM-WOM}
       QTM-WOMs are strictly superior to PTM-WOMs for any space bound $ o(\log(\log(n))) $
       in terms of language recognition with unbounded error.
\end{corollary}
\section{Realtime Quantum Finite Automata with Push-Only Stacks} \label{wom:POSs}

We conjecture that allowing more than one nonblank/nonempty symbol in
the WOM tape alphabet of a QFA increases its computational power. We
consider, in particular, the language $ L_{twin}=\{wcw \mid w \in \{a,b\}^{*} \} $:

\begin{theorem}
       \label{wom:thm:Ltwin-by-RT-QFA-POS}
       There exists a RT-QFA-POS that recognizes the language $ L_{twin} $
       with negative one-sided error bound $ \frac{1}{2} $.
\end{theorem}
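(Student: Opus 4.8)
The plan is to build a RT-QFA-POS that, on input $w \in \{a,b\}^*cw'$, pushes an encoding of the prefix $w$ onto the stack while simultaneously reading it, and then, after the separator $c$, reads the suffix $w'$ and pushes an encoding of $w'$ reversed onto the \emph{same} stack, arranged so that the two encodings destructively interfere exactly when $w = w'$. Concretely, I would branch into two computational paths $\mathsf{path}_1$ and $\mathsf{path}_2$ with equal amplitude $\frac{1}{\sqrt 2}$ on $\cent$. Each path deterministically checks that the input has the form $\{a,b\}^* c \{a,b\}^*$ (rejecting immediately, in both paths equiprobably via twin halting states, if not), and while scanning encodes the string it sees into the stack contents: I would use a binary-style encoding where reading $a$ pushes one stack symbol and reading $b$ pushes another (or, more carefully, pushes a fixed number of symbols so that the stack height after the prefix equals $2|w|$ in a way both paths agree on). The key point, adapted from the palindrome/$L_{pal}$ construction in the proof of Theorem~\ref{berr:L-pal-for-RT-QFA-restart} and the amplitude-encoding trick of L\={a}ce et al., is that $\mathsf{path}_1$ encodes $w$ read left-to-right and $\mathsf{path}_2$ encodes $w$ in a manner that, combined with how each handles the suffix, produces identical stack configurations precisely when $w'=w$.

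First I would work out the encoding so that after reading the entire input, the two paths are in the same stack configuration if and only if $w=w'$: one natural choice is to have $\mathsf{path}_1$ push symbols encoding $w$ then push symbols encoding $w'$, and $\mathsf{path}_2$ push symbols encoding $w$ then push symbols encoding the \emph{complement-coded} or \emph{sign-flipped} version of $w'$, so that on the right end-marker a $2$-way QFT (as described in Figure~\ref{fig:wom:N-way-QFT}) between the two paths causes cancellation of the reject component exactly in the matching case. Since the stack is write-only and one-way (push-only), the machine never needs to read back the stack — the interference is entirely a function of whether the two superposed configurations coincide, which is determined by the internal state, input-head position (both classical and identical across the two paths), and the stack contents. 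This is exactly the regime in which the QFT machinery of Figure~\ref{fig:wom:N-way-QFT} applies: condition (2) holds (equal WOM content) precisely for $w=w'$, giving constructive collapse; condition (1) holds otherwise, giving no interference and hence equal accept/reject contributions.

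The accounting then goes as in Theorem~\ref{wom:1-rev-RT-D1CA-by-RT-QFA-IOC} and the $L_{NH}$-style proofs: when $w\in L_{twin}$, the QFT on $\dollar$ collapses the two paths, the reject-amplitudes cancel, and the surviving accept-amplitude gives acceptance with certainty (probability $1$); when $w\notin L_{twin}$ — either because of a syntax violation or because $w\neq w'$ — every branch contributes equally to acceptance and rejection, so the input is accepted with probability exactly $\frac12$. Thus the language is recognized with negative one-sided error bounded by $\frac12$ (members accepted with probability $1$, non-members accepted with probability $\le \frac12$). I would also invoke Lemma~\ref{wom:lem:inc-m-equal-inc-1}-style reasoning only if I need an expanded stack alphabet, but here a small fixed alphabet ($\vdash$, an $a$-symbol, and a $b$-symbol) suffices, so that step is unnecessary.

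The main obstacle I expect is getting the encoding \emph{reversible/unitary}: the naive ``push $w$ then push $w'$'' map into stack configurations is not by itself a well-formed quantum transition because the prefix portion of the computation must be a genuine unitary (or at least norm-nonincreasing admissible operator) at every step, and making two distinct paths arrive at a common stack configuration only in the matching case, while keeping each $U_\sigma$ unitary, requires carefully introducing auxiliary ``balancing'' transitions to twin accept/reject states — exactly the device used throughout the $L_{NH}$ and $L_{pal}$ proofs. Ensuring that these balancing transitions contribute exactly equal accept and reject probability (so they do not perturb the $\frac12$ baseline) and that the stack-height bookkeeping is consistent between the two paths (so that destructive interference is possible at all) is the delicate part; the rest is a routine adaptation of the QFT-based interference analysis already developed in Sections~\ref{uerr:general} and~\ref{berr:L_pal}.
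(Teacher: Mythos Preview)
Your high-level skeleton---branch into two paths with amplitude $\frac{1}{\sqrt{2}}$, use the push-only stack so that the two paths' stack contents coincide exactly when the prefix equals the suffix, and perform a $2$-way QFT on $\dollar$---is exactly the right framework, and matches the paper's approach. But the specific encoding you propose has a genuine gap: if both paths push the same encoding of the prefix $w$ and then push differing encodings of the suffix $w'$, the stack contents coincide if and only if the two encodings of $w'$ coincide, which tests nothing about whether $w=w'$. ``Sign-flipped'' and ``complement-coded'' are amplitude-level notions; they do not change whether two \emph{configurations} (which include the stack contents as classical data) are identical, so they cannot rescue that encoding. With your scheme the two stacks either always agree (if ``altered $w'$'' equals $w'$) or never agree, independent of the input.

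The paper's encoding is much simpler and avoids this entirely: $\mathsf{path}_1$ pushes the prefix $w_1$ symbol-by-symbol onto the stack and pushes \emph{nothing} after the $c$; $\mathsf{path}_2$ pushes \emph{nothing} before the $c$ and pushes the suffix $w_2$ symbol-by-symbol after it. The two stacks then hold exactly $w_1$ and $w_2$ respectively, and coincide precisely when $w_1=w_2$. No reversal, no complement coding, no stack-height padding is needed. Your worry about unitarity and twin balancing states is also overblown here: the two paths are kept orthogonal throughout by their distinct internal states ($q$-states versus $p$-states), each per-symbol transition is already a permutation (with the register symbol distinguishing otherwise-colliding targets), and the only interference happens in the single $2\times 2$ QFT on $\dollar$. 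The resulting acceptance probabilities are $1$ for members, $\frac{1}{2}$ for syntactically correct non-members, and $0$ for inputs not of the form $\{a,b\}^*c\{a,b\}^*$.
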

\begin{proof}
       We construct a RT-QFA-POS $
\mathcal{M}=(Q,\Sigma,\Gamma,\Omega,\delta,q_{1},Q_{a}) $,
       where $ Q=\{q_{1},q_{2},q_{3}, $ $ p_{1},p_{2},p_{3}\} $, $ Q_{a} = \{q_{2}\} $,
       $ \Omega=\{ \omega_{1},\omega_{2} \} $, and $ \Gamma=\{\#,a,b,\varepsilon\} $.
       The transition details are shown in Figure \ref{wom:fiq:Ltwin}.
       
       \begin{figure}[h!]
       \begin{center}
               \fbox{
               \begin{minipage}{0.85\textwidth}
               \footnotesize
               On symbol $ \cent $:
               \begin{equation}
                       \delta(q_{1},\cent) = \underbrace{ \frac{1}{\sqrt{2}} 
                       (q_{1},\varepsilon,\omega_{1}) }_{\mathsf{path}_{1}} +
                               \underbrace{\frac{1}{\sqrt{2}} (p_{1},\varepsilon,\omega_{1})}_{\mathsf{path}_{2}}
               \end{equation}
               On symbols from $ \Sigma $:
               \begin{equation}
                       \mathsf{path}_{1}: \left\{
                               \begin{array}{lcl}
                                       \delta(q_{1},a) & = & (q_{1},a,\omega_{1}) \\
                                       \delta(q_{2},a) & = & (q_{2},\varepsilon,\omega_{1}) \\
                                       \delta(q_{1},b) & = & (q_{1},b,\omega_{1}) \\
                                       \delta(q_{2},b) & = & (q_{2},\varepsilon,\omega_{1}) \\
                                       \delta(q_{1},c) & = & (q_{2},\varepsilon,\omega_{1}) \\
                                       \delta(q_{2},c) & = & (q_{3},\varepsilon,\omega_{1}) \\
                                       \delta(q_{3},a) & = & (q_{3},\varepsilon,\omega_{2}) \\
                                       \delta(q_{3},b) & = & (q_{3},\varepsilon,\omega_{2}) \\
                                       \delta(q_{3},c) & = & (q_{3},\varepsilon,\omega_{2})
                               \end{array}
                       \right.
               \end{equation}
               \begin{equation}
                       \mathsf{path}_{2}: \left\{
                               \begin{array}{lcl}
                                       \delta(p_{1},a) & = & (p_{1},\varepsilon,\omega_{1}) \\
                                       \delta(p_{2},a) & = & (p_{2},a,\omega_{1}) \\
                                       \delta(p_{1},b) & = & (p_{1},\varepsilon,\omega_{1}) \\
                                       \delta(p_{2},b) & = & (p_{2},b,\omega_{1}) \\
                                       \delta(p_{1},c) & = & (p_{2},\varepsilon,\omega_{1}) \\
                                       \delta(p_{2},c) & = & (p_{3},\varepsilon,\omega_{1}) \\
                                       \delta(p_{3},a) & = & (p_{3},\varepsilon,\omega_{2}) \\
                                       \delta(p_{3},b) & = & (p_{3},\varepsilon,\omega_{2}) \\
                                       \delta(p_{3},c) & = & (p_{3},\varepsilon,\omega_{2})                                                                              
                               \end{array}
                       \right.
               \end{equation}
               On symbol $ \dollar $:
               \begin{equation}
                       \mathsf{path}_{1}: \left\{
                               \begin{array}{lcl}
                                       \delta(q_{1},\dollar) & = & (q_{1},\varepsilon,\omega_{1}) \\
                                       \delta(q_{2},\dollar) & = & \frac{1}{\sqrt{2}} (q_{2},\varepsilon,\omega_{1}) +
                                               \frac{1}{\sqrt{2}} (q_{3},\varepsilon,\omega_{2}) \\
                                       \delta(q_{3},\dollar) & = & (q_{3},\varepsilon,\omega_{1})
                               \end{array}
                       \right.
               \end{equation}
               \begin{equation}
                       \mathsf{path}_{2}: \left\{
                               \begin{array}{lcl}
                                       \delta(p_{1},\dollar) & = & (p_{1},\varepsilon,\omega_{1}) \\
                                       \delta(p_{2},\dollar) & = & \frac{1}{\sqrt{2}} (q_{2},\varepsilon,\omega_{1}) -
                                               \frac{1}{\sqrt{2}} (q_{3},\varepsilon,\omega_{2}) \\
                                        \delta(p_{3},\dollar) & = & (q_{3},\varepsilon,\omega_{1})
                               \end{array}
                       \right.
               \end{equation}
       \end{minipage}
       }
       \end{center}
               \caption{The transitions of the RT-QFA-POS of Theorem \ref{wom:thm:Ltwin-by-RT-QFA-POS}}
	\vskip\baselineskip
       \label{wom:fiq:Ltwin}
       \end{figure}

       \begin{enumerate}
               \item The computation splits into two paths, $ \mathsf{path}_{1} $
and $ \mathsf{path}_{2} $, with equal
                       amplitude at the beginning.
               \item $ \mathsf{path}_{1} $ (resp., $ \mathsf{path}_{2} $) scans the
input, and
                       copies $ w_{1} $ (resp., $ w_{2} $) to the POS if the input is of
the form $ w_{1} c w_{2} $,
                       where $ w_{1},w_{2} \in \{a,b\}^{*} $.
                       \begin{enumerate}
                               \item If the input is not of the form $ w_{1} c w_{2} $, both paths reject.
                               \item Otherwise, $ \mathsf{path}_{1} $
                                       and $ \mathsf{path}_{2} $ perform a QFT at the end of the
computation, where the distinguished range element is an accept state.
               \end{enumerate}
\end{enumerate}
The configurations at the ends of $ \mathsf{path}_{1} $ and $
\mathsf{path}_{2} $ interfere with each other,
i.e., the machine accepts with probability $ 1 $, if and only if the
input is of the form
$ w c w $, $ w \in \{a,b\}^{*} $.
Otherwise, each of $ \mathsf{path}_{1} $ and $ \mathsf{path}_{2} $
contributes at most $ \frac{1}{4} $ to the overall acceptance probability,
and the machine accepts with probability at most $ \frac{1}{2} $.
\end{proof}

\begin{lemma}
	\label{wom:lem:no-ptm-wom-L-twin}
        No PTM (or PTM-WOM) using $ o(\log(n)) $ space can recognize $
L_{twin} $ with bounded error.
\end{lemma}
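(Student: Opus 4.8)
The statement is a space lower bound: no probabilistic Turing machine using $o(\log n)$ work-tape space can recognize $L_{twin} = \{wcw \mid w \in \{a,b\}^*\}$ with bounded error, even if augmented with a write-only memory. The plan is to invoke the classical theory of bounded-error space-bounded probabilistic machines. First I would observe that, by Lemma \ref{wom:lem:classical-wom}-style reasoning (extended to Turing machines with a work tape rather than just finite automata), a write-only memory contributes nothing to a \emph{probabilistic} machine's power, since the WOM content is never read and hence the distribution over internal configurations evolves exactly as it would without the WOM; this lets me discard the WOM and reduce to the case of an ordinary PTM using $o(\log n)$ space.

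Next, the core is a standard crossing-sequence / Kolmogorov-complexity argument adapted to the probabilistic setting. The key fact I would use is that a bounded-error PTM running in space $s(n)$ can be simulated so that it halts within $2^{O(s(n))}$ expected steps, and that for such machines a nonregular language forces $s(n) = \Omega(\log n)$ — this is precisely the Alberts--type result (cf.\ \cite{Fr81,DS92,FK94} and the survey \cite{Sz94}) that bounded-error sublogarithmic-space PTMs recognize only regular languages. Since $L_{twin}$ is plainly nonregular (a pumping argument on $a^n c a^n$ suffices), no such PTM can recognize it with bounded error. Concretely I would cite the known theorem that $\mathrm{BPSPACE}(s) \cap \{\text{nonregular languages}\} = \emptyset$ for $s(n) = o(\log n)$, which is already implicitly used elsewhere in the thesis (e.g.\ in the discussion around Corollary \ref{cor:QTM-WOMS-superior-PTM-WOMs}, where the analogous DCFL fact for nondeterministic machines appears).

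The main obstacle — and the only place requiring genuine care — is making the WOM-elimination step rigorous for Turing machines rather than finite automata. For a PTM-WOM the work tape is still read/write, while the \emph{additional} WOM tape is write-only; I must argue that projecting away the WOM-tape contents and WOM-head position from each configuration yields a well-defined probabilistic computation on the remaining (state, input-head, work-tape) components with identical acceptance probabilities. This follows because no transition of the machine depends on the symbol currently under the WOM head (by definition of write-only), so the marginal distribution over the non-WOM part of the configuration is a Markov chain in its own right; summing the probabilities of WOM-distinct configurations sharing the same non-WOM part recovers exactly the behaviour of the PTM without WOM, and crucially the space used \emph{on the work tape} is unchanged. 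Once this is established, the result is immediate from the classical sublogarithmic-space theorem. I would close by remarking that this contrasts sharply with Theorem \ref{wom:thm:Ltwin-by-RT-QFA-POS}, where a quantum machine exploits interference among configurations that differ only in their (otherwise invisible) stack contents.
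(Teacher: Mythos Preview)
Your WOM-elimination step is fine, but the core of your argument rests on a theorem that does not exist. You claim that bounded-error PTMs using $o(\log n)$ space recognize only regular languages, citing this as ``the known theorem that $\mathrm{BPSPACE}(s) \cap \{\text{nonregular languages}\} = \emptyset$ for $s(n) = o(\log n)$.'' This is false: two-way probabilistic finite automata (i.e.\ constant-space PTMs) already recognize nonregular languages such as $L_{eq} = \{a^{n}b^{n}\}$ with bounded error \cite{Fr81}, and the thesis records this explicitly as $\mathrm{REG} \subsetneq \mathrm{BPSPACE}(1)$ in Section~\ref{sbc:space-classes}. You appear to be conflating Rabin's theorem for \emph{one-way} PFAs, or the Dwork--Stockmeyer polynomial-time result, with a general sublogarithmic-space statement that simply does not hold. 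Consequently, knowing that $L_{twin}$ is nonregular gives you nothing here.

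The paper's proof takes a different and language-specific route: it observes that a PTM recognizing $L_{twin}$ in $o(\log n)$ space can be converted into a PTM recognizing the palindrome language $L_{pal}$ in the same space, by treating the right end-marker as the separator $c$ and reflecting the input head whenever it would cross it (so an input $u$ is processed as if it were $u\,c\,u^{\mathrm{r}}$). It then invokes the specific hard result of Freivalds and Karpinski \cite{FK94} that no PTM using $o(\log n)$ space recognizes $L_{pal}$ with bounded error. That result is genuinely nontrivial and is what carries the lower bound; there is no shortcut via regularity.
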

\begin{proof}
       Any PTM using $ o(\log(n)) $ space to recognize $ L_{twin} $ with bounded error
       can be used to construct a PTM recognizing the palindrome language
       $ L_{pal} $ with bounded error using the same amount of space.
       (One would only need to modify the $ L_{twin} $ machine to treat the
right end-marker on the tape as
       the symbol $ c $, and switch its head direction when it attempts to
go past that symbol.)
       It is however known \cite{FK94} that no PTM using $ o(\log(n)) $
space can recognize $ L_{pal} $
       with bounded error.
\end{proof}

We are now able to state a stronger form of Corollary \ref{cor:QTM-WOMS-superior-PTM-WOMs},
which referred only to one-sided error:

\begin{corollary}
	\label{cor:QTM-WOMS-superior-PTM-WOMs-bounded-error}
       QTM-WOMs are strictly superior to PTM-WOMs for any space bound $ o(log(n)) $
       in terms of language recognition with bounded error.
\end{corollary}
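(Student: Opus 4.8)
The plan is to obtain Corollary~\ref{cor:QTM-WOMS-superior-PTM-WOMs-bounded-error} along the same lines as Corollary~\ref{cor:QTM-WOMS-superior-PTM-WOMs}, but replacing the witness language used there --- which was separated from small-space machines by way of a \emph{nondeterministic} space lower bound, and hence yielded only a one-sided statement --- with a language that is provably out of reach of \emph{every} bounded-error $o(\log n)$-space probabilistic machine with write-only memory. The right choice is $L_{twin}=\{wcw\mid w\in\{a,b\}^{*}\}$, since both halves of the argument are already available in this section.

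For the quantum half I would invoke Theorem~\ref{wom:thm:Ltwin-by-RT-QFA-POS}: there is a RT-QFA-POS that recognizes $L_{twin}$ with negative one-sided error bounded by $\frac{1}{2}$. A RT-QFA-POS is by definition a restricted realtime quantum Turing machine whose work tape is a one-way write-only tape, hence a special case of a QTM-WOM, so $L_{twin}$ lies in the class of languages recognized with bounded error by QTM-WOMs; should a prescribed error bound $\epsilon\in(0,\frac{1}{2})$ be wanted, running several independent copies of the RT-QFA-POS in parallel and accepting only when all of them accept keeps the acceptance probability of members equal to $1$ while pushing that of non-members below $2^{-k}$. For the classical half I would use Lemma~\ref{wom:lem:no-ptm-wom-L-twin}: a PTM-WOM recognizing $L_{twin}$ with bounded error in space $o(\log n)$ could, by treating the right end-marker as the symbol $c$ and reversing the input head there, be turned into a bounded-error PTM recognizing the palindrome language $L_{pal}$ within the same space, contradicting the Freivalds--Karpinski lower bound~\cite{FK94}. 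Putting the two halves together yields the corollary; and since the probabilistic side is now allowed two-sided error, this is genuinely a strengthening of Corollary~\ref{cor:QTM-WOMS-superior-PTM-WOMs}.

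The only delicate point, and the closest thing here to an obstacle, is the reading of ``strictly superior \ldots\ for any space bound $o(\log n)$'': the quantum witness itself consumes \emph{linear} space on its stack, so the statement is not a comparison of small-space quantum devices with small-space probabilistic devices, but the assertion that the QTM-WOM model (unrestricted in work-tape space) recognizes with bounded error a language that lies outside the bounded-error power of every $o(\log n)$-space PTM-WOM. I would state this explicitly, so that the corollary is parsed exactly as Corollaries~\ref{cor:QTM-WOMS-superior-PTM-WOMs} and~\ref{wom:cor:QTM-WOM-superior-PTM-WOM} are; with that convention in place, nothing beyond Theorem~\ref{wom:thm:Ltwin-by-RT-QFA-POS} and Lemma~\ref{wom:lem:no-ptm-wom-L-twin} is needed.
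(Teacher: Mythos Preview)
Your proposal is correct and matches the paper's own (implicit) argument exactly: the paper states this corollary immediately after Theorem~\ref{wom:thm:Ltwin-by-RT-QFA-POS} and Lemma~\ref{wom:lem:no-ptm-wom-L-twin}, with the sentence ``We are now able to state a stronger form of Corollary~\ref{cor:QTM-WOMS-superior-PTM-WOMs}, which referred only to one-sided error,'' and gives no further proof---precisely because those two results, combined with the containment direction from Lemma~\ref{qtm:lem:pm-simulated-by-qm}, are all that is needed. Your identification of $L_{twin}$ as the witness and of the two cited results as the ingredients is exactly right.

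Your closing caveat about how to parse ``for any space bound $o(\log n)$'' is more cautious than necessary. The intended reading throughout this chapter is that the WOM is an \emph{additional} tape whose usage is not counted in the space bound (this is why Lemma~\ref{wom:lem:classical-wom} is stated as ``augmented with a WOM'' and why Lemma~\ref{wom:lem:no-ptm-wom-L-twin} speaks of PTMs ``using $o(\log n)$ space'' with the WOM parenthetical). A RT-QFA-POS therefore counts as a constant-space QTM-WOM, so the comparison is fair on both sides; you need not interpret the quantum side as unrestricted.
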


\begin{openproblem}
       Can a one-way probabilistic pushdown automaton recognize $ L_{twin} $ with
bounded error?
\end{openproblem}

\section{Realtime Quantum Finite Automata with Write-Only Memories} \label{wom:WOMs}

In this section, we present a bounded-error RT-QFA-WOM that recognizes a language for which we currently do not know a RT-QFA-POS algorithm, namely,
\begin{equation}
	L_{rev} = \{wcw^{\rev}\ \mid w \in \{a,b\}^{*} \},
\end{equation}
where $w^{\rev}$ is the reverse of string $w$.
Note that this language can also be recognized by a deterministic pushdown automaton.

\begin{theorem}
	There exists a RT-QFA-WOM that recognizes $ L_{rev} $
	with negative one-sided error bound $ \frac{1}{2} $.
\end{theorem}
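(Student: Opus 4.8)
The plan is to adapt the two-path interference construction used for $L_{twin}$ in Theorem \ref{wom:thm:Ltwin-by-RT-QFA-POS}, exploiting the extra freedom of a two-way write-only head to realize the reversal. On the left end-marker the machine $\mathcal{M}$ splits, with equal amplitude, into $\mathsf{path}_1$ and $\mathsf{path}_2$; both paths run in parallel a deterministic regular-expression check that the input has the form $\{a,b\}^{*}c\{a,b\}^{*}$ with exactly one occurrence of $c$, and both simply reject if this fails. Otherwise write the input as $\cent w_1 c w_2 \dollar$ with $w_1,w_2\in\{a,b\}^{*}$, $k=|w_1|$, $m=|w_2|$.

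The two paths use the WOM as follows. While scanning $w_1$, $\mathsf{path}_1$ writes $w_{1,1},\dots,w_{1,k}$ into the WOM squares $0,1,\dots,k-1$, advancing the WOM head one square to the right at each step; on $c$ and throughout $w_2$ it writes nothing and leaves the WOM head where it is. While scanning $w_1$, $\mathsf{path}_2$ writes nothing but advances the WOM head one square to the right at each step, so that after $w_1$ its head is on square $k$; on $c$ it moves the WOM head one square to the left, to square $k-1$; then, while scanning $w_2$, it writes $w_{2,1},w_{2,2},\dots,w_{2,m}$ into squares $k-1,k-2,\dots,k-m$, moving the WOM head one square to the left at each step. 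A direct check shows that, just before $\dollar$ is read, the two paths have written exactly the same symbols on exactly the same WOM squares if and only if $m=k$ and $w_{2,j}=w_{1,k-j+1}$ for all $j$, i.e.\ if and only if $w_2=w_1^{\rev}$, i.e.\ if and only if the (syntactically correct) input is in $L_{rev}$.

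On $\dollar$ the two surviving configurations are subjected to a $2$-way quantum Fourier transform (Figure \ref{fig:wom:N-way-QFT}) whose distinguished range element is an accepting state. If the input is in $L_{rev}$ the two incoming configurations have identical WOM contents, the branches destructively interfere on the non-distinguished (rejecting) target, and all the probability is collected on the accepting state, so $\mathcal{M}$ accepts with certainty. If the input is syntactically correct but not in $L_{rev}$, the WOM contents differ, no interference occurs, and each of $\mathsf{path}_1,\mathsf{path}_2$ contributes at most $\frac14$ to the overall acceptance probability, so the input is accepted with probability at most $\frac12$; combined with the reject-everything behaviour on syntactically bad inputs, this gives negative one-sided error bound $\frac12$.

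Finally, as in the $L_{twin}$ construction, one has to introduce a small set of finite-register symbols and attach them to the transitions (distinguishing, in particular, the ``from $\mathsf{path}_1$'' and ``from $\mathsf{path}_2$'' branches of the QFT, and distinguishing transitions by their source state) so that the induced operator is a genuine superoperator, after which the local well-formedness conditions are routine to verify; the transitions not specified above are completed arbitrarily so as to preserve well-formedness. I expect the main obstacle to be purely bookkeeping: pinning down the WOM-head movements of the two paths so that their tape contents coincide \emph{exactly} for inputs in $L_{rev}$ and differ on at least one square for all other syntactically correct inputs (including the edge cases $w_1=\varepsilon$ or $w_2=\varepsilon$), and then checking that the resulting transition function can indeed be extended to a well-formed one.
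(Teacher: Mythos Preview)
Your approach is essentially the paper's approach, but there is one genuine gap in the specific construction you describe. Recall that a configuration of a RT-QFA-WOM includes not only the WOM tape contents but also the \emph{position of the WOM tape head}. In your scheme, after the right end-marker is reached, $\mathsf{path}_1$'s WOM head sits at square~$k$ (it never moved after writing $w_1$), while $\mathsf{path}_2$'s head sits at square $k-m-1$ (it moved left once on $c$ and once per symbol of $w_2$). Thus even when $w_2=w_1^{\rev}$ the two configurations differ in their head component, no interference takes place, and the input is accepted with probability at most $\tfrac12$ rather than $1$.

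The fix is exactly what the paper does: make the two paths' WOM-head trajectories identical throughout the computation, so that the only possible difference between the two configurations is the tape content itself. Concretely, after $c$ is read, \emph{both} paths reverse direction and move the WOM head one square to the left at every remaining step; $\mathsf{path}_1$ writes nothing during this phase while $\mathsf{path}_2$ writes the symbols of $w_2$. Then the two head positions coincide at every instant, and the tape contents agree precisely when $w_1=w_2^{\rev}$. With this adjustment your argument goes through, and the rest of your write-up (syntax check, QFT on $\dollar$, the $\tfrac12$ bound for non-members, and the well-formedness bookkeeping) is fine.
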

\begin{proof}
	(Sketch)
	We use almost the same technique presented in the proof of Theorem \ref{wom:thm:Ltwin-by-RT-QFA-POS}.
	The computation is split into two paths ($ \mathsf{path}_{1} $ and $ \mathsf{path}_{2} $) 
	with equal amplitude at the beginning of the computation.
	Each path checks whether the input string is of the form $ w_{1}cw_{2} $, where $ w_{1},w_{2} \in \{a,b\}^{*} $
	and rejects with probability 1 if it is not.
	We assume that the input string is of the form $ w_{1}cw_{2} $ in the rest of this proof.
	Until the $ c $ is read, $ \mathsf{path}_{1} $ copies $ w_{1} $ to the WOM tape, and $ \mathsf{path_{2}} $
	just moves the WOM tape head one square to the right at each step, without writing anything.
	After reading the $ c $, the direction of the WOM tape head is reversed in both paths.
	That is, $ \mathsf{path}_{1} $ moves the WOM tape head one square to the left at each step, without writing 
	anything, while $ \mathsf{path_{2}} $ writes $ w_{2} $ in the reverse direction (from the right to the left) 
	on the WOM tape.
	When the right end-marker is read,
	the paths make a QFT, as in the proof of Theorem \ref{wom:thm:Ltwin-by-RT-QFA-POS}.
	It is easy to see that the two paths interfere if and only if $ w_{1} = w_{2}^{\rev} $, and the input string is
	accepted with probability 1 if it is a member of $ L_{rev} $, and with probability $ \frac{1}{2} $ otherwise.
\end{proof}

By an argument similar to the one used in the proof of Lemma \ref{wom:lem:no-ptm-wom-L-twin}, 
$ L_{rev} $ can not be recognized with bounded error by any PTM using $ o(\log(n)) $ space,
since the existence of any such machine would lead to a PTM that recognizes the palindrome
language using the same amount of space.

\begin{openproblem}
	Can a RT-QFA-POS recognize $ L_{rev} $ with bounded error?
\end{openproblem}

It is easy to see that a WOM of constant size adds no power to a conventional machine. 
All the algorithms we considered until now used $ \Omega(n) $ squares of the WOM tape on worst-case inputs. 
What is the minimum amount of WOM that is required by a QFA-WOM recognizing a nonregular language? 
Somewhat less ambitiously, one can ask whether there is any nonregular language recognized 
by a RT-QFA-WOM with sublinear space. We answer this question positively for middle-space usage, 
that is, when we are only concerned with the space used by the machine when the input is a member of the language.

Let $ (i)^{\rev}_{2} $ be the reverse of the binary representation of $ i \in \mathbb{N} $. Consider the language
\begin{equation}
	L_{rev-bins}=\{ a (0)_{2}^{\rev} a (1)_{2}^{\rev} a \cdots a (k)_{2}^{\rev} a \mid k \in \mathbb{Z}^{+} \}.
\end{equation}
\begin{theorem}
	\label{wom:thm:L-rev-bins-RT-QFA-WOM}
	$ L_{rev-bins} $ can be recognized by a RT-QFA-WOM $ \mathcal{M} $ with negative one-sided error bound 
	$ \frac{3}{4} $, and the WOM usage of $ \mathcal{M} $ for the members of $ L_{rev-bins} $ is $ O(\log n) $,
	where $ n $ is the length of the input string.
\end{theorem}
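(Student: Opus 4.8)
The plan is to verify that a string of the form $a (0)_2^{\rev} a (1)_2^{\rev} a \cdots a (k)_2^{\rev} a$ is a valid member of $L_{rev-bins}$ by checking two local syntactic conditions and one ``counting'' condition, where the counting is handled by reusing the ``binary string reversal'' interference trick from Theorem~\ref{wom:thm:Ltwin-by-RT-QFA-POS} and Theorem~\ref{berr:L-pal-for-RT-QFA-restart} together with the $N$-way QFT described in Figure~\ref{fig:wom:N-way-QFT}. First I would have the machine check deterministically (in its finite state control, no WOM needed) that the input belongs to the regular set $a\{0,1\}^{*}(a\{0,1\}^{*})^{*}a$, i.e.\ that it consists of $a$'s separating blocks of binary digits, starting and ending with $a$; if this fails the input is rejected with probability $1$. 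The remaining, nontrivial requirement is that the sequence of binary blocks, read as reversed binary numerals, is exactly $(0)_2^{\rev},(1)_2^{\rev},\ldots,(k)_2^{\rev}$ for some $k$ --- that is, the $j$-th block must encode $j-1$ (for $1\le j\le k+1$) with no numbers skipped or repeated.

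The key idea is that consecutive blocks $(i)_2^{\rev}$ and $(i+1)_2^{\rev}$ differ in a very simple way: adding $1$ to $i$ in \emph{reversed} binary is a left-to-right carry-propagation, turning a (possibly empty) prefix of $1$'s into $0$'s and then flipping the first $0$ to a $1$ (or appending a $1$ if $i+1$ is a power of two). So the machine, while reading block $j+1$, can ``predict'' it from block $j$ using only a constant amount of carry-state information in the finite control --- \emph{except} that it cannot store block $j$ itself. This is exactly the situation handled by the WOM-with-interference technique: branch into $\mathsf{path}_1$ and $\mathsf{path}_2$ with equal amplitude at the start; $\mathsf{path}_1$ writes, onto the WOM, a canonical encoding of ``what it has seen so far'' (the concatenation of the blocks it has read), while $\mathsf{path}_2$ writes onto the WOM the canonical encoding of ``what the blocks \emph{should} have been'' --- namely $(0)_2^{\rev} a (1)_2^{\rev} a \cdots$, regenerated internally by repeatedly incrementing a reversed-binary counter kept, digit by digit, only transiently as it is written out. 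Since $\mathsf{path}_2$ generates the reference string from scratch while traversing the input solely to know when to stop, and since a reversed-binary increment is a bounded-carry left-to-right operation, $\mathsf{path}_2$ needs only $O(\log n)$ WOM cells to hold the running numeral (the counter never exceeds $k\le n$, so it has $O(\log n)$ bits), and correspondingly $\mathsf{path}_1$ is made to write only a matching $O(\log n)$-cell ``fingerprint'' of the input (e.g.\ a running count of the blocks plus the current block) rather than the whole input. At the right end-marker the two paths perform a $2$-way QFT into an accept state and a reject state; the WOM contents coincide exactly when the input's block sequence equals the canonical sequence, in which case the amplitudes interfere constructively and the input is accepted with probability $1$. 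If the input is not in $L_{rev-bins}$ the WOM contents differ, no cancellation occurs, and each path contributes at most $\tfrac14$ to acceptance, giving acceptance probability at most $\tfrac12$; combining this with the deterministic syntax check (which already rejects a large class of non-members with certainty) yields negative one-sided error bounded by $\tfrac34$ as claimed. The $O(\log n)$ middle-space bound follows because on members the WOM head never advances past $O(\log n)$ cells in \emph{either} path.

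The main obstacle will be arranging the bookkeeping so that both paths write \emph{exactly the same} $O(\log n)$-length string on members, while guaranteeing that on non-members the strings genuinely differ (so that interference cannot accidentally occur). In particular I must be careful that $\mathsf{path}_1$'s ``fingerprint'' is a function of the input that distinguishes every non-member: the safest choice is to let $\mathsf{path}_1$ write, for the input $a B_1 a B_2 a \cdots a B_m a$, the string $(B_1) \# (B_2) \# \cdots$ truncated/compressed --- but that is $\Theta(n)$, not $O(\log n)$. To keep it logarithmic I instead have $\mathsf{path}_1$ maintain a reversed-binary counter $j$ of blocks seen and, \emph{in lockstep with $\mathsf{path}_2$}, check each incoming block $B_j$ against $\mathsf{path}_2$'s internally generated $(j-1)_2^{\rev}$ by XOR-ing them symbol-by-symbol into the \emph{same} WOM cells both paths are using; a mismatch in any symbol leaves a nonzero mark that prevents cancellation, and a length mismatch between $B_j$ and $(j-1)_2^{\rev}$ is caught in the finite control. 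Verifying that this shared-WOM XOR scheme (i) uses only $O(\log n)$ cells --- because each block $(j-1)_2^{\rev}$ has length $O(\log j)=O(\log n)$ and the cells are reused per block --- and (ii) forces distinct final WOM contents on every non-member string, is the delicate part of the argument; once it is in place, the QFT/interference analysis is a verbatim repeat of the one in Theorem~\ref{wom:thm:Ltwin-by-RT-QFA-POS}, and the error and space bounds follow immediately.
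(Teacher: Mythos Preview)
Your proposal has a genuine gap in the core mechanism. You want $\mathsf{path}_2$ to ``internally generate'' the canonical sequence $(0)_2^{\rev},(1)_2^{\rev},(2)_2^{\rev},\ldots$ while scanning the input, but to produce $(j)_2^{\rev}$ the path must know the bits of $(j-1)_2^{\rev}$, and those bits are $\Theta(\log j)$ in number. The finite control cannot hold them, and the WOM is write-only, so the path cannot read back what it wrote for the previous block. Your ``XOR-into-the-same-cells'' fix does not rescue this: the WOM model here has no XOR primitive, and the interference mechanism (Figure~\ref{fig:wom:N-way-QFT}) compares entire WOM contents for \emph{equality}; two paths overwriting the same cells simply produce two tape contents, and they interfere only if those contents coincide. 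There is no way to extract a per-symbol mismatch flag from it.

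The paper sidesteps the storage problem entirely by never trying to regenerate the sequence. Instead it checks the \emph{local} condition that each block is the increment of the previous one: it builds a RT-QFA-POS for $L_{twin'}=\{(i)_2^{\rev}a(i+1)_2^{\rev}\}$ (a small modification of the $L_{twin}$ machine, since incrementing a reversed-binary numeral is a finite-state left-to-right carry), and runs this check in a loop over consecutive block pairs. The crucial point for the $O(\log n)$ bound is that between iterations the WOM head \emph{reverses direction}, so the same $O(\log n)$ cells are reused for every pair. A single loop only covers pairs $(B_1,B_2),(B_3,B_4),\ldots$ or $(B_2,B_3),(B_4,B_5),\ldots$, so the machine splits into two equiprobable branches, one starting at the first $a$ and one at the second; on a non-member at least one adjacent pair is bad, so at least one branch rejects with probability $\ge\tfrac12$, giving overall acceptance at most $\tfrac34$. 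The input itself serves as the readable storage of the ``previous block'' that your $\mathsf{path}_2$ was missing.
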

\begin{proof}
	It is not hard to modify the RT-QFA-POS recognizing $ L_{twin} $ to obtain a new RT-QFA-POS, say $ \mathcal{M}' $, 
	in order to recognize language
	$ L_{twin'}=\{ (i)^{\rev}_{2} a (i+1)_{2}^{\rev} \mid i \geq 0 \} $ 
	with negative one-sided error bound $ \frac{1}{2} $.
	Our construction of $ \mathcal{M} $ is based on $ \mathcal{M'} $.
	The main idea is to use $ \mathcal{M}' $ in a loop in order to check the consecutive
	blocks of $ \{0,1\}^{+}a\{0,1\}^{+} $ between two $ a $'s. In each iteration,
	the WOM tape head reverses direction, and so the previously used space can be used again and again.
	Note that, whenever $ \mathcal{M}' $ executes a rejecting transition, $ \mathcal{M} $ enters a path which  rejects the input when it arrives at the right end-marker, and whenever $ \mathcal{M}' $ is supposed to execute an accepting transition 
	(except at the end of the computation), $ \mathcal{M} $ enters the next iteration.
	At the end of the input, the input is accepted by $ \mathcal{M} $
	if $ \mathcal{M}' $ accepts in its last iteration.
	
	Let $ w  $ be an input string.
	We assume that $ w $ is of the form
	\begin{equation}
		a \{0,1\}^{+} a \{0,1\}^{+} a \cdots a \{0,1\}^{+}a .
	\end{equation} 
	(Otherwise, it is rejected with probability 1.)
	At the beginning, the computation is split equiprobably into two branches,
	$ \mathsf{branch}_{1} $ and $ \mathsf{branch}_{2} $.
	(These never interfere with each other.)
	$ \mathsf{branch}_{1} $ (resp., $ \mathsf{branch}_{2} $) enters the block-checking loop after reading the
	first (resp., the second) $ a $. Thus, at the end of the computation, 
	one of the branches is in the middle of an iteration, and the other one has just finished its final iteration.
	The branch whose iteration is interrupted by reading the end-marker accepts with probability 1.
	
	If $ w \in L_{rev-bins} $, neither branch enters a reject state, and the input is accepted with probability 1. 
	On the other hand, if $ w \notin L_{rev-bins}  $, there must be at least one block $ \{0,1\}^{+}a\{0,1\}^{+} $
	that is not a member of $ L_{twin'} $, and so the input is rejected with probability $ \frac{1}{2} $
	in one branch. 
	Therefore, the overall accepting probability can be at most $ \frac{3}{4} $.
	
	It is easy to see that the WOM usage of this algorithm for members of $ L_{rev-bins} $ is $ O(\log n) $.
\end{proof}

\chapter{SUBLINEAR-SPACE REALTIME TURING MACHINES} \label{rtm}

In this chapter, we give space lower bounds of realtime classical Turing machines recognizing nonregular languages. 
In fact, we validate the lower bounds of one-way machines for realtime machines.
We refer the reader to \cite{Sz94} for a detailed background.

By combining previous results with ours, we can obtain Figure \ref{conc:fig:lower-bounds},
in which the lower bounds following from our results are presented in bold.
Also note that the slots containing symbol ``?'' in the figure are still open.

\begin{table}[h!]	
	\vskip\baselineskip
	\caption{Lower bounds of \{D,N,A\}TMs in order to recognize a nonregular language}
	\footnotesize
	\begin{center}
	\begin{tabular}{|l|l|l|l|l|l|l|}
	 	\hline
	 	& \multicolumn{3}{c|}{general case} & \multicolumn{3}{c|}{unary case} 
	 	\\ \hline  	
	 	& Strong & Middle & Weak & Strong & Middle & Weak 
	 	\\ \hline
	 	1DTM & $ \log n $ & $ \log n $ & $ \log n $ & $ \log n $ & $ \log n $ & $ \log n $
	 	\\ \hline
	 	1NTM & $ \log n $ & $ \log n $ & $ \log \log n $ & $ \log n $ & $ \log n $ & $ \log \log n $
	 	\\ \hline
	 	1ATM & $ \log n $ & $ \log \log n $ & $ \log \log n $ & $ \log n $ & $ \log n $ & $ \log \log n $
	 	\\ \hline
	 	RT-DTM & $ \mathbf{log} \mspace{2mu} n $ & $ \mathbf{log} \mspace{2mu} n $ &
	 		$ \mathbf{log} \mspace{2mu} n $ & $ \mathbf{log} \mspace{2mu} n $ &
	 		$ \mathbf{log} \mspace{2mu} n $ & $ \mathbf{log} \mspace{2mu} n $
	 	\\ \hline
	 	RT-NTM & $ \mathbf{log} \mspace{2mu} n $ & $ \mathbf{log} \mspace{2mu} n $ &
	 		$ \mathbf{log} \mspace{2mu} \mathbf{log} \mspace{2mu} n $ & $ \mathbf{log} \mspace{2mu} n $ &
	 		$ \mathbf{log} \mspace{2mu} n $ & ?
	 	\\ \hline
	 	RT-ATM & $ \mathbf{log} \mspace{2mu} n $ & $ \mathbf{log} \mspace{2mu} \mathbf{log} \mspace{2mu} n $ &
	 		$ \mathbf{log} \mspace{2mu} \mathbf{log} \mspace{2mu} n $ & $ \mathbf{log} \mspace{2mu} n $ &
	 		$ \mathbf{log} \mspace{2mu} n $ & ?
	 	\\ \hline 
	 \end{tabular}
	 \end{center}
	 \label{conc:fig:lower-bounds}
\end{table}

Let $ h_{\kappa} $ be a homomorphism such that 
\begin{itemize}
	\item $ h_{\kappa}(x)=x $ if $ x \neq \kappa $ and
	\item $ h_{\kappa}(\kappa)= \varepsilon $;
\end{itemize}
$ \mathcal{D} $ be the RT-DTM running in logarithmic space and
$ L_{upal\mbox{-}\kappa} $ satisfying $ h_{\kappa}(L_{upal\mbox{-}\kappa}) = L_{upal} $ 
be the language recognized by it;
$ \Sigma=\{a,b,\kappa\} $ and $ \Gamma=\{ \LofC ,0,1,\RofC,\# \} $.

For a given input $ w \in \Sigma^{*} $, the specifications of $ \mathcal{D} $ are as follows:
\begin{enumerate}
	\item During the computation, parallel to its main tasks, which is described in the following items,
		$ \mathcal{D} $ checks whether $ h_{\kappa}(w) $ is of the form $ a^{*}b^{*} $.
		If not, $ w $ is rejected.
	\item By reading three $ \kappa $'s, $ \mathcal{D} $ moves the work tape head three squares to the right.
		By reading three more $ \kappa $'s, $ \mathcal{D} $ consecutively writes symbols 
		$ \RofC $, $ 0 $, and $ \LofC $ on the work tape
		in the reverse direction. If any of the first six symbols is different than $ \kappa $,
		then the input is rejected.
		After reading six consecutive $ \kappa $'s,
		$ \mathcal{D} $ does nothing until reading a symbol different than $ \kappa $.
	\item Now, the content of the work tape is 
		``$ \LofC  0 \RofC \# \# \cdots  $'' and the work tape head is positioned on symbol $ \LofC $.
		Our aim is to keep a counter, used to compare the number of $ a $'s and $ b $'s, 
		in reverse direction between symbols $ \LofC $ and $ \RofC $.
	\item Let $ \mathsf{P}^{+} $ (resp., $ \mathsf{P}^{-} $) 
		be a deterministic procedure having a finite instruction set,
		that starts its movements when the work tape head on symbol $ \LofC $ 
		and then increases (resp., decreases) 
		the counter by one and finishes its movement by leaving the work tape head again on symbol $ \LofC $.
		The number of steps required by $ \mathsf{P}^{+} $ (resp., $ \mathsf{P}^{-} $) 
		depends on the current content of the counter.
		Let $ p^{+} $ (resp., $ p^{-} $) $: \{ \LofC (0 \cup 1)^{*} \RofC \} \rightarrow \mathbb{N} $ 
		be the function representing this number.
		Additionally, $ \mathsf{P}^{-} $ always checks whether the value of the counter becomes 0 or not.
	\item After reading an $ a $ (resp., a $ b $), 
		$ \mathcal{D} $ expects to read at least $ p^{+}(\LofC n \RofC ) $ 
		(resp., $ p^{-}(\LofC n \RofC ) $) $ \kappa $'s
		before reading a symbol different than $ \kappa $
		in order to complete the task of $ \mathsf{P}^{+} $ (resp., $ \mathsf{P}^{-} $) successfully,
		where $ n $ is the value of the counter before updating.
		That is, $ \mathcal{D} $ executes each step of $ \mathsf{P}^{+} $ (resp., $ \mathsf{P}^{-} $) 
		by reading a single $ \kappa $.
		If $ \mathcal{D} $ reads a symbol different than $ \kappa $ before finishing the task of 
		$ \mathsf{P}^{+} $ (resp., $ \mathsf{P}^{-} $), the input is rejected.
		After completing the task of $ \mathsf{P}^{+} $ (resp., $ \mathsf{P}^{-} $),
		$ \mathcal{D} $ does nothing until reading a symbol different than $ \kappa $.
	\item Whenever the counter becomes 0, $ \mathcal{D} $ expects to read $ \kappa^{*}\dollar $.
		If so, the input is accepted. Otherwise, the input is rejected.
\end{enumerate}

It can easily be verified that the space used by $ \mathcal{D} $ is $ O(log(|w|)) $. 
Assume that $ L_{upal\mbox{-}\kappa} $ is regular. Then, $ L_{upal} $ must be regular, too, since regular languages
are closed under homomorphism. We arrive at a contradiction. Therefore, $ L_{upal\mbox{-}\kappa} $ 
is a nonregular language.

Now, we present a unary nonregular language, say $ L_{power\mbox{-}\kappa} \subset \{\kappa\}^{*} $, 
recognized by a RT-DTM, say $ \mathcal{D} $, in logarithmic space.
The idea behind is similar to the previous algorithm.

For a given input $ w \in \Sigma^{*} $, the specifications of $ \mathcal{D} $ are as follows:
\begin{enumerate}
	\item By reading exactly six consecutive $ \kappa $'s, the counter, again kept in reverse direction, 
		is prepared as $ \LofC 0 \RofC $. If there are fewer leading $ \kappa $'s, 
		then the input is rejected.
		(If the seventh symbol is $ \dollar $, then the input is accepted.)
		The work tape head is placed on symbol $ \LofC $.
	\item By reading a block of $ \kappa $'s,
		\begin{itemize}
			\item the work tape head goes from symbol $ \LofC $ to symbol $ \RofC $ 
				while incrementing the counter by 1, and then,
			\item the work tape head goes from symbol $ \RofC $ to symbol $ \LofC $ and while checking 
				whether the counter value is a power of 2 or not.
		\end{itemize}
		During the travel of the work tape head, we assume that it cannot be stationary on any symbol.		
		Note that, if required, the place of symbol $ \RofC $ on the work tape is shifted 
		one square to the right.
		If $ \mathcal{D} $ reads $ \dollar $ before the work tape head is placed on symbol $ \LofC $,
		the input is rejected.
	\item When the work tape head is placed on symbol $ \LofC $ and 
		the current scanned symbol from the input tape is $ \dollar $,
		the input is accepted if the counter is a power of 2 and rejected otherwise.
\end{enumerate}

The members of $ L_{power\mbox{-}\kappa} $ can be enumerated 
as shorter strings come first.
Let $ a_{i} $ be the $ i^{th} $ element of $ L_{power\mbox{-}\kappa} $, where $ i > 0 $.
It can be verified easily that the value of $ | a_{i+1} | - | a_{i} | $ increases when $ i $ gets bigger.
Therefore, for any pumping length $ p $, we get a nonmember by pumping $ a_{i} $ once
when $ |a_{i}| > p $ and $ | a_{i+1} | - | a_{i} | > p $.
We can conclude that $ L_{power\mbox{-}\kappa} $ is a nonregular language.

We assume in the following parts that all 1TMs never move their work tape heads to the square indexed by 0.
Note that, any arbitrary 1\textbf{X}TM can be converted to a 1\textbf{X}TM having the restriction above
without lose of any space resource, where \textbf{X} $ \in \{ \mbox{D, N, A, P} \} $.
Additionally, we assume that $ \Sigma $ does not contain symbol $ \kappa $ and 
$ \Sigma_{\kappa} = \Sigma \cup \{ \kappa \} $.

For a given 1TM, a \textit{stationary-transition} is a transition in which the 1TM does not move 
its input head to the right and a \textit{to-right-transition} is a transition in which
the 1TM moves its input head to the right.

\begin{definition}	
	For a given 1DTM (resp., 1NTM or 1ATM) $ \mathcal{D} $,
	$ \mathcal{D}_{\kappa} $ is a RT-DTM (resp., RT-NTM or RT-ATM) implementing 
	the instructions of $ \mathcal{D} $ with the following specifications:
	\begin{itemize}
		\item on each non-$ \kappa $ symbol, i.e. a symbol different than $ \kappa $,
			$ \mathcal{D}_{\kappa} $ behaves as if $ \mathcal{D} $ scans this symbol;
		\item $ \mathcal{D}_{\kappa} $ requires to see a $ \kappa $ on the input tape 
			after implementing a stationary-transition of $ \mathcal{D} $; and,
			if this is not the case, called \textit{missing-$ \kappa $ case},
			$ \mathcal{D}_{\kappa} $ rejects the input;
		\item on each $ \kappa $, $ \mathcal{D}_{\kappa} $ behaves as if $ \mathcal{D} $ scans the symbol that
			is the last non-$ \kappa $ symbol scanned by $ \mathcal{D}_{\kappa} $; 
		\item after implementing a to-right-transition of $ \mathcal{D} $, $ \mathcal{D}_{\kappa} $
			waits to scan a non-$ \kappa $ symbol and so discards all $ \kappa $'s until seeing a non-$ \kappa $
			symbol.
	\end{itemize}
	At the end, $ \mathcal{D}_{\kappa} $ follows the decision of $ \mathcal{D} $ unless the input is rejected
	due to missing-$ \kappa $ case.
\end{definition}

\begin{definition}
	For a given 1DTM (resp., 1NTM or 1ATM) $ \mathcal{D} $,
	if $ L $ is the language recognized by $ \mathcal{D} $, then,
	$ L_{\mbox{-}\kappa} \subset \Sigma_{\kappa}^{*} $ is the language recognized by $ \mathcal{D}_{\kappa} $.
\end{definition}
\begin{remark}
	$ h_{\kappa} ( L_{\mbox{-}\kappa} ) = L $.
\end{remark}

\begin{remark}
	If $ L $ is not a member of class $ \mathcal{C} $ that is closed under homomorphism,
	then $ L_{\mbox{-}\kappa} $ is not a member of $ \mathcal{C} $, too.
\end{remark}

\begin{theorem}
	If $ L $ is recognized by a 1DTM (resp., a 1NTM or a 1ATM) $ \mathcal{D} $ in strong-space $ s(n) $
	and there exists a nondecreasing function $ t(n) $ such that $ s(n) \in O(t(n)) $,
	then $ L_{\mbox{-}\kappa} $ is recognized by $ \mathcal{D}_{\kappa} $ in strong-space $ O(t(n)) $.
\end{theorem}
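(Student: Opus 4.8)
The plan is to show that the realtime machine $\mathcal{D}_{\kappa}$ uses, on any input $w' \in \Sigma_{\kappa}^{*}$, no more work-tape space than $\mathcal{D}$ uses on $h_{\kappa}(w')$, and then to convert the given bound $s(n)$ on $\mathcal{D}$ into a bound of the form $O(t(n))$ on $\mathcal{D}_{\kappa}$ as a function of the \emph{new} input length. The first half is almost immediate from the definition of $\mathcal{D}_{\kappa}$: by construction, every worktape action of $\mathcal{D}_{\kappa}$ on a non-$\kappa$ symbol is exactly the worktape action $\mathcal{D}$ would take on that symbol, and on a $\kappa$ symbol $\mathcal{D}_{\kappa}$ simply repeats (stalls in) the worktape behaviour associated with the last non-$\kappa$ symbol — more precisely, the extra $\kappa$'s are consumed while $\mathcal{D}_{\kappa}$ either waits (after a to-right transition) or simulates the stationary transitions of $\mathcal{D}$ one per $\kappa$. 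In none of these phases does $\mathcal{D}_{\kappa}$ visit a worktape square that $\mathcal{D}$ would not visit during its corresponding computation on $h_{\kappa}(w')$. Hence, writing $u = h_{\kappa}(w')$, the space used by $\mathcal{D}_{\kappa}$ on $w'$ is at most the space used by $\mathcal{D}$ on $u$, which is at most $s(|u|)$ by hypothesis (for the deterministic and nondeterministic cases this is a single computation path; for the alternating case one argues the same containment for the accepting subtree, matching the ``strong-space'' convention of Section~\ref{sbc:space-classes}).

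The only real point to check is the change of the length parameter. Since $\kappa \notin \Sigma$, deleting the $\kappa$'s can only shrink the string, so $|u| = |h_{\kappa}(w')| \le |w'|$. Because $s$ is a space-bound it may be assumed nondecreasing (or, if not, one replaces it by its nondecreasing majorant, which still satisfies $s(n) \in O(t(n))$ for the given nondecreasing $t$), so $s(|u|) \le s(|w'|)$. Therefore the space used by $\mathcal{D}_{\kappa}$ on $w'$ is at most $s(|w'|) \in O(t(|w'|))$, which is exactly the claim: $L_{\mbox{-}\kappa}$, the language recognized by $\mathcal{D}_{\kappa}$ (per the preceding definition), is recognized in strong-space $O(t(n))$.

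The step I expect to require the most care is the bookkeeping in the ``missing-$\kappa$'' and ``discard-$\kappa$'' phases: one must confirm that while $\mathcal{D}_{\kappa}$ is waiting for the next non-$\kappa$ symbol (after a to-right transition of $\mathcal{D}$) it performs \emph{no} worktape moves at all, and that while it is replaying a stationary transition of $\mathcal{D}$ on a $\kappa$ it performs exactly that one transition's worktape move — so that the cumulative set of visited worktape cells is literally the same as for $\mathcal{D}$ on $u$. This is guaranteed by the construction but deserves an explicit sentence, since a sloppy implementation of $\mathcal{D}_{\kappa}$ could drift the head. Once that is pinned down, the theorem follows with no further computation, and the analogous statements for the middle-space and weak-space variants (to be used in the table of Figure~\ref{conc:fig:lower-bounds}) are proved by the same argument restricted to accepted inputs, respectively to a single accepting path or tree.
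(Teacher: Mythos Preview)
Your argument is correct and matches the paper's one-sentence proof: $\mathcal{D}_{\kappa}$'s worktape trace on any input $w'$ is (a prefix of) $\mathcal{D}$'s trace on $h_{\kappa}(w')$, so the space is at most $s(|h_{\kappa}(w')|) \in O(t(|w'|))$ since $|h_{\kappa}(w')| \le |w'|$ and $t$ is nondecreasing. One small correction to your parenthetical: for the nondeterministic and alternating cases, strong-space (per Section~\ref{sbc:space-classes}) bounds the space over the \emph{entire} computation tree/forest on every input, not a single path or an accepting subtree --- but your per-path containment already gives exactly that, so just drop the remark.
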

\begin{proof}	
	For any $ w \in \Sigma^{*} $, the space used by $ \mathcal{D}_{\kappa} $ on any input which is a 
	member of $ \{ w_{\kappa} \in \Sigma_{\kappa}^{*} \mid h_{\kappa} ( w_{\kappa} ) = w \} $
	is at most equal to the space used by $ \mathcal{D} $ on $ w $.
\end{proof}

\begin{theorem}
	If $ L $ is recognized by a 1DTM (resp., a 1NTM or a 1ATM) $ \mathcal{D} $ in middle-space $ s(n) $
	and there exists a nondecreasing function $ t(n) $ such that $ s(n) \in O(t(n)) $,
	then $ L_{\mbox{-}\kappa} $ is recognized by $ \mathcal{D}_{\kappa} $ in middle-space $ O(t(n)) $.
\end{theorem}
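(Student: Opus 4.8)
The plan is to reduce this to the already-proved strong-space version with essentially no extra work, by exploiting two facts: the construction of $\mathcal{D}_{\kappa}$ is a faithful step-by-step simulation of $\mathcal{D}$ on the $\kappa$-free version of the input, and, by the preceding Remark, $h_{\kappa}(L_{\mbox{-}\kappa})=L$. First I would fix an accepted input $w_{\kappa}\in L_{\mbox{-}\kappa}$ and put $w=h_{\kappa}(w_{\kappa})$. Since $h_{\kappa}(L_{\mbox{-}\kappa})=L$, we have $w\in L$; that is, $w$ is an accepted input of $\mathcal{D}$, so the whole computational structure (path, tree, or forest, according as $\mathcal{D}$ is a 1DTM, 1NTM, or 1ATM) of $\mathcal{D}$ on $w$ occupies at most $s(|w|)$ work-tape squares, by the definition of middle-space-boundedness.

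Next I would argue exactly as in the proof of the strong-space statement: every branch of $\mathcal{D}_{\kappa}$ on $w_{\kappa}$ is obtained from the corresponding branch of $\mathcal{D}$ on $w$ by interleaving ``no-op'' input-head moves over discarded $\kappa$'s (following a to-right transition) and single-step verifications that a $\kappa$ is present (following a stationary transition), none of which touches the work tape; the missing-$\kappa$ test can only truncate a branch into a rejecting one, never enlarging the tape contents. Hence the set of work-tape configurations ever visited by $\mathcal{D}_{\kappa}$ on $w_{\kappa}$ is contained in the set visited by $\mathcal{D}$ on $w$, so $\mathcal{D}_{\kappa}$ uses at most $s(|w|)$ squares on $w_{\kappa}$. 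To convert this into the claimed $O(t(n))$ middle-space bound, I would use that $|w|=|h_{\kappa}(w_{\kappa})|\le |w_{\kappa}|$, that $t$ is nondecreasing, and that $s\in O(t)$: choosing $c$ and $n_{0}$ with $s(n)\le c\,t(n)$ for $n\ge n_{0}$ gives $s(|w|)\le c\,t(|w|)\le c\,t(|w_{\kappa}|)$ for all $w_{\kappa}$ with $|w|\ge n_{0}$, while the finitely many shorter instances contribute only a constant; so the middle-space usage of $\mathcal{D}_{\kappa}$ on $L_{\mbox{-}\kappa}$ is $O(t(n))$.

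The only point at which I would slow down is the step ``$w_{\kappa}$ accepted by $\mathcal{D}_{\kappa}\ \Rightarrow\ w$ accepted by $\mathcal{D}$'', which is what licenses invoking $\mathcal{D}$'s middle-space bound on $w$; in the alternating case this amounts to checking that collapsing the $\kappa$-waiting steps turns an accepting forest of $\mathcal{D}_{\kappa}$ into an accepting forest of $\mathcal{D}$, respecting the existential/universal node structure. However, this implication is precisely the forward containment inside the identity $h_{\kappa}(L_{\mbox{-}\kappa})=L$ recorded in the Remark, so it can simply be cited rather than re-derived, and the remainder of the argument is routine bookkeeping parallel to the strong-space proof.
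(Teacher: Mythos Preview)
Your proposal is correct and follows exactly the approach the paper intends: the paper's own proof is simply ``Similar to the previous one,'' and what you have written is precisely the natural unpacking of that one-liner for the middle-space case, using the Remark $h_{\kappa}(L_{\mbox{-}\kappa})=L$ to guarantee that $w=h_{\kappa}(w_{\kappa})\in L$ so that the middle-space bound on $\mathcal{D}$ applies.
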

\begin{proof}
	Similar to the previous one.
\end{proof}

Let $ L_{gcm} = \{ a^{n} b^{M} \mid M \mbox{ is a common multiple of all } i \leq n \} $.
Szepietowski \cite{Sz88} showed that $ L_{gcm} $ is in middle-one-way-ASPACE($ \log \log (n) $).
Let $ \mathcal{A} $ be the 1ATM presented by Szepietowski.

\begin{corollary}
	$ L_{gcm \mbox{-}\kappa} $ is a nonregular language and 
	recognized by RT-ATM $ \mathcal{A}_{\kappa} $ in middle-space $ O(\log\log(n)) $.
\end{corollary}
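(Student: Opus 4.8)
The plan is to establish the two assertions of the corollary separately, each as a direct consequence of machinery already developed above: for the space bound we only need to feed Szepietowski's automaton into the middle-space transfer theorem, and for nonregularity we combine the standard fact that $L_{gcm}$ is nonregular with the closure of REG under homomorphism, via the Remark on $h_{\kappa}$.

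First I would record the space bound. Szepietowski's result gives a 1ATM $\mathcal{A}$ witnessing $L_{gcm} \in$ middle-one-way-ASPACE$(\log\log(n))$; by the conversion remark we may assume without loss of generality that $\mathcal{A}$ never moves its work-tape head onto the square indexed by $0$, at no cost in space. Thus $\mathcal{A}$ recognizes $L_{gcm}$ in middle-space $s(n)$ with $s(n) = O(\log\log(n))$. Taking $t(n) = \log\log(n)$, which is nondecreasing and satisfies $s(n) \in O(t(n))$, the middle-space theorem for 1ATMs applies and yields that $L_{gcm\mbox{-}\kappa}$ --- which by definition is the language recognized by $\mathcal{A}_{\kappa}$ --- is recognized by $\mathcal{A}_{\kappa}$ in middle-space $O(\log\log(n))$. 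This half is essentially a one-line invocation; the only thing to check is that $\mathcal{A}$ genuinely fits the hypotheses (one-way input head, the work-tape convention), which is handled by the remarks preceding the theorem.

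Next I would argue nonregularity. By the Remark it suffices to observe that $L_{gcm}$ is not regular and that REG is closed under homomorphism, so that $h_{\kappa}(L_{gcm\mbox{-}\kappa}) = L_{gcm}$ being nonregular forbids $L_{gcm\mbox{-}\kappa}$ from being regular. For the nonregularity of $L_{gcm}$ I would use Myhill--Nerode: for distinct primes $p < q$ the strings $a^{p}$ and $a^{q}$ are inequivalent, since appending $b^{\mathrm{lcm}(1,\ldots,p)}$ puts $a^{p}b^{\mathrm{lcm}(1,\ldots,p)}$ in $L_{gcm}$ while $a^{q}b^{\mathrm{lcm}(1,\ldots,p)} \notin L_{gcm}$ (the prime $q$ divides $\mathrm{lcm}(1,\ldots,q)$ but not $\mathrm{lcm}(1,\ldots,p)$, so $\mathrm{lcm}(1,\ldots,p)$ is not a common multiple of all $i \le q$). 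Since there are infinitely many primes, $L_{gcm}$ has infinitely many Myhill--Nerode classes and is therefore nonregular; the Remark then gives the claim for $L_{gcm\mbox{-}\kappa}$.

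The main obstacle is not really technical --- both halves reduce to already-proved statements --- but rather one of bookkeeping: making sure the external input (Szepietowski's $\mathcal{A}$) is quoted with exactly the properties the middle-space theorem needs, in particular that its space usage is measured in the ``middle'' sense and that it is a genuinely one-way, not two-way, ATM, and that the homomorphism $h_{\kappa}$ relating $L_{gcm\mbox{-}\kappa}$ and $L_{gcm}$ is the same one fixed in the definitions. If one preferred to avoid citing the folklore nonregularity of $L_{gcm}$, the short Myhill--Nerode computation sketched above is the only piece of genuine work.
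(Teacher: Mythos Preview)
Your proposal is correct and follows essentially the same route as the paper: the corollary is stated without proof, being an immediate consequence of the middle-space transfer theorem applied to Szepietowski's 1ATM $\mathcal{A}$ together with the Remark on nonregularity (REG is closed under homomorphism and $h_{\kappa}(L_{gcm\mbox{-}\kappa})=L_{gcm}$). Your Myhill--Nerode computation for the nonregularity of $L_{gcm}$ is a helpful addition that the paper simply takes as known.
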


$ L_{m \neq n} = \{ a^{m}b^{n} \mid m \neq n \} $ is in weak-one-way-NSPACE($ \log \log (n) $) \cite{Sz94}.
Let $ \mathcal{N} $ be the 1NTM, presented on Page 23 in \cite{Sz94}, recognizing 
$ L_{m \neq n} $ in weak-space $ O(\log \log (n)) $.
We give a brief description of $ \mathcal{N} $ below.
At the beginning of the computation, $ \mathcal{N} $ nondeterministically selects a number $ l > 1 $.
The work tape of $ \mathcal{N} $ is divided into three tracks so that
the top track can store the value of $ l $, which is written at the beginning of the computation and
the middle track (resp., bottom track) can store the value of $ m \mod(l) $ (resp., $ n \mod(l) $), 
which is iteratively calculated whenever an $ a $ (resp., a $ b $) is read.
When symbol $ \dollar $ is read, the values of the middle and bottom tracks are compared.
If they are not equal, the input is accepted.
The nondeterministic path corresponding to $ l $, namely $ \mathsf{npath}_{l} $,
needs only $ \Theta(\log(l)) $ space.
We can conclude that $ L_{m \neq n} $ is recognized by 
$ \mathcal{N} $ in weak-space($ \log\log(n) $), by using the following fact.

\begin{fact}
	(On Page 22 of \cite{Sz94}) There exists a constant $ c $ such that for every pair of natural numbers $ m $
	and $ n $, $ m \neq n $, there exists a number $ l < c\log(m+n) $ such that $ m \not\equiv n \mod(l) $.
\end{fact}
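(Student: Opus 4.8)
The plan is to reduce this to the classical fact that $\mathrm{lcm}(1,2,\dots,k)$ grows exponentially in $k$. First I would restate the target. For a positive integer $l$ we have $m\equiv n\pmod{l}$ exactly when $l$ divides $d:=|m-n|$, and since $m\neq n$ the integer $d$ satisfies $1\le d\le m+n$. So it suffices to exhibit an $l$ with $l< c\log(m+n)$ that does \emph{not} divide $d$; equivalently, to show that the set $\{1,2,\dots,k\}$ cannot consist entirely of divisors of $d$ once $k$ is taken of order $\log(m+n)$.

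Next I would argue by contradiction: assume every $l\in\{1,\dots,k\}$ divides $d$. Then $\mathrm{lcm}(1,\dots,k)$ divides $d$, so $\mathrm{lcm}(1,\dots,k)\le d\le m+n$. The crux is the lower bound $\mathrm{lcm}(1,\dots,k)\ge 2^{\,k-1}$ for all sufficiently large $k$ (with a crude bound for the remaining small $k$). I would obtain this by the standard Beta-integral argument: for $0<j\le k$, expanding $(1-x)^{k-j}$ termwise shows that $\mathrm{lcm}(1,\dots,k)\int_0^1 x^{j-1}(1-x)^{k-j}\,dx$ is an integer, and this integral equals $1/(j\binom{k}{j})$, so $j\binom{k}{j}$ divides $\mathrm{lcm}(1,\dots,k)$; choosing $j\approx k/2$ and using $\binom{k}{\lfloor k/2\rfloor}\ge 2^{k}/(k+1)$ yields the exponential lower bound. (Alternatively one may simply invoke Chebyshev's estimate $\log\mathrm{lcm}(1,\dots,k)=\psi(k)=\Theta(k)$.) Combining $2^{\,k-1}\le m+n$ with the choice $k=\lfloor c\log(m+n)\rfloor$ produces a contradiction as soon as $c$ exceeds $1/\log 2$ (natural logarithm); the finitely many pairs $(m,n)$ where the asymptotic $\mathrm{lcm}$ bound or the degenerate smallness of $\log(m+n)$ interferes are absorbed by enlarging $c$. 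Hence some $l\le k<c\log(m+n)$ fails to divide $d$, i.e.\ $m\not\equiv n\pmod{l}$, which is the assertion.

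The only real obstacle is the lower bound on $\mathrm{lcm}(1,\dots,k)$; the rest is bookkeeping. Within the self-contained route the fiddly points are the parity/rounding in picking $j$ near $k/2$ and converting $\binom{k}{\lfloor k/2\rfloor}\ge 2^{k}/(k+1)$ into a clean exponential bound, together with pinning down the threshold beyond which the estimate holds so that an explicit $c$ can be named. If one is content to quote Chebyshev's bound (or the prime number theorem), the whole argument collapses to the observation that divisibility of $d$ by every element of $\{1,\dots,k\}$ forces $e^{\Omega(k)}\le d\le m+n$, hence $k=O(\log(m+n))$.
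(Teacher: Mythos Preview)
Your argument is correct and is the standard proof of this classical lemma: reduce to $l\mid d$ with $d=|m-n|\le m+n$, note that if every $l\le k$ divides $d$ then $\mathrm{lcm}(1,\dots,k)\le m+n$, and invoke an exponential lower bound on $\mathrm{lcm}(1,\dots,k)$ (via the Beta-integral trick or Chebyshev's $\psi(k)=\Theta(k)$) to force $k=O(\log(m+n))$. The only cosmetic point is the edge case $m+n\le 1$, where $\log(m+n)\le 0$ and no finite $c$ helps; but in context natural numbers are positive (the application is to strings $a^mb^n$ with $m\neq n$), so $m+n\ge 3$ and your ``absorb into $c$'' step is fine.

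As for comparison with the paper: there is nothing to compare. The paper does not prove this statement at all --- it is recorded as a \emph{Fact} with a bare citation to page~22 of Szepietowski's monograph and is used as a black box in the discussion of why $L_{m\neq n}$ lies in weak-one-way-$\mathrm{NSPACE}(\log\log n)$. Your write-up therefore supplies strictly more than the paper does.
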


In the above algorithm, after reading a symbol on the input tape,
$ \mathcal{N} $ implements some operations on the work tape by making stationary steps on the input tape.
The number of such stationary steps can be fixed to a value, say $ d_{l} $, 
depending on only the number written at the top track of the work tape ($ l > 1 $).
More specifically, we can use the following strategy:
\begin{itemize}
	\item the work tape head is always placed on the leftmost nonblank symbol before reading the next symbol
		on the input tape and
	\item the operations on the work tape can be completed by alternating the work tape head,
		which operates with the speed of one square per step,
		between the leftmost and rightmost nonblank symbols $ 2 d_{1} > 0 $ times.
\end{itemize}
By selecting $ d_{1} $ sufficiently bigger,
$ d_{l} $ can be set to $ 2d_{1} \lceil \log l \rceil + d_{2} $ for nondeterministic path $ \mathsf{npath}_{l} $,
where $ d_{2} \in \mathbb{Z} $ depends on how to store the numbers on the work tape.
Thus, we can guarantee that for any $ l' > l $,
the number of the stationary steps (on the input tape) 
of $ \mathsf{npath}_{l'} $ cannot be less than that of $ \mathsf{npath}_{l} $.
Let $ \mathcal{N}' $ be the 1NTM implementing the above algorithm
and $ L_{m \neq n\mbox{-}\kappa} $ be the language recognized by $ \mathcal{N}_{\kappa}' $. 
\begin{lemma}
	$ L_{m \neq n \mbox{-}\kappa} $ is a nonregular language and is 
	recognized by RT-NTM $ \mathcal{N}'_{\kappa} $ in weak-space $ O(\log\log(n)) $.
\end{lemma}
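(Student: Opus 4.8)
The plan is to verify the two claims in the Lemma separately, both following the pattern already established in the preceding theorems and corollaries of this chapter. First I would confirm that $ L_{m \neq n\text{-}\kappa} $ is nonregular: since $ h_{\kappa}(L_{m \neq n\text{-}\kappa}) = L_{m \neq n} $ by the Remark preceding the statement, and $ L_{m \neq n} $ is a well-known nonregular language (closed under homomorphism would force regularity of $ L_{m\neq n} $, a contradiction), the nonregularity is immediate from the Remark stating that if $ L $ is not in a class closed under homomorphism, then neither is $ L_{\text{-}\kappa} $. So this half is essentially a one-line invocation of the machinery already built.

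The substance is the space bound. I would start from the 1NTM $ \mathcal{N}' $ described in the paragraph just before the Lemma — the modified version of Szepietowski's $ \mathcal{N} $ in which each nondeterministic path $ \mathsf{npath}_{l} $ uses exactly $ d_{l} = 2d_{1}\lceil \log l\rceil + d_{2} $ stationary steps on the input tape per input symbol, with the key monotonicity property that $ l' > l $ implies $ d_{l'} \ge d_{l} $. This monotonicity is exactly what the $ \kappa $-padding construction needs: when $ \mathcal{N}'_{\kappa} $ reads a block of $ \kappa $'s, it uses those $ \kappa $'s to ``pay for'' the stationary steps of $ \mathcal{N}' $, rejecting in the missing-$ \kappa $ case if a block is too short. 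The weak-space analysis then says: for an accepted string $ w_{\kappa} $ with $ h_{\kappa}(w_{\kappa}) = a^{m}b^{n} \in L_{m\neq n} $, there is an accepting computation of $ \mathcal{N}' $ on $ a^{m}b^{n} $ using some path $ \mathsf{npath}_{l} $ with $ l < c\log(m+n) $ (by the cited Fact), hence using $ \Theta(\log l) = O(\log\log(m+n)) $ work-tape space; the corresponding accepting computation of $ \mathcal{N}'_{\kappa} $ on $ w_{\kappa} $ uses the same work-tape space, which is $ O(\log\log(|w_{\kappa}|)) $ since $ m + n \le |w_{\kappa}| $. This is the direct analogue of the strong-space and middle-space transfer theorems proved just above, specialized to the weak-space (nondeterministic) setting.

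The main obstacle — and the point that needs genuine care rather than a routine invocation — is ensuring that the $ \kappa $-blocks in an accepted string can actually be chosen consistently with a \emph{single} nondeterministic choice of $ l $. Because $ \mathcal{N}'_{\kappa} $ fixes the per-symbol stationary count to $ d_{l} $ which depends on $ l $, a string $ w_{\kappa} $ is accepted via path $ l $ only if every $ \kappa $-block has length \emph{at least} $ d_{l} $ at the appropriate point; the monotonicity guarantees that if $ w_{\kappa} $ works for some small $ l $ it need not work for a larger one, so I must argue that $ L_{m\neq n\text{-}\kappa} $ is precisely $ \{ w_{\kappa} : h_{\kappa}(w_{\kappa}) \in L_{m\neq n} \text{ and the } \kappa\text{-blocks are long enough for some valid } l \} $, and that this does not accidentally exclude all padded versions of some $ a^{m}b^{n} \in L_{m\neq n} $. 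The resolution is that we are free to pad generously: for any $ a^{m}b^{n} \in L_{m\neq n} $ we can insert $ \kappa $-blocks of length $ d_{l_{0}} $ where $ l_{0} $ is the witness $ l < c\log(m+n) $, producing a member of $ L_{m\neq n\text{-}\kappa} $ whose accepting computation uses only $ O(\log l_{0}) = O(\log\log n) $ space. I would also double-check the bookkeeping about the work-tape head being kept on the leftmost nonblank square between input reads, so that the $ O(\log\log) $ bound on the middle/bottom tracks genuinely holds throughout, and confirm that the "weak" notion of space (space on an accepting path for accepted inputs) is the right one here, since on rejected or badly-padded inputs the machine may wander more.
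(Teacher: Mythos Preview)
Your nonregularity argument is fine and matches the paper's. You also correctly identify the crux of the space argument: that the $\kappa$-blocks in an accepted $w_\kappa$ must be long enough for the chosen nondeterministic path, and that the monotonicity $l \le l' \Rightarrow d_l \le d_{l'}$ is what makes this work. But your resolution of this obstacle is aimed at the wrong target.

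Weak-space requires that for \emph{every} accepted input $w_\kappa \in L_{m\neq n\text{-}\kappa}$ there exists an accepting path using $O(\log\log|w_\kappa|)$ space. Your ``resolution'' instead argues that for every $a^m b^n \in L_{m\neq n}$ there exists \emph{some} generously padded version accepted with small space. That is a different and irrelevant statement; it says nothing about an arbitrary given $w_\kappa$ in the language, which may have been padded far less generously. You even state monotonicity in the unhelpful direction (``if $w_\kappa$ works for some small $l$ it need not work for a larger one''). The correct use runs the other way, and this is what the paper does: given an arbitrary $w_\kappa \in L_{m\neq n\text{-}\kappa}$ with $h_\kappa(w_\kappa)=a^m b^n$, it is accepted via \emph{some} path $\mathsf{npath}_{l'}$, so its $\kappa$-blocks suffice for $d_{l'}$. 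Now take the \emph{smallest} witness $l$ with $m \not\equiv n \pmod l$; then $l \le l'$, hence $d_l \le d_{l'}$, hence those same $\kappa$-blocks already suffice for $\mathsf{npath}_l$, which therefore also accepts $w_\kappa$ and uses space $O(\log l) = O(\log\log(m+n)) \le O(\log\log|w_\kappa|)$.
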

\begin{proof}
	$ L_{m \neq n \mbox{-}\kappa} $ is a nonregular language due to the fact that
	$ L_{m \neq n} $ is nonregular and REG is closed under homomorphism.
	
	Let $ w = a^{m}b^{n} $ be a member of $ L_{m \neq n} $ and $ l $ be the smallest
	number satisfying that $ m \not\equiv n \mod (l) $. 
	Then, for all $ w_{\kappa} \in L_{m \neq n \mbox{-}\kappa} $ satisfying $ h_{\kappa} (w_{\kappa}) = w $,
	the nondeterministic path of $ \mathcal{N}_{\kappa}' $ corresponding to $ l $ always accepts the computation.
\end{proof}

\chapter{RELATED WORK} \label{rwork}

In this chapter, we present many results not classified as a separate section.
In most cases, we give the sketch of the proofs.

\section{Counter and Pushdown Automata} \label{rwork:ca}

Let $ L_{ijk} = \{ a^{i}b^{j}c^{k} \mid i \neq j, i \neq k, j \neq k, 0 \leq i,j,k \} $ be
a language over alphabet $ \Sigma=\{a,b,c\} $.

\begin{theorem}
	$ L_{ijk} $ is in S$ ^{\neq}_{\mathbb{Q}} $ (NQAL).
\end{theorem}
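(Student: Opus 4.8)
The plan is to show that $L_{ijk} \in \mathrm{S}^{\neq}_{\mathbb{Q}}$ by exhibiting a rational GFA $\mathcal{G}$ and a cutpoint $\lambda$ such that $w \in L_{ijk}$ if and only if $f_{\mathcal{G}}(w) \neq \lambda$; the conclusion $L_{ijk} \in \mathrm{NQAL}$ then follows immediately from Corollary~\ref{corollary:SLneq-equal-NQAL} (equivalently, from Lemmas~\ref{lemma:Sneq-GPFAonesidedcutpoint0} and~\ref{lemma:S-QL_0}) together with the observation that the construction uses only rational entries. First I would dispose of the ``syntax'' part: whether the input is of the form $a^*b^*c^*$ is a regular condition, and since $\mathrm{S}^{\neq}$ is closed under intersection with a regular language (this is a special case of Theorem~\ref{theorem:close-difference-with-regular-language}, or can be built directly into the automaton by a dead/accept gadget), it suffices to handle inputs already known to be of that form and worry only about the three inequalities $i \neq j$, $i \neq k$, $j \neq k$.

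The core idea is the standard ``counting'' GFA trick used throughout the thesis (cf. the proof of Lemma~\ref{lemma:S-QL_0} and the polynomial-language constructions): encode $i$, $j$, $k$ into affine functions of matrix powers. Concretely, for a fixed real parameter $x \in (0,1)$ one can build a small GFA whose state vector, after reading $a^i b^j c^k$, has an entry proportional to a quantity like $x^{i} - x^{j}$ (and similarly $x^{i}-x^{k}$, $x^{j}-x^{k}$), using the ``shear'' matrices $\left(\begin{smallmatrix}1 & 0\\ c & 1\end{smallmatrix}\right)$ to accumulate differences of exponents and a diagonal scaling to realize the $x^{(\cdot)}$ factors. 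The vanishing of such an entry is exactly equivalent to the corresponding equality $i=j$ (resp. $i=k$, $j=k$), provided $x$ is chosen so that no accidental cancellation occurs; since we want a \emph{rational} GFA we simply pick $x$ rational, and for rational $x \in (0,1)$ the map $n \mapsto x^n$ is injective, so $x^i = x^j \iff i=j$. To combine the three tests into one, I would run the three sub-automata in parallel (direct sum) and take the acceptance value to be a weighted sum $\alpha_1 (x^i-x^j)^2 + \alpha_2(x^i-x^k)^2 + \alpha_3(x^j-x^k)^2$ with positive rational weights $\alpha_t$; this sum is a polynomial in the matrix entries, hence computable by a GFA (squares and products of coordinates can be obtained by a tensor-product construction exactly as in Lemma~\ref{qtm:lem:RT-QFA-to-RT-GFA} and the polynomial-language arguments), and it equals $0$ precisely when all three equalities hold, i.e. when $w \notin L_{ijk}$ (among strings of the correct form), and is strictly positive otherwise. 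Setting the cutpoint to $0$ then gives recognition with one-sided cutpoint $0$, which by Lemma~\ref{lemma:Sneq-GPFAonesidedcutpoint0} places $L_{ijk}$ in $\mathrm{S}^{\neq}$; keeping all constants rational places it in $\mathrm{S}^{\neq}_{\mathbb{Q}}$, and Corollary~\ref{corollary:SLneq-equal-NQAL} yields $L_{ijk} \in \mathrm{NQAL}$.

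The main obstacle I anticipate is bookkeeping rather than conceptual: one must verify that a single fixed GFA (fixed dimension, fixed matrices, independent of the input) can simultaneously (a) perform the regular syntax check, (b) maintain the three difference-counters across the \emph{three} distinct symbol blocks in the right order, and (c) expose the quadratic combination as a linear functional of an enlarged (tensored) state, all while the unread blocks leave the relevant coordinates untouched — e.g. reading $c$'s must not disturb the already-frozen value $x^i - x^j$. This is routine but error-prone; the cleanest route is probably to build the three tests as separate small rational GFAs, invoke closure of $\mathrm{S}^{\neq}$ under intersection (Fact~\ref{fact:closure-properties-Seq-Sneq}, item~1) to combine ``$i\neq j$'', ``$i \neq k$'', ``$j \neq k$'' into one $\mathrm{S}^{\neq}$ language, and finally intersect with the regular language $a^*b^*c^*$ using Theorem~\ref{theorem:close-difference-with-regular-language}. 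A secondary point to check is that each individual inequality language, such as $\{a^i b^j c^k : i \neq j\} \cap a^*b^*c^*$, is itself in $\mathrm{S}^{\neq}_{\mathbb{Q}}$ — but this is essentially $L_{neq}$-style and follows from the known rational constructions for $\{a^m b^n : m \neq n\}$ (used already for $L_{neq}$), extended to ignore a trailing $c$-block.
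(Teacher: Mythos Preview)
Your primary construction contains a genuine logical error. You propose the acceptance value
\[
\alpha_1 (x^i-x^j)^2 + \alpha_2(x^i-x^k)^2 + \alpha_3(x^j-x^k)^2
\]
and claim it ``equals $0$ precisely when all three equalities hold, i.e.\ when $w \notin L_{ijk}$.'' But these two conditions are not the same: among strings of the form $a^ib^jc^k$, membership in $L_{ijk}$ requires \emph{all three} inequalities, so $w \notin L_{ijk}$ means \emph{at least one} equality holds. A sum of nonnegative squares vanishes only when \emph{every} summand vanishes, i.e.\ only when $i=j=k$. Your GFA therefore recognizes $\{a^ib^jc^k : \text{not } i=j=k\}$ with one-sided cutpoint~$0$, not $L_{ijk}$.

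Your fallback route via closure of $S^{\neq}$ under intersection (Fact~\ref{fact:closure-properties-Seq-Sneq}) is correct and rescues the argument, and the Paz construction underlying that closure does preserve rationality, so you still land in $S^{\neq}_{\mathbb{Q}}$. The paper, however, does it more directly and more simply. First, there is no need for the exponential encoding $x^i-x^j$: a GFA's acceptance value is an arbitrary real, so one can compute the integer $|w|_a-|w|_b$ itself with a two-dimensional GFA (increment on~$a$, decrement on~$b$, ignore~$c$). Second, to combine the three tests the paper takes the \emph{product} of squares,
\[
(|w|_a-|w|_b)^2\,(|w|_a-|w|_c)^2\,(|w|_b-|w|_c)^2,
\]
realized by tensoring the three small GFAs. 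This product vanishes iff at least one factor does, i.e.\ iff $w\notin L_{ijk}$, and is a strictly positive integer otherwise --- exactly the one-sided cutpoint~$0$ behaviour needed for $S^{\neq}_{\mathbb{Q}}$. Replacing your sum by a product (and your $x^i-x^j$ by $i-j$) fixes the argument and recovers the paper's proof.
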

\begin{proof}
	(Sketch)
	Let $ w $ be an input string of the form $ a^{*}b^{*}c^{*} $.
	We can design a GFA to calculate the value of $ (|w|_{a}-|w|_{b}) $.
	Thus, we can also construct a GFA to calculate the value of 
	\begin{equation}
		(|w|_{a}-|w|_{b})^{2} (|w|_{a}-|w|_{c})^{2} (|w|_{b}-|w|_{c})^{2}.
	\end{equation}
	This value is a positive integer if $ w $ is a member of $ L_{ijk} $ and
	it is zero if $ w $ is not a member of $ L_{ijk} $.
\end{proof}
Since $ L_{ijk} $ is also a member of RT-NQSTACK(1) and not in CFL, we can obtain the following corollary.

\begin{corollary} For any $ s \in O(n) $,
	\begin{equation}
		\mbox{one-way-NSTACK($ s $)} \subsetneq \mbox{one-way-NQSTACK($ s $)}
	\end{equation}
	and
	\begin{equation}
		\mbox{RT-NSTACK($ s $)} \subsetneq \mbox{RT-NQSTACK($ s $)}.
	\end{equation}
\end{corollary}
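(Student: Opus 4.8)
The plan is to establish $L_{ijk} \in \mathrm{S}^{\neq}_{\mathbb{Q}}$ by constructing an explicit rational GFA whose acceptance value, evaluated at the cutpoint $0$, separates members from non-members exactly in the "exclusive" sense. First I would reduce to the case where the input has the shape $a^{*}b^{*}c^{*}$: a standard finite-state syntax check (built into the GFA by a few extra states whose contribution forces the acceptance value to some fixed nonzero constant on ill-formed strings, so that those strings are never at the cutpoint) handles everything outside this regular set, so I only need to worry about inputs $w = a^{i}b^{j}c^{k}$. The core observation is the one used repeatedly in the excerpt (e.g. in the treatment of $L_{neq}$, $L_{eq}$, and $L_{ijk}$ above): a GFA can maintain an integer-valued linear function of the letter counts in its state vector, because GFAs allow arbitrary real (here rational) transition matrices. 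Concretely, I would build GFAs computing $i-j$, $i-k$, and $j-k$ as components of a product-state vector, and then take a tensor/product construction so that a single GFA $\mathcal{G}$ outputs
\begin{equation}
  f_{\mathcal{G}}(w) = (i-j)^{2}(i-k)^{2}(j-k)^{2}
\end{equation}
on inputs of the form $a^{i}b^{j}c^{k}$. Squaring is essential so that the value is a nonnegative integer regardless of sign; it equals $0$ precisely when at least one of the three pairs of exponents coincides, and is a positive integer otherwise. Hence $w \in L_{ijk} \iff f_{\mathcal{G}}(w) \neq 0$, so $L_{ijk}$ is recognized by a rational GFA with exclusive cutpoint $0$, i.e. $L_{ijk} \in \mathrm{S}^{\neq}_{\mathbb{Q}} \subseteq \mathrm{S}^{\neq}$; by Corollary~\ref{corollary:SLneq-equal-NQAL} ($\mathrm{S}^{\neq} = $ NQAL) this gives $L_{ijk} \in$ NQAL.

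The mildly technical point here is realizing "compute $i-j$ and then square the answer" as a single GFA of fixed dimension. One clean way: keep a state-vector coordinate that accumulates $i-j$ linearly (increment on $a$, decrement on $b$, unchanged on $c$), similarly for $i-k$ and $j-k$; this gives a vector $(1, i-j, i-k, j-k)$-type affine representation, which is linear once we include the constant coordinate $1$. Then the quantity $(i-j)^{2}(i-k)^{2}(j-k)^{2}$ is a polynomial of degree $6$ in the counts, and by Turakainen's construction (Fact~\ref{fact:GPFA-PFA} and the surrounding machinery, or just the closure of GFA acceptance functions under products, as already invoked in the $L_{ijk}$ argument in the text) one obtains a GFA whose final-vector inner product is exactly this polynomial. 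Equivalently, one tensors three GFAs, the $\ell$-th computing $(\,\cdot\,)^{2}$ of the relevant difference, using the identity that the product of the acceptance values of GFAs is the acceptance value of their tensor product with $v_{0}$ and $f$ tensored. All transition entries can be taken in $\{-1,0,1\}$ together with the affine bookkeeping, so rationality is immediate. I would then fold in the syntax-check states so that non-$a^{*}b^{*}c^{*}$ strings receive a fixed value like $1$ (never $0$), preserving the exclusive-cutpoint characterization on the whole of $\Sigma^{*}$.

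Next, for the corollary I would observe that $L_{ijk}$ is simultaneously recognized by a realtime nondeterministic quantum pushdown machine using only a stack of linear size — indeed $L_{ijk} \in$ RT-NQSTACK$(1)$ can be shown directly (guess which of the three inequalities is witnessed, push $a$'s, pop against $b$'s or $c$'s, and use nondeterminism/quantum acceptance to verify strict inequality), exactly as asserted in the statement; and I would invoke the classical fact that $L_{ijk} \notin$ CFL, hence $L_{ijk} \notin$ one-way-NSTACK$(O(1))$ and more generally is not recognized by a classical nondeterministic pushdown device within the relevant space bound. Combining: the trivial inclusion one-way-NSTACK$(s) \subseteq$ one-way-NQSTACK$(s)$ (and RT-NSTACK$(s) \subseteq$ RT-NQSTACK$(s)$) holds because a quantum pushdown machine can simulate its classical counterpart exactly (the analogue of Lemma~\ref{qtm:lem:pm-simulated-by-qm} for PDAs, noted to hold in the text right after that lemma), while $L_{ijk}$ witnesses strictness for every $s \in O(n)$: it lies in the quantum class with constant stack space but, being non-context-free, lies outside the classical class for any $s$ that is $O(n)$ (since CFL is exactly one-way-NSTACK with the natural linear bound, and no smaller space helps). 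I would state this last inference carefully, pinning down precisely which space bound of NSTACK CFL corresponds to, and noting that $L_{ijk} \notin$ CFL is the standard pumping-lemma fact.

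The main obstacle I anticipate is not the quantum side — that is routine given the earlier machinery — but making the classical separation airtight across the whole range $s \in O(n)$: one must be sure that "$L_{ijk} \notin$ CFL" genuinely rules out $L_{ijk}$ from one-way-NSTACK$(s)$ and RT-NSTACK$(s)$ for \emph{every} $s = O(n)$, which requires knowing that a nondeterministic one-stack machine gains nothing beyond CFL in this space regime (a known but slightly delicate normal-form/compression argument about pushdown space). If that equivalence is cited from the literature the proof is short; otherwise a brief justification that bounding the stack by $O(n)$ does not expand the recognized class beyond CFL would need to be included. Everything else — the GFA construction, rationality, the exclusive-cutpoint bookkeeping, and the appeal to $\mathrm{S}^{\neq}=$ NQAL — is mechanical.
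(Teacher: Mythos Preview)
Your approach is essentially the paper's: establish $L_{ijk}\in\mathrm{S}^{\neq}_{\mathbb{Q}}=\mathrm{NQAL}$ via the GFA that computes $(i-j)^{2}(i-k)^{2}(j-k)^{2}$, then combine ``$L_{ijk}\in$ RT-NQSTACK(1)'' with ``$L_{ijk}\notin$ CFL'' and the trivial classical-to-quantum inclusion. Two minor points are worth tightening. First, your ``direct'' RT-NQSTACK(1) sketch (``guess which of the three inequalities is witnessed, push $a$'s, pop against $b$'s or $c$'s'') is off: membership in $L_{ijk}$ requires \emph{all three} inequalities simultaneously, so guessing one to verify does not help; fortunately this detour is unnecessary, since $L_{ijk}\in\mathrm{NQAL}$ already places it in RT-NQSTACK(1) (an RT-NQFA is an RT-NQPDA that never touches its stack), which is exactly how the paper gets it. Second, the ``main obstacle'' you anticipate on the classical side is not an obstacle at all: a 1NPDA whose stack is bounded by $s(n)$ is in particular a 1NPDA, so one-way-NSTACK$(s)\subseteq$ CFL for \emph{every} $s$; hence $L_{ijk}\notin$ CFL immediately excludes it from one-way-NSTACK$(s)$ and RT-NSTACK$(s)$ for all $s$, with no normal-form or compression argument needed.
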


By using Theorem \ref{wom:thm:L-rev-bins-RT-QFA-WOM}, we have the following result.
\begin{corollary}
	$ L_{rev-bins} \in $ middle-RT-BQSTACK($ \log n $).
\end{corollary}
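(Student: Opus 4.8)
\textbf{Proof proposal for the final corollary ($L_{rev\text{-}bins} \in$ middle-RT-BQSTACK($\log n$)).}

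The plan is to observe that Theorem \ref{wom:thm:L-rev-bins-RT-QFA-WOM} already gives us almost everything we need, and the only work remaining is to reconcile the two models: the RT-QFA-WOM produced there and the RT-Q1PDA-with-a-push-only-stack (equivalently, RT-QFA-POS, i.e. 0-rev-RT-QPDA) underlying the class BQSTACK. First I would recall that the machine $\mathcal{M}$ constructed in the proof of Theorem \ref{wom:thm:L-rev-bins-RT-QFA-WOM} recognizes $L_{rev\text{-}bins}$ with negative one-sided error bounded by $\frac{3}{4}$, and that it uses only $O(\log n)$ cells of its write-only memory on accepted inputs, i.e. on members of $L_{rev\text{-}bins}$. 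Since negative one-sided error is a special case of bounded error (one boosts it below $\frac{1}{2}$ by tensoring copies, as done repeatedly elsewhere in the thesis, e.g. the amplification lemmas in Chapter \ref{wom} and Section \ref{berr:amplification}), the machine already witnesses membership of $L_{rev\text{-}bins}$ in a ``bounded-error, middle-$\log n$-space'' class; the point is to place it inside the \emph{stack}-based class rather than the WOM-based one.

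The key step is to inspect \emph{how} the machine of Theorem \ref{wom:thm:L-rev-bins-RT-QFA-WOM} uses its WOM and check that this usage is in fact push-only up to head repositioning. Recall that the construction there is built from the RT-QFA-POS $\mathcal{M}'$ for $L_{twin'} = \{(i)_2^{\rev} a (i+1)_2^{\rev} \mid i \ge 0\}$, used inside a block-checking loop: in each iteration the WOM head reverses direction so that the space used for one block is reused for the next block. Within a single iteration, $\mathcal{M}'$ only \emph{writes} symbols as it advances (that is the POS behaviour), and the ``reversal'' between iterations is a head-repositioning move, not a rewrite. So the total worktape alphabet is fixed and finite, only finitely many symbols are ever written per cell, and on members of $L_{rev\text{-}bins}$ the longest block is $(k)_2^{\rev} a (k{+}1)_2^{\rev}$, of length $\Theta(\log k) = \Theta(\log n)$. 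I would argue that one can re-implement this loop so that instead of reversing the head and reusing cells, the machine simply \emph{pushes} a fresh copy of the block onto a genuine push-only stack at each iteration, popping nothing; the only cost is that the \emph{worst-case} (strong) space becomes linear, but the \emph{middle} space — the space used on accepted inputs — is governed by what is needed to verify the last block against the second-to-last, which is $O(\log n)$, because the check for $L_{twin'}$ between two consecutive blocks only needs to hold the current block. Hence $\mathcal{M}$ can be realized as a RT-QFA-POS whose stack-height, on members of the language, is $O(\log n)$.

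The expected main obstacle is the bookkeeping in the previous paragraph: making precise that the branch structure of $\mathcal{M}$ (the equiprobable split into $\mathsf{branch}_1$ and $\mathsf{branch}_2$ that check offset-by-one blocks, one of which is mid-iteration at the end of input) still behaves correctly — in particular that no destructive interference is disturbed — when the WOM is replaced by a push-only stack, and that the space bound is genuinely ``middle'' and not accidentally strong. I would handle this by noting that the QFT steps used in $\mathcal{M}'$ (Figure \ref{fig:wom:N-way-QFT}) only compare worktape contents of paths that are at the same stage of the loop, so the push-only realization keeps each such pair of paths in agreement on the relevant prefix of the stack, and interference proceeds exactly as before. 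With that in hand, the corollary follows: $L_{rev\text{-}bins}$ is recognized by a RT-QFA-POS with bounded (negative one-sided, then amplified) error using $O(\log n)$ stack space on accepted inputs, i.e. $L_{rev\text{-}bins} \in$ middle-RT-BQSTACK($\log n$).
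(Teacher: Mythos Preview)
Your argument has a genuine gap in the space analysis. You write that after replacing the head-reversal of the WOM by fresh pushes onto a push-only stack, ``the middle space --- the space used on accepted inputs --- is governed by what is needed to verify the last block against the second-to-last, which is $O(\log n)$.'' But middle space is the \emph{total} work-tape space the machine actually uses on an accepted input, not the space required for one isolated sub-check. If your POS machine pushes a fresh block at every iteration and never pops, then on a member $a(0)_2^{\rev}a(1)_2^{\rev}a\cdots a(k)_2^{\rev}a$ of length $n$ the stack height grows to the sum of all the block lengths written, which is $\Theta(n)$, not $O(\log n)$. The head reversal in the construction of Theorem~\ref{wom:thm:L-rev-bins-RT-QFA-WOM} is precisely what lets the same $O(\log n)$ cells be overwritten and reused across iterations; discarding that reuse destroys the very space bound you are trying to establish.

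The paper itself gives no argument beyond citing Theorem~\ref{wom:thm:L-rev-bins-RT-QFA-WOM}, so there is no detailed alternative proof to compare against. To repair your route you would have to explain how a realtime quantum PDA can \emph{reclaim} stack space between iterations --- via actual pops --- while still keeping the two interfering paths in each iteration aligned (same stack height and same content on members) so that the QFT of Figure~\ref{fig:wom:N-way-QFT} still produces the required cancellation. The obvious attempts (pop the old block while reading the new one, or pop-and-push in place) leave the two paths with stacks of different heights, which kills the interference. That alignment problem is the real obstacle here, and your proposal does not address it.
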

Remember form the previous section that 
$ (i)_{2} $ is the binary representation of $ i \in \mathbb{N} $ and $ (i)^{\rev}_{2} $ is 
the reverse of the binary representation of $ i $.
\begin{equation}
	 L_{twin-rev-bins} = 
	 \{ a(0)_{2}a (1)_{2}^{\rev}a(2)_{2}a (3)_{2}^{\rev}a \cdots a (2k)_{2} a (2k+1)_{2}^{\rev} \mid k>0 \}
\end{equation}
\begin{theorem}
	$ L_{twin-rev-bins} $ can be recognized by a RT-PPDA with negative one-sided error bound $ \frac{1}{2} $.
	The machine uses $ O(\log n) $ space on its stack for the members.
\end{theorem}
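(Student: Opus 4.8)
The plan is to build a realtime probabilistic pushdown automaton (RT-PPDA) that accepts every member of $L_{twin\mbox{-}rev\mbox{-}bins}$ with probability $1$ and every nonmember with probability at most $\frac{1}{2}$, using $O(\log n)$ stack space on the members. The starting point is the RT-QFA-POS algorithm of Theorem~\ref{wom:thm:Ltwin-by-RT-QFA-POS} (for $L_{twin}$) and the looping/reversal trick of Theorem~\ref{wom:thm:L-rev-bins-RT-QFA-WOM}, but since a stack is not a quantum write-only memory, I would replace the final quantum Fourier transform by a classical one-shot coin flip that compares the two pushed strings. Concretely, first I would check, parallel to everything else, that the input has the syntactic shape $a\{0,1\}^{+}a\{0,1\}^{+}a\cdots a\{0,1\}^{+}$ (a deterministic RT check on the input head), rejecting with probability $1$ otherwise. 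Then I would split the computation into two branches with probability $\frac{1}{2}$ each; as in Theorem~\ref{wom:thm:L-rev-bins-RT-QFA-WOM}, $\mathsf{branch}_1$ enters a block-checking loop after reading the first $a$ and $\mathsf{branch}_2$ after the second $a$, so that at any moment one branch is ``inside'' a block and the other has just finished one.

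The heart of the construction is the verification of a single consecutive pair of blocks against membership in a suitable $L_{twin'}$-type language: between two $a$'s we see a block $u$ then a block $v$ (at an odd position $2k$ we want $v=(2k+1)_2^{\rev}$ to be the reverse-binary successor of $u=(2k)_2$, and at the next $a$ we want the roles reversed, matching the alternation $(\cdot)_2 / (\cdot)_2^{\rev}$ in the definition). The key point is that, because one block is the reverse of the other's binary representation and we have a LIFO stack, the natural thing to do is push $u$ symbol-by-symbol while reading it, and then, on reading $v$, \emph{pop} and compare: $v$ matches $u^{\rev}$ iff each symbol of $v$ equals the symbol popped off the stack, and the stack is emptied exactly at the end of $v$. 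To handle the ``$+1$'' (predecessor/successor) relation I would precompute, by a finite-state transducer on the input as it streams, the bit-string that $u$ ``should'' be given $v$ (binary increment in the reversed order is a left-to-right finite-state operation, so no extra stack is needed), and push that corrected string instead of $u$ itself; the comparison against $v$ then proceeds as above. Crucially, each iteration of this loop reuses the stack: the stack is pushed to height $|u|=O(\log n)$ and then fully popped, so across the whole input the stack never exceeds $O(\log n)$ squares on a member of the language (and the height stays $O(\log n)$ per block in general). Whenever the streamed syntactic/successor check fails deterministically, the branch enters a reject state that fires at $\dollar$.

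For the probabilistic decision: rather than relying on interference, I would have each branch, on reading $\dollar$ while in the ``still looping'' state, accept with probability $1$ (this is the branch interrupted by the end-marker), while the branch that has just cleanly finished its last block accepts with probability $1$ only if every one of its block-pair checks succeeded, and otherwise rejects with probability $\frac{1}{2}$ (transitioning equiprobably to an accept/reject pair). Then, if $w\in L_{twin\mbox{-}rev\mbox{-}bins}$, no branch ever detects a fault, so both branches accept and the input is accepted with probability $1$. If $w\notin L_{twin\mbox{-}rev\mbox{-}bins}$ (and is syntactically of the right shape), some consecutive block pair is bad, hence is caught by at least one of the two branches — here I would appeal to the same covering argument as in Theorem~\ref{wom:thm:L-rev-bins-RT-QFA-WOM}, that $\mathsf{branch}_1$ and $\mathsf{branch}_2$ together inspect every consecutive pair — so that branch rejects with probability $\frac{1}{2}$, giving overall acceptance probability at most $1-\frac{1}{2}\cdot\frac{1}{2}\cdot\ldots$; a small amount of care in how the two branches are weighted brings this to exactly $\le \frac{1}{2}$, establishing negative one-sided error $\frac{1}{2}$.

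\textbf{Main obstacle.} The delicate part is arranging that a \emph{single} realtime left-to-right pass with one LIFO stack can check a block pair $(u,v)$ where $v$ must equal the reverse of the binary representation of $u\pm 1$: pushing $u$, applying a finite-state binary-increment transducer ``on the fly'', and then popping against $v$ all have to be interleaved correctly within the time budget of realtime (one input symbol per step, no stationary moves). I expect the bookkeeping of \emph{which} of the two blocks in a pair is in straight order versus reversed order (this alternates along the input) — together with ensuring the stack is returned to empty precisely at each $a$ so the next iteration starts fresh — to require the most attention; the $O(\log n)$ space bound and the error analysis are then routine. A secondary, purely technical, point is making the ``interrupted branch accepts with probability $1$'' mechanism interact correctly with the rejecting transitions so that members are accepted with probability exactly $1$ (one-sidedness), which I would handle exactly as in the proof of Theorem~\ref{wom:thm:L-rev-bins-RT-QFA-WOM}.
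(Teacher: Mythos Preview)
Your overall strategy---split equiprobably into two branches, have one branch check block pairs $(1,2),(3,4),\ldots$ and the other check $(2,3),(4,5),\ldots$ deterministically with the stack, so that every consecutive pair of blocks is covered---is exactly the paper's approach (the paper's proof is a three-sentence sketch saying precisely this). Your extra detail on the push--increment--pop mechanism is fine; note in particular that because the even-indexed number $2i$ ends in $0$, the ``increment while pushing'' step is trivial for one of the two pair types.

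There is, however, a concrete bug in your error analysis. You have the branch that detects a bad pair ``reject with probability $\frac{1}{2}$ (transitioning equiprobably to an accept/reject pair)''. With that choice, a nonmember whose single bad pair is caught only by branch~$1$ is accepted with probability $\frac{1}{2}\cdot\frac{1}{2}+\frac{1}{2}\cdot 1=\frac{3}{4}$, not $\frac{1}{2}$. Your proposed repair, reweighting the branches, cannot work: if branch~$1$ has weight $p$, then a nonmember caught only by branch~$1$ is accepted with probability $1-\frac{p}{2}$, and one caught only by branch~$2$ with probability $\frac{1+p}{2}$; these cannot both be $\le\frac{1}{2}$ for any $p\in[0,1]$. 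The actual fix is simpler: have the detecting branch reject with probability~$1$---which is in fact what you already say two paragraphs earlier (``the branch enters a reject state that fires at $\dollar$''). Then the non-detecting branch contributes at most $\frac{1}{2}$ to acceptance and the bound holds. The equiprobable accept/reject trick you invoke belongs to the unbounded-error constructions elsewhere in the thesis and has no role here.

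A minor point: the ``interrupted branch accepts with probability $1$'' machinery imported from Theorem~\ref{wom:thm:L-rev-bins-RT-QFA-WOM} is harmless but unnecessary. Here the stack is emptied after every pair, so each branch knows deterministically when it has finished its assigned pairs; the leftover half-block for one of the branches can simply be skipped.
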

\begin{proof}
	(Sketch)
	The computation splits equiprobable into two branch. 
	One of the branch (resp., the other branch) 
	consecutively checks the binary numbers positioned in 1 and 2, 3 and 4, etc. (resp., 2 and 3, 4 and 5, etc.)
	If the given input string is a member of the language, then all checks succeed and the input is accepted with 
	probability 1. If it is not a member of the language, then at least one check fails and the input is 
	accepted with probability at most $ \frac{1}{2} $.
\end{proof}

\begin{theorem}
	$ L_{rev-bins} $ and $ L_{twin-rev-bins} $ 
	can be recognized by a RT-P1CA with negative one-sided unbounded error.
	The machine uses $ O(\log n) $ space on its counter for the members.
\end{theorem}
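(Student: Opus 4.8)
The plan is to reuse the two-branch structure from the proof of the preceding theorem, replacing the stack-based binary-comparison gadget by a probabilistic one-counter gadget, at the cost of weakening bounded error to unbounded error. For $L_{rev-bins}$, the machine would first check, deterministically and in its finite control, that the input has the form $a\,0\,a\,(\{0,1\}^{*}1\,a)^{+}$, so that the first block is $(0)_{2}^{\rev}$ and every later block is the reversed binary notation of some positive integer, rejecting with probability $1$ otherwise. It then flips a fair coin to decide whether to verify the consecutive-ness of the pairs $(B_{0},B_{1}),(B_{2},B_{3}),\ldots$ or of the pairs $(B_{1},B_{2}),(B_{3},B_{4}),\ldots$; in either case the chosen pairs are disjoint and can be processed one after another in realtime.

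For a single pair $(B_{j},B_{j+1})$, the key observation is that adding $1$ to $j$ only affects the low-order bits, which occupy the \emph{front} of a reversed (LSB-first) block: either $B_{j}=1^{m}$ and $B_{j+1}=0^{m}1$, or $B_{j}=1^{t}0\,w$ and $B_{j+1}=0^{t}1\,w$ for a common tail $w$ ending in $1$, where $t\ge 0$. The length-$t$ leading runs can be compared exactly by the counter (count the $1$'s of $B_{j}$ up, the $0$'s of $B_{j+1}$ down, test for zero), and the single separating bit is checked in the finite control; these steps introduce no error. The tails would then be compared by the classical Freivalds-style device of picking, while scanning the tail of $B_{j}$, a random position $i$ (for instance by "latching" the current symbol with probability $\tfrac12$ at each step), storing that symbol and using the counter to record how far $i$ lies from the end of the block, and then, while scanning the tail of $B_{j+1}$, using the counter to return to the same position and checking that the symbol agrees; a disagreement, or a length mismatch discovered during this count, triggers a rejection. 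On each pair the machine randomly chooses which of the two sub-checks (leading run or tail) to perform.

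If $w\in L_{rev-bins}$, then every sub-check on every chosen pair succeeds regardless of the coin flips, so $f_{\mathcal M}(w)=1$. If $w\notin L_{rev-bins}$, then, past the syntactic filter, at least one of the two partitions contains a pair violating the consecutive-ness relation; that partition is selected with probability $\tfrac12$, the offending sub-check is performed with positive probability, and (in the tail case) the discrepant position is latched with positive probability, so $f_{\mathcal M}(w)<1$. Hence $L_{rev-bins}$ is recognized with negative one-sided unbounded error, and on members the counter only ever holds values bounded by a block length or the number of blocks, i.e.\ polynomially many, so it uses $O(\log n)$ space. The language $L_{twin-rev-bins}$ is handled in exactly the same way; the only change is that blocks now alternate between MSB-first and LSB-first notation, so in a consecutive pair the shared high-order bits sit at the front of one block and at the rear of the other, and the random-position comparison must match a symbol at distance $d$ from the start of one block with a symbol at distance $d$ from the end of the next — a trivial modification of the same gadget.

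The main obstacle will be the bookkeeping for the single-counter position-matching gadget. Since a realtime one-counter machine cannot store a block's bit string, and since consecutive pairs overlap, one must ensure that (i) the two disjoint partitions together cover every consecutive pair, (ii) each chosen pair's sub-check starts and ends within the two blocks it spans, (iii) the latching distribution assigns positive probability to every position inside a block, including the edge case $B_{j}=1^{m}$ where the tail is empty, and (iv) the counter is returned to a clean value between successive pairs so that no leftover value from one check contaminates the next. None of these points is deep, but enforcing this counter-reset discipline across the sequence of pairs is where the care is needed.
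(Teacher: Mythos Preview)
Your proposal is correct and follows essentially the same approach as the paper's (very terse) sketch: the two-branch structure covering alternating consecutive pairs, combined with a counter-based random-position comparison that plays the role of an $L_{twin}$/$L_{rev}$ check, and the observation that the increment only disturbs the LSB-first prefix so the rest is a plain string equality. The paper simply says ``by using a counter, $L_{twin}$ or $L_{rev}$ can be recognized with negative one-sided unbounded error'' and ``adding a constant is easy in reverse direction,'' whereas you have spelled out the leading-run/tail decomposition and the latching gadget; the only minor slip is the remark that the counter might hold ``the number of blocks'' --- in fact it never exceeds a block length, which is what actually gives the $O(\log n)$ bound.
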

\begin{proof}
	(Sketch)
	The proof is similar to the above one.
	By using a counter, $ L_{twin} $ or $ L_{rev} $ can be recognized by a PFA with negative
	one-sided unbounded error.
	Adding or subtracting a constant from a binary number can be easily implemented if 
	it is accessed in reverse direction.
\end{proof}
\begin{corollary}
	$ L_{rev-bins} $ and $ L_{twin-rev-bins} $ can be recognized by a RT-A1CA.
	The machine uses $ O(\log n) $ space on its counter for the members.
\end{corollary}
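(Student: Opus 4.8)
The plan is to obtain this corollary directly from the RT-P1CA result established in the preceding theorem, via the standard passage from negative one-sided unbounded-error probabilistic recognition to universal (co-nondeterministic) alternating recognition. Recall that that theorem supplies a RT-P1CA $\mathcal{P}$ recognizing each of $L_{rev-bins}$ and $L_{twin-rev-bins}$ with negative one-sided unbounded error, using $O(\log n)$ counter space on the members. By the definition of negative one-sided unbounded error, $w$ belongs to the language if and only if $f_{\mathcal{P}}^{a}(w)=1$. The key observation is that, for a realtime machine, this ``probability exactly $1$'' condition is equivalent to a purely logical all-paths condition, which is exactly what an alternating machine with only universal states computes.

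First I would make this equivalence precise. Since $\mathcal{P}$ is realtime, its input head advances one square per step, so on any input $w$ the computation tree is finite, and each leaf is reached by a path whose probability is a product of finitely many nonzero transition values and is hence positive. Therefore the total rejection probability $f_{\mathcal{P}}^{r}(w)$ is a sum of positive numbers, one per rejecting leaf, and it vanishes if and only if there is no rejecting leaf at all. Consequently $f_{\mathcal{P}}^{a}(w)=1$ if and only if every computation path of $\mathcal{P}$ on $w$ is accepting, so $w\in L$ if and only if all paths accept.

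Next I would construct the RT-A1CA $\mathcal{A}$ from $\mathcal{P}$ by declaring every nonhalting state to be universal, retaining the counter updates and counter-status tests of $\mathcal{P}$ verbatim, and keeping exactly those transitions to which $\mathcal{P}$ assigns nonzero probability (the transition probabilities themselves are discarded). With all nonhalting states universal, each tree in the forest of $\mathcal{A}$ on $w$ must contain every outgoing edge at every node, so the forest consists of the single full computation tree; by the alternating acceptance rule this tree is accepting if and only if all of its leaves are accepting. Combining this with the equivalence of the previous paragraph gives that $\mathcal{A}$ accepts $w$ iff all paths accept iff $f_{\mathcal{P}}^{a}(w)=1$ iff $w\in L$, so $\mathcal{A}$ recognizes $L$ exactly.

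For the space bound I would note that each branch of $\mathcal{A}$ uses precisely the counter of the corresponding path of $\mathcal{P}$, so the counter contents, and hence the counter space, agree branch-by-branch. For a member $w$ the accepting tree of $\mathcal{A}$ is the whole computation tree, which coincides with the computation structure of $\mathcal{P}$ on $w$; since the latter uses $O(\log n)$ counter space on members, so does $\mathcal{A}$, giving the stated $O(\log n)$ (middle-space) bound. The main point to verify carefully is the ``probability one equals all-paths-accept'' equivalence of the second paragraph; this is clean here only because realtimeness forces finite computation trees with positive-probability leaves, so no measure-zero exceptional set can arise. Once that is in hand, everything else is a routine syntactic transcription of $\mathcal{P}$ into an all-universal alternating machine, and the same argument applies uniformly to both $L_{rev-bins}$ and $L_{twin-rev-bins}$.
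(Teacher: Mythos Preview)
Your proposal is correct and matches the intended reasoning: the paper states this corollary without proof, immediately after the RT-P1CA theorem, and the implicit argument is precisely the standard conversion you spell out---a realtime probabilistic machine with negative one-sided unbounded error accepts exactly when every computation path accepts, so replacing the probabilistic branching by universal branching yields an equivalent RT-A1CA with the same per-branch counter usage.
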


$ L_{\mathbb{N}} = \{baba^{2}ba^{3}b \cdots b a^{k} \mid k \in \mathbb{N} \} $ is recognized by a RT-D2CA exactly
and a RT-P1CA with negative one-sided error bound $ \frac{1}{2} $. 
The RT-D2CA uses $ O(\sqrt{n}) $ space for all strings but
the RT-P1CA uses $ O(\sqrt{n}) $ space for the members.

$ L_{twin-rev} = \{ w_{1}cw_{2}cw_{1}^{\rev}cw_{2}^{\rev} \mid w_{1},w_{2} \in \{a,b\}^{*}, |w_{1}|=|w_{2}| \} $
cannot be recognized by a 1NPDA with an auxiliary tape on which $ o(n) $ space is used \cite{Br77}.

\begin{theorem}
	$ L_{twin-rev} $ is recognized by a RT-PPDA with negative one-sided error bound $ \frac{2}{3} $.
\end{theorem}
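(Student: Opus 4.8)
The plan is to adapt the branching-and-interference template that was used in Theorem~\ref{wom:thm:Ltwin-by-RT-QFA-POS} for $L_{twin}$ and in the RT-QFA-WOM result for $L_{rev}$, now to the four-block structure of $L_{twin\mbox{-}rev}$. The target machine $\mathcal{M}$ is a realtime \emph{probabilistic} pushdown automaton, so we only have one (classical) stack and may use probabilistic branching and the usual ``accept with probability $\frac{1}{2}$ on failure'' trick rather than a quantum Fourier transform; the single-stack restriction is the reason the error bound will be $\frac{2}{3}$ rather than $\frac{1}{2}$, since we will need to branch into three equiprobable computation paths, each of which verifies one of three independent equalities between the blocks. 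First I would reject with probability $1$ any input not of the syntactic form $\{a,b\}^{*}c\{a,b\}^{*}c\{a,b\}^{*}c\{a,b\}^{*}$ with all four blocks of equal length (the length and syntax check is a regular condition that can be run in parallel with everything else; equal length of the four blocks can be maintained by a deterministic counter folded into the finite control together with the stack discipline, or simply checked by one of the branches using the stack). So in the rest of the argument we may assume the input is $w_{1}cw_{2}cu_{1}cu_{2}$ with $w_{1},w_{2},u_{1},u_{2}\in\{a,b\}^{*}$ and $|w_{1}|=|w_{2}|=|u_{1}|=|u_{2}|$, and we must decide whether $u_{1}=w_{1}^{\rev}$ and $u_{2}=w_{2}^{\rev}$.

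On the left end-marker the computation splits equiprobably into $\mathsf{path}_{1}$, $\mathsf{path}_{2}$, $\mathsf{path}_{3}$, each carrying probability $\frac{1}{3}$. Each path uses the stack to test one equality using the standard LIFO matching of a string against its reverse. Concretely: $\mathsf{path}_{1}$ pushes $w_{1}$ symbol-by-symbol while reading the first block, ignores the stack while reading $w_{2}$, and then, while reading $u_{1}$, pops one symbol per input symbol and checks that the popped symbol equals the current input symbol; at the end it checks the stack is empty (which, given the equal-length precondition, is automatic). This verifies $u_{1}=w_{1}^{\rev}$. $\mathsf{path}_{2}$ does the analogous thing with $w_{2}$ pushed during the second block and matched against $u_{2}$ during the fourth block, verifying $u_{2}=w_{2}^{\rev}$. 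For the third equality I have two design choices: either (a) let $\mathsf{path}_{3}$ verify a \emph{third} relation such as $|w_{1}|=|u_{2}|$ and $|w_{2}|=|u_{1}|$ purely with counting, or, more robustly, (b) have $\mathsf{path}_{3}$ re-verify the equal-length condition of all four blocks via the stack so that the syntactic precondition above can be checked probabilistically rather than in the finite control. The cleanest writeup is to have $\mathsf{path}_{1},\mathsf{path}_{2}$ check the two reversal equalities and $\mathsf{path}_{3}$ check that the four blocks have pairwise equal lengths (push a unary marker for each symbol of block~1, pop for each of block~2, then push for block~3, pop for block~4, verifying emptiness at the appropriate points); a string survives all three only if it lies in $L_{twin\mbox{-}rev}$.

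The decision rule: whenever a path's check \emph{fails} (a mismatch, or a stack-emptiness violation), that path rejects with probability $1$ on the remainder of the input; whenever a path's check \emph{succeeds}, that path accepts with probability $1$ at the right end-marker. Then for $w\in L_{twin\mbox{-}rev}$ all three paths succeed and $f_{\mathcal{M}}^{a}(w)=1$, so the error is negative one-sided; for $w\notin L_{twin\mbox{-}rev}$ at least one of the three independent conditions fails, so at least one of the three equiprobable paths rejects, giving $f_{\mathcal{M}}^{r}(w)\ge\frac{1}{3}$, i.e.\ $f_{\mathcal{M}}^{a}(w)\le\frac{2}{3}$, which is exactly the claimed negative one-sided error bound $\frac{2}{3}$. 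One detail to be careful about: a realtime PPDA must consume one input symbol per step and cannot pause, so the pushing, popping, and matching must each be arranged to happen at the rate of one stack operation per input symbol; this is standard for palindrome-style matching and is why the equal-length precondition matters (it guarantees the push phase and the subsequent pop phase of each path line up exactly with block boundaries). I expect the main obstacle to be bookkeeping rather than conceptual: getting all three paths to interleave their stack disciplines correctly with the block boundaries marked by the three $c$'s, and making sure the ``reject on failure'' sub-automaton is genuinely realtime and leaves the machine well-defined (total transition probabilities summing to~$1$) on every configuration. Since the construction is routine once the path assignment is fixed, I would present it as a sketch, exhibiting the three path specifications and then the two-line error analysis above.
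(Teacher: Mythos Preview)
Your approach is essentially the paper's: branch equiprobably into three paths, each implementing one deterministic RT-DPDA check, accept iff the chosen check succeeds; members pass all three, nonmembers fail at least one, yielding the $\frac{2}{3}$ bound.

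One point to clean up: you cannot ``reject with probability $1$ any input not \ldots\ with all four blocks of equal length'' as a precondition, because equal block length is not regular and a realtime probabilistic PDA has only its single stack to detect it. Your own parenthetical acknowledges this and delegates the length check to $\mathsf{path}_{3}$, which is the right move --- but then $\mathsf{path}_{1}$ and $\mathsf{path}_{2}$ cannot assume the precondition, so the remark that the stack-emptiness test after matching ``is automatic'' is unjustified. Simply have $\mathsf{path}_{1}$ reject if the stack is nonempty after reading $u_{1}$ (or if it empties early), and likewise for $\mathsf{path}_{2}$ with $u_{2}$; these are standard RT-DPDA checks and need no length precondition. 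With that fix your three checks ($u_{1}=w_{1}^{\rev}$, $u_{2}=w_{2}^{\rev}$, $|w_{1}|=|w_{2}|$) jointly characterize membership, and the error analysis goes through exactly as you wrote.
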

\begin{proof}
	(Sketch)
	The members of the language mainly have three deterministic checks, each of which can be implemented by
	a RT-DPDA.
	Therefore, each deterministic check is selected with probability $ \frac{1}{3} $.
\end{proof}
\begin{theorem}
	$ L_{twin-rev} $ is recognized by a RT-QPDA with negative one-sided error bound $ \frac{5}{8} $.
\end{theorem}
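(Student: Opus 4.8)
The plan is to refine the three‑check strategy behind the RT-PPDA version, replacing a uniform ternary split by a binary split in which one of the two branches uses quantum interference to discharge two of the three verification obligations at once. Writing an input of the required syntactic shape as $ u_{1}cu_{2}cu_{3}cu_{4} $ with $ u_{i}\in\{a,b\}^{*} $, membership in $ L_{twin-rev} $ is the conjunction of: (a) the string really has this form (exactly three $ c $'s, blocks over $ \{a,b\} $); (b) $ |u_{1}|=|u_{2}| $; (c) $ u_{3}=u_{1}^{\rev} $; and (d) $ u_{4}=u_{2}^{\rev} $. Each of (b), (c), (d) is decidable by a realtime \emph{deterministic} pushdown automaton on its own — (b) by pushing on $ u_{1} $ and popping on $ u_{2} $; (c) by pushing $ u_{1} $, leaving the stack untouched on $ u_{2} $, and popping while comparing on $ u_{3} $; (d) analogously with $ u_{2} $ and $ u_{4} $ — and (a) can be tracked in the finite control alongside any of them. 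All the stack moves involved (push, pop, pop‑and‑push) are single‑step PDA operations, so these are genuine realtime branches. A uniform $ \frac{1}{3} $‑split over the three checks is what gives the classical bound $ \frac{2}{3} $.

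First I would split, on $ \cent $, into $ \mathsf{path_{1}} $ with amplitude $ \frac{1}{\sqrt{2}} $ and two further paths $ \mathsf{path_{2}},\mathsf{path_{3}} $ each with amplitude $ \frac{1}{2} $. The path $ \mathsf{path_{1}} $ runs the deterministic check of (a) and (b) and accepts iff both hold. The path $ \mathsf{path_{2}} $ (resp.\ $ \mathsf{path_{3}} $) runs the deterministic check of (a) and (c) (resp.\ (a) and (d)); a violation of (a), (c) or (d) sends the corresponding path into a rejecting configuration. The point of the design is that whenever $ \mathsf{path_{2}} $ and $ \mathsf{path_{3}} $ do \emph{not} reject, they reach the right end‑marker with the stack reduced to $ \vdash $, so on $ \dollar $ one applies a $ 2 $‑way quantum Fourier transform sending their two (distinct) internal states to a common accepting configuration and a common rejecting configuration, exactly as in the proof of Theorem~\ref{wom:thm:Ltwin-by-RT-QFA-POS}: if both survive, the rejecting components cancel and the input is accepted with the full weight $ 2\cdot(\frac{1}{2})^{2}=\frac{1}{2} $ carried by this branch; if only one survives, the transform splits its weight $ \frac{1}{4} $ evenly between acceptance and rejection. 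As usual, the unspecified transitions are completed so that the induced operators are well‑formed.

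The accounting then runs as follows. If $ w\in L_{twin-rev} $, all of (a)--(d) hold, $ \mathsf{path_{1}} $ accepts, and $ \mathsf{path_{2}},\mathsf{path_{3}} $ survive and interfere constructively, for overall acceptance probability $ \frac{1}{2}+\frac{1}{2}=1 $. If $ w\notin L_{twin-rev} $: a failure of (a) makes every path reject; a failure of (b) alone costs $ \mathsf{path_{1}} $ but still lets the interference branch accept with probability $ \frac{1}{2} $; a failure of both (c) and (d) kills the interference branch, leaving at most $ \frac{1}{2} $ from $ \mathsf{path_{1}} $; and the worst case is failure of exactly one of (c), (d) with (a), (b) holding — then $ \mathsf{path_{1}} $ contributes $ \frac{1}{2} $, one of $ \mathsf{path_{2}},\mathsf{path_{3}} $ rejects, and the surviving one contributes $ \frac{1}{8} $ through the QFT, for a total of $ \frac{5}{8} $. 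Hence non‑members are accepted with probability at most $ \frac{5}{8} $, so the error is negative one‑sided with bound $ \frac{5}{8} $.

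The main obstacle I anticipate is not the amplitude bookkeeping but the stack management required to make the interference step sound: the deterministic PDA programs for (c) and (d) must be arranged so that every surviving branch terminates in a configuration identical up to internal state (in particular with the stack back to $ \vdash $, which also forces $ |u_{1}|=|u_{3}| $ and $ |u_{2}|=|u_{4}| $), since otherwise the Fourier transform on $ \dollar $ would not produce the needed cancellation. A secondary point worth stating explicitly is \emph{why} one cannot do better this way: no single realtime deterministic pushdown branch can verify two of (b), (c), (d) simultaneously, because each monopolises the lone stack over overlapping blocks, so a binary split with one interference branch cannot beat the $ \frac{5}{8} $ obtained here.
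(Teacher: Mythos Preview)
Your argument is correct and reaches the same bound, but the decomposition is genuinely different from the one in the paper. The paper first makes an \emph{equiprobable} (classical) split into two branches and then, inside each branch, an equal-amplitude quantum split into two paths that are later recombined by a $2$-way QFT; in branch~1 the two paths check $|w_{1}|=|w_{2}|$ and $w_{1}=w_{3}^{\rev}$, and in branch~2 they check $|w_{3}|=|w_{4}|$ and $w_{2}=w_{4}^{\rev}$. So the paper pairs a \emph{length} check with a \emph{string} check inside each QFT, using four paths in total. You instead stay in a single pure superposition, give the length check its own amplitude $\tfrac{1}{\sqrt{2}}$ with no interference at all, and use the QFT to couple the \emph{two string checks} $u_{3}=u_{1}^{\rev}$ and $u_{4}=u_{2}^{\rev}$. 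The worst case coincides ($\tfrac{1}{2}+\tfrac{1}{8}=\tfrac{5}{8}$), but your construction is leaner (three paths rather than four, and only one QFT) while the paper's is more symmetric and keeps the two probabilistic branches completely non-interacting. One point you should make explicit for well-formedness is that $\mathsf{path_{1}}$'s accept/reject targets on $\dollar$ are chosen orthogonal to the QFT range states of $\mathsf{path_{2}},\mathsf{path_{3}}$; you hint at this implicitly but it is exactly what prevents unwanted interference when all three surviving paths happen to have the stack back at $\vdash$.
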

\begin{proof}
	(Sketch)
	The computation split equiprobably into two branches, say $ \mathsf{branch}_{1} $ and $ \mathsf{branch}_{2} $.
	Suppose the input is of the form $ w_{1} c w_{2} c w_{3} c w_{4} $.
	$ \mathsf{branch}_{1} $ (resp., $ \mathsf{branch}_{2} $) splits into two paths, say
	$ \mathsf{path}_{11} $ and $ \mathsf{path}_{12} $ (resp., $ \mathsf{path}_{21} $ and $ \mathsf{path}_{22} $)
	with equal amplitude.
	
	$ \mathsf{path}_{11} $ (resp., $ \mathsf{path}_{21} $) checks the length of $ w_{1} $ and $ w_{2} $
	(resp., $ w_{3} $ and $ w_{4} $). If fails, the input is rejected.
	
	$ \mathsf{path}_{12} $ (resp., $ \mathsf{path}_{22} $) checks the equality of $ w_{1} $ and $ w_{3}^{r} $
	(resp., $ w_{2} $ and $ w_{4}^{r} $). If fails, the input is rejected. 
	
	At the end of the computation, $ \mathsf{path}_{11} $ and $ \mathsf{path}_{12} $
	(resp., $ \mathsf{path}_{21} $ and $ \mathsf{path}_{22} $) makes a 2-way QFT
	with the distinguished target of an accepting case.
	
	In a branch, if both paths succeed, the input is accepted with probability 1.
	On the other hand, if a path fails, then the input is accepted with probability at most $ \frac{1}{4} $
	in the branch.
	Therefore, the overall accepting probability for nonmembers can be at most 
	$ \frac{5}{8} (=\frac{1}{2} + \frac{1}{8}) $.
\end{proof}

$ L_{ij-i \vee  j} = \{a^{i}b^{j}c^{k} \mid i \neq j, ( i=k \vee j=k ), 0 \leq i,j,k \} $
is not in CFL.

\begin{theorem}
	$ L_{ij-i \vee  j} $ can be recognized by a RT-Q1BCA with one-sided error bound $ \frac{3}{4} $.
\end{theorem}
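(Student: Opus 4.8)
The plan is to exhibit a realtime quantum one-blind-counter automaton (RT-Q1BCA) $\mathcal{M}$ recognizing $L_{ij\mbox{-}i\vee j}$ with one-sided error bound $\frac{3}{4}$, i.e.\ accepting every member with probability at least $\frac{1}{4}$ and rejecting every non-member with probability $1$. First I would arrange that, as usual, a deterministic syntax check runs in parallel on all branches to verify that the input is of the form $a^{*}b^{*}c^{*}$; any input failing this check is rejected outright. So assume from now on that $w=a^{i}b^{j}c^{k}$. Recall that a blind counter cannot be queried during the computation and must be zero at the end for the input to be accepted; so the design amounts to constructing quantum branches whose net counter changes encode the differences $i-j$, $i-k$, $j-k$, together with destructive interference that forces acceptance exactly when one of the two required equalities holds.

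The key construction: on $\cent$ the computation splits with equal amplitude into two main branches, $\mathsf{branch}_{1}$ handling the case $i=k$ and $\mathsf{branch}_{2}$ handling the case $j=k$ (these never interfere with each other, and each contributes at most half the total probability). Inside $\mathsf{branch}_{1}$, I would further split into two equal-amplitude paths $\mathsf{path}_{1}$ and $\mathsf{path}_{2}$. $\mathsf{path}_{1}$ increments the counter on every $a$ and decrements on every $c$ (leaving it alone on $b$), so its net counter value is $i-k$; $\mathsf{path}_{2}$ does exactly the same, so it also has net value $i-k$. Meanwhile, to also certify $i\neq j$, I would use the ``$m$-way QFT with blind counter'' trick from the earlier chapter on write-only memories (Figure \ref{fig:wom:N-way-QFT} and Lemma \ref{wom:lem:RT-D1BCA-by-RT-QFA-IOC}): have the two paths of $\mathsf{branch}_{1}$ also carry, in the amplitude/state, a record built so that a $2$-way QFT performed on $\dollar$ causes the two paths to interfere constructively onto an accepting state precisely when the counter of this branch is zero (i.e.\ $i=k$), and to split without interference otherwise. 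The condition $i\neq j$ is a further deterministic check that, when violated, sends the whole branch to reject. Symmetrically, $\mathsf{branch}_{2}$ uses counter increments on $b$ and decrements on $c$ so that its net value is $j-k$, with the same QFT mechanism, plus the deterministic $i\neq j$ check. A careful probability bookkeeping then shows: if $w\in L_{ij\mbox{-}i\vee j}$, at least one of the two branches interferes constructively and contributes acceptance probability $\frac{1}{4}$ (half the branch probability $\frac{1}{2}$), so the overall acceptance probability is at least $\frac{1}{4}$; if $w\notin L_{ij\mbox{-}i\vee j}$, then either $i=j$ (both branches reject deterministically) or $i\neq k$ and $j\neq k$ (no constructive interference in either branch, so each branch accepts with probability $0$ after the QFT — the distinguished target receives zero amplitude only because the non-accepting targets do not cancel), hence the total acceptance probability is $0$. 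This yields one-sided error, with the gap between the acceptance probability of a member ($\geq\frac{1}{4}$) and a non-member ($=0$) giving error bound $\frac{3}{4}=1-\frac{1}{4}$.

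The main obstacle I anticipate is coordinating the blind-counter bookkeeping with the interference mechanism: a blind counter's sign cannot be consulted mid-computation, so the check ``counter equals zero'' must be implemented entirely by the interference pattern of the QFT at $\dollar$, exactly as in Lemma \ref{wom:lem:RT-D1BCA-by-RT-QFA-IOC}, where two paths converge to the same configuration (hence interfere) iff the counter is zero. I would need to make sure that the two paths within a branch have identical counter values at the end in \emph{all} cases (not only when the branch's target equality holds) — which they do, since both are assigned the same increment pattern — while the \emph{other} data that the QFT acts on is arranged so that constructive cancellation of the reject-target amplitudes occurs only when the equality holds. Getting the amplitude normalizations right so that a member is accepted with probability exactly $\frac{1}{4}$ (rather than something smaller) and a non-member with probability exactly $0$ is the delicate part; everything else is routine, and the well-formedness of the transition function is guaranteed by writing a distinct register symbol determined by the source state on each transition, as in the proof of Lemma \ref{wom:lem:RT-D1BCA-by-RT-QFA-IOC}. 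Since $L_{ij\mbox{-}i\vee j}\notin\mathrm{CFL}$, this also separates the power of RT-Q1BCAs from that of classical (nondeterministic) pushdown machines in this setting, paralleling earlier corollaries in the chapter.
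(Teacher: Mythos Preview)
Your proposal has a fatal gap: you assert that ``the condition $i\neq j$ is a further deterministic check'', but $\{a^{i}b^{j}c^{k}\mid i\neq j\}$ is nonregular, so no finite-state mechanism can perform this check. This is precisely the hard part of the language. Without it, your two-branch construction accepts $a^{n}b^{n}c^{n}$ (where $i=j=k$, hence $w\notin L_{ij\mbox{-}i\vee j}$) with probability $1$, since both branches end with counter value $0$.

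Your inner QFT is also broken as described: you give both sub-paths of a branch the \emph{same} increment pattern, so their counter values are always equal and they always interfere. The interference trick from Lemma~\ref{wom:lem:RT-D1BCA-by-RT-QFA-IOC} works only because the paths there use \emph{different} increment schedules, making their counters coincide iff the simulated quantity is zero. With identical schedules the QFT detects nothing. You appeal to unspecified ``other data'' to rescue this, but there is none available.

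The paper's construction uses the single blind counter for \emph{both} conditions, in sequence. It splits on $\cent$ into two paths: path~1 increments on each $a$, path~2 increments on each $b$, so when the $c$-block begins path~1 holds $i$ and path~2 holds $j$. A $2$-way QFT is performed at that point with the \emph{reject} state as distinguished target. If $i=j$ the configurations coincide and the input is rejected with probability~$1$; if $i\neq j$ there is no interference and each path splits, a surviving copy of each continuing with probability $\tfrac{1}{4}$. These survivors then decrement on every $c$ and enter an accept state on $\dollar$; the blind-counter acceptance condition now forces counter $=0$, so the path-1 copy accepts iff $i=k$ and the path-2 copy iff $j=k$. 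Since $i\neq j$ at this stage, at most one holds; for members exactly one holds (acceptance probability $\tfrac{1}{4}$), for non-members none holds (acceptance probability $0$). The key idea you are missing is that the QFT is used \emph{mid-computation} to test $i\stackrel{?}{=}j$ via counter-value interference, not at $\dollar$ to test a zero condition that the blind-counter semantics already enforces.
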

\begin{proof}
	Suppose that the input is of the form $ a^{i}b^{j}c^{k} $ for some $ i,j,k > 0 $.
	The computation is split into two paths and one of them (resp., the other) write $ i $ (resp., $ j $)
	on the counter. Then, both makes a 2-way QFT with the distinguished target of an rejecting case.
	If $ i=j $, the input is rejected with probability 1. Otherwise, each path continuous the computation
	with probability $ \frac{1}{4} $. In the remaining part, both paths decrease their counter value by 1
	whenever reading a $ c $ and then enter an accepting case after reading $ \dollar $.
	Since $ i \neq j $, only the counter of a one path can be zero, which can only be occurred for
	the members of the language.
\end{proof}

\begin{theorem}
	Any language recognized by a RT-D$ k $BCA is in S$ ^{=}_{\mathbb{Q}} $.
\end{theorem}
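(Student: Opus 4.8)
The plan is to linearize the computation of a realtime deterministic $k$-blind counter automaton (RT-D$k$BCA) into a rational generalized finite automaton (GFA), and then invoke the fact that a language recognized by a rational GFA with exclusive cutpoint lies in $ \mathrm{S}^{=}_{\mathbb{Q}} $. Recall that a RT-D$k$BCA $ \mathcal{D} $ accepts an input $ w $ iff, after reading $ \tilde{w} $, the machine is in an accepting state \emph{and} all $ k $ counters hold the value $ 0 $ (this is exactly the ``blind, zero-at-end'' acceptance condition from the definition of $ k $BCA in Section~\ref{sbc:CAs}). Since the machine is realtime and deterministic, its only data are the internal state and the $ k $ integer counter values, and the head position advances by one at every step; crucially, the counter updates are independent of the counter values (blindness), so the whole counter vector is simply a running sum of fixed integer vectors determined by the current state and the scanned symbol.

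First I would encode the state-and-counter configuration as a single vector. Fix $ Q = \{q_1,\dots,q_n\} $. For a fixed modulus-free tracking, the idea is to represent the pair $(q,\bar{x})\in Q\times\mathbb{Z}^k$ by exploiting the standard trick (used already in the proof of Lemma~\ref{qtm:lem:RT-QFA-to-RT-GFA} and in Turakainen's constructions) of placing the integer counter contents into the \emph{exponents} or, more directly, by using a single real variable per counter that accumulates the signed increments. Concretely, for each $ \sigma\in\Sigma $ I would build an $ n\times n $ matrix $ A_\sigma $ over $ \mathbb{Q} $ (in fact over $ \{0,1\} $ together with a few extra bookkeeping coordinates) that simulates the deterministic state transition, while auxiliary coordinates track, for each of the $ k $ counters, a quantity that is zero exactly when that counter is zero. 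The cleanest route is the one used by Turakainen \cite{Tu69} and in Fact~\ref{fact:word-problem-one-SL}: associate to each counter a rational ``signature'' so that the product over the $ k $ counters of the squared signatures is a single rational number $ N(w) $ with $ N(w)=0 $ iff all counters vanish and $ N(w)\neq 0 $ otherwise; since all increments are fixed integers and the automaton is deterministic, $ N(w) $ is computed by a rational GFA of bounded dimension (one needs only polynomially many coordinates in $ n $ and $ k $, since each counter's contribution is itself GFA-computable and the class of GFA-computable functions is closed under the needed products and sums). Then $ w\in L(\mathcal{D}) $ iff ($ \mathcal{D} $'s simulated state is accepting) and $ N(w)=0 $; I would fold the accepting-state condition into the final vector $ f $ and arrange the construction so that the acceptance value $ f_{\mathcal{G}}(w) $ equals some fixed rational cutpoint $ \lambda $ precisely on members of $ L $ and is different from $ \lambda $ on non-members.

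Having produced a rational GFA $ \mathcal{G} $ recognizing $ L $ with exclusive cutpoint $ \lambda\in\mathbb{Q} $, I would then convert this to membership in $ \mathrm{S}^{=}_{\mathbb{Q}} $: by Fact~\ref{fact:GPFA-PFA} (Turakainen) together with the rationality of all the matrix/vector entries, the rational GFA with exclusive (equality-to-$\lambda$) cutpoint yields a rational RT-PFA with inclusive/exclusive cutpoint, which is exactly the definition of $ \mathrm{S}^{=}_{\mathbb{Q}} $ (as used in Fact~\ref{fact:SLeq-rat-subset-of-SL} and in the proof of Theorem~\ref{theorem:word-problem-is-stochastic}). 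Alternatively, one can argue directly: complement and view $ L $ as recognized with \emph{exclusive} rational cutpoint, which is the membership criterion for $ \mathrm{S}^{=}_{\mathbb{Q}} $. Either way the rationality is preserved because $ \mathcal{D} $ is deterministic, so no irrational amplitudes or probabilities ever enter.

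The main obstacle I anticipate is the bookkeeping that guarantees $ N(w)=0 \iff $ all $ k $ counters are simultaneously zero, while keeping the GFA dimension finite (independent of $ |w| $) and all entries rational. A naive encoding that stores each counter value $ x_i $ literally would require unbounded dimension; the resolution is to track a rational ``hash'' $ r^{x_i} $ or a linear combination $ \sum_j r^j c_{i,j} $ (as in Figure~\ref{wom:fig:fr-79} and Turakainen's work), but one must choose the base $ r $ so that a nonzero counter vector never accidentally hashes to zero — here, because the alphabet is finite and increments are bounded, one can pick $ r $ large enough (a fixed rational depending only on $ \mathcal{D} $) so that the only way the hash vanishes is genuine zero, and then the product-of-squares trick over the $ k $ counters collapses the ``all zero'' test to a single scalar. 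Checking that this choice is always possible, and that the resulting matrices genuinely form a GFA (i.e., the linearization commutes with the deterministic step function), is the part that requires care; everything after that is an appeal to the already-cited facts about $ \mathrm{S}^{=}_{\mathbb{Q}} $.
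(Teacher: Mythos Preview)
Your high-level plan---build a rational GFA whose acceptance value hits a fixed rational cutpoint exactly on members of $L$, then invoke the characterization of $\mathrm{S}^{=}_{\mathbb{Q}}$---is correct and is what the paper does. But your concrete encoding has two gaps.

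First, the clause ``the product over the $k$ counters of the squared signatures is $0$ iff all counters vanish'' is logically backwards: a product of nonnegative terms is zero iff \emph{some} factor is zero, so this tests whether \emph{at least one} counter is zero, not whether all are. A sum of squares would give the conjunction, but you wrote product.

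Second, the linear hash $\sum_j r^j c_{i,j}$ with a fixed rational $r$ cannot work, and your justification (``increments are bounded, pick $r$ large enough'') does not rescue it: the \emph{final} counter values $x_i$ are unbounded over $\Sigma^*$, and for any rational $r = p/q$ the equation $r x_1 + r^2 x_2 = 0$ has the nonzero integer solution $x_1 = -pm$, $x_2 = qm$ for every $m\in\mathbb{Z}$. The method in Figure~\ref{wom:fig:fr-79} that you cite succeeds only because $r$ is chosen randomly per input; a fixed rational $r$ always admits collisions.

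The paper sidesteps both issues with a one-line multiplicative encoding: let $p_1,\dots,p_k$ be distinct primes and track the single rational $c=\prod_{i} p_i^{x_i}$, multiplying by $p_i$ or $1/p_i$ on each increment or decrement of counter $i$. By unique factorization, $c=1$ iff every $x_i=0$. The GFA has one coordinate per control state, the active coordinate carries $c$, transitions are the deterministic state map composed with the appropriate scalar $\prod_i p_i^{c_i}$, and the final row selects accepting states; then $f_{\mathcal{G}}(w)=1$ iff $w\in L$, so $L\in\mathrm{S}^{=}_{\mathbb{Q}}$ with cutpoint $1$. Your $r^{x_i}$ idea was in fact one step away from this---you just needed \emph{distinct} bases per counter and the product, not a single shared $r$.
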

\begin{proof}
	(Sketch)
	We can design a GFA for any given RT-D$ k $BCA. 
	The values of the counters are represented by a single number, say $ c $.
	Initially $ c $ is set to 1. Let $ p_{i} $ be the $ i^{th} $ smallest prime, where $ 1 \le i \le k $.
	If the value of the $ i^{th} $ counter is updated by $ 1 $ (resp., $ -1 $),
	then $ c $ is multiplied by $ p_{i} $ (resp., $ \frac{1}{p_{i}} $).
	It can be easily verified that $ c=1 $ if and only if the values of all counters are zeros.
\end{proof}

\begin{lemma}
	$ L_{NH} \notin $ uS can be recognized by a RT-N$ 1 $BCA.
\end{lemma}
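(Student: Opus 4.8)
The plan is to recognize $L_{NH}$ with a realtime nondeterministic one-counter automaton that operates blindly, i.e. it never reads the status of its counter during the computation, and accepts a string only along the computation paths whose counter value returns to $0$ at the end. Recall the definition
\begin{equation}
	L_{NH} = \{a^{x}ba^{y_{1}}ba^{y_{2}}b \cdots a^{y_{t}}b \mid x,t,y_{1}, \cdots, y_{t} \in \mathbb{Z}^{+} \mbox{ and } \exists k ~ (1 \le k \le t), x=\textstyle\sum_{i=1}^{k}y_{i} \}.
\end{equation}
The governing idea is that a RT-N1BCA may use nondeterminism to \emph{guess} the index $k$ witnessing membership, and then use its single counter to \emph{verify} the arithmetic identity $x = \sum_{i=1}^{k} y_i$ without ever inspecting the counter sign; the blind acceptance convention ($0$-counter-at-end) does the final check for free.

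First I would describe the machine. While reading the leading block $a^{x}$, the automaton increments the counter once per $a$, so that after the first $b$ the counter holds $x$. It then (still deterministically on this branch, but I will fold the guessing in) processes the blocks $a^{y_1}b, a^{y_2}b, \ldots$ In a ``counting-down'' mode it decrements the counter once for each $a$ it reads. After finishing any complete block $a^{y_i}b$, it nondeterministically chooses either to continue counting down into the next block, or to declare ``$k=i$,'' switch to a ``coasting'' mode in which it no longer touches the counter, and simply scan the remaining suffix $a^{y_{i+1}}b\cdots a^{y_t}b$ to the right end-marker. A standard finite-state check running in parallel (realized by the state set, not the counter) verifies that the whole input has the syntactic shape $aa^{*}b(aa^{*}b)^{*}$, rejecting otherwise. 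The accepting states are entered at $\dollar$ only on branches that took a ``declare $k$'' decision; by the blind-counter acceptance rule such a branch contributes to acceptance exactly when the counter is $0$ there, i.e. exactly when $x - \sum_{i=1}^{k} y_i = 0$.

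Next I would argue correctness. For the forward direction, if $w \in L_{NH}$ with witness $k$, the branch that decrements through the first $k$ blocks and then declares ``$k$'' ends at $\dollar$ with counter value $x - \sum_{i=1}^{k} y_i = 0$ and in an accepting state, so $w$ is accepted. For the converse, any accepting branch must (i) pass the syntactic check, forcing $w = a^{x}ba^{y_1}b\cdots a^{y_t}b$ with all exponents positive, and (ii) have declared some $k \le t$ with counter value $x - \sum_{i=1}^{k} y_i$ equal to $0$ at the end, which is precisely the membership condition. One subtlety to handle cleanly: the counter of a blind automaton is allowed to take negative values during the computation (the excerpt's CA definition permits two-way-infinite counter tapes with a counting symbol and blanks, so ``negative'' counting is available), so decrementing past $0$ is legitimate and the only thing that matters is the final value; I would make this explicit. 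The claim $L_{NH} \notin \mathrm{uS}$ is already supplied by Narasimhan--Hunt (cited as \cite{NH71}) together with the facts of Freivalds and Karpinski quoted just before Theorem \ref{theorem:1KWQFA}, so nothing new is needed there.

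The main obstacle, and the place I would spend the most care, is making the nondeterministic ``declare $k$'' move interact correctly with the realtime restriction: the automaton reads exactly one input symbol per step and cannot pause, so the mode switch must be a genuine single-step state transition triggered while reading the $b$ that ends block $k$ (or, equivalently, while reading the first $a$ of block $k+1$), and I must check that after the switch the coasting phase still consumes the remaining input one symbol per step with no counter action and with the syntactic checker still live. A secondary point is ruling out ``spurious'' accepting branches — e.g. a branch that declares $k=t$ and then reads $\dollar$ immediately, or branches that guess wrongly and whose counter happens to be nonzero — but these are killed automatically by the blind-acceptance ($0$-at-end) convention, so the verification reduces to the arithmetic identity above. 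This should complete the proof of the lemma.
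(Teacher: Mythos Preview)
Your proposal is correct and is exactly the natural construction; the paper in fact states this lemma without proof, so there is nothing to compare against beyond the evident intended argument (guess $k$ nondeterministically, increment on the leading $a$-block, decrement through the first $k$ subsequent blocks, then coast, with a parallel regular check for the shape $aa^{*}b(aa^{*}b)^{*}$). One small remark on attribution: the claim $L_{NH}\notin\mathrm{uS}$ is not what Freivalds--Karpinski show (those are space lower bounds for PTMs); rather, it follows from the nonstochasticity of $L_{NH}$ (and of its complement) cited from \cite{NH71}, as the paper notes earlier.
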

\begin{corollary}
	The class of languages recognized by RT-D$ k $BCA is a proper subset of the class of the languages
	recognized by RT-N$ k $BCA for any $ k > 0 $.
\end{corollary}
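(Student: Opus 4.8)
The plan is to observe that one half of the claimed separation is immediate, and to reuse the language $ L_{NH} $ as the witness for the other half, pulling together two results already established in this section. Concretely, for every $ k>0 $ a RT-D$ k $BCA is by definition a RT-N$ k $BCA whose transition relation happens to be single-valued, so the class of languages recognized by RT-D$ k $BCAs is contained in the class recognized by RT-N$ k $BCAs. Thus the only thing left to prove is that this containment is strict, i.e. that there is a language recognized by a RT-N$ k $BCA but by no RT-D$ k $BCA.

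First I would take $ L_{NH} $ as the separating language. By the lemma immediately preceding the corollary, $ L_{NH} $ is recognized by a RT-N$ 1 $BCA and $ L_{NH} \notin $ uS. Since a RT-N$ 1 $BCA can be viewed as a RT-N$ k $BCA that never uses its remaining $ k-1 $ counters, $ L_{NH} $ is recognized by a RT-N$ k $BCA for every $ k \ge 1 $. Next I would rule out the deterministic side: by the theorem of this section stating that every language recognized by a RT-D$ k $BCA lies in S$ ^{=}_{\mathbb{Q}} $, together with Fact \ref{fact:SLeq-rat-subset-of-SL} (S$ ^{=}_{\mathbb{Q}} \subsetneq $ S) and the fact that uS $ = $ S $ \cup $ coS, every RT-D$ k $BCA language is in uS. Hence $ L_{NH} $, lying outside uS, cannot be recognized by any RT-D$ k $BCA, and combining the two sides gives RT-D$ k $BCA $ \subsetneq $ RT-N$ k $BCA for all $ k>0 $.

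The argument is essentially a bookkeeping assembly rather than a new construction, so there is no genuine obstacle; the only points that need a moment of care are that the GFA-simulation theorem for RT-D$ k $BCAs is stated uniformly in $ k $ (it is, and its proof using one ``prime-coded'' counter value does not depend on $ k $), and that the step S$ ^{=}_{\mathbb{Q}} \subseteq $ uS is legitimate, which it is because S$ ^{=}_{\mathbb{Q}} \subseteq $ S $ \subseteq $ uS. Everything else follows directly from the cited lemma and the trivial closure of RT-N$ k $BCAs under ignoring extra counters.
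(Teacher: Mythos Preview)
Your proposal is correct and follows exactly the intended route: the corollary in the paper has no explicit proof because it is meant to follow immediately from the preceding lemma ($L_{NH}$ is recognized by a RT-N$1$BCA and $L_{NH}\notin\mathrm{uS}$) together with the theorem that every RT-D$k$BCA language lies in $\mathrm{S}^{=}_{\mathbb{Q}}\subseteq\mathrm{uS}$. Your assembly of these pieces, including the observation that a RT-N$1$BCA is a RT-N$k$BCA with idle extra counters, matches the paper's reasoning.
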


\begin{lemma}
	$ L_{neq} $ can be recognized by a RT-N$ 1 $BCA.
\end{lemma}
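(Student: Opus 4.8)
The plan is to build a realtime nondeterministic one-blind-counter automaton $\mathcal{N}$ that recognizes $L_{neq} = \{w \in \{a,b\}^{*} \mid |w|_a \neq |w|_b\}$. Recall that a RT-N$1$BCA is a realtime one-counter automaton whose transitions do not depend on the sign of the counter, and which accepts an input only along a computation path that both reaches an accepting state and leaves the counter at zero. The key observation is that a blind counter is perfectly suited for tracking the difference $|w|_a - |w|_b$: incrementing on each $a$ and decrementing on each $b$ leaves the counter at value $|w|_a - |w|_b$ at the end of the input, which is zero exactly when $w \notin L_{neq}$. So the counter naturally recognizes the \emph{complement} of $L_{neq}$ in a deterministic, sign-blind way. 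The point of the construction is to exploit nondeterminism to flip this around.

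The approach I would take: at the very first step (reading $\cent$), $\mathcal{N}$ nondeterministically guesses whether $|w|_a > |w|_b$ or $|w|_a < |w|_b$, branching into two families of states, call them the ``$+$'' branch and the ``$-$'' branch. In the ``$+$'' branch, the machine increments the counter by $1$ on each $a$ and decrements by $1$ on each $b$, except that it also makes one extra decrement — for instance, it decrements by an additional $1$ on the $\dollar$ transition (or, equivalently, starts the counter at $-1$). Thus along the ``$+$'' branch the counter ends at $(|w|_a - |w|_b) - 1$, which is zero precisely when $|w|_a - |w|_b = 1$; more generally we want it zero when $|w|_a - |w|_b \ge 1$. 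To get the full ``$\ge 1$'' condition rather than just ``$=1$'', the ``$+$'' branch should additionally guess, at each $a$ it reads, whether to ``count'' that $a$ or ``skip'' it (skipping means not incrementing); it must count at least one $a$ beyond what is needed to balance the $b$'s. The clean way: the ``$+$'' branch, on each $a$, nondeterministically either increments the counter or leaves it unchanged, and on each $b$ decrements; on $\dollar$ it decrements once more and enters an accepting state. Then there is a path with final counter value zero iff some subset of the $a$-positions of size $|w|_b + 1$ can be chosen, i.e. iff $|w|_a \ge |w|_b + 1$. Symmetrically, the ``$-$'' branch decrements on each $b$ nondeterministically, increments on each $a$, and does one extra increment on $\dollar$, accepting iff $|w|_b \ge |w|_a + 1$. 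Taking the union of these two branches, $\mathcal{N}$ has an accepting-and-counter-zero path iff $|w|_a \neq |w|_b$, which is exactly $L_{neq}$.

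The main steps, in order, are: (i) fix the state set as the disjoint union of a ``$+$'' component and a ``$-$'' component plus a common start state; (ii) write the $\cent$-transition as a nondeterministic branch into the two components with no counter change; (iii) specify, in the ``$+$'' component, the nondeterministic $a$-transition (increment or stay), the deterministic $b$-transition (decrement), and the $\dollar$-transition (decrement, go to accept); (iv) do the mirror image for the ``$-$'' component; (v) verify the acceptance condition by the simple combinatorial argument above — a path with counter ending at zero exists in the ``$+$'' component iff $|w|_a \ge |w|_b + 1$, and likewise for ``$-$''. One should also confirm the machine stays within the RT-N$1$BCA definition: it is realtime (head moves right every step), has one counter, and the transition function never inspects the counter's status, so it is genuinely \emph{blind}; acceptance is by the standard blind-counter convention requiring the counter to be zero. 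I do not expect a serious obstacle here; the only thing to be careful about is the ``off-by-one'' bookkeeping — ensuring the extra $\pm 1$ is applied exactly once and that the nondeterministic skipping lets us reach the value $|w|_b+1$ (resp. $|w|_a+1$) but no less, so that strings with $|w|_a = |w|_b$ genuinely have no zero-counter accepting path in either branch. This is the step where a small error in the construction would break correctness, so the verification in step (v) is where I would be most careful.
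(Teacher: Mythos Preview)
Your proposal is correct and follows essentially the same approach as the paper: nondeterministically guess which symbol is in the majority, then nondeterministically skip counting some occurrences of that symbol so that the blind counter can reach zero iff the guess was right. The only cosmetic difference is that the paper enforces ``skip at least one'' implicitly (via the phrase ``without counting some $a$'s''), whereas you achieve the same effect by the extra $\pm 1$ on~$\dollar$; these are equivalent implementations of the same idea.
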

\begin{proof}
	At the beginning of the computation, the machine assumes that the number of $ a $'s is bigger than
	the number of $ b $'s, (or vice versa).
	Then, any nondeterministic path test the equality of the number of $ a $'s and the number of $ b $'s
	without counting some $ a $'s (or $ b $'s).
\end{proof}

\begin{fact}
	\cite{Gr78}
	$ L_{upal}^{*} $ cannot be recognized by any 1N$ k $BCA, where $ k>0 $.
\end{fact}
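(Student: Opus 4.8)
The plan is to argue by contradiction with a block‑pumping argument tailored to \emph{blind} counters, exploiting the fact that a blind counter cannot observe its value during the run, so that inserting or deleting a carefully chosen cycle only shifts the whole counter vector by a fixed amount that can later be cancelled by a matching deletion. So suppose some $1$N$k$BCA $M$, with state set $Q$ of size $s$, recognizes $L_{upal}^{*}$. The first thing I would do is reduce to the case in which $M$ performs no stay‑put (``$\varepsilon$''‑)moves, i.e.\ advances the head exactly one square per step. This reduction is the delicate point, because a stay‑put loop may carry a nonzero counter change and so cannot simply be excised from a computation; one honest route is to eliminate stay‑put moves outright at the cost of more states and a larger set of single‑step counter increments, another is to carry the (semilinear) description of the counter vectors reachable by stay‑put subcomputations through the pigeonhole estimates below. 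After this reduction, fix $N$ sufficiently large (say $N>2s$) and $m$ sufficiently large, where $m=C(M)$ is a bound depending only on $s$ and the maximal single‑step counter change of $M$ and, crucially, \emph{not} on $N$; and consider $w=(a^{N}b^{N})^{m}\in L_{upal}^{*}$.

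Next I would analyze a fixed accepting computation of $M$ on $w$ block by block, where ``block $i$'' refers to the input positions of the $i$‑th maximal run of $a$'s together with the following run of $b$'s. Inside the $b$‑run of block $i$, which has length $N>s$, the sequence of states $M$ passes through between consecutive $b$'s must repeat; taking the first such repetition yields a cycle that reads some number $\ell_{i}\in\{1,\dots,s\}$ of $b$'s, returns to the same state, and changes the counter vector by some $\vec{\gamma}_{i}\in\mathbb{Z}^{k}$ whose coordinates are bounded (only $\le s$ transitions are involved, once stay‑put moves have been dealt with). The pair $(\ell_{i},\vec{\gamma}_{i})$ then ranges over a finite set of size at most $C(M)$, so since $m>C(M)$ there are two distinct blocks $j\neq j'$ with $(\ell_{j},\vec{\gamma}_{j})=(\ell_{j'},\vec{\gamma}_{j'})=(\ell,\vec{\gamma})$. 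I would then modify the input: insert an extra copy of block $j$'s cycle (so block $j$'s $b$‑run becomes $b^{N+\ell}$) and delete block $j'$'s cycle (so block $j'$'s $b$‑run becomes $b^{N-\ell}$). Because the counters are blind, the altered run is still a legal run of $M$ with the same state transitions: the insertion shifts the counter vector by $+\vec{\gamma}$ from block $j$ onward, the deletion shifts it by $-\vec{\gamma}$ from block $j'$ onward, and the run still ends in the original accepting state with counter vector $\vec{0}$. Here it matters that a blind counter is allowed to go negative, so it is irrelevant whether $j<j'$ or $j'<j$, and this is precisely where ``blind'' is used.

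Finally I would verify that the resulting string
\[
t=(a^{N}b^{N})^{j-1}\,(a^{N}b^{N+\ell})\,(a^{N}b^{N})^{\,j'-j-1}\,(a^{N}b^{N-\ell})\,(a^{N}b^{N})^{m-j'}
\]
is not in $L_{upal}^{*}$. A string lies in $(a^{n}b^{n})^{*}$ iff it is empty or equals $a^{p}b^{q}u'$ with $q=p\ge 1$ and $u'\in(a^{n}b^{n})^{*}$, so membership is decided by the greedy left‑to‑right parse into $a^{n}b^{n}$ blocks; for $t$ this parse consumes the first $j-1$ genuine blocks $a^{N}b^{N}$, then reads $a^{N}$ and only $N$ of the available $N+\ell$ $b$'s of block $j$, leaving $b^{\ell}$ immediately before the next $a^{N}$, at which point no further block can begin (a block must start with an $a$). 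Hence $t\notin L_{upal}^{*}$ although $M$ accepts it, contradicting $L(M)=L_{upal}^{*}$. The main obstacle, as noted, is the $\varepsilon$/stay‑put elimination in the first step — it is what guarantees that the cycles pulled out of a $b$‑run have bounded length and bounded counter effect, and hence that the pigeonhole threshold $m=C(M)$ does not grow with $N$; the remainder is bookkeeping about runs and parses.
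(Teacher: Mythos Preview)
The paper does not prove this statement: it is labeled as a \emph{Fact} and attributed to \cite{Gr78} (Greibach), so there is no in-paper proof to compare your proposal against. Your argument is a reasonable reconstruction of the standard interchange/pumping technique for blind multicounter machines, and the overall strategy is sound: the key observation that blindness lets you freely shift the counter vector along a run (even into negative values) without affecting legality is exactly what makes the insert-in-one-block/delete-in-another trick work, and your verification that the modified string falls outside $L_{upal}^{*}$ is correct since for words in $(a^{+}b^{+})^{+}$ membership is equivalent to every maximal $a$-run matching the following $b$-run in length.

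The one genuine soft spot is the one you already flag: the elimination of stay-put moves. Your pigeonhole bound $C(M)$ on the number of distinct $(\ell_i,\vec{\gamma}_i)$ pairs requires that each $\vec{\gamma}_i$ lie in a set whose size is independent of $N$, and this is only immediate once stay-put moves are gone (or once you have argued that the counter effects of stay-put subcomputations form a fixed semilinear set that can be absorbed into the transition structure). Both routes you sketch are viable, but neither is entirely trivial, and a referee would want to see at least one of them carried out. Apart from that, the argument is complete.
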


\begin{lemma}
	$ L_{upal}^{*} $ is recognized by a RT-Q1BCA($ m $) with negative one-sided error bound
	$ \frac{1}{m} $, where $ m>1 $.
\end{lemma}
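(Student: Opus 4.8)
The plan is to adapt the increment-only counter quantum algorithm used earlier (in the proof of Lemma~\ref{wom:lem:RT-D1BCA-by-RT-QFA-IOC} and Theorem~\ref{wom:thm:RT-Q1BCA-by-RT-QFA-IOC}) so that it handles a nondeterministic branching structure corresponding to the concatenation of palindromes. First I would recall that a string $w$ belongs to $L_{upal}^{*}$ if and only if there is a way to split $w$ as $w=u_{1}u_{2}\cdots u_{r}$ with each $u_{i}\in L_{upal}=\{a^{n}b^{n}\mid n>0\}$. A classical realtime machine would nondeterministically guess the block boundaries and, within each block, verify the $a^{n}b^{n}$ pattern with a counter that is incremented on $a$'s and decremented on $b$'s, requiring the counter to hit zero exactly when a block ends. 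The obstruction is that a RT-QFA-IOC cannot decrement, so I would use the same trick as in the cited lemmas: run $m$ parallel copies of the (nondeterministic) guessing machine, where copy $j$ increments its counter by $j$ when the classical machine would increment and by $m-j+1$ when the classical machine would decrement. This guarantees that at the moment a block is guessed to end, all $m$ copies agree on the counter value if and only if the corresponding classical counter is genuinely zero, and they all disagree otherwise.

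Next I would set up the acceptance mechanism. A given nondeterministic guess of block boundaries is ``good'' only if at \emph{every} guessed boundary the true counter is zero. I would have each of the $m$ copies, at each guessed block boundary, perform an $m$-way quantum Fourier transform (as described in Figure~\ref{fig:wom:N-way-QFT}) whose distinguished target is a state that lets the computation continue into the next block; if the counter values across the $m$ copies are all equal (true counter zero) the non-distinguished targets cancel by destructive interference and the surviving branch proceeds with amplified probability, whereas if they are unequal the QFT merely spreads the amplitude among $m$ subpaths with no cancellation. At the right end-marker, a final $m$-way QFT (distinguished target an accepting state) is applied, exactly as in Lemma~\ref{wom:lem:RT-D1BCA-by-RT-QFA-IOC}. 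Branches that violate the block structure (e.g.\ a block that is not of the form $a^{+}b^{+}$, or a premature end-marker) lead to rejection with probability~$1$ within that branch.

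Then I would do the probability bookkeeping. If $w\in L_{upal}^{*}$, pick a correct decomposition; along the corresponding branch every boundary check finds the true counter at zero, so all the QFTs interfere constructively, and $w$ is accepted with probability~$1$ (this is the zero-error, i.e.\ negative one-sided, case on members). If $w\notin L_{upal}^{*}$, then \emph{every} guessed decomposition has at least one boundary at which the true counter is nonzero (or the block syntax fails); at that boundary the $m$-way QFT produces no interference, and just as in the analysis of Lemma~\ref{wom:lem:RT-D1BCA-by-RT-QFA-IOC}, only a $\frac{1}{m^{2}}$ fraction of each incoming path's probability reaches an accepting target while the rest is drained into rejection, so the accepting contribution of that branch is scaled down by $\frac{1}{m}$; summing over all branches the total acceptance probability is at most $\frac{1}{m}$. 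Hence the error on non-members is the negative one-sided bound $\frac{1}{m}$. By Lemma~\ref{wom:lem:inc-m-equal-inc-1}, the RT-QFA-IOC($m$) can be replaced by an equivalent RT-QFA-IOC when $m>2$, giving the statement as phrased.

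The main obstacle I anticipate is the interaction between the \emph{nondeterministic} (in the sense of positive one-sided unbounded error, i.e.\ quantum) guessing of block boundaries and the \emph{counter-resetting illusion}: since the IOC is monotone, the counter never actually returns to zero, so the ``counter value at the current boundary'' that must be checked is really the \emph{difference} between the current monotone counter reading and its reading at the previous boundary. I would handle this by storing, in the finite internal state, enough information so that the $m$ copies all use the \emph{same} sequence of increment values since the last boundary --- this is automatic because all copies process the same guessed boundaries --- and then the equality-of-counter-values test across copies detects exactly the event ``the signed sum of $\pm1$'s since the last boundary is zero'', which is precisely ``this block is balanced''. Making this argument airtight (and checking that the interference is not spoiled by the register symbols written to enforce well-formedness, exactly as in the earlier proofs) is the delicate part; once it is in place, the rest follows the template of Theorem~\ref{wom:thm:RT-NQ1BCA-by-RT-QFA-IOC-one-sided} and Lemma~\ref{wom:lem:RT-D1BCA-by-RT-QFA-IOC} closely.
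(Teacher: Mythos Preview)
You have targeted the wrong machine model. The lemma concerns a RT-Q1BCA($m$): a realtime quantum automaton with one \emph{blind} counter whose updates range over $\{-m,\ldots,m\}$ and which, by the definition of blind counter automata, accepts only when the final counter value is zero. It is not a RT-QFA-IOC, and importing the increment-only trick of Lemma~\ref{wom:lem:RT-D1BCA-by-RT-QFA-IOC} (having $\mathsf{path}_j$ add $j$ on $a$ and $m{-}j{+}1$ on $b$) is fatal here: after a balanced block $a^{n}b^{n}$ every path has counter value $n(m{+}1)$, and after a member $a^{n_1}b^{n_1}\cdots a^{n_r}b^{n_r}$ every path has counter value $(m{+}1)\sum_i n_i \neq 0$, so the blind-counter acceptance criterion kills every path and members are accepted with probability $0$, not $1$. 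Your closing appeal to Lemma~\ref{wom:lem:inc-m-equal-inc-1} confirms the confusion, since that lemma is about IOCs, not BCAs.

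The paper's construction exploits exactly the feature you discarded, namely that a blind counter can decrement. $\mathsf{path}_j$ adds $j$ on each $a$ and subtracts $j$ on each $b$; after a balanced block the counter is back at $0$, so members reach an accepting state with counter $0$ and are accepted with probability $1$. After an unbalanced block $a^{p}b^{q}$ the $m$ paths carry pairwise distinct counter values $(p{-}q)j$, the $m$-way QFT at the block boundary produces no interference, and at least $1-\tfrac{1}{m}$ of the probability is diverted to reject states, yielding the $\tfrac{1}{m}$ bound on non-members. There is also no need for any nondeterministic guessing of block boundaries: in a string of the form $(a^{+}b^{+})^{*}$ a block ends precisely where a $b$ is followed by an $a$ (or by $\dollar$), so the boundaries are read off the input deterministically; the machine splits into $m$ paths at the start of each such block and performs the QFT at its end.
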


\begin{proof}
	(Sketch)
	The computation is split into $ m $ paths, i.e. $ \mathsf{path}_{j} $ $ (1 \le j \le m) $,
	at the beginning of each $ a^{+}b^{+} $ block.
	In $ \mathsf{path}_{j} $, the counter value is increased (resp., decreased) by $ j $
	when reading a $ a $ (resp., $ b $).
	After finishing the block, all paths make a $ m $-way QFT, i.e.,
	(i) the computation continuous on the distinguished target if symbol $ a $ is read 
	and the input is rejected on the other targets;
	(ii) if symbol $ \dollar $ is read, the input is accepted on the distinguished target.
\end{proof}

\section{Promise Problems} \label{rwork:promise}

In this section, we use the Hadamard transform (2-way QFT) with the following setup.
Let $ v $ be a $ 2 \times 1 $ column vector and $ H $ be the Hadamard transform i.e.
\begin{equation}
	\left( \begin{array}{ll}
		\frac{1}{\sqrt{2}} & \frac{1}{\sqrt{2}} \\
		\frac{1}{\sqrt{2}} & \frac{-1}{\sqrt{2}}	
	\end{array} \right).
\end{equation}
If $ v = \left( \begin{array}{r} \frac{1}{\sqrt{2}} \\ \frac{1}{\sqrt{2}} \end{array} \right) $
or $ v = \left( \begin{array}{r} \frac{-1}{\sqrt{2}} \\ \frac{-1}{\sqrt{2}} \end{array} \right) $, then
$ v' = Hv = \left( \begin{array}{r} 1 \\ 0 \end{array} \right) $
or $ v' = Hv = \left( \begin{array}{r} -1 \\ 0 \end{array} \right) $, respectively.
On the other hand,
If $ v = \left( \begin{array}{r} \frac{-1}{\sqrt{2}} \\ \frac{1}{\sqrt{2}} \end{array} \right) $
or $ v = \left( \begin{array}{r} \frac{1}{\sqrt{2}} \\ \frac{-1}{\sqrt{2}} \end{array} \right) $, then
$ v' = Hv = \left( \begin{array}{r} 0 \\ -1 \end{array} \right) $
or $ v' = Hv = \left( \begin{array}{r} 0 \\ 1 \end{array} \right) $, respectively.

Suppose that the computation is split into two paths with equal amplitude.
In each path, a deterministic check is done and the amplitude is multiplied with $ -1 $ if the check fails.
After that, two paths make the Hadamard transformation by assuming the targets surely interfere.
Thus, we can obtain the following logical relation:
\begin{equation}
	\begin{array}{|c|c|r|r|}
		\hline
		\mathsf{path}_{1} & \mathsf{path}_{2} & \mathsf{target}_{1} & \mathsf{target}_{2} 
		\\ \hline
		1 & 1 & 1 & 0 
		\\ \hline
		0 & 0 & -1 & 0
		\\ \hline
		1 & 0 & 0 & 1
		\\ \hline
		0 & 1 & 0 & -1
		\\ \hline
	\end{array}
\end{equation}
In other words, if both paths succeed or fail, then $ \mathsf{target}_{1} $ appears with probability 1,
and, in the other cases, $ \mathsf{target}_{2} $ appears with probability 1.

Consider the strings of the form $ w_{1} c w_{2} c w_{3} c w_{4} $, where
$ w_{1 \le i \le 4} \in \{a,b\}^{*} $ and $ |w_{1}| = |w_{2}| = |w_{3}| = |w_{4}| $:
\begin{itemize}
	\item If $ w_{1} = w_{3}^{\rev} \wedge w_{2} = w_{4}^{\rev} $ or 
		$ w_{1} \neq w_{3}^{\rev} \wedge w_{2} \neq w_{4}^{\rev} $, then it is a member of language $ A $;
	\item It is a member of $ B $, otherwise.
\end{itemize}

\begin{theorem}
	A 1-rev-RT-QPDA can separate languages $ A $ and $ B $ with zero error.
\end{theorem}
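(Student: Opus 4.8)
The plan is to reuse the branch-and-Hadamard template described just above the statement, but with the input head restricted to a single reversal. The machine will first verify deterministically that the input is of the form $w_1 c w_2 c w_3 c w_4$ with $w_i \in \{a,b\}^{*}$ and all four blocks of equal length; if this fails, the promise is violated and we may answer arbitrarily, so assume it holds. At the start (on $\cent$) the computation splits with equal amplitude into two paths, $\mathsf{path}_1$ and $\mathsf{path}_2$. $\mathsf{path}_1$ will test the predicate $P_1 \equiv (w_1 = w_3^{\rev})$, and $\mathsf{path}_2$ will test $P_2 \equiv (w_2 = w_4^{\rev})$. Each path multiplies its amplitude by $-1$ iff its test fails (this sign flip is implemented by entering a distinguished internal state; since the paths never branch further before the final transform, the amplitude bookkeeping is exactly as in the logical table given above). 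On $\dollar$ the two paths perform the Hadamard transform $H$ on the two-dimensional space spanned by their state components, so that $\mathsf{target}_1$ carries all the probability when $P_1 \Leftrightarrow P_2$ (i.e.\ the string is in $A$) and $\mathsf{target}_2$ carries all the probability when $P_1 \oplus P_2$ (i.e.\ the string is in $B$); we declare $\mathsf{target}_1$ accepting and $\mathsf{target}_2$ rejecting. This yields zero error on every string satisfying the promise.

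The nontrivial point is arranging that each of $\mathsf{path}_1$ and $\mathsf{path}_2$ carries out its comparison using only one reversal of the input head, so that the whole machine is a 1-rev-RT-QPDA. For $\mathsf{path}_1$: while reading $w_1$ (before the first $c$), push $w_1$ onto the stack, so that the stack top, after the first $c$, spells $w_1^{\rev}$ reading downward; then skip $w_2$ and the second $c$ without touching the stack; then, while reading $w_3$, pop one symbol per input symbol and compare the popped symbol with the current input symbol --- a match of all symbols and an empty stack exactly when $w_3 = w_1^{\rev}$, i.e.\ $w_1 = w_3^{\rev}$; finally skip $w_4$. This uses the input head purely left-to-right, so \emph{no} reversal is needed for $\mathsf{path}_1$. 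For $\mathsf{path}_2$ the analogous plan requires comparing $w_2$ against $w_4^{\rev}$; a realtime left-to-right pass cannot do this with a stack alone, so here is where the single reversal is spent: $\mathsf{path}_2$ reads forward to the right end-marker (pushing $w_2$ during the $w_2$-block and ignoring everything else), reverses the input head at $\dollar$, and then reads $w_4$ right-to-left while popping, comparing symbol by symbol; it then continues left past the third $c$ to reach a square where the final Hadamard transform is applied. Since the underlying model is allowed to be stationary and to move in either direction (it is a 2-way pushdown with one reversal), and the two paths never interact until the Hadamard step, the well-formedness (unitarity) conditions can be met by padding the transition function in the standard way.

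The main obstacle I expect is purely bookkeeping: synchronizing the two paths so that the Hadamard transform is applied at the same configuration-step in both, given that $\mathsf{path}_1$ finishes its work as a realtime left-to-right sweep while $\mathsf{path}_2$ makes a U-turn at $\dollar$ and comes back. One clean fix is to have $\mathsf{path}_1$ also walk to the right end-marker and then walk back toward the left (doing nothing on the stack during the return), so both paths spend the same number of steps and meet at, say, the square holding the third $c$; the Hadamard is then performed there, and since both stacks are empty (or both nonempty) exactly in the promised cases, the interference is total and the error is zero. The remaining details --- that regular "syntax'' checks ($a^{*}b^{*}$ shape, equal block lengths) can be folded into the finite control and the stack without extra reversals, and that the sign-flip states can be carried through the return sweep --- are routine, and the unitarity of each $U_\sigma$ follows by the usual completion argument used throughout the excerpt.
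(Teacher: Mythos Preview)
You have misread the model. In the paper's naming convention, ``RT'' means the \emph{input} head is realtime --- it moves one square to the right at every step and never turns back --- while the prefix ``$1$-rev'' bounds the number of reversals of the \emph{stack} head (cf.\ the definition of $r$-rev-$k$CA and the identification 0-rev-RT-QPDA $=$ RT-QFA-POS). So a 1-rev-RT-QPDA has a strictly left-to-right input head and a stack that may switch from push-mode to pop-mode at most once. Your $\mathsf{path}_2$, which ``reverses the input head at $\dollar$ and reads $w_4$ right-to-left,'' is therefore illegal in this model.

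The second problem is that your reason for needing the input reversal is itself mistaken. You write that ``a realtime left-to-right pass cannot'' check $w_2=w_4^{\rev}$ with a stack alone, but it can, in exactly the way your $\mathsf{path}_1$ handles $w_1=w_3^{\rev}$: skip $w_1$, push $w_2$, skip $w_3$, then pop while reading $w_4$, comparing each popped symbol with the scanned symbol. Popping a stack holding $w_2$ yields $w_2^{\rev}$, and matching that against $w_4$ is precisely the test $w_2=w_4^{\rev}$. Both paths thus need only a single push-phase followed by a single pop-phase --- one stack reversal each --- and the input head never stops moving right.

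What the paper's one-line proof is pointing at is the reason the final Hadamard produces \emph{total} interference: since all four blocks are promised to have the same length, each path pushes $|w_i|$ symbols and later pops exactly $|w_i|$ symbols, so both stacks are empty when $\dollar$ is reached, independently of whether the comparisons succeeded. In a realtime machine both paths are always at the same input square anyway, so the only configuration component that could differ is the stack, and the equal-length promise forces those to coincide. Hence the two branches meet in identical configurations (up to internal state), the Hadamard acts on a genuine two-dimensional subspace, and the logical table you quoted applies with zero error. Your ``synchronization'' worry is therefore not about timing but about stack contents, and it is resolved by the length promise rather than by sending $\mathsf{path}_1$ on an extra backward sweep.
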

\begin{proof}
	(Sketch)
	Since the length of strings $ w_{i} $ $ (1 \le i \le 4) $ are equal, they can interfere after the Hadamard.
\end{proof}

\section{Multihead Finite Automata} \label{rwork:multihead}

The following language can be recognized by 1QFAs with 1-head and 1PFAs with 3-heads 
for any negative one-sided error bound $ \epsilon \in (0,\frac{1}{2}) $,
\begin{equation}
	\{ a^{n_{1}} b a^{n_{2}} b \cdots b a^{n_{k}} b a^{n_{k}} b \cdots b a^{n_{2}} b a^{n_{1}} \mid
	n_{i} > 0 \mbox{ for any } i \in \{1,\ldots,k\} \},
\end{equation}
where $ k \in \mathbb{Z}^{+} $.

\chapter{CONCLUSION} \label{conc}

In this thesis, we define a new kind of quantum Turing machine for space bounded computation and 
some of its restricted variants, i.e. quantum pushdown, counter, and finite automata.
Moreover, we present many results related to sublogarithmic and linear space 
both in classical and quantum computation.

As seen from Section \ref{qtm}, a common open problem for most of the space-bounded quantum computational models 
is whether their computational powers decrease or not when restricted to unidirectional ones.

In \cite{Wa98,Wa99,Wa03}, for any space-constructable $ s \in \Omega(\log n) $,
Watrous showed the equivalence of the classes PrSPACE$ _{\mathbb{X}} $($ s $) and
PrQSPACE$ _{\mathbb{X}} $($ s $), 
where $ \mathbb{X} \in \{ \mathbb{Q},\mathbb{A} \} $.
For sublogarithmic space bounds, we have obtained the results: 
\begin{equation}
	\mbox{one-way-PrSPACE($ s $)} \subsetneq \mbox{one-way-PrQSPACE($ s $)}, ~~
	\mbox{for any } s \in o(\log n)
\end{equation}
and
\begin{equation}
	\mbox{PrSPACE($ s $)} \subsetneq \mbox{PrQSPACE($ s $)}, ~~
	\mbox{for any } s \in o(\log\log n).
\end{equation}
For $ s \in \Omega(\log\log n)  \cap o(\log(n)) $,
it still remains as an open problem whether PrSPACE($ s $) is a proper subset of PrQSPACE($ s $).
It is also interesting to ask whether the results presented by Watrous \cite{Wa98,Wa99,Wa03} 
can be extended for the classes defined by computable or arbitrary real numbers amplitudes.

Some of our results for the constant space-bounded computation together with previously known ones
are shown in Figure \ref{conc:constant-space-bound},
in which the containment hierarchy is defined from bottom to top such that 
	dotted arrows indicate subset relationships and unbroken arrows represent the cases 
	where it is known that the inclusion is proper.
One natural question is which dotted arrows in the figure can be replaced by unbroken arrows.
Some other more specific questions are presented below.
\begin{openproblem}
	Is $ L_{pal} $ or $ L_{lt} $ in one-way-BQSPACE(1)?
\end{openproblem}
\begin{openproblem}
	Is there any nonstochastic language in BQSPACE(1)?
\end{openproblem}

\begin{figure}[h!]
	\begin{center}
	\fbox{
	\begin{minipage}{0.9\textwidth}
		\centering
		~~\\~~\\
		\ifx\JPicScale\undefined\def\JPicScale{1}\fi
		\unitlength \JPicScale mm
		\begin{picture}(95,110)(10,0)
		
		\scriptsize

		\put(50,8){\makebox(0,0)[l]{one-way-BPSPACE(1)}}
		\put(50,4){\makebox(0,0)[l]{= RT-BQSPACE(1)}}
		\put(50,0){\makebox(0,0)[l]{= REG}}
		
		\put(60,10){\vector(2,1){38}}
		\put(60,10){\vector(1,2){9}}
		\put(60,10){\vector(-1,1){18}}
		\put(60,10){\vector(-2,1){38}}
		
		\put(15,34){\makebox(0,0)[l]{NQAL}}
		\put(15,30){\makebox(0,0)[l]{= S$ ^{\neq} $}}
		\put(38,34){\makebox(0,0)[l]{LPostS}}
		\put(38,30){\makebox(0,0)[l]{= PostS}}
		\put(62,30){\makebox(0,0)[l]{S$ _{\mathbb{Q}}^{\neq} $ $ \cup $ S$ _{\mathbb{Q}}^{=} $}}
		\put(100,30){\makebox(0,0)[cc]{one-way-BQSPACE(1)}}
		
		\put(35,50){\makebox(0,0)[l]{BPSPACE(1)}}
		\multiput(42,36)(0,1){11}{$ . $}
		\put(42.1,47){\vector(0,1){1.5}}
		
		\put(42,36){\vector(2,1){24}}
		\put(62,50){\makebox(0,0)[l]{PostQAL}}
		\put(67,33){\vector(0,1){16}}
		
		\put(12,64){\makebox(0,0)[l]{QAL $ \cap $ coQAL}}
		\put(12,60){\makebox(0,0)[l]{= S $ \cap $ coS}}
		\put(12,56){\makebox(0,0)[l]{= UMM}}
		\put(18,36){\vector(0,1){18}}
		
		\put(62,63){\makebox(0,0)[l]{LPostQAL}}
		\put(68,53){\vector(0,1){8}}
		
		\put(38,82){\makebox(0,0)[l]{PrSPACE(1)}}
		\put(38,78){\makebox(0,0)[l]{=uQAL}}
		\multiput(21,66)(2,1){10}{$ . $}
		\put(39,75){\vector(2,1){1.5}}
		\put(44,52){\vector(0,1){24}}
		\multiput(67,66)(-2,1){10}{$ . $}
		\put(48.5,75.5){\vector(-2,1){1.5}}
		
		\put(87,78){\makebox(0,0)[l]{BQSPACE(1)}}
		\multiput(70,52)(1,1){24}{$ . $}
		\put(93.3,75){\vector(1,1){1.5}}
		\multiput(97,33)(0,1){42}{$ . $}
		\put(97.2,75){\vector(0,1){1.5}}
		
		\put(50,95){\makebox(0,0)[l]{one-way-PrQSPACE(1)}}
		\put(45,84){\vector(2,1){18}}
		
		\put(55,110){\makebox(0,0)[l]{PrQSPACE(1)}}
		\multiput(66,97)(0,1){10}{$ . $}
		\put(66.2,107){\vector(0,1){1.5}}

	\end{picture}
	~~\\~~\\
	\end{minipage}}
	\end{center}
	\caption{The relationships among classical and quantum constant space-bounded classes}
	\vskip\baselineskip
	\label{conc:constant-space-bound}
\end{figure}
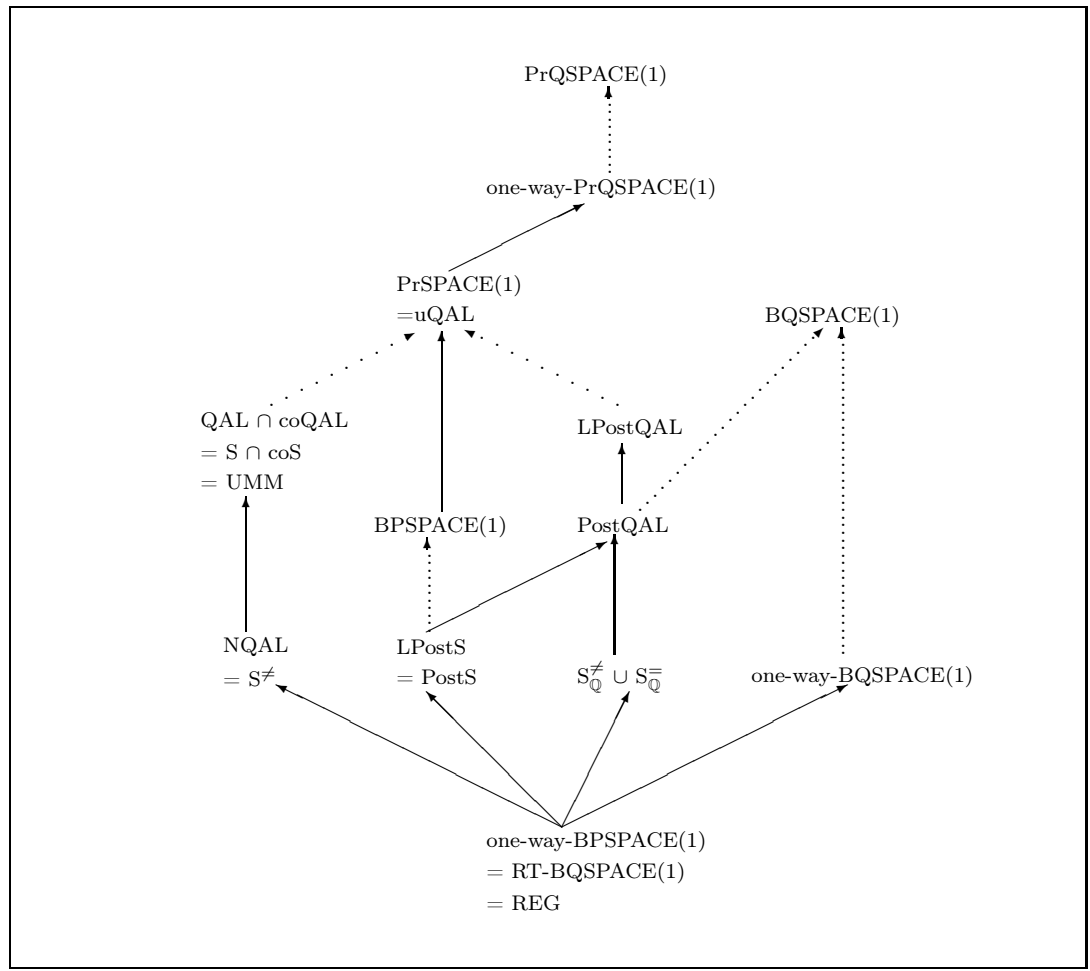 

In a more general way, the following question is also interesting.
\begin{openproblem}
	Is there any hierarchy theorem for space-bounded quantum classes?
\end{openproblem}

We also give a special emphasis to the quantum models with WOM (Chapter \ref{wom}) 
since they can be seen as good steps to understand the nature of quantum computation,
specifically the power of configuration interferences, which is impossible in classical computation.

Moreover, we also do not know of lower bounds for RT-QTMs and RT-PTMs in order to recognize a nonregular language.
The same question can also be extended for CAs and PDAs by considering the space usage
on their counters and stacks, respectively.

\appendix      

\chapter{LOCAL CONDITIONS FOR QUANTUM MACHINES} \label{well}

\section{Quantum Turing Machines} \label{well:QTMs}

We present the list of the local conditions for QTM wellformedness in below:
Let $ c_{j_{1}} $ and $ c_{j_{2}} $ be two configurations with corresponding columns 
$ v_{j_{1}} $ and $ v_{j_{2}} $ in $ \mathsf{E} $ (See Figure \ref{qtm:fiq:superoperators}).
The value of $ v_{j_{1}}[i]  $ is determined by $ \delta $ if the $
i^{th} $ entry of $ v_{j_{1}} $
corresponds to a configuration to which $ c_{j_{1}} $ can evolve in one step,
and it is zero otherwise.
Let $ x_{1} $ and $ x_{2} $ be the positions of the input tape head
and $ y_{1} $ and $ y_{2} $ be the positions of the work tape head
for the configurations $ c_{j_{1}} $ and $ c_{j_{2}} $, respectively.
In order to evolve to the same configuration in one step, the difference between
$ x_{1} $ and $ x_{2} $ and/or $ y_{1} $ and $ y_{2} $ must be at most 2.
Therefore, we obtain a total of 13 different cases, listed below, that completely
define the restrictions on the transition function.
Note that, by taking the conjugates of each summation,
we handle the symmetric cases that are shown in the parentheses.

For all $ q_{1},q_{2} \in Q $; 
$ \sigma,\sigma_{1},\sigma_{2} \in \Sigma $; and
$ \gamma_{1},\gamma_{2},\gamma_{3},\gamma_{4} \in \Gamma $
(the summations are taken over $ q' \in Q $; $ \gamma' \in \Gamma $;
$ d,d_{1},d_{2} \in <\mspace{-4mu}> $; and $ \omega \in \Omega $),
\\\\
\footnotesize
1. $ x_{1} = x_{2} $ and $ y_{1} = y_{2} $:
\begin{equation}			
	\sum\limits_{q',\gamma',d_{1},d_{2},\omega }
	\overline{
	\delta(q_{1},\sigma,\gamma_{1},q',d_{1},\gamma',d_{2},\omega)}
	\delta(q_{2},\sigma,\gamma_{2},q',d_{1},\gamma',d_{2},\omega)
	= 
	\left\lbrace
		\begin{array}{ll}
			1 & q_{1} = q_{2} , \gamma_{1} = \gamma_{2} \\
			0 & \mbox{otherwise}
		\end{array}
	\right.
\end{equation}
2. $ x_{1} = x_{2} $ and $ y_{1} = y_{2}-1 $ ($ x_{1} = x_{2} $ and  $ y_{1} = y_{2}+1 $):
\begin{equation}
	\begin{array}{ccl}
		\displaystyle \sum\limits_{q',d,\omega} &
		\multicolumn{2}{l}{
		\overline{
		\delta(q_{1},\sigma,\gamma_{1},q',d,\gamma_{2},\downarrow,\omega)
		}
		\delta(q_{2},\sigma,\gamma_{3},q',d,\gamma_{4},\leftarrow,\omega)}
		\\
		& + & 
		\overline{
		\delta(q_{1},\sigma,\gamma_{1},q',d,\gamma_{2},\rightarrow,\omega)
		}		
		\delta(q_{2},\sigma,\gamma_{3},q',d,\gamma_{4},\downarrow,\omega)
		= 0
	\end{array}
\end{equation}
3. $ x_{1} = x_{2} $ and $ y_{1} = y_{2}-2 $ ($ x_{1} = x_{2} $ and $ y_{1} = y_{2}+2 $):
\begin{equation}
	\begin{array}{ccl}
		\displaystyle \sum\limits_{q',d,\omega} & 
		\multicolumn{2}{l}{
		\overline{
		\delta(q_{1},\sigma,\gamma_{1},q',d,\gamma_{2},\rightarrow,\omega)
		}		
		\delta(q_{2},\sigma,\gamma_{3},q',d,\gamma_{4},\leftarrow,\omega)
		= 0 }
	\end{array}			
\end{equation}
4. $ x_{1} = x_{2}-1 $ and $ y_{1} = y_{2} $ ($ x_{1} = x_{2}+1 $ and $ y_{1} = y_{2} $):
\begin{equation}
	\begin{array}{ccl}
		\displaystyle \sum\limits_{q',\gamma',d,\omega} &
		\multicolumn{2}{l}{
		\overline{
		\delta(q_{1},\sigma_{1},\gamma_{1},q',\downarrow,\gamma',d,\omega)
		}		
		\delta(q_{2},\sigma_{2},\gamma_{2},q',\leftarrow,\gamma',d,\omega)}
		\\
		& + &
		\overline{
		\delta(q_{1},\sigma_{1},\gamma_{1},q',\rightarrow,\gamma',d,\omega)
		}		
		\delta(q_{2},\sigma_{2},\gamma_{2},q',\downarrow,\gamma',d,\omega)
		= 0
	\end{array}			
\end{equation}
5. $ x_{1} = x_{2}-1 $ and $ y_{1} = y_{2}-1 $ ($ x_{1} = x_{2}+1 $ and $ y_{1} = y_{2}+1 $):
\begin{equation}
	\begin{array}{ccl}
		{ \displaystyle \sum\limits_{q',\omega}} &
		\multicolumn{2}{l}{
		\overline{
		\delta(q_{1},\sigma_{1},\gamma_{1},q',\downarrow,\gamma_{2},\downarrow,\omega)
		}		
		\delta(q_{2},\sigma_{2},\gamma_{3},q',\leftarrow,\gamma_{4},\leftarrow,\omega)}
		\\
		& + &
		\overline{
		\delta(q_{1},\sigma_{1},\gamma_{1},q',\downarrow,\gamma_{2},\rightarrow,\omega)
		}		
		\delta(q_{2},\sigma_{2},\gamma_{3},q',\leftarrow,\gamma_{4},\downarrow,\omega)
		\\
		& + &
		\overline{
		\delta(q_{1},\sigma_{1},\gamma_{1},q',\rightarrow,\gamma_{2},\downarrow,\omega)
		}		
		\delta(q_{2},\sigma_{2},\gamma_{3},q',\downarrow,\gamma_{4},\leftarrow,\omega)
		\\
		& + &
		\overline{
		\delta(q_{1},\sigma_{1},\gamma_{1},q',\rightarrow,\gamma_{2},\rightarrow,\omega)
		}		
		\delta(q_{2},\sigma_{2},\gamma_{3},q',\downarrow,\gamma_{4},\downarrow,\omega)
		= 0
	\end{array}
\end{equation}
6. $ x_{1} = x_{2}-1 $ and $ y_{1} = y_{2}+1 $ ($ x_{1} = x_{2}+1 $ and $ y_{1} = y_{2}-1 $):
\begin{equation}
	\begin{array}{ccl}
		{  \displaystyle \sum\limits_{q',\omega}} &
		\multicolumn{2}{l}{
		\overline{
		\delta(q_{1},\sigma_{1},\gamma_{1},q',\downarrow,\gamma_{2},\downarrow,\omega)
		}		
		\delta(q_{2},\sigma_{2},\gamma_{3},q',\leftarrow,\gamma_{4},\rightarrow,\omega)}
		\\
		& + &
		\overline{
		\delta(q_{1},\sigma_{1},\gamma_{1},q',\downarrow,\gamma_{2},\leftarrow,\omega)
		}		
		\delta(q_{2},\sigma_{2},\gamma_{3},q',\leftarrow,\gamma_{4},\downarrow,\omega)
		\\
		& + &
		\overline{
		\delta(q_{1},\sigma_{1},\gamma_{1},q',\rightarrow,\gamma_{2},\downarrow,\omega)
		}		
		\delta(q_{2},\sigma_{2},\gamma_{3},q',\downarrow,\gamma_{4},\rightarrow,\omega)
		\\
		& + &
		\overline{
		\delta(q_{1},\sigma_{1},\gamma_{1},q',\rightarrow,\gamma_{2},\leftarrow,\omega)
		}		
		\delta(q_{2},\sigma_{2},\gamma_{3},q',\downarrow,\gamma_{4},\downarrow,\omega)
		= 0
	\end{array}
\end{equation}
7. $ x_{1} = x_{2}-1 $ and $ y_{1} = y_{2}-2 $ ($ x_{1} = x_{2}+1 $ and $ y_{1} = y_{2}+2 $):
\begin{equation}
	\begin{array}{ccl}
		{ \displaystyle \sum\limits_{q',\omega}} &
		\multicolumn{2}{l}{
		\overline{
		\delta(q_{1},\sigma_{1},\gamma_{1},q',\downarrow,\gamma_{2},\rightarrow,\omega)
		}		
		\delta(q_{2},\sigma_{2},\gamma_{3},q',\leftarrow,\gamma_{4},\leftarrow,\omega)}
		\\
		& + &
		\overline{
		\delta(q_{1},\sigma_{1},\gamma_{1},q',\rightarrow,\gamma_{2},\rightarrow,\omega)
		}		
		\delta(q_{2},\sigma_{2},\gamma_{3},q',\downarrow,\gamma_{4},\leftarrow,\omega)
		= 0
	\end{array}
\end{equation}
8. $ x_{1} = x_{2}-1 $ and $ y_{1} = y_{2}+2 $ ($ x_{1} = x_{2}+1 $ and $ y_{1} = y_{2}-2 $):
\begin{equation}
	\begin{array}{ccl}
		{ \displaystyle \sum\limits_{q',\omega}} &
		\multicolumn{2}{l}{
		\overline{
		\delta(q_{1},\sigma_{1},\gamma_{1},q',\downarrow,\gamma_{2},\leftarrow,\omega)
		}		
		\delta(q_{2},\sigma_{2},\gamma_{3},q',\leftarrow,\gamma_{4},\rightarrow,\omega)}
		\\
		& + &
		\overline{
		\delta(q_{1},\sigma_{1},\gamma_{1},q',\rightarrow,\gamma_{2},\leftarrow,\omega)
		}		
		\delta(q_{2},\sigma_{2},\gamma_{3},q',\downarrow,\gamma_{4},\rightarrow,\omega)
		= 0
	\end{array}
\end{equation}
9. $ x_{1} = x_{2}-2 $ and $ y_{1} = y_{2} $ ($ x_{1} = x_{2}+2 $ and $ y_{1} = y_{2} $):
\begin{equation}
	\begin{array}{ccl}
		{ \displaystyle \sum\limits_{q',\gamma',d,\omega}} &
		\multicolumn{2}{l}{
		\overline{
		\delta(q_{1},\sigma_{1},\gamma_{1},q',\rightarrow,\gamma',d,\omega)
		}		
		\delta(q_{2},\sigma_{2},\gamma_{2},q',\leftarrow,\gamma',d,\omega)}
		= 0
	\end{array}
\end{equation}
10. $ x_{1} = x_{2}-2 $ and $ y_{1} = y_{2}-1 $ ($ x_{1} = x_{2}+2 $ and $ y_{1} = y_{2}+1 $):
\begin{equation}
	\begin{array}{ccl}
		{ \displaystyle \sum\limits_{q',\omega}} &
		\multicolumn{2}{l}{
		\overline{
		\delta(q_{1},\sigma_{1},\gamma_{1},q',\rightarrow,\gamma_{2},\rightarrow,\omega)
		}		
		\delta(q_{2},\sigma_{2},\gamma_{3},q',\leftarrow,\gamma_{4},\downarrow,\omega)}
		\\
		& + &
		\overline{
		\delta(q_{1},\sigma_{1},\gamma_{1},q',\rightarrow,\gamma_{2},\downarrow,\omega)
		}		
		\delta(q_{2},\sigma_{2},\gamma_{3},q',\leftarrow,\gamma_{4},\leftarrow,\omega)
		= 0
	\end{array}
\end{equation}
11. $ x_{1} = x_{2}-2 $ and $ y_{1} = y_{2}+1 $ ($ x_{1} = x_{2}+2 $ and $ y_{1} = y_{2}-1 $):
\begin{equation}
	\begin{array}{ccl}
		{ \displaystyle \sum\limits_{q',\omega}} &
		\multicolumn{2}{l}{
		\overline{
		\delta(q_{1},\sigma_{1},\gamma_{1},q',\rightarrow,\gamma_{2},\leftarrow,\omega)
		}		
		\delta(q_{2},\sigma_{2},\gamma_{3},q',\leftarrow,\gamma_{4},\downarrow,\omega)}
		\\
		& + &
		\overline{
		\delta(q_{1},\sigma_{1},\gamma_{1},q',\rightarrow,\gamma_{2},\downarrow,\omega)
		}		
		\delta(q_{2},\sigma_{2},\gamma_{3},q',\leftarrow,\gamma_{4},\rightarrow,\omega)
		= 0
	\end{array}
\end{equation}
12. $ x_{1} = x_{2}-2 $ and $ y_{1} = y_{2}-2 $ ($ x_{1} = x_{2}+2 $ and $ y_{1} = y_{2}+2 $):
\begin{equation}
	\begin{array}{ccl}
		{ \displaystyle \sum\limits_{q',\omega}} &
		\multicolumn{2}{l}{
		\overline{
		\delta(q_{1},\sigma_{1},\gamma_{1},q',\rightarrow,\gamma_{2},\rightarrow,\omega)
		}		
		\delta(q_{2},\sigma_{2},\gamma_{3},q',\leftarrow,\gamma_{4},\leftarrow,\omega)}
		= 0
	\end{array}
\end{equation}
13. $ x_{1} = x_{2}-2 $ and $ y_{1} = y_{2}+2 $ ($ x_{1} = x_{2}+2 $ and $ y_{1} = y_{2}-2 $):
\begin{equation}
	\begin{array}{ccl}
		{ \displaystyle \sum\limits_{q',\omega}} &
		\multicolumn{2}{l}{
		\overline{
		\delta(q_{1},\sigma_{1},\gamma_{1},q',\rightarrow,\gamma_{2},\leftarrow,\omega)
		}		
		\delta(q_{2},\sigma_{2},\gamma_{3},q',\leftarrow,\gamma_{4},\rightarrow,\omega)}
		= 0
	\end{array}
\end{equation}
\normalsize

\section{Quantum Pushdown Automata} \label{well:QPDAs}

\noindent \underline{Local conditions for 2QPDA well-formedness}:

Let $ x_{i} $ and $ x_{j} $ be the positions of the input tape head
and $ y_{i} $ and $ y_{j} $ be the positions of the stack head
for the configurations $ c_{i} $ and $ c_{j} $, respectively.
In order to evolve to the same configuration in one step, the difference between
$ x_{i} $ and $ x_{j} $ and/or $ y_{i} $ and $ y_{j} $ must be at most 2.
Therefore, we obtain totally 13 different cases that completely define the restrictions on the transition function.
Note that, by taking conjugate of each summation, we obtain the symmetric cases that are shown in the parenthesis.
\newline \newline
For each choice of $ q_{1},q_{2} \in Q $; 
$ \sigma,\sigma_{1},\sigma_{2} \in \tilde{ \Sigma } $; and
$ \gamma_{1},\gamma_{2},\gamma_{3},\gamma_{4} \in \Gamma $
(the summations are taken over $ q' \in Q $; $ \gamma' \in \tilde{\Gamma} $;
$ d,d_{1},d_{2} \in <\mspace{-4mu}> $; and $ \omega \in \Omega $),
\\\\
1. $ x_{i} = x_{j} $ and $ y_{i} = y_{j} $:
\footnotesize
\begin{equation}			
	\sum\limits_{q',d_{1},d_{2},\gamma^{'},\omega }
	\overline{
	\delta(q_{1},\sigma,\gamma_{1},q',d_{1},d_{2},\gamma^{'},\omega)}
	\delta(q_{2},\sigma,\gamma_{2},q',d_{1},d_{2},\gamma^{'},\omega)
	= 
	\left\lbrace
		\begin{array}{ll}
			1 ~~& q_{1} = q_{2} , \gamma_{1} = \gamma_{2} \\
			0 & \mbox{otherwise}
		\end{array}
	\right.
\end{equation}
\normalsize
2. $ x_{i} = x_{j} $ and $ y_{i} = y_{j}-1 $ ($ x_{i} = x_{j} $ and  $ y_{i} = y_{j}+1 $):
\footnotesize
\begin{equation}
	\begin{array}{ccl}
		\displaystyle \sum\limits_{q',d,\omega} &
		\multicolumn{2}{l}{
		\overline{
		\delta(q_{1},\sigma,\gamma_{1},q',d,\downarrow,\gamma_{2},\omega)
		}
		\delta(q_{2},\sigma,\gamma_{3},q',d,\leftarrow,\gamma_{3},\omega)}
		\\
		& + & 
		\overline{
		\delta(q_{1},\sigma,\gamma_{1},q',d,\rightarrow,\gamma_{2},\omega)
		}		
		\delta(q_{2},\sigma,\gamma_{3},q',d,\downarrow,\gamma_{2},\omega)
		= 0
	\end{array}
\end{equation}
\normalsize
3. $ x_{i} = x_{j} $ and $ y_{i} = y_{j}-2 $ ($ x_{i} = x_{j} $ and $ y_{i} = y_{j}+2 $):
\footnotesize
\begin{equation}
	\begin{array}{ccl}
		\displaystyle \sum\limits_{q',d,\omega} & 
		\multicolumn{2}{l}{
		\overline{
		\delta(q_{1},\sigma,\gamma_{1},q',d,\rightarrow,\gamma_{2},\omega)
		}		
		\delta(q_{2},\sigma,\gamma_{3},q',d,\leftarrow,\gamma_{3},\omega)
		= 0 }
	\end{array}			
\end{equation}
\normalsize
4. $ x_{i} = x_{j}-1 $ and $ y_{i} = y_{j} $ ($ x_{i} = x_{j}+1 $ and $ y_{i} = y_{j} $):
\footnotesize
\begin{equation}
	\begin{array}{ccl}
		\displaystyle \sum\limits_{q',d,\gamma',\omega} &
		\multicolumn{2}{l}{
		\overline{
		\delta(q_{1},\sigma_{1},\gamma_{1},q',\downarrow,d,\gamma',\omega)
		}		
		\delta(q_{2},\sigma_{2},\gamma_{2},q',\leftarrow,d,\gamma',\omega)}
		\\
		& + &
		\overline{
		\delta(q_{1},\sigma_{1},\gamma_{1},q',\rightarrow,d,\gamma',\omega)
		}		
		\delta(q_{2},\sigma_{2},\gamma_{2},q',\downarrow,d,\gamma',\omega)
		= 0
	\end{array}			
\end{equation}
\normalsize
5. $ x_{i} = x_{j}-1 $ and $ y_{i} = y_{j}-1 $ ($ x_{i} = x_{j}+1 $ and $ y_{i} = y_{j}+1 $):
\footnotesize
\begin{equation}
	\begin{array}{ccl}
		{ \displaystyle \sum\limits_{q',\omega}} &
		\multicolumn{2}{l}{
		\overline{
		\delta(q_{1},\sigma_{1},\gamma_{1},q',\downarrow,\downarrow,\gamma_{2},\omega)
		}		
		\delta(q_{2},\sigma_{2},\gamma_{3},q',\leftarrow,\leftarrow,\gamma_{3},\omega)}
		\\
		& + &
		\overline{
		\delta(q_{1},\sigma_{1},\gamma_{1},q',\downarrow,\rightarrow,\gamma_{2},\omega)
		}		
		\delta(q_{2},\sigma_{2},\gamma_{3},q',\leftarrow,\downarrow,\gamma_{2},\omega)
		\\
		& + &
		\overline{
		\delta(q_{1},\sigma_{1},\gamma_{1},q',\rightarrow,\downarrow,\gamma_{2},\omega)
		}		
		\delta(q_{2},\sigma_{2},\gamma_{3},q',\downarrow,\leftarrow,\gamma_{3},\omega)
		\\
		& + &
		\overline{
		\delta(q_{1},\sigma_{1},\gamma_{1},q',\rightarrow,\rightarrow,\gamma_{2},\omega)
		}		
		\delta(q_{2},\sigma_{2},\gamma_{3},q',\downarrow,\downarrow,\gamma_{2},\omega)
		= 0
	\end{array}
\end{equation}
\normalsize
6. $ x_{i} = x_{j}-1 $ and $ y_{i} = y_{j}+1 $ ($ x_{i} = x_{j}+1 $ and $ y_{i} = y_{j}-1 $):
\footnotesize
\begin{equation}
	\begin{array}{ccl}
		{  \displaystyle \sum\limits_{q',\omega}} &
		\multicolumn{2}{l}{
		\overline{
		\delta(q_{1},\sigma_{1},\gamma_{1},q',\downarrow,\downarrow,\gamma_{2},\omega)
		}		
		\delta(q_{2},\sigma_{2},\gamma_{3},q',\leftarrow,\rightarrow,\gamma_{2},\omega)}
		\\
		& + &
		\overline{
		\delta(q_{1},\sigma_{1},\gamma_{1},q',\downarrow,\leftarrow,\gamma_{1},\omega)
		}		
		\delta(q_{2},\sigma_{2},\gamma_{3},q',\leftarrow,\downarrow,\gamma_{4},\omega)
		\\
		& + &
		\overline{
		\delta(q_{1},\sigma_{1},\gamma_{1},q',\rightarrow,\downarrow,\gamma_{2},\omega)
		}		
		\delta(q_{2},\sigma_{2},\gamma_{3},q',\downarrow,\rightarrow,\gamma_{2},\omega)
		\\
		& + &
		\overline{
		\delta(q_{1},\sigma_{1},\gamma_{1},q',\rightarrow,\leftarrow,\gamma_{1},\omega)
		}		
		\delta(q_{2},\sigma_{2},\gamma_{3},q',\downarrow,\downarrow,\gamma_{4},\omega)
		= 0
	\end{array}
\end{equation}
\normalsize
7. $ x_{i} = x_{j}-1 $ and $ y_{i} = y_{j}-2 $ ($ x_{i} = x_{j}+1 $ and $ y_{i} = y_{j}+2 $):
\footnotesize
\begin{equation}
	\begin{array}{ccl}
		{ \displaystyle \sum\limits_{q',\omega}} &
		\multicolumn{2}{l}{
		\overline{
		\delta(q_{1},\sigma_{1},\gamma_{1},q',\downarrow,\rightarrow,\gamma_{2},\omega)
		}		
		\delta(q_{2},\sigma_{2},\gamma_{3},q',\leftarrow,\leftarrow,\gamma_{3},\omega)}
		\\
		& + &
		\overline{
		\delta(q_{1},\sigma_{1},\gamma_{1},q',\rightarrow,\rightarrow,\gamma_{2},\omega)
		}		
		\delta(q_{2},\sigma_{2},\gamma_{3},q',\downarrow,\leftarrow,\gamma_{3},\omega)
		= 0
	\end{array}
\end{equation}
\normalsize
8. $ x_{i} = x_{j}-1 $ and $ y_{i} = y_{j}+2 $ ($ x_{i} = x_{j}+1 $ and $ y_{i} = y_{j}-2 $):
\footnotesize
\begin{equation}
	\begin{array}{ccl}
		{ \displaystyle \sum\limits_{q',\omega}} &
		\multicolumn{2}{l}{
		\overline{
		\delta(q_{1},\sigma_{1},\gamma_{1},q',\downarrow,\leftarrow,\gamma_{1},\omega)
		}		
		\delta(q_{2},\sigma_{2},\gamma_{3},q',\leftarrow,\rightarrow,\gamma_{4},\omega)}
		\\
		& + &
		\overline{
		\delta(q_{1},\sigma_{1},\gamma_{1},q',\rightarrow,\leftarrow,\gamma_{1},\omega)
		}		
		\delta(q_{2},\sigma_{2},\gamma_{3},q',\downarrow,\rightarrow,\gamma_{4},\omega)
		= 0
	\end{array}
\end{equation}
\normalsize
9. $ x_{i} = x_{j}-2 $ and $ y_{i} = y_{j} $ ($ x_{i} = x_{j}+2 $ and $ y_{i} = y_{j} $):
\footnotesize
\begin{equation}
	\begin{array}{ccl}
		{ \displaystyle \sum\limits_{q',d,\gamma',\omega}} &
		\multicolumn{2}{l}{
		\overline{
		\delta(q_{1},\sigma_{1},\gamma_{1},q',\rightarrow,d,\gamma',\omega)
		}		
		\delta(q_{2},\sigma_{2},\gamma_{2},q',\leftarrow,d,\gamma',\omega)}
		= 0
	\end{array}
\end{equation}
\normalsize
10. $ x_{i} = x_{j}-2 $ and $ y_{i} = y_{j}-1 $ ($ x_{i} = x_{j}+2 $ and $ y_{i} = y_{j}+1 $):
\footnotesize
\begin{equation}
	\begin{array}{ccl}
		{ \displaystyle \sum\limits_{q',\omega}} &
		\multicolumn{2}{l}{
		\overline{
		\delta(q_{1},\sigma_{1},\gamma_{1},q',\rightarrow,\rightarrow,\gamma_{2},\omega)
		}		
		\delta(q_{2},\sigma_{2},\gamma_{3},q',\leftarrow,\downarrow,\gamma_{2},\omega)}
		\\
		& + &
		\overline{
		\delta(q_{1},\sigma_{1},\gamma_{1},q',\rightarrow,\downarrow,\gamma_{2},\omega)
		}		
		\delta(q_{2},\sigma_{2},\gamma_{3},q',\leftarrow,\leftarrow,\gamma_{3},\omega)
		= 0
	\end{array}
\end{equation}
\normalsize
11. $ x_{i} = x_{j}-2 $ and $ y_{i} = y_{j}+1 $ ($ x_{i} = x_{j}+2 $ and $ y_{i} = y_{j}-1 $):
\footnotesize
\begin{equation}
	\begin{array}{ccl}
		{ \displaystyle \sum\limits_{q',\omega}} &
		\multicolumn{2}{l}{
		\overline{
		\delta(q_{1},\sigma_{1},\gamma_{1},q',\rightarrow,\leftarrow,\gamma_{1},\omega)
		}		
		\delta(q_{2},\sigma_{2},\gamma_{3},q',\leftarrow,\downarrow,\gamma_{4},\omega)}
		\\
		& + &
		\overline{
		\delta(q_{1},\sigma_{1},\gamma_{1},q',\rightarrow,\downarrow,\gamma_{2},\omega)
		}		
		\delta(q_{2},\sigma_{2},\gamma_{3},q',\leftarrow,\rightarrow,\gamma_{2},\omega)
		= 0
	\end{array}
\end{equation}
\normalsize
12. $ x_{i} = x_{j}-2 $ and $ y_{i} = y_{j}-2 $ ($ x_{i} = x_{j}+2 $ and $ y_{i} = y_{j}+2 $):
\footnotesize
\begin{equation}
	\begin{array}{ccl}
		{ \displaystyle \sum\limits_{q',\omega}} &
		\multicolumn{2}{l}{
		\overline{
		\delta(q_{1},\sigma_{1},\gamma_{1},q',\rightarrow,\rightarrow,\gamma_{2},\omega)
		}		
		\delta(q_{2},\sigma_{2},\gamma_{3},q',\leftarrow,\leftarrow,\gamma_{3},\omega)}
		= 0
	\end{array}
\end{equation}
\normalsize
13. $ x_{i} = x_{j}-2 $ and $ y_{i} = y_{j}+2 $ ($ x_{i} = x_{j}+2 $ and $ y_{i} = y_{j}-2 $):
\footnotesize
\begin{equation}
	\begin{array}{ccl}
		{ \displaystyle \sum\limits_{q',\omega}} &
		\multicolumn{2}{l}{
		\overline{
		\delta(q_{1},\sigma_{1},\gamma_{1},q',\rightarrow,\leftarrow,\gamma_{1},\omega)
		}		
		\delta(q_{2},\sigma_{2},\gamma_{3},q',\leftarrow,\rightarrow,\gamma_{4},\omega)}
		= 0
	\end{array}
\end{equation}
\normalsize

\noindent \underline{Local conditions for 1QPDA well-formedness}:

In order to evolve to the same configuration in one step, the difference between
$ x_{i} $ and $ x_{j} $ must be at most 1 in case of 1QPDA.
Therefore, we obtain totally 8 different cases that completely define the restrictions on the transition function.
Similar to 2QPDA, by taking conjugate of each summation, we obtain the symmetric cases that are shown in the parenthesis.
\newline \newline
For each choice of $ q_{1},q_{2} \in Q $; 
$ \sigma,\sigma_{1},\sigma_{2} \in \tilde{ \Sigma } $; and
$ \gamma_{1},\gamma_{2},\gamma_{3},\gamma_{4} \in \Gamma $
(the summations are taken over $ q' \in Q $; $ \gamma' \in \tilde{\Gamma} $;
$ d\in \rhd $,
$ e \in <\mspace{-4mu}> $; and $ \omega \in \Omega $),
\\\\
1. $ x_{i} = x_{j} $ and $ y_{i} = y_{j} $:
\footnotesize
\begin{equation}			
	\sum\limits_{q',d,e,\gamma^{'},\omega }
	\overline{
	\delta(q_{1},\sigma,\gamma_{1},q',d,e,\gamma^{'},\omega)}
	\delta(q_{2},\sigma,\gamma_{2},q',d,e,\gamma^{'},\omega)
	= 
	\left\lbrace
		\begin{array}{ll}
			1 ~~& q_{1} = q_{2} , \gamma_{1} = \gamma_{2} \\
			0 & \mbox{otherwise}
		\end{array}
	\right.
\end{equation}
\normalsize
2. $ x_{i} = x_{j} $ and $ y_{i} = y_{j}-1 $ ($ x_{i} = x_{j} $ and  $ y_{i} = y_{j}+1 $):
\footnotesize
\begin{equation}
	\begin{array}{ccl}
		\displaystyle \sum\limits_{q',d,\omega} &
		\multicolumn{2}{l}{
		\overline{
		\delta(q_{1},\sigma,\gamma_{1},q',d,\downarrow,\gamma_{2},\omega)
		}
		\delta(q_{2},\sigma,\gamma_{3},q',d,\leftarrow,\gamma_{3},\omega)}
		\\
		& + & 
		\overline{
		\delta(q_{1},\sigma,\gamma_{1},q',d,\rightarrow,\gamma_{2},\omega)
		}		
		\delta(q_{2},\sigma,\gamma_{3},q',d,\downarrow,\gamma_{2},\omega)
		= 0
	\end{array}
\end{equation}
\normalsize
3. $ x_{i} = x_{j} $ and $ y_{i} = y_{j}-2 $ ($ x_{i} = x_{j} $ and $ y_{i} = y_{j}+2 $):
\footnotesize
\begin{equation}
	\begin{array}{ccl}
		\displaystyle \sum\limits_{q',d,\omega} & 
		\multicolumn{2}{l}{
		\overline{
		\delta(q_{1},\sigma,\gamma_{1},q',d,\rightarrow,\gamma_{2},\omega)
		}		
		\delta(q_{2},\sigma,\gamma_{3},q',d,\leftarrow,\gamma_{3},\omega)
		= 0 }
	\end{array}			
\end{equation}
\normalsize
4. $ x_{i} = x_{j}-1 $ and $ y_{i} = y_{j} $ ($ x_{i} = x_{j}+1 $ and $ y_{i} = y_{j} $):
\footnotesize
\begin{equation}
	\begin{array}{ccl}
		\displaystyle \sum\limits_{q',e,\gamma',\omega} &
		\multicolumn{2}{l}{
		\overline{
		\delta(q_{1},\sigma_{1},\gamma_{1},q',\rightarrow,e,\gamma',\omega)
		}		
		\delta(q_{2},\sigma_{2},\gamma_{2},q',\downarrow,e,\gamma',\omega)
		= 0}
	\end{array}			
\end{equation}
\normalsize
5. $ x_{i} = x_{j}-1 $ and $ y_{i} = y_{j}-1 $ ($ x_{i} = x_{j}+1 $ and $ y_{i} = y_{j}+1 $):
\footnotesize
\begin{equation}
	\begin{array}{ccl}
		{ \displaystyle \sum\limits_{q',\omega}} &
		\multicolumn{2}{l}{
		\overline{
		\delta(q_{1},\sigma_{1},\gamma_{1},q',\rightarrow,\downarrow,\gamma_{2},\omega)
		}		
		\delta(q_{2},\sigma_{2},\gamma_{3},q',\downarrow,\leftarrow,\gamma_{3},\omega)
		}
		\\
		& + &
		\overline{
		\delta(q_{1},\sigma_{1},\gamma_{1},q',\rightarrow,\rightarrow,\gamma_{2},\omega)
		}		
		\delta(q_{2},\sigma_{2},\gamma_{3},q',\downarrow,\downarrow,\gamma_{2},\omega)
		= 0
	\end{array}
\end{equation}
\normalsize
6. $ x_{i} = x_{j}-1 $ and $ y_{i} = y_{j}+1 $ ($ x_{i} = x_{j}+1 $ and $ y_{i} = y_{j}-1 $):
\footnotesize
\begin{equation}
	\begin{array}{ccl}
		{  \displaystyle \sum\limits_{q',\omega}} &
		\multicolumn{2}{l}{
		\overline{
		\delta(q_{1},\sigma_{1},\gamma_{1},q',\rightarrow,\downarrow,\gamma_{2},\omega)
		}		
		\delta(q_{2},\sigma_{2},\gamma_{3},q',\downarrow,\rightarrow,\gamma_{2},\omega)
		}
		\\
		& + &
		\overline{
		\delta(q_{1},\sigma_{1},\gamma_{1},q',\rightarrow,\leftarrow,\gamma_{1},\omega)
		}		
		\delta(q_{2},\sigma_{2},\gamma_{3},q',\downarrow,\downarrow,\gamma_{4},\omega)
		= 0
	\end{array}
\end{equation}
\normalsize
7. $ x_{i} = x_{j}-1 $ and $ y_{i} = y_{j}-2 $ ($ x_{i} = x_{j}+1 $ and $ y_{i} = y_{j}+2 $):
\footnotesize
\begin{equation}
	\begin{array}{ccl}
		{ \displaystyle \sum\limits_{q',\omega}} &
		\multicolumn{2}{l}{
		\overline{
		\delta(q_{1},\sigma_{1},\gamma_{1},q',\rightarrow,\rightarrow,\gamma_{2},\omega)
		}		
		\delta(q_{2},\sigma_{2},\gamma_{3},q',\downarrow,\leftarrow,\gamma_{3},\omega)
		= 0
		}	
	\end{array}
\end{equation}
\normalsize
8. $ x_{i} = x_{j}-1 $ and $ y_{i} = y_{j}+2 $ ($ x_{i} = x_{j}+1 $ and $ y_{i} = y_{j}-2 $):
\footnotesize
\begin{equation}
	\begin{array}{ccl}
		{ \displaystyle \sum\limits_{q',\omega}} &
		\multicolumn{2}{l}{
		\overline{
		\delta(q_{1},\sigma_{1},\gamma_{1},q',\rightarrow,\leftarrow,\gamma_{1},\omega)
		}		
		\delta(q_{2},\sigma_{2},\gamma_{3},q',\downarrow,\rightarrow,\gamma_{4},\omega)
		= 0}
	\end{array}
\end{equation}
\normalsize

\noindent \underline{Local conditions for RT-QPDA well-formedness}:

For RT-QPDA, we only consider the stack tape head positions.
In order to evolve to the same configuration in one step, the difference between
$ y_{i} $ and $ y_{j} $ must be at most 2.
Therefore, we obtain totally 3 different cases that completely define the restrictions on the transition function.
Similar to the previous ones, 
by taking conjugate of each summation, we obtain the symmetric cases that are shown in the parenthesis.
\newline \newline
For each choice of $ q_{1},q_{2} \in Q $; 
$ \sigma,\sigma_{1},\sigma_{2} \in \tilde{ \Sigma } $; and
$ \gamma_{1},\gamma_{2},\gamma_{3},\gamma_{4} \in \Gamma $
(the summations are taken over $ q' \in Q $; $ \gamma' \in \tilde{\Gamma} $;
$ e \in <\mspace{-4mu}> $; and $ \omega \in \Omega $),
\\\\
1. $ x_{i} = x_{j} $ and $ y_{i} = y_{j} $:
\footnotesize
\begin{equation}			
	\sum\limits_{q',e,\gamma^{'},\omega }
	\overline{
	\delta(q_{1},\sigma,\gamma_{1},q',e,\gamma^{'},\omega)}
	\delta(q_{2},\sigma,\gamma_{2},q',e,\gamma^{'},\omega)
	= 
	\left\lbrace
		\begin{array}{ll}
			1 ~~& q_{1} = q_{2} , \gamma_{1} = \gamma_{2} \\
			0 & \mbox{otherwise}
		\end{array}
	\right.
\end{equation}
\normalsize
2. $ x_{i} = x_{j} $ and $ y_{i} = y_{j}-1 $ ($ x_{i} = x_{j} $ and  $ y_{i} = y_{j}+1 $):
\footnotesize
\begin{equation}
	\begin{array}{ccl}
		\displaystyle \sum\limits_{q',\omega} &
		\multicolumn{2}{l}{
		\overline{
		\delta(q_{1},\sigma,\gamma_{1},q',\downarrow,\gamma_{2},\omega)
		}
		\delta(q_{2},\sigma,\gamma_{3},q',\leftarrow,\gamma_{3},\omega)}
		\\
		& + & 
		\overline{
		\delta(q_{1},\sigma,\gamma_{1},q',\rightarrow,\gamma_{2},\omega)
		}		
		\delta(q_{2},\sigma,\gamma_{3},q',\downarrow,\gamma_{2},\omega)
		= 0
	\end{array}
\end{equation}
\normalsize
3. $ x_{i} = x_{j} $ and $ y_{i} = y_{j}-2 $ ($ x_{i} = x_{j} $ and $ y_{i} = y_{j}+2 $):
\footnotesize
\begin{equation}
	\begin{array}{ccl}
		\displaystyle \sum\limits_{q',\omega} & 
		\multicolumn{2}{l}{
		\overline{
		\delta(q_{1},\sigma,\gamma_{1},q',\rightarrow,\gamma_{2},\omega)
		}		
		\delta(q_{2},\sigma,\gamma_{3},q',\leftarrow,\gamma_{3},\omega)
		= 0 }
	\end{array}			
\end{equation}
\normalsize

\chapter{QUANTUM COMPUTATION} \label{qc}

We review some basic concepts related to quantum computation in this appendix.
We refer the reader to \cite{NC00} for a complete reference.
We focus on finite dimensional systems.

The \textit{state space} of a quantum system ($ A $) with $ n $ \textit{classical states}, 
$ Q = \{ q_{i} \mid 1 \le i \le n \} $, is an $ n $-dimensional 
Hilbert space\footnote{It is a complex vector space with inner product.}, 
denoted as $ \mathcal{H}_{A} $.
The set $ \mathcal{B}_{A} = \{ \ket{q_{i}} \mid 1 \le i \le n \} $ is an orthonormal basis for $ \mathcal{H}_{A} $,
where the $ i^{th} $ entry of $ \ket{q_{i}} $ is 1 and the remaining ones are zeros.
If isolated, $ A $ is completely described by its \textit{state vector}, say $ \ket{\psi} $,
that is a linear combination of \textit{basis states}\footnote{We fixed it as $ \mathcal{B}_{A} $. 
However, note that, one can also use different orthonormal basis.}
\begin{equation}
	\ket{\psi} = \alpha_{1} \ket{q_{1}}+ \ldots + \alpha_{n} \ket{q_{n}}, 
\end{equation}
where $ \alpha_{i} $ is the amplitude of $ \ket{q_{i}} $, whose modulus squared, 
$ |\alpha_{i}|^{2} $, is the probability of being in state $ q_{i} $,
and $ \sum_{i} |\alpha_{i}|^{2}=1 $  ($ 1 \le i \le n $).
When $ \ket{\psi} $ contains more than one basis state with nonzero amplitude,
then it is said that $ A $ is in a \textit{superposition} (of the corresponding basis states).

The evolution of an isolated quantum system, i.e. closed quantum system, is governed by unitary transformations,
which are norm preserving operators.
If a unitary transformation, say $ U $, is applied on state $ \ket{\psi} $,
then the new state is obtained by
\begin{equation}
	\ket{\psi'} = U \ket{\psi}.
\end{equation}

To retrieve the information from a quantum system ($ A $), it is measured by a quantum operator
(in physical sense, it is observed by a measurement apparatus).
Throughout the thesis, Von Neumann measurements\footnote{They are a special case of 
projective measurements.} are used. 
That is, the set of the classical states $ (Q) $ are divided into some pairwise-disjoint subsets,
$ Q=Q_{1}+ \cdots +Q_{k} $ and so $ \mathcal{H}_{A} $ becomes a composition of mutually orthogonal subspaces,
$ \mathcal{H}_{A_{1}}, \ldots, \mathcal{H}_{A_{k}} $, i.e.
\begin{equation}
	\mathcal{H} = \mathcal{H}_{A_{1}} \oplus \cdots \oplus \mathcal{H}_{A_{k}}
\end{equation}
and
\begin{equation}
	\mathcal{H}_{A_{i}} = \mbox{span} \{ \ket{q} \mid q \in Q_{i} \}, 
\end{equation}
where $ 1 \le i \le k $.
Thus, we define Von Neumann measurement $ P $ with $ k $ outcomes as follows:
$ P $ is composed of a list of measurement operators, or specifically orthogonal projectors, 
$ P_{1}, \ldots, P_{k} $, defined as
\begin{equation}
	\label{equation:P}
	P = \{ P_{i} \mid P_{i} = \sum_{q \in Q_{i}} \ket{q} \bra{q}, 1 \le i \le k \}.
\end{equation}
If $ A $ is measured by $ P $ when it is in state $ \ket{\psi} $,
the outcome $ i \in \{1,\ldots,k\} $ is obtained with probability 
\begin{equation}
	|| \mspace{1mu} \ket{\psi_{i}} \mspace{1mu} ||^{2} = 
\braket{\psi_{i}}{\psi_{i}} =   \bra{\psi} P_{i} \ket{\psi},
\end{equation}
where $ \ket{\psi_{i}} = P_{i} \ket{\psi} $,
and then the system collapses to state
\begin{equation}
	\frac{\ket{\psi_{i}}}{\sqrt{\braket{\psi_{i}}{\psi_{i}}}},
\end{equation}
which is normalized.

Let $ A $ and $ B $ be quantum systems with sets of the classical states 
$ Q=\{q_{1},\ldots,q_{n}\} $ and $ R=\{ r_{1},\ldots,r_{m} \} $, respectively.
$ A \otimes B $, or shortly $ AB $, denotes the joint system composed by $ A $ and $ B $.
The state space of $ AB $ is denoted by $ \mathcal{H}_{AB} $,
which is obtained by $ \mathcal{H}_{A} \otimes \mathcal{H}_{B} $.
Similarly, $ \mathcal{B}_{AB} = \mathcal{B}_{A} \otimes \mathcal{B}_{B} $,
i.e. $ \mathcal{B}_{AB} = \{ \ket{q_{i}r_{j}} = \ket{q_{i}} \ket{r_{j}} \mid 1 \le i \le n, 1 \le j \le m  \} $.
If $ A $ is in state $ \ket{\psi_{A}} $ and $ B $ is in state $ \ket{\psi_{B}} $, then 
$ AB $ is in state $ \ket{\psi_{AB}} = \ket{\psi_{A}} \ket{\psi_{B}} $.
However, it is not always possible for a joint system ($ AB $) to be in a state that is 
a composition of two separate states belonging to the participant subsystems ($ A $  and $ B $, respectively).
This situation is named as the \textit{entanglement} of two subsystems.

In some cases, a quantum system can be in more than one state 
with some probabilities, i.e. $ \{ (p_{l},\ket{\psi_{l}})  \mid 1 < l \le M < \infty \} $,
where $ p_{l} $ is the probability of being in state $ \ket{\psi_{l}} $,
$ \sum_{l} p_{l}=1 $ and $ \braket{\psi_{l}}{\psi_{l}} = 1 $.
If the system is in exactly one state, then 
the system is said to be in \textit{pure state}, but the cases in which the system is in more 
then one pure state with some probabilities, an ensemble of pure states, 
are said to be in \textit{mixed states}. 
A convenient representation tool describing the state of a quantum system in mixed states
is the density matrix. 
The \textit{density matrix}\footnote{The trace of a density matrix is 1 
and each density matrix is positive semidefinite.} representation of 
$ \{ (p_{l},\ket{\psi_{l}}) \mid 1 \le l \le M < \infty \} $ is as follows:
\begin{equation}
	\rho = \sum_{l} p_{l} \ket{\psi_{l}} \bra{\psi_{l}}.
\end{equation}
If the system is observed with measurement $ P=\{P_{i} \mid 1 \le i \le k \} $, then the outcome $ i $ is obtained 
with probability
\begin{equation}
	tr( \tilde{\rho_{i}} ),
\end{equation}
where 
\begin{equation}
	\tilde{\rho_{i}} = P_{i} \rho P_{i}^{\dagger},
\end{equation}
and so the corresponding density matrix becomes
\begin{equation}
	\rho_{i} = \frac{ \tilde{\rho_{i}} }{tr( \tilde{\rho_{i}} )}.
\end{equation}
In fact, the mixture is replaced by one of these sub-mixtures with the corresponding probability after the measurement.
If the measurement is \textit{non-selective}, i.e. there is no distinction between the outcomes,
the new state of the system can be seen as a mixture of sub-mixtures
\begin{equation}
	\label{equation:mixture-of-density-matrices}
	\{ (tr(\tilde{\rho_{i}}), \frac{ \tilde{\rho_{i}} }{tr( \tilde{\rho_{i}} )} ) \mid 1 \le i \le k \}
\end{equation}
that forms the density matrix 
\begin{equation}
	\label{equation:union-of-mixture-of-density-matrices}
	\rho' = \sum_{i} \tilde{\rho_{i}} = \sum_{i} P_{i} \rho P_{i}^{\dagger}.
\end{equation}
In the case of \textit{selective} measurements,
the new states (sub-mixtures) are kept separately. 
Here, each sub-mixture can be considered as the state of a sub-computation obtained by splitting the computation
due to the measurement.

Note that, as seen from the above, $ \tilde{\rho_{i}} $ alone represents $ (tr(\tilde{\rho_{i}}),\rho_{i}) $.
It may be helpful in some contexts to consider the unnormalized state representation\footnote{It is still
positive semidefinite but the trace may be less than $ 1 $.} ($ \tilde{\rho_{i}} $), 
in which the probability of being in state, the trace of the matrix ($ tr(\tilde{\rho_{i}}) $), 
exists unconditionally.

By a suitable setup, a quantum system, as a part of a bigger closed system which is governed by
a unitary transformation, can evolve with respect to the general quantum operators, 
namely \textit{superoperators}, that form a superset of both unitary and classical (stochastic) operators.
That is, we compose the system of interest, \textit{the principal system}, 
with a system, \textit{an environment}, which is prepared in a specified state.
The whole system evolves unitarily and then the environment is discarded, which is measured in some cases 
before discarding. 
If the environment is selectively measured, the operator applied on the principal system is called
\textit{selective quantum operator}. Otherwise, it is called \textit{admissible operator}.
The above setup is known as \textit{open quantum systems}.

Specifically, in the case of admissible operators, 
if the principal system is in $ \rho $ and the environment is prepared in $ \rho_{env} $,
then a unitary operator, say $ U $, is applied on $ \rho \otimes \rho_{env} $.
After that, the environment is discarded and new state becomes $ \rho' = \mathcal{E}(\rho) $,
where $ \mathcal{E} $ is the admissible operator, a part of $ U $, applied on the principal system.
There are many equivalent 
representations of them. We follow \textit{operator-sum representation}.
An admissible operator $ \mathcal{E} $ is a finite collection of matrices (operation elements) 
$ \{ E_{i} \mid 1 \le i \le m \} $ satisfying that
\begin{equation}
	\label{equation:superoperator-constraint}
	\sum_{i} E_{i}^{\dagger} E_{i} = I.
\end{equation} 
When admissible operator $ \mathcal{E} $ is applied on a density matrix $ \rho $, then
the new density matrix is obtained by
\begin{equation}
	\label{equation:new-rho-obtained-by-superoperator}
	\rho' = \mathcal{E} (\rho) = \sum_{i} E_{i} \rho E_{i}^{\dagger}.
\end{equation}
One of the nice properties of the admissible operator is that when two of them are composed, 
we obtain a new one\footnote{Note that, the composed operator is implemented probably with a bigger environment.}.
If $ \mathcal{E}=\{E_{i} \mid 1 \le i \le m \} $ and 
$ \mathcal{E}'=\{ E'_{i} \mid 1 \le j \le m' \} $, then
\begin{equation}
	\mathcal{E'} \circ \mathcal{E} = \{ E_{j}'E_{i} \mid 1 \le i \le m, 1 \le j \le m' \}.
\end{equation}

In case of selective quantum operators, the environment is measured before the discarding.
In fact, selective quantum operators\footnote{We also refer to the reader to \cite{Wa03}.} 
are the operators in order to handle the case of the selective measurements.
Let $ \Delta $ be a finite set of outcomes. A selective quantum operator $ \mathcal{E} $
is a collection of matrices $ \{ E_{ \tau,i } \mid \tau \in \Delta, 1 \le i \le m_{\tau}  \} $
satisfying that
\begin{equation}
	\label{equation:selective-operator-constraint}
	\sum_{\tau,i} E_{\tau,i}^{\dagger} E_{\tau,i} = I.
\end{equation} 
We can classify the matrices of $ \mathcal{E} $ with respect to the elements of the set of the outcomes:
\begin{equation}
	\mathcal{E}_{\tau} = \{ E_{\tau,i} \}.
\end{equation}
When selective quantum operator $ \mathcal{E} $ is applied on a density matrix $ \rho $, then
the unnormalized matrix representing the sub-mixture $ \tau \in \Delta $ is obtained by
\begin{equation}
	\label{equation:new-rho-obtained-by-selective-operator}
	\tilde{\rho_{\tau}} = \tilde{\mathcal{E}_{\tau}} (\rho) = \sum_{i} E_{\tau,i} \rho E_{\tau,i}^{\dagger},
\end{equation}
with probability $ tr(\tilde{\rho_{\tau}}) $.
The normalized version is obtained as 
\begin{equation}
	\rho_{\tau} = \mathcal{E}_{\tau}(\rho) = \frac{\tilde{\rho_{\tau}}}{tr(\tilde{\rho_{\tau}})}.
\end{equation}

The distinction between admissible and selective quantum
operators as an issue of this thesis is that
a space-bounded computation (which may go on forever), should be
checked regularly to see whether it has halted or not.
However, note that,
if the computation is time bounded, all sub-mixtures are kept alive
until the end of the computation,
and so selective quantum operators can be replaced with admissible
ones with the addition of some suitable measurements.

\begin{figure}[h!]
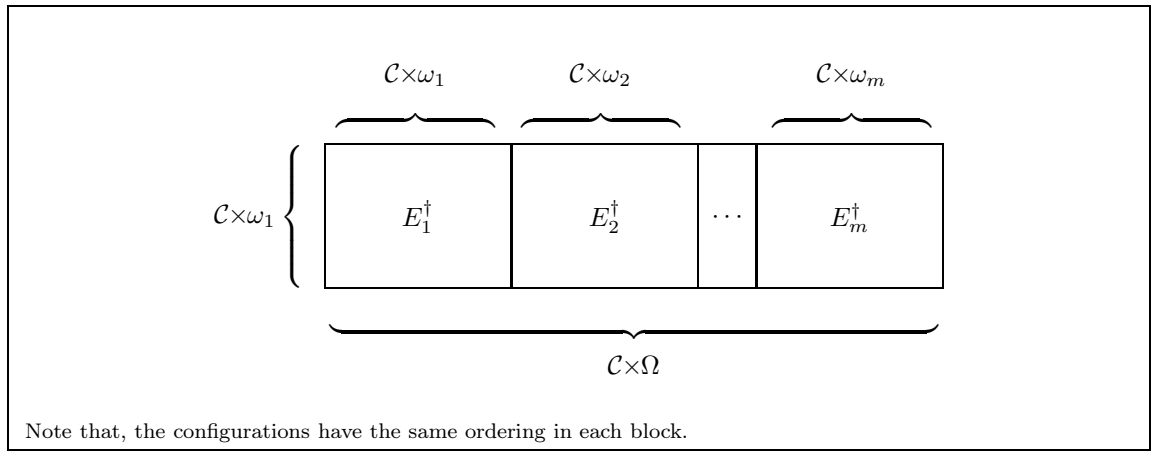

	\footnotesize
	\begin{center}
	\fbox{
	\begin{minipage}{0.95\textwidth}	
	\begin{equation*}
		\begin{array}{cc}
			& 
			\begin{array}{cccc}
				\mathcal{C} \mspace{-5mu} \times \mspace{-5mu} \omega_{1} & 
				\mathcal{C} \mspace{-5mu} \times \mspace{-5mu} \omega_{2} & & 
				\mathcal{C} \mspace{-5mu} \times \mspace{-5mu} \omega_{m} 
				\\ 
				\overbrace{\hspace*{60pt}} & \overbrace{\hspace*{60pt}} &
				\hspace*{15pt}
				& \overbrace{\hspace*{60pt}} \\
			\end{array}
			\\		
			\mathcal{C} \mspace{-5mu} \times \mspace{-5mu} \omega_{1}
			\left \lbrace \mspace{-25mu}
			\begin{array}{p{0.1pt}} \\  \\ \\  \end{array}
			\right. 
			&
			\begin{array}{cccc}
				\hline
				\multicolumn{1}{|c|}{
				\begin{array}{c}  \hspace*{50pt}  \\ E_{1}^{\dagger} \\  \hspace*{50pt}  \\  \end{array}
				}
				& 
				\multicolumn{1}{|c|}{
					\begin{array}{c}  \hspace*{50pt} \\ E_{2}^{\dagger}  \\ \hspace*{50pt}  \\  \end{array}
				}
				& \cdots & 
				\multicolumn{1}{|c|}{
					\begin{array}{c}  \hspace*{50pt}  \\ E_{m}^{\dagger}  \\ \hspace*{50pt}  \\  \end{array}
				} 			
				\\ \hline
			\end{array}
			\\
			&  \underbrace{ \hspace*{230pt}}
			\\
			& \mathcal{C} \mspace{-5mu} \times \mspace{-5mu} \Omega
		\end{array}
	\end{equation*}
	\scriptsize{Note that, the configurations have the same ordering in each block.}
	\end{minipage}	
	}
	\end{center}
	\caption{The matrix obtained by deleting all rows of $ U^{\dagger} $ except the ones in	
		$ \mathcal{C} \mspace{-5mu} \times \mspace{-5mu} \omega_{1} $.}
	\vskip\baselineskip
	\label{figure:U1}
\end{figure}

In the remaining part, we present the general setup implementing superoperators.
Let $ \mathcal{H}_{\mathcal{C}} $ and $ \mathcal{H}_{\Omega} $ be the principal and environment systems, respectively,
where $ \mathcal{C}=\{c_{1},\ldots,c_{n}\} $ is the finite dimensional configuration set
and $ \Omega=\{\omega_{1},\ldots,\omega_{m}\} $, having a special symbol 
$ \omega_{1} $, \textit{the initial symbol}, is the set of symbols for a finite register.
Suppose that $ \mathcal{H}_{\mathcal{C}} $ is in (mixed) state $ \rho $.
We prepare $ \mathcal{H}_{\Omega} $ with $ \ket{\omega_{1}} $ and apply unitary operator $ U $
on the joint system $ \mathcal{H}_{\mathcal{C}} \otimes \mathcal{H}_{\Omega} $, then we discard 
the environment $ \mathcal{H}_{\Omega} $. In Figure \ref{figure:U1}, the parts of $ U $ effecting
$ \mathcal{H}_{\mathcal{C}} $, which form an admissible operator $ \mathcal{E} $ with operation elements
$ \{E_{1},\ldots,E_{m}\} $, are seen. 
A straightforward calculation can validate that
$ \mathcal{E} $ satisfies Equation \ref{equation:superoperator-constraint} and 
the new density matrix of $ \mathcal{H}_{\mathcal{C}} $ is obtained as in Equation 
\ref{equation:new-rho-obtained-by-superoperator}.

In order to handle selective quantum operators, 
we add the following specifications to the \textit{general setup}:
\begin{enumerate}
	\item $ \Omega $ is partitioned into $ |\Delta| = k $ disjoint subsets, 
		$ \Omega_{\tau_{1}}, \ldots , \Omega_{\tau_{k}} $, i.e.
		$ |\Omega_{\tau_{i}}| = m_{i} $, $ 1 \le i \le k $, and 
	\item the environment is measured by $ P=\{ P_{\tau \in \Delta } \mid 
		P_{\tau} = \sum_{\omega \in \Omega_{\tau}} \ket{\omega} \bra{\omega} \} $ before being discarded.
\end{enumerate}
Hence, the operation elements that are explicitly shown in Figure \ref{figure:U1} are
grouped with respect to the partitioning of $ \Omega $ (see Figure \ref{figure:U1-re-partioned}), 
i.e., $ \mathcal{E} = \{ \mathcal{E}_{\tau} \} $ and 
$ \mathcal{E}_{\tau} = \{ E_{i} \mid \omega_{i} \in \Omega_{\tau} \} 
= \{ E_{\tau,j} \mid 1 \le j \le |\Omega_{\tau}| \} $.
Note that, the cardinality of $ \Omega_{\tau} $'s may not be the same.

\begin{figure}[h!]
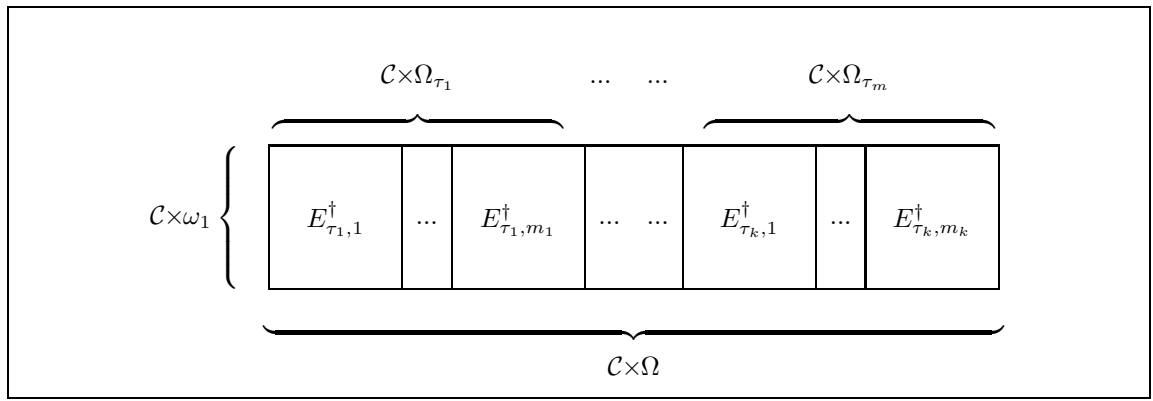

	\footnotesize
	\begin{center}
	\fbox{
	\begin{minipage}{0.95\textwidth}	
	\begin{equation*}
		\begin{array}{cc}
			& 
			\begin{array}{cccc}
				\mathcal{C} \mspace{-5mu} \times \mspace{-5mu} \Omega_{\tau_{1}} 
				& ... & ... & 
				\mathcal{C} \mspace{-5mu} \times \mspace{-5mu} \Omega_{\tau_{m}} 
				\\ 
				\overbrace{\hspace*{110pt}} &  & 
				\hspace*{15pt}
				& \overbrace{\hspace*{110pt}} \\
			\end{array}
			\\		
			\mathcal{C} \mspace{-5mu} \times \mspace{-5mu} \omega_{1}
			\left \lbrace \mspace{-25mu}
			\begin{array}{p{0.1pt}} \\ \\ \\  \end{array}
			\right. 
			&
			\begin{array}{cccccccc}
				\hline
				\multicolumn{1}{|c|}{
				\begin{array}{c}  \hspace*{30pt} \\ E_{\tau_{1},1}^{\dagger} \\ \hspace*{30pt}  \\  \end{array}
				}
				& 
				...
				&
				\multicolumn{1}{|c|}{
					\begin{array}{c}  \hspace*{30pt} \\ E_{\tau_{1},m_{1}}^{\dagger} \\ \hspace*{30pt}  \\  \end{array}
				}
				& ... & ... &
				\multicolumn{1}{|c|}{
					\begin{array}{c}  \hspace*{30pt} \\ E_{\tau_{k},1}^{\dagger} \\ \hspace*{30pt}  \\  \end{array}
				} 
				& ... &
				\multicolumn{1}{|c|}{
					\begin{array}{c}  \hspace*{30pt} \\ E_{\tau_{k},m_{k}}^{\dagger} \\ \hspace*{30pt}  \\  \end{array}
				} 			
				\\ \hline
			\end{array}
			\\
			&  \underbrace{ \hspace*{280pt}}
			\\
			& \mathcal{C} \mspace{-5mu} \times \mspace{-5mu} \Omega
		\end{array}
	\end{equation*}
	\end{minipage}	
	}
	\end{center}
	\caption{Re-partitioning of the matrix in Figure \ref{figure:U1}}
	\vskip\baselineskip
	\label{figure:U1-re-partioned}
\end{figure}

\bibliographystyle{fbe_tez_v12}
\bibliography{YakaryilmazSay}

\begin{thebibliography}{10}
\newcommand{\enquote}[1]{``#1''}
\expandafter\ifx\csname url\endcsname\relax
  \def\url#1{{\tt #1}}\fi
\expandafter\ifx\csname urlprefix\endcsname\relax\def\urlprefix{}\fi

\bibitem{Ga84}
Gabarr\'{o}, J., \enquote{Pushdown Space Complexity and Related Full-AFLs},
  {\em STACS'84: Proceedings of the Symposium of Theoretical Aspects of
  Computer Science\/}, pp. 250--259, 1984.

\bibitem{Sz94}
Szepietowski, A., {\em Turing Machines with Sublogarithmic Space\/},
  Springer-Verlag, 1994.

\bibitem{Fr81}
Freivalds, R., \enquote{Probabilistic Two-Way Machines}, {\em Proceedings of
  the International Symposium on Mathematical Foundations of Computer
  Science\/}, pp. 33--45, 1981.

\bibitem{KW97}
Kondacs, A. and J.~Watrous, \enquote{On the Power of Quantum Finite State
  Automata}, {\em FOCS'97: Proceedings of the 38th Annual Symposium on
  Foundations of Computer Science\/}, pp. 66--75, 1997.

\bibitem{AW02}
Ambainis, A. and J.~Watrous, \enquote{Two-Way Finite Automata with Quantum and
  Classical States}, {\em Theoretical Computer Science\/}, Vol. 287, No.~1, pp.
  299--311, 2002.

\bibitem{Wa98}
Watrous, J., {\em Space-Bounded Quantum Computation\/}, Ph.D. thesis,
  University of Wisconsin - Madison, USA, 1998.

\bibitem{Wa99}
Watrous, J., \enquote{Space-bounded Quantum Complexity}, {\em Journal of
  Computer and System Sciences\/}, Vol.~59, No.~2, pp. 281--326, 1999.

\bibitem{Wa03}
Watrous, J., \enquote{On the Complexity of Simulating Space-bounded Quantum
  Computations}, {\em Computational Complexity\/}, Vol.~12, No. 1-2, pp.
  48--84, 2003.

\bibitem{BP02}
Brodsky, A. and N.~Pippenger, \enquote{Characterizations of 1-Way Quantum
  Finite Automata}, {\em SIAM Journal on Computing\/}, Vol.~31, No.~5, pp.
  1456--1478, 2002.

\bibitem{Aa05}
Aaronson, S., \enquote{Quantum Computing, Postselection, and Probabilistic
  Polynomial-time}, {\em Proceedings of the Royal Society A\/}, Vol. 461, No.
  2063, pp. 3473--3482, 2005.

\bibitem{LSF09}
L\={a}ce, L., O.~Scegulnaja-Dubrovska, and R.~Freivalds, \enquote{Languages
  Recognizable by Quantum Finite Automata with Cut-point 0}, {\em SOFSEM'09:
  Proceedings of the 35th International Conference on Current Trends in Theory
  and Practice of Computer Science\/}, Vol.~2, pp. 35--46, 2009.

\bibitem{SLF10}
Scegulnaja-Dubrovska, O., L.~L\={a}ce, and R.~Freivalds, \enquote{Postselection
  Finite Quantum Automata}, Vol. 6079 of {\em LNCS\/}, pp. 115--126, 2010.

\bibitem{SY10A}
Say, A. C.~C. and A.~Yakary{\i}lmaz, \enquote{Quantum Function Computation
  Using Sublogarithmic Space}, Poster presentation at the QIP2010
  (arXiv:1009.3124), 2010.

\bibitem{YFSA10}
Yakaryılmaz, A., R.~Freivalds, A.~C.~C. Say, and R.~Agadzanyan,
  \enquote{Quantum Computation with Devices Whose Contents Are Never Read},
  {\em Unconventional Computation\/}, Vol. 6079 of {\em LNCS\/}, pp. 164--174,
  2010.

\bibitem{Ra63}
Rabin, M.~O., \enquote{Probabilistic Automata}, {\em Information and
  Control\/}, Vol.~6, pp. 230--243, 1963.

\bibitem{BJKP05}
Blondel, V.~D., E.~Jeandel, P.~Koiran, and N.~Portier, \enquote{Decidable and
  Undecidable Problems about Quantum Automata}, {\em SIAM Journal on
  Computing\/}, Vol.~34, No.~6, pp. 1464--1473, 2005.

\bibitem{YS10C}
Yakary{\i}lmaz, A. and A.~C.~C. Say, \enquote{Unbounded-error Quantum
  Computation with Small Space Bounds}, {\em Information and Computation\/},
  accepted, 2011.

\bibitem{BV97}
Bernstein, E. and U.~Vazirani, \enquote{Quantum Complexity Theory}, {\em SIAM
  Journal on Computing\/}, Vol.~26, No.~5, pp. 1411--1473, 1997.

\bibitem{Ra63B}
Rabin, M.~O., \enquote{Real Time Computation}, {\em Israel Journal of
  Mathematics\/}, Vol.~1, No.~4, pp. 203--211, 1963.

\bibitem{Aa74}
Aanderra, S.~O., \enquote{On $ k $-Tape versus ($k-1$)-Tape Real Time
  Computation}, Karp, R.~M. (editor), {\em SIAM AMS Proceedings\/}, Vol. 7
  (Complexity of Computation), pp. 75--96, 1974.

\bibitem{RS59}
Rabin, M. and D.~Scott, \enquote{Finite Automata and Their Decision Problems},
  {\em IBM Journal of Research and Development\/}, Vol.~3, pp. 114--125, 1959.

\bibitem{NIH01}
Masaki~Nakanishi, K.~H., Takao~Indoh, \enquote{On the Power of Quantum Pushdown
  Automata with a Classical Stack and 1.5-way Quantum Finite Automata},
  Technical Report NAIST-IS-TR2001005, Nara Institute of Science and
  Technology, http://isw3.naist.jp/IS/TechReport/report/2001005.ps (accessed in
  January 2011), 2001.

\bibitem{YS09D}
Yakary{\i}lmaz, A. and A.~C.~C. Say, \enquote{Languages Recognized with
  Unbounded Error by Quantum Finite Automata}, {\em CSR'09: Proceedings of the
  Fourth International Computer Science Symposium in Russia\/}, Vol. 5675 of
  {\em LNCS\/}, pp. 356--367, 2009.

\bibitem{AI99}
Amano, M. and K.~Iwama, \enquote{Undecidability on Quantum Finite Automata},
  {\em STOC'99: Proceedings of the thirty-first annual ACM symposium on Theory
  of computing\/}, pp. 368--375, 1999.

\bibitem{Ch81}
Chan, T., \enquote{Reversal Complexity of Counter Machines}, {\em STOC'81:
  Proceedings of the thirteenth annual ACM symposium on Theory of computing\/},
  pp. 146--157, 1981.

\bibitem{Tu69}
Turakainen, P., \enquote{Generalized Automata and Stochastic Languages}, {\em
  Proceedings of the American Mathematical Society\/}, Vol.~21, pp. 303--309,
  1969.

\bibitem{Sh59}
Shepherdson, J.~C., \enquote{The Reduction of Two-Way Automata to One-Way
  Automata}, {\em IBM Journal of Research and Development\/}, Vol.~3, pp.
  198--200, 1959.

\bibitem{Ka89}
Ka{\c{n}}eps, J., \enquote{Stochasticity of the Languages Acceptable by Two-Way
  Finite Probabilistic Automata}, {\em Diskretnaya Matematika\/}, Vol.~1, pp.
  63--67, 1989 (Russian).

\bibitem{Pa71}
Paz, A., {\em Introduction to Probabilistic Automata\/}, Academic Press, New
  York, 1971.

\bibitem{MW08}
van Melkebeek, D. and T.~Watson, \enquote{A Quantum Time-Space Lower Bound for
  the Counting Hierarchy}, {\em Electronic Colloquium on Computational
  Complexity (ECCC)\/}, Vol.~15, No. 017,
  http://eccc.hpi-web.de/eccc-reports/2008/TR08-017/index.html (accessed in
  January 2011), 2008.

\bibitem{Ga09}
le~Gall, F., \enquote{Exponential Separation of Quantum and Classical Online
  Space Complexity}, {\em Theory of Computing Systems\/}, Vol.~45, No.~2, pp.
  188--202, 2009.

\bibitem{NC00}
Nielsen, M.~A. and I.~L. Chuang, {\em Quantum Computation and Quantum
  Information\/}, Cambridge University Press, 2000.

\bibitem{AKN98}
Aharonov, D., A.~Kitaev, and N.~Nisan, \enquote{Quantum Circuits with Mixed
  States}, {\em STOC'98: Proceedings of the Thirtieth Annual ACM Symposium on
  Theory of Computing\/}, pp. 20--30, 1998.

\bibitem{Si06}
Sipser, M., {\em Introduction to the Theory of Computation\/}, Thomson Course
  Technology, United States of America, 2nd edition, 2006.

\bibitem{FK94}
Freivalds, R. and M.~Karpinski, \enquote{Lower Space Bounds for Randomized
  Computation}, {\em ICALP'94: Proceedings of the 21st International Colloquium
  on Automata, Languages and Programming\/}, pp. 580--592, 1994.

\bibitem{Hi08}
Hirvensalo, M., \enquote{Various Aspects of Finite Quantum Automata}, {\em
  DLT'08: Proceedings of the 12th international conference on Developments in
  Language Theory\/}, pp. 21--33, 2008.

\bibitem{MC00}
Moore, C. and J.~P. Crutchfield, \enquote{Quantum Automata and Quantum
  Grammars}, {\em Theoretical Computer Science\/}, Vol. 237, No. 1-2, pp.
  275--306, 2000.

\bibitem{LQ08}
Li, L. and D.~Qiu, \enquote{Determining the Equivalence for One-Way Quantum
  Finite Automata}, {\em Theoretical Computer Science\/}, Vol. 403, No.~1, pp.
  42--51, 2008.

\bibitem{YS09A}
Yakary{\i}lmaz, A. and A.~C.~C. Say, \enquote{Language Recognition by
  Generalized Quantum Finite Automata with Unbounded Error}, Poster
  presentation at the TQC2009 (arXiv:0901.2703), 2009.

\bibitem{NH71}
Nasu, M. and N.~Honda, \enquote{A Context-Free Language Which is not Acceptable
  by a Probabilistic Automaton}, {\em Information and Control\/}, Vol.~18,
  No.~3, pp. 233--236, 1971.

\bibitem{FYS10A}
Freivalds, R., A.~Yakary{\i}lmaz, and A.~C.~C. Say, \enquote{A New Family of
  Nonstochastic Languages}, {\em Information Processing Letters\/}, Vol. 110,
  No.~10, pp. 410--413, 2010.

\bibitem{SS78}
Salomaaa, A. and M.~Soittola, {\em Automata-Theoretic Aspects of Formal Power
  Series\/}, Texts and Monographs in Computer Science, Springer-Verlag, 1978.

\bibitem{BC01B}
Bertoni, A. and M.~Carpentieri, \enquote{Analogies and Differences Between
  Quantum and Stochastic Automata}, {\em Theoretical Computer Science\/}, Vol.
  262, No. 1-2, pp. 69--81, 2001.

\bibitem{NIHK02}
Nakanishi, M., T.~Indoh, K.~Hamaguchi, and T.~Kashiwabara, \enquote{On the
  Power of Non-deterministic Quantum Finite Automata}, {\em IEICE Transactions
  on Information and Systems\/}, Vol. E85-D, No.~2, pp. 327--332, 2002.

\bibitem{Ma93}
Macarie, I., \enquote{Closure Properties of Stochastic Languages}, Technical
  Report 441, University of Rochester, 1993.

\bibitem{YS10A}
Yakary{\i}lmaz, A. and A.~C.~C. Say, \enquote{Languages Recognized by
  Nondeterministic Quantum Finite Automata}, {\em Quantum Information and
  Computation\/}, Vol.~10, No. 9\&10, pp. 747--770, 2010.

\bibitem{AGM92}
Alt, H., V.~Geffert, and K.~Mehlhorn, \enquote{A Lower Bound for the
  Nondeterministic Space Complexity of Context-free Recognition}, {\em
  Information Processing Letters\/}, Vol.~42, No.~1, pp. 25--27, 1992.

\bibitem{AF98}
Ambainis, A. and R.~Freivalds, \enquote{1-way Quantum Finite Automata:
  Strengths, Weaknesses and Generalizations}, {\em FOCS'98: Proceedings of the
  39th Annual Symposium on Foundations of Computer Science\/}, pp. 332--341,
  1998.

\bibitem{MPP01}
Mereghetti, C., B.~Palano, and G.~Pighizzini, \enquote{Note on the Succinctness
  of Deterministic, Nondeterministic, Probabilistic and Quantum Finite
  Automata}, {\em Theoretical Informatics and Applications\/}, Vol.~35, No.~5,
  pp. 477--490, 2001.

\bibitem{Di71}
Di\^eu, P.~D., \enquote{On a Class of Stochasic Languages}, {\em Mathematical
  Logic Quarterly\/}, Vol.~17, No.~1, pp. 421--425, 1971.

\bibitem{Ra92}
Ravikumar, B., \enquote{Some Observations on 2-way Probabilistic Finite
  Automata}, {\em Proceedings of the 12th Conference on Foundations of Software
  Technology and Theoretical Computer Science\/}, pp. 392--403,
  Springer-Verlag, 1992.

\bibitem{LZ77}
Lipton, R.~J. and Y.~Zalcstein, \enquote{Word Problems Solvable in Logspace},
  {\em Journal of the ACM\/}, Vol.~24, No.~3, pp. 522--526, 1977.

\bibitem{LS77}
Lyndon, R.~C. and P.~E. Schupp, {\em Combinatorial Group Theory\/},
  Springer-Verlag, 1977.

\bibitem{Tu81}
Turakainen, P., \enquote{On Nonstochastic Languages and Homomorphic Images of
  Stochastic Languages}, {\em Information Sciences\/}, Vol.~24, No.~3, pp.
  229--253, 1981.

\bibitem{Tu69B}
Turakainen, P., \enquote{On Languages Representable in Rational Probabilistic
  Automata}, {\em Annales Academiae Scientiarum Fennicae, Ser.A\/}, , No. 439,
  pp. 4--10, 1969.

\bibitem{Sw94}
\'{S}wierczkowski, S., \enquote{A Class of Free Rotation Groups}, {\em
  Indagationes Mathematicae\/}, Vol.~5, No.~2, pp. 221--226, 1994.

\bibitem{DD06}
D'Alessandro, F. and A.~D'Andrea, \enquote{A Non-Commutativity Statement for
  Algebraic Quaternions}, {\em International Journal of Algebra and
  Computation\/}, Vol.~16, No.~3, pp. 583--602, 2006.

\bibitem{Fl72}
Fliess, M., \enquote{Automates Stochastiques et S{\'e}ries Rationnelles Non
  Commutatives}, {\em Automata, Languages, and Programming\/}, pp. 397--411,
  1972.

\bibitem{Fl74}
Fliess, M., \enquote{Propri\'{e}t\'{e}s Bool\'{e}ennes des Langages
  Stochastiques}, {\em Mathematical Systems Theory\/}, Vol.~7, No.~4, pp.
  353--359, 1974.

\bibitem{La74}
Lapi\c{n}\v{s}, J., \enquote{On Nonstochastic Languages Obtained as the Union
  and Intersection of Stochastic Languages}, {\em Avtom. Vychisl. Tekh.\/}, ,
  No.~4, pp. 6--13, 1974 (Russian).

\bibitem{Bu68}
Bukharaev, R.~G., \enquote{Theory of Probabilistic Automata}, {\em
  Kibernetika\/}, Vol.~4, No.~2, pp. 6--23, 1968.

\bibitem{Tu68}
Turakainen, P., \enquote{On Stochastic Languages}, {\em Information and
  Control\/}, Vol.~12, No.~4, pp. 304--313, 1968.

\bibitem{Co69}
Cole, S.~N., \enquote{Real-Time Computation by $ n $-Dimensional Iterative
  Arrays of Finite-State Machines}, {\em IEEE Transactions on Computers\/},
  Vol.~18, No.~4, pp. 349--365, 1969.

\bibitem{Tu71}
Turakainen, P., \enquote{Some Closure Properties of the Family of Stochastic
  Languages}, {\em Information and Control\/}, Vol.~18, No.~3, pp. 253--256,
  1971.

\bibitem{YS10B}
Yakary{\i}lmaz, A. and A.~C.~C. Say, \enquote{Succinctness of Two-Way
  Probabilistic and Quantum Finite Automata}, {\em Discrete Mathematics and
  Theoretical Computer Science\/}, Vol.~12, No.~2, pp. 19--40, 2010.

\bibitem{BC01A}
Bertoni, A. and M.~Carpentieri, \enquote{Regular Languages Accepted by Quantum
  Automata}, {\em Information and Computation\/}, Vol. 165, No.~2, pp.
  174--182, 2001.

\bibitem{Tu82}
Turakainen, P., {\em Discrete Mathematics\/}, Vol.~7 of {\em Banach Center
  Publications\/}, chapter Rational Stochastic Automata in Formal Language
  Theory, pp. 31--44, PWN-Polish Scientific Publishers, Warsaw, 1982.

\bibitem{DS92}
Dwork, C. and L.~Stockmeyer, \enquote{Finite State Verifiers $\mbox{I}$: The
  Power of Interaction}, {\em Journal of the ACM\/}, Vol.~39, No.~4, pp.
  800--828, 1992.

\bibitem{Pa00}
Paschen, K., \enquote{Quantum Finite Automata Using Ancilla Qubits}, Technical
  report, University of Karlsruhe,
  http://digbib.ubka.uni-karlsruhe.de/volltexte/1452000 (accessed in January
  2011), 2000.

\bibitem{ANTV02}
Ambainis, A., A.~Nayak, A.~Ta-Shma, and U.~Vazirani, \enquote{Dense Quantum
  Coding and Quantum Finite Automata}, {\em Journal of the ACM\/}, Vol.~49,
  No.~4, pp. 496--511, 2002.

\bibitem{FOM09}
Freivalds, R., M.~Ozols, and L.~Man\v{c}inska, \enquote{Improved Constructions
  of Mixed State Quantum Automata}, {\em Theoretical Computer Science\/}, Vol.
  410, No.~20, pp. 1923--1931, 2009.

\bibitem{YS09B}
Yakary{\i}lmaz, A. and A.~C.~C. Say, \enquote{Efficient Probability
  Amplification in Two-Way Quantum Finite Automata}, {\em Theoretical Computer
  Science\/}, Vol. 410, No.~20, pp. 1932--1941, 2009.

\bibitem{DF10}
Scegulnaja-Dubrovska, O. and R.~Freivalds, \enquote{A Context-Free Language not
  Recognizable by Postselection Finite Quantum Automata}, Freivalds, R.
  (editor), {\em Randomized and Quantum Computation (MFCS $ \& $ CSL 2010
  satellite workshop)\/}, pp. 35--48, 2010.

\bibitem{Di77}
Di\^eu, P.~D., \enquote{Criteria of Representability of Languages in
  Probabilistic Automata}, {\em Cybernetics and Systems Analysis\/}, Vol.~13,
  No.~3, pp. 352--364, 1977, translated from Kibernetika, No. 3, pp. 39$
  \mbox{--} $50, May$ \mbox{--} $June, 1977.

\bibitem{Fr79}
Freivalds, R., \enquote{Fast Probabilistic Algorithms}, {\em Mathematical
  Foundations of Computer Science 1979\/}, Vol.~74 of {\em LNCS\/}, pp. 57--69,
  1979.

\bibitem{Sz88}
Szepietowski, A., \enquote{Remarks on Languages Acceptable in $ \log n $
  Space}, {\em Information Processing Letters\/}, Vol.~27, pp. 201--203, 1988.

\bibitem{Br77}
Brandenburg, F., \enquote{On One-Way Auxiliary Pushdown Automata}, {\em
  Theoretical Computer Science\/}, pp. 132--144, 1977.

\bibitem{Gr78}
Greibach, S.~A., \enquote{Remarks on Blind and Partially Blind One-Way
  Multicounter Machines}, {\em Theoretical Computer Science\/}, Vol.~7, pp.
  311--324, 1978.

\end{thebibliography}
%
%
\end{document}